\documentclass[11pt,a4paper]{article}
\usepackage[english]{babel}
\usepackage{german}
\usepackage{amssymb}
\usepackage{amsfonts}
\usepackage{amsmath}
\usepackage[latin1]{inputenc}
\usepackage{fullpage}
\usepackage{graphicx}
\usepackage{subfigure}
\usepackage{cite}
\usepackage{hyperref,vmargin}

\newtheorem{theorem}{Theorem}
\newtheorem{transformation}{Transformation}

\newtheorem{condition}{Condition}
\newtheorem{conjecture}{Conjecture}
\newtheorem{corollary}{Corollary}

\newtheorem{definition}{Definition}

\newtheorem{observation}{Observation}

\newtheorem{lemma}{Lemma}
\newtheorem{notation}{Notation}

\newtheorem{remark}{Remark}

\newenvironment{proof}[1][Proof]{\noindent\textbf{#1.} }{\ \rule{0.5em}{0.5em}}

\sloppy

\setmarginsrb{3.5cm}{2.25cm}{3.5cm}{3.05cm}{0.3cm}{0.3cm}{-0.3cm}{1.0cm}
\addtolength{\oddsidemargin}{-1.275cm}
\addtolength{\evensidemargin}{-0.875cm}
\addtolength{\textwidth}{2.5cm}
\addtolength{\topmargin}{-0.885cm}
\addtolength{\textheight}{2.4cm}

\begin{document}

\title{On the Intersection of Tolerance and Cocomparability Graphs}
\author{George B. Mertzios\thanks{%
School of Engineering and Computing Sciences, Durham University, United Kingdom. Email: 
\texttt{george.mertzios@durham.ac.uk}} 
\and Shmuel Zaks\thanks{%
Department of Computer Science, Technion, Haifa, Israel. Email: \texttt{%
zaks@cs.technion.ac.il}\vspace{-0.3cm}}}
\date{\vspace{-0.7cm}}
\maketitle

\begin{abstract}
Tolerance graphs have been extensively studied since their introduction, 
due to their interesting structure and their numerous
applications, as they generalize both interval and permutation graphs in a 
natural way. It has been conjectured by Golumbic, Monma, and Trotter in 1984 
that the intersection of tolerance and cocomparability graphs coincides with bounded tolerance graphs. 
Since cocomparability graphs can be efficiently recognized, 
a positive answer to this conjecture in the general case 
would enable us to efficiently distinguish between tolerance and bounded tolerance graphs, 
although it is NP-complete to recognize each of these classes of graphs separately. 
The conjecture has been proved under some --rather~strong-- \emph{structural}~assumptions on the input graph; 
in particular, it has been proved for complements of trees, 
and later extended to complements of bipartite graphs, 
and these are the only known results so far. 
Furthermore, it is known that the intersection of tolerance and 
cocomparability graphs is contained in the class of trapezoid graphs. 
Our main result in this article is that the above conjecture 
is true for every graph~$G$ that admits a tolerance representation with exactly one unbounded vertex; 
note here that this assumption concerns only the given tolerance \emph{representation}~$R$ of~$G$, 
rather than any structural property of~$G$. 
Moreover, our results imply as a corollary that the conjecture of Golumbic, Monma, and Trotter is true for every graph $G=(V,E)$ 
that has no three independent vertices $a,b,c\in V$ such that $N(a) \subset N(b) \subset N(c)$; 
this is satisfied in particular when $G$ is the complement of a triangle-free graph 
(which also implies the above-mentioned correctness for complements of bipartite graphs). 
Our proofs are constructive, in the sense that, given a tolerance representation~$R$ of a graph~$G$, 
we transform~$R$ into a bounded tolerance representation~$R^{\ast}$ of~$G$. 
Furthermore, we conjecture that any \emph{minimal} tolerance graph~$G$ that is not a bounded tolerance graph, 
has a tolerance representation with exactly one unbounded vertex. 
Our results imply the non-trivial result that, in order to prove the conjecture of Golumbic, Monma, and Trotter, 
it suffices to prove our conjecture.\newline

\noindent \textbf{Keywords:} Tolerance graphs, cocomparability graphs, 3-dimensional intersection model, 
trapezoid graphs, parallelogram graphs.
\end{abstract}

\section{Introduction\label{sec:intro}}

A simple undirected graph~$G=(V,E)$ on~$n$ vertices is called a \emph{tolerance}
graph if there exists a collection~$I=\{I_{u}\ |\ u\in V\}$ of closed
intervals on the real line and a set~$t=\{t_{u}\ |\ u\in V\}$ of positive
numbers, such that for any two vertices $u,v\in V$, $uv\in E$ if and only if 
${|I_{u}\cap I_{v}|\geq \min \{t_{u},t_{v}\}}$. The pair~$\langle I,t\rangle$ 
is called a \emph{tolerance representation} of~$G$. A vertex~$u$ of~$G$ is
called a \emph{bounded vertex} (in a certain tolerance 
representation~$\langle I,t\rangle$ of~$G$) if~$t_{u}\leq |I_{u}|$; 
otherwise, $u$ is called an \emph{unbounded vertex} of~$G$. If $G$ has a tolerance
representation~$\langle I,t\rangle$ where all vertices are bounded, then~$G$ 
is called a \emph{bounded tolerance} graph and~$\langle I,t\rangle $ a 
\emph{bounded tolerance representation} of~$G$.

Tolerance graphs find numerous applications in constrained-based temporal reasoning,
data transmission through networks to efficiently scheduling aircraft and
crews, as well as contributing to genetic analysis and studies of the 
brain~\cite{GolSi02,GolTol04}. 
This class of graphs has been introduced in~1982~\cite{GoMo82} in order to generalize 
some of the well known applications of interval graphs. 
The main motivation was in the context of resource allocation and scheduling problems, 
in which resources, such as rooms and vehicles, can tolerate sharing among users~\cite{GolTol04}. 
Since then, tolerance graphs have attracted many research 
efforts~\cite{GolTol04,GolumbicMonma84,GolSi02,Fel98,KeBe04,Bus06,BFI95,MSZ-Model-SIDMA-09,MSZ-SICOMP-11,HaSh04,NaMa92}, 
as they generalize in a natural way both interval graphs (when all tolerances are equal) 
and permutation graphs~\cite{GoMo82}(when $t_{i}=|I_{i}|$ for every $i=1,2,\ldots ,n$); 
see~\cite{GolTol04} for a detailed survey.

Given an undirected graph~$G=(V,E)$ and a vertex subset $M\subseteq V$, 
$M$ is called a \emph{module} in~$G$, if for every $u,v\in M$ and every $x\in V\setminus M$, 
$x$ is either adjacent in $G$ to both $u$ and $v$ or to none of them. 
Note that $\emptyset$, $V$, and all singletons $\{v\}$, 
where $v\in V$, are trivial modules in $G$. 
A~\emph{comparability} graph is a graph which can be transitively oriented.
A~\emph{cocomparability} graph is a graph whose complement is a
comparability graph. A~\emph{trapezoid} (resp.~\emph{parallelogram} 
and~\emph{permutation}) graph is the intersection graph of trapezoids
(resp.~parallelograms and line segments) between two parallel lines~$L_{1}$
and~$L_{2}$~\cite{Golumbic04}. Such a representation with trapezoids
(resp.~parallelograms and line segments) is called a~\emph{trapezoid} 
(resp.~\emph{parallelogram} and \emph{permutation}) \emph{representation} of 
this graph. A graph is bounded tolerance if and only if it is a parallelogram
graph~\cite{BFI95}. 
Permutation graphs are a strict subset of parallelogram graphs~\cite{Brandstaedt99}. 
Furthermore, parallelogram graphs
are a strict subset of trapezoid graphs~\cite{Ryan98}, and both are subsets
of cocomparability graphs~\cite{GolTol04,Golumbic04}. On the other hand, not
every tolerance graph is a cocomparability graph~\cite{GolTol04,Golumbic04}.

Cocomparability graphs have received considerable attention in 
the literature, mainly due to their interesting structure that leads to 
efficient algorithms for several NP-hard problems, 
see e.g.~\cite{Corneil99,KratschStewart93,DeogunSteiner94,GolTol04}. 
Furthermore, the intersection of the class of cocomparability graphs with other 
graph classes has interesting properties and coincides with other 
widely known graph classes. 
For instance, their intersection with chordal graphs is the class of interval graphs~\cite{Golumbic04}, 
while their intersection with comparability graphs is the class of permutation graphs~\cite{Pnueli71,Golumbic04}. 
These structural characterizations find also direct algorithmic implications to the recognition problem of interval 
and permutation graphs, respectively, since the class of cocomparability graphs can be recognized efficiently~\cite{Golumbic04,Spinrad03}.
In this context, the following conjecture has been made in 1984~\cite{GolumbicMonma84}:

\vspace{-0.15cm}
\begin{conjecture}[\hspace{0.3pt}\protect\cite{GolumbicMonma84}]
\label{Golumbic-Monma-conjecture}
The intersection of cocomparability graphs with tolerance graphs 
is exactly the class of bounded tolerance graphs.
\end{conjecture}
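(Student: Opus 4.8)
The plan is to argue constructively, turning any tolerance representation of a graph $G$ in the intersection into a bounded one. The first step is to fix a convenient geometric model: represent the bounded vertices by parallelograms and the unbounded vertices by line segments between two horizontal lines $L_1$ and $L_2$, with adjacency given by intersection (the standard geometric reformulation of a tolerance representation). A quick computation shows that if $u,v$ are both unbounded then $|I_u\cap I_v|\le\min\{|I_u|,|I_v|\}<\min\{t_u,t_v\}$, where the last inequality holds because each interval is shorter than its own tolerance; hence $u,v$ are non-adjacent, so the unbounded vertices always form an independent set. Since $G$ is a cocomparability graph, its complement $\overline{G}$ admits a transitive orientation $F$, which I would use throughout to control the left-to-right geometry. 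The goal is then to replace every line segment (unbounded vertex) by a thin parallelogram without changing any adjacency or non-adjacency; if this succeeds, the resulting representation consists entirely of parallelograms, and by the cited equivalence (bounded tolerance $\Leftrightarrow$ parallelogram graph) $G$ is a bounded tolerance graph.

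Second, I would try to eliminate the unbounded vertices one at a time, always working with a representation that has the fewest unbounded vertices. Fix one unbounded vertex $u$, whose line segment $s_u$ crosses exactly the parallelograms of its neighbours. Widening $s_u$ into a parallelogram containing it can only create edges, never destroy them, so the only danger is that a non-neighbour $P_w$ lies immediately to the left or to the right of $s_u$ and gets hit. If all such blocking non-neighbours lie on one side, I simply widen toward the free side. When blockers occur on both sides, I would instead \emph{slide} the two endpoints of $s_u$ along $L_1$ and $L_2$ into the gaps dictated by the transitive orientation $F$ before widening, using the order that $F$ imposes on the non-neighbours of $u$ to guarantee that such gaps exist. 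The comparability structure of $\overline{G}$ is exactly what should forbid the ``alternating'' configuration in which one non-neighbour blocks the top-left while another blocks the bottom-right, since that configuration is what would make $u$ genuinely unboundable.

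The main obstacle is precisely the interaction between several unbounded vertices. Removing one unbounded vertex may force a local rearrangement of parallelograms that traps a different unbounded vertex, and these constraints can cascade, so a naive one-at-a-time induction need not terminate. This is why I would first settle the case of a single unbounded vertex, where no such cascade can occur and the orientation $F$ suffices to place the one line segment; this already yields the conjecture whenever $G$ admits a representation with exactly one unbounded vertex, and, combining the independence of the unbounded set with the absence of a chain $N(a)\subset N(b)\subset N(c)$ among independent vertices, in the stated corollary cases as well. Extending the argument to many unbounded vertices \emph{simultaneously} --- showing that all the required slides can be performed globally and consistently --- is the genuinely hard step, and is what keeps the full conjecture open.
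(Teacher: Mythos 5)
Your high-level architecture (work with a representation having the fewest unbounded vertices, eliminate them one at a time, settle the single-unbounded case first, and honestly leave the general cascade open) matches the paper's plan (Theorems~\ref{right-property-thm},~\ref{no-property-thm},~\ref{intersection-thm}), and your computation that unbounded vertices form an independent set is correct. But your elimination step rests on a false premise: that the segment $s_u$ of an unbounded vertex ``crosses exactly the parallelograms of its neighbours.'' A projection representation is \emph{not} an intersection model, as the paper stresses: a parallelogram $P_v$ with $\phi_v<\phi_u$ may intersect $s_u$ with $uv\notin E$, and this is not an avoidable degeneracy but the generic situation --- in a canonical representation every unbounded vertex is \emph{inevitable} (Definitions~\ref{def5} and~\ref{def7}), and Lemma~\ref{bounded-hovering} always produces a bounded covering vertex $u^{\ast}\notin N(u)$ whose parallelogram intersects $P_u$ with $\phi_{u^{\ast}}<\phi_u$. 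Consequently \emph{any} parallelogram containing $s_u$ hits $P_{u^{\ast}}$ and creates the false edge $uu^{\ast}$, no matter which side is ``free''; left/right blockers are a secondary issue. The paper's actual move is not to widen but to \emph{relocate}: Transformations~\ref{trans1}--\ref{trans3} first stretch the neighbours of $u$ and shrink the bounded part of $V_0(u)$ so that a slot opens between $\max\{r(v)\ |\ v\in V_0(u)\cap V_B\}$ and $L_0$, and only then is $u$ re-embedded there as a bounded vertex --- and even this works only when $u$ has the right or left border property.

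Your second load-bearing claim --- that the transitive orientation $F$ of $\overline{G}$ forbids the two-sided (``alternating'') configuration --- is also unfounded: that configuration does occur, and dealing with it is precisely the content of Section~\ref{structure-subsec} and the appendix. The paper does not argue it away via an orientation; it invokes the trapezoid representation $R_T$ (available by~\cite{Fel98}, which you never use), the vertex-minimality of a putative counterexample --- which is what makes $V_0(u)$ connected in Lemma~\ref{two-components}, using that $G\setminus\{u\}$ is already a parallelogram graph, a global ingredient your sketch lacks --- and then a long construction: an induced subgraph $G_0$ with $G_0\setminus\{u\}$ a module in $G\setminus\{u\}$, an $\varepsilon$-squeezing of its representation onto a carefully chosen line segment $\ell$, three further transformations, and an inner induction (Condition~\ref{cond1}). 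So even your ``first milestone'' of one unbounded vertex is not reachable by sliding endpoints into gaps dictated by $F$, and your corollary-level claims (e.g.\ for graphs without a chain $N(a)\subset N(b)\subset N(c)$ on independent vertices) inherit the same gap, since they route through the same unproven elimination step.
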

\vspace{-0.15cm}

Note that the inclusion in one direction is immediate: 
every bounded tolerance graph is a cocomparability graph~\cite{GolTol04,Golumbic04}, 
as well as a tolerance graph by definition. 
Conjecture~\ref{Golumbic-Monma-conjecture} has been proved for complements of trees~\cite{Andreae93}, 
and later extended to complements of bipartite graphs~\cite{Parra94}, 
and these are the only known results so far.
Furthermore, it has been proved that the intersection of tolerance and 
cocomparability graphs is contained in the class of trapezoid graphs~\cite{Fel98}. 
Since cocomparability graphs can be efficiently recognized~\cite{Spinrad03}, 
a positive answer to Conjecture~\ref{Golumbic-Monma-conjecture} 
would enable us to efficiently distinguish between tolerance and bounded tolerance graphs, 
although it is NP-complete to recognize each of these classes of graphs separately~\cite{MSZ-SICOMP-11}. 
Only little is known so far about the separation of tolerance and bounded tolerance graphs; 
a recent work can be found in~\cite{Eisermann11}. 
An intersection model for general tolerance graphs 
has been recently presented in~\cite{MSZ-Model-SIDMA-09}, given by 3-dimensional parallelepipeds. 
This \emph{parallelepiped representation} of tolerance graphs generalizes 
the parallelogram representation of bounded tolerance graphs; 
the main idea is to exploit the third dimension to capture the 
information given by unbounded tolerances. 
Furthermore, this model proved to be a powerful tool for designing 
efficient algorithms for general tolerance graphs~\cite{MSZ-Model-SIDMA-09}.

\vspace{-0.2cm}
\paragraph{Our contribution.}

Our main result in this article is that Conjecture~\ref{Golumbic-Monma-conjecture} is
true for every graph~$G$, for which there exists a tolerance representation with exactly one unbounded vertex. 
Furthermore, we state a new conjecture regarding the \emph{minimal} separating examples between 
tolerance and bounded tolerance graphs (cf.~Conjecture~\ref{minimal-conjecture} below). 
That is, unlike Conjecture~\ref {Golumbic-Monma-conjecture}, this conjecture does not concern 
any other class of graphs, such as cocomparability or trapezoid graphs. 
In order to state Conjecture~\ref{minimal-conjecture}, we first define a graph~$G$ 
to be a \emph{minimally unbounded tolerance} graph, if~$G$ is tolerance
but not bounded tolerance, while~$G$ becomes a bounded tolerance graph if we
remove any vertex of~$G$.

\vspace{-0.15cm}
\begin{conjecture}
\label{minimal-conjecture}Any minimally unbounded tolerance graph has a
tolerance representation with exactly one unbounded vertex.
\end{conjecture}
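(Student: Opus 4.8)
The plan is to start from the defining property of a minimally unbounded tolerance graph and reduce the conjecture to a single \emph{reinsertion} question. Since $G$ is tolerance but not bounded tolerance, every tolerance representation of $G$ has at least one unbounded vertex, so it suffices to produce one representation with \emph{at most} one. Fix any vertex $u$ of $G$. By minimality, $G-u$ is a bounded tolerance graph, hence (by the parallelogram characterization quoted above) it admits a tolerance representation $R_u$ in which every vertex of $G-u$ is bounded. Now observe that if some such $R_u$ can be extended to a tolerance representation of all of $G$ --- that is, if we can choose an interval $I_u$ and a tolerance $t_u$ realizing, for every $v\ne u$, the equivalence $uv\in E \iff |I_u\cap I_v|\ge \min\{t_u,t_v\}$ --- then in that extension $u$ is necessarily unbounded, since otherwise the whole representation would be a bounded tolerance representation of $G$, contradicting the hypothesis. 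Every other vertex remains bounded as in $R_u$, so the extended representation has exactly one unbounded vertex. Thus the entire conjecture reduces to: for a minimally unbounded graph $G$, there exist a vertex $u$ and a bounded representation $R_u$ of $G-u$ that can be extended by a single unbounded vertex $u$.

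Before attacking this I would normalize the situation so created. Once $u$ is the only unbounded vertex sitting over an all-bounded $R_u$, I may take $t_u$ larger than every interval length in $R_u$; then for each bounded $v$ we have $\min\{t_u,t_v\}=t_v$, so $u$ is adjacent to $v$ precisely when $|I_u\cap I_v|\ge t_v$. Hence $I_u$ acts as a single \emph{window} on the real line that must overlap each neighbor $v\in N(u)$ in length at least $t_v$ and each non-neighbor in length strictly less than $t_v$, where the intervals $I_v$ and tolerances $t_v$ are exactly those fixed by $R_u$. Realizability of $u$ is then a purely one-dimensional condition describing how $N(u)$ must sit with respect to the linear orders of the left and the right endpoints of the intervals $I_v$. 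The freedom we retain is twofold: which vertex $u$ to delete, and which bounded representation $R_u$ to use among all of them for $G-u$.

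The step I expect to be the main obstacle is precisely this realizability: for an arbitrary bounded representation of $G-u$, the endpoint orders it imposes need not admit any window whose overlap set is exactly $N(u)$, and there is no a priori reason a single interval can separate the neighbors of $u$ from its non-neighbors. To overcome it I would exploit the minimality of $G$ a second time --- not only $G-u$ but $G-v$ is bounded for every $v$, which tightly constrains how the neighborhoods of the candidate unbounded vertices nest inside one another and should restrict the admissible endpoint orders (this is also the feature driving the corollary, stated in the introduction, about graphs with no chain $N(a)\subset N(b)\subset N(c)$). A complementary line of attack is a reduction lemma: show that any tolerance representation of $G$ with $k\ge 2$ unbounded vertices can be altered so that one of them becomes bounded while the rest are left untouched, thereby forcing the minimum number of unbounded vertices over all representations down until it reaches $1$ (it cannot reach $0$, as $G$ is not bounded tolerance). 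The delicate core of such a lemma is the configuration of two unbounded vertices $u_1,u_2$, where one must reroute $I_{u_1}$ and $t_{u_1}$ inside a bounded representation of $G-u_1$ so that $u_1$ becomes bounded without disturbing $u_2$ or any bounded vertex; I would therefore treat the two-unbounded-vertex case as the essential base case. Producing a one-unbounded-vertex representation completes the proof of the conjecture and, by the paper's main theorem, is exactly the configuration for which the Golumbic--Monma--Trotter conjecture is then available.
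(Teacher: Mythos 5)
The statement you were asked to prove is not proved in the paper at all: it is stated there as an open conjecture (Conjecture~\ref{minimal-conjecture}), supported only by the remark that all known minimally unbounded tolerance graphs satisfy it, and the paper's sole result about it is Theorem~\ref{conjectures-reduction-thm}, namely that it would imply Conjecture~\ref{Golumbic-Monma-conjecture}. Judged on its own merits, your proposal is a reduction plus a research plan, not a proof. The parts you actually establish are correct but easy: by minimality $G\setminus\{u\}$ is bounded tolerance for every vertex $u$; if a parallelogram (all-bounded) representation $R_u$ of $G\setminus\{u\}$ can be extended by $u$, then $u$ is necessarily unbounded in the extension (otherwise $G$ would be a bounded tolerance graph); and after normalizing $t_u$ above all interval lengths, the adjacency of $u$ becomes the one-dimensional window condition $uv\in E \iff |I_u\cap I_v|\ge t_v$. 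But the existence of a vertex $u$ and a representation $R_u$ admitting such a window is precisely the entire content of the conjecture, and you explicitly leave it open (``the main obstacle is precisely this realizability''). Nothing in the proposal rules out that for some minimally unbounded $G$, \emph{every} bounded representation of \emph{every} $G\setminus\{u\}$ imposes endpoint orders with no admissible window; the two degrees of freedom you identify --- which $u$ to delete and which $R_u$ to use --- are noted but never exploited by an actual argument.

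Your fallback, an elimination lemma converting a representation with $k\ge 2$ unbounded vertices into one with $k-1$, is essentially what the paper \emph{does} prove, but only under a much stronger hypothesis: Theorems~\ref{right-property-thm} and~\ref{no-property-thm} perform exactly this elimination, yet they assume $G\in$ \textsc{Tolerance} $\cap$ \textsc{Trapezoid} and lean heavily on the trapezoid representation $R_T$ (Condition~\ref{ass3}, the relations $T_x\ll_{R_T}T_u$, the module construction around $G_0$, etc.). A minimally unbounded tolerance graph need not be a cocomparability or trapezoid graph, so that machinery is unavailable in your setting and you offer no substitute; note also that an unconditional elimination lemma for all tolerance graphs would be false outright (a disjoint union of two unbounded tolerance graphs needs at least two unbounded vertices in any representation), so any correct version must use minimality in an essential way --- minimality does force connectivity, but that observation alone yields nothing. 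The verdict is a genuine gap: you correctly locate the difficulty and frame it well, but you prove neither the realizability step nor the elimination lemma, so the statement remains, exactly as in the paper, an open conjecture.
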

\vspace{-0.15cm}

Our results imply the non-trivial result that, in order to prove Conjecture~\ref{Golumbic-Monma-conjecture}, 
it suffices to prove Conjecture~\ref{minimal-conjecture}. 
To the best of our knowledge, Conjecture~\ref{minimal-conjecture} is true for all known examples of 
minimally unbounded tolerance graphs in the literature (see e.g.~\cite{GolTol04}).

All our results are based (a) on the 3-dimensional parallelepiped representation of tolerance graphs~\cite{MSZ-Model-SIDMA-09} 
and (b) on the fact that every graph $G$ that is both a tolerance and a 
cocomparability graph, has necessarily a trapezoid representation $R_{T}$~\cite{Fel98}.
Specifically, in order to prove our results, we define three conditions on the 
unbounded vertices of $G$ (in the parallelepiped representation~$R$ of~$G$). 
Condition~\ref{ass1} states that~$R$ has exactly one unbounded vertex. 
Condition~\ref{ass2} states that, for every unbounded vertex~$u$ of~$G$ (in~$R$), 
there exists no unbounded vertex~$v$ whose neighborhood is strictly included 
in the neighborhood of~$u$. Note that both Conditions~\ref{ass1}~and~\ref{ass2} concern 
only the parallelepiped representation~$R$; furthermore, Condition~\ref{ass2} is weaker 
than Condition~\ref{ass1}. 
Then, Condition~\ref{ass3} (which has a more complicated statement, cf.~Section~\ref{structure-subsec}) 
concerns also the position of the unbounded vertices in the trapezoid representation~$R_{T}$ of~$G$, 
and it is weaker than both Conditions~\ref{ass1}~and~\ref{ass2}.

Consider a graph $G$ that is both tolerance and cocomparability, 
and thus $G$ is also a trapezoid graph~\cite{Fel98}, i.e.~$G$ has both a 
parallelepiped representation $R$ and a trapezoid representation $R_T$. 
Assuming that $G$ satisfies Condition~\ref{ass3}, 
we construct a parallelogram representation of $G$, thus proving that $G$ is a
bounded tolerance graph. 
Therefore, since Condition~\ref{ass3} is weaker than both Conditions~\ref{ass1}~and~\ref{ass2}, 
the same result immediately follows by assuming that the graph $G$ satisfies Conditions~\ref{ass1} 
or Condition~\ref{ass2}. 
In particular, this immediately implies our main result of this paper, 
i.e.~that Conjecture~\ref{Golumbic-Monma-conjecture} is true for every graph~$G$ 
that admits a tolerance representation with exactly one unbounded vertex (i.e.~when Condition~\ref{ass1} is satisfied). 
Moreover, our results imply easily (cf.~Corollary~\ref{complement-triangle-free-cor}) that 
Conjecture~\ref{Golumbic-Monma-conjecture} is true for every graph $G=(V,E)$ that has no three independent vertices $a,b,c\in V$ 
such that the neighborhood of~$a$ is strictly included in the neighborhood of~$b$, 
which in turn is strictly included in the neighborhood of~$c$. 
This is a consequence of the fact that, if a graph $G$ has no such triple of vertices $\{a,b,c\}$, 
then Condition~\ref{ass2} is satisfied.
Thus, in particular, Conjecture~\ref{Golumbic-Monma-conjecture} is true for all complements of 
triangle-free graphs (which also implies the above-mentioned correctness for complements of trees~\cite{Andreae93} 
and complements of bipartite graphs~\cite{Parra94}).

The main idea of the proofs is to iteratively ``eliminate'' the unbounded vertices of the parallelepiped representation $R$. 
That is, assuming that the input representation $R$ has $k\geq 1$ unbounded vertices, 
we choose an unbounded vertex $u$ in $R$ and construct a parallelepiped 
representation~$R^{\ast }$ of~$G$ with $k-1$ unbounded vertices; specifically, 
$R^{\ast }$ has the same unbounded vertices as $R$ except for $u$ (which becomes 
bounded in~$R^{\ast }$). 
As a milestone in the above construction of the 
representation~$R^{\ast }$, we construct an induced subgraph~$G_{0}$ of~$G$ that 
includes~$u$, with the property that the vertex set of~$G_{0}\setminus \{u\}$ 
is a module in $G\setminus \{u\}$. The presented techniques are new and provide 
geometrical insight for the graphs that are both tolerance and cocomparability.

\vspace{-0.2cm}
\paragraph{Organization of the paper.}

We first review in Section~\ref{projection-sec} some properties of tolerance
and trapezoid graphs. Then we define the notion of a \emph{projection
representation} of a tolerance graph~$G$, which is an alternative way to
think about a parallelepiped representation of $G$. Furthermore, we
introduce the \emph{right} and \emph{left border properties} of a vertex in
a projection representation, which are crucial for our analysis. 
In Section~\ref{main-sec} we prove our main results. 
Specifically, we first consider in Section~\ref{right-left-subsec} 
the case where the graph $G$ has at least one unbounded vertex $u$ 
with the right or with the left border property in its projection representation, 
and then we consider in Section~\ref{structure-subsec} the case that $G$ has no such unbounded vertex. 
Next we discuss in Section~\ref{general-subsec} how these results reduce Conjecture~\ref{Golumbic-Monma-conjecture} 
to Conjecture~\ref{minimal-conjecture}. 
Finally, we discuss the presented results and further research 
in Section~\ref{conclusion}.

\section{Definitions and basic properties}
\label{projection-sec}

\paragraph{Notation.}
We consider in this article simple undirected graphs with no loops or multiple edges. 
In a graph ${G=(V,E)}$, the edge between vertices $u$ and $v$ is denoted by $uv$, 
and in this case $u$ and $v$ are called \emph{adjacent} in $G$. 
Given a vertex subset ${S \subseteq V}$, $G[S]$ denotes the induced subgraph of $G$ 
on the vertices in $S$. Whenever it is clear from the context, we may not distinguish 
between a vertex set~$S$ and the induced subgraph $G[S]$ of $G$. 
In particular, if $M$ is a module in $G$, we may also say 
that the induced subgraph $G[M]$ is a module in $G$. 
Furthermore, we denote for simplicity the induced subgraph $G[V\setminus S]$ by $G\setminus S$. 
Denote by $N(u)=\{v\in V\ |\ uv\in E\}$ the set of neighbors of a vertex $u$ in $G$, 
and $N[u]=N(u)\cup \{u\}$. For a subset $U$ of vertices of~$G$, 
denote ${N(U)=\bigcup_{u\in U}N(u)\setminus U}$. 
For any~$k$ vertices $u_{1},u_{2},\ldots ,u_{k}$ of $G$, denote for simplicity 
$N[u_{1},u_{2},\ldots ,u_{k}]=N[u_{1}]\cup N[u_{2}]\cup \ldots \cup N[u_{k}]$, 
i.e.~$N[u_{1},u_{2},\ldots ,u_{k}]=N(\{u_{1},u_{2},\ldots ,u_{k}\})\cup
\{u_{1},u_{2},\ldots ,u_{k}\}$. For any two sets $A$ and $B$, we will write 
$A\subseteq B$ if $A$ is included in $B$, and $A\subset B$ if $A$ is strictly
included in $B$.

\medskip

Consider a trapezoid graph $G=(V,E)$ and a trapezoid representation $R_{T}$
of $G$, where for any vertex $u\in V$ the trapezoid corresponding to $u$ in $%
R_{T}$ is denoted by $T_{u}$. Since trapezoid graphs are also
cocomparability graphs~\cite{Golumbic04}, we can define the partial order $%
(V,\ll _{R_{T}})$, such that~$u\ll _{R_{T}}v$, or equivalently $T_{u}\ll
_{R_{T}}T_{v}$, if and only if $T_{u}$ lies completely to the left of $T_{v}$
in $R_{T}$ (and thus also~$uv\notin E$). Note that there are several
trapezoid representations of a particular trapezoid graph~$G$. Given one
such representation $R_{T}$, we can obtain another one $R_{T}^{\prime }$ by 
\emph{vertical axis flipping} of~$R_{T}$, i.e.~$R_{T}^{\prime }$ is the
mirror image of $R_{T}$ along an imaginary line perpendicular to $L_{1}$ and 
$L_{2}$.

Let us now briefly review the parallelepiped representation model of
tolerance graphs~\cite{MSZ-Model-SIDMA-09}. Consider a tolerance graph $%
G=(V,E)$ and let $V_{B}$ and $V_{U}$ denote the set of bounded and unbounded
vertices of $G$ (for a certain tolerance representation), respectively.
Consider now two parallel lines~$L_{1}$ and $L_{2}$ in the plane. For every
vertex $u\in V $, consider a parallelogram $\overline{P}_{u}$ with two of
its lines on~$L_{1}$ and~$L_{2}$, respectively, and $\phi _{u}$ be the
(common) slope of the other two lines of $\overline{P}_{u}$ with $L_{1}$ 
and~$L_{2}$. For every unbounded vertex $u\in V_{U}$, the parallelogram $%
\overline{P}_{u}$ is trivial, i.e.~a line. In the model of~\cite{MSZ-Model-SIDMA-09}, 
every bounded vertex $u\in V_{B}$ corresponds to the
parallelepiped~${P_{u}=\{(x,y,z)\ |\ (x,y)\in \overline{P}_{u},0\leq z\leq
\phi _{u}\}}$ in the 3-dimensional space, while every unbounded vertex~$u\in
V_{U}$ corresponds to the line~${P_{u}=\{(x,y,z)\ |\ (x,y)\in \overline{P}%
_{u},z=\phi _{u}\}}$. The resulting set $\{P_{u}\ |\ u\in V\}$ of
parallelepipeds in the 3-dimensional space constitutes the 
\emph{parallelepiped representation} of $G$. In this model, two vertices $u,v$ are adjacent
if and only if $P_{u}\cap P_{v}\neq \emptyset $. That is,~$R$ is an
intersection model for $G$. 
For more details we refer to~\cite{MSZ-Model-SIDMA-09}.

An example of a tolerance graph $G$ is given in Figure~\ref{fig-projection-1} 
(in this example, $G$ is the induced path $P_{4}=(z,u,v,w)$ with four vertices). 
Furthermore, a parallelepiped representation $R$ is illustrated in Figure~\ref{fig-projection-2}. 
In particular, vertex $w$ is unbounded in the parallelepiped representation $R$, 
while the vertices $z,u,v$ are bounded in $R$.
In the following, let $V_{B}$ and $V_{U}$ denote the sets of bounded and unbounded 
vertices of a tolerance graph $G$ (for a certain parallelepiped representation), respectively.

\begin{definition}[\hspace{0.3pt}\protect\cite{MSZ-Model-SIDMA-09}]
\label{def5}An unbounded vertex $v\in V_{U}$ of a tolerance graph $G$ is
called \emph{inevitable} (in a certain parallelepiped representation $R$),
if making $v$ a bounded vertex in $R$, i.e.~if replacing $P_{v}$ with $%
\{(x,y,z)\ |\ (x,y)\in P_{v},0\leq z\leq \phi _{v}\}$, creates a new edge in~$G$.
\end{definition}

\begin{definition}[\hspace{0.3pt}\protect\cite{MSZ-Model-SIDMA-09}]
\label{def7}A parallelepiped representation $R$ of a tolerance graph $G$ is
called \emph{canonical} if every unbounded vertex in $R$ is inevitable.
\end{definition}

For example, the parallelepiped representation of Figure~\ref{fig-projection-2} 
is canonical, since $w$ is the only unbounded vertex and it is inevitable. 
A canonical representation of a tolerance graph~$G$ always exists, 
and can be computed in $O(n\log n)$ time, given a parallelepiped
representation of~$G$, where~$n$ is the number of vertices of $G$~\cite{MSZ-Model-SIDMA-09}.

\begin{figure}[h]
\centering
\subfigure[] { \label{fig-projection-1}
\includegraphics[scale=1]{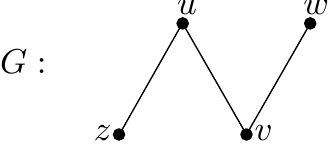}}  \\  
\subfigure[] { \label{fig-projection-2}
\includegraphics[scale=0.74]{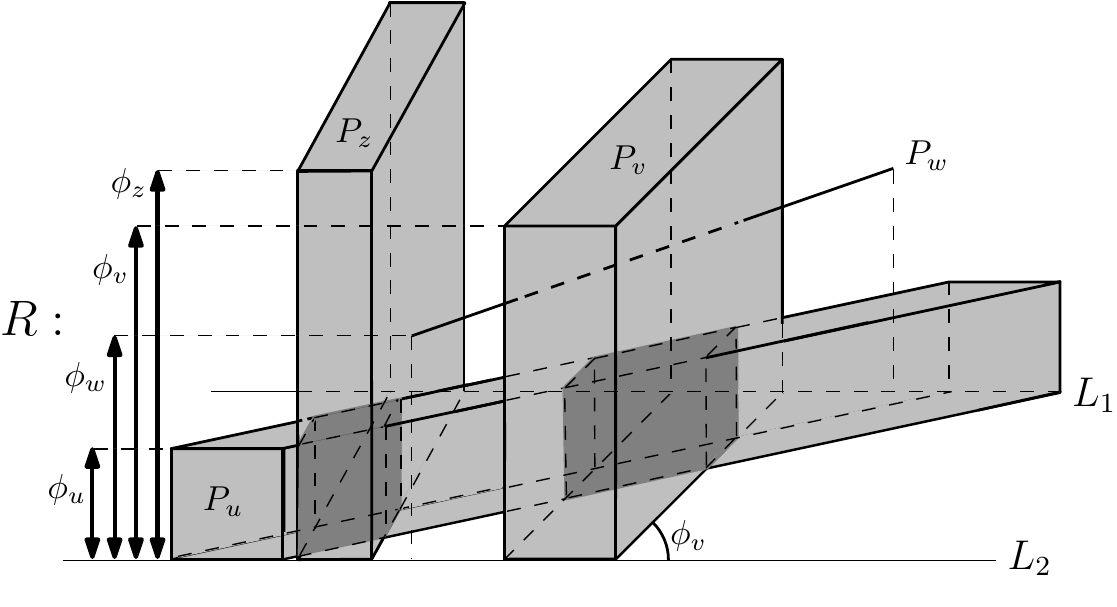}} \hspace{0.05cm}
\subfigure[] { \label{fig-projection-3}
\includegraphics[scale=0.74]{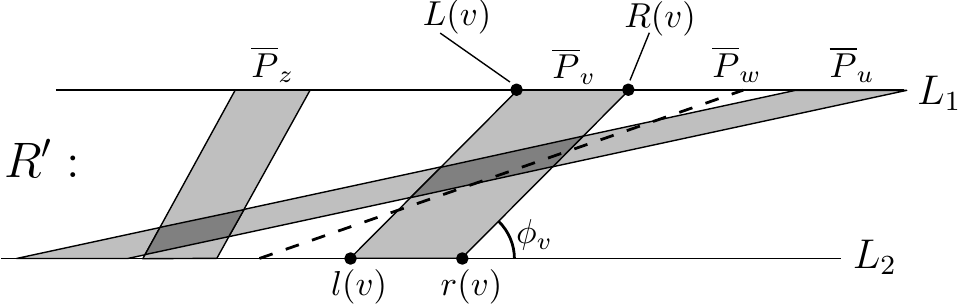}} 
\caption{(a) A tolerance graph $G$ (the induced path $P_{4}=(z,u,v,w)$ with four vertices), 
(b)~a parallelepiped representation $R$ of $G$, 
and (c) the corresponding projection representation $R^{\prime}$ of~$G$.}
\label{fig-projection}
\end{figure}

Given a parallelepiped representation $R$ of the tolerance graph $G$, we
define now an alternative representation, as follows. Let $\overline{P}_{u}$
be the projection of $P_{u}$ to the plane $z=0$ for every $u\in V$. Then,
for two bounded vertices $u$ and $v$, $uv\in E$ if and only if $\overline{P}%
_{u}\cap \overline{P}_{v}\neq \emptyset $. Furthermore, for a bounded vertex 
$v$ and an unbounded vertex $\ u$, $uv\in E$ if and only if $\overline{P}%
_{u}\cap \overline{P}_{v}\neq \emptyset $ and $\phi _{v}>\phi _{u}$.
Moreover, two unbounded vertices $u$ and $v$ of $G$ are never adjacent (even
in the case where $\overline{P}_{u}$ intersects~$\overline{P}_{v}$). In the
following, we will call such a representation a \emph{projection
representation} of a tolerance graph. Note that $\overline{P}_{u}$ is a
parallelogram (resp.~a line segment) if $u$ is bounded (resp.~unbounded).
The projection representation that corresponds to the parallelepiped
representation of Figure~\ref{fig-projection-2} is presented in Figure~\ref%
{fig-projection-3}. In the sequel, we will say that a vertex $u$ is \emph{%
adjacent} to a vertex $v$ in a projection representation $R$, if $u$ is
adjacent to $v$ in the tolerance graph $G_{R}$ induced by $R$. Furthermore,
given a tolerance graph $G$, we will call a projection representation $R$ of 
$G$ a \emph{canonical representation} of $G$, if~$R$ is the projection
representation that is implied by a canonical parallelepiped representation
of~$G$. In the example of Figure~\ref{fig-projection}, the projection
representation $R^{\prime }$ is canonical, since the parallelepiped
representation $R$ is canonical as well.

Let $R$ be a projection representation of a tolerance graph $G=(V,E)$. For
every parallelogram~$\overline{P}_{u}$ in $R$, where $u\in V$, we define by $%
l(u)$ and $r(u)$ (resp.~$L(u)$ and $R(u)$) the lower (resp.~upper) left and
right endpoint of $\overline{P}_{u}$, respectively (cf.~the parallelogram~$\overline{P}_{v}$ in Figure~\ref{fig-projection-3}). 
Note that~$l(u)=r(u)$ and $L(u)=R(u)$ for every unbounded vertex $u$. Furthermore, we denote by~$%
\phi _{u}$ the (common) slope of the lines of $\overline{P}_{u}$ in $R$ that
do not lie on $L_{1}$ or on $L_{2}$ 
(cf.~the parallelepiped~$P_{v}$ in Figure~\ref{fig-projection-2} 
and the parallelogram~$\overline{P}_{v}$ in Figure~\ref{fig-projection-3}). 
We assume throughout the paper w.l.o.g.~that all endpoints and all slopes of the parallelograms in a
projection representation are distinct~\cite{GolTol04,IsaakNT03,MSZ-Model-SIDMA-09}. 
For simplicity of the presentation, we will denote in the following $\overline{P}_{u}$ just by 
$P_{u}$ in any projection representation. 
Throughout the paper, given a projection representation $R$, 
we will often need to transform $R$ to another projection representation $R^{\prime}$ 
by moving endpoints of some parallelograms of $R$. 
After such a transformation, we say that the endpoint $a$ on $L\in\{L_{1},L_{2}\}$ lies in $R^{\prime}$ 
\emph{immediately before} (resp.~\emph{immediately after}) the 
endpoint~$b$ on $L$, if there is no other endpoint between~$a$ and~$b$ in 
$R^{\prime}$, and additionally if $a=b-\varepsilon $ (resp.~$a=b+\varepsilon $) on $L$, 
where $\varepsilon >0$ is a sufficiently small positive number. 
Similarly,  given a set $A$ of points on $L\in\{L_{1},L_{2}\}$, we say that $A$ lies in~$R^{\prime}$ 
\emph{immediately before} (resp.~\emph{immediately after}) the 
endpoint $b$ on $L$, if for every $a\in A$ there is no endpoint $c\notin A\cup \{b\}$ between $a$ and $b$ in 
$R^{\prime}$, and additionally if $a\in (b-\varepsilon,b)$ (resp.~$a\in (b,b+\varepsilon)$) on $L$, 
where $\varepsilon >0$ is a sufficiently small positive number.
The exact value of~$\varepsilon >0$ will be chosen each time appropriately, such that certain conditions hold.

Similarly to a trapezoid representation, we can define the relation $\ll
_{R} $ also for a projection representation $R$. Namely, $P_{u}\ll _{R}P_{v}$
if and only if $P_{u}$ lies completely to the left of $P_{v}$ in~$R$.
Otherwise, if neither $P_{u}\ll _{R}P_{v}$ nor $P_{v}\ll _{R}P_{u}$, we will say that 
$P_{u}$ \emph{intersects} $P_{v}$ in $R$, i.e.~${P_{u}\cap P_{v}\neq\emptyset}$ in $R$. 
Furthermore, we define the total order $<_{R}$ on the lines $L_{1}$ and $L_{2}$ in $R$ as follows. 
For two points $a$ and $b$ on $L_{1}$ (resp.~on~$L_{2}$), if $a$ lies to the left of $b$ 
on $L_{1}$ (resp.~on~$L_{2}$), then we will write~$a <_{R} b$. 
Note that, for two vertices $u$ and $v$ of a tolerance
graph $G=(V,E)$, $P_{u}$~may intersect $P_{v}$ in a projection
representation $R$ of $G$, although $u$ is not adjacent to $v$ in~$G$, i.e.~$%
uv\notin E$. Thus, a projection representation $R$ of a tolerance graph $G$
is \emph{not} necessarily an intersection model for $G$.

Let $R$ be a projection representation of a tolerance graph~$G=(V,E)$ 
and~${S\subseteq V}$ be a set of vertices of $G$. We denote by~${R\setminus S}$ the
representation that we obtain by removing the parallelograms~${\{P_{u}\ |\ u\in S\}}$ 
from~$R$. Then,~$R\setminus S$ is a projection representation of
the induced subgraph~${G\setminus S=G[V\setminus S]}$ of~$G$. 
Furthermore, similarly to the trapezoid representations, there are several 
projection representations of a particular tolerance graph $G$. 
In the next two definitions, we correspond to 
every projection 
representation of a tolerance graph $G$ another projection representation 
of the same graph $G$ with special properties.

\begin{definition}
\label{reverse-def}Let $R$ be a projection representation. 
The \emph{reverse} representation $\widehat{R}$ of $R$ is obtained 
as the rotation of $R$ by the angle $\pi $.
\end{definition}

As an example, given the projection representation $R^{\prime }$ presented
in Figure~\ref{fig-projection-3}, its reverse representation $\widehat{R^{\prime}}$ 
is illustrated in Figure~\ref{fig-reverse}. 
It is easy to see that if $R$ is a projection representation of a tolerance graph $G$, 
then for any two vertices $u$ and $v$ of $G$, $P_{u}\ll _{R}P_{v}$ if and only if 
$P_{v}\ll _{\widehat{R}}P_{u}$, and that $P_{u}\cap P_{v}\neq \emptyset $ in 
$\widehat{R}$ if and only if $P_{u}\cap P_{v}\neq \emptyset $ in $R$. 
Furthermore, the slope $\phi _{u}$ in $\widehat{R}$ equals the slope $\phi _{u}$ in $R$, 
for every vertex $u$ of~$G$. Therefore, reverse representation $\widehat{R}$ of $R$
is also a projection representation of the same graph~$G$.

\begin{definition}
\label{squeeze}Let $L_{1}$ and $L_{2}$ be two parallel lines and $\ell $ be
a line segment with endpoints $a_{\ell }$ and $b_{\ell }$ on $L_{1}$ and on $%
L_{2}$, respectively, and $\varepsilon >0$ be arbitrary. A projection
representation $R_{\ell }$ between $L_{1}$ and $L_{2}$ is $\varepsilon $%
\emph{-squeezed} with respect to $\ell $, if all endpoints of $R_{\ell }$ on 
$L_{1}$ and on $L_{2}$ lie in the intervals $[a_{\ell }-\frac{\varepsilon }{2%
},a_{\ell }+\frac{\varepsilon }{2}]$ and $[b_{\ell }-\frac{\varepsilon }{2}%
,b_{\ell }+\frac{\varepsilon }{2}]$, respectively.
\end{definition}

As an example, given the projection representation $R^{\prime }$ presented
in Figure~\ref{fig-projection-3}, the $\varepsilon $-squeezed 
representation $R_{\ell}^{\prime}$ of $R^{\prime }$ with respect to a line $\ell$ 
is illustrated in Figure~\ref{fig-squeezed}. 
It can be easily seen that, given a projection representation $R$ of a tolerance graph~$G$, 
a line segment $\ell$ with endpoints on $L_{1}$ and on $L_{2}$, and any $\varepsilon >0$, 
there clearly exists an $\varepsilon $-squeezed projection representation $R_{\ell }$ of $G$ with
respect to $\ell $; however, we will apply this squeezing operation in a rather delicate way (cf.~the 
proof of Theorem 2) to only some of the parallelograms in a given representation, in order to get 
some desired properties.

\vspace{-0.3cm}
\begin{figure}[bht]
\centering
\subfigure[] { \label{fig-reverse}
\includegraphics[scale=0.9]{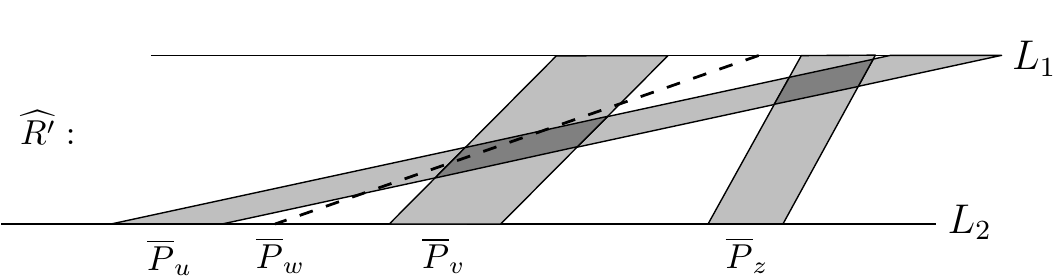}} \hspace{0.5cm}
\subfigure[] { \label{fig-squeezed}
\includegraphics[scale=0.9]{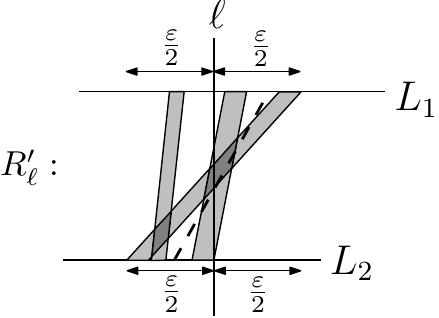}} 
\caption{(a) The reverse representation $\widehat{R^{\prime }}$ of the
projection representation $R^{\prime }$ of Figure~\protect\ref%
{fig-projection-3}, and (b) the $\protect\varepsilon $-squeezed
representation $R_{\ell }^{\prime }$ of $R^{\prime }$ with respect to the
line $\ell $.}
\label{fig-reverse-squeezed}
\end{figure}
\vspace{-0.3cm}

\begin{lemma}
\label{unbounded-bounded}Let $G$ be a tolerance graph and $u$ be an
unbounded vertex of $G\ $in a projection representation $R$ of $G$. Then, $%
r(u)<_{R}r(v)$, $L(v)<_{R}L(u)$, and $v$ is a bounded vertex in $R$, for
every $v\in N(u)$.
\end{lemma}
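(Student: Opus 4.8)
The plan is to handle the three assertions in order, deriving the two endpoint inequalities from the elementary geometry of a line crossing a parallelogram, once the slope inequality has been extracted. I would first dispose of the boundedness claim, which is the shortest. By the adjacency rules of a projection representation, two unbounded vertices are never adjacent, whereas $v\in N(u)$ means $uv\in E$. Since $u$ is unbounded, $v$ cannot be unbounded, so $v$ is a bounded vertex; this also guarantees that $P_v$ is a genuine (non-degenerate) parallelogram, which is needed below.

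Knowing now that $u$ is unbounded, $v$ is bounded, and $uv\in E$, the adjacency rule for a bounded--unbounded pair supplies exactly the two ingredients I need: $P_u\cap P_v\neq\emptyset$ and the strict slope inequality $\phi_v>\phi_u$. I would let $L_1$ carry the lower endpoints $l(\cdot),r(\cdot)$ and $L_2$ the upper endpoints $L(\cdot),R(\cdot)$, place them at heights $0$ and $1$, and write each slanted boundary as an affine function of the height $y\in[0,1]$: the line $P_u$ as $x_u(y)=r(u)(1-y)+L(u)y$ (using $l(u)=r(u)$ and $L(u)=R(u)$), and the left and right sides of $P_v$ as $x_v^{-}(y)=l(v)(1-y)+L(v)y$ and $x_v^{+}(y)=r(v)(1-y)+R(v)y$. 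With the orientation fixed by the representation, $\phi_v>\phi_u$ translates into the statement that, going from $L_1$ to $L_2$, the line $P_u$ shifts rightward strictly faster than the two parallel sides of $P_v$; equivalently, each of the gaps $x_u(y)-x_v^{+}(y)$ and $x_u(y)-x_v^{-}(y)$ is a strictly increasing affine function of $y$.

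With this monotonicity in hand, I would prove $r(u)<_R r(v)$ by contradiction: if instead $r(v)<_R r(u)$ (endpoints are distinct by our standing assumption), then $x_u(0)-x_v^{+}(0)=r(u)-r(v)>0$, and since $x_u-x_v^{+}$ is increasing it remains positive on all of $[0,1]$; hence $P_u$ lies entirely to the right of the right side of $P_v$, so $P_u\cap P_v=\emptyset$, contradicting $uv\in E$. The inequality $L(v)<_R L(u)$ is the mirror statement, and I would obtain it either by the same argument applied to $x_v^{-}$ at the top of the strip, where the negation would force $x_u(1)-x_v^{-}(1)=L(u)-L(v)<0$ throughout $[0,1]$ and push $P_u$ entirely to the left of $P_v$; or, more economically, by applying the already-proved right-endpoint inequality to the reverse representation $\widehat{R}$ of Definition~\ref{reverse-def}, under which $u$ is still unbounded, $v$ is still a neighbour, and lower-right endpoints are exchanged with upper-left endpoints in reversed order.

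The step I expect to be the main obstacle is not the contradiction itself but pinning down the \emph{direction} of the translation from $\phi_v>\phi_u$ to the monotonicity of the horizontal gaps, i.e.\ using the slope convention of the representation consistently so that a violated endpoint inequality pushes $P_u$ to the correct side rather than the opposite one. A secondary, routine point is the degenerate case in which $P_u$ meets $P_v$ only along $L_1$ or only along $L_2$; there the strict conclusions follow from the standing assumption that all endpoints in a projection representation are distinct.
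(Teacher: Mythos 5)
Your proof is correct and takes essentially the same route as the paper's: both deduce that $v$ is bounded from the rule that no two unbounded vertices are adjacent, extract the two ingredients $P_u\cap P_v\neq\emptyset$ and $\phi_v>\phi_u$ from the bounded--unbounded adjacency rule, and obtain each endpoint inequality by the same one-step contradiction (the paper contradicts the slope inequality where you contradict the intersection, but these are the two sides of the identical geometric fact, which your affine parametrization of the slanted sides merely makes quantitative, and your alternative via the reverse representation $\widehat{R}$ is a valid shortcut for the second inequality). One cosmetic slip: you placed the lower endpoints $l(\cdot),r(\cdot)$ on $L_1$, whereas in the paper's convention they lie on $L_2$ and the upper endpoints $L(\cdot),R(\cdot)$ on $L_1$; since you work with your own height coordinates $y\in[0,1]$, this does not affect the argument.
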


\begin{proof}
Let $v\in N(u)$. Then, since $u$ is unbounded, and since no two unbounded
vertices are adjacent, $v$ is a bounded vertex in $R$ and $\phi _{v}>\phi
_{u}$. Moreover, $P_{u}$ intersects $P_{v}$ in the projection representation 
$R$. Suppose that $r(u)=l(u)>_{R}r(v)$ (resp.~$L(v)>_{R}L(u)=R(u)$). Then,
since $P_{u}$ intersects $P_{v}$ in $R$, it follows that $L(u)=R(u)<_{R}R(v)$
(resp.~$l(v)<_{R}r(u)=l(u)$), and thus $\phi _{v}<\phi _{u}$, which is a
contradiction. Therefore, $r(u)<_{R}r(v)$ and $L(v)<_{R}L(u)$.
\end{proof}

\begin{lemma}
\label{unbounded-hovering}Let $G$ be a tolerance graph and $u$ be an
unbounded vertex of $G\ $in a projection representation $R$ of $G$. Then, $%
l(v)<_{R}l(u)$ and $R(u)<_{R}R(v)$ for every vertex $v\neq u$, such that $%
P_{v}$ intersects $P_{u}$ in $R$ and $\phi _{v}<\phi _{u}$.
\end{lemma}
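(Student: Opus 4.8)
The plan is to mirror the proof of Lemma~\ref{unbounded-bounded}, treating the left and the right slanted side of $P_v$ separately and reaching in each case a contradiction with the hypothesis $\phi_v<\phi_u$. Throughout I use that, since $u$ is unbounded, $P_u$ is a line segment with $l(u)=r(u)$ on $L_1$ and $L(u)=R(u)$ on $L_2$, and that --- by the same slope convention already exploited in Lemma~\ref{unbounded-bounded} --- a \emph{larger} slope $\phi$ corresponds to slanted sides that lean farther to the \emph{left} as one moves from $L_1$ up to $L_2$. Thus $\phi_v<\phi_u$ says that the slanted sides of $P_v$ lean farther to the right than the line $P_u$. Note that $v$ may be bounded or unbounded; in the latter case $P_v$ is itself a line segment, but the comparison of slanted sides below is literally the same, so no case distinction is needed.

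First I would establish $l(v)<_R l(u)$. Assume for contradiction that $l(u)<_R l(v)$ (the endpoints are distinct by the general-position assumption). Since $P_v$ intersects $P_u$, we have $P_u\not\ll_R P_v$, i.e.\ $P_u$ does not lie completely to the left of $P_v$; as $r(u)=l(u)<_R l(v)$ already holds on $L_1$, the only way this can fail is that $P_v$ overtakes $P_u$ on $L_2$, namely $L(v)<_R L(u)=R(u)$. Then on the left side of $P_v$ the bottom endpoint $l(v)$ lies to the right of $l(u)$ while the top endpoint $L(v)$ lies to the left of $L(u)$, so this side leans farther to the left than $P_u$; by the slope convention this means $\phi_v>\phi_u$, contradicting $\phi_v<\phi_u$.

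The inequality $R(u)<_R R(v)$ is obtained symmetrically from the right side of $P_v$. Assume for contradiction that $R(v)<_R R(u)$. From $P_v\not\ll_R P_u$ together with $R(v)<_R R(u)=L(u)$ (which already holds on $L_2$) we are forced to have $l(u)=r(u)<_R r(v)$ on $L_1$. Hence the right side of $P_v$ has its bottom endpoint $r(v)$ to the right of $r(u)$ and its top endpoint $R(v)$ to the left of $R(u)$, so it again leans farther to the left than $P_u$, giving $\phi_v>\phi_u$, a contradiction. This proves both claimed inequalities.

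The step that I expect to need the most care is the passage between the purely combinatorial orderings of endpoints and the slope inequality: one must fix, consistently with the convention implicit in Lemma~\ref{unbounded-bounded}, that ``bottom endpoint to the right and top endpoint to the left'' is exactly the condition $\phi_v>\phi_u$, and one must justify cleanly why the intersection hypothesis forces the auxiliary reversed ordering (here $L(v)<_R L(u)$ in the first part and $r(u)<_R r(v)$ in the second). The latter is just the statement that, for two convex objects spanning $L_1$ and $L_2$, failing to be separated by $\ll_R$ means that one of the two endpoint orders on $L_1$, $L_2$ is reversed; combined with the contradiction hypothesis this pins down which one. Everything else is a direct reading of the definitions. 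Equivalently, and perhaps more transparently, one can write each slanted side as an affine function of the height $z\in[0,1]$ (with $L_1$ at $z=0$ and $L_2$ at $z=1$) and observe that, because $\phi_v<\phi_u$, the signed horizontal gap from $P_u$ to the left side of $P_v$ is decreasing in $z$ while the gap to the right side is increasing; evaluating both at the common height supplied by the intersection then yields $l(v)<_R l(u)$ at $L_1$ and $R(u)<_R R(v)$ at $L_2$.
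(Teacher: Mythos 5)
Your proof is correct and takes essentially the same route as the paper's: both argue each inequality by contradiction, use the intersection hypothesis ($P_u\not\ll_R P_v$ and $P_v\not\ll_R P_u$) to force the reversed endpoint ordering on the other line (namely $L(v)<_R L(u)$ in the first case and $r(u)<_R r(v)$ in the second), and conclude $\phi_v>\phi_u$, contradicting $\phi_v<\phi_u$. The only cosmetic deviation is that you place the lower endpoints $l,r$ on $L_1$ and the upper endpoints $L,R$ on $L_2$, the opposite of the paper's convention, but since you apply it consistently this does not affect the validity of the argument.
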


\begin{proof}
Suppose first that $l(u)<_{R}l(v)$. Then, since by assumption $P_{v}$
intersects $P_{u}$ in $R$, it follows that $L(v)<_{R}L(u)$, and thus $\phi
_{v}>\phi _{u}$ in $R$, which is a contradiction. Thus, $l(v)<_{R}l(u)$.
Similarly, if $R(v)<_{R}R(u)$, then $r(u)<_{R}r(v)$, since $P_{v}$
intersects $P_{u}$ in $R$, and thus $\phi _{v}>\phi _{u}$ in $R$, which is
again a contradiction. Thus, $R(u)<_{R}R(v)$.
\end{proof}

\medskip

In Figure~\ref{fig-reverse} an example for Lemma~\ref{unbounded-bounded}
(resp.~Lemma~\ref{unbounded-hovering}) is illustrated, where $w$ is the
unbounded vertex and $v\in N(w)$ (resp.~$u$ is a vertex, such that $P_{u}$
intersects $P_{w}$ in $R$ and $\phi _{u}<\phi _{w}$).

\begin{lemma}
\label{intersecting-unbounded}Let $G=(V,E)$ be a tolerance graph, $R$ be a
projection representation of $G$, and $u,v$ be two vertices of $G$. If $%
uv\notin E$, $P_{u}$ intersects $P_{v}$ in $R$, and $\phi _{v}<\phi _{u}$ in 
$R$, then $N(u)\subseteq N(v)$.
\end{lemma}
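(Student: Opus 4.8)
The plan is to first determine which of $u,v$ can be bounded, and then reduce the inclusion $N(u)\subseteq N(v)$ to a comparison of endpoint positions on $L_{1}$ and $L_{2}$, using Lemmas~\ref{unbounded-bounded} and~\ref{unbounded-hovering}.

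First I would show that $u$ must be an unbounded vertex in $R$. Indeed, suppose $u$ were bounded. If $v$ were also bounded, then since $P_{u}$ intersects $P_{v}$, the projection adjacency rule for two bounded vertices would force $uv\in E$, contradicting the hypothesis. If instead $v$ were unbounded, then $u$ bounded and $v$ unbounded with $P_{u}$ intersecting $P_{v}$ and $\phi_{u}>\phi_{v}$ would again force $uv\in E$ by the adjacency rule between a bounded and an unbounded vertex, a contradiction. Hence $u$ is unbounded, which is exactly what lets me invoke both preceding lemmas. Next, fix an arbitrary $w\in N(u)$; since $uv\notin E$ we have $w\neq v$, and the goal is $vw\in E$. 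Because $u$ is unbounded, Lemma~\ref{unbounded-bounded} gives that $w$ is bounded, that $\phi_{w}>\phi_{u}$, and that $r(u)<_{R}r(w)$ and $L(w)<_{R}L(u)$. Applying Lemma~\ref{unbounded-hovering} to the unbounded vertex $u$ and the vertex $v$ (which intersects $P_{u}$ and has $\phi_{v}<\phi_{u}$) gives $l(v)<_{R}l(u)$ and $R(u)<_{R}R(v)$. I would also record that $\phi_{w}>\phi_{u}>\phi_{v}$, which is needed at the end.

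The core step is to show that $P_{v}$ intersects $P_{w}$ in $R$, i.e.\ that neither $P_{v}\ll_{R}P_{w}$ nor $P_{w}\ll_{R}P_{v}$. Here I would use crucially that $u$ is unbounded, so $l(u)=r(u)$ and $L(u)=R(u)$. Chaining the inequalities on $L_{1}$ gives $l(v)<_{R}l(u)=r(u)<_{R}r(w)$, so $r(w)<_{R}l(v)$ fails and $P_{w}\ll_{R}P_{v}$ is impossible; chaining on $L_{2}$ gives $L(w)<_{R}L(u)=R(u)<_{R}R(v)$, so $R(v)<_{R}L(w)$ fails and $P_{v}\ll_{R}P_{w}$ is impossible. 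Hence $P_{v}$ and $P_{w}$ intersect. Finally I would conclude $vw\in E$ from the projection adjacency rules, distinguishing the two cases left open by the first step: if $v$ is bounded then $v,w$ are both bounded and intersection alone yields $vw\in E$; if $v$ is unbounded then $w$ is bounded, $v$ is unbounded, and intersection together with $\phi_{w}>\phi_{v}$ (from $\phi_{w}>\phi_{u}>\phi_{v}$) yields $vw\in E$. In either case $w\in N(v)$, proving $N(u)\subseteq N(v)$.

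As for the main obstacle, there is little genuine difficulty once the setup is right, and I expect the only real content to be the opening observation that $u$ is forced to be unbounded --- this is what simultaneously makes Lemmas~\ref{unbounded-bounded} and~\ref{unbounded-hovering} applicable and collapses the coinciding endpoints $l(u)=r(u)$ and $L(u)=R(u)$, so that the four endpoint inequalities telescope cleanly. The remaining point requiring care is to remember, when $v$ is itself unbounded, that adjacency of $v$ and $w$ additionally requires the slope condition $\phi_{w}>\phi_{v}$, which is precisely why I would track the slopes from the start.
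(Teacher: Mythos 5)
Your proof is correct and follows essentially the same route as the paper's: first forcing $u$ to be unbounded via the adjacency rules, then deriving $l(v)<_{R}l(u)$ and $R(u)<_{R}R(v)$, and chaining these with the inequalities of Lemma~\ref{unbounded-bounded} for each $w\in N(u)$ to get intersection of $P_{w}$ and $P_{v}$ together with $\phi_{w}>\phi_{u}>\phi_{v}$. The only (harmless) difference is that you invoke Lemma~\ref{unbounded-hovering} uniformly for both bounded and unbounded $v$ --- legitimate, since its statement places no boundedness restriction on $v$ --- whereas the paper applies it only when $v$ is bounded and reproves the unbounded case by hand.
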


\begin{proof}
Suppose first that $u$ is a bounded vertex in $R$. Then, in both cases where 
$v$ is bounded or unbounded, $u$ is adjacent to $v$ in $R$, since $P_{v}\cap
P_{u}\neq \emptyset $ and $\phi _{v}<\phi _{u}$. This is a contradiction,
since $vu\notin E$, and thus $u$ is an unbounded vertex of $R$. If $v$ is a
bounded vertex, then $l(v)<_{R}l(u)$ and $R(u)<_{R}R(v)$ by Lemma~\ref%
{unbounded-hovering}. Suppose that $v$ is unbounded. If $l(u)<_{R}l(v)$,
then $L(v)<_{R}L(u)$, since $P_{u}$ intersects $P_{v}$ in $R$, and thus $%
\phi _{v}>\phi _{u}$, which is a contradiction to the assumption. Therefore $%
l(v)<_{R}l(u)$, and thus also $R(u)=L(u)<_{R}L(v)=R(v)$, since $P_{u}$
intersects $P_{v}$ in $R$. Summarizing, $l(v)<_{R}r(u)=l(u)$ and $%
R(u)=L(u)<_{R}R(v)$ in both cases where $v$ is bounded and unbounded.
Consider now a vertex $w\in N(u)$. Then, $w$ is a bounded vertex in $R$, $%
r(w)>_{R}r(u)$, and $L(w)<_{R}L(u)$ by Lemma~\ref{unbounded-bounded}.
Furthermore, $\phi _{w}>\phi _{u}>\phi _{v}$. Therefore, $r(w)>_{R}l(v)$ and 
$L(w)<_{R}R(v)$, and thus~$P_{w}$ intersects $P_{v}$ in $R$. Thus, since
also $\phi _{w}>\phi _{v}$, it follows that $w\in N(v)$. Therefore, $%
N(u)\subseteq N(v)$.
\end{proof}

\medskip

In~\cite{GolSi02,MSZ-Model-SIDMA-09} the \emph{hovering set} of an unbounded vertex
in a tolerance graph has been defined. According to these definitions, the
hovering set depends on a particular representation of the tolerance graph.
In the following, we extend this definition to the notion of \emph{covering}
vertices of an arbitrary graph $G$, which is independent of any
representation of $G$.

\begin{definition}
\label{hovering}Let $G=(V,E)$ be an arbitrary graph and $u\in V$ be a vertex
of $G$. Then,\vspace{-0.2cm}%
\begin{itemize}%
\item the set $\mathcal{C}(u)=\{v\in V\setminus N[u]\ |\ N(u)\subseteq N(v)\}$ 
is the \emph{covering set} of $u$, and every vertex $v\in \mathcal{C}(u)$%
is a \emph{covering vertex} of $u$,
\item $V_{0}(u)$ is the set of connected components of $G\setminus N[u]$ 
that have at least one covering vertex $v\in \mathcal{C}(u)$ of $u$.%
\end{itemize}
\end{definition}

Now, similarly to~\cite{GolSi02}, we state the following auxiliary lemma.

\begin{lemma}
\label{bounded-hovering}Let $G=(V,E)$ be a tolerance graph and $R$ be a
canonical representation of $G$. Then, for every unbounded vertex $u$ of $G$
in $R$, there exists a covering vertex $u^{\ast }$ of $u$ in $G$, such that $%
u^{\ast }$ is bounded in $R$, $P_{u^{\ast }}$ intersects $P_{u}$ in $R$, and 
$\phi _{u^{\ast }}<\phi _{u}$. Thus, in particular $V_{0}(u)\neq \emptyset$.
\end{lemma}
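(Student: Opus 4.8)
The plan is to first convert the canonicity hypothesis into a concrete geometric statement about the projection representation $R$, and then to locate $u^{\ast}$ as the vertex of smallest slope in a naturally associated family, the point being that the extremal choice is forced to be bounded. Since $R$ is canonical (Definition~\ref{def7}), the unbounded vertex $u$ is inevitable (Definition~\ref{def5}). Reading this defining condition through the projection picture, I would first observe that ``making $u$ bounded creates a new edge'' is equivalent to the existence of a vertex $w\neq u$ with $uw\notin E$, such that $P_w$ intersects $P_u$ in $R$ and $\phi_w<\phi_u$: after $u$ becomes bounded it is adjacent to every vertex whose parallelogram it intersects regardless of the slope comparison, whereas beforehand it was adjacent (as an unbounded vertex) only to those it intersects with strictly larger slope. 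Hence the set
\[
W=\{\,w\in V\setminus\{u\}\ :\ uw\notin E,\ P_w\text{ intersects }P_u\text{ in }R,\ \phi_w<\phi_u\,\}
\]
is nonempty. Note that every $w\in W$ already satisfies $N(u)\subseteq N(w)$ by Lemma~\ref{intersecting-unbounded}, so it is automatically a covering vertex of $u$ (Definition~\ref{hovering}); the only missing property is that we need a $w$ that is \emph{bounded}.

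I would therefore let $u^{\ast}$ be the vertex of $W$ with the smallest slope $\phi_{u^{\ast}}$ (well-defined, since the slopes are finite and distinct), and claim that $u^{\ast}$ is bounded. Suppose not. Then $u^{\ast}$ is itself an unbounded vertex of the canonical representation $R$, hence inevitable, so there is a witness $w'$ with $u^{\ast}w'\notin E$, with $P_{w'}$ intersecting $P_{u^{\ast}}$, and with $\phi_{w'}<\phi_{u^{\ast}}$. The plan is to show $w'\in W$, which contradicts the minimality of $\phi_{u^{\ast}}$. The slope condition $\phi_{w'}<\phi_{u^{\ast}}<\phi_u$ is immediate, and $uw'\notin E$ follows easily in both cases: if $w'$ is bounded then $\phi_{w'}<\phi_u$ forbids the edge, and if $w'$ is unbounded then two unbounded vertices are never adjacent.

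The crux, and what I expect to be the main obstacle, is to verify that $P_{w'}$ intersects $P_u$, i.e.\ that neither $P_u\ll_R P_{w'}$ nor $P_{w'}\ll_R P_u$ holds; this is exactly the descent step where the inevitability of $u^{\ast}$ has to be paid for with a correct combination of endpoint inequalities. Here I would chain the two hovering lemmas. Applying Lemma~\ref{unbounded-hovering} to the unbounded vertex $u$ and to $u^{\ast}$ (which intersects $P_u$ with smaller slope) gives $l(u^{\ast})<_R l(u)$ and $R(u)<_R R(u^{\ast})$; applying it again to the unbounded vertex $u^{\ast}$ and to $w'$ gives $l(w')<_R l(u^{\ast})$ and $R(u^{\ast})<_R R(w')$. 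Chaining these, and using $l(u)=r(u)$ and $L(u)=R(u)$ for the unbounded vertex $u$, we get $l(w')<_R r(u)$, which rules out $P_u\ll_R P_{w'}$, and $L(u)=R(u)<_R R(w')$, which rules out $P_{w'}\ll_R P_u$. Thus $P_{w'}$ intersects $P_u$, so $w'\in W$, the desired contradiction. Hence $u^{\ast}$ is bounded.

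Finally, with a bounded $u^{\ast}\in W$ in hand, I would invoke Lemma~\ref{intersecting-unbounded} once more (for $u$ and $u^{\ast}$, which satisfy $uu^{\ast}\notin E$, $P_{u^{\ast}}$ intersects $P_u$, and $\phi_{u^{\ast}}<\phi_u$) to obtain $N(u)\subseteq N(u^{\ast})$. Since also $u^{\ast}\notin N[u]$, Definition~\ref{hovering} gives $u^{\ast}\in\mathcal{C}(u)$, so $u^{\ast}$ is a covering vertex of $u$ that is bounded in $R$, intersects $P_u$, and satisfies $\phi_{u^{\ast}}<\phi_u$, as required. In particular the connected component of $G\setminus N[u]$ containing $u^{\ast}$ belongs to $V_0(u)$, whence $V_0(u)\neq\emptyset$.
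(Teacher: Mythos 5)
Your proposal is correct and takes essentially the same route as the paper's own proof: both extract the initial witness from the inevitability of $u$ in the canonical representation, pick the minimum-slope vertex among the intersecting witnesses, and refute its unboundedness by producing a smaller-slope witness via Lemma~\ref{intersecting-unbounded} together with the same chaining of endpoint inequalities from Lemma~\ref{unbounded-hovering} ($l(w')<_{R}l(u^{\ast})<_{R}l(u)$ and $R(u)<_{R}R(u^{\ast})<_{R}R(w')$). One cosmetic caveat: your parenthetical claim that a bounded $u$ is adjacent to every vertex its parallelogram intersects \emph{regardless of slope} is not accurate for an unbounded $w$ with $\phi_{w}>\phi_{u}$, but this only affects the unused converse of your equivalence, not the direction (inevitability implies $W\neq\emptyset$) that the argument actually relies on.
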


\begin{proof}
Let $u$ be an arbitrary unbounded vertex of $G$ in $R$. Since $R$ is a
canonical representation of $G$, if we make $u$ a bounded vertex in $R$,
then we introduce at last one new adjacency $uu^{\ast }$ in $G$ by
Definitions~\ref{def5} and~\ref{def7}. That is, there exists at least one
vertex $u^{\ast }$, such that $P_{u^{\ast }}$ intersects $P_{u}$ in $R$, $%
\phi _{u^{\ast }}<\phi _{u}$, and $uu^{\ast }\notin E$. Then, Lemma~\ref%
{intersecting-unbounded} implies that $N(u)\subseteq N(u^{\ast })$, i.e.~$%
u^{\ast }$ is a covering vertex of $u$.

Suppose now that every covering vertex $v$ of $u$, such that $P_{v}$
intersects $P_{u}$ in $R$ and $\phi _{v}<\phi _{u}$, is unbounded, and let $%
u^{\ast }$ be the vertex with the smallest slope $\phi _{u^{\ast }}$ among
them in $R$. Then, since $P_{u^{\ast }}$ intersects $P_{u}$ in $R$ and $\phi
_{u^{\ast }}<\phi _{u}$, it follows that $l(u^{\ast })=r(u^{\ast
})<_{R}l(u)=r(u)$ and $L(u^{\ast })=R(u^{\ast })>_{R}L(u)=R(u)$.
Furthermore, since $u^{\ast }$ is assumed to be unbounded, there exists
similarly to the previous paragraph at least one vertex $u^{\ast \ast }$,
such that $P_{u^{\ast \ast }}$ intersects $P_{u^{\ast }}$ in $R$ and $\phi
_{u^{\ast \ast }}<\phi _{u^{\ast }}$, and thus $N(u^{\ast })\subseteq
N(u^{\ast \ast })$ by Lemma~\ref{intersecting-unbounded}. Thus $%
N(u)\subseteq N(u^{\ast \ast })$, since also $N(u)\subseteq N(u^{\ast })$.
Furthermore, $l(u^{\ast \ast })<_{R}l(u^{\ast })$ and $R(u^{\ast
})<_{R}R(u^{\ast \ast })$ by Lemma~\ref{unbounded-hovering}. That is, $%
l(u^{\ast \ast })<_{R}l(u^{\ast })<_{R}l(u)$ and $R(u)<_{R}R(u^{\ast
})<_{R}R(u^{\ast \ast })$, and thus $P_{u^{\ast \ast }}$ intersects $P_{u}$
in $R$. Moreover $uu^{\ast \ast }\notin E$, since $u$ is unbounded and $\phi
_{u^{\ast \ast }}<\phi _{u^{\ast }}<\phi _{u}$.

Summarizing, $u^{\ast \ast }$ is a covering vertex of $u$, $P_{u^{\ast \ast
}}$ intersects $P_{u}$ in $R$ and $\phi _{u^{\ast \ast }}<\phi _{u}$. This
is a contradiction, since $\phi _{u^{\ast \ast }}<\phi _{u^{\ast }}$, and
since $u^{\ast }$ has by assumption the smallest slope $\phi _{u^{\ast }}$
among the covering vertices $v$ of $u$, such that $P_{v}$ intersects $P_{u}$
in $R$ and $\phi _{v}<\phi _{u}$. Therefore, there exists for every
unbounded vertex $u$ at least one covering vertex $u^{\ast }$ of $u$, such
that $P_{u^{\ast }}$ intersects $P_{u}$ in $R$, $\phi _{u^{\ast }}<\phi _{u}$%
, and $u^{\ast }$ is bounded in $R$. 
Furthermore, note that $u^{\ast} \in V_{0}(u)$, and thus $V_{0}(u)\neq \emptyset$.
This completes the proof of the lemma.
\end{proof}

\medskip

In the following, for simplicity of the presentation, we may not distinguish between 
the connected components of $V_{0}(u)$ and the vertex set of these components. 
Note here that $V_{0}(u)\neq \emptyset $ for every unbounded vertex $u$ in a
canonical representation $R$, as we proved in Lemma~\ref{bounded-hovering}. 
In the next definition we introduce the notion of the right (resp.~left) border
property of a vertex $u$ in a projection representation $R$ of a tolerance
graph $G$. This notion is of particular importance for the remainder of the paper.

\begin{definition}
\label{right-left-property-def}Let $G=(V,E)$ be a tolerance graph, $u$ be an
arbitrary vertex of $G$, and $R$ be a projection representation of $G$.
Then, $u$ has the \emph{right} (resp.~\emph{left}) \emph{border property} in 
$R$, if there exists \emph{no pair} of vertices $w\in N(u)$ and $x\in V_{0}(u)$,
such that $P_{w}\ll _{R}P_{x}$ (resp.~$P_{x}\ll _{R}P_{w}$).
\end{definition}

Observe that, if a vertex $u$ has the left border property in a projection 
representation~$R$ of a tolerance graph $G$, then $u$ has the right border 
property in the reverse representation $\widehat{R}$ of~$R$.
We denote in the following by \textsc{Tolerance} the class of tolerance graphs, 
and we use the corresponding notations for the classes of bounded tolerance, cocomparability, and trapezoid graphs.

Let $G\in$ \textsc{Tolerance }$\cap $ \textsc{Cocomparability}.
Then $G$ is also a trapezoid graph~\cite{Fel98}. Thus, since \textsc{%
Trapezoid }$\subseteq $\textsc{\ Cocomparability}, it follows that \textsc{%
Tolerance }$\cap $ \textsc{Cocomparability }$=$ \textsc{Tolerance }$\cap $ 
\textsc{Trapezoid}. Furthermore, clearly \textsc{Bounded Tolerance} $%
\subseteq $ \textsc{(Tolerance }$\cap $\textsc{\ Trapezoid)}, since \textsc{%
Bounded Tolerance} $\subseteq $ \textsc{Tolerance} and \textsc{Bounded
Tolerance} $\subseteq $ \textsc{Trapezoid}.
In what follows, we consider a graph 
$G\in$ \textsc{(Tolerance }$\cap $\textsc{\ Trapezoid)\ }$\setminus $ \textsc{Bounded Tolerance}, 
assuming that one exists, and our aim is to get to a contradiction; namely, to prove that 
\textsc{(Tolerance }$\cap $\textsc{\ Trapezoid)\ }$=$ \textsc{Bounded Tolerance}.

Now we state two lemmas that are of crucial importance for the proof of Theorems~\ref{right-property-thm} 
and~\ref{no-property-thm}, (in~Sections~\ref{right-left-subsec} and~\ref{structure-subsec}, respectively). 

\begin{lemma}
\label{two-components}Let $G\in$ \textsc{(Tolerance }$\cap $\textsc{\
Trapezoid)\ }$\setminus $ \textsc{Bounded Tolerance} with the smallest
number of vertices and $u$ be a vertex of $G$. Then, either $%
V_{0}(u)=\emptyset $ or $V_{0}(u)$ is connected.
\end{lemma}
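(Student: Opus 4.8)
The plan is to argue by contradiction, exploiting minimality through the fact that both \textsc{Tolerance} and \textsc{Trapezoid} are closed under induced subgraphs: since $G$ has the fewest vertices among graphs in $(\textsc{Tolerance}\cap\textsc{Trapezoid})\setminus\textsc{Bounded Tolerance}$, every proper induced subgraph of $G$ lies in $\textsc{Tolerance}\cap\textsc{Trapezoid}$ and therefore, by minimality, is a bounded tolerance graph, i.e.\ a parallelogram graph. Assume for contradiction that $V_0(u)$ contains two distinct components $C_1,C_2$ of $G\setminus N[u]$, and fix covering vertices $v_1\in C_1\cap\mathcal C(u)$ and $v_2\in C_2\cap\mathcal C(u)$ (cf.\ Definition~\ref{hovering}).

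Before the construction I would record the relevant structural facts. Since each $C_i$ is a connected component of $G\setminus N[u]$, no vertex of $C_i$ is adjacent to $u$, to any vertex of the other component, or to any vertex outside $N[u]\cup C_i$; in other words the only neighbours of $C_i$ lying outside $C_i$ belong to $N(u)$. Combining this with $N(u)\subseteq N(v_i)$ (which holds because $v_i\in\mathcal C(u)$) yields $N(v_i)=N(u)\cup\bigl(N(v_i)\cap C_i\bigr)$, so that each $v_i$ is adjacent to all of $N(u)$, while $u,v_1,v_2$ form an independent set.

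The main step is to manufacture a parallelogram representation of the whole of $G$, contradicting $G\notin\textsc{Bounded Tolerance}$. I would take a parallelogram representation $R'$ of the proper subgraph $G\setminus C_2$ and a parallelogram representation $R''$ of the proper subgraph $G[\{u\}\cup N(u)\cup C_2]$ (both are proper, because $C_1\neq\emptyset$ resp.\ $C_2\neq\emptyset$, hence both are bounded tolerance by minimality). In each of them $u$ is adjacent to exactly $N(u)$, and each also contains a second vertex that is non-adjacent to $u$ but adjacent to all of $N(u)$, namely $v_1$ in $R'$ and $v_2$ in $R''$. I would then graft the parallelograms of $C_2$ out of $R''$ into $R'$: aligning the two representations along the common vertex $u$, and then squeezing (Definition~\ref{squeeze}) the transplanted parallelograms of $C_2$ into a sufficiently thin region so that each $x\in C_2$ meets exactly the parallelograms of $N(x)\cap N(u)$, just as it does in $R''$, and meets nothing else of $R'$ (no edge of $G$ joins $C_2$ to $u$, to $C_1$, or to any remaining vertex). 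The resulting representation would realize $G$ with parallelograms only, which is the desired contradiction.

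The hard part will be the grafting step, i.e.\ reconciling the a priori different placements of $N(u)$ in $R'$ and in $R''$ well enough that the transplanted copy of $C_2$ reproduces precisely the adjacencies $N(x)\cap N(u)$ without creating a single spurious intersection. This is exactly where the hypothesis $C_1,C_2\in V_0(u)$ is indispensable: the covering vertices guarantee that in each representation the set $N(u)$ is ``sandwiched'' between two non-adjacent vertices that both see all of $N(u)$ (namely $u$ and $v_i$), which pins down the gross arrangement of the parallelograms of $N(u)$ via the order $<_R$ and makes the alignment possible; for two components \emph{without} covering vertices no such anchors exist and the graft would in general fail, so the argument genuinely uses $V_0(u)$ rather than mere disconnectedness of $G\setminus N[u]$. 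Making the alignment and the choice of $\varepsilon$ in the squeezing precise, with the border bookkeeping of Lemmas~\ref{unbounded-bounded}--\ref{intersecting-unbounded} to certify that no unwanted edges appear, is the technical heart of the proof.
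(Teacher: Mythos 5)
Your overall strategy --- contradiction via minimality plus construction of a parallelogram representation of $G$ --- matches the paper's, but your specific construction has a genuine gap at exactly the step you flag as the technical heart: the graft of $C_{2}$ from $R''$ into $R'$ cannot work in general. The obstruction is that squeezing (Definition~\ref{squeeze}) confines all endpoints of the transplanted parallelograms of $C_{2}$ to a single $\varepsilon$-window around the endpoints of $\ell$, and for a parallelogram $P_{w}$ of $R'$ whose edges do not pass through that window the intersection behaviour with the strip is all-or-nothing: if $P_{w}$ spans the strip it meets \emph{every} transplanted parallelogram, and otherwise it meets none. The positions of the parallelograms of $N(u)$ in $R'$ are fixed in advance, and generically no edge of any $P_{w}$ threads through your chosen window, so the graft can only realize adjacency patterns in which all vertices of $C_{2}$ have the \emph{same} neighbourhood in $N(u)$ --- yet distinct $x,x'\in C_{2}$ may well have $N(x)\cap N(u)\neq N(x')\cap N(u)$. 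Your appeal to the covering vertices as ``anchors'' does not repair this: $v_{1}\in\mathcal{C}(u)$ only forces each $P_{w}$, $w\in N(u)$, to intersect both $P_{u}$ and $P_{v_{1}}$; it does not pin down the fine endpoint configuration of the $P_{w}$'s that would be needed to reproduce inside $R'$ the $C_{2}$--$N(u)$ adjacencies of the independently chosen representation $R''$, and no lemma in the paper supplies such a reconciliation.

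The missing idea is that only \emph{one} vertex ever needs to be re-inserted, and for that vertex the all-or-nothing behaviour of a segment is exactly right. The paper applies minimality once, to $G\setminus\{u\}$: this graph already contains $C_{1}$, $C_{2}$, $N(u)$ and everything else with all the correct mutual adjacencies, so nothing has to be transplanted. Take any parallelogram representation $R$ of $G\setminus\{u\}$, delete the parallelograms of $N(u)$ to obtain a representation $R'$ of $G\setminus N[u]$, and choose a line segment $\ell$ in the gap between the components of $v_{1}$ and $v_{2}$, meeting no parallelogram of $R'$. Every $w\in N(u)$ lies in $N(v_{1})\cap N(v_{2})$, so $P_{w}$ meets both $P_{v_{1}}$ and $P_{v_{2}}$, which lie on opposite sides of $\ell$; hence $P_{w}$ crosses $\ell$. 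Thus $\ell$ meets exactly the parallelograms of $N(u)$, and setting $P_{u}=\ell$ yields a parallelogram representation of $G$ --- the desired contradiction. Your closing observation is correct that mere disconnectedness of $G\setminus N[u]$ would not suffice and that the covering vertices are essential; but their role is this one-line guarantee that every parallelogram of $N(u)$ spans the gap, not an alignment device for merging two representations.
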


\begin{proof}
For the sake of contradiction, suppose that $V_{0}(u)$ has at least two
connected components, for some vertex $u$ of $G$. Let $v_{1}$ and $v_{2}$ be
two covering vertices of $u$ that belong to two different connected
components of $V_{0}(u)$. Since $G$ has the smallest number of vertices in
the class \textsc{(Tolerance }$\cap $\textsc{\ Trapezoid)\ }$\setminus $ 
\textsc{Bounded Tolerance}, $G\setminus \{u\}$ is a bounded tolerance graph.
Let $R$ be any parallelogram representation of $G\setminus \{u\}$, and $%
R^{\prime }$ be the representation of $G\setminus N[u]$ obtained by $R$ if
we remove all parallelograms that correspond to vertices of $N(u)$. Since $%
v_{1}$ and $v_{2}$ belong to different connected components of $G\setminus
N[u]$, there is at least one line segment $\ell $ between the connected
components of $v_{1}$ and $v_{2}$ in $G\setminus N[u]$, which does not
intersect any parallelogram of $R^{\prime }$. Since $N_{G}(u)\subseteq
N_{G}(v_{1})$ and $N_{G}(u)\subseteq N_{G}(v_{2})$, and since $\ell $ lies
between $P_{v_{1}}$ and $P_{v_{2}}$ in $R^{\prime }$, it follows that
exactly the parallelograms of the vertices of $N(u)$ intersect $\ell $ in $R$%
. Thus, we can add the trivial parallelogram $P_{u}=\ell $ to $R$, obtaining
thus a parallelogram representation of $G$. Thus, $G$ is a parallelogram
graph, i.e.~a bounded tolerance graph, which is a contradiction to the
assumption. Therefore, either $V_{0}(u)=\emptyset $ or $V_{0}(u)$ is
connected, for any vertex $u$ of $G$. This completes the proof of the lemma.
\end{proof}

\medskip

The next lemma follows now easily by Lemmas~\ref{bounded-hovering} 
and~\ref{two-components}.

\begin{lemma}
\label{not-equal}Let $G\in$ \textsc{(Tolerance }$\cap $\textsc{\
Trapezoid)\ }$\setminus $ \textsc{Bounded Tolerance} with the smallest
number of vertices and $v_{1},v_{2}$ be distinct unbounded vertices of $%
G $ in a canonical projection representation~$R$ of $G$. Then $N(v_{1})\neq N(v_{2})$.
\end{lemma}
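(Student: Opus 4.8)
The plan is to argue by contradiction. Suppose that two distinct unbounded vertices $v_{1},v_{2}$ of $G$ in the canonical representation $R$ satisfy $N(v_{1})=N(v_{2})$, and derive that $V_{0}(v_{1})$ has at least two connected components, which contradicts Lemma~\ref{two-components}.

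First I would record two consequences of the hypothesis $N(v_{1})=N(v_{2})$. Since no two unbounded vertices are adjacent, we have $v_{2}\notin N(v_{1})$ and $v_{1}\neq v_{2}$, so $v_{2}\in V\setminus N[v_{1}]$; together with $N(v_{1})\subseteq N(v_{2})$, Definition~\ref{hovering} shows that $v_{2}$ is a covering vertex of $v_{1}$, i.e.~$v_{2}\in\mathcal{C}(v_{1})$. Second, every neighbor of $v_{2}$ lies in $N(v_{2})=N(v_{1})\subseteq N[v_{1}]$, hence $v_{2}$ has no neighbor in $G\setminus N[v_{1}]$; that is, $\{v_{2}\}$ is a singleton connected component of $G\setminus N[v_{1}]$. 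Because $v_{2}$ is a covering vertex of $v_{1}$, this component belongs to $V_{0}(v_{1})$.

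Next I would invoke Lemma~\ref{bounded-hovering}: since $R$ is canonical and $v_{1}$ is unbounded, there exists a covering vertex $v_{1}^{\ast}$ of $v_{1}$ that is bounded in $R$ and lies in $V_{0}(v_{1})$. As $v_{1}^{\ast}$ is bounded while $v_{2}$ is unbounded, we have $v_{1}^{\ast}\neq v_{2}$; consequently the connected component of $G\setminus N[v_{1}]$ containing $v_{1}^{\ast}$ is distinct from the singleton $\{v_{2}\}$. Therefore $V_{0}(v_{1})$ contains the two distinct components $\{v_{2}\}$ and the one carrying $v_{1}^{\ast}$, so $V_{0}(v_{1})$ is nonempty and disconnected. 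This contradicts Lemma~\ref{two-components}, and hence $N(v_{1})\neq N(v_{2})$.

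The argument is essentially bookkeeping once the key observation is in place, so I do not anticipate a serious obstacle. The one point needing care is the claim that $v_{2}$ is \emph{isolated} in $G\setminus N[v_{1}]$: this is precisely where the equality $N(v_{1})=N(v_{2})$ (rather than mere inclusion) is used, and it is what turns the covering vertex $v_{2}$ into a separate component of $V_{0}(v_{1})$, disjoint from the bounded covering vertex guaranteed by canonicity. Everything else follows directly from the definitions of $\mathcal{C}(\cdot)$ and $V_{0}(\cdot)$ together with the two cited lemmas.
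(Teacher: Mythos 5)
Your proof is correct and follows essentially the same route as the paper's: both observe that equality of neighborhoods makes one vertex an isolated covering vertex in the complement of the other's closed neighborhood, invoke Lemma~\ref{bounded-hovering} to produce a bounded covering vertex, and conclude that $V_{0}(\cdot)$ is disconnected, contradicting Lemma~\ref{two-components}. The only difference is cosmetic — the paper works with $V_{0}(v_{2})$ and the isolation of $v_{1}$, while you work symmetrically with $V_{0}(v_{1})$ and the isolation of $v_{2}$.
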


\begin{proof}
Suppose otherwise that $N(v_{1})=N(v_{2})$ for two unbounded vertices $v_{1}$
and $v_{2}$ in $R$, i.e.~$v_{2}$ is a covering vertex of $v_{1}$ and $v_{1}$
is a covering vertex of $v_{2}$. Furthermore, $v_{1}$ is an isolated vertex
in $G\setminus N[v_{2}]$. Recall now by Lemma~\ref{bounded-hovering} that
there exists at least one covering vertex $v_{2}^{\ast }$ of $v_{2}$ in $R$,
such that $v_{2}^{\ast }$ is bounded in $R$. Then, since $v_{1}$ is
unbounded and $v_{2}^{\ast }$ is bounded in $R$, it follows that the
covering vertices $v_{1}$ and $v_{2}^{\ast }$ of $v_{2}$ do not lie in the
same connected component of $G\setminus N[v_{2}]$. That is, $V_{0}(v_{2})$
is not connected, which is a contradiction by Lemma~\ref{two-components}.
Thus, $N(v_{1})\neq N(v_{2})$.
\end{proof}

\section{Main results}
\label{main-sec}

In this section we present our main results. 
Consider a graph $G$ that is both a tolerance and a trapezoid graph, where 
$R$ is a projection representation of $G$. Then, we choose a certain unbounded vertex $u$ in~$R$ 
and we ``eliminate'' $u$ in $R$ in the following sense: assuming that $R$ has $k\geq 1$ unbounded vertices, 
we construct a projection representation~$R^{\ast}$ of $G$ with $k-1$ unbounded vertices, 
where all bounded vertices remain bounded and $u$ is transformed to a bounded vertex. 
In Section~\ref{right-left-subsec} we deal with the case where the unbounded vertex $u$ has the right or 
the left border property in~$R$, while in Section~\ref{structure-subsec} we deal with the case where $u$ has 
neither the left nor the right border property in~$R$. Finally we combine these two results in 
Section~\ref{general-subsec}, in order to eliminate all~$k$ unbounded vertices in~$R$, 
regardless of whether or not they have the right or left border property.

\subsection{The case where $u$ has the right or the left border property}
\label{right-left-subsec}

In this section we consider an arbitrary unbounded vertex $u$ of $G$ in the projection representation~$R$, 
and we assume that $u$ has the right or the left border property in $R$. 
Then, as we prove in the next theorem, there is another projection representation $R^{\ast }$ of $G$, in which 
$u$ has been replaced by a bounded vertex.

\begin{theorem}
\label{right-property-thm}Let $G=(V,E)\in$ \textsc{(Tolerance }$\cap $%
\textsc{\ Trapezoid)\ }$\setminus $ \textsc{Bounded Tolerance} with
the smallest number of vertices. Let $R$ be a projection representation of $%
G $ with $k$ unbounded vertices and $u$ be an unbounded vertex in $R$. If $u$
has the right or the left border property in $R$, then there exists a projection
representation $R^{\ast }$ of $G$ with $k-1$ unbounded vertices.
\end{theorem}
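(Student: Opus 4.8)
The plan is to construct $R^{\ast}$ from $R$ by turning the line segment $P_u$ into a thin bounded parallelogram, while relocating the obstructing vertices so that no spurious edge at $u$ is created and no genuine edge at $u$ is destroyed. First I would reduce to a single case: by the observation following Definition~\ref{right-left-property-def}, $u$ has the left border property in $R$ if and only if $u$ has the right border property in the reverse representation $\widehat{R}$ (Definition~\ref{reverse-def}), and $\widehat{R}$ is a projection representation of the same graph $G$ with the same number of unbounded vertices. Hence it suffices to treat the case in which $u$ has the right border property in $R$.

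Next I would pin down exactly which vertices can cause trouble when $u$ is made bounded. Once $P_u$ becomes a parallelogram of slope $\phi_u$, any vertex $v$ with $\phi_v<\phi_u$ whose parallelogram meets $P_u$ becomes adjacent to $u$. By Lemma~\ref{intersecting-unbounded}, every non-neighbour $v$ of $u$ with $P_v$ intersecting $P_u$ and $\phi_v<\phi_u$ satisfies $N(u)\subseteq N(v)$, so $v\in\mathcal{C}(u)$ and the component of $v$ in $G\setminus N[u]$ lies in $V_0(u)$ (Definition~\ref{hovering}). Thus the set of ``dangerous'' vertices, those whose intersection with a bounded $P_u$ would create a false edge, is contained in $V_0(u)$. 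If $V_0(u)=\emptyset$ there are no dangerous vertices, and one may simply thicken $P_u$ by a small $\varepsilon$ on one side: neighbours (which have larger slope) keep intersecting, and the only non-neighbours that meet $P_u$ are unbounded with larger slope, which stay non-adjacent to the now-bounded $u$; this already yields $R^{\ast}$. So assume $V_0(u)\neq\emptyset$, which by Lemma~\ref{two-components} (applicable since $G$ is a smallest counterexample) is connected.

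The geometric heart of the argument is to separate $V_0(u)$ from $u$ on the right. Using Lemma~\ref{unbounded-bounded} I would record that every $w\in N(u)$ is bounded with $r(u)<_{R}r(w)$ and $L(w)<_{R}L(u)$, and using Lemma~\ref{unbounded-hovering} that every dangerous $x\in V_0(u)$ satisfies $l(x)<_{R}l(u)$ and $R(u)<_{R}R(x)$. The right border property says precisely that no neighbour $w$ lies entirely to the left of any $x\in V_0(u)$, i.e.\ $P_w\ll_{R}P_x$ never holds. I would then $\varepsilon$-squeeze the subrepresentation induced on the connected set $V_0(u)$ (Definition~\ref{squeeze}) into a thin strip placed inside the infinitesimally narrow ``slot'' originally occupied by the segment $P_u$, choosing the centre line of the strip at a point past which, by the right border property, every neighbour of $u$ still reaches. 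Finally I would replace $P_u$ by a thin parallelogram $Q$ of slope $\phi_u$ positioned immediately to the right of this strip and narrow enough that $Q$ meets every $w\in N(u)$ but lies entirely to the right of the squeezed $V_0(u)$, hence misses every dangerous vertex.

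It then remains to verify that $R^{\ast}$ represents $G$ and has $k-1$ unbounded vertices. The count is immediate, since only $u$ changes status and all slopes are preserved. For correctness one checks all pairs at $u$: neighbours still meet $Q$; dangerous vertices are avoided by construction; non-neighbours outside $V_0(u)$ cannot meet $Q$ because they stay on their original side of the old slot; and unbounded non-neighbours of larger slope remain non-adjacent to the bounded $u$. The main obstacle, and the step I expect to require the most care, is exactly the simultaneous placement in the previous paragraph: one must choose the squeeze parameter and the position and width of $Q$ so that keeping every neighbour-intersection and dropping every $V_0(u)$-intersection are compatible, while at the same time the relocation of the whole block $V_0(u)$ into the old slot of $P_u$ preserves every adjacency among $V\setminus\{u\}$, including the adjacencies between individual vertices of $V_0(u)$ and the various vertices of $N(u)$, which need not be uniform. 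The right border property is what makes the two competing demands on $Q$ compatible on the right side, and the connectivity of $V_0(u)$ from Lemma~\ref{two-components} is what allows $V_0(u)$ to be squeezed and moved as a single block; reconciling these with the external partial order $\ll_{R}$ is the delicate part.
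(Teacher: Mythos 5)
There is a genuine gap, and it lies exactly where you flagged ``the delicate part'': the central step of $\varepsilon$-squeezing the whole connected set $V_{0}(u)$ and relocating it as a block into the old slot of $P_{u}$ is unsound, because $V_{0}(u)$ is in general \emph{not} a module in $G\setminus\{u\}$. The right border property forbids $P_{w}\ll_{R}P_{x}$ for $w\in N(u)$ and $x\in V_{0}(u)$, but it does \emph{not} forbid $P_{x}\ll_{R}P_{w}$; so there may well exist a vertex $x\in V_{0}(u)$ lying entirely to the left of some neighbour $w\in N(u)$, with $xw\notin E$. Now observe that \emph{every} $w\in N(u)$ intersects the old line $P_{u}$ (by the definition of adjacency for an unbounded vertex), hence every $P_{w}$ crosses the thin strip you place inside the old slot; for $\varepsilon$ small enough it therefore intersects \emph{every} parallelogram of the squeezed $V_{0}(u)$, and since $w$ and the bounded vertices of $V_{0}(u)$ are all bounded, the pair $xw$ above becomes a false edge. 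The same non-uniformity problem arises with unbounded vertices whose lines cross the strip. No choice of squeeze parameter, strip position, or width of $Q$ can repair this, because the obstruction is combinatorial (the vertices of $V_{0}(u)$ have heterogeneous adjacencies to $N(u)$ and to the rest of $G$), not metric. Your reduction to the right border property via $\widehat{R}$, the canonicity dispatch of the $V_{0}(u)=\emptyset$ case, and the identification of the dangerous vertices via Lemma~\ref{intersecting-unbounded} are all correct and match the paper; it is only the block relocation that fails.

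The paper's actual proof never moves $V_{0}(u)$ at all. It fixes all left endpoints and performs three surgical edits: Transformation~\ref{trans1} stretches the \emph{right} lines of those neighbours $w\in N_{1}$ with $r(w)<_{R}\ell_{0}$ rightward (stopping either just past $\ell_{0}$ or just before $L_{0}(w)$), Transformation~\ref{trans2} pulls the right lines of the bounded vertices of $V_{0}(u)$ leftward past the threshold $r_{0}=\min\{r(w)\ |\ w\in N_{2}\}$, and only then does Transformation~\ref{trans3} move the single line $P_{u}$ into the gap thus created and thicken it into a bounded parallelogram, with upper endpoint just before $\min\{L_{0},R(w)\ |\ w\in N_{1}\setminus N_{2}\}$ and lower endpoint just after $\max\{r(v)\ |\ v\in V_{0}(u)\cap V_{B}\}$. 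The right border property is then used in Lemmas~\ref{R'},~\ref{R''}, and~\ref{R'''} to verify that none of these edits changes any adjacency. The squeezing device you borrowed from Definition~\ref{squeeze} is deployed in the paper only in the proof of Theorem~\ref{no-property-thm}, and there only \emph{after} constructing an induced subgraph $G_{0}$ such that $G_{0}\setminus\{u\}$ \emph{is} a module in $G\setminus\{u\}$ (Lemma~\ref{module-2}) --- precisely the uniform external adjacency that a block relocation requires and that $V_{0}(u)$ lacks here. To salvage your approach you would need to either prove such a module structure in the present setting (which is essentially the hard content of the Appendix) or abandon the relocation in favour of endpoint-wise transformations as the paper does.
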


\begin{proof}
If $R$ is not a canonical representation of $G$, then there exists a
projection representation $R^{\ast }$ of $G$ with $k-1$ unbounded vertices
by Definition~\ref{def7}. Suppose in the sequel that $R$ is a canonical
representation of $G$. Then, for the unbounded vertex $u$ of $G$ in $R$,
there exists at least one bounded covering vertex $u^{\ast }$ of $u$ by
Lemma~\ref{bounded-hovering}. Therefore $V_{0}(u)\neq \emptyset $, and thus 
$V_{0}(u)$ is connected by Lemma~\ref{two-components}. The proof is done
constructively. Namely, we will construct the projection representations $%
R^{\prime }$, $R^{\prime \prime }$, and $R^{\prime \prime \prime }$, by
applying to $R$ sequentially the 
Transformations~\ref{trans1},~\ref{trans2}, and~\ref{trans3}, respectively.
Finally, $R^{\prime \prime \prime } $ is a projection representation of the
same graph $G$ with $k-1$ unbounded vertices, where $u$ is represented as a
bounded vertex in $R^{\prime \prime \prime }$.

For simplicity reasons, we add in $G$ an isolated bounded vertex $t$. This
vertex $t$ corresponds to a parallelogram $P_{t}$, such that $P_{v}\ll
_{R}P_{t}$ for every vertex $v$ of $G$. Recall that $V_{B}$ and $V_{U}$
denote the sets of bounded and unbounded vertices of $G$ in $R$,
respectively (note that $t\in V_{B}$). First, we define for every $w\in N(u)$
the value $L_{0}(w)=\min_{R}\{L(x)\ |\ x\in V_{B}\setminus N(u),P_{w}\ll
_{R}P_{x}\}$. Note that the value $L_{0}(w)$ is well defined for every $w\in
N(u)$, since in particular $t\in V_{B}\setminus N(u)$ and $P_{w}\ll
_{R}P_{t} $. Moreover, for every $w\in N(u)$, $w$ is a bounded vertex and $%
\phi _{w}>\phi _{u}$. For every vertex $x\in V_{B}\setminus N(u)$, such that 
$P_{w}\ll _{R}P_{x}$ for some $w\in N(u)$, it follows that $x\notin V_{0}(u)$
by Definition~\ref{right-left-property-def}, since $u$ has the right border
property in $R$ by assumption. Thus, for every $w\in N(u)$, $%
L_{0}(w)=\min_{R}\{L(x)\ |\ x\in (V_{B}\setminus N(u))\setminus
V_{0}(u),P_{w}\ll _{R}P_{x}\}$. Define now the value $\ell
_{0}=\max_{R}\{l(x)\ |\ x\in V_{0}(u)\}$ and the subset $N_{1}=\{w\in N(u)\
|\ r(w)<_{R}\ell _{0}\}$ of neighbors of $u$.

An example of a projection representation $R$ of a tolerance graph $G$ with
seven vertices is illustrated in Figure~\ref{fig-R-0}. In this figure, the
parallelogram $P_{u}$ of the unbounded vertex $u$ is illustrated by a bold
and dotted line. The transparent parallelograms $P_{w_{1}}$ and $P_{w_{2}}$
correspond to the neighbors $N(u)=\{w_{1},w_{2}\}$ of $u$ in $G$, the light
colored parallelograms $P_{u^{\ast }}$ and $P_{x}$ correspond to the
vertices of $V_{0}(u)=\{u^{\ast },x\}$, and the dark colored parallelograms $%
P_{y}$ and $P_{t}$ correspond to the vertices of $(V\setminus N[u])\setminus
V_{0}(u)=\{y,t\}$. In this example, $L_{0}(w_{1})=L(t)$, $L_{0}(w_{2})=L(y)$, 
and $\ell _{0}=l(x)$, while $N_{1}=\{w_{1},w_{2}\}$.

We construct now the projection representation $R^{\prime} $ from $R$ as follows.

\begin{transformation}
\label{trans1}For every $w\in N_{1}$, move the right line of $P_{w}$
parallel to the right, until either $r(w)$ comes immediately 
after $\ell_{0} $ on $L_{2}$, or $R(w)$ comes immediately before $L_{0}(w)$ on $L_{1}$.
Denote the resulting projection representation by $R^{\prime }$.
\end{transformation}

\begin{figure}[t!]
\centering
\subfigure[] { \label{fig-R-0}
\includegraphics[width=\linewidth]{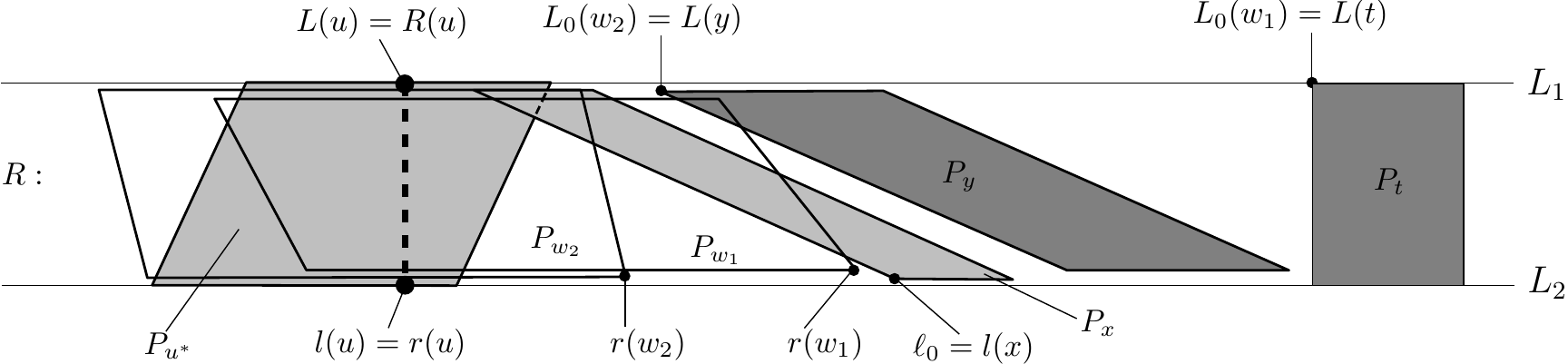}} 
\subfigure[] { \label{fig-R-1}
\includegraphics[width=\linewidth]{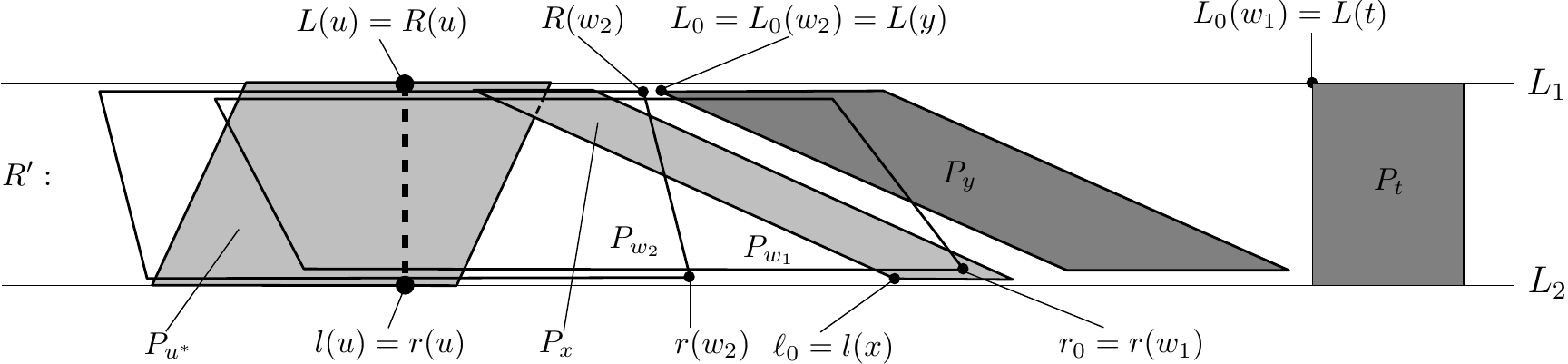}} 
\subfigure[] { \label{fig-R-2}
\includegraphics[width=\linewidth]{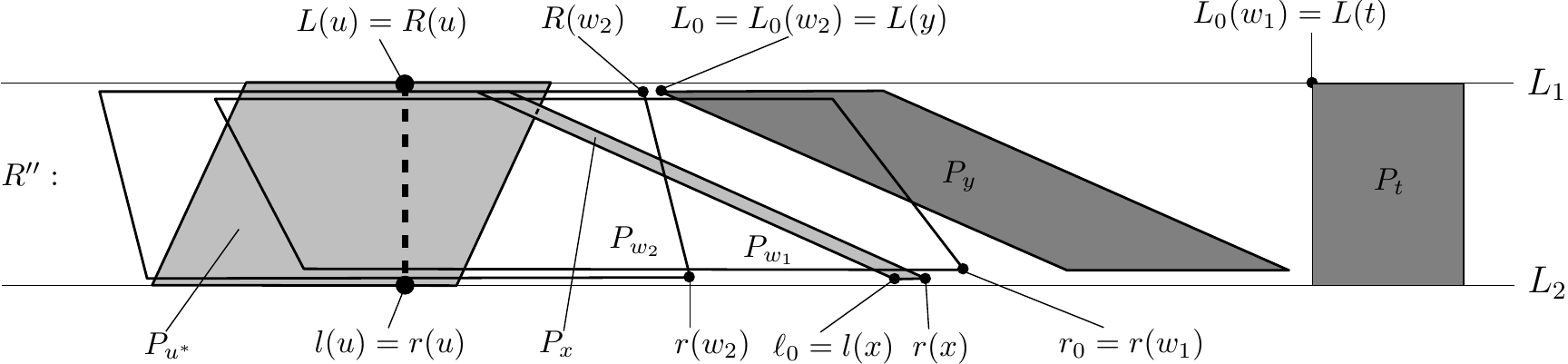}} 
\subfigure[] { \label{fig-R-3}
\includegraphics[width=\linewidth]{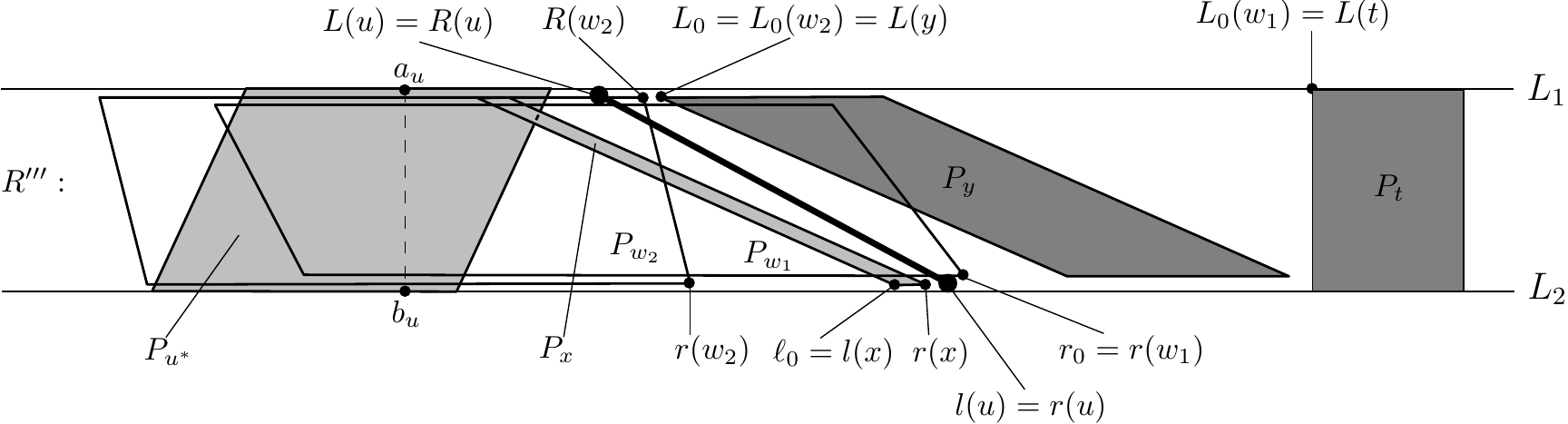}} 
\caption{(a) The projection representation $R$ of a tolerance graph $G$ with
seven vertices, and the projection representations (b) $R^{\prime}$ after
Transformation~\protect\ref{trans1}, (c) $R^{\prime\prime}$ after
Transformation~\protect\ref{trans2}, and (d) $R^{\prime\prime\prime}$ after
Transformation~\protect\ref{trans3}.}
\label{fig-transform-123}
\end{figure}

Note that the left lines of all parallelograms do not move during
Transformation~\ref{trans1}. Thus, in particular, the value of $\ell _{0}$
is the same in $R$ and in $R^{\prime }$, i.e.~$\ell _{0}=\max_{R^{\prime
}}\{l(x)\ |\ x\in V_{0}(u)\}$. As we will prove in Lemma~\ref{R'}, the
representation $R^{\prime }$ is a projection representation of the same
graph $G$, and thus the parallelograms of two bounded vertices intersect in $%
R$ if and only if they intersect also in $R^{\prime }$. Therefore, for every 
$w\in N(u)$ the value $L_{0}(w)$ remains the same in $R$ and in $R^{\prime }$, 
i.e.~$L_{0}(w)=\min_{R^{\prime }}\{L(x)\ |\ x\in (V_{B}\setminus N(u))\setminus V_{0}(u),P_{w}\ll _{R^{\prime }}P_{x}\}$ 
for every $w\in N(u)$. Define now the subset $N_{2}=\{w\in N(u)\ |\ \ell _{0}<_{R^{\prime
}}r(w)\} $ of neighbors of $u$. If $N_{2}\neq \emptyset $, we define the
value $r_{0}=\min_{R^{\prime }}\{r(w)\ |\ w\in N_{2}\}$. Then, $%
r_{0}>_{R^{\prime }}r(u)$ by Lemma~\ref{unbounded-bounded}, since $%
N_{2}\subseteq N(u)$. Since the lower right endpoint $r(w)$ of all
parallelograms $P_{w}$ in $R^{\prime } $ is greater than or equal to the
corresponding value $r(w)$ in $R$, it follows that $N(u)\setminus
N_{1}=\{w\in N(u)\ |\ \ell _{0}<_{R}r(w)\}\subseteq \{w\in N(u)\ |\ \ell
_{0}<_{R^{\prime }}r(w)\}=N_{2} $. Thus, $N(u)\setminus N_{2}\subseteq N_{1}$
and $N_{2}\cup (N_{1}\setminus N_{2})=N(u)$.

Define now the value $L_{0}=\min_{R^{\prime }}\{L(x)\ |\ x\in 
(V_{B}\setminus N(u))\setminus V_{0}(u),P_{u}\ll _{R^{\prime }}P_{x}\}$; again, $L_{0}$ is
well defined, since in particular $t\in (V_{B}\setminus N(u))\setminus
V_{0}(u) $ and $P_{u}\ll _{R^{\prime }}P_{t}$. The following property of the
projection representation $R^{\prime }$ can be obtained easily by
Transformation~\ref{trans1}.

\begin{lemma}
\label{property-trans1}For all vertices $w\in N_{1}\setminus N_{2}$, for
which $R(w)<_{R^{\prime }}L_{0}$, the values $R(w)$ lie immediately before $%
L_{0}$ in $R^{\prime }$.
\end{lemma}

\begin{proof}
Let $w\in N_{1}\setminus N_{2}$. By definition of the sets $N_{1}$ and $%
N_{2} $, it follows that $r(w)<_{R}\ell _{0}$ and $r(w)<_{R^{\prime }}\ell
_{0}$ in both $R$ and $R^{\prime }$. Thus, $R(w)$ comes immediately before $%
L_{0}(w)$ in $R^{\prime }$ during Transformation~\ref{trans1}. Consider now
a vertex $x\in (V_{B}\setminus N(u))\setminus V_{0}(u)$, such that $P_{w}\ll
_{R}P_{x}$, i.e.~$r(w)<_{R}l(x)$ and $R(w)<_{R}L(x)$. 
Then $r(u)<_{R}l(x)$, since $r(u)<_{R}r(w)$ by Lemma~\ref{unbounded-bounded}. 
Suppose that $L(x)<_{R}R(u)$. Then, $P_{x}$ intersects $P_{u}$ in $R$ and $\phi _{x}>\phi
_{u}$. Thus, since $x$ is assumed to be bounded, it follows that $x\in N(u)$, 
which is a contradiction. Therefore $R(u)<_{R}L(x)$, and thus $P_{u}\ll
_{R}P_{x}$, since also $r(u)<_{R}l(x)$. Furthermore, also $P_{u}\ll
_{R^{\prime }}P_{x}$, since $P_{u}$ and $P_{x}$ remain the same in both $R$
and $R^{\prime }$. That is, $P_{u}\ll _{R^{\prime }}P_{x}$ for every $x\in
(V_{B}\setminus N(u))\setminus V_{0}(u)$, such that $P_{w}\ll _{R}P_{x}$.
Therefore, it follows by the definitions of $L_{0}$ and of $L_{0}(w)$ that $%
L_{0}\leq L_{0}(w)$. Thus, since $R(w)$ comes immediately before $L_{0}(w)$
in $R^{\prime }$ during Transformation~\ref{trans1}, it follows that either $%
R(w)$ comes immediately before $L_{0}$ in $R^{\prime }$ during
Transformation~\ref{trans1} (in the case where $L_{0}=L_{0}(w)$) or $%
R(w)>_{R^{\prime }}L_{0}$ (in the case where $L_{0}<L_{0}(w)$). This completes the proof of the lemma.
\end{proof}

\medskip

For the example of Figure~\ref{fig-transform-123}, the projection
representation $R^{\prime}$ is illustrated in Figure~\ref{fig-R-1}. In this
figure, $L_{0}=L(y)$ and $r_{0}=r(w_{1})$, while $N_{2}=\{w_{1}\}$ and $%
N_{1}\setminus N_{2}=\{w_{2}\}$.

If $N_{2}=\emptyset $, then we set $R^{\prime \prime }=R^{\prime }$;
otherwise, if $N_{2}\neq \emptyset $, we construct the projection
representation $R^{\prime \prime }$ from $R^{\prime }$ as follows.

\begin{transformation}
\label{trans2}For every $v\in V_{0}(u)\cap V_{B}$, such 
that $r(v)>_{R^{\prime }}r_{0}$, move the right line of $P_{v}$ in $R^{\prime }$
parallel to the left, such that $r(v)$ comes immediately before $r_{0}$ 
in $L_{2}$. Denote the resulting projection representation by $R^{\prime \prime}$.
\end{transformation}

Since by Transformation~\ref{trans2} only some endpoints of vertices ${v\in
V_{0}(u)\cap V_{B}}$ are moved, it follows that the value $L_{0}$ does not
change in $R^{\prime \prime }$, i.e.~${L_{0}=\min_{R^{\prime\prime}}\{L(x)\
|\ x\in (V_{B}\setminus N(u))\setminus V_{0}(u),}$ ${P_{u}\ll _{R^{\prime \prime}}P_{x}\}}$. 
The next property of the projection representation $R^{\prime
\prime }$ follows by Lemma~\ref{property-trans1}.

\begin{corollary}
\label{property-trans2}For all vertices $w\in N_{1}\setminus N_{2}$, for
which $R(w)<_{R^{\prime \prime }}L_{0}$, the values $R(w)$ lie immediately
before $L_{0}$ in $R^{\prime \prime }$.
\end{corollary}

\begin{proof}
Let $x_{0}$ be the vertex of $(V_{B}\setminus N(u))\setminus V_{0}(u)$, such
that $L_{0}=L(x_{0})$. Recall by Lemma~\ref{property-trans1} that for all
vertices $w\in N_{1}\setminus N_{2}$, for which $R(w)<_{R^{\prime }}L_{0}$,
the values $R(w)$ lie immediately before~$L_{0}$ in $R^{\prime }$.
Furthermore, note that the parallelograms of all neighbors $w\in N(u)$ of $u$
do not move by Transformation~\ref{trans2}. Therefore, since also the value $%
L_{0}$ is the same in both $R^{\prime }$ and~$R^{\prime \prime }$, it
suffices to prove that there do not exist vertices $v\in V_{0}(u)\cap V_{B}$
and $w\in N_{1}\setminus N_{2}$, such that $R(w)<_{R^{\prime \prime
}}R(v)<_{R^{\prime \prime }}L_{0}$ in $R^{\prime \prime }$. Suppose
otherwise that $R(w)<_{R^{\prime \prime }}R(v)<_{R^{\prime \prime
}}L_{0}=L(x_{0})$ for two vertices $v\in V_{0}(u)\cap V_{B}$ and $w\in
N_{1}\setminus N_{2}$. Thus, since only the right lines of some
parallelograms $P_{v}$, where $v\in V_{0}(u)\cap V_{B}$, are moved to the
left by Transformation~\ref{trans2}, it follows that $R(w)<_{R^{\prime
}}L_{0}=L(x_{0})<_{R^{\prime }}R(v)$ in $R^{\prime }$. Therefore, in
particular $P_{v}$ intersects $P_{x_{0}}$ in $R^{\prime }$, and thus $v\in
N(x_{0})$, since both $v$ and $x_{0}$ are bounded. Thus $x_{0}\in V_{0}(u)$,
since also $v\in V_{0}(u)$. This is a contradiction, since $x_{0}\in
(V_{B}\setminus N(u))\setminus V_{0}(u)$. This completes the proof of the
corollary.
\end{proof}

\medskip

The projection representation $R^{\prime \prime }$ for the example of Figure~%
\ref{fig-transform-123} is illustrated in Figure~\ref{fig-R-2}. 
We construct now the projection representation $R^{\prime \prime \prime }$ 
from $R^{\prime \prime }$ as follows.

\begin{transformation}
\label{trans3}Move the line $P_{u}$ in $R^{\prime \prime }$, such that its
upper endpoint $L(u)=R(u)$ comes immediately 
before $\min_{R^{\prime \prime}}\{L_{0},R(w)\ |\ w\in N_{1}\setminus N_{2}\}$ 
and its lower endpoint $l(u)=r(u)$ comes immediately 
after $\max_{R^{\prime \prime }}\{r(v)\ |\ v\in V_{0}(u)\cap V_{B}\}$. 
Finally, make $u$ a bounded vertex. Denote the 
resulting projection representation by $R^{\prime \prime \prime }$.
\end{transformation}

The resulting projection representation $R^{\prime \prime \prime }$ has $k-1$ unbounded
vertices, since $u$ is represented in $R^{\prime \prime \prime }$ as a
bounded vertex. The projection representation $R^{\prime \prime \prime}$ for
the example of Figure~\ref{fig-transform-123} is illustrated in Figure~\ref%
{fig-R-3}. In this figure, the new position of the trivial parallelogram
(i.e.~line) $P_{u}$ that corresponds to the (bounded) vertex $u$ is drawn in
bold. Furthermore, for better visibility, the position of $P_{u}$ in the
previous projection representations $R$, $R^{\prime }$, and $R^{\prime
\prime }$ is pointed by a non-bold dashed line; in this figure, $a_{u}$ and $%
b_{u}$ denote the endpoints of this old position of $P_{u}$ on $L_{1}$ and
on $L_{2}$, respectively.

In the following three lemmas, we prove sequentially that $R^{\prime }$, $%
R^{\prime \prime }$, and $R^{\prime \prime \prime }$ are all projection
representations of the same tolerance graph $G$, and thus $R^{\ast
}=R^{\prime \prime \prime }$ is a projection representation of $G$ with $k-1$
unbounded vertices.

\begin{lemma}
\label{R'}$R^{\prime }$ is a projection representation of $G$.
\end{lemma}

\begin{proof}
Denote by $x_{0}$ the vertex of $V_{0}(u)$, such that $\ell _{0}=l(x_{0})$. 
Recall by Lemma~\ref{bounded-hovering} that there exists a covering vertex $u^{\ast}$ of $u$ in $G$, 
such that $u^{\ast}$ is bounded in $R$. 
Since we move the right line of some parallelograms to the right, i.e.~we
increase some parallelograms, all adjacencies of $R$ are kept in $R^{\prime
} $. Suppose that $R^{\prime }$ has the new adjacency $wv$ that is not an
adjacency in $R$, for some $w\in N_{1}$. Therefore, since we perform
parallel movements of lines, i.e.~since every slope $\phi _{z}$ in $%
R^{\prime }$ equals the value of $\phi _{z}$ in $R$ for every vertex $z$ of $%
G$, it follows that $P_{w}\ll _{R}P_{v}$ and $P_{w}$ intersects $P_{v}$ in $%
R^{\prime }$. Thus $v\notin V_{0}(u)$, since $u$ has the right border
property in $R$ by assumption. Furthermore $r(w)<_{R}\ell _{0}=l(x_{0})$,
since $w\in N_{1}$. However, since $x_{0}\in V_{0}(u)$, and since $u$ has
the right border property in $R$, it follows that $P_{w}$ intersects $%
P_{x_{0}}$ in $R$, and thus $L(x_{0})<_{R}R(w)$.

Moreover, $r(u)<_{R}r(w)<_{R}l(x_{0})$ and $L(w)<_{R}L(u)$ by Lemma~\ref%
{unbounded-bounded}. Suppose that $L(x_{0})<_{R}L(u)=R(u)$. Then, $P_{u}$
intersects $P_{x_{0}}$ in $R$ and $\phi _{x_{0}}>\phi _{u}$. Thus, $x_{0}$
is unbounded, since otherwise $x_{0}\in N(u)$, which is a contradiction.
Furthermore, $N(x_{0})\subseteq N(u)$ by Lemma~\ref{intersecting-unbounded},
and thus $x_{0}$ is an isolated vertex of $G\setminus N[u]$. Therefore,
since $x_{0}$ is unbounded and $u^{\ast }$ is bounded in $R$, it follows
that $x_{0}$ and $u^{\ast }$ do not lie in the same connected component of $%
G\setminus N[u]$. That is, $V_{0}(u)$ is not connected, which is a
contradiction. Thus, $L(u)=R(u)<_{R}L(x_{0})$, i.e.~$%
R(u)<_{R}L(x_{0})<_{R}R(w)<_{R}L(v)$ and $r(u)<_{R}r(w)<_{R}l(v)$, which
implies that $P_{u}\ll _{R}P_{v}$, and thus $v\notin N(u)$.

Consider now the projection representation $R^{\prime }$ constructed by
Transformation~\ref{trans1}. Let first $r(w)<_{R^{\prime }}l(v)$. Then,
since $P_{w}$ intersects $P_{v}$ in $R^{\prime }$, it follows that $%
L(v)<_{R^{\prime }}R(w)$, and thus $\phi _{v}>\phi _{w}$. If $v$ is an
unbounded vertex, then $w$ is not adjacent to $v$ in $R^{\prime }$, which is
a contradiction to the assumption. Thus, $v$ is a bounded vertex. Recall
that $P_{w}\ll _{R}P_{v}$ and that $v\notin V_{0}(u)$ and $v\notin N(u)$,
i.e.~$v\in (V_{B}\setminus N(u))\setminus V_{0}(u)$, and thus $L_{0}(w)\leq
_{R}L(v)$ in $R$ by definition of $L_{0}(w)$. Furthermore, since the left
lines of the parallelograms in $R$ do not move during Transformation~\ref%
{trans1}, it remains also $L_{0}(w)\leq _{R^{\prime }}L(v)$ in $R^{\prime }$%
. Therefore, since $R(w)<_{R^{\prime }}L_{0}(w)$ by definition of
Transformation~\ref{trans1}, it follows that $R(w)<_{R^{\prime }}L(v)$,
which is a contradiction, since $L(v)<_{R^{\prime }}R(w)$, as we proved
above in this paragraph.

Let now $l(v)<_{R^{\prime }}r(w)$. Suppose that $l(x_{0})<_{R^{\prime }}l(v)$%
. Then, since $r(w)$ comes in $R^{\prime }$ at most immediately after $\ell
_{0}=l(x_{0})$ on $L_{2}$, it follows that also $r(w)<_{R^{\prime }}l(v)$,
which is a contradiction. Therefore, $l(v)<_{R^{\prime }}l(x_{0})$, and thus
since the left lines of the parallelograms in $R$ do not move during
Transformation~\ref{trans1}, it follows that also $l(v)<_{R}l(x_{0})$.
Furthermore, since $L(x_{0})<_{R}R(w)$ and $P_{w}\ll _{R}P_{v}$, it follows
that $L(x_{0})<_{R}R(w)<_{R}L(v)$, and thus $P_{x_{0}}$ intersects $P_{v}$
in $R$ and $\phi _{x_{0}}>\phi _{v}$. Now, if $x_{0}$ is bounded, then $%
x_{0}v\in E$. Thus, $v\in V_{0}(u)$, since $x_{0}\in V_{0}(u)$ and $v\notin
N(u)$, which is a contradiction. Therefore, $x_{0}$ is unbounded, and thus $%
x_{0}v\notin E$. Then, since $P_{x_{0}}$ intersects $P_{v}$ in $R$ and $\phi
_{x_{0}}>\phi _{v}$, it follows that $N(x_{0})\subseteq N(v)$ by Lemma~\ref%
{intersecting-unbounded}. Recall now that there exists a bounded covering
vertex $u^{\ast }$ of $u$ in $G$, and thus $u^{\ast },x_{0}\in V_{0}(u)$.
Furthermore $u^{\ast }\neq x_{0}$, since $u^{\ast }$ is bounded and $x_{0}$
is unbounded. Therefore, since $V_{0}(u)$ is connected, $x_{0}$ is adjacent
to at least one other vertex $y\in V_{0}(u)$, and thus $y\in N(v)$, since $%
N(x_{0})\subseteq N(v)$. It follows now that $v\in V_{0}(u)$, since $y\in
V_{0}(u)$ and $v\notin N(u)$, which is again a contradiction.

Therefore, $R^{\prime }$ has no new adjacency $wv$ that is not an adjacency
in $R$, for any $w\in N_{1}$, i.e.~$R^{\prime }$ is a projection
representation of $G$. This completes the proof of the lemma.
\end{proof}

\begin{lemma}
\label{R''}$R^{\prime \prime }$ is a projection representation of $G$.
\end{lemma}

\begin{proof}
Denote by $w_{0}$ the vertex of $N_{2}$, such that $r_{0}=r(w_{0})$. Since
we move the right line of some parallelograms to the left, i.e.~we decrease
some parallelograms, no new adjacencies are introduced in $R^{\prime \prime
} $ in comparison to $R^{\prime }$. Suppose that the adjacency $vx$ has been
removed from $R^{\prime }$ in $R^{\prime \prime }$, for some $v\in
V_{0}(u)\cap V_{B}$, where $r(v)>_{R^{\prime }}r_{0}=r(w_{0})$. Therefore,
since we perform parallel movements of lines in $R^{\prime }$, i.e.~since
every slope $\phi _{z}$ in $R^{\prime \prime }$ equals the value of $\phi
_{z}$ in $R^{\prime }$ for every vertex $z$ of $G$, it follows that $%
P_{v}\ll _{R^{\prime \prime }}P_{x}$, while $P_{v}$ intersects $P_{x}$ in $%
R^{\prime }$.

Since $w_{0}\in N(u)$, and since the endpoints of $P_{w_{0}}$ do not move
during Transformation~\ref{trans2}, it follows by Lemma~\ref%
{unbounded-bounded} that $r(u)<_{R^{\prime }}r(w_{0})$ and $r(u)<_{R^{\prime
\prime }}r(w_{0})$. Thus, since $r(v)$ comes in $R^{\prime \prime }$
immediately before $r_{0}=r(w_{0})$, it follows that $r(u)<_{R^{\prime
\prime }}r(v)<_{R^{\prime \prime }}r(w_{0})$. Suppose that $x\in N(u)$.
Then, $L(x)<_{R^{\prime }}L(u)$ by Lemma~\ref{unbounded-bounded}, and thus
also $L(x)<_{R^{\prime \prime }}L(u)$, since the left lines of all
parallelograms do not move during Transformation~\ref{trans2}. Therefore, $%
R(v)<_{R^{\prime \prime }}L(x)<_{R^{\prime \prime }}L(u)=R(u)$, since $%
P_{v}\ll _{R^{\prime \prime }}P_{x}$. That is, $r(u)<_{R^{\prime \prime
}}r(v)$ and $L(v)\leq _{R^{\prime \prime }}R(v)<_{R^{\prime \prime }}R(u)$,
and thus $\phi _{v}>\phi _{u}$ in both $R^{\prime }$ and $R^{\prime \prime }$%
. Furthermore, $L(v)<_{R^{\prime }}R(u)$ (since also $L(v)<_{R^{\prime
\prime }}R(u)$) and $r(u)<_{R^{\prime }}r_{0}=r(w_{0})<_{R^{\prime }}r(v)$,
and thus $P_{v}$ intersects $P_{u}$ in $R^{\prime }$. Therefore, since $v\in
V_{B}$ and $\phi _{v}>\phi _{u}$ in $R^{\prime }$, it follows that $v\in
N(u) $, which is a contradiction. Thus, $\ x\notin N(u)$.

Now, since by assumption $vx\in E$, and since $v\in V_{0}(u)$ and $x\notin
N(u)$, it follows that $x\in V_{0}(u)$, and thus $l(x)\leq _{R}\ell _{0}$ by
definition of $\ell _{0}$. Therefore, since the left lines of all
parallelograms do not move during Transformation~\ref{trans1}, it follows
that also $l(x)\leq _{R^{\prime }}\ell _{0}$. Note that both $r_{0}=r(w_{0})$
and $l(x)$ do not move by Transformation~\ref{trans2}. Therefore, since $%
r(v) $ comes by Transformation~\ref{trans2} in $R^{\prime \prime }$
immediately before $r_{0}$, and since $P_{v}\ll _{R^{\prime \prime }}P_{x}$,
it follows that $r(v)<_{R^{\prime \prime }}r_{0}=r(w_{0})<_{R^{\prime \prime
}}l(x)$. Finally, since both $r(w_{0})$ and $l(x)$ do not move during
Transformation~\ref{trans2}, it follows that also $r(w_{0})<_{R^{\prime
}}l(x)$ in $R^{\prime }$. Thus, since $l(x)\leq _{R^{\prime }}\ell _{0}$, it
follows that $r(w_{0})<_{R^{\prime }}\ell _{0}$ in $R^{\prime }$, which is a
contradiction, since $w_{0}\in N_{2}$. Therefore, no adjacency $vx$ has been
removed from $R^{\prime }$ in $R^{\prime \prime }$, i.e.~$R^{\prime \prime }$
is a projection representation of $G$. This completes the proof of the lemma.
\end{proof}

\begin{lemma}
\label{R'''}$R^{\prime \prime \prime }$ is a projection representation of $G$.
\end{lemma}

\begin{proof}
The proof is done in two parts. In Part 1 we prove that $u$ is adjacent in $%
R^{\prime \prime \prime }$ to all vertices of $N(u)$, while in Part 2 we
prove that $u$ is not adjacent in $R^{\prime \prime \prime }$ to any vertex
of $V\setminus N[u]$.

\medskip

\emph{Part 1.} In this part we prove that $u$ is adjacent in $R^{\prime
\prime \prime }$ to all vertices of $N(u)$. Denote by $a_{u}$ and $b_{u}$
the coordinates of the upper and lower endpoint of $P_{u}$ in the initial
projection representation $R$ on $L_{1}$ and on $L_{2}$, respectively. Then,
since the endpoints of $P_{u}$ do not move by Transformations~\ref{trans1}
and~\ref{trans2}, $a_{u}$ and $b_{u}$ remain the endpoints of $P_{u}$ also
in the representations $R^{\prime }$ and $R^{\prime \prime }$; however, note
that $a_{u}$ and $b_{u}$ are not the endpoints of $P_{u}$ in $R^{\prime
\prime \prime }$. Then, $L(w)<_{R^{\prime \prime }}a_{u}$ for every $w\in
N(u)$ by Lemma~\ref{unbounded-bounded}, and thus also $L(w)<_{R^{\prime
\prime \prime }}a_{u}$ for every $w\in N(u)$, since only the endpoints of $%
P_{u}$ move during Transformation~\ref{trans3}.

Note now that $a_{u}<_{R^{\prime \prime }}L_{0}$, since $L_{0}=\min_{R^{%
\prime \prime }}\{L(x)\ |\ x\in (V_{B}\setminus N(u))\setminus
V_{0}(u),P_{u}\ll _{R^{\prime \prime }}P_{x}\}$. Furthermore, recall by
Corollary~\ref{property-trans2} that for all vertices $w\in N_{1}\setminus
N_{2}$, for which $R(w)<_{R^{\prime \prime }}L_{0}$, the values $R(w)$ lie
immediately before $L_{0}$ in $R^{\prime \prime }$. Therefore, in
particular, $a_{u}<_{R^{\prime \prime }}R(w)$ for every $w\in N_{1}\setminus
N_{2}$, since $a_{u}<_{R^{\prime \prime }}L_{0}$, and thus $L(w)<_{R^{\prime
\prime }}a_{u}<_{R^{\prime \prime }}R(w)$ for every $w\in N_{1}\setminus
N_{2}\subseteq N(u)$ by the previous paragraph. Therefore, since $%
a_{u}<_{R^{\prime \prime }}L_{0}$, and since the upper endpoint $R(u)$ of
the line $P_{u}$ lies in $R^{\prime \prime \prime }$ immediately before $%
\min_{R^{\prime \prime }}\{L_{0},R(w)\ |\ w\in N_{1}\setminus N_{2}\}$, cf.
the statement of Transformation~\ref{trans3}, it follows that also $%
L(w)<_{R^{\prime \prime \prime }}a_{u}<_{R^{\prime \prime \prime
}}R(u)<_{R^{\prime \prime \prime }}R(w)$ for every $w\in N_{1}\setminus
N_{2} $. That is, $L(w)<_{R^{\prime \prime \prime }}R(u)<_{R^{\prime \prime
\prime }}R(w)$ for every $w\in N_{1}\setminus N_{2}$, and thus $P_{u}$
intersects $P_{w}$ in $R^{\prime \prime \prime }$ for every $w\in
N_{1}\setminus N_{2}$. Therefore, since all vertices of $\{u\}\cup
N_{1}\setminus N_{2}$ are bounded in $R^{\prime \prime \prime }$, $u$ is
adjacent in $R^{\prime \prime \prime }$ to all vertices of $N_{1}\setminus
N_{2}$.

Consider now an arbitrary vertex $w\in N_{2}$. Recall that 
$r_{0}=\min_{R^{\prime }}\{r(w)\ |\ w\in N_{2}\}$, i.e.~${r_{0}\leq
_{R^{\prime }}r(w)}$. Thus, since the endpoint $r(w)$ does not move by
Transformation~\ref{trans2}, it follows that also $r_{0}\leq _{R^{\prime
\prime }}r(w)$. Furthermore, by Transformation~\ref{trans2}, $%
r(v)<_{R^{\prime \prime }}r_{0}\leq _{R^{\prime \prime }}r(w)$ for every $v\in V_{0}(u)\cap V_{B}$. 
This holds clearly also in $R^{\prime \prime
\prime }$, i.e.~$r(v)<_{R^{\prime \prime \prime }}r(w)$ for every $v\in
V_{0}(u)\cap V_{B}$ and every $w\in N_{2}$. Since the lower endpoint of the
line $P_{u}$ comes immediately after $\max_{R^{\prime \prime}} \{r(v)\ |\ V_{0}(u)\cap V_{B}\}$
in~$R^{\prime \prime \prime }$, it follows that $r(v)<_{R^{\prime \prime
\prime }}l(u)=r(u)<_{R^{\prime \prime \prime }}r(w)$ for every $v\in
V_{0}(u)\cap V_{B}$ and every $w\in N_{2}$. Thus, since also $%
L(w)<_{R^{\prime \prime \prime }}a_{u}<_{R^{\prime \prime \prime }}R(u)$ for
every $w\in N(u)$, it follows that $P_{u}$ intersects $P_{w}$ in~$R^{\prime
\prime \prime }$ for every $w\in N_{2}$. Therefore, since all vertices of $%
\{u\}\cup N_{2}$ are bounded in $R^{\prime \prime \prime }$, $u$ is adjacent
in~$R^{\prime \prime \prime }$ to all vertices of $N_{2}$. Thus, since $%
N_{2}\cup (N_{1}\setminus N_{2})=N(u)$, $u$ is adjacent in $R^{\prime \prime
\prime }$ to all vertices of~$N(u)$.

\medskip

\emph{Part 2.} In this part we prove that $u$ is not adjacent in $R^{\prime
\prime \prime }$ to any vertex of $V\setminus N[u]$. To this end, recall
first by Lemma~\ref{bounded-hovering} that $u^{\ast }$ is a bounded covering
vertex of $u$ in $G$ (and thus~$u^{\ast }\in V_{0}(u)\cap V_{B}$), such that 
$P_{u}$ intersects $P_{u^{\ast }}$ in $R$ and $\phi _{u^{\ast }}<\phi _{u}$
in $R$. Therefore, $l(u^{\ast })<_{R}l(u)=r(u)$ by Lemma~\ref%
{unbounded-hovering}, and thus also $l(u^{\ast })<_{R^{\prime \prime }}r(u)$%
, since the endpoint $l(u^{\ast })$ remains the same in the representations $%
R$, $R^{\prime }$, and $R^{\prime \prime }$. Recall now that $%
L_{0}=\min_{R^{\prime \prime }}\{L(x)\ |\ x\in (V_{B}\setminus N(u))\setminus
V_{0}(u),P_{u}\ll _{R^{\prime \prime }}P_{x}\}$. Denote by $y_{0}$ the
vertex of $(V_{B}\setminus N(u))\setminus V_{0}(u)$, such that $L_{0}=L(y_{0})$%
, and thus $P_{u}\ll _{R^{\prime \prime }}P_{y_{0}}$. Therefore, since $%
l(u^{\ast })<_{R^{\prime \prime }}r(u)$, it follows that $l(u^{\ast
})<_{R^{\prime \prime }}l(u)<_{R^{\prime \prime }}l(y_{0})$. Since $u^{\ast
}\in V_{0}(u)$ and $y_{0}\notin N(u)\cup V_{0}(u)$, it follows that $u^{\ast
}y_{0}\notin E$. Therefore, since both $u^{\ast }$ and $y_{0}$ are bounded
vertices, $P_{u^{\ast }}$ does not intersect $P_{y_{0}}$ in $R^{\prime
\prime }$, and thus $P_{u^{\ast }}\ll _{R^{\prime \prime }}P_{y_{0}}$, since 
$l(u^{\ast })<_{R^{\prime \prime }}l(y_{0})$. Moreover, since by
Transformation~\ref{trans3} only the line $P_{u}$ is moved, it follows that
also $P_{u^{\ast }}\ll _{R^{\prime \prime \prime }}P_{y_{0}}$.

Since by Transformation~\ref{trans1} only some endpoints of vertices $w\in
N_{1}\subseteq N(u)$ are moved, the value $R(u^{\ast })$ remains the same in 
$R$ and in $R^{\prime }$. Furthermore, $r(u)<_{R^{\prime }}r_{0}$ by
definition of $r_{0}$ and by Lemma~\ref{unbounded-bounded}. Suppose that the
right line of $P_{u^{\ast }}$ is moved during Transformation~\ref{trans2}.
Then, $r(u)<_{R^{\prime }}r_{0}<_{R^{\prime }}r(u^{\ast })$, while $%
r(u^{\ast })$ comes immediately before $r_{0}$ in $R^{\prime \prime }$, i.e.~%
$r(u)<_{R^{\prime \prime }}r(u^{\ast })<_{R^{\prime \prime }}r_{0}$, since $%
r_{0}$ does not move during Transformation~\ref{trans2}. Therefore, since $%
l(u^{\ast })<_{R}l(u)$ by Lemma~\ref{unbounded-hovering} (and thus also $%
l(u^{\ast })<_{R^{\prime \prime }}l(u)$), it follows that $P_{u^{\ast }}$
still intersects $P_{u}$ in $R^{\prime \prime }$.

Denote by $v_{0}$ the vertex of $V_{0}(u)\cap V_{B}$, such that $%
r(v_{0})=\max_{R^{\prime \prime }}\{r(v)\ |\ v\in V_{0}(u)\cap V_{B}\}$, cf.
the statement of Transformation~\ref{trans3}. Since $v_{0}\in V_{0}(u)$ and $%
y_{0}\notin N(u)\cup V_{0}(u)$, it follows that $v_{0}y_{0}\notin E$.
Therefore, since both $v_{0}$ and $y_{0}$ are bounded vertices, either $%
P_{y_{0}}\ll _{R^{\prime \prime }}P_{v_{0}}$ or $P_{v_{0}}\ll _{R^{\prime
\prime }}P_{y_{0}}$. Suppose that $P_{y_{0}}\ll _{R^{\prime \prime
}}P_{v_{0}}$, and thus $P_{u^{\ast }}\ll _{R^{\prime \prime }}P_{y_{0}}\ll
_{R^{\prime \prime }}P_{v_{0}}$. Then, since $u^{\ast },v_{0}\in V_{0}(u)$
and since $V_{0}(u)$ is connected, there exists at least one vertex $v\in
V_{0}(u)$, such that $P_{v}$ intersects $P_{y_{0}}$ in $R^{\prime \prime }$.
Similarly, since $y_{0}\notin N(u)\cup V_{0}(u)$, it follows that $%
vy_{0}\notin E$. Therefore, since $y_{0}$ is a bounded vertex, $v$ must be
an unbounded vertex with $\phi _{v}>\phi _{y_{0}}$, and thus $N(v)\subseteq
N(y_{0})$ by Lemma~\ref{intersecting-unbounded}. Then, $N(v)$ includes at
least one vertex $v^{\prime }\in V_{0}(u)$, and thus $v^{\prime }\in
N(y_{0}) $. Therefore, $y_{0}\in V_{0}(u)$, which is a contradiction. Thus, $%
P_{v_{0}}\ll _{R^{\prime \prime }}P_{y_{0}}$. Moreover, since by
Transformation~\ref{trans3} only the line $P_{u}$ is moved, it follows that
also $P_{v_{0}}\ll _{R^{\prime \prime \prime }}P_{y_{0}}$.

We will prove in the following that $u$ is not adjacent in $R^{\prime \prime
\prime }$ to any vertex $x\notin N(u)$. For the sake of contradiction,
suppose that $P_{x}$ intersects $P_{u}$ in $R^{\prime \prime \prime }$, for
some vertex $x\notin N(u)$. We distinguish in the following the cases
regarding $x$.

\emph{Case 2a.} $x\in V_{B}\setminus N(u)$ (i.e.~$x$ is bounded) and $x\in
V_{0}(u)$. Then, $r(x)\leq _{R^{\prime \prime }}r(v_{0})$ and $r(u^{\ast
})\leq _{R^{\prime \prime }}r(v_{0})$ by definition of $v_{0}$, and thus
also $r(x)\leq _{R^{\prime \prime \prime }}r(v_{0})$ and $r(u^{\ast })\leq
_{R^{\prime \prime \prime }}r(v_{0})$. Therefore, by Transformation~\ref%
{trans3}, $r(x)\leq _{R^{\prime \prime \prime }}r(v_{0})<_{R^{\prime \prime
\prime }}l(u)$, i.e.~$r(x)<_{R^{\prime \prime \prime }}l(u)$, and thus $%
L(u)<_{R^{\prime \prime \prime }}R(x)$, since we assumed that $P_{x}$
intersects $P_{u}$ in $R^{\prime \prime \prime }$. Furthermore, $r(x)\leq
_{R^{\prime \prime \prime }}r(v_{0})<_{R^{\prime \prime \prime }}l(y_{0})$,
i.e.~$r(x)<_{R^{\prime \prime \prime }}l(y_{0})$, since $P_{v_{0}}\ll
_{R^{\prime \prime \prime }}P_{y_{0}}$. Recall by Corollary~\ref%
{property-trans2} that for all vertices $w\in N_{1}\setminus N_{2}$, for
which $R(w)<_{R^{\prime \prime }}L_{0}=L(y_{0})$, the values $R(w)$ lie
immediately before $L_{0}$ in $R^{\prime \prime }$, and thus also in $%
R^{\prime \prime \prime }$. Thus, since $L(u)<_{R^{\prime \prime \prime
}}R(x)$, and since the upper endpoint $L(u)=R(u)$ of $P_{u}$ comes
immediately before $\min \{L_{0},R(w)\ |\ w\in N_{1}\setminus N_{2}\}$ in $%
R^{\prime \prime \prime }$, it follows that $L(u)<_{R^{\prime \prime \prime
}}L_{0}=L(y_{0})<_{R^{\prime \prime \prime }}R(x)$. Therefore, since also $%
r(x)<_{R^{\prime \prime \prime }}l(y_{0})$, $P_{x}$ intersects $P_{y_{0}}$
in $R^{\prime \prime \prime }$, and thus also in $R^{\prime \prime }$. Then $%
xy_{0}\in E$, since both $x$ and $y_{0}$ are bounded, and therefore $%
y_{0}\in V_{0}(u)$, which is a contradiction. It follows that $P_{x}$ does
not intersect $P_{u}$ in $R^{\prime \prime \prime }$ for every $x\in
V_{B}\setminus N(u)$, such that $x\in V_{0}(u)$. In particular, since $%
u^{\ast },v_{0}\in V_{B}\setminus N(u)$ and $u^{\ast },v_{0}\in V_{0}(u)$,
it follows that neither $P_{u^{\ast }}$ nor $P_{v_{0}}$ intersects $P_{u}$
in $R^{\prime \prime \prime }$. Therefore, since $r(u^{\ast })\leq
_{R^{\prime \prime \prime }}r(v_{0})<_{R^{\prime \prime \prime }}l(u)$ by
Transformation~\ref{trans3}, it follows that $P_{u^{\ast }}\ll _{R^{\prime
\prime \prime }}P_{u}$ and $P_{v_{0}}\ll _{R^{\prime \prime \prime }}P_{u}$.

\emph{Case 2b.} $x\in V_{B}\setminus N(u)$ (i.e.~$x$ is bounded) and $%
x\notin V_{0}(u)$. Then, $u^{\ast }x\notin E$, since $u^{\ast }\in V_{0}(u)$%
. Furthermore, since both $x$ and $u^{\ast }$ (resp.~$v_{0}$) are bounded
vertices, $P_{u^{\ast }}$ (resp.~$P_{v_{0}}$) does not intersect $P_{x}$ in $%
R^{\prime \prime \prime }$, i.e.~either $P_{x}\ll _{R^{\prime \prime \prime
}}P_{u^{\ast }}$ or $P_{u^{\ast }}\ll _{R^{\prime \prime \prime }}P_{x}$
(resp.~either $P_{x}\ll _{R^{\prime \prime \prime }}P_{v_{0}}$ or $%
P_{v_{0}}\ll _{R^{\prime \prime \prime }}P_{x}$). If $P_{x}\ll _{R^{\prime
\prime \prime }}P_{u^{\ast }}$ (resp.~$P_{x}\ll _{R^{\prime \prime \prime
}}P_{v_{0}}$), then $P_{x}\ll _{R^{\prime \prime \prime }}P_{u^{\ast }}\ll
_{R^{\prime \prime \prime }}P_{u}$ (resp.~$P_{x}\ll _{R^{\prime \prime
\prime }}P_{v_{0}}\ll _{R^{\prime \prime \prime }}P_{u}$) by the previous
paragraph. This is a contradiction to the assumption that $P_{x}$ intersects 
$P_{u}$ in $R^{\prime \prime \prime }$. Therefore $P_{u^{\ast }}\ll
_{R^{\prime \prime \prime }}P_{x}$ and $P_{v_{0}}\ll _{R^{\prime \prime
\prime }}P_{x}$, and thus also $P_{u^{\ast }}\ll _{R^{\prime \prime }}P_{x}$
and $P_{v_{0}}\ll _{R^{\prime \prime }}P_{x}$. Thus, in particular $%
r(v_{0})<_{R^{\prime \prime \prime }}l(x)$. Furthermore, the lower endpoint $%
l(u)=r(u)$ of $P_{u}$ comes by Transformation~\ref{trans3} immediately after 
$r(v_{0})$ in $R^{\prime \prime \prime }$, and thus $r(v_{0})<_{R^{\prime
\prime \prime }}r(u)<_{R^{\prime \prime \prime }}l(x)$. Then $%
L(x)<_{R^{\prime \prime \prime }}R(u)$, since we assumed that $P_{x}$
intersects $P_{u}$ in $R^{\prime \prime \prime }$.

We distinguish now the cases according to the relative positions of $P_{u}$
and $P_{x}$ in $R^{\prime \prime }$. If $P_{x}\ll _{R^{\prime \prime }}P_{u}$%
, then $P_{u^{\ast }}\ll _{R^{\prime \prime }}P_{x}\ll _{R^{\prime \prime
}}P_{u}$ by the previous paragraph, which is a contradiction, since $%
P_{u^{\ast }}$ intersects $P_{u}$ in $R^{\prime \prime }$, as we proved
above. If $P_{u}\ll _{R^{\prime \prime }}P_{x}$, then $L_{0}\leq _{R^{\prime
\prime }}L(x)$, since $x\in (V_{B}\setminus N(u))\setminus V_{0}(u)$ and $%
L_{0}=\min_{R^{\prime \prime }}\{L(x)\ |\ x\in (V_{B}\setminus N(u))\setminus
V_{0}(u),P_{u}\ll _{R^{\prime \prime }}P_{x}\}$. Thus $R(u)<_{R^{\prime
\prime \prime }}L_{0}\leq _{R^{\prime \prime \prime }}L(x)$ by
Transformation~\ref{trans3}, which is a contradiction, since $%
L(x)<_{R^{\prime \prime \prime }}R(u)$ by the previous paragraph. If $P_{u}$
intersects $P_{x}$ in $R^{\prime \prime }$, then $\phi _{x}<\phi _{u}$ in $%
R^{\prime \prime }$, since $x$ is bounded, $u$ is unbounded, and $x\notin
N(u)$. Therefore, $N(u)\subseteq N(x)$ by Lemma~\ref{intersecting-unbounded}%
, and thus $x$ is a covering vertex of $u$, i.e.~$x\in V_{0}(u)$, which is a
contradiction to the assumption of Case 2b. Thus, $P_{x}$ does not intersect 
$P_{u}$ in $R^{\prime \prime \prime }$, for every $x\in V_{B}\setminus N(u)$, 
such that $x\notin V_{0}(u)$.

\emph{Case 2c.} $x\in V_{U}$ (i.e.~$x$ is unbounded), such that $\phi
_{x}<\phi _{u}$ in $R^{\prime \prime \prime }$. Then, since both $P_{x}$ and 
$P_{u}$ are lines in $R^{\prime \prime \prime }$, it follows that $%
l(x)<_{R^{\prime \prime \prime }}l(u)$ and $R(x)>_{R^{\prime \prime \prime
}}R(u)$. Thus, by Transformation~\ref{trans3}, $l(x)<_{R^{\prime \prime
\prime }}r(v_{0})<_{R^{\prime \prime \prime }}l(u)$ and $R(u)<_{R^{\prime
\prime \prime }}L_{0}=L(y_{0})<_{R^{\prime \prime \prime }}R(x)$. Since $%
P_{v_{0}}\ll _{R^{\prime \prime \prime }}P_{y_{0}}$, it follows that $P_{x}$
intersects both $P_{v_{0}}$ and $P_{y_{0}}$ in $R^{\prime \prime \prime }$
(and thus also in $R^{\prime \prime })$, and that $\phi _{x}<\phi _{v_{0}}$
and $\phi _{x}<\phi _{y_{0}}$. Therefore, since both $v_{0}$ and $y_{0}$ are
bounded, it follows that $x\in N(v_{0})$ and $x\in N(y_{0})$. Thus $%
x,y_{0}\in V_{0}(u)$, since $v_{0}\in V_{0}(u)$. This is a contradiction,
since $y_{0}\notin V_{0}(u)$ by definition of $y_{0}$. It follows that $%
P_{x} $ does not intersect $P_{u}$ in $R^{\prime \prime \prime }$ for every $%
x\in V_{U}$, for which $\phi _{x}<\phi _{u}$ in $R^{\prime \prime \prime }$.

\medskip

Summarizing, due to Part 1 and due to Cases 2a, 2b, and 2c of Part 2, it
follows that $P_{u}$ intersects in $R^{\prime \prime \prime }$ only the
parallelograms $P_{z}$, for every $z\in N(u)$, and possibly some trivial
parallelograms (lines) $P_{x}$, where $x\in V_{U}$ and $\phi _{x}>\phi _{u}$
in $R^{\prime \prime \prime }$. However, since $\phi _{x}>\phi _{u}$ in $%
R^{\prime \prime \prime }$ for all these vertices $x$, it follows that $u$
is not adjacent to these vertices in $R^{\prime \prime \prime }$. Thus $%
R^{\prime \prime \prime }$ is a projection representation of $G$, since $%
R^{\prime \prime }$ is a projection representation of $G$ by Lemma~\ref{R''}. 
This completes the proof of the lemma.
\end{proof}

\medskip

Thus, $R^{\ast }=R^{\prime \prime \prime }$ is a projection representation
of $G$ with $k-1$ unbounded vertices. This completes the proof of Theorem~%
\ref{right-property-thm}.
\end{proof}

\subsection{The case where $u$ has neither the left nor the right border property}
\label{structure-subsec}

In this section we consider graphs in \textsc{(Tolerance }$\cap $\textsc{\
Trapezoid)\ }$\setminus $ \textsc{Bounded Tolerance} that admit a
projection representation, in which there is no unbounded vertex $u$ with
the right or the left border property. 
The proof of the main Theorem~\ref{no-property-thm} of this section is based 
on the fact that~$G$ has simultaneously a projection representation $R$ and a trapezoid representation~$R_{T}$. 
In this theorem we choose a certain unbounded vertex $u$ of $G$ 
and we prove that there is another projection representation~$R^{\ast }$ of $G$, 
in which $u$ has been replaced by a bounded vertex.
First, we introduce in the following the notion of \emph{neighborhood maximality} for unbounded 
vertices in a tolerance graph.

\begin{definition}
\label{unbounded-maximal}Let~$G$ be a tolerance graph, $R$ be a projection
representation of~$G$, and~$u$ be an unbounded vertex in~$R$. Then,~$u$ is 
\emph{unbounded-maximal} if there exists no unbounded vertex~$v$ in~$R$,
such that~$N(u)\subset N(v)$.
\end{definition}

This notion of an unbounded-maximal vertex will be used in Lemma~\ref{min-angle-2}, 
in order to obtain for an arbitrary tolerance graph $G$ a projection representation with a special property. 
Before we present Lemma~\ref{min-angle-2}, we first present the next auxiliary lemma.

\begin{lemma}
\label{min-angle-1}Let $G$ be a tolerance graph, $R$ be a projection
representation of $G$, and $u$ be an unbounded vertex of $G$ in $R$, such
that $u$ is unbounded-maximal. Then, there exists a projection
representation $R^{\ast }$ of $G$ with the same unbounded vertices, such
that $\phi _{u}<\phi _{v}$ for every unbounded vertex $v\neq u$, for which $%
N(v)\subset N(u)$.
\end{lemma}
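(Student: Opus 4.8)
The plan is to keep all bounded parallelograms and all other unbounded lines fixed, and to modify only the line $P_{u}$, lowering its slope until it drops below $\phi_{v}$ for every unbounded $v$ with $N(v)\subset N(u)$, while preserving the graph $G_{R}$ and the set $V_{U}$ of unbounded vertices. The first observation that makes this tractable is a reduction via Lemma~\ref{intersecting-unbounded}: if $v$ is unbounded with $N(v)\subset N(u)$ and $P_{v}$ \emph{already} intersects $P_{u}$, then $\phi_{u}<\phi_{v}$ holds automatically. Indeed, $uv\notin E$ since no two unbounded vertices are adjacent, so if we had $\phi_{v}<\phi_{u}$, Lemma~\ref{intersecting-unbounded} (with $u$ the larger-slope vertex) would give $N(u)\subseteq N(v)$, contradicting $N(v)\subset N(u)$. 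Hence only the \emph{disjoint} problematic vertices --- those unbounded $v$ with $N(v)\subset N(u)$ and $P_{u}\ll_{R}P_{v}$ or $P_{v}\ll_{R}P_{u}$ --- require attention, and for each of them the goal reduces to forcing $P_{u}$ to intersect $P_{v}$: once the two lines cross, the same application of Lemma~\ref{intersecting-unbounded} delivers $\phi_{u}<\phi_{v}$ for free.

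Next I would decrease $\phi_{u}$ by sliding the upper endpoint $L(u)=R(u)$ of $P_{u}$ to the right and its lower endpoint $l(u)=r(u)$ to the left; by the geometry recorded in Lemmas~\ref{unbounded-bounded} and~\ref{unbounded-hovering}, widening the line this way lowers $\phi_{u}$, and once $P_{u}$ is wide and shallow enough (its bottom left of $l(v)$ and its top right of $R(v)$ for every problematic $v$) it will surround and therefore cross every such $v$. The hypothesis that $u$ is unbounded-maximal (Definition~\ref{unbounded-maximal}) enters precisely here to guarantee consistency. In any valid projection representation, whenever the lowered $P_{u}$ crosses an unbounded $z$ with $\phi_{u}<\phi_{z}$, Lemma~\ref{intersecting-unbounded} forces $N(z)\subseteq N(u)$. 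Since $u$ is unbounded-maximal, no unbounded vertex satisfies $N(u)\subset N(z)$, so the unbounded vertices split into those with $N(z)\subseteq N(u)$, which I am happy to cross, and those \emph{incomparable} to $u$ (so $N(z)\not\subseteq N(u)$), which the widened $P_{u}$ must therefore \emph{avoid}, on pain of producing a configuration forbidden in every representation of $G$.

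The delicate part, and what I expect to be the main obstacle, is to perform this widening so that no bounded adjacency changes and no incomparable unbounded line is crossed. Concretely, while stretching $P_{u}$ I must ensure: (i) every $w\in N(u)$ still satisfies $P_{w}\cap P_{u}\neq\emptyset$ and $\phi_{w}>\phi_{u}$ --- the slope part is automatic since we only decrease $\phi_{u}$ and $\phi_{w}>\phi_{u}$ held before; (ii) no bounded non-neighbour $x$ with $\phi_{x}>\phi_{u}$ is newly crossed, and in particular no covering vertex $x$ (bounded, $\phi_{x}<\phi_{u}$, intersecting the old $P_{u}$) gets ``flipped'' below the new slope and then still meets the widened line, which would create an edge $ux$; and (iii) $P_{u}$ stays disjoint from every incomparable unbounded vertex. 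I would attempt to control (i)--(iii) by moving the two endpoints of $P_{u}$ through a carefully chosen corridor and by a squeezing argument in the spirit of Definition~\ref{squeeze}, placing the new endpoints immediately past the relevant endpoints of the neighbours and the problematic vertices rather than arbitrarily far, so that the crossing/avoidance pattern among the unbounded lines is exactly the one dictated by the neighbourhood poset. Showing that a single placement of $P_{u}$ can simultaneously intersect every subordinate unbounded line, avoid every incomparable one, and escape every covering parallelogram without disturbing a bounded edge is the technical heart of the argument, and it is here that unbounded-maximality together with Lemma~\ref{intersecting-unbounded} must be combined to certify that the required corridor is non-empty.
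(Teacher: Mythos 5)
Your opening reduction is sound and in fact coincides with the paper's first step: in any \emph{valid} representation, if $P_u$ and $P_v$ cross then Lemma~\ref{intersecting-unbounded} forces $\phi_u<\phi_v$, so only the disjoint subordinate lines need work. The gap is in the mechanism you choose for them. Widening $P_u$ (top endpoint right, bottom endpoint left) is a \emph{monotone} operation: every parallelogram that met the old $P_u$ still meets the widened one, and intersections are only added, never removed. This makes your requirement (ii) unsatisfiable rather than merely delicate. By Lemma~\ref{bounded-hovering}, a canonical representation always contains a bounded covering vertex $u^{\ast}$ with $P_{u^{\ast}}$ intersecting $P_u$ and $\phi_{u^{\ast}}<\phi_u$; if some subordinate $v$ has $\phi_v<\phi_{u^{\ast}}$, your target slope satisfies $\phi_u^{\mathrm{new}}<\phi_v<\phi_{u^{\ast}}$, the crossing with $P_{u^{\ast}}$ persists, and since $u^{\ast}$ is bounded the new picture exhibits the edge $uu^{\ast}\notin E$, so it is not a representation of $G$. ``Escaping'' $P_{u^{\ast}}$ is impossible within your move: by Lemma~\ref{unbounded-hovering}, $l(u^{\ast})<_{R}l(u)$ and $R(u)<_{R}R(u^{\ast})$, and after pushing the bottom of $P_u$ left and the top right neither $P_u\ll P_{u^{\ast}}$ nor $P_{u^{\ast}}\ll P_u$ can hold. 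Requirement (iii) fails the same way: if an incomparable unbounded $z$ lies between $u$ and a subordinate $v$, say $P_u\ll_R P_z\ll_R P_v$, then any widened $P_u$ that crosses $P_v$ necessarily crosses $P_z$ too, and a private neighbour $w\in N(z)\setminus N(u)$ then satisfies $L(w)<_R L(z)<_R R^{\mathrm{new}}(u)$ and $r(w)>_R r(z)>_R l^{\mathrm{new}}(u)$ with $\phi_w>\phi_z>\phi_u^{\mathrm{new}}$ (Lemma~\ref{unbounded-bounded}), so the widened $P_u$ acquires the spurious bounded edge $uw$. Unbounded-maximality does not exclude such configurations---it only rules out unbounded $z$ with $N(u)\subset N(z)$, not incomparable vertices sitting between $P_u$ and its subordinates---so the corridor you hope is nonempty can genuinely be empty.

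The paper avoids all of this by moving the \emph{opposite} line: $P_u$ and everything else stay fixed, and for each subordinate $v$ with $\phi_v<\phi_u$ the single line $P_v$ is rotated to have slope just above $\phi_u$. Concretely, when $P_u\ll_R P_v$ the lower endpoint $r(v)$ is kept and the upper endpoint $L(v)$ slides left to $L(u)+\Delta-\varepsilon$, where $\Delta=r(v)-r(u)$; the case $P_v\ll_R P_u$ is symmetric. Validity then follows from a short monotonicity check in the favourable direction: every $w\in N(v)\subset N(u)$ still straddles the rotated line because $L(w)<_R L(u)$ and $r(v)<_R r(w)$ by Lemma~\ref{unbounded-bounded}, and any new adjacency $y$ of $v$ would already have crossed the old $P_v$, whose slope was even smaller, hence would have been an old adjacency---a contradiction. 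The moral difference is that rotating the subordinate line toward $u$ is a local, shrinking change with controllable side effects, whereas lowering $\phi_u$ is global and collides with every covering and incomparable object crossing $P_u$; to rescue your direction you would need an operation that \emph{removes} crossings of $P_u$, which widening by construction cannot do.
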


\begin{proof}
First, recall that we can assume w.l.o.g.~that all slopes of the
parallelograms in a projection representation are 
distinct~\cite{GolTol04,IsaakNT03,MSZ-Model-SIDMA-09}. 
We will construct the projection representation~$R^{\ast}$ of~$G$ as follows. 
Let $u$ be an unbounded vertex of~$G$ in $R$, such that $u$ is unbounded-maximal, 
and let $v\neq u$ be an arbitrary unbounded vertex of~$G$ in $R$, 
such that $N(v)\subset N(u)$ and $\phi _{v}<\phi _{u}$. Suppose first that $P_{u}$
intersects $P_{v}$ in $R$. Then, since $uv\notin E$ and $\phi _{v}<\phi _{u}$, 
it follows that $N(u)\subseteq N(v)$ by Lemma~\ref{intersecting-unbounded}, 
which is a contradiction.

Suppose now that $P_{v}$ does not intersect $P_{u}$ in $R$. Let $P_{u}\ll
_{R}P_{v}$, i.e.~$r(u)<_{R}r(v)$ and $L(u)<_{R}L(v)$. Furthermore, let $%
\Delta =r(v)-r(u)$. Since for every $w\in N(v)$, it holds also $w\in N(u)$,
it follows by Lemma~\ref{unbounded-bounded} that $r(u)<_{R}r(v)<_{R}r(w)$
and $L(w)<_{R}L(u)<_{R}L(v)$ for every $w\in N(v)\subset N(u)$. Furthermore, 
$\phi _{w}>\phi _{u}>\phi _{v}$ for every $w\in N(v)\subset N(u)$. We can
now move the upper endpoint $L(v)$ of the line $P_{v}$ in $R$ to the point $%
L(u)+\Delta -\varepsilon $, for a sufficiently small positive number $%
\varepsilon >0$. In the resulting projection representation $R^{\prime }$, $%
\phi _{u}<\phi _{v}$.

We will prove that $R^{\prime }$ is a projection representation of the same
graph $G$. Indeed, consider first a vertex $w\in N(v)$. Then, $%
r(u)<_{R^{\prime }}r(v)<_{R^{\prime }}r(w)$ and $L(w)<_{R^{\prime
}}L(u)<_{R^{\prime }}L(v)=L(u)+\Delta -\varepsilon $. Furthermore, $\phi
_{u}<\phi _{v}<\phi _{w}$, since $\varepsilon >0$ has been chosen to be
sufficiently small. Therefore, $P_{v}$ still intersects $P_{w}$ in $%
R^{\prime }$ and $\phi _{v}<\phi _{w}$ for every $w\in N(v)$, i.e.~$v$
remains adjacent in $R^{\prime }$ to all vertices $w\in N(v)$.

Suppose now that $v$ obtains a new adjacency with a vertex $y$ in $R^{\prime
}$. Then, due to Lemma~\ref{unbounded-bounded}, $y$ is bounded in both $R$
and $R^{\prime }$, $r(v)<_{R^{\prime }}r(y)$ and $L(y)<_{R^{\prime }}L(v)$.
Since the lower endpoint $r(v)$ of $P_{v}$ remains the same in both $R$ and $%
R^{\prime }$, and since the upper endpoint $L(v)$ of $P_{v}$ in $R^{\prime }$
is to the left of the upper endpoint of $P_{v}$ in $R$, it follows that also 
$r(v)<_{R}r(y)$ and $L(y)<_{R}L(v)$, i.e.~$P_{y}$ intersects $P_{v}$ also in 
$R$. Thus, since the slope $\phi _{v}$ in $R$ is smaller than the
corresponding slope $\phi _{v}$ in $R^{\prime }$, it follows that $y$ is
adjacent to $v$ also in $R$, i.e.~$y\in N(v)$, which is a contradiction.
Therefore, $v$ does not obtain any new adjacency in $R^{\prime }$. Thus, $v$
is adjacent in $R^{\prime }$ to exactly the vertices $w\in N(v)$, i.e.~$%
R^{\prime }$ is a projection representation of the same tolerance graph $G$.

The case where $P_{v}\ll _{R}P_{u}$ is symmetric. Namely, in this case let $%
\Delta =L(u)-L(v)$; then, construct the projection representation $R^{\prime
}$ by moving the lower endpoint $r(v)$ of the line $P_{v}$ in $R$ to the
point $r(u)-\Delta +\varepsilon $, for a sufficiently small positive number $%
\varepsilon >0$. Similarly, the resulting projection representation $%
R^{\prime }$ is a projection representation of $G$, while $\phi _{u}<\phi
_{v}$. We repeat the above procedure, as long as there exists an unbounded
vertex $v\neq u$ in $R$, such that $N(v)\subset N(u)$ and $\phi _{v}<\phi
_{u}$. The resulting projection representation $R^{\ast }$ of $G$ satisfies
the conditions of the lemma.
\end{proof}

\medskip

We are now ready to present Lemma~\ref{min-angle-2}.

\begin{lemma}
\label{min-angle-2}Let~$G$ be a tolerance graph and~$R$ be a projection
representation of~$G$ with at least one unbounded vertex. Then, there exists
a projection representation~$R^{\ast }$ of~$G$ with the same unbounded
vertices, such that the unbounded vertex~$u$, for which~$\phi _{u}=\min
\{\phi _{x}\ |\ x\in V_{U}\}$ in~$R^{\ast }$, is unbounded-maximal.
\end{lemma}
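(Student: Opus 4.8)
The plan is to keep the set of unbounded vertices fixed and to repeatedly apply Lemma~\ref{min-angle-1} to unbounded-maximal vertices, each time pushing the current minimum-slope vertex out of the way, until the minimum-slope vertex itself is unbounded-maximal. First I would observe that at least one unbounded-maximal vertex exists: the relation $N(\cdot)\subset N(\cdot)$ is a strict partial order on the finite nonempty set $V_{U}$, so it has a maximal element, which is unbounded-maximal by Definition~\ref{unbounded-maximal}. Among all unbounded-maximal vertices of $G$ I fix the one, call it $w$, whose slope $\phi_{w}$ is smallest in the starting representation $R$.

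The crucial structural observation is an \emph{invariance}. In the proof of Lemma~\ref{min-angle-1}, the only parallelograms whose slopes change are those of unbounded vertices $v$ with $N(v)\subset N(z)$ (for the vertex $z$ to which the lemma is applied), and their slopes can only \emph{increase}. Any such $v$ is non-maximal, so no unbounded-maximal vertex ever has its slope altered by an application of Lemma~\ref{min-angle-1}. Since unbounded-maximality is a property of $G$ and $V_{U}$ alone, neither of which changes, the set of unbounded-maximal vertices and all of their slopes remain fixed throughout; in particular $w$ and $\phi_{w}$ are invariant, and moreover no slope of any vertex ever decreases.

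With this in hand I would run the following loop starting from $R$. Let $u'$ be the current minimum-slope unbounded vertex. If $u'$ is unbounded-maximal, stop. Otherwise $u'\neq w$ (as $w$ is the minimum-slope unbounded-maximal vertex and slopes are distinct), hence $\phi_{u'}<\phi_{w}$; I choose an unbounded-maximal vertex $z$ with $N(u')\subset N(z)$, obtained by taking a maximal element above $u'$ in the inclusion order, which is a strict superset since $u'$ is non-maximal, and apply Lemma~\ref{min-angle-1} to $z$. Because $\phi_{u'}<\phi_{w}\le\phi_{z}$ and $N(u')\subset N(z)$, the vertex $u'$ is among those processed, so its slope is raised strictly above $\phi_{z}\ge\phi_{w}$. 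As a potential I use $\Phi(R)=|\{x\in V_{U} : \phi_{x}<\phi_{w}\}|$: since every processed vertex ends above $\phi_{z}\ge\phi_{w}$, no slope decreases, and $u'$ moves from below $\phi_{w}$ to above it, the value $\Phi$ strictly decreases at each iteration. Being a nonnegative integer, $\Phi$ reaches $0$ after finitely many steps, at which point no unbounded vertex has slope below $\phi_{w}$; the minimum-slope vertex is then exactly $w$, which is unbounded-maximal, and the final representation is the desired $R^{\ast}$.

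The main obstacle, and the reason why the naive strategy of applying the lemma once to a single maximal vertex fails, is the presence of unbounded vertices whose neighborhood is \emph{incomparable} to that of the chosen maximal vertex: Lemma~\ref{min-angle-1} gives no control over their slopes, so after a single application such a vertex may still be the global minimum. I expect the delicate points to be precisely the two facts that drive termination: verifying, from the explicit endpoint moves in the proof of Lemma~\ref{min-angle-1}, that the slopes of maximal vertices are never touched and that no slope ever decreases, so that $\phi_{w}$ is a genuinely fixed reference level; and checking that the new minimum-slope vertex, whenever it lies below $\phi_{w}$, is necessarily non-maximal and therefore always admits a maximal vertex $z$ above it to which the lemma can be applied.
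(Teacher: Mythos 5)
Correct, and essentially the paper's approach: the paper likewise reduces everything to the procedure of Lemma~\ref{min-angle-1}, applying it once to \emph{every} unbounded-maximal vertex (the set $S=\{u\in V_{U}\ |\ u$ is unbounded-maximal$\}$) in a single sweep and then taking the minimum slope over $S$, the point being that afterwards every non-maximal unbounded vertex has slope above some maximal one, so the global minimum-slope vertex must lie in $S$. Your demand-driven loop with the potential $\Phi$ is just different bookkeeping for the same mechanism, and your explicit verifications --- that the procedure never alters the slope of an unbounded-maximal vertex and never decreases any slope --- are precisely the persistence facts the paper's one-pass argument relies on implicitly.
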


\begin{proof}
Recall that $V_{U}$ denotes the set of unbounded vertices of $G$ in $R$. Let 
$S=\{u\in V_{U}\ |\ u$ is unbounded-maximal$\}$. Furthermore, let $R^{\prime
}$ be the projection representation obtained by applying for every $u\in S$
the procedure described in the proof of Lemma~\ref{min-angle-1}. Then, $%
R^{\prime }$ has the same unbounded vertices $V_{U}$, while $\phi _{u}<\phi
_{v}$ for every $u\in S$ and every unbounded vertex $v\neq u$, for which $%
N(v)\subset N(u)$. We choose now $u$ to be that unbounded vertex, for which 
${\phi _{u}=\min \{\phi _{x}\ |\ x\in S\}}$. Then, $u$ satisfies the conditions
of the lemma.
\end{proof}

\medskip

Assume that there exists a graph  $G\in$ \textsc{(Tolerance }$\cap $\textsc{\ Trapezoid)\ }$\setminus $ \textsc{Bounded Tolerance}, 
and let $G$ have the smallest number of vertices. 
Furthermore, let $R$ and $R_{T}$ be a canonical projection 
and a trapezoid representation of $G$, respectively, 
and $u$ be an arbitrary unbounded vertex of $G$ in $R$. 
Then $V_{0}(u)\neq \emptyset$ by Lemma~\ref{bounded-hovering},
and thus also $V_{0}(u)$ is connected by Lemma~\ref{two-components}. 
Therefore, since $u$ is not adjacent to any vertex of $V_{0}(u)$ by
Definition~\ref{hovering}, either all trapezoids of $V_{0}(u)$ lie to the
left, or all to the right of $T_{u}$ in $R_{T}$. 

Consider first the case where all trapezoids of $V_{0}(u)$ lie to the \emph{left} of $T_{u}$ in $R_{T}$, 
i.e.~$T_{x}\ll _{R_{T}}T_{u}$ for every $x\in V_{0}(u)$. 
Recall by Lemma~\ref{not-equal} that $N(v)\neq N(u)$ for every unbounded vertex $v\neq u$ in~$R$. 
Denote by $Q_{u}=\{v\in V_{U}\ |\ N(v)\subset N(u)\}$ the set of unbounded vertices $v$
of $G$ in $R$, whose neighborhood set is strictly included in the
neighborhood set of $u$. The next lemma follows easily by the
definition of $Q_{u}$.

\begin{lemma}
\label{Qu-1}For every $v\in Q_{u}$, every covering vertex $u^{\ast }$ of $u$ is also a covering vertex of $v$.
Furthermore, $Q_{u}\cap V_{0}(u)=\emptyset $.
\end{lemma}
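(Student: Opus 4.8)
The plan is to treat the two assertions separately, since both are pure neighbourhood-bookkeeping statements that follow from the definitions of $Q_u$, of the covering set $\mathcal{C}(\cdot)$, and of $V_{0}(\cdot)$, together with the basic fact (from the projection representation) that two unbounded vertices are never adjacent. Throughout I fix an unbounded vertex $v\in Q_{u}$, so that $v$ is unbounded and $N(v)\subset N(u)$ strictly. First I would record that $v$ is a genuine vertex of $G\setminus N[u]$: indeed $v\neq u$ (the inclusion $N(v)\subset N(u)$ is strict, so $N(v)\neq N(u)$) and $v\notin N(u)$ (both $u$ and $v$ are unbounded, hence non-adjacent). I will reuse this remark in both parts.

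For the first assertion, let $u^{\ast}$ be any covering vertex of $u$, so $u^{\ast}\in V\setminus N[u]$ and $N(u)\subseteq N(u^{\ast})$. I would verify the two conditions defining a covering vertex of $v$. The inclusion $N(v)\subseteq N(u^{\ast})$ is immediate by transitivity, chaining $N(v)\subset N(u)\subseteq N(u^{\ast})$. It remains to check $u^{\ast}\in V\setminus N[v]$, i.e.~that $u^{\ast}\neq v$ and $u^{\ast}\notin N(v)$. If $u^{\ast}=v$, then $N(u)\subseteq N(u^{\ast})=N(v)$, contradicting $N(v)\subset N(u)$; and if $u^{\ast}\in N(v)$, then $u^{\ast}\in N(v)\subset N(u)$, contradicting $u^{\ast}\notin N(u)$. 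Hence $u^{\ast}$ is a covering vertex of $v$, as required.

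For the second assertion I would argue by contradiction, and the key observation is that $v$ is an \emph{isolated} vertex of $G\setminus N[u]$. Indeed, every neighbour of $v$ lies in $N(v)\subset N(u)\subseteq N[u]$, so $v$ has no neighbour among the vertices of $V\setminus N[u]$; thus the connected component of $G\setminus N[u]$ containing $v$ is the singleton $\{v\}$. Now suppose $v\in V_{0}(u)$. By definition $v$ then belongs to a connected component of $G\setminus N[u]$ that contains at least one covering vertex of $u$; since that component is $\{v\}$, the vertex $v$ itself must be a covering vertex of $u$, i.e.~$N(u)\subseteq N(v)$. This contradicts $N(v)\subset N(u)$, and therefore $Q_{u}\cap V_{0}(u)=\emptyset$.

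I do not expect a genuine obstacle here; the statement is essentially a bookkeeping exercise, which is why it is advertised as an easy consequence of the definition of $Q_{u}$. The only points requiring care are two sanity checks rather than any substantive step: first, confirming that $v\in V\setminus N[u]$ so that it makes sense to speak of $v$'s component in $G\setminus N[u]$ (this rests on the non-adjacency of unbounded vertices and on the strictness of $N(v)\subset N(u)$); and second, reading $V_{0}(u)$ as the union of the vertex sets of its qualifying components, so that the membership $v\in V_{0}(u)$ correctly unpacks into the statement that $v$'s component contains a covering vertex of $u$. Notably, neither part needs the minimality of $G$ nor the connectedness of $V_{0}(u)$ from Lemma~\ref{two-components}: the isolatedness of $v$ already collapses its component to a singleton.
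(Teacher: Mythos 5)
Your proof is correct. The first assertion is argued exactly as in the paper: the inclusion $N(v)\subset N(u)\subseteq N(u^{\ast})$ by transitivity, and $u^{\ast}\notin N(v)$ because $u^{\ast}\notin N(u)$; your additional explicit check that $u^{\ast}\neq v$ is a small point the paper leaves implicit in Definition~\ref{hovering}. For the second assertion, however, you take a genuinely different and more self-contained route. The paper's proof also begins with the observation that $v$ is isolated in $G\setminus N[u]$, but it then brings in a \emph{bounded} covering vertex $u^{\ast}$ (whose existence rests on Lemma~\ref{bounded-hovering} and the canonicity of $R$), notes $v\neq u^{\ast}$ since $v$ is unbounded, and concludes that $v$ and $u^{\ast}$ lie in distinct components of $V_{0}(u)$, contradicting the connectedness of $V_{0}(u)$ from Lemma~\ref{two-components} --- an argument available only because $G$ is a minimum counterexample in \textsc{(Tolerance }$\cap$\textsc{\ Trapezoid)\ }$\setminus$ \textsc{Bounded Tolerance}. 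You instead unpack Definition~\ref{hovering} directly: if $v\in V_{0}(u)$, its singleton component $\{v\}$ must contain a covering vertex of $u$, forcing $N(u)\subseteq N(v)$ and contradicting the strict inclusion $N(v)\subset N(u)$. What your approach buys is generality and independence: the equality $Q_{u}\cap V_{0}(u)=\emptyset$ then holds for an arbitrary tolerance graph and an arbitrary projection representation, with no minimality hypothesis, no canonicity, and no appeal to Lemmas~\ref{two-components} or~\ref{bounded-hovering}; the paper's version is tied to the minimal-counterexample setting, which is admittedly the only setting in which the lemma is invoked.
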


\begin{proof}
Since $u^{\ast}$ is a covering vertex of $u$ by assumption, 
$u^{\ast} \notin N(u)$ and $N(u)\subseteq N(u^{\ast})$ by Definition~\ref{hovering}. 
Let $v\in Q_{u}$. Then, since $N(v)\subset N(u)$ and $u^{\ast }\notin N(u)$,
it follows that $u^{\ast }\notin N(v)$. Furthermore, $N(v)\subset
N(u)\subseteq N(u^{\ast })$, and thus $u^{\ast }$ is a covering vertex of $v$
by Definition~\ref{hovering}. Suppose now that $v\in V_{0}(u)$. Then, $v$ is
an isolated vertex in $G\setminus N[u]$, since $N(v)\subset N(u)$. Thus,
since $v$ is unbounded and $u^{\ast }$ is bounded, i.e.~$v\neq u^{\ast }$,
it follows that $v$ and $u^{\ast }$ do not lie in the same connected
component of $V_{0}(u)$, i.e.~$V_{0}(u)$ is not connected, which is a
contradiction. Thus, $v\notin V_{0}(u)$ for every $v\in Q_{u}$, i.e.~$%
Q_{u}\cap V_{0}(u)=\emptyset $.
\end{proof}

\medskip

Since no two unbounded vertices are adjacent, it follows in particular that 
$T_{v}$ does not intersect~$T_{u}$ in $R_{T}$, for every $v\in Q_{u}$. 
Therefore, we can partition the set $Q_{u}$ into the two 
subsets~$Q_{1}(u)=\{v\in Q_{u}\ |\ T_{v}\ll _{R_{T}}T_{u}\}$ 
and~$Q_{2}(u)=\{v\in Q_{u}\ |\ T_{u}\ll _{R_{T}}T_{v}\}$. 

Consider now a vertex $v\in Q_{1}(u)\subseteq Q_{u}$. 
Note that for every $x\in V_{0}(u)$, $T_{v}$ does not intersect~$T_{x}$ in $R_{T}$, 
since otherwise $v\in V_{0}(u)$, which is a contradiction by Lemma~\ref{Qu-1}. 
Therefore, since in particular $V_{0}(u)$ is connected by Lemma~\ref{two-components}, 
it follows that for every $x\in V_{0}(u)$, either $T_{v}\ll _{R_{T}}T_{x}$ or $T_{x}\ll _{R_{T}}T_{v}$. 
We will now prove that $T_{v}\ll _{R_{T}}T_{x}$ for every $x\in V_{0}(u)$. 
Suppose otherwise that~${T_{x}\ll _{R_{T}}T_{v}}$ for every $x\in V_{0}(u)$. 
Then, since $v\in Q_{1}(u)$, it follows that $T_{x}\ll_{R_{T}}T_{v}\ll _{R_{T}}T_{u}$ for every $x\in V_{0}(u)$. 
Therefore, since $V_{0}(u)$ includes all covering vertices of $u$ by Definition~\ref{hovering}, it follows that 
$T_{x_{0}}\ll_{R_{T}}T_{v}\ll _{R_{T}}T_{u}$ for every covering vertex $x_{0}$ of $u$. 
Thus, since $N(u)\subseteq N(x_{0})$, it follows that $T_{z}$ intersects $T_{v}$ in $R_{T}$ 
for every $z\in N(u)\subseteq N(x_{0})$. Therefore $N(v)\subseteq N(u)$, 
which is a contradiction, since $v\in Q_{1}(u)\subseteq Q_{u}$. 
Therefore $T_{v}\ll _{R_{T}}T_{x}$ for every~$v\in Q_{1}(u)$ and every~$x\in V_{0}(u)$, 
i.e.~$Q_{1}(u)=\{v\in Q_{u}\ |\ T_{v}\ll _{R_{T}}T_{x}$ for every $x\in V_{0}(u)\}$.

Consider now the case where all trapezoids of $V_{0}(u)$ lie to the \emph{right} of $T_{u}$ in $R_{T}$, 
i.e.~$T_{u}\ll _{R_{T}}T_{x}$ for every $x\in V_{0}(u)$. 
Then, by performing vertical axis flipping of $R_{T}$, 
we partition similarly to the above the set $Q_{u}$ into the sets $Q_{1}(u)$ and $Q_{2}(u)$. 
That is, in this (symmetric) case the sets $Q_{1}(u)$ and $Q_{2}(u)$ will 
be~$Q_{1}(u)=\{v\in Q_{u}\ |\ T_{x}\ll _{R_{T}}T_{v}$ for every $x\in V_{0}(u)\}$ 
and~$Q_{2}(u)=\{v\in Q_{u}\ |\ T_{v}\ll _{R_{T}}T_{u}\}$.

In the following we define three conditions on $G$, regarding the unbounded
vertices of~$G$ in~$R$; the third one depends also on the trapezoid
representation $R_{T}$ of $G$. The second condition is weaker than the first
one, while the third condition is weaker than the other two, as it is stated
in Observation~\ref{ass-obs}. Then, we prove Theorem~\ref{no-property-thm},
assuming that the third condition holds.

\begin{condition}
\label{ass1}The projection representation~$R$ of~$G$ has exactly one unbounded vertex.
\end{condition}\vspace{-0.5cm}%
\begin{condition}%
\label{ass2}For every unbounded vertex~$u$ of~$G$ in~$R$, $Q_{u}=\emptyset$;
namely, all unbounded vertices are unbounded-maximal.
\end{condition}\vspace{-0.5cm}%
\begin{condition}%
\label{ass3}For every unbounded vertex~$u$ of~$G$ in~$R$,~$%
Q_{2}(u)=\emptyset $, i.e.~$Q_{u}=Q_{1}(u)$.
\end{condition}\vspace{-0.2cm}

The next observation, which connects the above conditions, 
follows easily.\vspace{-0.2cm}

\begin{observation}
\label{ass-obs}Condition~\ref{ass1} implies Condition~\ref{ass2}, and
Condition~\ref{ass2} implies Condition~\ref{ass3}.
\end{observation}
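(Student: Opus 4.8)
The plan is to establish the two implications of Observation~\ref{ass-obs} separately, each essentially by unpacking the relevant definitions. The statement is a chain of implications among the three conditions, so I would prove it in two short pieces.

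First I would prove that Condition~\ref{ass1} implies Condition~\ref{ass2}. Condition~\ref{ass1} says that $R$ has exactly one unbounded vertex, call it $u$. By Definition~\ref{unbounded-maximal}, $u$ is unbounded-maximal precisely when there is no \emph{other} unbounded vertex $v$ with $N(u)\subset N(v)$; but there is no other unbounded vertex at all, so this holds vacuously. Equivalently, since $Q_{u}=\{v\in V_{U}\ |\ N(v)\subset N(u)\}$ only ever contains unbounded vertices $v\neq u$, and $u$ is the sole element of $V_{U}$, we get $Q_{u}=\emptyset$. Thus every unbounded vertex (namely $u$) is unbounded-maximal and has empty $Q$, which is exactly Condition~\ref{ass2}.

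Next I would prove that Condition~\ref{ass2} implies Condition~\ref{ass3}. Condition~\ref{ass2} states that $Q_{u}=\emptyset$ for every unbounded vertex $u$. Recall that the partition $Q_{u}=Q_{1}(u)\cup Q_{2}(u)$ was defined so that $Q_{1}(u)$ and $Q_{2}(u)$ are disjoint subsets of $Q_{u}$; in particular $Q_{2}(u)\subseteq Q_{u}$. Hence if $Q_{u}=\emptyset$, then $Q_{2}(u)=\emptyset$ as well, which is precisely Condition~\ref{ass3} (and trivially $Q_{u}=Q_{1}(u)$, both being empty). This completes the second implication.

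Since both conditions are read off directly from the definitions, I do not anticipate a genuine obstacle here; the only point requiring a little care is making sure the \emph{definitional} setup is invoked correctly, namely that $Q_{u}$ by construction ranges only over unbounded vertices $v$ with $N(v)\subset N(u)$ (so the sole-unbounded-vertex case forces $Q_{u}=\emptyset$ with no appeal to neighborhoods), and that $Q_{2}(u)$ was defined as a subset of $Q_{u}$ rather than as an independently specified set. Both facts are immediate from the text preceding the conditions, so the observation follows at once.
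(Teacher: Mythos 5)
Your proof is correct and is exactly the routine definition-unpacking the paper intends, since the paper states Observation~\ref{ass-obs} without proof as following easily: with a single unbounded vertex $u$, the set $Q_{u}$ (which by definition contains only unbounded vertices $v\neq u$) is forced to be empty, and since $Q_{1}(u)$ and $Q_{2}(u)$ partition $Q_{u}$, the emptiness of $Q_{u}$ immediately gives $Q_{2}(u)=\emptyset$.
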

 \vspace{-0.2cm}

In the remainder of the section we assume that Condition~\ref{ass3} holds, 
which is weaker than Conditions~\ref{ass1} and~\ref{ass2}. 
We present now the main theorem of this section. 

\begin{theorem}
\label{no-property-thm}Let $G=(V,E)\in$ \textsc{(Tolerance }$\cap $\textsc{%
\ Trapezoid)\ }$\setminus $ \textsc{Bounded Tolerance} with the
smallest number of vertices. Let $R_{T}$ be a trapezoid representation of $G$
and $R$ be a projection representation of $G$ with $k$ unbounded vertices.
Then, assuming that $G$ satisfies Condition~\ref{ass3}, there exists a projection
representation $R^{\ast }$ of $G$ with $k-1$ unbounded vertices.
\end{theorem}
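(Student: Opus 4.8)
The plan is to localize the structure around $u$ by means of a module and then to slide the single line $P_{u}$ into a bounded position, reusing the verification machinery of Theorem~\ref{right-property-thm}; the trapezoid representation $R_{T}$ and Condition~\ref{ass3} supply exactly the structural control that the border property provided in the previous case. First I would normalize the data. Applying Lemma~\ref{min-angle-2} I may assume that the unbounded vertex $u$ realizing $\phi _{u}=\min\{\phi _{x}\mid x\in V_{U}\}$ in $R$ is unbounded-maximal, and after a vertical axis flipping of $R_{T}$ I may assume the non-symmetric situation in which every trapezoid of $V_{0}(u)$ lies to the left of $T_{u}$, so that the partition $Q_{u}=Q_{1}(u)\cup Q_{2}(u)$ is available. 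Throughout I would use that $V_{0}(u)\neq\emptyset$ and is connected (Lemmas~\ref{bounded-hovering} and~\ref{two-components}), that $N(V_{0}(u))=N(u)$ because the bounded covering vertex of $V_{0}(u)$ already meets all of $N(u)$, and that distinct unbounded vertices have distinct neighborhoods (Lemma~\ref{not-equal}). The goal is to produce $R^{\ast }$ in which $u$ is bounded while every other unbounded vertex of $R$ stays unbounded.

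The heart of the argument is the milestone announced in the introduction: I would construct an induced subgraph $G_{0}\ni u$ whose vertex set $M=V(G_{0})\setminus\{u\}$ is a module in $G\setminus\{u\}$ and satisfies $N(u)\subseteq M$, so that $M\cup\{u\}$ is a module in $G$ (every vertex outside $M\cup\{u\}$ is non-adjacent to $u$, since $N(u)\subseteq M$, and relates uniformly to $M$). The set $M$ is read off from the left part of $R_{T}$ around $u$: it must contain $N(u)$, the connected set $V_{0}(u)$, and the nested unbounded vertices of $Q_{1}(u)$, since a vertex $v$ with $N(v)\subset N(u)$ left outside $M$ would be adjacent to $N(v)\subseteq M$ but not to $M\setminus N(v)$ and would thereby destroy the uniformity required of a module. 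This is precisely where Condition~\ref{ass3} enters: a vertex of $Q_{2}(u)$ sitting to the right of $T_{u}$ could not be absorbed into the left block $M$ without breaking the left/right order of $R_{T}$, so the hypothesis $Q_{2}(u)=\emptyset$ is exactly what allows $M$ to be closed up into a module. To certify modularity I would combine the left/right order supplied by $R_{T}$ with Lemma~\ref{intersecting-unbounded} (controlling how an unbounded vertex of $V_{0}(u)$ or $Q_{1}(u)$ can meet the remainder of $G$) and Lemmas~\ref{unbounded-bounded} and~\ref{unbounded-hovering} (locating the relevant endpoints). I expect this step — pinning down the exact $M$ and proving it is a module — to be the main obstacle.

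Once $G_{0}$ is available I would invoke minimality. Since $N(u)\neq\emptyset$ forces a vertex of $N(u)\subseteq M$ to have a neighbor outside $M\cup\{u\}$, or else $M\cup\{u\}$ would be a connected component and $G$ would decompose, one checks that $M\cup\{u\}\subsetneq V$; hence $G_{0}$ is a \emph{proper} induced subgraph of $G$. As both \textsc{Tolerance} and \textsc{Trapezoid} are hereditary, $G_{0}\in\textsc{Tolerance}\cap\textsc{Trapezoid}$, and by the minimality of $G$ it is a \textsc{Bounded Tolerance} graph; it therefore admits a parallelogram representation $R_{0}$ in which $u$ is bounded and adjacent to exactly $N(u)$. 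I would use $R_{0}$ only as a certificate that a valid bounded slot for $u$ exists relative to the parallelograms of $N(u)$, not to re-represent the whole module — this is what keeps the unbounded vertices of $Q_{1}(u)\subseteq M$ unbounded in the end.

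Finally I would transplant this slot into $R$. Because $M$ is a module in $G\setminus\{u\}$, I would apply the squeezing operation of Definition~\ref{squeeze} to compress the parallelograms of $M$ into an arbitrarily thin vertical strip of $R$, placed between $L_{1}$ and $L_{2}$ consistently with the order of $V_{0}(u)$, $Q_{1}(u)$ and $N(u)$ dictated by $R_{T}$, and then slide the line $P_{u}$ into the bounded position prescribed by $R_{0}$, in the spirit of Transformation~\ref{trans3}, leaving every remaining vertex — and in particular all $k-1$ other unbounded vertices of $R$ — exactly as before. Module uniformity guarantees that no edge across the boundary of the strip is created or destroyed, while inside the strip $R_{0}$ realizes $G_{0}$ correctly; that the new bounded $P_{u}$ gains no spurious neighbor is checked exactly as in the case analysis of Lemma~\ref{R'''}, with Condition~\ref{ass3} invoked once more to rule out any unbounded vertex with neighborhood strictly contained in $N(u)$ meeting $P_{u}$ from the right. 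The resulting projection representation $R^{\ast }$ then has $u$ bounded and the same unbounded vertices as $R$ apart from $u$, i.e.\ $k-1$ unbounded vertices, as required.
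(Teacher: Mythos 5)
Your proposal reproduces the announced roadmap (module around $u$, squeezing, then transformations), but the central structural claim is internally inconsistent. You require $M=V(G_{0})\setminus\{u\}$ to be a module in $G\setminus\{u\}$ with $N(u)\subseteq M$, and conclude that $M\cup\{u\}$ is a module in $G$. But then every $x\notin M\cup\{u\}$ is non-adjacent to $u$ (as $N(u)\subseteq M$), so uniformity on $M\cup\{u\}$ forces $x$ to be non-adjacent to \emph{all} of $M\cup\{u\}$; hence $M\cup\{u\}$ is a union of connected components of $G$, and for connected $G$ this gives $M\cup\{u\}=V$, i.e.~$G_{0}=G$, so minimality cannot be invoked. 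Your subsequent claim that some vertex of $N(u)\subseteq M$ has a neighbor outside $M\cup\{u\}$ directly contradicts the module property you just assumed. The paper's module is built quite differently: $G_{0}\setminus\{u\}$ consists \emph{only of bounded} vertices (Observation~\ref{V(G0)}, Lemma~\ref{N-H-C2-Cu-bounded}), it excludes $Q_{u}$ entirely and also a substantial part of $N(u)$, and the common outside neighborhood of the module is $N(X_{1})\cup\bigcup\nolimits_{i=1}^{\infty}H_{i}\cup\bigcup\nolimits_{i=0}^{\infty}H_{i}^{\prime}$ (Lemma~\ref{module-2}), which contains both neighbors of $u$ and non-neighbors of $u$ such as the covering vertex $u^{\ast}$. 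Moreover, Condition~\ref{ass3} is not used to ``absorb'' $Q_{2}(u)$ into a left block: it is used, via $Q_{u}=Q_{1}(u)$ and $T_{q}\ll_{R_{T}}T_{x}\ll_{R_{T}}T_{u}$ for $x\in X_{1}$, to show $N(q)\subseteq N(X_{1})$ for every $q\in Q_{u}$, i.e.~that $Q_{u}$ never touches the module (proof of Lemma~\ref{module-2}). Putting the unbounded vertices of $Q_{1}(u)$ \emph{inside} $M$, as you propose, is exactly what the construction must avoid, not least because squeezing unbounded parallelograms into a thin strip scrambles the slope comparisons that govern their adjacencies with outside parallelograms crossing the strip.

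The transplant step is also underpowered. An all-bounded parallelogram representation $R_{0}$ of $G_{0}$ obtained from minimality does not certify a ``valid bounded slot'' for $u$: after squeezing with respect to $\ell$, every vertex of the module's common outside neighborhood crosses $\ell$, so a bounded $P_{u}$ placed inside the strip would intersect \emph{all} of them --- including non-neighbors of $u$ such as $u^{\ast}\in N(X_{1})\setminus N(u)$ --- creating spurious edges. This is precisely why the paper keeps $u$ unbounded in $R_{\ell}$, why it needs Condition~\ref{cond1} (a representation of $G_{0}^{\prime}$ with $u$ as the \emph{only} unbounded vertex in which $u$ has the \emph{right border property}, established by a separate recursive argument in Lemmas~\ref{R'''-ell-only-N(u)-V0} and~\ref{lem-cond1}, and not by minimality alone) in order to prove Lemma~\ref{R-ell-right-property}, and why only afterwards it bounds $u$ through Transformations~\ref{trans4},~\ref{trans5}, and~\ref{trans6}, whose correctness (Lemmas~\ref{R'-ell},~\ref{R''-ell},~\ref{R'''-ell}) hinges on that border property. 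Your plan invokes ``the case analysis of Lemma~\ref{R'''}'' without having secured the right border property for $u$ in the transplanted representation, so the final verification cannot go through as described.
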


\begin{proof}[Proof (sketch)]
The full proof of the theorem can be found in the Appendix. 
The proof is done constructively, by exploiting the fact that~$G$ can be 
represented by both the projection representation~$R$ and the trapezoid 
representation $R_{T}$. 

If at least one unbounded vertex of $G$ in $R$ has the right or the left border property,
there exists a projection representation $R^{\ast}$ of $G$ with $k-1$ unbounded
vertices by Theorem~\ref{right-property-thm}, where all unbounded vertices of $R^{\ast}$ are 
also unbounded vertices in $R$. 
Suppose that every unbounded vertex of~$G$ in~$R$ has
neither the right nor the left border property in $R$. Let $u$ be the
unbounded vertex in $R$, such that $\phi _{u}=\min \{\phi _{x}\ |\ x\in
V_{U}\}$ in $R$; then, we may assume by Lemma~\ref{min-angle-2} 
that $u$ is an unbounded-maximal vertex of $G$. 
By possibly performing vertical axis flipping of $R_{T}$, we may assume 
w.l.o.g.~that all trapezoids of $V_{0}(u)$ lie to the \emph{left} of $T_{u}$ 
in $R_{T}$, i.e.~$T_{x}\ll _{R_{T}}T_{u}$ for every $x\in V_{0}(u)$.

We now construct a projection representation $R^{\ast}$ of the same graph $G$, 
in which $u$ is replaced by a bounded vertex, while all other $k-1 $ unbounded 
vertices of $R$ remain also unbounded in $R^{\ast}$. 
We start by constructing a subgraph $G_{0}$ of $G$, such that $u\in
V(G_{0})$ and all vertices of $V(G_{0})\setminus \{u\}$ are bounded. Then,
we prove that $G_{0}\setminus \{u\}$ is a module in $G\setminus \{u\}$, by
exploiting the fact that~$G$ can be represented by both $R$ and $R_{T}$.
That is, we prove that $N(v)\setminus V(G_{0})=N(v^{\prime })\setminus
V(G_{0})$ for all vertices $v,v^{\prime }\in V(G_{0})\setminus \{u\}$. Furthermore,
we define in a particular way a line segment $\ell $ with endpoints on the 
lines $L_{1}$ and $L_{2}$, respectively. Then, we
replace the parallelograms of the vertices of~$G_{0}$ in $R$ by a particular
projection representation $R_{0}$ of $G_{0}$, which is $\varepsilon $%
-squeezed with respect to the line segment $\ell $. We denote the resulting
projection representation by $R_{\ell }$. 
Then we prove that $R_{\ell }\setminus \{u\}$ is a projection representation of the graph $G\setminus \{u\}$ 
--~although $R_{\ell }$ is not necessarily a projection representation of $G$~-- 
and that $u$ has the right border property in $R_{\ell }$. 
Then, similarly to Transformations~\ref{trans1},~\ref{trans2}, and~\ref{trans3} in the proof of 
Theorem~\ref{right-property-thm}, we apply three other transformations to $R_{\ell }$
(Transformations~4, 
5, 
and~6, 
respectively),
obtaining thus the projection representations~$R_{\ell}^{\prime}$,~$R_{\ell}^{\prime\prime}$, 
and~$R_{\ell}^{\prime\prime\prime}$, respectively. 
Then we set $R^{\ast }=R_{\ell }^{\prime \prime \prime }$, and we prove that 
$R^{\ast }$ is a projection representation of the graph~$G$ itself. 
Moreover, $R^{\ast }$ has the same unbounded vertices as $R$ except for $u$ (which became bounded in $R^{\ast}$), 
and thus $R^{\ast }$ has $k-1$ unbounded vertices. This completes the proof of Theorem~\ref{no-property-thm}.
\end{proof}

\medskip

Note that, within the proof of Theorem~\ref{no-property-thm} (see the Appendix), 
we mainly use the facts that $u$ is an unbounded-maximal vertex of $G$ and that the slope $\phi _{u}$
of $u$ is the smallest among all unbounded vertices in $R$. 
On the contrary, the assumption that $G$ satisfies Condition~\ref{ass3} 
is used only for a technical part of the proof, namely that $G_{0}\setminus \{u\}$ is a module in $G\setminus \{u\}$ 
(cf.~Lemma~\ref{module-2} in the Appendix).

\subsection{The general case}
\label{general-subsec}

Recall now that \textsc{Tolerance }$\cap $\textsc{\ Cocomparability} $=$ 
\textsc{Tolerance }$\cap $\textsc{\ Trapezoid} (cf.~the discussion before Lemma~\ref{two-components}). 
The next main theorem follows by recursive application of Theorem~\ref{no-property-thm}.

\begin{theorem}
\label{intersection-thm}Let $G=(V,E)\in$ \textsc{(Tolerance }$\cap $%
\textsc{\ Cocomparability)}, $R_{T}$ be a trapezoid representation of $%
G$, and $R$ be a projection representation of $G$. Then, assuming that $G$ satisfies 
one of the Conditions~\ref{ass1},~\ref{ass2}, or~\ref{ass3}, $G$ is a
bounded tolerance graph.
\end{theorem}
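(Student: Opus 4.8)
The plan is to fix a smallest counterexample and then strip off its unbounded vertices one at a time by repeatedly invoking Theorem~\ref{no-property-thm}. By Observation~\ref{ass-obs}, Conditions~\ref{ass1} and~\ref{ass2} each imply Condition~\ref{ass3}, so it suffices to prove the claim under Condition~\ref{ass3}. Moreover, since $G\in$ \textsc{Tolerance }$\cap$\textsc{ Cocomparability} and every such graph is a trapezoid graph~\cite{Fel98}, we have \textsc{Tolerance }$\cap$\textsc{ Cocomparability }$=$\textsc{ Tolerance }$\cap$\textsc{ Trapezoid}; in particular $G$ admits both the projection representation $R$ and the trapezoid representation $R_T$ demanded by Theorem~\ref{no-property-thm}. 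Suppose, towards a contradiction, that $G$ is not a bounded tolerance graph. Then \textsc{(Tolerance }$\cap$\textsc{ Trapezoid)}$\setminus$\textsc{ Bounded Tolerance} is non-empty, and I take $G$ to be a member of smallest order, so that Lemmas~\ref{two-components} and~\ref{not-equal}, and hence Theorem~\ref{no-property-thm}, are at our disposal.

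Now write $R=R_k$ for the given projection representation of $G$, with $k$ unbounded vertices, and keep $R_T$ fixed. If $k=0$, then $R$ consists only of non-degenerate parallelograms, so it is a parallelogram representation of $G$; hence $G$ is a parallelogram graph, i.e.\ a bounded tolerance graph~\cite{BFI95}, contrary to our assumption. If $k\ge 1$, Theorem~\ref{no-property-thm} applied to $G$, $R_k$ and $R_T$ produces a projection representation $R_{k-1}$ of $G$ with exactly $k-1$ unbounded vertices, whose unbounded set is $V_U(R_k)\setminus\{u\}$ for the eliminated vertex $u$ and in which all previously bounded vertices remain bounded. Iterating this step $k$ times (the legitimacy of repeating the theorem is addressed next) yields representations $R_k,R_{k-1},\dots,R_0$ of the same graph $G$, and the final $R_0$ has no unbounded vertex; as in the case $k=0$ this makes $G$ a bounded tolerance graph, the desired contradiction. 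Since \textsc{Tolerance }$\cap$\textsc{ Cocomparability }$=$\textsc{ Tolerance }$\cap$\textsc{ Trapezoid}, this proves that $G$ is bounded tolerance.

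For the iteration to be legitimate, each intermediate representation must again satisfy Condition~\ref{ass3}, and I expect this to be the main point to verify. The observation I would rely on is that passing from $R_i$ to $R_{i-1}$ only removes $u$ from the set of unbounded vertices and leaves every bounded vertex bounded, while the trapezoid representation $R_T$ and each graph-theoretic set $V_0(v)$ (Definition~\ref{hovering}) are untouched. Consequently, for any vertex $v$ that is still unbounded in $R_{i-1}$, the set $Q_v$ of unbounded vertices with neighborhood strictly inside $N(v)$ can only lose elements, and since $R_T$ together with $V_0(v)$ dictates the splitting of $Q_v$ into $Q_1(v)$ and $Q_2(v)$, the set $Q_2(v)$ can only shrink; thus $Q_2(v)=\emptyset$ persists and Condition~\ref{ass3} is maintained at every stage. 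The one place demanding genuine care is therefore the interaction between Condition~\ref{ass3} and minimality: Theorem~\ref{no-property-thm} and the lemmas it depends on (through the connectivity of $V_0(\cdot)$ in Lemma~\ref{two-components}, which needs every proper induced subgraph of $G$ to be bounded tolerance) are stated for a smallest element of \textsc{(Tolerance }$\cap$\textsc{ Trapezoid)}$\setminus$\textsc{ Bounded Tolerance}. Hence the minimal-counterexample framing must be fixed once and for all before the recursion starts, rather than re-established after each elimination, and one must take care that the condition under which we reduce is genuinely available for this smallest counterexample.
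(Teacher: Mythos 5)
Your proposal is correct and follows essentially the same route as the paper: reduce to Condition~\ref{ass3} via Observation~\ref{ass-obs}, fix a minimum-order member of \textsc{(Tolerance }$\cap$\textsc{\ Trapezoid)}\ $\setminus$ \textsc{Bounded Tolerance} once before iterating, apply Theorem~\ref{no-property-thm} repeatedly, and justify each step by noting that the surviving unbounded set only shrinks while $Q_{2}(\cdot)$ depends only on the fixed $R_{T}$ (and the representation-independent sets $V_{0}(\cdot)$), so Condition~\ref{ass3} persists. The only (immaterial) difference is bookkeeping: the paper introduces $k_{0}$, the minimum number of unbounded vertices over all projection representations of $G$, and contradicts its minimality after $k-k_{0}+1$ eliminations, whereas you iterate all the way down to a representation with zero unbounded vertices and contradict the assumption that $G$ is not a parallelogram graph directly.
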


\begin{proof}
Since $G=(V,E)\in$ \textsc{(Tolerance }$\cap $\textsc{\ Cocomparability)}, 
it follows that $G$ is also a trapezoid graph~\cite{Fel98}. That is, 
$G\in$ \textsc{(Tolerance }$\cap $\textsc{\ Trapezoid)}. Suppose that 
$G$ is not a bounded tolerance graph. We can assume w.l.o.g.~that $G$ 
has the smallest number of vertices among the graphs in 
\textsc{(Tolerance }$\cap $\textsc{\ Trapezoid)\ }$\setminus $ 
\textsc{Bounded Tolerance}. 
Let $R_{0}$ be a projection representation of $G$ with the smallest possible
number $k_{0}$ of unbounded vertices. Note that $k_{0}\geq 1$; indeed, if
otherwise $k_{0}=0$, then $G$ is a bounded tolerance graph, which is a
contradiction to the assumption on $G$. Suppose that the projection
representation $R$ of $G$ has $k$ unbounded vertices, where $k\geq k_{0}$.
Then, there exists by Theorem~\ref{no-property-thm} a projection
representation~$R^{\ast }$ of $G$ with $k-1$ unbounded vertices. 
In particular, due to the proof of Theorem~\ref{no-property-thm}, 
$R^{\ast }$ has the same unbounded vertices as $R$, except for~$u$ (which became bounded in $R^{\ast}$). 

If Condition~\ref{ass1} holds for the projection representation $R$ of $G$,
i.e.~if $k=k_{0}=1$, then $R^{\ast }$ has no unbounded vertex, 
i.e.~$R^{\ast }$ is a parallelogram representation of $G$. This is
a contradiction to the assumption that $G$ is not a bounded tolerance 
(i.e.~parallelogram) graph. 
If Condition~\ref{ass2} holds for~$R$, then it also holds for $R^{\ast }$, 
since all unbounded vertices of $R^{\ast }$ are also unbounded vertices of~$R$. 
Similarly, if Condition~\ref{ass3} holds for $R$ and $R_{T}$, 
then it follows directly that it holds also for the pair~$R^{\ast}$ and $R_{T}$ 
of representations of $G$ (since for every unbounded vertex $u$ in $R^{\ast}$, 
the set $Q_{2}(u)$ depends only on the trapezoid representation $R_T$).

Therefore, we can apply iteratively $k-k_{0}+1$ times the constructive proof
of Theorem~\ref{no-property-thm}, obtaining eventually a projection
representation $R^{\ast \ast }$ of $G$ with $k_{0}-1$ unbounded vertices.
This is a contradiction to the minimality of $k_{0}$. Therefore, $G$ is a
bounded tolerance graph. This completes the proof of the theorem.
\end{proof}

\medskip

As an immediate implication of Theorem~\ref{intersection-thm}, 
we prove in the next corollary that Conjecture~\ref{Golumbic-Monma-conjecture} is true 
in particular for every graph $G$ that has no three independent vertices $a,b,c$ such 
that $N(a)\subset N(b)\subset N(c)$, since Condition~\ref{ass2} is guaranteed to be true for every such graph $G$. 
Therefore the conjecture is also true for the complements of triangle-free graphs. 
Thus, since in particular no bipartite graph has a triangle, 
the next corollary immediately implies the correctness of Conjecture~\ref{Golumbic-Monma-conjecture} 
for the complements of trees and of bipartite graphs, 
which were the only known results until now~\cite{Andreae93,Parra94}.

\begin{corollary}
\label{complement-triangle-free-cor}
Let $G=(V,E)\in$ \textsc{(Tolerance }$\cap $\textsc{\ Cocomparability)}. 
Suppose that there do not exist three independent vertices $a, b, c \in V$ 
such that ${N(a)\subset N(b)\subset N(c)}$. 
Then, $G$ is a bounded tolerance graph.
\end{corollary}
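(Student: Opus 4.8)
The plan is to reduce the corollary to Theorem~\ref{intersection-thm} by proving that the stated hypothesis on $G$ forces Condition~\ref{ass2}; since Condition~\ref{ass2} implies Condition~\ref{ass3} (Observation~\ref{ass-obs}), the conclusion that $G$ is a bounded tolerance graph then follows directly from Theorem~\ref{intersection-thm}. Concretely, I would fix a canonical projection representation $R$ of $G$ that in addition has the fewest possible unbounded vertices among all projection representations of $G$ (canonicity costs nothing here, since making a non-inevitable vertex bounded never increases the number of unbounded vertices). It then suffices to show that no unbounded vertex $u$ of $G$ in $R$ has $Q_{u}\neq\emptyset$.

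So suppose, for contradiction, that Condition~\ref{ass2} fails, i.e.~there are unbounded vertices $u,v$ in $R$ with $N(v)\subset N(u)$. First I would arrange that $u$ is unbounded-maximal: if instead some unbounded $w$ had $N(u)\subset N(w)$, then $v,u,w$ would be three pairwise non-adjacent vertices (no two unbounded vertices are adjacent) with $N(v)\subset N(u)\subset N(w)$, which is exactly the forbidden configuration. Now apply Lemma~\ref{bounded-hovering} to obtain a bounded covering vertex $u^{\ast}$ of $u$, so that $u^{\ast}\notin N[u]$ and $N(u)\subseteq N(u^{\ast})$; by Lemma~\ref{Qu-1} the vertex $u^{\ast}$ also covers $v$, whence $u^{\ast}\notin N[v]$ and $N(v)\subseteq N(u^{\ast})$. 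Thus $u,v,u^{\ast}$ are pairwise non-adjacent and $N(v)\subset N(u)\subseteq N(u^{\ast})$. If the second inclusion is strict, then $\{v,u,u^{\ast}\}$ is a forbidden triple and we are done; more generally, if any covering vertex of $u$ had neighborhood strictly larger than $N(u)$, the same argument (again using Lemma~\ref{Qu-1} for independence from $v$) would produce a forbidden triple.

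The step I expect to be the main obstacle is the degenerate case $N(u)=N(u^{\ast})$, in which $u$ and its bounded covering vertex are \emph{false twins}. By the previous remark we may assume in this case that \emph{every} covering vertex of $u$ has neighborhood exactly $N(u)$; such a vertex is then isolated in $G\setminus N[u]$, so every $x\in V_{0}(u)$ is in fact a covering vertex adjacent to all of $N(u)$. Hence for each $w\in N(u)$ the parallelograms $P_{w}$ and $P_{x}$ intersect in $R$, so that neither $P_{w}\ll_{R}P_{x}$ nor $P_{x}\ll_{R}P_{w}$, and by Definition~\ref{right-left-property-def} this means $u$ has both the right and the left border property in $R$. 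I would then invoke the elimination construction of Theorem~\ref{right-property-thm} to replace $R$ by a projection representation of $G$ with one fewer unbounded vertex, contradicting the minimality of $R$. Consequently the degenerate case cannot survive in a representation minimizing the number of unbounded vertices: there every unbounded vertex is unbounded-maximal, Condition~\ref{ass2} holds, and Theorem~\ref{intersection-thm} finishes the proof. Making this last reduction fully rigorous for an arbitrary (not necessarily vertex-minimal) $G$ --~rather than only for a smallest counterexample, the setting in which Theorem~\ref{right-property-thm} is stated~-- is the delicate point of the argument.
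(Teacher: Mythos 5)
Your main line coincides with the paper's own proof. The paper argues exactly as you do in the generic case: assume Condition~\ref{ass2} fails for a canonical representation, take unbounded vertices $u,v$ with $N(v)\subset N(u)$, obtain a covering vertex $u^{\ast}$ of $u$ from Lemma~\ref{bounded-hovering} (your appeal to the first part of Lemma~\ref{Qu-1} for $u^{\ast}\notin N(v)$ is fine, as that part is purely set-theoretic), exhibit the independent triple $v,u,u^{\ast}$ with nested neighborhoods, and finish with Theorem~\ref{intersection-thm}. The one difference is telling: the paper simply writes $N(u)\subset N(u^{\ast})$ with \emph{strict} inclusion, although Definition~\ref{hovering} only guarantees $N(u)\subseteq N(u^{\ast})$, and it never discusses the degenerate possibility $N(u)=N(u^{\ast})$. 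So the ``false twin'' case you isolate is a genuine observation about the argument rather than a detour; the paper's proof is shorter only because it passes over it silently.

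That said, your patch for the degenerate case does not go through as written, and the flaw is exactly where you flag it. Theorem~\ref{right-property-thm} is stated and proved only for a graph with the smallest number of vertices in \textsc{(Tolerance }$\cap$\textsc{\ Trapezoid)\ }$\setminus$ \textsc{Bounded Tolerance}; its proof repeatedly uses that $V_{0}(u)$ is connected, which comes from Lemma~\ref{two-components} and hence from vertex-minimality. In your degenerate case this is precisely what fails structurally: if $u$ has two or more covering vertices, each with neighborhood equal to $N(u)$, then each is isolated in $G\setminus N[u]$, so $V_{0}(u)$ is a disjoint union of singletons, i.e.~disconnected, and even the internal construction of Theorem~\ref{right-property-thm} (e.g.~Lemma~\ref{R'}, which invokes connectedness of $V_{0}(u)$) is unavailable. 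Nor can you rescue the argument by restarting with a smallest counterexample: the hypothesis that no independent triple has nested neighborhoods is not hereditary --- deleting a vertex can create new strict inclusions among neighborhoods --- so the minimality device cannot be wrapped around your reduction. In short, your proof of the main case is correct and matches the paper; the degenerate case remains open in your write-up, and closing it requires either a justification (absent from the paper as well) that the covering vertex can always be chosen with strictly larger neighborhood, or an elimination argument for such a vertex $u$ that is valid without the vertex-minimality assumption.
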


\begin{proof}
Due to Theorem~\ref{intersection-thm}, it suffices to prove that Condition~\ref{ass2} is true for~$G$, 
with respect to \emph{any possible} canonical (projection) representation $R$ 
and \emph{any} trapezoid representation $R_{T}$ of $G$. 
Let $R$ be a canonical representation of $G$. Suppose that Condition~\ref{ass2} is not true for~$G$. 
Then, there exists an unbounded vertex~${u\in V_{U}}$ such that~${Q_{u}\neq \emptyset}$. 
That is, there exists by the definition of the set $Q_{u}$ an unbounded 
vertex~${v\in V_{U}\setminus \{u\}}$ such that~${N(v)\subset N(u)}$. 
Note that~${v\notin N(u)}$, since no two unbounded vertices are adjacent in~$G$. 
Furthermore, there exists at least one covering vertex~$u^{\ast}$ of~$u$ in~$G$, 
since $V_{0}(u)\neq \emptyset$ (cf.~Lemma~\ref{bounded-hovering}), 
and thus~$u^{\ast}\notin N(u)$ and~$N(u)\subset N(u^{\ast})$. 
Therefore, since~$N(v)\subset N(u)$ and~$u^{\ast}\notin N(u)$, it follows that also~$u^{\ast}\notin N(v)$, 
i.e.~the vertices~$v,u,u^{\ast}$ are independent. 
Moreover~$N(v)\subset N(u)\subset N(u^{\ast})$, which comes in contradiction to the assumption of the lemma. 
Therefore Condition~\ref{ass2} holds for~$G$, and thus $G$ is a bounded tolerance graph by Theorem~\ref{intersection-thm}.
\end{proof}

\medskip

We now formally define the notion of a \emph{minimally unbounded tolerance graph}.

\begin{definition}
\label{minimally-unbounded-def}
Let $G\in$ \textsc{Tolerance }$\setminus $ \textsc{Bounded Tolerance}. 
If $G\setminus \{u\}$ is a bounded tolerance graph for every vertex of $G$, 
then $G$ is a \emph{minimally unbounded tolerance} graph.
\end{definition}

Assume now that Conjecture~\ref{Golumbic-Monma-conjecture} is not true, 
and let $G$ be a counterexample with the smallest number of vertices. 
Then, in particular, $G$ is a tolerance but not a bounded tolerance graph; 
furthermore, since $G$ has the smallest number of vertices, the removal of 
any vertex of $G$ makes it a bounded tolerance graph. 
That is, $G$ is a minimally unbounded tolerance graph by Definition~\ref{minimally-unbounded-def}. 
Now, if our Conjecture~\ref{minimal-conjecture} is true (see Section~\ref{sec:intro}), 
then $G$ has a projection representation~$R$ with exactly one unbounded vertex, 
i.e.~$R$ satisfies Condition~\ref{ass1}. 
Thus, $G$ is a bounded tolerance graph by Theorem~\ref{intersection-thm}, which is a contradiction, 
since $G$ has been assumed to be a counterexample to Conjecture~\ref{Golumbic-Monma-conjecture}.
Thus, we obtain the following theorem.

\begin{theorem}
\label{conjectures-reduction-thm}Conjecture~\ref{minimal-conjecture} implies Conjecture~\ref{Golumbic-Monma-conjecture}.
\end{theorem}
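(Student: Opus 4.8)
The plan is to argue by contradiction through a minimal counterexample, letting Theorem~\ref{intersection-thm} carry all of the geometric load. Suppose Conjecture~\ref{Golumbic-Monma-conjecture} fails. Since the inclusion \textsc{Bounded Tolerance}~$\subseteq$~(\textsc{Tolerance}~$\cap$~\textsc{Cocomparability}) is immediate, any counterexample is a graph lying in (\textsc{Tolerance}~$\cap$~\textsc{Cocomparability})~$\setminus$~\textsc{Bounded Tolerance}. Among all such counterexamples I would fix one, call it $G$, with the smallest number of vertices.

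The first real step is to verify that this minimal $G$ is in fact a \emph{minimally unbounded tolerance} graph in the sense of Definition~\ref{minimally-unbounded-def}. By construction $G\in$~\textsc{Tolerance}~$\setminus$~\textsc{Bounded Tolerance}, so it only remains to show that $G\setminus\{u\}$ is bounded tolerance for every vertex $u$. The crucial observation is that both tolerance graphs and cocomparability graphs are closed under taking induced subgraphs (one simply deletes the interval and tolerance, respectively the vertex of the transitive orientation of the complement, associated with $u$). Hence $G\setminus\{u\}$ again lies in \textsc{Tolerance}~$\cap$~\textsc{Cocomparability}. If some $G\setminus\{u\}$ were \emph{not} bounded tolerance, it would be a strictly smaller counterexample to Conjecture~\ref{Golumbic-Monma-conjecture}, contradicting the minimality of $G$. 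Therefore every $G\setminus\{u\}$ is bounded tolerance, and $G$ is minimally unbounded tolerance.

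Now I would invoke the hypothesis that Conjecture~\ref{minimal-conjecture} holds: applied to $G$ it yields a tolerance representation of $G$ with exactly one unbounded vertex. Passing to the associated projection representation $R$ through the parallelepiped model recalled in Section~\ref{projection-sec}, this representation has exactly one unbounded vertex, i.e.\ $R$ satisfies Condition~\ref{ass1}. Since $G\in$~\textsc{Tolerance}~$\cap$~\textsc{Cocomparability} and Condition~\ref{ass1} holds (a trapezoid representation $R_{T}$ of $G$ exists because $G$ is trapezoid, and Condition~\ref{ass1} itself does not depend on $R_{T}$), Theorem~\ref{intersection-thm} forces $G$ to be a bounded tolerance graph. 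This contradicts $G\notin$~\textsc{Bounded Tolerance}, so no counterexample exists and Conjecture~\ref{Golumbic-Monma-conjecture} follows.

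I do not expect a serious obstacle here: once Theorem~\ref{intersection-thm} is in hand, the statement is a clean minimal-counterexample reduction that simply chains the assumed Conjecture~\ref{minimal-conjecture} with that theorem. The only point that deserves explicit care is the closure-under-induced-subgraphs step, which is exactly what guarantees that deleting a vertex keeps us inside \textsc{Tolerance}~$\cap$~\textsc{Cocomparability} and thus lets the minimality of $G$ be applied to conclude that $G$ is minimally unbounded tolerance.
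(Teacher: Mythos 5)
Your proposal is correct and follows essentially the same route as the paper: take a minimum counterexample $G$ to Conjecture~\ref{Golumbic-Monma-conjecture}, observe it is minimally unbounded tolerance, apply Conjecture~\ref{minimal-conjecture} to get a representation satisfying Condition~\ref{ass1}, and conclude via Theorem~\ref{intersection-thm} that $G$ is bounded tolerance, a contradiction. Your explicit verification that \textsc{Tolerance} and \textsc{Cocomparability} are closed under induced subgraphs fills in a detail the paper leaves implicit, but it is the same argument.
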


Therefore, in order to prove Conjecture~\ref{Golumbic-Monma-conjecture}, 
it suffices to prove Conjecture~\ref{minimal-conjecture}. 
Moreover, to the best of our knowledge, all known examples of minimally unbounded 
tolerance graphs have a tolerance representation with exactly one unbounded vertex; 
for such examples, see e.g.~\cite{GolTol04}.

\section{Concluding remarks and open problems\label{conclusion}}

In this article we dealt with the over 25 years old conjecture of~\cite{GolumbicMonma84}, 
which states that if a graph~$G$ is both tolerance and cocomparability, then it is also a bounded tolerance graph. 
Our main result was that this conjecture is true for every graph~$G$ that admits a tolerance representation 
with exactly one unbounded vertex. 
Our proofs are constructive, in the sense that, given a tolerance representation~$R$ of a graph~$G$, 
we transform~$R$ into a bounded tolerance representation~$R^{\ast}$ of~$G$. 
Furthermore, we conjectured that any \emph{minimal} graph~$G$ that is a tolerance but not a bounded tolerance graph, 
has a tolerance representation with exactly one unbounded vertex. 
Our results imply the non-trivial result that, in order to prove the conjecture of~\cite{GolumbicMonma84}, 
it suffices to prove~our conjecture. 
An interesting problem for further research that we leave open is to prove this new~conjecture 
(which, in contrast to one stated in~\cite{GolumbicMonma84}, does not concern 
any other class of graphs, such as cocomparability or trapezoid graphs). 
Since cocomparability graphs can be efficiently recognized~\cite{Spinrad03}, 
a positive answer to this conjecture (and thus also to the conjecture of~\cite{GolumbicMonma84}) 
would enable us to efficiently distinguish between tolerance and bounded tolerance graphs, 
although it is NP-complete to recognize each of these graph classes separately~\cite{MSZ-SICOMP-11}.

{\small 
\bibliographystyle{abbrv}
\bibliography{ref-intersection}
}

\newpage

\section*{Appendix: Proof of Theorem~\protect\ref{no-property-thm}}

\begin{proof}
First, we may assume w.l.o.g.~by the minimality of the number of vertices of 
$G$ that $G$ is connected. If $R$ is not a canonical representation of $G$,
then there exists a projection representation of $G$ with $k-1$ unbounded
vertices by Definition~\ref{def7}. Suppose for the sequel of the proof that $%
R$ is a canonical representation of $G$. If at least one unbounded vertex of 
$G$ in $R$ has the right or the left border property, there exists a
projection representation of $G$ with $k-1$ unbounded vertices by 
Theorem~\ref{right-property-thm}. Suppose in 
the sequel that every unbounded vertex of $G$ in $R$ has neither the right
nor the left border property in $R$. Let $u$ be the unbounded vertex in $R$,
such that $\phi _{u}=\min \{\phi _{x}\ |\ x\in V_{U}\}$ in $R$. The proof is
done constructively, by exploiting the fact that $G$ can be represented by
both the projection representation $R$ and the trapezoid representation $%
R_{T}$. Namely, we will construct a projection representation $R^{\ast }$ of
the same graph $G$, in which $u$ is replaced by a bounded vertex, while all
other $k-1$ unbounded vertices of $R$ remain also unbounded in $R^{\ast }$.

By Lemma~\ref{bounded-hovering}, there exists at least one bounded covering
vertex $u^{\ast }$ of $u$, such that $P_{u^{\ast }}$ intersects $P_{u}$ in $%
R $ and $\phi _{u^{\ast }}<\phi _{u}$. Therefore, $V_{0}(u)\neq \emptyset $,
and thus $V_{0}(u)$ is connected by Lemma~\ref{two-components}. Since $%
V_{0}(u)$ is connected, and since $u$ is not adjacent to any vertex of $%
V_{0}(u)$, it follows that either all trapezoids of $V_{0}(u)$ lie to the
left, or all to the right of $T_{u}$ in $R_{T}$. By possibly performing
vertical axis flipping of $R_{T}$, we may assume w.l.o.g.~that all
trapezoids of $V_{0}(u)$ lie to the left of $T_{u}$ in $R_{T}$, i.e.~$%
T_{x}\ll _{R_{T}}T_{u}$ for every $x\in V_{0}(u)$. Moreover, we may assume
w.l.o.g.~by Lemma~\ref{min-angle-2} that $u$ is an unbounded-maximal vertex
of $G$. Recall by Lemma~\ref{not-equal} that $N(v_{1})\neq N(v_{2})$ for any
two unbounded vertices $v_{1},v_{2}$. Denote now by $Q_{u}=\{v\in V_{U}\ |\
N(v)\subset N(u)\}$. Furthermore, since we assumed that Condition~\ref{ass3}
holds, $Q_{u}=Q_{1}(u)=\{v\in Q_{u}\ |\ T_{v}\ll _{R_{T}}T_{x}$ for every $%
x\in V_{0}(u)\}$.

\subsection*{The vertex sets $D_{1}$, $D_{2}$, $S_{2}$, 
and $\widetilde{X}_{1}$ and the vertex $x_{2}$}

Define the sets $D_{1}(u,R)=\{v\in V_{0}(u)\ |\ P_{v}\ll _{R}P_{u}\}$, $%
D_{2}(u,R)=\{v\in V_{0}(u)\ |\ P_{u}\ll _{R}P_{v}\}$, and $S_{2}(u,R)=\{v\in
V_{0}(u)\ |\ P_{v}\not\ll _{R}P_{u}\}$. Note that $V_{0}(u)=D_{1}(u,R)\cup
S_{2}(u,R)$ and that $D_{2}(u,R)\subseteq S_{2}(u,R)$. For simplicity
reasons, we will refer in the following to the sets $D_{1}(u,R)$, $%
D_{2}(u,R) $, and $S_{2}(u,R)$ just by $D_{1}$, $D_{2}$, and $S_{2}$,
respectively. Note that $Q_{u}\cap D_{1}=\emptyset $, $Q_{u}\cap
D_{2}=\emptyset $, and $Q_{u}\cap S_{2}=\emptyset $, since $%
D_{1},D_{2},S_{2}\subseteq V_{0}(u)$ and by Lemma~\ref{Qu-1}.

Since $u$ does not have the right border property in $R$, there exist by
Definition~\ref{right-left-property-def} vertices $w\in N(u)$ and $x\in
V_{0}(u)$, such that $P_{w}\ll _{R}P_{x}$. Therefore, in particular, $%
r(w)<_{R}l(x)$. Since $u$ is unbounded in $R$, and since $w\in N(u)$, Lemma~%
\ref{unbounded-bounded} implies that $r(u)<_{R}r(w)$, and thus $%
r(u)<_{R}l(x) $. For the sake of contradiction, suppose that $L(x)<_{R}R(u)$%
. Then, $P_{x}$ intersects $P_{u}$ in $R$ and $\phi _{x}>\phi _{u}$. Thus, $%
x $ is unbounded in $R$, since otherwise $x\in N(u)$, which is a
contradiction. Furthermore, $N(x)\subseteq N(u)$ by Lemma~\ref%
{intersecting-unbounded}, and thus $x\in Q_{u}$, which is a contradiction by
Lemma~\ref{Qu-1}, since $x\in V_{0}(u)$. Therefore, $R(u)<_{R}L(x)$, and
thus $P_{u}\ll _{R}P_{x}$, since also $r(u)<_{R}l(x)$. That is, $x\in D_{2}$%
. Since $u$ has not the left border property in $R$, there exist vertices $%
w^{\prime }\in N(u)$ and $y\in V_{0}(u)$, such that $P_{y}\ll
_{R}P_{w^{\prime }}$. Therefore, in the reverse projection representation $%
\widehat{R}$ of $R$, $P_{w^{\prime }}\ll _{\widehat{R}}P_{y}$. Then,
applying the same arguments as above, it follows that $P_{u}\ll _{\widehat{R}%
}P_{y}$, and thus $P_{y}\ll _{R}P_{u}$. That is, $y\in D_{1}$. Summarizing,
both sets $D_{1}$ and $D_{2}\subseteq S_{2}$ are not empty.

Among the vertices of $D_{1}\cup D_{2}$ let $x_{1}$ be such a vertex, that
for every other vertex $x^{\prime }\in D_{1}\cup D_{2}\setminus \{x_{1}\}$,
either $T_{x^{\prime }}$ intersects $T_{x_{1}}$ in the trapezoid
representation $R_{T}$, or $T_{x_{1}}\ll _{R_{T}}T_{x^{\prime }}$. That is,
there exists no vertex $x^{\prime }$ in $D_{1}\cup D_{2}$, whose trapezoid
lies to the left of $T_{x_{1}}$ in $R_{T}$. By possibly building the reverse
project representation $\widehat{R}$ of $R$, we may assume w.l.o.g.~that $%
P_{x_{1}}\ll _{R}P_{u}$, i.e.~$x_{1}\in D_{1}$.

As already mentioned above, since $u$ does not have the right border
property in $R$, there exist vertices $w\in N(u)$ and $x\in D_{2}\subseteq
V_{0}(u)$, such that $P_{w}\ll _{R}P_{x}$. Among the vertices $x\in D_{2}$,
for which $P_{w}\ll _{R}P_{x}$, let $x_{2}$ be such a vertex, that for every
other vertex $x^{\prime }\in D_{2}\setminus \{x_{2}\}$ with $P_{w}\ll
_{R}P_{x^{\prime }}$, either $T_{x^{\prime }}$ intersects $T_{x_{2}}$ in the
trapezoid representation $R_{T}$, or $T_{x_{2}}\ll _{R_{T}}T_{x^{\prime }}$.
That is, there exists no vertex $x^{\prime }$ in $D_{2}$ with $P_{w}\ll
_{R}P_{x^{\prime }}$, whose trapezoid $T_{x^{\prime }}$ lies to the left of $%
T_{x_{2}}$ in $R_{T}$.

Furthermore, $x_{1}x_{2}\notin E$, since $x_{1}\in D_{1}$ and $x_{2}\in
D_{2} $, i.e.~$P_{x_{1}}\ll _{R}P_{u}\ll _{R}P_{x_{2}}$. Therefore, since $%
T_{x}\ll _{R_{T}}T_{u}$ for every $x\in V_{0}(u)$, it follows by the
definition of $x_{1}$ that $T_{x_{1}}\ll _{R_{T}}T_{x_{2}}\ll _{R_{T}}T_{u}$%
. Thus, since $wu\in E$ and $wx_{2}\notin E$, it follows that also $%
T_{x_{1}}\ll _{R_{T}}T_{x_{2}}\ll _{R_{T}}T_{w}$, i.e.~$wx_{1}\notin E$.
That is, $x_{1}$, $x_{2}$, and $w$ are three independent vertices in $G$.

We now construct iteratively the vertex set $\widetilde{X}_{1}\subseteq
D_{1} $ from the vertex $x_{1}$, as follows. Initially, we set $\widetilde{X}%
_{1}=\{x_{1}\}$. If $N(w)\cap N(\widetilde{X}_{1})\subset N(\widetilde{X}%
_{1})$, then set $\widetilde{X}_{1}$ to be equal to $\widetilde{X}_{1}\cup N(%
\widetilde{X}_{1})\setminus N(w)$. Iterate, until finally $N(w)\cap N(%
\widetilde{X}_{1})=N(\widetilde{X}_{1})$. This process terminates, since
every time we strictly augment the current set $\widetilde{X}_{1}$.
Furthermore, at the end of this procedure, $N(\widetilde{X}_{1})\neq
\emptyset $, since otherwise $G$ is not connected, which is a contradiction.
Moreover, the vertices of $\widetilde{X}_{1}$ at every step of this
procedure induce a connected subgraph of $G$.

\begin{lemma}
\label{X1}For the constructed set $\widetilde{X}_{1}$, $\widetilde{X}%
_{1}\subseteq D_{1}$. Furthermore, $P_{x}\ll _{R}P_{w}$ and $T_{x}\ll
_{R_{T}}T_{x_{2}}$ for every $x\in \widetilde{X}_{1}$.
\end{lemma}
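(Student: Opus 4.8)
The plan is to prove all three assertions simultaneously by induction on the steps of the iterative construction of $\widetilde{X}_1$, maintaining the invariant that every vertex currently in $\widetilde{X}_1$ lies in $D_1$ and satisfies both $P_x\ll_R P_w$ and $T_x\ll_{R_T}T_{x_2}$. For the base case $\widetilde{X}_1=\{x_1\}$, the memberships $x_1\in D_1$ and $T_{x_1}\ll_{R_T}T_{x_2}$ are already established in the discussion preceding the lemma, so only $P_{x_1}\ll_R P_w$ remains. Since $x_1\in D_1$ gives $P_{x_1}\ll_R P_u$, I would compare endpoints using Lemma~\ref{unbounded-bounded} on the neighbor $w$ of $u$ to exclude $P_w\ll_R P_{x_1}$, and exclude the possibility that $P_{x_1}$ intersects $P_w$ by observing that this would force $x_1$ unbounded with $\phi_w<\phi_{x_1}$, whence $N(x_1)\subseteq N(w)$ by Lemma~\ref{intersecting-unbounded}; the connectivity of $V_0(u)$ (Lemma~\ref{two-components}) together with $wx_1\notin E$ then yields a contradiction.

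For the inductive step, let $y\in N(\widetilde{X}_1)\setminus N(w)$ be a newly added vertex, adjacent to some $x\in\widetilde{X}_1$, so that $P_x\ll_R P_w$ and $T_x\ll_{R_T}T_{x_2}$ hold by the induction hypothesis. First I would settle the two ordering statements cleanly. Since $T_y$ intersects $T_x\ll_{R_T}T_{x_2}\ll_{R_T}T_w$ and $yw\notin E$, transitivity of $\ll_{R_T}$ rules out $T_w\ll_{R_T}T_y$, giving $T_y\ll_{R_T}T_w$. For the projection order, $P_w\ll_R P_y$ is impossible because together with $P_x\ll_R P_w$ it would force $P_x\ll_R P_y$, contradicting that $P_y$ intersects $P_x$; and $P_y$ cannot intersect $P_w$, since $w$ is bounded and $yw\notin E$ would then make $y$ unbounded with $\phi_w<\phi_y$, so $N(y)\subseteq N(w)$ by Lemma~\ref{intersecting-unbounded}, forcing $x\in N(y)\subseteq N(w)$ and contradicting $P_x\ll_R P_w$. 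Hence $P_y\ll_R P_w$.

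The core of the step is to upgrade this to $y\in D_1$, i.e.\ to $P_y\ll_R P_u$. From $P_y\ll_R P_w$ and Lemma~\ref{unbounded-bounded} applied to $w\in N(u)$ one gets $R(y)<_R L(w)<_R L(u)=R(u)$, so the upper endpoint is already correct; it remains to show $r(y)<_R r(u)$. I would assume the contrary, $r(u)<_R r(y)$, which combined with $R(y)<_R R(u)$ makes the line $P_u$ cross $P_y$, so $P_y$ intersects $P_u$, and derive a contradiction in each of three cases. If $y$ is unbounded, then $\phi_u<\phi_y$ (minimality of $\phi_u$) and Lemma~\ref{intersecting-unbounded} gives $N(y)\subseteq N(u)$, so $x\in N(u)$, contradicting $x\in V_0(u)$. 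If $y$ is bounded with $\phi_y<\phi_u$, then $N(u)\subseteq N(y)$ by the same lemma, whence $w\in N(y)$, contradicting $y\notin N(w)$. If $y$ is bounded with $\phi_y>\phi_u$, then $yu\in E$; transitivity of $\ll_{R_T}$ (using $T_x\ll_{R_T}T_{x_2}\ll_{R_T}T_u$ and that $T_y$ meets both $T_x$ and $T_u$) forces $T_y$ to meet $T_{x_2}$, i.e.\ $yx_2\in E$, while $P_y\ll_R P_w\ll_R P_{x_2}$ gives $P_y\ll_R P_{x_2}$ and hence $yx_2\notin E$, a contradiction. Thus $P_y\ll_R P_u$; since this also gives $y\notin N(u)$ and $y\in N(x)$ with $x\in V_0(u)$, we conclude $y\in V_0(u)$ and therefore $y\in D_1$. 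Finally $T_y\ll_{R_T}T_{x_2}$ follows at once: $T_y$ meets $T_x\ll_{R_T}T_{x_2}$ rules out $T_{x_2}\ll_{R_T}T_y$, and $T_y$ meeting $T_{x_2}$ would mean $yx_2\in E$, already excluded by $P_y\ll_R P_{x_2}$.

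The main obstacle is precisely the third case above, where both representations must be used together: ruling out $y\in N(u)$ is what genuinely requires the trapezoid order $T_{x_2}\ll_{R_T}T_u$ (hence Condition~\ref{ass3} through the placement of $V_0(u)$ to the left of $T_u$), the leftmost choice of $x_2$, the relation $P_w\ll_R P_{x_2}$, and the exclusion $y\notin N(w)$ built into the construction of $\widetilde{X}_1$. The endpoint bookkeeping that converts $P_y\ll_R P_w$ into the crossing of $P_y$ and $P_u$ is routine but must be carried out carefully, and the analogous meeting-exclusion in the base case is the other delicate point, resolved through the connectivity of $V_0(u)$.
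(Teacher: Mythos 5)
Your overall strategy---induction along the iterative construction of $\widetilde{X}_{1}$, carrying all three properties as a joint invariant---is exactly the paper's, and your inductive step is sound: the exclusions of $P_{w}\ll _{R}P_{y}$ and of $P_{y}$ intersecting $P_{w}$ coincide with the paper's, and your three-case analysis for $P_{y}\ll _{R}P_{u}$ is a valid rearrangement of it, with case (a) even resolved more directly than in the paper ($N(y)\subseteq N(u)$ forces $x\in N(u)$, contradicting $x\in V_{0}(u)$, so Lemma~\ref{Qu-1} is not needed there) and case (c) correctly deploying the two-representation sandwich $T_{x}\ll _{R_{T}}T_{x_{2}}\ll _{R_{T}}T_{u}$ against $P_{y}\ll _{R}P_{w}\ll _{R}P_{x_{2}}$.

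The base case, however, contains a genuine gap, precisely in the sub-case where $P_{x_{1}}$ intersects $P_{w}$. You correctly derive that $x_{1}$ is unbounded with $\phi _{x_{1}}>\phi _{w}$ and that $N(x_{1})\subseteq N(w)$ by Lemma~\ref{intersecting-unbounded}, but your claimed finish---that connectivity of $V_{0}(u)$ together with $wx_{1}\notin E$ ``yields a contradiction''---does not follow. Connectivity of $V_{0}(u)$ only gives $x_{1}$ a neighbor $z\in V_{0}(u)$, whence $z\in N(w)$; this contradicts nothing, since vertices of $V_{0}(u)$ may perfectly well be adjacent to $w$. The paper closes this case with the same two-representation chain you used in your case (c), but one link longer: from $N(x_{1})\subseteq N(w)$ and $T_{x_{1}}\ll _{R_{T}}T_{x_{2}}\ll _{R_{T}}T_{w}$, every $z\in N(x_{1})$ has $T_{z}$ meeting $T_{x_{2}}$, so $N(x_{1})\subseteq N(x_{2})$; then from $P_{x_{1}}\ll _{R}P_{u}\ll _{R}P_{x_{2}}$, every such $z$ has $P_{z}$ meeting $P_{u}$, and $z$ is bounded with $\phi _{z}>\phi _{x_{1}}>\phi _{w}>\phi _{u}$ (being a neighbor of the unbounded $x_{1}$), so $z\in N(u)$; hence $N(x_{1})\subseteq N(u)$, i.e.\ $x_{1}\in Q_{u}$, contradicting $Q_{u}\cap V_{0}(u)=\emptyset $ of Lemma~\ref{Qu-1}---which is the point where connectivity of $V_{0}(u)$ actually enters, inside that lemma's proof. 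Note also that the per-vertex trick from your inductive step (some $x\in N(y)\subseteq N(w)$ with $xw\notin E$) is unavailable in the base case, since no specific neighbor of $x_{1}$ is known to avoid $N(w)$; the longer chain through $x_{2}$ and $u$ is genuinely needed there.
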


\begin{proof}
The proof of the lemma is done by induction on $|\widetilde{X}_{1}|$.
Suppose first that $|\widetilde{X}_{1}|=1$, i.e.~$\widetilde{X}%
_{1}=\{x_{1}\} $. Then, $\{x_{1}\}\subseteq D_{1}$ and $T_{x_{1}}\ll
_{R_{T}}T_{x_{2}}$ by definition of $x_{1}$. We will now prove that also $%
P_{x_{1}}\ll _{R}P_{w}$. Otherwise, suppose first that $P_{w}\ll
_{R}P_{x_{1}}$. Then, since $x_{1}\in D_{1}$, it follows that $P_{w}\ll
_{R}P_{x_{1}}\ll _{R}P_{u}$, and thus $w\notin N(u)$, which is a
contradiction. Thus, either $P_{x_{1}}$ intersects $P_{w}$ in $R$, or $%
P_{x_{1}}\ll _{R}P_{w}$. Suppose that $P_{x_{1}}$ intersects $P_{w}$ in $R$.
Then, $x_{1}$ is unbounded and $\phi _{x_{1}}>\phi _{w}>\phi _{u}$, since $w$
is bounded and $x_{1}w\notin E$. Then, Lemma~\ref{intersecting-unbounded}
implies that $N(x_{1})\subseteq N(w) $. Furthermore, since $T_{x_{1}}\ll
_{R_{T}}T_{x_{2}}\ll _{R_{T}}T_{w}$, it follows that $T_{z}$ intersects $%
T_{x_{2}}$ in $R_{T}$ for every $z\in N(x_{1})\subseteq N(w)$, and thus also 
$N(x_{1})\subseteq N(x_{2})$. Therefore, since $P_{x_{1}}\ll _{R}P_{u}\ll
_{R}P_{x_{2}}$, it follows that for every $z\in N(x_{1})\subseteq N(x_{2})$, 
$z$ is bounded in $R$, $\phi _{u}<\phi _{x_{1}}<\phi _{z}$, and $P_{z}$
intersects $P_{u}$ in $R$. Thus, $N(x_{1})\subseteq N(u)$, i.e.~$x_{1}\in
Q_{u}$, which is a contradiction by Lemma~\ref{Qu-1}, since $x_{1}\in
V_{0}(u)$. It follows that $P_{x_{1}}$ does not intersect $P_{w}$ in $R$,
and thus $P_{x_{1}}\ll _{R}P_{w}$. This proves the induction basis.

For the induction step, suppose that the statement of the lemma holds for
the set $\widetilde{X}_{1}$ constructed after an iteration of the
construction procedure, and let $v\in N(\widetilde{X}_{1})\setminus N(w)$.
Suppose first that $v\in N(u)$, and thus $v$ is bounded in $R$. Then, since
by the induction hypothesis $T_{x}\ll _{R_{T}}T_{x_{2}}\ll _{R_{T}}T_{u}$
for every $x\in \widetilde{X}_{1}$, and since $v\in N(x)\cap N(u)$ for some $%
x\in \widetilde{X}_{1}$, it follows that $T_{v}$ intersects $T_{x_{2}}$ in $%
R_{T}$, and thus $vx_{2}\in E$. On the other hand, since $P_{x}\ll
_{R}P_{w}\ll _{R}P_{x_{2}}$ for every $x\in \widetilde{X}_{1}$ by the
induction hypothesis, and since $v\in N(x)\cap N(x_{2})$ for some $x\in 
\widetilde{X}_{1}$, it follows that $P_{v}$ intersects $P_{w}$ in $R$, and
thus $vw\in E$, since both $v$ and $w$ are bounded. This is a contradiction,
since $v\in N(\widetilde{X}_{1})\setminus N(w)$. Thus, $v\notin N(u)$ for
every $v\in N(\widetilde{X}_{1})\setminus N(w)$. Therefore, since $v\in N(%
\widetilde{X}_{1})$ and $\widetilde{X}_{1}\subseteq V_{0}(u)$, it follows
that $v\in V_{0}(u)$ for every $v\in N(\widetilde{X}_{1})\setminus N(w)$,
and thus the updated set $\widetilde{X}_{1}$ is $\widetilde{X}_{1}\cup N(%
\widetilde{X}_{1})\setminus N(w)\subseteq V_{0}(u)$.

Since $v\in N(x)$ for some $x\in \widetilde{X}_{1}$, and since $P_{x}\ll
_{R}P_{w}$ for every $x\in \widetilde{X}_{1}$ by the induction hypothesis,
it follows that either $P_{v}$ intersects $P_{w}$ in $R$, or $P_{v}\ll
_{R}P_{w}$. Suppose that $P_{v}$ intersects $P_{w}$ in $R$. Then, $v$ is
unbounded and $\phi _{v}>\phi _{w}$, since $v\notin N(w)$ and $w$ is
bounded. Therefore, $N(v)\subseteq N(w)$ by Lemma~\ref%
{intersecting-unbounded}, and thus in particular $x\in N(w)$ for some $x\in 
\widetilde{X}_{1}$, which is a contradiction to the induction hypothesis.
Therefore, $P_{v}$ does not intersect $P_{w}$ in $R$, and thus $P_{v}\ll
_{R}P_{w}$ for every $v\in N(\widetilde{X}_{1})\setminus N(w)$.

We will prove that also $P_{v}\ll _{R}P_{u}$ for every $v\in N(\widetilde{X}%
_{1})\setminus N(w)$. Otherwise, suppose first that $P_{u}\ll _{R}P_{v}$.
Then, since $P_{v}\ll _{R}P_{w}$ by the previous paragraph, it follows that $%
P_{u}\ll _{R}P_{v}\ll _{R}P_{w}$, and thus $w\notin N(u)$, which is a
contradiction. Suppose now that $P_{v}$ intersects $P_{u}$ in $R$. Recall
that $v\notin N(u)$, as we proved above. If $\phi _{u}>\phi _{v}$, then $%
N(u)\subseteq N(v)$ by Lemma~\ref{intersecting-unbounded}, and thus also $%
w\in N(v)$, which is a contradiction, since $v\in N(\widetilde{X}%
_{1})\setminus N(w)$. If $\phi _{u}<\phi _{v}$, then $v$ is unbounded, since
otherwise $v\in N(u)$, which is a contradiction. Furthermore, $N(v)\subseteq
N(u)$ by Lemma~\ref{intersecting-unbounded}, and thus $v\in Q_{u}$, which is
a contradiction by Lemma~\ref{Qu-1}, since $v\in V_{0}(u)$ as we proved
above. Therefore, $P_{v}\ll _{R}P_{u}$, i.e.~$v\in D_{1}$, for every $v\in N(%
\widetilde{X}_{1})\setminus N(w)$, and thus the updated set $\widetilde{X}%
_{1}$ is $\widetilde{X}_{1}\cup N(\widetilde{X}_{1})\setminus N(w)\subseteq
D_{1}$.

Since the updated set $\widetilde{X}_{1}\cup N(\widetilde{X}_{1})\setminus
N(w)$ is a subset of $D_{1}$, i.e.~$x\in V_{0}(u)$ and~$P_{x}\ll _{R}P_{u}$
for every $x\in \widetilde{X}_{1}\cup N(\widetilde{X}_{1})\setminus N(w)$,
it follows in particular that $xx_{2}\notin E$ for every $x\in \widetilde{X}%
_{1}\cup N(\widetilde{X}_{1})\setminus N(w)$, since $P_{u}\ll _{R}P_{x_{2}}$%
. Recall furthermore that the set $\widetilde{X}_{1}\cup N(\widetilde{X}%
_{1})\setminus N(w)$ induces a connected subgraph of $G$. Thus, since $%
T_{x_{1}}\ll _{R_{T}}T_{x_{2}}$, it follows that $T_{x}\ll _{R_{T}}T_{x_{2}}$
for every $x\in \widetilde{X}_{1}\cup N(\widetilde{X}_{1})\setminus N(w)$.
This completes the induction step, and the lemma follows.
\end{proof}

\begin{corollary}
\label{N-X1-tilde-N-u}For the constructed set $\widetilde{X}_{1}$, $N(%
\widetilde{X}_{1})\setminus N(u)\neq \emptyset $.
\end{corollary}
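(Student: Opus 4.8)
The plan is to produce an explicit witness $y\in N(\widetilde{X}_{1})\setminus N(u)$ by exploiting the connectivity of $V_{0}(u)$, rather than by analysing the termination condition of the construction of $\widetilde{X}_{1}$. First I would collect three facts that are already available. By Lemma~\ref{X1} the constructed set satisfies $\widetilde{X}_{1}\subseteq D_{1}\subseteq V_{0}(u)$, and $\widetilde{X}_{1}\neq\emptyset$ since $x_{1}\in\widetilde{X}_{1}$. Next, the vertex $x_{2}$ lies in $D_{2}\subseteq V_{0}(u)$, and $D_{1}\cap D_{2}=\emptyset$ (a vertex cannot satisfy both $P_{v}\ll_{R}P_{u}$ and $P_{u}\ll_{R}P_{v}$); hence $x_{2}\in V_{0}(u)\setminus\widetilde{X}_{1}$, which shows that $\widetilde{X}_{1}$ is a \emph{proper} subset of $V_{0}(u)$, i.e.~$\widetilde{X}_{1}\subsetneq V_{0}(u)$.

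Then I would invoke connectivity. Since $u^{\ast}\in V_{0}(u)$ by Lemma~\ref{bounded-hovering}, we have $V_{0}(u)\neq\emptyset$, and therefore $V_{0}(u)$ induces a connected subgraph of $G$ by Lemma~\ref{two-components}. As $\widetilde{X}_{1}$ is a nonempty proper subset of this connected subgraph, there must exist an edge crossing its boundary, i.e.~vertices $x\in\widetilde{X}_{1}$ and $y\in V_{0}(u)\setminus\widetilde{X}_{1}$ with $xy\in E$. For this $y$ we have $y\in N(\widetilde{X}_{1})$, because $y$ is adjacent to $x\in\widetilde{X}_{1}$ while $y\notin\widetilde{X}_{1}$. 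Moreover, by Definition~\ref{hovering} every vertex of $V_{0}(u)$ lies in $V\setminus N[u]$, so in particular $y\notin N(u)$. Thus $y\in N(\widetilde{X}_{1})\setminus N(u)$, and the corollary follows.

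The argument is short once the right viewpoint is fixed, so there is no serious technical obstacle; the only point requiring care is the verification that $\widetilde{X}_{1}$ is a \emph{proper} nonempty subset of $V_{0}(u)$, which is exactly what guarantees the existence of a boundary edge of the connected set $V_{0}(u)$. I would emphasise that this proof does not use the termination property $N(\widetilde{X}_{1})\subseteq N(w)$ of the construction at all: the witness $y$ is furnished purely by connectivity of $V_{0}(u)$ together with the disjointness $V_{0}(u)\cap N(u)=\emptyset$.
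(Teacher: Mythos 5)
Your proof is correct, and it reaches the conclusion by a direct argument where the paper argues by contradiction, though the ingredients are the same. The paper assumes $N(\widetilde{X}_{1})\subseteq N(u)$ and observes that $\widetilde{X}_{1}$ -- which is connected and disjoint from $N[u]$ -- would then be an entire connected component of $G\setminus N[u]$; connectivity of $V_{0}(u)$ (Lemma~\ref{two-components}) forces $V_{0}(u)=\widetilde{X}_{1}$, contradicting $\emptyset\neq D_{2}\subseteq V_{0}(u)$ with $D_{2}\cap\widetilde{X}_{1}=\emptyset$. You run the same facts forwards: $x_{2}\in D_{2}$ and $D_{1}\cap D_{2}=\emptyset$ certify that $\widetilde{X}_{1}$ is a nonempty \emph{proper} subset of the connected set $V_{0}(u)$, so a boundary edge exists, and its outer endpoint $y$ is the explicit witness, lying outside $N(u)$ since $V_{0}(u)\subseteq V\setminus N[u]$ by Definition~\ref{hovering}. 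The two arguments are two sides of the same elementary fact (a nonempty proper subset of a connected graph admits a boundary edge), but your version buys two small things: it never uses the connectivity of $\widetilde{X}_{1}$ itself, which the paper's component argument needs, and it produces a concrete witness rather than refuting the negation. One point of care, which you handle implicitly and correctly: invoking Lemma~\ref{bounded-hovering} to get $V_{0}(u)\neq\emptyset$ requires $R$ to be canonical, which holds at this point of the proof of Theorem~\ref{no-property-thm}; and ``$V_{0}(u)$ is connected'' in Lemma~\ref{two-components} does mean that $G[V_{0}(u)]$ is a connected induced subgraph (a single component of $G\setminus N[u]$), which is exactly what your boundary-edge step needs.
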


\begin{proof}
Suppose for the sake of contradiction that $N(\widetilde{X}_{1})\setminus
N(u)=\emptyset $, i.e.~$N(\widetilde{X}_{1})\subseteq N(u)$. Since $%
\widetilde{X}_{1}\subseteq D_{1}\subseteq V_{0}(u)$ by Lemma~\ref{X1}, it
follows that $P_{x}\ll _{R}P_{u}$ for every $x\in \widetilde{X}_{1}$, and
thus in particular $x\notin N(u)$ for every $x\in \widetilde{X}_{1}$.
Therefore, since $\widetilde{X}_{1}$ induces a connected subgraph of $G$, it
follows that $\widetilde{X}_{1}$ is a connected component of $G\setminus
N[u] $. Therefore, since $V_{0}(u)$ is connected, it follows that $V_{0}(u)=%
\widetilde{X}_{1}$. This is a contradiction, since $\emptyset \neq
D_{2}\subseteq V_{0}(u)$. Therefore, $N(\widetilde{X}_{1})\setminus N(u)\neq
\emptyset $.
\end{proof}

\medskip

Recall by definition of $x_{2}$ that for every vertex $x^{\prime }\in
D_{2}\setminus \{x_{2}\}$ with $P_{w}\ll _{R}P_{x^{\prime }}$, either $%
T_{x^{\prime }}$ intersects $T_{x_{2}}$ in the trapezoid representation $%
R_{T}$, or $T_{x_{2}}\ll _{R_{T}}T_{x^{\prime }}$. We will now prove in the
following lemma that this property holds actually for all vertices $%
x^{\prime }\in S_{2}\setminus \{x_{2}\}$.

\begin{lemma}
\label{x2-relative-position-in-S2}For every vertex $x^{\prime }\in
S_{2}\setminus \{x_{2}\}$, either $T_{x^{\prime }}$ intersects $T_{x_{2}}$
in the trapezoid representation $R_{T}$, or $T_{x_{2}}\ll
_{R_{T}}T_{x^{\prime }}$.
\end{lemma}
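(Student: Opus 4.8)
The plan is to show that no vertex of $S_{2}\setminus\{x_{2}\}$ can have its trapezoid lying strictly to the left of $T_{x_{2}}$ in $R_{T}$, by deriving a contradiction with the minimality that defines $x_{2}$. Recall that $x_{2}$ was chosen among the vertices of $D_{2}$ satisfying $P_{w}\ll_{R}P_{x_{2}}$ so that no such $D_{2}$-vertex lies to the left of $T_{x_{2}}$. Since $S_{2}\supseteq D_{2}$ but $S_{2}$ also contains vertices $v$ with $P_{v}$ intersecting $P_{u}$ in $R$ (those in $S_{2}\setminus D_{2}$), the task is really to handle these extra vertices. So I would fix an arbitrary $x^{\prime}\in S_{2}\setminus\{x_{2}\}$, assume for contradiction that $T_{x^{\prime}}\ll_{R_{T}}T_{x_{2}}$, and split into the two cases according to whether $x^{\prime}\in D_{2}$ or $x^{\prime}\in S_{2}\setminus D_{2}$.

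In the first case, $x^{\prime}\in D_{2}$, the plan is to show that $P_{w}\ll_{R}P_{x^{\prime}}$, which immediately contradicts the defining minimality of $x_{2}$. Since $x^{\prime}\in D_{2}$ means $P_{u}\ll_{R}P_{x^{\prime}}$, and since $w\in N(u)$ gives $r(u)<_{R}r(w)$ and $L(w)<_{R}L(u)$ by Lemma~\ref{unbounded-bounded}, I expect to be able to locate $P_{x^{\prime}}$ relative to $P_{w}$ using the slope relation $\phi_{x^{\prime}}>\phi_{u}$ (as $x^{\prime}$ is a covering vertex, hence unbounded-or-bounded with larger slope) together with the assumed trapezoid order $T_{x^{\prime}}\ll_{R_{T}}T_{x_{2}}\ll_{R_{T}}T_{u}$. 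The key point is that if $P_{w}$ were not entirely to the left of $P_{x^{\prime}}$, then $P_{w}$ and $P_{x^{\prime}}$ would intersect in $R$; but $wx^{\prime}\notin E$ (since $x^{\prime}\in V_{0}(u)$ and $w\in N(u)$, and a covering vertex is non-adjacent to $N(u)$-vertices unless forced), and this non-adjacency combined with the slopes would force $N(w)\subseteq N(x^{\prime})$ or $N(x^{\prime})\subseteq N(w)$ via Lemma~\ref{intersecting-unbounded}, which I would push to contradict the trapezoid positions. Concluding $P_{w}\ll_{R}P_{x^{\prime}}$ then contradicts the choice of $x_{2}$.

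In the second case, $x^{\prime}\in S_{2}\setminus D_{2}$, so $P_{x^{\prime}}$ intersects $P_{u}$ in $R$ while $x^{\prime}u\notin E$ (as $x^{\prime}\in V_{0}(u)$). The plan here is to exploit the fact that $u$ has the smallest slope among unbounded vertices and is unbounded-maximal, so that $\phi_{x^{\prime}}>\phi_{u}$, forcing $x^{\prime}$ to be bounded (otherwise $x^{\prime}$ unbounded with $\phi_{x^{\prime}}>\phi_{u}$ and intersecting $P_{u}$ would give $N(u)\subseteq N(x^{\prime})$ by Lemma~\ref{intersecting-unbounded}, contradicting either unbounded-maximality of $u$ or Lemma~\ref{not-equal}). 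With $x^{\prime}$ bounded and $P_{x^{\prime}}$ intersecting $P_{u}$, Lemma~\ref{unbounded-hovering} gives $l(x^{\prime})<_{R}l(u)$ and $R(u)<_{R}R(x^{\prime})$. I would then relate $x^{\prime}$ to $x_{2}$: since $T_{x^{\prime}}\ll_{R_{T}}T_{x_{2}}$ by assumption, $x^{\prime}x_{2}\notin E$, and since both are in the connected set $V_{0}(u)$ I can find a connecting path and use the neighborhood-inclusion lemmas to show $N(x^{\prime})$ forces some neighbor to also witness $P_{w}\ll_{R}P_{x^{\prime}}$ or an analogous position, again contradicting the minimality of $x_{2}$ through the established structure of $\widetilde{X}_{1}$ and the fact that $x_{2}\in D_{2}$ with $P_{w}\ll_{R}P_{x_{2}}$.

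The main obstacle I anticipate is the second case: unlike $D_{2}$-vertices, a vertex of $S_{2}\setminus D_{2}$ has its parallelogram intersecting $P_{u}$ rather than lying cleanly to one side, so the clean $\ll_{R}$ arguments do not directly apply, and I must instead transfer information through the trapezoid representation $R_{T}$ and the connectedness of $V_{0}(u)$ to re-establish a violation of the extremal choice of $x_{2}$. Carefully matching the $R$-order, the $R_{T}$-order, and the slope/neighborhood inclusions simultaneously — so that the assumed leftward position $T_{x^{\prime}}\ll_{R_{T}}T_{x_{2}}$ collides with the fact that $x_{2}$ was leftmost among the relevant $D_{2}$-vertices — is where the delicate bookkeeping will lie.
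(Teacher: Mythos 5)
Your overall skeleton (assume $T_{x^{\prime}}\ll_{R_{T}}T_{x_{2}}$ for contradiction and split on $x^{\prime}\in D_{2}$ versus $x^{\prime}\in S_{2}\setminus D_{2}$) matches the paper's, but both cases contain real errors. In the case $x^{\prime}\in S_{2}\setminus D_{2}$ your slope analysis is inverted: when $P_{x^{\prime}}$ intersects $P_{u}$ and $\phi_{x^{\prime}}>\phi_{u}$, Lemma~\ref{intersecting-unbounded} yields $N(x^{\prime})\subseteq N(u)$ (the vertex with the \emph{smaller} slope covers the other), not $N(u)\subseteq N(x^{\prime})$; the correct conclusion is that $x^{\prime}$ must be \emph{unbounded} (a bounded $x^{\prime}$ intersecting $P_{u}$ with larger slope would be adjacent to $u$, impossible for $x^{\prime}\in V_{0}(u)$), whence $x^{\prime}\in Q_{u}$, contradicting Lemma~\ref{Qu-1} --- the opposite of your claim that $x^{\prime}$ is forced to be bounded. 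The complementary subcase $\phi_{x^{\prime}}<\phi_{u}$ is where $N(u)\subseteq N(x^{\prime})$ holds, and it dies in one line: $w\in N(x^{\prime})$, contradicting $x^{\prime}\notin N(w)$. Note also that $x^{\prime}\notin N(w)$ does \emph{not} follow from ``$x^{\prime}\in V_{0}(u)$ and $w\in N(u)$'' as you assert --- covering vertices are adjacent to \emph{all} of $N(u)$, including $w$ --- but from the contradiction hypothesis itself, since $T_{x^{\prime}}\ll_{R_{T}}T_{x_{2}}\ll_{R_{T}}T_{w}$. So no path argument through $V_{0}(u)$ is needed in this case at all.

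The deeper gap is in your case $x^{\prime}\in D_{2}$: you cannot ``conclude $P_{w}\ll_{R}P_{x^{\prime}}$.'' The subcase where $P_{x^{\prime}}$ intersects $P_{w}$ in $R$ is perfectly consistent at this stage ($x^{\prime}$ unbounded, $\phi_{x^{\prime}}>\phi_{w}>\phi_{u}$, and $N(x^{\prime})\subseteq N(w)$ by Lemma~\ref{intersecting-unbounded}), and that inclusion by itself contradicts nothing about trapezoid positions. The paper resolves exactly this subcase with a dedicated argument your sketch never touches: since $P_{x}\ll_{R}P_{u}\ll_{R}P_{x^{\prime}}$ for every $x\in\widetilde{X}_{1}$, the vertex $x^{\prime}$ is non-adjacent to all of $\widetilde{X}_{1}$; the extremal choice of $x_{1}$ (no vertex of $D_{1}\cup D_{2}$ has its trapezoid to the left of $T_{x_{1}}$) together with the connectedness of $\widetilde{X}_{1}$ gives $T_{x}\ll_{R_{T}}T_{x^{\prime}}$ for every $x\in\widetilde{X}_{1}$, hence $T_{x}\ll_{R_{T}}T_{x^{\prime}}\ll_{R_{T}}T_{x_{2}}\ll_{R_{T}}T_{w}$; combined with $N(\widetilde{X}_{1})\subseteq N(w)$ from the construction of $\widetilde{X}_{1}$, this sandwiching forces $N(\widetilde{X}_{1})\subseteq N(x^{\prime})$; then, back in $R$, the sandwich $P_{x}\ll_{R}P_{u}\ll_{R}P_{x^{\prime}}$ makes every $z\in N(\widetilde{X}_{1})$ cross $P_{u}$ with $\phi_{z}>\phi_{x^{\prime}}>\phi_{u}$ (as $x^{\prime}$ is unbounded), so $N(\widetilde{X}_{1})\subseteq N(u)$, contradicting Corollary~\ref{N-X1-tilde-N-u}. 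That corollary and the extremality of $x_{1}$ are the pivot of the whole proof, and neither appears in your plan; without them the intersecting subcase has no exit, so the route you describe cannot be completed.
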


\begin{proof}
Consider an arbitrary vertex $x^{\prime }\in S_{2}\setminus \{x_{2}\}$. If $%
x^{\prime }\in N(x_{2})$, then clearly $T_{x^{\prime }}$ intersects $%
T_{x_{2}}$ in $R_{T}$. Thus, it suffices to consider in the sequel of the
proof only the case where $x^{\prime }\notin N(x_{2})$, i.e.~the case where $%
T_{x^{\prime }}$ does not intersect $T_{x_{2}}$ in $R_{T}$. Suppose for the
sake of contradiction that $T_{x^{\prime }}\ll _{R_{T}}T_{x_{2}}$, i.e.~$%
T_{x^{\prime }}\ll _{R_{T}}T_{x_{2}}\ll _{R_{T}}T_{w}$. Then, in particular, 
$x^{\prime }\notin N(w)$. Furthermore, note that $x^{\prime }\notin N(u)$,
since $x^{\prime }\in S_{2}\subseteq V_{0}(u)$.

Suppose first that $x^{\prime }\in S_{2}\setminus D_{2}$, i.e.~$P_{x^{\prime
}}$ intersects $P_{u}$ in $R$. If $\phi _{x^{\prime }}>\phi _{u}$, then $%
x^{\prime }$ is unbounded, since otherwise $x^{\prime }\in N(u)$ which is a
a contradiction. Furthermore, $N(x^{\prime })\subseteq N(u)$ by Lemma~\ref%
{intersecting-unbounded}, and thus $x^{\prime }\in Q_{u}$, which is a
contradiction by Lemma~\ref{Qu-1}, since $x\in V_{0}(u)$. If $\phi
_{x^{\prime }}<\phi _{u}$, then $N(u)\subseteq N(x^{\prime })$ by Lemma~\ref%
{intersecting-unbounded}, and thus in particular $wx^{\prime }\in E$, which
is a contradiction, since $x^{\prime }\notin N(w)$. Therefore, the lemma
holds for every vertex $x^{\prime }\in S_{2}\setminus D_{2}$.

Suppose now that $x^{\prime }\in D_{2}$, i.e.~$P_{u}\ll _{R}P_{x^{\prime }}$%
. If $P_{w}\ll _{R}P_{x^{\prime }}$, then the lemma follows by definition of 
$x_{2}$. If $P_{x^{\prime }}\ll _{R}P_{w}$, then $P_{u}\ll _{R}P_{x^{\prime
}}\ll _{R}P_{w}$, and thus $w\notin N(u)$, which is a contradiction. Suppose
that $P_{x^{\prime }}$ intersects $P_{w}$ in $R$. Then, $x^{\prime }$ is
unbounded and $\phi _{x^{\prime }}>\phi _{w}>\phi _{u}$, since $w$ is
bounded and $x^{\prime }\notin N(w)$. Note that $P_{x}\ll _{R}P_{u}\ll
_{R}P_{x^{\prime }}$ for every $x\in \widetilde{X}_{1}$, since $x^{\prime
}\in D_{2}$ and $\widetilde{X}_{1}\subseteq D_{1}$ by Lemma~\ref{X1}.
Therefore, $x^{\prime }\notin N(x)$ for every $x\in \widetilde{X}_{1}$, and
thus in particular $x^{\prime }\notin N(x_{1})$, since $x_{1}\in \widetilde{X%
}_{1}$. Therefore, $T_{x^{\prime }}$ does not intersect $T_{x_{1}}$ in $%
R_{T} $, and thus $T_{x_{1}}\ll _{R_{T}}T_{x^{\prime }}$ by definition of $%
x_{1}$. Furthermore, since $\widetilde{X}_{1}$ induces a connected subgraph
of $G$, and since $x^{\prime }\notin N(x)$ for every $x\in \widetilde{X}_{1}$%
, it follows that $T_{x}\ll _{R_{T}}T_{x^{\prime }}$ for every $x\in 
\widetilde{X}_{1}$. Recall now that $T_{x_{2}}\ll _{R_{T}}T_{w}$ and that we
assumed that $T_{x^{\prime }}\ll _{R_{T}}T_{x_{2}}$. That is, $T_{x}\ll
_{R_{T}}T_{x^{\prime }}\ll _{R_{T}}T_{x_{2}}\ll _{R_{T}}T_{w}$ for every $%
x\in \widetilde{X}_{1}$.

Recall that $N(\widetilde{X}_{1})\subseteq N(w)$ by the construction of the
set $\widetilde{X}_{1}$. Therefore, since $T_{x}\ll _{R_{T}}T_{x^{\prime
}}\ll _{R_{T}}T_{w}$ for every $x\in \widetilde{X}_{1}$, it follows that $%
T_{z}$ intersects $T_{x^{\prime }}$ in $R_{T}$ for every $z\in N(\widetilde{X%
}_{1})\subseteq N(w)$, and thus $N(\widetilde{X}_{1})\subseteq N(x^{\prime
}) $. On the other hand, since $P_{x}\ll _{R}P_{u}\ll _{R}P_{x^{\prime }}$
for every $x\in \widetilde{X}_{1}$ in the projection representation $R$, it
follows that $P_{z}$ intersects $P_{u}$ in $R$ for every $z\in N(\widetilde{X%
}_{1})\subseteq N(x^{\prime })$. Furthermore, since $x^{\prime }$ is
unbounded and $\phi _{x^{\prime }}>\phi _{u}$ in $R$, it follows that $z$ is
bounded in $R$ and $\phi _{z}>\phi _{x^{\prime }}>\phi _{u}$ for every $z\in
N(\widetilde{X}_{1})\subseteq N(x^{\prime })$. Therefore, $z\in N(u)$ for
every $z\in N(\widetilde{X}_{1})$, i.e.~$N(\widetilde{X}_{1})\subseteq N(u)$%
, which is a contradiction by Corollary~\ref{N-X1-tilde-N-u}. This completes
the proof of the lemma.
\end{proof}

\subsection*{The vertex sets $C_{u}$, $C_{2}$, $X_{1}$, and $H$}

Let $C_{u}$ be the connected component of $G\setminus Q_{u}\setminus N[%
\widetilde{X}_{1},x_{2}]$, in which $u$ belongs. Note that, in particular, $%
w $ belongs to $C_{u}$, since $wu\in E$, $w\notin Q_{u}$, and $%
wx,wx_{2}\notin E$ for every~$x\in \widetilde{X}_{1}$, and thus $%
C_{u}\setminus \{u\}\neq \emptyset $. Recall that the trapezoids of all
vertices of $V_{0}(u)$ lie to the left of the trapezoid of $u$ in the
trapezoid representation $R_{T}$; $S_{2}$ is exactly the subset of vertices
of~$V_{0}(u)$, whose parallelograms do not lie to the left of the
parallelogram $P_{u}$ of $u$ in $R$. Let~$\widetilde{\widetilde{C}}_{2}$ be
the set of connected components of $G\setminus Q_{u}\setminus N[\widetilde{X}%
_{1}]$, in which the vertices of $S_{2}$ belong. Since~$x_{2}\in S_{2}$,
note that $V(C_{u}\cup \widetilde{\widetilde{C}}_{2})$ induces the set of
connected components of $G\setminus Q_{u}\setminus N[\widetilde{X}_{1}]$, in
which the vertices of~$S_{2}\cup \{u\}$ belong. Furthermore, let $\widetilde{%
C}_{2}=\widetilde{\widetilde{C}}_{2}\setminus N[u,w]\setminus C_{u}$.
Finally, let~$\widetilde{H}$ be the induced subgraph of~$G\setminus
Q_{u}\setminus N[\widetilde{X}_{1}]$ on the vertices of $N[u,w]\cap N(x_{2})$%
. Note now that~$V(C_{u}\cup \widetilde{\widetilde{C}}_{2})=V(C_{u}\cup \widetilde{C}%
_{2}\cup \widetilde{H})$, i.e.~$V(C_{u}\cup \widetilde{C}_{2}\cup \widetilde{%
H})$ also induces the set of connected components of~$G\setminus
Q_{u}\setminus N[\widetilde{X}_{1}]$, in which the vertices of $S_{2}\cup
\{u\}$ belong.

Let $v$ be a vertex of the set $\widetilde{C}_{2}$, and thus $v\notin N(u)$
by the definition of $\widetilde{C}_{2}$. Suppose that~$P_{v}$ intersects $%
P_{u}$ in $R$. If $\phi _{v}>\phi _{u}$, then $v$ is unbounded, since
otherwise $v\in N(u)$, which is a contradiction. Furthermore, $N(v)\subseteq
N(u)$ by Lemma~\ref{intersecting-unbounded}, and thus $v\in Q_{u}$, which is
a contradiction to the definition of $\widetilde{C}_{2}$. If $\phi _{v}<\phi
_{u}$, then $N(u)\subseteq N(v)$ by Lemma~\ref{intersecting-unbounded}, and
thus $w\in N(v)$, which is again a contradiction to the definition of $%
\widetilde{C}_{2}$. Therefore, there is no vertex $v$ of $\widetilde{C}_{2}$%
, such that $P_{v}$ intersects $P_{u}$ in $R$. That is, for every $v\in 
\widetilde{C}_{2}$ either $P_{v}\ll _{R}P_{u}$ or $P_{u}\ll _{R}P_{v}$. Let
now $A_{1},A_{2},\ldots ,A_{k},A_{k+1},\ldots ,A_{\ell }$ be the connected
components of $\widetilde{C}_{2}$, such that $P_{v}\ll _{R}P_{u}$ for every $%
v\in A_{i}$, $i=1,2,\ldots ,k$, and $P_{u}\ll _{R}P_{v}$ for every $v\in
A_{j}$, $j=k+1,k+2,\ldots ,\ell $.

We partition first the set $\{A_{k+1},\ldots ,A_{\ell }\}$ of components
into two possibly empty subsets, namely $\mathcal{B}_{1}$ and $\mathcal{B}%
_{2}$, as follows. A component $A_{j}\in \mathcal{B}_{2}$, $j=k+1,k+2,\ldots
,\ell $, if $A_{j}\cap S_{2}\neq \emptyset $; otherwise, $A_{j}\in \mathcal{B%
}_{1}$. Then, since any component $A_{j}\in \mathcal{B}_{2}$ is a connected
subgraph of $G\setminus N[u]$, and since $A_{j}$ has at least one vertex of $%
S_{2}\subseteq V_{0}(u)$, it follows that $v\in V_{0}(u)$ for every $v\in
A_{j}$, where $A_{j}\in \mathcal{B}_{2}$. Furthermore, $v\in D_{2}$ for
every $v\in A_{j}\in \mathcal{B}_{2}$, since $P_{u}\ll _{R}P_{v}$ for every $%
v\in A_{j}$. Thus, $A_{j}\subseteq D_{2}$ for every component $A_{j}\in 
\mathcal{B}_{2}$, while $A_{j}\cap D_{2}=\emptyset $ for every component $%
A_{j}\in \mathcal{B}_{1}$. That is, in particular the next observation
follows.

\begin{observation}
\label{V(B1)}$V(\mathcal{B}_{1})\subseteq V\setminus Q_{u}\setminus
N[u]\setminus V_{0}(u)$, where $V(\mathcal{B}_{1})=\bigcup\nolimits_{A_{j}%
\in \mathcal{B}_{1}}A_{j}$.
\end{observation}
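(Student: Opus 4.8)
The plan is to fix an arbitrary vertex $v\in V(\mathcal{B}_1)$ and verify the three exclusions that together define the target set, namely $v\notin Q_u$, $v\notin N[u]$, and $v\notin V_0(u)$. By construction $v$ lies in some component $A_j\in\mathcal{B}_1$ with $k+1\le j\le\ell$, so in particular $P_u\ll_R P_v$; this last relation is the key geometric fact I will exploit.

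The first two exclusions I expect to be immediate from the way $\widetilde{C}_2$ (and hence $V(\mathcal{B}_1)\subseteq V(\widetilde{C}_2)$) was carved out. Since each $A_j$ is a connected component of $G\setminus Q_u\setminus N[\widetilde{X}_1]$, every one of its vertices avoids $Q_u$, giving $v\notin Q_u$. Moreover $\widetilde{C}_2=\widetilde{\widetilde{C}}_2\setminus N[u,w]\setminus C_u$, so $v\notin N[u,w]$; since $N[u]\subseteq N[u,w]$, this yields $v\notin N[u]$.

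The only exclusion demanding an argument is $v\notin V_0(u)$, and this is exactly where the defining property of $\mathcal{B}_1$ enters. Recall that it has already been established that $A_j\cap D_2=\emptyset$ for every $A_j\in\mathcal{B}_1$, so $v\notin D_2$. I would then argue by contradiction: suppose $v\in V_0(u)$. Since $v\in A_j$ with $j\ge k+1$ we have $P_u\ll_R P_v$, and because $\ll_R$ is a strict left-of relation this is incompatible with $P_v\ll_R P_u$. Hence $v$ is a vertex of $V_0(u)$ satisfying $P_u\ll_R P_v$, which is precisely the defining condition of $D_2=\{v\in V_0(u)\mid P_u\ll_R P_v\}$; thus $v\in D_2$, contradicting $v\notin D_2$. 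Therefore $v\notin V_0(u)$, and combining the three exclusions gives $v\in V\setminus Q_u\setminus N[u]\setminus V_0(u)$, as claimed.

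I expect no genuine obstacle here, since the substantive work --- the classification of the components $A_{k+1},\dots,A_\ell$ into $\mathcal{B}_1$ and $\mathcal{B}_2$ according to whether they meet $S_2$, and the consequence that $A_j\subseteq D_2$ for $A_j\in\mathcal{B}_2$ while $A_j\cap D_2=\emptyset$ for $A_j\in\mathcal{B}_1$ --- was carried out just before the statement. The only point requiring a little care is the equivalence ``$v\in V_0(u)$ together with $P_u\ll_R P_v$'' $\Longleftrightarrow$ ``$v\in D_2$'': one must invoke that $\ll_R$ is strict (so that $P_u\ll_R P_v$ rules out $P_v\ll_R P_u$) in order to conclude that such a $v$ cannot instead belong to $D_1$, and hence that its membership in $V_0(u)=D_1\cup S_2$ forces $v\in D_2$.
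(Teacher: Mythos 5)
Your proof is correct and takes essentially the same route as the paper, which states the observation as an immediate consequence of the preceding classification: $v\notin Q_{u}\cup N[u]$ by the definitions of $\widetilde{\widetilde{C}}_{2}$ and $\widetilde{C}_{2}=\widetilde{\widetilde{C}}_{2}\setminus N[u,w]\setminus C_{u}$, and $v\notin V_{0}(u)$ because $P_{u}\ll_{R}P_{v}$ would then put $v\in D_{2}\subseteq S_{2}$ (indeed directly by the definition of $D_{2}$, with no detour through $D_{1}$ needed), contradicting $A_{j}\cap S_{2}=\emptyset$ for $A_{j}\in\mathcal{B}_{1}$. One cosmetic slip: the $A_{j}$ are connected components of $\widetilde{C}_{2}$, not of $G\setminus Q_{u}\setminus N[\widetilde{X}_{1}]$, but since $\widetilde{C}_{2}\subseteq\widetilde{\widetilde{C}}_{2}$ your exclusion $v\notin Q_{u}$ still follows.
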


We partition now the set $\{A_{1},A_{2},\ldots ,A_{k}\}$ of components into
two possibly empty subsets, namely $\mathcal{A}_{1}$ and $\mathcal{A}_{2}$,
as follows. A component $A_{i}\in \mathcal{A}_{2}$, $i=1,2,\ldots ,k$, if $%
\widetilde{H}\subseteq N(x)$ for all vertices $x\in A_{i}$; otherwise, $%
A_{i}\in \mathcal{A}_{1}$. That is, $\mathcal{A}_{2}$ includes exactly those
components $A_{i}$, $i=1,2,\ldots ,k$, for which all vertices of $A_{i}$ are
adjacent to all vertices of $\widetilde{H}$.

We now extend the vertex set $\widetilde{X}_{1}$ to the set $X_{1}=%
\widetilde{X}_{1}\cup V(\mathcal{A}_{1})$, where $V(\mathcal{A}%
_{1})=\bigcup\nolimits_{A_{i}\in \mathcal{A}_{1}}A_{i}$, and define $C_{2}=%
\mathcal{A}_{2}\cup \mathcal{B}_{2}$. Furthermore, similarly to the
definition of $\widetilde{H}$, let $H$ be the induced subgraph of $%
G\setminus Q_{u}\setminus N[X_{1}]$ on the vertices of $N[u,w]\cap N(x_{2})$%
. Note that ${H\subseteq \widetilde{H}}$, since~${\widetilde{X}_{1}\subseteq
X_{1}}$, and thus for every component ${A_{i}\in \mathcal{A}_{2}}$, all
vertices of $A_{i}$ are also adjacent to all vertices of~$H$. Furthermore,
since $X_{1}=\widetilde{X}_{1}\cup V(\mathcal{A}_{1})$, and since no vertex
of $\mathcal{A}_{1}$ is adjacent to any vertex of $\widetilde{X}_{1}$, note
that $N(X_{1})=N(\widetilde{X}_{1})\cup N(V(\mathcal{A}_{1}))$ and that $%
N[X_{1}]=N[\widetilde{X}_{1}]\cup N[V(\mathcal{A}_{1})]$, i.e.~in particular 
${N(\widetilde{X}_{1})\subseteq N(X_{1})}$. Moreover, ${N(X_{1})\neq
\emptyset }$, since~${N(\widetilde{X}_{1})\neq \emptyset }$.

Recall that $V(C_{u}\cup \widetilde{C}_{2}\cup \widetilde{H})$ induces the
set of connected components of~$G\setminus Q_{u}\setminus N[\widetilde{X}%
_{1}]$, in which the vertices of $S_{2}\cup \{u\}$ belong. The next lemma
follows by the definitions of ${C_{u}}$, ${C_{2}}$, and~$H$.

\begin{lemma}
\label{module-1}$V({C_{u}\cup C_{2}\cup H)}$ induces a subgraph of ${%
G\setminus Q_{u}\setminus N[X_{1}]\setminus \mathcal{B}_{1}}$ that includes
all connected components of ${G\setminus Q_{u}\setminus N[X_{1}]\setminus 
\mathcal{B}_{1}}$, in which the vertices of~${S_{2}\cup \{u\}}$ belong.
Furthermore, $N(V({C_{u}\cup C_{2}\cup H}))\subseteq Q_{u}\cup N(X_{1})\cup
V({\mathcal{B}_{1}})$.
\end{lemma}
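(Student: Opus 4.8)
The plan is to prove both assertions by a purely set-theoretic comparison between the component structure of $G\setminus Q_u\setminus N[\widetilde{X}_1]$ and that of the target graph $G^{\prime}:=G\setminus Q_u\setminus N[X_1]\setminus V(\mathcal{B}_1)$. As established just before the lemma, the set $W:=V(C_u\cup\widetilde{C}_2\cup\widetilde{H})$ is precisely the union of those connected components of $G\setminus Q_u\setminus N[\widetilde{X}_1]$ that meet $S_2\cup\{u\}$. Since $\mathcal{A}_1,\mathcal{A}_2,\mathcal{B}_1,\mathcal{B}_2$ are exactly the connected components of $\widetilde{C}_2$ and $C_2=\mathcal{A}_2\cup\mathcal{B}_2$, I first record $W=V(C_u)\cup V(\mathcal{A}_1)\cup V(C_2)\cup V(\mathcal{B}_1)\cup V(\widetilde{H})$, together with the relations $N[X_1]=N[\widetilde{X}_1]\cup N[V(\mathcal{A}_1)]$ (already noted, using $X_1=\widetilde{X}_1\cup V(\mathcal{A}_1)$) and $V(H)=V(\widetilde{H})\setminus N[X_1]$ (immediate from the definitions of $H$ and $\widetilde{H}$ and from $N[\widetilde{X}_1]\subseteq N[X_1]$). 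The whole lemma then reduces to showing $W\setminus\bigl(N[X_1]\cup V(\mathcal{B}_1)\bigr)=V(C_u\cup C_2\cup H)$.

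For the containment assertion I verify that $V(C_u\cup C_2\cup H)$ is disjoint from each of $Q_u$, $N[X_1]$, and $V(\mathcal{B}_1)$. Disjointness from $Q_u$ and from $N[\widetilde{X}_1]$ is immediate, since $C_u$, $C_2\subseteq\widetilde{C}_2$, and $H\subseteq\widetilde{H}$ all live inside $G\setminus Q_u\setminus N[\widetilde{X}_1]$, and $V(H)$ in fact avoids all of $N[X_1]$ by the very definition of $H$. Disjointness from $V(\mathcal{B}_1)$ follows because $\mathcal{A}_2\cup\mathcal{B}_2$ are components of $\widetilde{C}_2$ distinct from those in $\mathcal{B}_1$, because $\widetilde{H}\subseteq N[u,w]$ whereas $\widetilde{C}_2$ avoids $N[u,w]$, and because $\widetilde{C}_2$ (hence $\mathcal{B}_1$) avoids $C_u$. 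It remains to rule out the extra part $N[V(\mathcal{A}_1)]$ of $N[X_1]$. For $C_2$ this is routine: an edge joining $\mathcal{A}_2\cup\mathcal{B}_2$ to $\mathcal{A}_1$ would be an edge inside $\widetilde{C}_2$ between two distinct components, which is impossible, so $V(C_2)\cap N[V(\mathcal{A}_1)]=\emptyset$.

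The one nontrivial point, and the main obstacle, is to show $V(C_u)\cap N[V(\mathcal{A}_1)]=\emptyset$, i.e.~that no vertex of $C_u$ is adjacent to any vertex of $\mathcal{A}_1$; the difficulty is that $C_u$ is carved out of $G\setminus Q_u\setminus N[\widetilde{X}_1,x_2]$ (deleting the extra set $N[x_2]$), whereas $\mathcal{A}_1\subseteq\widetilde{C}_2$ lives in $G\setminus Q_u\setminus N[\widetilde{X}_1]$. I argue by contradiction: suppose $v^{\prime}\in C_u$ is adjacent to $v\in\mathcal{A}_1$. Since $v\in\widetilde{C}_2$ we have $v\notin Q_u\cup N[\widetilde{X}_1]$ and $v\notin C_u$; as $v^{\prime}\in C_u$, $v^{\prime}v\in E$, and $C_u$ is a whole connected component of $G\setminus Q_u\setminus N[\widetilde{X}_1,x_2]$, the vertex $v$ must lie in the only remaining deleted set, namely $v\in N[x_2]$. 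Now $v\in\mathcal{A}_1$ gives $P_v\ll_R P_u$ and $x_2\in D_2$ gives $P_u\ll_R P_{x_2}$; in particular $v\neq x_2$, so $v\in N[x_2]$ forces $vx_2\in E$. On the other hand, transitivity of $\ll_R$ yields $P_v\ll_R P_{x_2}$, hence $P_v$ and $P_{x_2}$ do not intersect and $vx_2\notin E$ --- a contradiction. This completes the containment assertion, and in particular shows $V(C_u\cup C_2\cup H)=W\cap V(G^{\prime})$.

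It remains to prove the coverage statement and the neighbourhood bound. Since $G^{\prime}$ is an induced subgraph of $G\setminus Q_u\setminus N[\widetilde{X}_1]$, every connected component of $G^{\prime}$ meeting $S_2\cup\{u\}$ is contained in a component of $G\setminus Q_u\setminus N[\widetilde{X}_1]$ meeting $S_2\cup\{u\}$, hence in $W$; intersecting with $V(G^{\prime})$ and invoking $V(C_u\cup C_2\cup H)=W\cap V(G^{\prime})$ shows such a component lies in $V(C_u\cup C_2\cup H)$, which is the coverage claim. Finally, because $W$ is a union of whole connected components of $G\setminus Q_u\setminus N[\widetilde{X}_1]$, no $G$-edge joins $W$ to a surviving vertex outside $W$; hence $V(C_u\cup C_2\cup H)=W\cap V(G^{\prime})$ is a union of whole connected components of $G^{\prime}$, and every outside neighbour of $V(C_u\cup C_2\cup H)$ in $G$ must lie in the deleted set $Q_u\cup N[X_1]\cup V(\mathcal{B}_1)$. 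By the containment assertion $V(C_u\cup C_2\cup H)$ avoids $N[X_1]$ and is therefore non-adjacent to $X_1$, so no such neighbour lies in $X_1$; thus $N[X_1]$ may be replaced by $N(X_1)$, giving $N(V(C_u\cup C_2\cup H))\subseteq Q_u\cup N(X_1)\cup V(\mathcal{B}_1)$, as required.
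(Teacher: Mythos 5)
Your proof is correct and follows essentially the same route as the paper: both rest on the pre-lemma fact that $V(C_{u}\cup\widetilde{C}_{2}\cup\widetilde{H})$ is a union of whole connected components of $G\setminus Q_{u}\setminus N[\widetilde{X}_{1}]$, reduce the lemma to set bookkeeping over the decomposition into $C_{u}$, $\mathcal{A}_{1}$, $C_{2}$, $\mathcal{B}_{1}$, $\widetilde{H}$, and settle the one genuine obstacle --- that $C_{u}$ has no neighbours in $\mathcal{A}_{1}$ --- using exactly the paper's two ingredients, namely $P_{v}\ll_{R}P_{u}\ll_{R}P_{x_{2}}$ for $v\in\mathcal{A}_{1}$ (so $v\notin N[x_{2}]$) together with the maximality of $C_{u}$ as a component of $G\setminus Q_{u}\setminus N[\widetilde{X}_{1},x_{2}]$. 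The only difference is cosmetic: you run the contradiction from the $C_{u}$ side and prove just the disjointness actually needed, whereas the paper first establishes the stronger intermediate characterization $N(V(\mathcal{A}_{1}))\subseteq N(\widetilde{X}_{1})\cup\widetilde{H}$ and then performs the same set algebra.
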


\begin{proof}
Consider a vertex $v\in N(V(\mathcal{A}_{1}))$. That is, $v\in N(v^{\prime
}) $ and $v\notin V(\mathcal{A}_{1})$, for some vertex $v^{\prime }\in V(%
\mathcal{A}_{1})$, i.e.~$v^{\prime }\in A_{i}$ for some $A_{i}\in \mathcal{A}%
_{1}$. First note that $v^{\prime }\notin N(x_{2})$, since $P_{v^{\prime
}}\ll _{R}P_{u}\ll _{R}P_{x_{2}}$ for every $v^{\prime }\in A_{i}$ by
definition of $\mathcal{A}_{1}$. If $v\in Q_{u}$, then $N(v)\subset N(u)$ by
definition of $Q_{u}$, and thus $v^{\prime }\in N(u)$, which is a
contradiction due to the definition of $\widetilde{C}_{2}$, and since $%
v^{\prime }\in V(\mathcal{A}_{1})\subseteq \widetilde{C}_{2}$. Therefore $%
v\notin Q_{u}$. We will now prove that $v\in N(\widetilde{X}_{1})$ or $v\in 
\widetilde{H}$. To this end, suppose that $v\notin N(\widetilde{X}_{1})$. If 
$v\in \widetilde{C}_{2}$, then $v$ is a vertex of the connected component $%
A_{i}$ of $\widetilde{C}_{2}$, since $v\in N(v^{\prime })$ and $v^{\prime
}\in A_{i}$. This is a contradiction, since $v\notin V(\mathcal{A}_{1})$;
thus $v\notin \widetilde{C}_{2}$. That is, $v^{\prime }\in \widetilde{C}%
_{2}\subseteq \widetilde{\widetilde{C}}_{2}$ and $v\notin \widetilde{C}_{2}$%
. Therefore, since $v\in N(v^{\prime })$ and $v\notin Q_{u}\cup N(\widetilde{%
X}_{1})$, it follows by definitions of $\widetilde{\widetilde{C}}_{2}$ and $%
\widetilde{C}_{2}$ that $v\in C_{u}$ or $v\in N[u,w]$. Let $v\in C_{u}$.
Then, since $v^{\prime }\in N(v)$ and $v^{\prime }\notin N(x_{2})$, it
follows that also $v^{\prime }\in C_{u}$, which is a contradiction by
definition of $\widetilde{C}_{2}$. Let $v\in N[u,w]$. If $v\notin N(x_{2})$,
then $v\in C_{u}$ and $v^{\prime }\in C_{u}$, which is again a
contradiction. If $v\in N(x_{2})$, then $v\in \widetilde{H}$ by definition
of~$\widetilde{H}$. Summarizing, if $v\notin N(\widetilde{X}_{1})$, then $%
v\in \widetilde{H}$. That is, for an arbitrary vertex $v\in N(V(\mathcal{A}%
_{1}))$, either $v\in N(\widetilde{X}_{1})$ or $v\in \widetilde{H}$, i.e.~$%
N(V(\mathcal{A}_{1}))\subseteq N(\widetilde{X}_{1})\cup \widetilde{H}$.

Note by definition of $C_{u}$ and of $\widetilde{C}_{2}$ that $V(C_{u})\cap
V(\widetilde{H})=\emptyset $ and that $V(\widetilde{C}_{2})\cap V(\widetilde{%
H})=\emptyset $. Therefore, it follows by the previous paragraph that $%
V(C_{u})\cap N(V(\mathcal{A}_{1}))\subseteq {V(C_{u})\cap (N(\widetilde{X}%
_{1})\cup \widetilde{H})=\emptyset }$ and that $V(\widetilde{C}_{2})\cap N(V(%
\mathcal{A}_{1}))\subseteq V(\widetilde{C}_{2})\cap (N(\widetilde{X}%
_{1})\cup \widetilde{H})=\emptyset $. Thus, 
\begin{eqnarray}
V(C_{u})\setminus N(V(\mathcal{A}_{1})) &=&V(C_{u})  \label{Cu-eq} \\
V(\widetilde{C}_{2})\setminus N(V(\mathcal{A}_{1})) &=&V(\widetilde{C}_{2})
\label{C2-eq}
\end{eqnarray}%
Recall now that $N(X_{1})=N(\widetilde{X}_{1})\cup N(V(\mathcal{A}_{1}))$.
Therefore, it follows by definition of $H$ that%
\begin{eqnarray}
V(\widetilde{H}) &=&V(\widetilde{H}\setminus N(V(\mathcal{A}_{1})))\cup V(%
\widetilde{H}\cap N(V(\mathcal{A}_{1})))  \label{H-tilde-eq} \\
&=&V(H)\cup V(\widetilde{H}\cap N(V(\mathcal{A}_{1})))  \notag
\end{eqnarray}%
Furthermore, recall that $V(\widetilde{C}_{2})=V(C_{2})\cup V({\mathcal{A}%
_{1}})\cup V({\mathcal{B}_{1}})$ by definition of $C_{2}$, and thus it
follows by (\ref{H-tilde-eq}) that 
\begin{eqnarray}
V(C_{u}\cup \widetilde{C}_{2}\cup \widetilde{H}) &=&V(C_{u})\cup
V(C_{2})\cup V({\mathcal{A}_{1}})\cup V({\mathcal{B}_{1}})
\label{CuC2H-tilde-eq} \\
&&\cup V(H)\cup V(\widetilde{H}\cap N(V(\mathcal{A}_{1})))  \notag
\end{eqnarray}%
Therefore, it follows by (\ref{Cu-eq}), (\ref{C2-eq}), and (\ref%
{CuC2H-tilde-eq}) that 
\begin{equation}
V(C_{u}\cup \widetilde{C}_{2}\cup \widetilde{H})\setminus N[V(\mathcal{A}%
_{1})]\setminus V({\mathcal{B}_{1}})=V(C_{u})\cup V(C_{2})\cup V(H)
\label{CuC2H-eq1}
\end{equation}%
Thus, since $N[X_{1}]=N[\widetilde{X}_{1}]\cup N[V(\mathcal{A}_{1})]$, it
follows that also%
\begin{equation}
V(C_{u}\cup \widetilde{C}_{2}\cup \widetilde{H})\setminus N[X_{1}]\setminus
V({\mathcal{B}_{1}})=V(C_{u}\cup C_{2}\cup H)  \label{CuC2H-eq2}
\end{equation}%
Therefore, since $V(C_{u}\cup \widetilde{C}_{2}\cup \widetilde{H})$ induces
the set of connected components of~$G\setminus Q_{u}\setminus N[\widetilde{X}%
_{1}]$, in which the vertices of $S_{2}\cup \{u\}$ belong, it follows in
particular by (\ref{CuC2H-eq2}) that $V(C_{u}\cup C_{2}\cup H)$ induces a
subgraph of ${G\setminus Q_{u}\setminus N[X_{1}]\setminus \mathcal{B}_{1}}$;
moreover, this subgraph includes all connected components of ${G\setminus
Q_{u}\setminus N[X_{1}]\setminus \mathcal{B}_{1}}$, in which the vertices of~%
${S_{2}\cup \{u\}}$ belong. On the other hand, since $V(C_{u}\cup \widetilde{%
C}_{2}\cup \widetilde{H})$ induces a set of connected components of~$%
G\setminus Q_{u}\setminus N[\widetilde{X}_{1}]$, it follows that $%
N(V(C_{u}\cup \widetilde{C}_{2}\cup \widetilde{H}))\subseteq Q_{u}\cup N(%
\widetilde{X}_{1})$. Therefore, it follows by (\ref{CuC2H-eq2}) that $%
N(V(C_{u}\cup C_{2}\cup H))\subseteq Q_{u}\cup N(X_{1})\cup V({\mathcal{B}%
_{1}})$. This completes the proof of the lemma.
\end{proof}

\medskip

For the sequel of the proof, denote for simplicity ${N_{1}(v)=N(v)\cap
N(X_{1})}$ for every vertex~${v\in V\setminus X_{1}}$. Moreover, $C_{u}$ is
also the connected component of ${G\setminus Q_{u}\setminus N[X_{1},x_{2}]}$
(and not only of ${G\setminus Q_{u}\setminus N[\widetilde{X}_{1},x_{2}]}$),
in which $u$ belongs, as we prove in the next lemma. The next two lemmas
extend Lemma~\ref{X1}.

\begin{lemma}
\label{N(w)-1}For the constructed sets $X_{1}$ and $C_{2}$, $%
N_{1}(w)=N(X_{1})$, $X_{1}\subseteq D_{1}$, and $C_{2}\subseteq V_{0}(u)$.
Furthermore, $C_{u}$ is the connected component of $G\setminus
Q_{u}\setminus N[X_{1},x_{2}]$, in which $u$ belongs.
\end{lemma}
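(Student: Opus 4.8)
The plan is to extend the properties of the auxiliary set $\widetilde{X}_{1}$ established in Lemma~\ref{X1} to the larger set $X_{1}=\widetilde{X}_{1}\cup V(\mathcal{A}_{1})$ and to $C_{2}=\mathcal{A}_{2}\cup \mathcal{B}_{2}$, and to read off the statement about $C_{u}$ from computations already carried out inside the proof of Lemma~\ref{module-1}. First I would dispose of the easy ingredients: Lemma~\ref{X1} gives $\widetilde{X}_{1}\subseteq D_{1}$, the termination condition of the construction of $\widetilde{X}_{1}$ gives $N(\widetilde{X}_{1})\subseteq N(w)$, and it was already shown, when $\mathcal{B}_{2}$ was introduced, that $\mathcal{B}_{2}\subseteq D_{2}\subseteq V_{0}(u)$. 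Since $X_{1}=\widetilde{X}_{1}\cup V(\mathcal{A}_{1})$ and $C_{2}=\mathcal{A}_{2}\cup \mathcal{B}_{2}$, every assertion reduces to controlling the \emph{left} components $A_{1},\dots ,A_{k}$ of $\widetilde{C}_{2}$ (those with $P_{v}\ll _{R}P_{u}$ for all $v\in A_{i}$), which are exactly the components forming $\mathcal{A}_{1}$ and $\mathcal{A}_{2}$.

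The core step, from which both $X_{1}\subseteq D_{1}$ and $C_{2}\subseteq V_{0}(u)$ follow, is to prove that every such left component satisfies $A_{i}\subseteq V_{0}(u)$. Each $v\in A_{i}$ satisfies $P_{v}\ll _{R}P_{u}$ and, being a vertex of $\widetilde{C}_{2}$, lies outside $N[u]$, hence determines a connected component of $G\setminus N[u]$; since $V_{0}(u)$ is, by Lemma~\ref{two-components}, the unique component of $G\setminus N[u]$ carrying a covering vertex, the goal is to place $A_{i}$ inside that component. Note that $A_{i}$ contains \emph{no} vertex of $S_{2}$ (its vertices lie to the left of $P_{u}$, whereas $S_{2}$ vertices do not), so, unlike for $\mathcal{B}_{2}$, one cannot simply invoke that a connected subgraph of $G\setminus N[u]$ meeting $V_{0}(u)$ stays inside $V_{0}(u)$. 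Instead I would argue positionally: using that $A_{i}$ sits inside a component of $\widetilde{\widetilde{C}}_{2}$ meeting $S_{2}\subseteq V_{0}(u)$, and repeatedly applying the dichotomy of Lemma~\ref{intersecting-unbounded} (a vertex $v\notin N(u)$ whose $P_{v}$ meets $P_{u}$ is either in $Q_{u}$, excluded by Lemma~\ref{Qu-1}, or satisfies $N(u)\subseteq N(v)$ and hence $w\in N(v)$), together with the leftmost positions of $x_{1}$ and $x_{2}$ in $R_{T}$ (Lemma~\ref{x2-relative-position-in-S2}), to connect $A_{i}$ to $V_{0}(u)$ within $G\setminus N[u]$. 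Granting $A_{i}\subseteq V_{0}(u)$, the relation $P_{v}\ll _{R}P_{u}$ yields $\mathcal{A}_{1}\subseteq D_{1}$, whence $X_{1}\subseteq D_{1}$, and $\mathcal{A}_{2}\subseteq V_{0}(u)$, whence (with $\mathcal{B}_{2}$) $C_{2}\subseteq V_{0}(u)$.

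For $N_{1}(w)=N(X_{1})$ it suffices to prove $N(X_{1})\subseteq N(w)$. Since no vertex of $\mathcal{A}_{1}$ is adjacent to any vertex of $\widetilde{X}_{1}$, we have $N(X_{1})=N(\widetilde{X}_{1})\cup N(V(\mathcal{A}_{1}))$, and $N(\widetilde{X}_{1})\subseteq N(w)$ is already known, so only $N(V(\mathcal{A}_{1}))\subseteq N(w)$ remains. Here I would reuse the inclusion $N(V(\mathcal{A}_{1}))\subseteq N(\widetilde{X}_{1})\cup \widetilde{H}$ derived inside the proof of Lemma~\ref{module-1}: a vertex of $N(\widetilde{X}_{1})$ is already in $N(w)$, so the only case to examine is $v\in \widetilde{H}\cap N(V(\mathcal{A}_{1}))$. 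Such a $v$ lies in $N[u,w]\cap N(x_{2})$ and is adjacent to some $v^{\prime}\in V(\mathcal{A}_{1})$; since $v^{\prime}\in \widetilde{C}_{2}$ is outside $N[u,w]$ we get $v\neq u,w$, and using $v\in N(x_{2})$, $T_{x_{2}}\ll _{R_{T}}T_{w}$, and $v^{\prime}\in D_{1}$ (so $T_{v^{\prime}}\ll _{R_{T}}T_{u}$) from the previous paragraph, I would exclude $v\in N(u)\setminus N(w)$, leaving $v\in N(w)$. Finally, the characterization of $C_{u}$ is the most direct part: as $X_{1}\supseteq \widetilde{X}_{1}$, the component of $u$ in $G\setminus Q_{u}\setminus N[X_{1},x_{2}]$ is contained in $C_{u}$, and equality holds because the extra deleted set $N[V(\mathcal{A}_{1})]$ is disjoint from $V(C_{u})$ — indeed $V(\mathcal{A}_{1})\subseteq \widetilde{C}_{2}$ is disjoint from $C_{u}$, and $V(C_{u})\cap N(V(\mathcal{A}_{1}))=\emptyset$ was established within Lemma~\ref{module-1} — so neither a vertex nor an edge of $C_{u}$ is touched.

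The hard part will be the core step of the second paragraph: certifying that the left components $A_{i}$ of $\widetilde{C}_{2}$, which by construction are severed from the $S_{2}$ vertices of their $\widetilde{\widetilde{C}}_{2}$ component once $N[u,w]$ and $C_{u}$ are removed, nonetheless lie in $V_{0}(u)$. This is precisely where a purely combinatorial connectivity argument fails and one must exploit the projection representation $R$ and the trapezoid representation $R_{T}$ \emph{simultaneously}, combining the slope comparisons of Lemmas~\ref{unbounded-bounded},~\ref{unbounded-hovering}, and~\ref{intersecting-unbounded} with the extremal positions of $x_{1}$, $x_{2}$, and $w$. The residual exclusion in the third paragraph (ruling out $v\in N(u)\setminus N(w)$ for $v\in \widetilde{H}$) is of the same flavour and I expect it to be the second most delicate point.
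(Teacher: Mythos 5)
Your high-level architecture coincides with the paper's (reduce everything to the left components $A_{i}$, prove $A_{i}\subseteq V_{0}(u)$ as the core, then read off $X_{1}\subseteq D_{1}$ and $C_{2}\subseteq V_{0}(u)$), and your handling of the $C_{u}$ characterization is sound — citing $V(C_{u})\cap N(V(\mathcal{A}_{1}))=\emptyset$ from inside the proof of Lemma~\ref{module-1} is legitimate, since that lemma precedes this one. But the core step is a genuine gap, not just an omitted computation. The guiding idea you name — connecting $A_{i}$ to $V_{0}(u)$ through the component of $\widetilde{\widetilde{C}}_{2}$ that meets $S_{2}$ — does not work directly: the connecting path inside that component may pass through $N[u,w]\cup C_{u}$, and in particular through $N(u)$, so it cannot witness that $A_{i}$ lies in a component of $G\setminus N[u]$ containing a covering vertex. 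The paper instead argues vertex-by-vertex via a three-way case analysis on the position of $P_{v}$ ($v\in A_{i}$) relative to $P_{x}$ for $x\in \widetilde{X}_{1}$: when $P_{x}\ll _{R}P_{v}\ll _{R}P_{u}$ it exploits connectivity of $V_{0}(u)$ \emph{geometrically in $R$} (some $z\in V_{0}(u)$ must have $P_{z}$ intersecting $P_{v}$, because $x$ and $x_{2}$ lie on opposite sides of $P_{v}$), and the hardest case $P_{v}\ll _{R}P_{x}$ crucially needs the preliminary inclusion $N(A_{i})\setminus N[\widetilde{X}_{1}]\subseteq N(x_{2})$ — available to you via Lemma~\ref{module-1}'s first paragraph since $\widetilde{H}\subseteq N(x_{2})$, but deployed in your plan only in the $N(X_{1})\subseteq N(w)$ step, not here — plus a separate resolution of the sub-case $\widetilde{X}_{1}=\{x\}$ using $T_{x}\ll _{R_{T}}T_{x_{2}}\ll _{R_{T}}T_{w}$ and the minimality of $\phi _{u}$. (Minor: for $v\in \widetilde{C}_{2}$ the exclusion $v\notin Q_{u}$ is by definition of $\widetilde{C}_{2}$, not by Lemma~\ref{Qu-1}, which presupposes $v\in V_{0}(u)$ — exactly what is being proved.)

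The second concrete failure is in your third paragraph: the $R_{T}$-ingredients you name cannot exclude $v\in N(u)\setminus N(w)$ for $v\in \widetilde{H}\cap N(V(\mathcal{A}_{1}))$. The constraints $v\in N(u)\cap N(x_{2})\cap N(v^{\prime })$, $T_{v^{\prime }}\ll _{R_{T}}T_{u}$, $T_{x_{2}}\ll _{R_{T}}T_{u}$, $T_{x_{2}}\ll _{R_{T}}T_{w}$ admit a configuration with $v\notin N(w)$: take the interval model $x_{2}=[0,1]$, $v^{\prime }=[2,3]$, $u=[5,10]$, $v=[0.5,6]$, $w=[8,12]$, which satisfies every one of them. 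The missing piece is the position of $T_{v^{\prime }}$ relative to $T_{x_{2}}$, i.e.~$T_{v^{\prime }}\ll _{R_{T}}T_{x_{2}}$ — but that is Lemma~\ref{N(w)-2}, proved \emph{after} and \emph{from} the present lemma, so invoking it here would be circular. The argument must instead run in the projection representation, as the paper does: either $P_{v^{\prime }}$ intersects $P_{w}$, whence $v^{\prime }$ is unbounded with $N(v^{\prime })\subseteq N(w)$ by Lemma~\ref{intersecting-unbounded} and so $v\in N(w)$; or $P_{v^{\prime }}\ll _{R}P_{w}\ll _{R}P_{x_{2}}$ (the case $P_{w}\ll _{R}P_{v^{\prime }}\ll _{R}P_{u}$ contradicts $w\in N(u)$), whence $P_{v}$ straddles $P_{w}$ because $v\in N(v^{\prime })\cap N(x_{2})$, and since $v$ is bounded (the paper rules out unbounded $v$ via $\phi _{u}=\min \{\phi _{x}\ |\ x\in V_{U}\}$, giving $N(v)\subseteq N(u)$ and the contradiction $x_{2}\in N(u)$), it follows that $v\in N(w)$. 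So both delicate points you flagged are indeed the delicate points, but the specific routes you committed to for them either fail outright or are not yet arguments.
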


\begin{proof}
Recall first that $N(\widetilde{X}_{1})\subseteq N(w)$ by the construction
of the set $\widetilde{X}_{1}$. Consider an arbitrary component $A_{i}\in 
\mathcal{A}_{1}\cup \mathcal{A}_{2}=\{A_{1},A_{2},\ldots ,A_{k}\}$. Recall
that $v\notin N(x_{2})$ for every $v\in A_{i}$, since $P_{v}\ll _{R}P_{u}\ll
_{R}P_{x_{2}}$. We will prove now that $N(A_{i})\setminus N[\widetilde{X}%
_{1}]\subseteq N(x_{2})$. Suppose otherwise that there exists a vertex $v\in
A_{i}$ and a vertex $v^{\prime }\in N(v)\setminus N[\widetilde{X}_{1}]$,
such that $v^{\prime }\notin A_{i}$ and $v^{\prime }\notin N(x_{2})$. By
definition of $\widetilde{C}_{2}$ it follows that either $v^{\prime }\in
Q_{u}$, or $v^{\prime }\in N[u,w]$, or $v^{\prime }\in C_{u}$. Suppose that $%
v^{\prime }\in Q_{u}$. Then, $N(v^{\prime })\subset N(u)$, and thus $v\in
N(u)$, since $vv^{\prime }\in E$. This is a contradiction, since $P_{v}\ll
_{R}P_{u}$ for every $v\in A_{i}$, where $A_{i}\in \mathcal{A}_{1}\cup 
\mathcal{A}_{2}$. Therefore, either $v^{\prime }\in N[u,w]$ or $v^{\prime
}\in C_{u}$. Then, since $u,w\in C_{u}$ and $v^{\prime }\notin N(x_{2})$, it
follows by the definition of $C_{u}$ that always $v^{\prime }\in C_{u}$.
Thus, $v\in C_{u}$, since $v\in N(v^{\prime })$ and $v\notin N(x_{2})$,
which is a contradiction to definition of $\widetilde{C}_{2}$. Therefore, $%
N(A_{i})\setminus N[\widetilde{X}_{1}]\subseteq N(x_{2})$ for every $%
A_{i}\in \mathcal{A}_{1}\cup \mathcal{A}_{2}$. Therefore, in particular $N(V(%
\mathcal{A}_{1}))\setminus N[\widetilde{X}_{1}]\subseteq N(x_{2})$, and thus 
$(N(V(\mathcal{A}_{1}))\setminus N[\widetilde{X}_{1}])\cap N(x_{2})=N(V(%
\mathcal{A}_{1}))\setminus N[\widetilde{X}_{1}]$.

Recall that if a vertex $v\in N[\widetilde{X}_{1}]$, then $v\notin C_{u}$ by
definition of $C_{u}$. Moreover, as we have proved in the previous
paragraph, if a vertex $v\in N(V(\mathcal{A}_{1}))\setminus N[\widetilde{X}%
_{1}]$, then $v\in N(x_{2})$, and thus again $v\notin C_{u}$ by definition
of $C_{u}$. Therefore, since $X_{1}=\widetilde{X}_{1}\cup V(\mathcal{A}_{1})$%
, it follows that if a vertex $v\in N[X_{1}]$, then $v\notin C_{u}$. That
is, $C_{u}$ is the connected component of $G\setminus Q_{u}\setminus
N[X_{1},x_{2}]$, in which $u$ belongs.

Let $A_{i}\in \mathcal{A}_{1}$. Note that no vertex $v\in A_{i}$ is adjacent
to any vertex of $\widetilde{X}_{1}$. Indeed, otherwise $v\in N(w)$ by
definition of $\widetilde{X}_{1}$, which is a contradiction to the
definition of $\widetilde{C}_{2}$. Since $A_{i}\subseteq \widetilde{C}_{2}$
includes no vertex of $C_{u}$, it follows in particular that $v\notin N(w)$
for every $v\in A_{i}$. Indeed, otherwise $v\in C_{u}$, since also $v\notin
N(x_{2})$, which is a contradiction. Consider now a vertex $z\in
(N(A_{i})\setminus N[\widetilde{X}_{1}])\cap N(x_{2})$, i.e.~$z\in
(N(v)\setminus N[\widetilde{X}_{1}])\cap N(x_{2})$ and $z\notin A_{i}$, for
some $v\in A_{i}$. Suppose first that $P_{v}$ intersects $P_{w}$ in $R$.
Then, $v$ is unbounded and $\phi _{v}>\phi _{w}$, since $w$ is bounded, and
thus $N(v)\subseteq N(w)$ by Lemma~\ref{intersecting-unbounded}. Therefore,
in particular, $z\in N(w)$. Suppose now that $P_{v}$ does not intersect $%
P_{w}$ in $R$. Then, $P_{v}\ll _{R}P_{u}\ll _{R}P_{x_{2}}$ and $P_{v}\ll
_{R}P_{w}\ll _{R}P_{x_{2}}$, since $wu\in E$. Thus, $P_{z}$ intersects $%
P_{w} $ and $P_{u}$ in $R$, since $z\in N(v)\cap N(x_{2})$. If $z$ is
unbounded, then $\phi _{z}>\phi _{u}$, since $\phi _{u}=\min \{\phi _{x}\ |\
x\in V_{U}\}$ in $R$ by assumption. Therefore, $N(z)\subseteq N(u)$ by Lemma~%
\ref{intersecting-unbounded}, and thus $x_{2}\in N(u)$, which is a
contradiction. Therefore, $z$ is bounded, and thus $z\in N(w)$, since $P_{z}$
intersects $P_{w}$ in $R$ and both $z$ and $w$ are bounded. Summarizing, $%
z\in N(w)$ for every $z\in (N(A_{i})\setminus N[\widetilde{X}_{1}])\cap
N(x_{2})$. That is, $(N(A_{i})\setminus N[\widetilde{X}_{1}])\cap
N(x_{2})\subseteq N(w)$ for every $A_{i}\in \mathcal{A}_{1}$, i.e.~$(N(V(%
\mathcal{A}_{1}))\setminus N[\widetilde{X}_{1}])\cap N(x_{2})\subseteq N(w)$%
. Therefore, since $X_{1}=\widetilde{X}_{1}\cup V(\mathcal{A}_{1})$, and
since no vertex of $\mathcal{A}_{1}$ is adjacent to any vertex of $%
\widetilde{X}_{1}$, it follows that 
\begin{eqnarray}
N(X_{1}) &=&N(\widetilde{X}_{1})\cup (N(V(\mathcal{A}_{1}))\setminus N[%
\widetilde{X}_{1}])  \label{X1-eq} \\
&=&N(\widetilde{X}_{1})\cup ((N(V(\mathcal{A}_{1}))\setminus N[\widetilde{X}%
_{1}])\cap N(x_{2}))\subseteq N(w)  \notag
\end{eqnarray}%
since $(N(V(\mathcal{A}_{1}))\setminus N[\widetilde{X}_{1}])\cap
N(x_{2})=N(V(\mathcal{A}_{1}))\setminus N[\widetilde{X}_{1}]$ and $N(%
\widetilde{X}_{1})\subseteq N(w)$. That is, ${N(X_{1})\subseteq N(w)}$, i.e.~%
$N_{1}(w)=N(X_{1})$.

Let now $A_{i}\in \mathcal{A}_{1}\cup \mathcal{A}_{2}$, and let $v\in A_{i}$%
. Suppose first that $P_{x}\ll _{R}P_{v}$ for some $x\in \widetilde{X}_{1}$,
i.e.~$P_{x}\ll _{R}P_{v}\ll _{R}P_{u}\ll _{R}P_{x_{2}}$. Then, since $%
x,x_{2}\in V_{0}(u)$, and since $V_{0}(u)$ is connected, there exists a
vertex $z\in V_{0}(u)$, such that $P_{z}$ intersects $P_{v}$ in $R$. If $%
zv\in E$, then $v\in V_{0}(u)$, and thus $A_{i}\subseteq V_{0}(u)$. Let now $%
zv\notin E$. If $\phi _{z}>\phi _{v}$ then $N(z)\subseteq N(v)$ by Lemma~\ref%
{intersecting-unbounded}. Then, since $z\in V_{0}(u)$, and since $V_{0}(u)$
is connected with at least two vertices, $z$ has at least one neighbor $%
z^{\prime }\in V_{0}(u)$, and thus $z^{\prime }\in N(v)$. Then, $v\in
V_{0}(u)$, and thus $A_{i}\subseteq V_{0}(u)$. On the other hand, if $\phi
_{v}>\phi _{z}$, then $N(v)\subseteq N(z)$ by Lemma~\ref%
{intersecting-unbounded}. Furthermore, $v$ is unbounded, since otherwise $%
zv\in E$, which is a contradiction. If $N(v)\subseteq N(u)$, then $v\in
Q_{u} $, which is a contradiction to the definition of $\widetilde{%
\widetilde{C}}_{2}$. Suppose now that $N(v)\nsubseteq N(u)$, i.e.~$v$ has at
least one neighbor $v^{\prime }\notin N(u)$. Then, $v^{\prime }\in N(z)$,
since $N(v)\subseteq N(z)$. Therefore, $v^{\prime }\in V_{0}(u)$ and $v\in
V_{0}(u)$, and thus $A_{i}\subseteq V_{0}(u)$. Summarizing, if $P_{x}\ll
_{R}P_{v}$ for some $x\in \widetilde{X}_{1}$, then $A_{i}\subseteq V_{0}(u)$.

Suppose now that $P_{v}$ intersects $P_{x}$ in $R$, for some $x\in 
\widetilde{X}_{1}$. Recall that $\widetilde{X}_{1}\subseteq V_{0}(u)$ by
Lemma~\ref{X1}, and thus $x\in V_{0}(u)$. If $vx\in E$, then $v\in V_{0}(u)$%
, and thus $A_{i}\subseteq V_{0}(u)$. Let now $vx\notin E$. Then, it follows
similarly to the previous paragraph that $A_{i}\subseteq V_{0}(u)$.

Suppose finally that $P_{v}\ll _{R}P_{x}$, i.e.~$P_{v}\ll _{R}P_{x}\ll
_{R}P_{u}\ll _{R}P_{x_{2}}$, for some $x\in \widetilde{X}_{1}$. Recall that $%
N(A_{i})\setminus N[\widetilde{X}_{1}]\subseteq N(x_{2})$, and thus for
every vertex $v^{\prime }\in N(v)\setminus N[\widetilde{X}_{1}]$, such that $%
v^{\prime }\notin A_{i}$, it follows that $v^{\prime }\in N(x_{2})$.
Consider such a vertex $v^{\prime }$. Then, $P_{v^{\prime }}$ intersects $%
P_{u}$ and $P_{x}$ in $R$, since $P_{v}\ll _{R}P_{x}\ll _{R}P_{u}\ll
_{R}P_{x_{2}}$. Note that $v^{\prime }\notin N(x)$, since otherwise $%
v^{\prime }\in N(\widetilde{X}_{1})$, which is a contradiction to the
assumption that $v^{\prime }\in N(v)\setminus N[\widetilde{X}_{1}]$. Suppose
that $v^{\prime }\in N(u)$, and thus $v^{\prime }$ is bounded in $R$ and $%
\phi _{v^{\prime }}>\phi _{u}$. Then, since $v^{\prime }\notin N(x)$, it
follows that $x$ is unbounded and $\phi _{x}>\phi _{v^{\prime }}>\phi _{u}$.
Thus, $N(x)\subseteq N(v^{\prime })$ by Lemma~\ref{intersecting-unbounded}.
If $\widetilde{X}_{1}\neq \{x\}$, then $x$ has at least one neighbor $%
x^{\prime }$ in $\widetilde{X}_{1}$ and $x^{\prime }\in N(v^{\prime })$,
since $N(x)\subseteq N(v^{\prime })$. Thus, $v^{\prime }\in N(\widetilde{X}%
_{1})$, which is a contradiction to the assumption that $v^{\prime }\in
N(v)\setminus N[\widetilde{X}_{1}]$. Let $\widetilde{X}_{1}=\{x\}$ and $z\in
N(x)$. Then, $N(x)\subseteq N(w)$ by definition of $\widetilde{X}_{1}$, i.e.~%
$z\in N(w)$. Thus, since $T_{x}\ll _{R_{T}}T_{x_{2}}\ll _{R_{T}}T_{w}$, it
follows that $T_{z}$ intersects $T_{x_{2}}$ in $R_{T}$, i.e.~$z\in N(x_{2})$%
. Thus, $P_{z}$ intersects $P_{u}$ in $R$, since $P_{x}\ll _{R}P_{u}\ll
_{R}P_{x_{2}}$ and $z\in N(x)\cap N(x_{2})$. However, $z$ is bounded and $%
\phi _{z}>\phi _{x}>\phi _{u}$, since $x$ is unbounded. Thus, $zu\in E$,
i.e.~$z\in N(u)$. Since this holds for an arbitrary $z\in N(x)$, it follows
that $N(x)\subseteq N(u)$, and thus $x\in Q_{u}$, which is a contradiction
by Lemma~\ref{Qu-1}, since $\widetilde{X}_{1}=\{x\}\subseteq V_{0}(u)$.
Thus, $v^{\prime }\notin N(u)$ for every vertex $v^{\prime }\in
N(v)\setminus N[\widetilde{X}_{1}]$, such that $v^{\prime }\notin A_{i}$.
Therefore, since $v^{\prime }\in N(x_{2})$ for all such vertices $v^{\prime
} $, and since $x_{2}\in V_{0}(u)$, it follows that $v^{\prime },v\in
V_{0}(u)$, and thus $A_{i}\subseteq V_{0}(u)$.

Summarizing, $A_{i}\subseteq V_{0}(u)$ in every case, and thus $%
A_{i}\subseteq D_{1}$ for every component $A_{i}\in \mathcal{A}_{1}\cup 
\mathcal{A}_{2}$. Furthermore, recall that $\widetilde{X}_{1}\subseteq D_{1}$
by Lemma~\ref{X1}. Thus, since also $A_{i}\subseteq D_{1}$ for every
component $A_{i}\in \mathcal{A}_{1}$, it follows that $X_{1}=\widetilde{X}%
_{1}\cup V(\mathcal{A}_{1})\subseteq D_{1}$.

Recall now that $A_{j}\subseteq D_{2}$ for every component $A_{j}\in 
\mathcal{B}_{2}$, where $k+1\leq j\leq \ell $, and thus $A_{j}\subseteq
V_{0}(u)$ for every $A_{j}\in \mathcal{B}_{2}$. Therefore, since also $%
A_{i}\subseteq V_{0}(u)$ for every $A_{i}\in \mathcal{A}_{2}$, and since $%
C_{2}=\mathcal{A}_{2}\cup \mathcal{B}_{2}$, it follows that $C_{2}\subseteq
V_{0}(u)$. This completes the proof of the lemma.
\end{proof}

\begin{lemma}
\label{N(w)-2}For every $x\in X_{1}$, $T_{x}\ll _{R_{T}}T_{x_{2}}$ and $%
P_{x}\ll _{R}P_{w}$.
\end{lemma}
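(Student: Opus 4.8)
The plan is to treat Lemma~\ref{N(w)-2} as an extension of Lemma~\ref{X1}. Since $X_{1}=\widetilde{X}_{1}\cup V(\mathcal{A}_{1})$ and both conclusions $T_{x}\ll_{R_{T}}T_{x_{2}}$ and $P_{x}\ll_{R}P_{w}$ already hold for every $x\in\widetilde{X}_{1}$ by Lemma~\ref{X1}, it suffices to prove them for an arbitrary $x\in A_{i}$ with $A_{i}\in\mathcal{A}_{1}$. I would first collect the facts established in Lemma~\ref{N(w)-1}: such an $A_{i}$ is one of the components $A_{1},\dots,A_{k}$, so $P_{v}\ll_{R}P_{u}$ for every $v\in A_{i}$ (hence $v\notin N(u)$); moreover $X_{1}\subseteq D_{1}\subseteq V_{0}(u)$, no vertex of $A_{i}$ is adjacent to $w$ nor to any vertex of $\widetilde{X}_{1}$, and the containment $N(A_{i})\setminus N[\widetilde{X}_{1}]\subseteq N(x_{2})$ holds. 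Recall also that $x_{2}\in D_{2}$ with $P_{w}\ll_{R}P_{x_{2}}$, so $P_{v}\ll_{R}P_{u}\ll_{R}P_{x_{2}}$ for every $v\in A_{i}$.

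I would prove the trapezoid inequality first, as it does not depend on the projection part. Since $P_{v}\ll_{R}P_{u}\ll_{R}P_{x_{2}}$ gives $vx_{2}\notin E$ for every $v\in A_{i}$, no trapezoid of $A_{i}$ intersects $T_{x_{2}}$; because $A_{i}$ is connected and $\ll_{R_{T}}$ is transitive, all of $A_{i}$ lies on one side of $T_{x_{2}}$, i.e.\ either $T_{v}\ll_{R_{T}}T_{x_{2}}$ for all $v\in A_{i}$ or $T_{x_{2}}\ll_{R_{T}}T_{v}$ for all $v\in A_{i}$. The heart of the argument is to exclude the second alternative. Assuming $T_{x_{2}}\ll_{R_{T}}T_{v}$ for all $v\in A_{i}$, I would show that \emph{every} external neighbour $z\in N(A_{i})$ lies in $N(u)$. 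Indeed $z\notin Q_{u}$ (else a vertex of $A_{i}$ would fall in $N(u)$), so by $N(A_{i})\setminus N[\widetilde{X}_{1}]\subseteq N(x_{2})$ the vertex $z$ is adjacent either to $x_{2}$ directly, or to some $x'\in\widetilde{X}_{1}$ with $T_{x'}\ll_{R_{T}}T_{x_{2}}$; in the latter case the betweenness $T_{x'}\ll_{R_{T}}T_{x_{2}}\ll_{R_{T}}T_{v}$ together with $T_{z}\cap T_{x'}\neq\emptyset$ and $T_{z}\cap T_{v}\neq\emptyset$ forces $T_{z}\cap T_{x_{2}}\neq\emptyset$, so again $z\in N(x_{2})$. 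Having $z\in N(v)\cap N(x_{2})$ with $P_{v}\ll_{R}P_{u}\ll_{R}P_{x_{2}}$, the parallelogram $P_{z}$ can be neither completely left nor completely right of $P_{u}$, hence intersects $P_{u}$; ruling out that $z$ is unbounded via the minimality $\phi_{u}=\min\{\phi_{x}\mid x\in V_{U}\}$ and Lemma~\ref{intersecting-unbounded} (which would give $x_{2}\in N(u)$), I conclude $z\in N(u)$. Thus $A_{i}$ has no neighbour outside $N[u]$, so it is a full connected component of $G\setminus N[u]$ contained in $V_{0}(u)$; since $x_{2}\in V_{0}(u)\setminus A_{i}$, this contradicts the connectedness of $V_{0}(u)$ (Lemma~\ref{two-components}). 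Hence $T_{x}\ll_{R_{T}}T_{x_{2}}$ for every $x\in A_{i}$.

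With the trapezoid inequality in hand, I would establish $P_{x}\ll_{R}P_{w}$. Using $P_{x}\ll_{R}P_{u}$ and $r(u)<_{R}r(w)$ (Lemma~\ref{unbounded-bounded}) one sees $P_{w}\not\ll_{R}P_{x}$, so it remains to rule out that $P_{x}$ intersects $P_{w}$. If it did, then since $xw\notin E$ and $w$ is bounded, $x$ would be unbounded with $\phi_{x}>\phi_{w}$ and $N(x)\subseteq N(w)$ by Lemma~\ref{intersecting-unbounded}. The trapezoid inequality now gives $T_{x}\ll_{R_{T}}T_{x_{2}}\ll_{R_{T}}T_{w}$, so for every $z\in N(x)\subseteq N(w)$ the same betweenness argument yields $z\in N(x_{2})$, whence $N(x)\subseteq N(x_{2})$. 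Exactly as before, $z\in N(x)\cap N(x_{2})$ with $P_{x}\ll_{R}P_{u}\ll_{R}P_{x_{2}}$ forces $P_{z}$ to intersect $P_{u}$, and the minimality of $\phi_{u}$ forces $z$ to be bounded, hence $z\in N(u)$; thus $N(x)\subseteq N(u)$, i.e.\ $x\in Q_{u}$, contradicting Lemma~\ref{Qu-1} since $x\in V_{0}(u)$. Therefore $P_{x}$ does not intersect $P_{w}$ and $P_{x}\ll_{R}P_{w}$.

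The main obstacle, and the step I expect to require the most care, is the exclusion of the possibility $T_{x_{2}}\ll_{R_{T}}T_{v}$ for $v\in A_{i}$: this is precisely where both representations must be used at once, transferring adjacency information about $N(A_{i})$ from the trapezoid picture (via the betweenness of $T_{x_{2}}$) into the projection picture (via the spanning of $P_{u}$), so as to trap $A_{i}$ as an isolated component of $G\setminus N[u]$. Since the same ``spanning'' and ``betweenness'' observations recur in both parts, I would isolate them as two short auxiliary claims and then run the two case analyses against them to keep the write-up clean.
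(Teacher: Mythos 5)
Your reduction to $x\in A_{i}$ with $A_{i}\in \mathcal{A}_{1}$ is correct, and your second half ($P_{x}\ll_{R}P_{w}$) essentially reproduces the paper's argument; there your phrase ``the minimality of $\phi_{u}$ forces $z$ to be bounded, hence $z\in N(u)$'' is stated too loosely but is patchable, because $z\in N(x)$ with $x$ unbounded forces $z$ bounded \emph{and} $\phi_{z}>\phi_{x}>\phi_{w}>\phi_{u}$, which is what adjacency to the unbounded $u$ actually requires. The first half, however, has a genuine gap at exactly the step you single out as the heart of the argument. From $z\in N(v)\cap N(x_{2})$ and $P_{v}\ll_{R}P_{u}\ll_{R}P_{x_{2}}$ you correctly get $P_{z}\cap P_{u}\neq\emptyset$ and correctly exclude $z$ unbounded; but for a \emph{bounded} $z$, intersection with $P_{u}$ yields $z\in N(u)$ only if $\phi_{z}>\phi_{u}$. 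The case $\phi_{z}<\phi_{u}$ is not excluded: then $z\notin N(u)$ and Lemma~\ref{intersecting-unbounded} gives $N(u)\subseteq N(z)$, i.e.~$z$ is a bounded covering vertex of $u$ --- precisely the kind of vertex $u^{\ast}$ whose existence Lemma~\ref{bounded-hovering} guarantees. Such a $z$ lies in $V_{0}(u)\setminus N[u]$ and is adjacent to $A_{i}$, so your claim $N(A_{i})\subseteq N(u)$ fails, $A_{i}$ is not trapped as a full connected component of $G\setminus N[u]$, and no contradiction with Lemma~\ref{two-components} materializes. (When $v\in A_{i}$ is unbounded one gets $\phi_{z}>\phi_{v}>\phi_{u}$ as in the second half, but $A_{i}$ may consist of bounded vertices, and for those nothing constrains $\phi_{z}$.)

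A diagnostic symptom of the gap: your argument nowhere uses that $A_{i}\in\mathcal{A}_{1}$ rather than $\mathcal{A}_{2}$ --- every fact you invoke ($N(A_{i})\setminus N[\widetilde{X}_{1}]\subseteq N(x_{2})$, non-adjacency to $w$ and $\widetilde{X}_{1}$, $Q_{u}$-freeness, $A_{i}\subseteq V_{0}(u)$) holds for all components $A_{1},\ldots,A_{k}$. But the position $T_{x_{2}}\ll_{R_{T}}T_{v}$ cannot be ruled out for $\mathcal{A}_{2}$-components by any argument: the paper derives its contradiction \emph{from} the definition of the partition. Namely, assuming $T_{x_{2}}\ll_{R_{T}}T_{v}$ for all $v\in A_{i}$, it deduces $T_{x_{2}}\ll_{R_{T}}T_{v}\ll_{R_{T}}T_{u}$ and $T_{x_{2}}\ll_{R_{T}}T_{v}\ll_{R_{T}}T_{w}$, so every $z\in\widetilde{H}\subseteq N[u,w]\cap N(x_{2})$ has $T_{z}$ crossing $T_{v}$; hence every vertex of $A_{i}$ is adjacent to every vertex of $\widetilde{H}$, i.e.~$A_{i}\in\mathcal{A}_{2}$, contradicting $A_{i}\in\mathcal{A}_{1}$. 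In a hypothetical right-of-$T_{x_{2}}$ configuration the escape hatch is exactly the vertices of $\widetilde{H}\cap(N(w)\setminus N(u))$: they are bounded, their parallelograms meet $P_{u}$, and they must have slope below $\phi_{u}$ --- the covering vertices your deduction overlooks. So the missing ingredient is the defining property of $\mathcal{A}_{1}$; replacing your ``isolated component of $G\setminus N[u]$'' endgame by the ``$A_{i}\in\mathcal{A}_{2}$, contradiction'' endgame closes the gap and is in fact shorter. Your betweenness and spanning auxiliary claims, and the entire second half, can be kept as written modulo the slope remark above.
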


\begin{proof}
Consider a component $A_{i}\in \mathcal{A}_{1}$. Recall that $v\notin
N(x_{2})$ for every $v\in A_{i}$, since $P_{v}\ll _{R}P_{u}\ll _{R}P_{x_{2}}$%
. Thus, since $A_{i}$ is connected, either $T_{x_{2}}\ll _{R_{T}}T_{v}$ or $%
T_{v}\ll _{R_{T}}T_{x_{2}}$ for every vertex $v\in A_{i}$. Suppose that $%
T_{x_{2}}\ll _{R_{T}}T_{v}$ for every $v\in A_{i}$; let $v\in A_{i}$ be such
a vertex. Since $v\in X_{1}\subseteq V_{0}(u)$ by Lemma~\ref{N(w)-1}, it
follows that $T_{v}\ll _{R_{T}}T_{u}$. Recall that $v\notin N(u)\cup N(w)$
by definition of $\widetilde{C}_{2}$. Therefore, since $w\in N(u)$, it
follows that also $T_{v}\ll _{R_{T}}T_{w}$. Consider now a vertex $z\in 
\widetilde{H}=N[u,w]\cap N(x_{2})\setminus Q_{u}\setminus N[\widetilde{X}%
_{1}]$. Then, since $T_{x_{2}}\ll _{R_{T}}T_{v}\ll _{R_{T}}T_{u}$ and $%
T_{x_{2}}\ll _{R_{T}}T_{v}\ll _{R_{T}}T_{w}$, it follows that $T_{z}$
intersects $T_{v}$ in $R_{T}$, and thus $vz\in E$. Since this holds for
every vertex $v\in A_{i}$ and every vertex $z\in \widetilde{H}$, it follows
that $A_{i}\in \mathcal{A}_{2}$, which is a contradiction. Thus, $T_{v}\ll
_{R_{T}}T_{x_{2}}$ for every vertex $v\in A_{i}$, where $A_{i}\in \mathcal{A}%
_{1}$. Therefore, since also $T_{x}\ll _{R_{T}}T_{x_{2}}$ for every vertex $%
x\in \widetilde{X}_{1}$ by Lemma~\ref{X1}, it follows that $T_{x}\ll
_{R_{T}}T_{x_{2}}$ for every vertex $x\in X_{1}$.

We will prove now that $P_{v}\ll _{R}P_{w}$ for every $v\in A_{i}$, where $%
A_{i}\in \mathcal{A}_{1}$. Otherwise, suppose first that $P_{w}\ll _{R}P_{v}$
for some $v\in A_{i}$. Then, since $P_{v}\ll _{R}P_{u}$ for every $v\in
A_{i} $, it follows that $P_{w}\ll _{R}P_{v}\ll _{R}P_{u}$, and thus $%
w\notin N(u)$, which is a contradiction. Suppose now that $P_{v}$ intersects 
$P_{w}$ in $R $, for some $v\in A_{i}$. Then, since $v\notin N(w)$ by
definition of $\widetilde{C}_{2}$, and since $w$ is bounded, it follows that 
$v$ is unbounded and $\phi _{v}>\phi _{w}>\phi _{u}$. Thus, $N(v)\subseteq
N(w)$ by Lemma~\ref{intersecting-unbounded}. Let now $z\in N(v)\subseteq
N(w) $. Then, since $T_{v}\ll _{R_{T}}T_{x_{2}}\ll _{R_{T}}T_{w}$ (cf.~the
previous paragraph), it follows that $T_{z}$ intersects $T_{x_{2}}$ in $%
R_{T} $, i.e.~$z\in N(x_{2})$. Since this holds for an arbitrary $z\in N(v)$%
, it follows that also $N(v)\subseteq N(x_{2})$. Therefore, since $P_{v}\ll
_{R}P_{u}\ll _{R}P_{x_{2}}$, it follows that $P_{z}$ intersects $P_{u}$ in $%
R $ for every $z\in N(v)\subseteq N(x_{2})$. Furthermore, since $v$ is
unbounded, it follows that $z$ is bounded and $\phi _{z}>\phi _{v}>\phi _{u}$
for every $z\in N(v)$, and thus $N(v)\subseteq N(u)$. That is, $v\in Q_{u}$,
which is a contradiction by Lemma~\ref{Qu-1}, since $v\in A_{i}\subseteq
X_{1}\subseteq V_{0}(u)$. It follows that $P_{v}\ll _{R}P_{w}$ for every $%
v\in A_{i}$, where $A_{i}\in \mathcal{A}_{1}$. Therefore, since also $%
P_{x}\ll _{R}P_{w}$ for every vertex $x\in \widetilde{X}_{1}$ by Lemma~\ref%
{X1}, it follows that $P_{x}\ll _{R}P_{w}$ for every vertex $x\in X_{1}$.
This completes the lemma.
\end{proof}

\begin{lemma}
\label{N-Cu}Denote $N=N(X_{1})=N_{1}(w)$. Then, $N_{1}(u)\subset N$ and $%
N_{1}(x_{2})=N_{1}(v)=N$ for every bounded vertex $v\in C_{u}\setminus \{u\}$
in $R$.
\end{lemma}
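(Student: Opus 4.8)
The plan is to analyse an arbitrary vertex $z\in N=N(X_1)$ and to locate it with respect to $u$, $x_2$ and the component $C_u$, using the trapezoid representation $R_T$ for adjacencies (it \emph{is} an intersection model) and the projection representation $R$ only for slope and order information. First note that all the sets $N_1(\cdot)$ in the statement are defined: $x_2\notin X_1$ since $x_2\in D_2$ while $X_1\subseteq D_1$ by Lemma~\ref{N(w)-1} and $D_1\cap D_2=\emptyset$, and $u,v\notin X_1$ since $X_1\subseteq V_0(u)$ and $C_u$ avoids $N[X_1]$. I would repeatedly use the facts collected in Lemmas~\ref{X1},~\ref{N(w)-1} and~\ref{N(w)-2}, namely that $N\subseteq N(w)$, that $T_x\ll_{R_T}T_{x_2}\ll_{R_T}T_w$ and $P_x\ll_R P_u\ll_R P_{x_2}$ for every $x\in X_1$ (the latter because $X_1\subseteq D_1$ and $x_2\in D_2$), and that $T_{x_2}\ll_{R_T}T_u$ since $x_2\in V_0(u)$.

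For the equality $N_1(x_2)=N$ I would argue in $R_T$ alone: given $z\in N$ there is $x\in X_1$ with $z\in N(x)$, and also $z\in N(w)$; hence $T_z$ meets both $T_x$ and $T_w$, and since $T_x\ll_{R_T}T_{x_2}\ll_{R_T}T_w$, transitivity of $\ll_{R_T}$ forces $T_z$ to meet $T_{x_2}$, i.e.\ $z\in N(x_2)$, so $N\subseteq N(x_2)$. Next I would show every $z\in N$ is bounded: the edges $z\in N(x)$ and $z\in N(x_2)$ give $P_z\cap P_x\neq\emptyset$ and $P_z\cap P_{x_2}\neq\emptyset$ in $R$, and since $P_x\ll_R P_u\ll_R P_{x_2}$, transitivity of $\ll_R$ yields that $P_z$ intersects $P_u$ in $R$; were $z$ unbounded, then $\phi_u<\phi_z$ (as $\phi_u$ is minimum among $V_U$) and Lemma~\ref{intersecting-unbounded} would give $N(z)\subseteq N(u)$, whence $x\in N(z)\subseteq N(u)$, contradicting $x\in V_0(u)$. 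Thus $z$ is bounded, $P_z$ meets $P_u$, and (slopes being distinct) exactly one of two cases holds: either $\phi_z>\phi_u$ and then $z\in N(u)$, or $\phi_z<\phi_u$ and then $zu\notin E$ while Lemma~\ref{intersecting-unbounded} gives $N(u)\subseteq N(z)$, so that $z$ is a covering vertex of $u$ and $z\in V_0(u)$. This dichotomy splits $N$ into $N_1(u)=N\cap N(u)$ and $N\cap V_0(u)$.

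The inclusion $N_1(u)\subseteq N$ is immediate. For the strictness $N_1(u)\subset N$ I would invoke Corollary~\ref{N-X1-tilde-N-u}, which provides a vertex $z\in N(\widetilde{X}_1)\setminus N(u)$; since $N(\widetilde{X}_1)\subseteq N(X_1)=N$ by Lemma~\ref{N(w)-1}, this $z$ lies in $N\setminus N(u)$, so $N_1(u)\neq N$.

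The remaining and hardest part is $N_1(v)=N$, i.e.\ $N\subseteq N(v)$, for every bounded $v\in C_u\setminus\{u\}$. First I would localise $C_u$ in $R_T$: since $T_{x_2}\ll_{R_T}T_u$ and $T_{x_2}\ll_{R_T}T_w$ with $u,w\in C_u$, while $C_u$ meets neither $N(x_2)$ nor $\{x_2\}$, a crossing argument along a path in the connected set $C_u$ shows $T_{x_2}\ll_{R_T}T_v$ for every $v\in C_u$ (a crossing edge would either be comparable in $\ll_{R_T}$, contradicting adjacency, or would land in $N(x_2)$). Consequently, if $zv\notin E$ for some $z\in N$ then $T_z\ll_{R_T}T_v$, because $T_z$ meets $T_{x_2}\ll_{R_T}T_v$ rules out $T_v\ll_{R_T}T_z$. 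When $z$ is of the covering type and $v\in N(u)$ the edge $zv$ is free, since $N(u)\subseteq N(z)$. The genuine obstacle is the complementary situation --- $z\in N(u)$, or $z$ covering but $v\notin N(u)$ --- where the hard part will be to combine the two representations: starting from $z\in N(w)$ and $T_z\ll_{R_T}T_v$, I would walk along a path in $C_u$ from $w$ to $v$ to extract an edge $pp'$ of $C_u$ with $z\in N(p')$, $z\notin N(p)$ and $T_z\ll_{R_T}T_p$, and then derive a contradiction from the order of $P_z,P_u,P_{x_2}$ in $R$ together with the covering relation $N(u)\subseteq N(z)$ (resp.\ $z\in N(u)$), Lemma~\ref{intersecting-unbounded}, and the connectivity of $V_0(u)$ from Lemma~\ref{two-components}. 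Making this crossing argument yield a contradiction for both $z$-types, while reconciling the reversed order of $u$ and $x_2$ in $R$ versus $R_T$ and the fact that $R$ is not an intersection model, is the crux of the lemma.
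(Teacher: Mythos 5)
You handle the two easy claims correctly, and essentially as the paper does: $N_1(x_2)=N$ follows by sandwiching $T_z$ via $T_x\ll_{R_T}T_{x_2}\ll_{R_T}T_w$ (Lemma~\ref{N(w)-2}), and $N_1(u)\subset N$ follows from Corollary~\ref{N-X1-tilde-N-u} together with $N(\widetilde{X}_1)\subseteq N(X_1)$. Your preliminary dichotomy for $z\in N$ (that $P_z$ meets $P_u$ in $R$ because $P_x\ll_R P_u\ll_R P_{x_2}$, so $z$ is bounded and is either in $N(u)$ or a covering vertex of $u$) is also sound. But the heart of the lemma --- $N\subseteq N(v)$ for every bounded $v\in C_u\setminus\{u\}$ --- is not proved, and your own text concedes this (``the hard part will be\ldots'', ``is the crux of the lemma''). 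What you offer is only a plan: place $C_u$ to the right of $T_{x_2}$ in $R_T$ (correct, and also done in the paper), deduce $T_z\ll_{R_T}T_v$ when $zv\notin E$, extract along a $w$--$v$ path in $C_u$ an edge $pp'$ with $z\in N(p')$, $z\notin N(p)$, $T_z\ll_{R_T}T_p$, and ``derive a contradiction''. No contradiction is exhibited, and it is far from clear that one follows from this configuration alone: for instance, when $z$ is of covering type and $v\notin N(u)$, the facts $T_z\ll_{R_T}T_u$, $T_z\ll_{R_T}T_p$, and $T_z$ meeting $T_{p'}$ are mutually consistent, and turning them into a contradiction requires exactly the extra structure you have not supplied.

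The paper's actual proof of this part is keyed not to an $R_T$-path from $w$ to $v$ but to a five-way case analysis on the position of $P_v$ in $R$ relative to $P_x$ and $P_{x_2}$: sandwiched ($P_x\ll_R P_v\ll_R P_{x_2}$), intersecting $P_x$, intersecting $P_{x_2}$, beyond ($P_{x_2}\ll_R P_v$), and before ($P_v\ll_R P_x$). The intersecting cases are resolved by Lemma~\ref{intersecting-unbounded}, where the branch $N(v)\subseteq N(x)$ (resp.\ $N(v)\subseteq N(x_2)$) is excluded because $v$ would then be isolated in $G\setminus Q_u\setminus N[X_1,x_2]$ and hence outside $C_u$ --- a use of the definition of $C_u$ entirely absent from your sketch. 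The two outer cases need a path argument inside $C_u$ in $R$ (not in $R_T$), producing a vertex $v'$ whose parallelogram intersects $P_{x_2}$ (resp.\ $P_x$) with all intermediate vertices avoiding $N(u)\cup N(w)$; then a subcase split on $v'\in N(u)$ either forces $N_1(u)=N$ (contradicting the first part of the lemma, via $x_2$ or $x$ being unbounded with slope above $\phi_u$) or yields $T_{x_2}\ll_{R_T}T_v\ll_{R_T}T_w$ (resp.\ $T_x\ll_{R_T}T_v\ll_{R_T}T_w$), whereupon the sandwich argument finally gives $z\in N(v)$; the case $P_v\ll_R P_x$ contains a further subcase $\widetilde{X}_1=\{x\}$ dispatched through $N(x)\subseteq N(u)$ and Lemma~\ref{Qu-1}. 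None of this is replaceable by your single crossing-edge extraction, so the proposal has a genuine gap precisely where the lemma is hard.
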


\begin{proof}
First note that $N_{1}(u)\subseteq N$, since $N=N(X_{1})$ and $%
N_{1}(u)=N(u)\cap N(X_{1})$ by definition. Recall that $N(\widetilde{X}%
_{1})\subseteq N=N(X_{1})$ and that $N(\widetilde{X}_{1})\setminus N(u)\neq
\emptyset $ by Corollary~\ref{N-X1-tilde-N-u}. Therefore also $N\setminus
N(u)\neq \emptyset $, and thus $N_{1}(u)\subset N$.

Consider a vertex $z\in N$, i.e.~$z\in N(x)\cap N(w)$ for some $x\in X_{1}$
by Lemma~\ref{N(w)-1}. Then, since $T_{x}\ll _{R_{T}}T_{x_{2}}\ll
_{R_{T}}T_{w}$ by Lemma~\ref{N(w)-2}, it follows that $T_{z}$ intersects $%
T_{x_{2}}$ in $R_{T}$. Therefore, $z\in N(x_{2})$, and thus $z\in
N_{1}(x_{2})$. Since this holds for every $z\in N$, it follows that $%
N\subseteq N_{1}(x_{2})$. Thus, since by definition $N_{1}(x_{2})\subseteq N$%
, it follows that $N_{1}(x_{2})=N$.

Consider now a bounded vertex $v\in C_{u}$ in $R$ and a vertex $z\in N$.
Then, $z\in N(x)\cap N(x_{2})$ for some $x\in X_{1}$, since $N_{1}(x_{2})=N$
by the previous paragraph. Recall that $C_{u}$ is connected and that no
vertex of $C_{u}$ is adjacent to $x_{2}$ by the definition of $C_{u}$. Thus,
since $w\in C_{u}$ and $T_{x_{2}}\ll _{R_{T}}T_{w}$, it follows that $%
T_{x_{2}}$ lies in $R_{T}$ to the left of all trapezoids of the vertices of $%
C_{u}$; in particular, Lemma~\ref{N(w)-2} implies that $T_{x}\ll
_{R_{T}}T_{x_{2}}\ll _{R_{T}}T_{v}$ for every $x\in X_{1}$.

Suppose first that $P_{x}\ll _{R}P_{v}\ll _{R}P_{x_{2}}$. Then, $P_{z}$
intersects $P_{v}$ in $R$. Suppose that $z\notin N(v)$. Then, since $v$ is
bounded, it follows that $z$ is unbounded and $\phi _{z}>\phi _{v}$, and
thus~$N(z)\subseteq N(v)$ by Lemma~\ref{intersecting-unbounded}. Therefore,
since $x\in N(z)$, it follows that $x\in N(v)$, i.e.~$v\in N(X_{1})$, which
is a contradiction by Lemma~\ref{N(w)-1}. Thus, $z\in N(v)$.

Suppose now that $P_{v}$ intersects $P_{x}$ (resp.~$P_{x_{2}}$) in $R$.
Recall that, since $v\in C_{u}$, $v\notin N(x)$ by Lemma~\ref{N(w)-1} (resp.~%
$v\notin N(x_{2})$ by definition of $C_{u}$). Thus, either $N(v)\subseteq
N(x)$ or $N(x)\subseteq N(v)$ (resp.~$N(v)\subseteq N(x_{2})$ or $%
N(x_{2})\subseteq N(v)$) by Lemma~\ref{intersecting-unbounded}. If $%
N(v)\subseteq N(x)$ (resp.~$N(v)\subseteq N(x_{2})$), then $v$ is an
isolated vertex in $G\setminus Q_{u}\setminus N[X_{1},x_{2}]$, and thus $%
v\notin C_{u}$, since $v\neq u$, which is a contradiction. If $N(x)\subseteq
N(v)$ (resp.~$N(x_{2})\subseteq N(v)$), then $z\in N(v)$, since in
particular $z\in N(x)$ (resp.~$z\in N(x_{2})$). Note here that this
paragraph holds for both cases, where $v$ is a bounded or an unbounded
vertex in $R$.

Suppose that $P_{x_{2}}\ll _{R}P_{v}$. Then, $v\notin N(u)$ and $v\notin
N(w) $, since $P_{u}\ll _{R}P_{x_{2}}$ and $P_{w}\ll _{R}P_{x_{2}}$.
Furthermore, since $C_{u}$ is connected, there must exist a vertex $%
v^{\prime }$ of $C_{u} $, such that $P_{v^{\prime }}$ intersects $P_{x_{2}}$
in $R$, and a path $P$ from $v^{\prime }$ to $v$, where all intermediate
vertices are $v^{\prime \prime }\in C_{u}$, such that $P_{x_{2}}\ll
_{R}P_{v^{\prime \prime }}$, i.e.~$v^{\prime \prime }\notin N(u)$ and $%
v^{\prime \prime }\notin N(w)$. Recall that $v^{\prime }\notin N(x_{2})$ by
definition of $C_{u}$, since $v^{\prime }\in C_{u}$. Then, since $%
P_{v^{\prime }}$ intersects $P_{x_{2}}$ in $R$, it follows by the previous
paragraph that $z\in N(v^{\prime })$.

Let $v^{\prime }\in N(u)$, and thus $v^{\prime }$ is bounded and $\phi
_{v^{\prime }}>\phi _{u}$. Then, $x_{2}$ is unbounded and $\phi
_{x_{2}}>\phi _{v^{\prime }}>\phi _{u}$, since $v^{\prime }$ is bounded and $%
v^{\prime }\notin N(x_{2})$. Consider now an arbitrary $z^{\prime }\in N$.
Recall that $z^{\prime }\in N(x^{\prime })\cap N(x_{2})$ for some $x^{\prime
}\in X_{1}$, and thus $P_{z^{\prime }}$ intersects $P_{u}$ in $R$, since $%
P_{x^{\prime }}\ll _{R}P_{u}\ll _{R}P_{x_{2}}$. Furthermore, $z^{\prime }$
is bounded and $\phi _{z}>\phi _{x_{2}}>\phi _{u}$, since $x_{2}$ is
unbounded. Thus, $z^{\prime }\in N(u)$. Since this holds for an arbitrary $%
z^{\prime }\in N$, it follows that $N_{1}(u)=N$, which is a contradiction.

Let $v^{\prime }\notin N(u)$. Since $v,v^{\prime }\notin N(u)$, and since $%
v^{\prime \prime }\notin N(u)$ for all intermediate vertices $v^{\prime
\prime }$ of the path $P$, it follows that either $T_{u}\ll
_{R_{T}}T_{v^{\prime }}$ and $T_{u}\ll _{R_{T}}T_{v}$, or $T_{v^{\prime
}}\ll _{R_{T}}T_{u}$ and $T_{v}\ll _{R_{T}}T_{u}$. Recall that $z\in
N(v^{\prime })$. Therefore, if $T_{u}\ll _{R_{T}}T_{v^{\prime }}$, then $%
T_{z}$ intersects $T_{u}$ in $R_{T}$, i.e.~$z\in N(u)$, since in this case $%
T_{x_{2}}\ll _{R_{T}}T_{u}\ll _{R_{T}}T_{v^{\prime }}$ and $z\in N(v^{\prime
})\cap N(x_{2})$. Since this holds for an arbitrary $z\in N$, it follows
that $N_{1}(u)=N$, which is a contradiction. Thus, $T_{v^{\prime }}\ll
_{R_{T}}T_{u}$ and $T_{v}\ll _{R_{T}}T_{u}$. Since $v\notin N(w)$, $T_{w}$
does not intersect $T_{v}$ in $R_{T}$, i.e.~either $T_{w}\ll _{R_{T}}T_{v}$
or $T_{v}\ll _{R_{T}}T_{w}$. If $T_{w}\ll _{R_{T}}T_{v}$, then $T_{w}\ll
_{R_{T}}T_{v}\ll _{R_{T}}T_{u}$, and thus $w\notin N(u)$, which is a
contradiction. Therefore, $T_{v}\ll _{R_{T}}T_{w}$, i.e.~$T_{x_{2}}\ll
_{R_{T}}T_{v}\ll _{R_{T}}T_{w}$. Thus, $T_{z}$ intersects $T_{v}$ in $R_{T}$%
, i.e.~$z\in N(v)$, since $z\in N(x_{2})\cap N(w)$.

Suppose finally that $P_{v}\ll _{R}P_{x}$. Then, $v\notin N(u)$ and $v\notin
N(w)$, since $P_{x}\ll _{R}P_{u}$ and $P_{x}\ll _{R}P_{w}$. Furthermore,
since $C_{u}$ is connected, there must exist a vertex $v^{\prime }$ of $%
C_{u} $, such that $P_{v^{\prime }}$ intersects $P_{x}$ in $R$, and a path $%
P $ from $v^{\prime }$ to $v$, where all intermediate vertices are $%
v^{\prime \prime }\in C_{u}$, such that $P_{v^{\prime \prime }}\ll
_{R}P_{x_{1}}$, i.e.~$v^{\prime \prime }\notin N(u)$ and $v^{\prime \prime
}\notin N(w)$. Recall that $v^{\prime }\notin N(x)$ by Lemma~\ref{N(w)-1},
since $v^{\prime }\in C_{u}$. Then, since $P_{v^{\prime }}$ intersects $%
P_{x} $ in $R$, it follows (similarly to the above case where $P_{v}$
intersects $P_{x}$ in $R$) that $z\in N(v^{\prime })$.

Let $v^{\prime }\in N(u)$, and thus $v^{\prime }$ is bounded and $\phi
_{v^{\prime }}>\phi _{u}$. Then, $x$ is unbounded and $\phi _{x}>\phi
_{v^{\prime }}>\phi _{u}$, since $v^{\prime }$ is bounded and $v^{\prime
}\notin N(x)$. Thus $N(x)\subseteq N(v^{\prime })$ by Lemma~\ref%
{intersecting-unbounded}. Since $x\in X_{1}$, either $x\in \widetilde{X}_{1}$
or $x\in A_{i}$ for some $A_{i}\in \mathcal{A}_{1}$. Let $x\in \widetilde{X}%
_{1}$ (resp.~$x\in A_{i}$ for some $A_{i}\in \mathcal{A}_{1}$). If $%
\widetilde{X}_{1}\neq \{x\}$ (resp.~$A_{i}\neq \{x\}$), then $x$ has at
least one neighbor $x^{\prime }$ in $\widetilde{X}_{1}$ (resp.~in $A_{i}$)
and $x^{\prime }\in N(v^{\prime })$, since $N(x)\subseteq N(v^{\prime })$.
Thus, $v^{\prime }\in N(X_{1})$, which is a contradiction by Lemma~\ref%
{N(w)-1}, since $v^{\prime }\in C_{u}$. If $\widetilde{X}_{1}=\{x\}$ (resp.~$%
A_{i}=\{x\}$), then $\{x\}$ is a connected component of $X_{1}$. Therefore, $%
z^{\prime }\notin X_{1}$ for every neighbor $z^{\prime }\in N(x)$, and thus $%
N(x)\subseteq N(x_{2})$, since $N_{1}(x_{2})=N(X_{1})$, as we proved above.
That is, $P_{z^{\prime }}$ intersects $P_{u}$ for every $z^{\prime }\in N(x)$%
, since in this case $P_{x}\ll _{R}P_{u}\ll _{R}P_{x_{2}}$ and $z^{\prime
}\in N(x)\cap N(x_{2})$. However, $z^{\prime }$ is bounded and $\phi
_{z^{\prime }}>\phi _{x}>\phi _{u}$, since $x$ is unbounded. Thus, $%
z^{\prime }\in N(u)$ for every $z^{\prime }\in N(x)$. That is, $%
N(x)\subseteq N(u)$, and thus $x\in Q_{u}$, which is a contradiction by
Lemma~\ref{Qu-1}, since $x\in X_{1}\subseteq V_{0}(u)$.

Let $v^{\prime }\notin N(u)$. Since $v,v^{\prime }\notin N(u)$, and since $%
v^{\prime \prime }\notin N(u)$ for all intermediate vertices $v^{\prime
\prime }$ of the path $P$, it follows that either $T_{u}\ll
_{R_{T}}T_{v^{\prime }}$ and $T_{u}\ll _{R_{T}}T_{v}$, or $T_{v^{\prime
}}\ll _{R_{T}}T_{u}$ and $T_{v}\ll _{R_{T}}T_{u}$. Recall that $z\in
N(v^{\prime })$. Therefore, if $T_{u}\ll _{R_{T}}T_{v^{\prime }}$, then $%
T_{z}$ intersects $T_{u}$ in $R_{T}$, i.e.~$z\in N(u)$, since in this case $%
T_{x}\ll _{R_{T}}T_{u}\ll _{R_{T}}T_{v^{\prime }}$ and $z\in N(x)\cap
N(v^{\prime })$. Since this holds for an arbitrary $z\in N$, it follows that 
$N_{1}(u)=N$, which is a contradiction. Thus, $T_{v^{\prime }}\ll
_{R_{T}}T_{u}$ and $T_{v}\ll _{R_{T}}T_{u}$. Since $v\notin N(w)$, $T_{w}$
does not intersect $T_{v}$ in $R_{T}$, i.e.~either $T_{w}\ll _{R_{T}}T_{v}$
or $T_{v}\ll _{R_{T}}T_{w}$. If $T_{w}\ll _{R_{T}}T_{v}$, then $T_{w}\ll
_{R_{T}}T_{v}\ll _{R_{T}}T_{u}$, and thus $w\notin N(u)$, which is a
contradiction. Therefore, $T_{v}\ll _{R_{T}}T_{w}$, i.e.~$T_{x}\ll
_{R_{T}}T_{v}\ll _{R_{T}}T_{w}$. Thus, $T_{z}$ intersects $T_{v}$ in $R_{T}$%
, i.e.~$z\in N(v)$, since $z\in N(x)\cap N(w)$.

Summarizing, $z\in N(v)$ for any $z\in N$ and any bounded vertex $v$ of $%
C_{u}$ in $R$, i.e.~$N\subseteq N_{1}(v)$. Then, since $N_{1}(v)\subseteq
N(X_{1})=N$, it follows that $N_{1}(v)=N$ for every bounded vertex $v$ of $%
C_{u}$ in $R$. This completes the proof of the lemma.
\end{proof}

\medskip

The next two lemmas follow easily and will be used in the sequel.

\begin{lemma}
\label{foreigner-not-inside-1}Let $v\in V\setminus Q_{u}\setminus
N[u]\setminus V_{0}(u)$. Then, either $P_{x_{2}}\ll _{R}P_{v}$ or $P_{v}\ll
_{R}P_{x}$ for every $x\in X_{1}$.
\end{lemma}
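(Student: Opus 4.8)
The lemma asserts that any vertex $v$ lying outside $Q_u \cup N[u] \cup V_0(u)$ cannot be ``nested between'' the structure $X_1$ and the vertex $x_2$ in the projection representation $R$. More precisely, it must either sit strictly to the right of $P_{x_2}$, or strictly to the left of every parallelogram $P_x$ with $x \in X_1$. Recall the setup: we are working with the smallest graph $G$ in $\textsc{(Tolerance} \cap \textsc{Trapezoid)} \setminus \textsc{Bounded Tolerance}$, $u$ is the unbounded vertex of minimum slope (assumed unbounded-maximal), all trapezoids of $V_0(u)$ lie left of $T_u$, $X_1 \subseteq D_1$ with $P_x \ll_R P_u \ll_R P_{x_2}$ for $x \in X_1$, and $x_2 \in D_2$.

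**The plan.** Let $v \in V \setminus Q_u \setminus N[u] \setminus V_0(u)$ be arbitrary. Since $v \notin N[u]$, in particular $vu \notin E$. The first thing I would check is the relative position of $P_v$ and $P_u$ in $R$: I claim $P_v$ cannot intersect $P_u$. Indeed, if it did, then since $vu \notin E$ one of the two conclusions of Lemma~\ref{intersecting-unbounded} would apply depending on slopes: if $\phi_v > \phi_u$ then $v$ must be unbounded (else $v \in N(u)$) and $N(v) \subseteq N(u)$, forcing $v \in Q_u$, a contradiction; if $\phi_v < \phi_u$ then $N(u) \subseteq N(v)$, so $w \in N(v)$, but then $v$ would be adjacent to a vertex of $C_u$ via $w$ in the relevant component structure, and more directly, having $N(u) \subseteq N(v)$ makes $v$ a covering vertex of $u$, i.e.\ $v \in V_0(u)$, again a contradiction. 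Hence $P_v$ does \emph{not} intersect $P_u$, so either $P_v \ll_R P_u$ or $P_u \ll_R P_v$.

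**Handling the two sides.** Suppose first $P_u \ll_R P_v$. Then I want to conclude $P_{x_2} \ll_R P_v$. Since $P_u \ll_R P_{x_2}$ already, I would argue that $P_v$ cannot intersect $P_{x_2}$ nor have $P_v \ll_R P_{x_2}$ with $P_v$ sitting strictly between. The key is that $x_2 \in V_0(u)$ while $v \notin V_0(u)$, so $v x_2$'s relationship is constrained: if $P_v$ intersected $P_{x_2}$ then, since $v \notin N(x_2)$ would need checking (as $v \notin V_0(u)$ but $x_2 \in V_0(u)$, adjacency would tend to pull $v$ into $V_0(u)$), I'd use Lemma~\ref{intersecting-unbounded} on the slopes to derive either $v \in V_0(u)$ or $v \in Q_u$, both contradictions. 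The remaining subcase $P_v \ll_R P_{x_2}$ with $P_u \ll_R P_v$ (i.e.\ $v$ strictly between $P_u$ and $P_{x_2}$) is the delicate one, since then I must show $v$ inherits enough of the covering structure to land in $V_0(u)$ — I would exploit connectivity of $V_0(u)$ together with $T_x \ll_{R_T} T_{x_2}$ (Lemma~\ref{N(w)-2}) and the common neighborhood $N$ (Lemma~\ref{N-Cu}). For the symmetric case $P_v \ll_R P_u$, I would show $P_v \ll_R P_x$ for every $x \in X_1$ using $X_1 \subseteq D_1$ and the fact that $v \notin V_0(u)$, again ruling out intersection with any $P_x$ via Lemma~\ref{intersecting-unbounded}.

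**Anticipated obstacle.** The main difficulty is precisely the ``strictly between'' configuration where $P_u \ll_R P_v \ll_R P_{x_2}$ (or symmetrically $P_x \ll_R P_v$ for some but not all $x$), because position alone does not immediately contradict $v \notin V_0(u)$. Here I expect to need both representations simultaneously: the projection ordering forces certain intersections with vertices of $N(X_1) = N_1(w)$, and then the trapezoid ordering $T_x \ll_{R_T} T_{x_2} \ll_{R_T} T_w$ forces those common neighbors into $N(u)$, contradicting $N_1(u) \subset N$ from Lemma~\ref{N-Cu}. Threading the slope inequalities (using that $\phi_u$ is minimal among unbounded vertices) through this double bookkeeping is where the real work lies; the rest is a routine case split on whether $v$ is bounded or unbounded and on the sign of $\phi_v - \phi_u$.
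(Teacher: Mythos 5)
Your opening moves are sound and essentially match the paper's toolkit: you rule out $P_{v}$ intersecting $P_{u}$, $P_{x_{2}}$, or any $P_{x}$ with $x\in X_{1}$ via the slope dichotomy of Lemma~\ref{intersecting-unbounded}, where the smaller slope yields $N(\cdot)\subseteq N(v)$ and connectivity of $V_{0}(u)$ (Lemma~\ref{two-components}) hands $v$ a neighbor in $V_{0}(u)$, while the larger slope makes $v$ unbounded and the hypothesis $v\notin Q_{u}$ supplies a vertex $z\in N(v)\setminus N(u)$ that lands in $V_{0}(u)$ --- in each case contradicting $v\notin V_{0}(u)$. (One small point: in your first step, $N(v)\subseteq N(u)$ only puts $v$ in $Q_{u}$ after Lemma~\ref{not-equal} upgrades the inclusion to a strict one; this is available here since $R$ is canonical and $G$ is vertex-minimal in the ambient proof. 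Note also that the paper never needs your preliminary positioning of $P_{v}$ relative to $P_{u}$.)

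The genuine gap is the case you yourself flag as ``where the real work lies'': the configuration $P_{x}\ll_{R}P_{v}\ll_{R}P_{x_{2}}$, which survives on \emph{both} sides of your case split (ruling out intersections in the $P_{v}\ll_{R}P_{u}$ case still leaves $P_{x}\ll_{R}P_{v}\ll_{R}P_{u}$ open for some $x$). You leave this unproven and sketch a route through the trapezoid representation --- forcing vertices of $N=N(X_{1})$ to cross $P_{v}$ and then contradicting $N_{1}(u)\subset N$ via $T_{x}\ll_{R_{T}}T_{x_{2}}\ll_{R_{T}}T_{w}$ --- but as sketched this does not close: nothing in your outline pins down $T_{v}$ in $R_{T}$ or shows that all of $N$ gets pushed into $N(u)$. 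Moreover, the machinery is unnecessary: the paper settles this case entirely inside $R$ with one observation you only brush against. Since $x,x_{2}\in V_{0}(u)$ and $V_{0}(u)$ is connected, a path in $V_{0}(u)$ from $x$ (whose parallelogram lies left of $P_{v}$) to $x_{2}$ (whose parallelogram lies right of $P_{v}$) must contain a vertex $y$ with $P_{y}$ intersecting $P_{v}$; then exactly the slope dichotomy you already deployed, with $v\notin Q_{u}$ again providing $z\in N(v)\setminus N(u)$, forces $v\in V_{0}(u)$, a contradiction. Your $N$-based idea could in fact be salvaged within $R$ alone --- any $z\in N\setminus N(u)$ (nonempty by Lemma~\ref{N-Cu}) lies in $V_{0}(u)$ and, since $z\in N(x)\cap N(x_{2})$ by Lemma~\ref{N-Cu}, has $P_{z}$ crossing $P_{v}$, so it can play the role of $y$ --- but that path-crossing step, not the trapezoid bookkeeping, is the missing idea, and without it your argument does not conclude.
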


\begin{proof}
Let $v\in V\setminus Q_{u}\setminus N[u]\setminus V_{0}(u)$. Recall that $%
X_{1}\subseteq V_{0}(u)$ by Lemma~\ref{N(w)-1} and that $x_{2}\in V_{0}(u)$
by definition of $x_{2}$. Suppose first that $P_{v}$ intersects $P_{x}$, for
some $x\in X_{1}$ (resp.~$P_{v}$ intersects $P_{x_{2}}$). If $v\in N(x)$
(resp.~$v\in N(x_{2})$), then $v\in V_{0}(u)$, since also $v\notin N(u)$,
which is a contradiction. Therefore, $v\notin N(x)$ (resp.~$v\notin N(x_{2})$%
). If $\phi _{x}>\phi _{v}$ (resp.~$\phi _{x_{2}}>\phi _{v}$), then $%
N(x)\subseteq N(v)$ (resp.~$N(x_{2})\subseteq N(v)$) by Lemma~\ref%
{intersecting-unbounded}. Then, since $x$ (resp.~$x_{2}$) is not the only
vertex of $V_{0}(u)$, and since $V_{0}(u)$ is connected, it follows that $x$
(resp.~$x_{2}$) is adjacent to another vertex $q\in V_{0}(u)$. Therefore $%
q\in N(v)$, since $N(x)\subseteq N(v)$ (resp.~$N(x_{2})\subseteq N(v)$), and
thus~also $v\in V_{0}(u)$, which is a contradiction. If $\phi _{x}<\phi _{v}$
(resp.~$\phi _{x_{2}}<\phi _{v}$), then $N(v)\subseteq N(x)$ (resp.~$%
N(v)\subseteq N(x_{2})$) by Lemma~\ref{intersecting-unbounded}. Then, in
particular, $v$ is unbounded, since otherwise $v\in N(x)$ (resp.~$v\in
N(x_{2})$), which is a contradiction. Since $v\notin Q_{u}$ by the
assumption on $v$, there exists at least one vertex $z\in N(v)\setminus N(u)$%
. Therefore, $z\in N(x)$ (resp.~$z\in N(x_{2})$), since $N(v)\subseteq N(x)$
(resp.~$N(v)\subseteq N(x_{2})$), and thus $z\in V_{0}(u)$ and $v\in
V_{0}(u) $, which is a contradiction. Thus, $P_{v}$ does not intersect $%
P_{x_{2}}$ or $P_{x}$, for any $x\in X_{1}$.

Suppose now that $P_{x}\ll _{R}P_{v}\ll _{R}P_{x_{2}}$ for some $x\in X_{1}$%
. Then, since $x_{2}\in V_{0}(u)$ and $x\in X_{1}\subseteq V_{0}(u)$, and
since $V_{0}(u)$ is connected, there exists a vertex $y\in V_{0}(u)$, such
that $P_{y}$ intersects $P_{v}$ in $R$. Then $v\notin N(y)$, since otherwise 
$v\in V_{0}(u)$, which is a contradiction. If $\phi _{y}>\phi _{v}$, then $%
N(y)\subseteq N(v)$ by Lemma~\ref{intersecting-unbounded}. Since $V_{0}(u)$
is connected with at least two vertices, there exists at least one neighbor $%
q\in V_{0}(u)$ of $y$. Then $q\in N(v)$, since $N(y)\subseteq N(v)$, and
thus $v\in V_{0}(u)$, which is a contradiction. If $\phi _{y}<\phi _{v}$,
then $N(v)\subseteq N(y)$ by Lemma~\ref{intersecting-unbounded}. Then, in
particular, $v$ is unbounded, since otherwise $v\in N(y)$, which is a
contradiction. Since $v\notin Q_{u}$ by the assumption on $v$, there exists
at least one vertex $z\in N(v)\setminus N(u)$. Therefore, $z\in N(y)$, since 
$N(v)\subseteq N(y)$, and thus $z\in V_{0}(u)$ and $v\in V_{0}(u)$, which is
again a contradiction.

Therefore, if $v\in V\setminus Q_{u}\setminus N[u]\setminus V_{0}(u)$, then
either $P_{x_{2}}\ll _{R}P_{v}$ or $P_{v}\ll _{R}P_{x}$ for every $x\in
X_{1} $. This completes the proof of the lemma.
\end{proof}

\begin{lemma}
\label{V0-middle}For every $v\in V\setminus N[u]\setminus V_{0}(u)$, either $%
T_{x_{2}}\ll _{R_{T}}T_{v}$ or $T_{v}\ll _{R_{T}}T_{x}$ for every $x\in
X_{1} $.
\end{lemma}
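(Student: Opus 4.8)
The plan is to exploit the decisive advantage of the trapezoid representation over the projection representation: $R_T$ is a genuine intersection model, so two trapezoids intersect in $R_T$ if and only if the corresponding vertices are adjacent in $G$. This makes Lemma~\ref{V0-middle} the easy trapezoid analogue of its projection counterpart Lemma~\ref{foreigner-not-inside-1}, and I would prove it directly, without appealing to slopes or to Lemma~\ref{intersecting-unbounded}, and in fact uniformly for all $v$ (including those in $Q_u$).

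First I would record the elementary fact that $v$ is adjacent to no vertex of $V_0(u)$. Indeed, $v\in V\setminus N[u]\setminus V_0(u)$, so $v$ is a vertex of $G\setminus N[u]$; if $v$ were adjacent to some $y\in V_0(u)$, then $v$ and $y$ would lie in the same connected component of $G\setminus N[u]$, and since $V_0(u)$ is a union of entire such components (Definition~\ref{hovering}), this would force $v\in V_0(u)$, a contradiction. In particular, since $X_1\subseteq V_0(u)$ (Lemma~\ref{N(w)-1}) and $x_2\in V_0(u)$, the trapezoid $T_v$ intersects neither $T_{x_2}$ nor any $T_x$ with $x\in X_1$; hence each of these pairs is comparable under the partial order $\ll_{R_T}$.

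It then remains to rule out a ``middle'' position of $T_v$. If $T_{x_2}\ll_{R_T}T_v$, the first alternative holds and we are done; so I assume $T_v\ll_{R_T}T_{x_2}$ and aim to show $T_v\ll_{R_T}T_x$ for every $x\in X_1$. Suppose not: some $x\in X_1$ satisfies $T_x\ll_{R_T}T_v$, whence $T_x\ll_{R_T}T_v\ll_{R_T}T_{x_2}$. Recall that $V_0(u)\neq\emptyset$ (Lemma~\ref{bounded-hovering}) and that $V_0(u)$ is connected (Lemma~\ref{two-components}); since $x,x_2\in V_0(u)$, fix a path $x=p_0,p_1,\ldots,p_m=x_2$ inside $V_0(u)$, so that $p_ip_{i+1}\in E$, i.e.\ $T_{p_i}$ intersects $T_{p_{i+1}}$ in $R_T$, for each $i$. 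Since $T_{p_0}=T_x\ll_{R_T}T_v$ while $T_v\ll_{R_T}T_{p_m}=T_{x_2}$, let $i^{\ast}\geq 1$ be the smallest index with $T_{p_{i^{\ast}}}\not\ll_{R_T}T_v$; then $T_{p_{i^{\ast}-1}}\ll_{R_T}T_v$. If we had $T_v\ll_{R_T}T_{p_{i^{\ast}}}$, transitivity of $\ll_{R_T}$ would give $T_{p_{i^{\ast}-1}}\ll_{R_T}T_{p_{i^{\ast}}}$, contradicting the adjacency $p_{i^{\ast}-1}p_{i^{\ast}}\in E$; hence $T_{p_{i^{\ast}}}$ must intersect $T_v$, giving $p_{i^{\ast}}v\in E$ with $p_{i^{\ast}}\in V_0(u)$. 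This contradicts the previous paragraph, so no such $x$ exists and the second alternative holds.

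The main (and essentially only) obstacle is this betweenness step: verifying that along a connecting path in the connected set $V_0(u)$ some trapezoid must \emph{actually} intersect $T_v$ rather than ``jumping over'' it. This is exactly where the intersection-model property of $R_T$ is indispensable, for each path edge forces a genuine intersection, so a walk cannot pass from the left of $T_v$ to its right without hitting $T_v$. I would emphasise that, in contrast to the module argument elsewhere in this proof, this lemma needs neither Condition~\ref{ass3} nor the minimality of $\phi_u$; it is a purely order-theoretic consequence of the connectivity of $V_0(u)$ in the intersection model $R_T$, which is precisely why it is far shorter than the corresponding projection-representation statement Lemma~\ref{foreigner-not-inside-1}.
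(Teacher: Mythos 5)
Your proof is correct and takes essentially the same route as the paper's: you first establish that $T_{v}$ intersects neither $T_{x_{2}}$ nor any $T_{x}$ with $x\in X_{1}$ (since such an intersection, together with $v\notin N[u]$, would force $v\in V_{0}(u)$ because $V_{0}(u)$ consists of whole connected components of $G\setminus N[u]$), and then exclude the middle position $T_{x}\ll _{R_{T}}T_{v}\ll _{R_{T}}T_{x_{2}}$ using the connectivity of $V_{0}(u)$. The only difference is one of detail: your path-and-minimal-index betweenness argument simply spells out the step the paper asserts in a single line, namely that connectivity of $V_{0}(u)$ forces some $T_{z}$ with $z\in V_{0}(u)$ to intersect $T_{v}$.
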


\begin{proof}
Let $v\in V\setminus N[u]\setminus V_{0}(u)$. Recall first that $%
X_{1}\subseteq V_{0}(u)$ by Lemma~\ref{N(w)-1} and that $x_{2}\in V_{0}(u)$
by definition of $x_{2}$. If $T_{v}$ intersects $T_{x_{2}}$ or $T_{x}$ for
some $x\in X_{1}$ in $R_{T}$, then $v\in V_{0}(u)$, since $v\notin N[u]$,
which is a contradiction. Thus, $T_{v}$ does not intersect $T_{x_{2}}$ or $%
T_{x}$ in $R_{T}$, for any $x\in X_{1}$. Suppose that $T_{x}\ll
_{R_{T}}T_{v}\ll _{R_{T}}T_{x_{2}}$ for some $x\in X_{1}$. Then, since $%
V_{0}(u)$ is connected, it follows that $T_{z}$ intersects $T_{v}$ in $R_{T}$
for at least one vertex $z\in V_{0}(u)$, and thus also $v\in V_{0}(u)$,
which is again a contradiction. Thus, either $T_{x_{2}}\ll _{R_{T}}T_{v}$ or 
$T_{v}\ll _{R_{T}}T_{x}$ for every $x\in X_{1}$.
\end{proof}

\subsection*{Some properties of the sets $C_{u}$ and $C_{2}$}

In the next three lemmas we prove some basic properties of the vertex sets $%
C_{u}$ and $C_{u}$, which will be mainly used in the sequel of the proof of
Theorem~\ref{no-property-thm}.

\begin{lemma}
\label{Cu-V0}For every vertex $v\in C_{u}\setminus \{u\}$, $v\in
V_{0}(u)\cup N(u)$.
\end{lemma}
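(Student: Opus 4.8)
The plan is to prove the equivalent statement $C_{u}\setminus\{u\}\subseteq N(u)\cup V_{0}(u)$, splitting a vertex $v\in C_{u}\setminus\{u\}$ into the two cases where $v$ is bounded or unbounded in $R$. Throughout I would use that $u$ is not adjacent to any vertex of $V_{0}(u)$ (by Definition~\ref{hovering}), that no two unbounded vertices are adjacent, and the following component observation: if a vertex $p\notin N[u]$ is adjacent to some $q\in V_{0}(u)$, then $p$ and $q$ lie in the same connected component of $G\setminus N[u]$, and hence $p\in V_{0}(u)$.

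First I handle the case where $v$ is bounded, which turns out to be immediate from the already-established lemmas. Lemma~\ref{N-Cu} gives $N(X_{1})\subseteq N(v)$, and moreover $N(X_{1})\setminus N(u)\neq\emptyset$ since $N_{1}(u)\subset N(X_{1})$ by the same lemma. Choosing $z\in N(X_{1})\setminus N(u)$, I note $z\in N(x)$ for some $x\in X_{1}\subseteq V_{0}(u)$ (Lemma~\ref{N(w)-1}); as $u$ has no neighbor in $V_{0}(u)$ we get $z\neq u$, so $z\notin N[u]$, and the component observation yields $z\in V_{0}(u)$. Since $z\in N(v)$, either $v\in N(u)$, or else $v\notin N[u]$ and the component observation applied to $v$ and $z$ forces $v\in V_{0}(u)$. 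Thus $v\in N(u)\cup V_{0}(u)$ whenever $v$ is bounded.

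For the unbounded case I would reduce to the bounded one. If $v$ is unbounded then $v\notin N(u)$, so it suffices to show $v\in V_{0}(u)$; suppose not. Then $N(v)\cap V_{0}(u)=\emptyset$, since otherwise $v$ would be adjacent to a vertex of $V_{0}(u)$ and the component observation would give $v\in V_{0}(u)$. As $v\in C_{u}$ we have $v\notin Q_{u}$, and since by Lemma~\ref{not-equal} no unbounded vertex has neighborhood equal to $N(u)$, the relation $v\notin Q_{u}$ yields $N(v)\not\subseteq N(u)$; pick $y\in N(v)\setminus N(u)$. Because $v$ is unbounded, $y$ is bounded, $y\neq u$, $y\notin N[u]$, and $y\notin V_{0}(u)$ (using $N(v)\cap V_{0}(u)=\emptyset$). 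The key step is to verify $y\in C_{u}$, i.e.~that $y$ avoids $Q_{u}$ and $N[X_{1},x_{2}]$: being bounded, $y\notin Q_{u}$; $y\notin X_{1}\cup\{x_{2}\}$ and $y\notin N(x_{2})$, since each of these would place $y$ in $V_{0}(u)$ (via $X_{1},x_{2}\in V_{0}(u)$ and the component observation); and $y\notin N(X_{1})$, because every vertex of $N(X_{1})\setminus N(u)$ lies in $V_{0}(u)$ (exactly as shown in the bounded case) whereas $y\notin N(u)\cup V_{0}(u)$. Hence $y\in C_{u}\setminus\{u\}$ is bounded, so the already-proved bounded case gives $y\in N(u)\cup V_{0}(u)$, a contradiction. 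Therefore $v\in V_{0}(u)$.

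The main obstacle will be precisely this last verification that the witness $y$ stays inside the component $C_{u}$: this is where the structural facts collected earlier become essential, namely $X_{1}\subseteq V_{0}(u)$ and $Q_{u}\cap V_{0}(u)=\emptyset$ (Lemmas~\ref{N(w)-1} and~\ref{Qu-1}) and the strict inclusion $N_{1}(u)\subset N(X_{1})$ (Lemma~\ref{N-Cu}), combined with the repeated use of the component observation about $G\setminus N[u]$. Notably, this route avoids any explicit appeal to the positional Lemmas on the relative placement of $v$ in $R$ and $R_{T}$, since the reduction of the unbounded case to the bounded case already does all the work.
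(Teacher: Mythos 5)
Your proof is correct, and it takes a genuinely different route from the paper's. The paper proves Lemma~\ref{Cu-V0} positionally: supposing $v\in V\setminus Q_{u}\setminus N[u]\setminus V_{0}(u)$, it applies Lemma~\ref{foreigner-not-inside-1} to place $P_{v}$ either with $P_{x_{2}}\ll_{R}P_{v}$ or with $P_{v}\ll_{R}P_{x}$ for every $x\in X_{1}$, then exploits the connectivity of $C_{u}$ to walk along a path to a boundary vertex $v^{\prime}$ whose parallelogram intersects $P_{x_{2}}$ or some $P_{x_{0}}$, $x_{0}\in X_{1}$, and closes each subcase with slope arguments via Lemma~\ref{intersecting-unbounded} and the contradiction $N_{1}(u)=N$ against Lemma~\ref{N-Cu}; it treats bounded and unbounded $v$ within the same positional case analysis. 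You instead make the bounded case an almost immediate corollary of Lemma~\ref{N-Cu}: since $N_{1}(v)=N$ and $N_{1}(u)\subset N$, there is $z\in N(v)\cap N(X_{1})$ with $z\notin N[u]$, and since $V_{0}(u)$ is a union of connected components of $G\setminus N[u]$, first $z\in V_{0}(u)$ and then $v\in N(u)\cup V_{0}(u)$. Your reduction of the unbounded case to the bounded one is also sound: $N(v)\not\subseteq N(u)$ follows from $v\notin Q_{u}$ together with Lemma~\ref{not-equal}, the witness $y\in N(v)\setminus N(u)$ is bounded (so $y\notin Q_{u}$), membership of $y$ in $N[X_{1},x_{2}]$ would force $y\in V_{0}(u)$ by the same component observation, and adjacency of $y$ to $v\in C_{u}$ then suffices for $y\in C_{u}$, so the bounded case yields the contradiction. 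Notably, your witness device is essentially the one the paper itself deploys later in Lemma~\ref{N-Cu-unbounded} (the vertex $y_{v}\in N(v)\setminus N(u)$ there), so in effect you show that the unbounded half of Lemma~\ref{Cu-V0} is a formal consequence of its bounded half plus Lemma~\ref{N-Cu}. What your route buys is economy: all geometry of $R$ and $R_{T}$ is outsourced to the already-proved Lemma~\ref{N-Cu}, with no new appeal to positional lemmas; what the paper's heavier analysis buys is a stock of positional facts and path arguments that it reuses almost verbatim in the neighboring Lemmas~\ref{N-Cu-unbounded} and~\ref{N-C2}.
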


\begin{proof}
Consider a vertex $v\in C_{u}\setminus \{u\}$. Then, $v\notin Q_{u}$ by
definition of $C_{u}$. Suppose that $v\notin V_{0}(u)\cup N(u)$, i.e.~$v\in
V\setminus Q_{u}\setminus N[u]\setminus V_{0}(u)$. Then, either $%
P_{x_{2}}\ll _{R}P_{v}$ or $P_{v}\ll _{R}P_{x}$ for every $x\in X_{1}$ by
Lemma~\ref{foreigner-not-inside-1}.

Suppose first that $P_{x_{2}}\ll _{R}P_{v}$. Then, since $C_{u}$ is
connected, and since $P_{u}\ll _{R}P_{x_{2}}$, there must exist a vertex $%
v^{\prime }$ of $C_{u}$, such that $P_{v^{\prime }}$ intersects $P_{x_{2}}$
in $R$, and a path $P$ from $v^{\prime }$ to $v$, where all intermediate
vertices are $v^{\prime \prime }\in C_{u}$, such that $P_{x_{2}}\ll
_{R}P_{v^{\prime \prime }}$. Therefore, since $P_{u}\ll _{R}P_{x_{2}}\ll
_{R}P_{v^{\prime \prime }}$, it follows that $v^{\prime \prime }\notin N(u)$
for all these intermediate vertices. Furthermore, $v^{\prime }\notin
N(x_{2}) $ by definition of $C_{u}$. If $\phi _{x_{2}}<\phi _{v^{\prime }}$,
then $N(v^{\prime })\subseteq N(x_{2})$ by Lemma~\ref{intersecting-unbounded}%
. Therefore, $v^{\prime }$ is an isolated vertex of $G\setminus
Q_{u}\setminus N[X_{1},x_{2}]$, and thus $v^{\prime }\notin C_{u}$, which is
a contradiction. If $\phi _{x_{2}}>\phi _{v^{\prime }}$, then $%
N(x_{2})\subseteq N(v^{\prime })$ by Lemma~\ref{intersecting-unbounded}.
Then, in particular, $x_{2}$ is unbounded, since otherwise $v^{\prime }\in
N(x_{2})$, which is a contradiction. Thus, $\phi _{x_{2}}>\phi _{u}$, since $%
\phi _{u}=\min \{\phi _{x}\ |\ x\in V_{U}\}$. Furthermore, since $%
N_{1}(x_{2})=N$ by Lemma~\ref{N-Cu}, and since $P_{x}\ll _{R}P_{u}\ll
_{R}P_{x_{2}}$ for every $x\in X_{1}$, it follows that $P_{z}$ intersects $%
P_{u}$ in $R$ for every $z\in N$. Moreover, since $x_{2}$ is unbounded, and
since $z\in N(x_{2})$ for every $z\in N$, it follows that $z$ is bounded and 
$\phi _{z}>\phi _{x_{2}}>\phi _{u}$ for every $z\in N$. Therefore, $%
N\subseteq N(u)$, i.e.~$N_{1}(u)=N$, which is a contradiction by Lemma~\ref%
{N-Cu}.

Suppose now that $P_{v}\ll _{R}P_{x}$ for every $x\in X_{1}$. Then, since $%
C_{u}$ is connected, and since $P_{x}\ll _{R}P_{u}$ for every $x\in X_{1}$,
there must exist a vertex $v^{\prime }$ of $C_{u}$, such that $P_{v^{\prime
}}$ intersects $P_{x_{0}}$ in $R$ for some $x_{0}\in X_{1}$, and a path $P$
from $v^{\prime }$ to $v$, where all intermediate vertices are $v^{\prime
\prime }\in C_{u}$, such that $P_{v^{\prime \prime }}\ll _{R}P_{x}$ for
every $x\in X_{1}$. Therefore, since $P_{v^{\prime \prime }}\ll _{R}P_{x}\ll
_{R}P_{u}$ for every $x\in X_{1}$, it follows that $v^{\prime \prime }\notin
N(u)$ for all these intermediate vertices. Furthermore, $v^{\prime }\notin
N(x_{0})$ by Lemma~\ref{N(w)-1}, since $v^{\prime }\in C_{u}$.

Let first $v^{\prime }\notin N(u)$. If $\phi _{x_{0}}<\phi _{v^{\prime }}$,
then $N(v^{\prime })\subseteq N(x_{0})$ by Lemma~\ref{intersecting-unbounded}%
. Therefore, $v^{\prime }$ is an isolated vertex of $G\setminus
Q_{u}\setminus N[X_{1},x_{2}]$, and thus $v^{\prime }\notin C_{u}$, which is
a contradiction. If $\phi _{x_{0}}>\phi _{v^{\prime }}$, then $%
N(x_{0})\subseteq N(v^{\prime })$ by Lemma~\ref{intersecting-unbounded}.
Then, in particular, $x_{0}$ is unbounded, since otherwise $v^{\prime }\in
N(x_{0})$, which is a contradiction. Since $x_{0}\in X_{1}\subseteq V_{0}(u)$%
, and since $x_{0}$ is not the only vertex of $V_{0}(u)$, it follows that $%
x_{0}$ has at least one neighbor $z\in V_{0}(u)$. Thus, $z\in N(v^{\prime })$%
, since $N(x_{0})\subseteq N(v^{\prime })$. Therefore, since $v^{\prime
}\notin N(u)$, it follows that also $v^{\prime }\in V_{0}(u)$. Thus, since $%
v\notin N(u)$ and $v^{\prime \prime }\notin N(u)$ for all intermediate
vertices $v^{\prime \prime }$ of the path $P$, it follows that $v\in
V_{0}(u) $ and $v^{\prime \prime }\in V_{0}(u)$ for all these vertices $%
v^{\prime \prime }$. This is a contradiction to the assumption that $v\notin
V_{0}(u)\cup N(u)$.

Let now $v^{\prime }\in N(u)$. Then, $P_{v^{\prime }}$ intersects $P_{x}$
for every $x\in X_{1}$, since $P_{v^{\prime \prime }}\ll _{R}P_{x}\ll
_{R}P_{u}$ for every $x\in X_{1}$ and for every intermediate vertex $%
v^{\prime \prime }$ of the path $P$. If $\phi _{x}<\phi _{v^{\prime }}$ for
at least one $x\in X_{1}$, then $N(v^{\prime })\subseteq N(x)$ by Lemma~\ref%
{intersecting-unbounded}. Therefore, $v^{\prime }$ is an isolated vertex of $%
G\setminus Q_{u}\setminus N[X_{1},x_{2}]$, and thus $v^{\prime }\notin C_{u}$%
, which is a contradiction. Otherwise, if $\phi _{x}>\phi _{v^{\prime }}$
for every $x\in X_{1}$, then $N(x)\subseteq N(v^{\prime })$ for every $x\in
X_{1}$ by Lemma~\ref{intersecting-unbounded}. Then, in particular, every $%
x\in X_{1}$ is unbounded, since otherwise $v^{\prime }\in N(x)$, which is a
contradiction. Thus, $\phi _{x}>\phi _{u}$ for every $x\in X_{1}$, since $%
\phi _{u}=\min \{\phi _{x}\ |\ x\in V_{U}\}$. Furthermore, since $%
N_{1}(x_{2})=N=N(X_{1})$ by Lemma~\ref{N-Cu}, and since $P_{x}\ll
_{R}P_{u}\ll _{R}P_{x_{2}}$ for every $x\in X_{1}$, it follows that $P_{z}$
intersects $P_{u}$ in $R$ for every $z\in N$. Moreover, since every $x\in
X_{1}$ is unbounded, it follows that for every $z\in N$, $z$ is bounded and $%
\phi _{z}>\phi _{x}>\phi _{u}$ for at least one $x\in X_{1}$. Therefore, $%
N\subseteq N(u)$, i.e.~$N_{1}(u)=N$, which is a contradiction by Lemma~\ref%
{N-Cu}. Summarizing, $v\in V_{0}(u)\cup N(u)$ for every $v\in C_{u}\setminus
\{u\}$.
\end{proof}

\begin{lemma}
\label{N-Cu-unbounded}For every vertex $v\in C_{u}\setminus \{u\}$, $%
N_{1}(v)=N$.
\end{lemma}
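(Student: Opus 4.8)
The plan is to reduce to the case where $v$ is unbounded and then to mirror the (long) case analysis of Lemma~\ref{N-Cu}, isolating the single place where that proof used the boundedness of $v$. Since $N_1(v)=N(v)\cap N\subseteq N$ always holds, only the reverse inclusion $N\subseteq N(v)$ needs proof. By Lemma~\ref{Cu-V0} every $v\in C_u\setminus\{u\}$ lies in $V_0(u)\cup N(u)$; if $v$ is bounded we are done by Lemma~\ref{N-Cu}, so I assume $v$ is unbounded. As no two unbounded vertices are adjacent, $v\notin N(u)$, hence $v\in V_0(u)$; moreover $P_v$ cannot intersect $P_u$ (otherwise $\phi_v>\phi_u$ since $\phi_u$ is minimal, and Lemma~\ref{intersecting-unbounded} would give $v\in Q_u$, contradicting Lemma~\ref{Qu-1}), so $v\in D_1\cup D_2$.

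A clean preliminary step is to show that \emph{every} $z\in N=N(X_1)$ is bounded. Fix such a $z$; by Lemma~\ref{N-Cu} we have $z\in N(x_2)$, and by definition $z\in N(x)$ for some $x\in X_1\subseteq D_1$. As $P_x\ll_R P_u\ll_R P_{x_2}$ (because $x\in D_1$ and $x_2\in D_2$) and $P_z$ meets both $P_x$ and $P_{x_2}$, transitivity of $\ll_R$ forces $P_z$ to intersect $P_u$. If $z$ were unbounded, then $zu\notin E$ and $\phi_z>\phi_u$, so Lemma~\ref{intersecting-unbounded} with Lemma~\ref{not-equal} would yield $N(z)\subset N(u)$; but then $x\in N(z)\subset N(u)$ contradicts $x\in V_0(u)$. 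Hence $z$ is bounded, and in particular $N\cap Q_u=\emptyset$.

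With $z$ now bounded I would run the position analysis of $P_v$ relative to $P_x$ and $P_{x_2}$ exactly as in Lemma~\ref{N-Cu}. The subcases ``$P_v$ intersects $P_x$ or $P_{x_2}$'', ``$P_{x_2}\ll_R P_v$'', and ``$P_v\ll_R P_x$'' use only that $v\notin N(u)$, $v\in V_0(u)$, and the connectivity of $C_u$ and $V_0(u)$ (the same path arguments and trapezoid transitivity, recalling $z\in N(w)$ since $N\subseteq N(w)$, and that $T_{x_2}$ lies to the left of every trapezoid of $C_u$ by Lemma~\ref{N(w)-2} together with the connectivity of $C_u$); these go through verbatim for unbounded $v$ and give $z\in N(v)$. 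The only subcase that genuinely used boundedness of $v$ is the central one, $P_x\ll_R P_v\ll_R P_{x_2}$: here $P_z$ intersects $P_v$, and if $z\notin N(v)$ then, since $z$ is bounded and $v$ unbounded, necessarily $\phi_z<\phi_v$, whence $N(v)\subseteq N(z)$ by Lemma~\ref{intersecting-unbounded}.

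The main obstacle is to rule out this central subcase. Since $T_{x_2}\ll_{R_T}T_v$ (all of $C_u$ lies to the right of $T_{x_2}$) and $z\in N(x)\cap N(w)$ with $T_x\ll_{R_T}T_{x_2}\ll_{R_T}T_w$, transitivity of $\ll_{R_T}$ immediately gives $z\in N(v)$ \emph{provided} $T_v\ll_{R_T}T_w$; thus the heart of the matter is to exclude the configurations $v\in N(w)$ and $T_w\ll_{R_T}T_v$. I expect to do this by combining $N(v)\subseteq N(z)$ with the connectivity of $V_0(u)$ (so $v$ has a bounded neighbour $v'\in V_0(u)$ with $z\in N(v')$) and the minimality $\phi_u=\min\{\phi_x:x\in V_U\}$, splitting on whether $z\in N(u)$: if $z\notin N(u)$ then $\phi_z<\phi_u$ and Lemma~\ref{intersecting-unbounded} makes $z$ a bounded covering vertex of $u$, forcing $z\in V_0(u)$, while if $z\in N(u)$ one pushes $N(v)\subseteq N(z)$ through the trapezoid order. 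In each branch the aim is a contradiction with Lemma~\ref{Qu-1} (through $v\in Q_u$) or with the strict inclusion $N_1(u)\subset N$ of Lemma~\ref{N-Cu}. Concluding $z\in N(v)$ in this last subcase, and hence $N\subseteq N(v)$, completes the proof.
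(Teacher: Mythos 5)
Your reduction is mostly sound: restricting to unbounded $v$, placing $v\in V_{0}(u)$ via Lemma~\ref{Cu-V0}, the boundedness of every $z\in N$ (the paper proves this the same way inside Lemma~\ref{N-H-C2-Cu-bounded}), and the claim that the peripheral subcases of Lemma~\ref{N-Cu} transfer to unbounded $v$ are all correct --- the paper itself flags inside Lemma~\ref{N-Cu} that the ``$P_{v}$ intersects $P_{x}$ or $P_{x_{2}}$'' paragraph works for unbounded $v$. You also isolate the right obstacle (the subcase $P_{x}\ll_{R}P_{v}\ll_{R}P_{x_{2}}$ with $N(v)\subseteq N(z)$). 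Within it, your branch $z\in N(u)$ closes even more simply than you suggest: $z\notin N(v)$ and $T_{v}\ll_{R_{T}}T_{u}$ force $T_{v}\ll_{R_{T}}T_{z}$ (otherwise $T_{z}\ll_{R_{T}}T_{v}\ll_{R_{T}}T_{u}$ contradicts $zu\in E$), and then $T_{x_{2}}\ll_{R_{T}}T_{v}\ll_{R_{T}}T_{z}$ contradicts $z\in N(x_{2})$, which holds since $N_{1}(x_{2})=N$ by Lemma~\ref{N-Cu}.

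The genuine gap is the branch $z\notin N(u)$, and the tools you list there do not close it. Your covering-vertex observation adds nothing: $z\in N(x)$ with $x\in X_{1}\subseteq V_{0}(u)$ and $z\notin N[u]$ already place $z$ in $V_{0}(u)$ (it lies in the same connected component of $G\setminus N[u]$ as $x$), and Lemma~\ref{N-Cu} guarantees $N\setminus N(u)\neq\emptyset$, so such $z$ genuinely exist --- there is no contradiction in ``$z\in V_{0}(u)$'' per se. The $x_{2}$-argument above still forces $T_{z}\ll_{R_{T}}T_{v}$, but from $N(v)\subseteq N(z)$, the minimality of $\phi_{u}$, and a bounded neighbour $v^{\prime}\in V_{0}(u)$ of $v$ with $z\in N(v^{\prime})$, nothing contradictory follows; in particular neither $v\in Q_{u}$ nor $N\subseteq N(u)$ can be extracted from this data. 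The missing ingredient is the standing assumption (installed via Lemma~\ref{min-angle-2} at the start of the proof of Theorem~\ref{no-property-thm}) that $u$ is \emph{unbounded-maximal}: together with Lemma~\ref{not-equal} it yields a witness $y_{u}\in N(u)\setminus N(v)$, and since $y_{u}\notin N(v)$ and $T_{v}\ll_{R_{T}}T_{u}$ force $T_{v}\ll_{R_{T}}T_{y_{u}}$, you would get $T_{z}\ll_{R_{T}}T_{v}\ll_{R_{T}}T_{y_{u}}$, contradicting $y_{u}\in N(u)\subseteq N(z)$ (here your covering-vertex step finally earns its keep). This witness is precisely the device of the paper's own proof, which avoids your projection case analysis entirely: it takes $y_{u}\in N(u)\setminus N(v)$ (and, for the other trapezoid orientation, $y_{v}\in N(v)\setminus N(u)$ from $v\notin Q_{u}$), shows $y_{u}\notin N(X_{1})\cup N(x_{2})$, deduces $N_{1}(y_{u})=N$ from Lemmas~\ref{N-Cu} and~\ref{N(w)-1}, and then handles all $z\in N$ simultaneously via the sandwich $T_{x_{2}}\ll_{R_{T}}T_{v}\ll_{R_{T}}T_{y_{u}}$. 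Until you invoke unbounded-maximality (or something equivalent), your central case does not close.
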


\begin{proof}
If $v$ is a bounded vertex in $R$, then the lemma follows by Lemma~\ref{N-Cu}%
. Suppose now that $v$ is unbounded. Then, since $v\notin Q_{u}$ by
definition of $C_{u}$, it follows that there exists at least one vertex $%
y_{v}\in N(v)\setminus N(u)$. Furthermore, there exists at least one vertex $%
y_{u}\in N(u)\setminus N(v)$. Indeed, otherwise $N(u)\subseteq N(v)$, and
thus $N(u)\subset N(v)$ by Lemma~\ref{not-equal}, i.e.~$u$ is not unbounded
maximal, which is a contradiction. Then, both $y_{u}$ and $y_{v}$ are
bounded vertices in $R$, since $u$ and $v$ are unbounded. Furthermore, since 
$uv\notin E$, either $T_{u}\ll _{R_{T}}T_{v}$ or $T_{v}\ll _{R_{T}}T_{u}$.

Let first $T_{u}\ll _{R_{T}}T_{v}$. Since $y_{v}\notin N(u)$, $T_{y_{v}}$
does not intersect $T_{u}$ in $R_{T}$, i.e.~either $T_{y_{v}}\ll
_{R_{T}}T_{u}$ or $T_{u}\ll _{R_{T}}T_{y_{v}}$. If $T_{y_{v}}\ll
_{R_{T}}T_{u}$, then $T_{y_{v}}\ll _{R_{T}}T_{u}\ll _{R_{T}}T_{v}$, and thus 
$y_{v}\notin N(v)$, which is a contradiction. Therefore, $T_{u}\ll
_{R_{T}}T_{y_{v}}$. Moreover, $T_{x}\ll _{R_{T}}T_{x_{2}}\ll
_{R_{T}}T_{u}\ll _{R_{T}}T_{y_{v}}$ for every $x\in X_{1}$ by Lemma~\ref%
{N(w)-2}, and thus in particular $y_{v}\notin N(X_{1})$ and $y_{v}\notin
N(x_{2})$. Suppose that $N_{1}(y_{v})\neq N$. Then, $y_{v}\notin C_{u}$ by
Lemma~\ref{N-Cu}, since $y_{v}$ is bounded. Thus, since $v\in C_{u}$, $%
y_{v}\in N(v)$, and $y_{v}\notin Q_{u}$, it follows by Lemma~\ref{N(w)-1}
that either $y_{v}\in N(X_{1})$ or $y_{v}\in N(x_{2})$, which is a
contradiction. Therefore, $N_{1}(y_{v})=N$. Thus, for every $z\in N$, $T_{z}$
intersects $T_{u}$ in $R_{T}$, i.e.~$z\in N(u)$, since $T_{x_{2}}\ll
_{R_{T}}T_{u}\ll _{R_{T}}T_{y_{v}}$ and $z\in N(x_{2})\cap N(y_{v})$.
Therefore, $N_{1}(u)=N$, which is a contradiction by Lemma~\ref{N-Cu}.

Let now $T_{v}\ll _{R_{T}}T_{u}$. Since $y_{u}\notin N(v)$, $T_{y_{u}}$ does
not intersect $T_{v}$ in $R_{T}$, i.e.~either $T_{y_{u}}\ll _{R_{T}}T_{v}$
or $T_{v}\ll _{R_{T}}T_{y_{u}}$. If $T_{y_{u}}\ll _{R_{T}}T_{v}$, then $%
T_{y_{u}}\ll _{R_{T}}T_{v}\ll _{R_{T}}T_{u}$, and thus $y_{u}\notin N(u)$,
which is a contradiction. Therefore, $T_{v}\ll _{R_{T}}T_{y_{u}}$. Recall
that $C_{u}$ is connected and that no vertex of $C_{u}$ is adjacent to $%
x_{2} $ by the definition of $C_{u}$. Thus, since $u\in C_{u}$ and $%
T_{x_{2}}\ll _{R_{T}}T_{u}$, it follows that $T_{x_{2}}$ lies in $R_{T}$ to
the left of all trapezoids of the vertices of $C_{u}$; in particular, Lemma %
\ref{N(w)-2} implies that $T_{x}\ll _{R_{T}}T_{x_{2}}\ll _{R_{T}}T_{v}\ll
_{R_{T}}T_{y_{u}}$ for every $x\in X_{1}$. Thus, in particular, $y_{u}\notin
N(X_{1})$ and $y_{u}\notin N(x_{2})$. Suppose that $N_{1}(y_{u})\neq N$.
Then, $y_{u}\notin C_{u}$ by Lemma~\ref{N-Cu}, since $y_{u}$ is bounded.
Thus, since $u\in C_{u}$, $y_{u}\in N(u)$, and $y_{u}\notin Q_{u}$, it
follows by Lemma~\ref{N(w)-1} that either $y_{u}\in N(X_{1})$ or $y_{u}\in
N(x_{2})$, which is a contradiction. Thus, $N_{1}(y_{u})=N$. Therefore, for
every $z\in N$, $T_{z}$ intersects $T_{v}$ in $R_{T}$, i.e.~$z\in N(v)$,
since $T_{x_{2}}\ll _{R_{T}}T_{v}\ll _{R_{T}}T_{y_{u}}$ and $z\in
N(x_{2})\cap N(y_{u})$. Thus, $N_{1}(v)=N$. This completes the proof of the
lemma.
\end{proof}

\begin{lemma}
\label{N-C2}For every vertex $v\in C_{2}$, $N_{1}(v)=N$.
\end{lemma}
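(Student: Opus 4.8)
The goal is to establish $N\subseteq N(v)$ for every $v\in C_2$, since $N_1(v)=N(v)\cap N(X_1)\subseteq N(X_1)=N$ holds by definition; so it suffices to show that every $z\in N$ is adjacent to $v$. I would first collect the facts that hold for all $v\in C_2$: since $C_2\subseteq V_0(u)$ (Lemma~\ref{N(w)-1}) we have $T_v\ll_{R_T}T_u$; since $C_2\subseteq\widetilde{C}_2$ lies outside $N[u,w]$ we have $v\notin N(w)$; and since $v$ lies in a component of $\widetilde{C}_2$ different from those of $V(\mathcal{A}_1)$ and outside $N[\widetilde{X}_1]$, we have $v\notin N(X_1)$. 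Moreover $N\subseteq N(w)\cap N(x_2)$ by Lemma~\ref{N-Cu}. Throughout I will use the standard property of trapezoid representations, already used repeatedly above, that if $T_z$ intersects both $T_a$ and $T_b$ with $T_a\ll_{R_T}T_b$, then $T_z$ intersects every $T_y$ with $T_a\ll_{R_T}T_y\ll_{R_T}T_b$. As a first consequence, $T_v\ll_{R_T}T_w$ for every $v\in C_2$: indeed $v\notin N(w)$ makes $T_v,T_w$ comparable, and $T_w\ll_{R_T}T_v$ combined with $T_v\ll_{R_T}T_u$ would give $T_w\ll_{R_T}T_u$, contradicting $w\in N(u)$. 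The case $v=x_2$ is already settled, since $N_1(x_2)=N$ by Lemma~\ref{N-Cu}; so I may assume $v\ne x_2$ and treat the two families $\mathcal{B}_2$ and $\mathcal{A}_2$ separately.

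For $v\in\mathcal{B}_2\subseteq D_2\subseteq S_2$ the plan is to show $T_{x'}\ll_{R_T}T_v$ for \emph{every} $x'\in X_1$. Suppose instead $T_v\ll_{R_T}T_{x'}$ for some $x'\in X_1$; since $T_{x'}\ll_{R_T}T_{x_2}$ by Lemma~\ref{N(w)-2}, this yields $T_v\ll_{R_T}T_{x_2}$, which contradicts Lemma~\ref{x2-relative-position-in-S2} (forcing $T_v$ to intersect $T_{x_2}$ or to satisfy $T_{x_2}\ll_{R_T}T_v$). As $v\notin N(X_1)$, $T_v$ cannot intersect $T_{x'}$, so $T_{x'}\ll_{R_T}T_v$. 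Then for any $z\in N$ I pick $x'\in X_1$ with $z\in N(x')$; from $T_{x'}\ll_{R_T}T_v\ll_{R_T}T_w$ and $z\in N(x')\cap N(w)$, the spanning property gives $z\in N(v)$.

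For $v\in\mathcal{A}_2\subseteq D_1$ the same uniform span is unavailable (here $T_v$ may lie to the left of $T_{x_2}$), so I would split $N$ according to $N[\widetilde{X}_1]$. If $z\in N\setminus N[\widetilde{X}_1]$, then $z\in N(x_2)\cap N(w)\subseteq N[u,w]\cap N(x_2)$ and $z\notin Q_u$ (otherwise $x_2\in N(z)\subset N(u)$, impossible as $x_2\in V_0(u)$); hence $z\in\widetilde{H}$, and since $\widetilde{H}\subseteq N(v)$ by the very definition of $\mathcal{A}_2$, we obtain $z\in N(v)$. If instead $z\in N\cap N[\widetilde{X}_1]=N(\widetilde{X}_1)$, I pick $x'\in\widetilde{X}_1$ with $z\in N(x')$ and place $T_v$ to the right of all of $\widetilde{X}_1$: as $x_1$ is leftmost in $D_1\cup D_2$ and $v\in D_1\setminus\{x_1\}$ with $v\notin N(x_1)$, we get $T_{x_1}\ll_{R_T}T_v$; and since $\widetilde{X}_1$ is connected with $v\notin N(\widetilde{X}_1)$, all trapezoids of $\widetilde{X}_1$ lie on one side of $T_v$ (a path-edge crossing $T_v$ would make two adjacent trapezoids $\ll_{R_T}$-separated), so all satisfy $T_{x''}\ll_{R_T}T_v$, in particular $T_{x'}\ll_{R_T}T_v$. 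With $T_{x'}\ll_{R_T}T_v\ll_{R_T}T_w$ and $z\in N(x')\cap N(w)$, the spanning property again yields $z\in N(v)$.

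The hard part will be the $\mathcal{A}_2$ case, precisely because $T_v$ need not lie between $T_{x_2}$ and $T_w$, so the clean $\mathcal{B}_2$ argument fails. The remedy is the two-way split of $N$: neighbors outside $N[\widetilde{X}_1]$ are caught by the defining adjacency $\widetilde{H}\subseteq N(v)$ of $\mathcal{A}_2$, while neighbors in $N(\widetilde{X}_1)$ are handled by combining the connectivity of $\widetilde{X}_1$ with the leftmost vertex $x_1$ to force $T_{x'}\ll_{R_T}T_v$. The two steps demanding care are verifying the membership $z\in\widetilde{H}$ (notably $z\notin Q_u$) and the one-sidedness of $\widetilde{X}_1$ relative to $T_v$; once these are in place, the uniform spanning argument closes every subcase.
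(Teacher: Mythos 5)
Your proof is correct, and most of its scaffolding coincides with the paper's: the preliminary facts ($C_{2}\subseteq V_{0}(u)$, hence $T_{v}\ll_{R_{T}}T_{u}$; $v\notin N(w)\cup N(X_{1})$, hence $T_{v}\ll_{R_{T}}T_{w}$), the disposal of $\mathcal{B}_{2}$ via Lemmas~\ref{N(w)-2} and~\ref{x2-relative-position-in-S2}, the leftmost-$x_{1}$-plus-connectivity argument placing all of $\widetilde{X}_{1}$ to the left of $T_{v}$, and the closing sandwich with $z\in N(x')\cap N(w)$ are exactly the paper's moves. Where you genuinely diverge is the $\mathcal{A}_{2}$ case. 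The paper proves the stronger uniform ordering $T_{x}\ll_{R_{T}}T_{v}$ for \emph{every} $x\in X_{1}$, including $x\in V(\mathcal{A}_{1})$: if some component $A_{i}\in\mathcal{A}_{1}$ satisfied $T_{v}\ll_{R_{T}}T_{x}$, then by component-disjointness and connectivity all of $A_{i}$ would be sandwiched as $T_{v}\ll_{R_{T}}T_{y}\ll_{R_{T}}T_{x_{2}}$, whence every $v'\in\widetilde{H}\subseteq N(v)\cap N(x_{2})$ would intersect every $T_{y}$, forcing $A_{i}\in\mathcal{A}_{2}$ --- a contradiction; a single sandwich then handles every $z\in N$ at once. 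You instead leave the position of $T_{v}$ relative to $V(\mathcal{A}_{1})$ undetermined and split $N$: vertices $z\in N(\widetilde{X}_{1})$ are caught by the sandwich over $\widetilde{X}_{1}$ alone, while $z\in N\setminus N[\widetilde{X}_{1}]$ are shown to lie in $\widetilde{H}$ (using $N\subseteq N(w)\cap N(x_{2})$ from Lemmas~\ref{N(w)-1} and~\ref{N-Cu}, and $z\notin Q_{u}$ since otherwise $x_{2}\in N(z)\subset N(u)$), so the defining adjacency $\widetilde{H}\subseteq N(v)$ of $\mathcal{A}_{2}$ finishes directly; the split is exhaustive because $N\cap\widetilde{X}_{1}=\emptyset$ and $N(X_{1})=N(\widetilde{X}_{1})\cup N(V(\mathcal{A}_{1}))$. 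Both proofs thus turn on the same ingredient --- the $\widetilde{H}$-adjacency defining $\mathcal{A}_{2}$ --- but the paper uses it contrapositively to pin down the geometry of $\mathcal{A}_{1}$ and obtain the cleaner global ordering, whereas you use it positively, which is more economical (no analysis of $T_{v}$ versus $V(\mathcal{A}_{1})$ is needed) at the price of the extra membership check $z\in\widetilde{H}$; your separate treatment of $v=x_{2}$ via $N_{1}(x_{2})=N$ (Lemma~\ref{N-Cu}) is a harmless precaution the paper handles implicitly.
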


\begin{proof}
Recall first that $N_{1}(w)=N$ by Lemma~\ref{N(w)-1}. Let $v\in C_{2}$ and $%
x\in X_{1}$. Recall that $v\notin N(w)$ by definition of $\widetilde{C}_{2}$%
, and that $v\notin N(x)$ by definition of $\widetilde{\widetilde{C}}_{2}$,
and thus either $T_{v}\ll _{R_{T}}T_{x}$ or $T_{x}\ll _{R_{T}}T_{v}$. We
will first prove that $T_{x}\ll _{R_{T}}T_{v}$. Recall that $X_{1}=%
\widetilde{X}_{1}\cup V(\mathcal{A}_{1})$.

Consider first the case where $x\in \widetilde{X}_{1}$. Note that $%
T_{x_{1}}\ll _{R_{T}}T_{v}$ for every vertex $v$ of $C_{2}$, due to the
definition of $x_{1}$, and since $v\notin N(x_{1})$ and $C_{2}\subseteq
D_{1}\cup D_{2}\setminus \{x_{1}\}$. Recall also that $\widetilde{X}_{1}$
induces a connected subgraph of $G$ and that $v\notin N[\widetilde{X}_{1}]$
for every vertex $v$ of $C_{2}$ by definition of $C_{2}$. Thus, in this case 
$T_{x}\ll _{R_{T}}T_{v}$ for every $x\in \widetilde{X}_{1}$.

Consider now the case where $x\in A_{i}$, for some $A_{i}\in \mathcal{A}_{1}$%
, where $1\leq i\leq k$. Recall that $C_{2}=\mathcal{A}_{2}\cup \mathcal{B}%
_{2}$. Suppose first that $v\in A_{j}$ for some $A_{j}\in \mathcal{B}_{2}$,
where $k+1\leq j\leq \ell $. Then, $v\in D_{2}$, since $A_{j}\subseteq D_{2}$%
, as we proved above. If $T_{v}\ll _{R_{T}}T_{x}$, then $T_{v}\ll
_{R_{T}}T_{x}\ll _{R_{T}}T_{x_{2}}$ by Lemma~\ref{N(w)-2}, which is a
contradiction by Lemma~\ref{x2-relative-position-in-S2}, since $v\in
D_{2}\subseteq S_{2}$. Thus, $T_{x}\ll _{R_{T}}T_{v}$. Suppose now that $%
v\in A_{p}$, for some $A_{p}\in \mathcal{A}_{2}$, where $1\leq p\leq k$. For
the sake of contradiction, suppose that $T_{v}\ll _{R_{T}}T_{x}$, i.e.~$%
T_{v}\ll _{R_{T}}T_{x}\ll _{R_{T}}T_{x_{2}}$. Thus, since $x\in A_{i}$ and $%
A_{i}\neq A_{p}$, it follows that $T_{v}\ll _{R_{T}}T_{y}\ll
_{R_{T}}T_{x_{2}}$ for every $y\in A_{i}$. Recall by definition of $\mathcal{%
A}_{2}$ that $v$ is adjacent to all vertices $v^{\prime }\in \widetilde{H}$.
Thus, since $v^{\prime }\in N(v)\cap N(x_{2})$ for every $v^{\prime }\in 
\widetilde{H}$, it follows that $T_{v^{\prime }}$ intersects $T_{y}$ in $%
R_{T}$, i.e.~$y\in N(v^{\prime })$, for every $y\in A_{i}$ and every $%
v^{\prime }\in \widetilde{H}$. This is a contradiction by the definition of $%
\mathcal{A}_{1}$, and thus again $T_{x}\ll _{R_{T}}T_{v}$.

Summarizing, $T_{x}\ll _{R_{T}}T_{v}$ for every $v\in C_{2}$ and every $x\in
X_{1}$. Since $v\in V_{0}(u)$ for every $v\in C_{2}$ by Lemma~\ref{N(w)-1},
it follows that $T_{v}\ll _{R_{T}}T_{u}$. Since $v\notin N(w)$ by definition
of $C_{2}$, $T_{v}$ does not intersect $T_{w}$ in $R_{T}$, i.e.~either $%
T_{w}\ll _{R_{T}}T_{v}$ or $T_{v}\ll _{R_{T}}T_{w}$. If $T_{w}\ll
_{R_{T}}T_{v}$, then $T_{w}\ll _{R_{T}}T_{v}\ll _{R_{T}}T_{u}$, and thus $%
w\notin N(u)$, which is a contradiction. Therefore $T_{v}\ll _{R_{T}}T_{w}$,
and thus $T_{x}\ll _{R_{T}}T_{v}\ll _{R_{T}}T_{w}$ for every $x\in X_{1}$.
Consider now a vertex $z\in N=N(X_{1})$. Then, $z\in N(x)\cap N(w)$ for some 
$x\in X_{1}$, since $N_{1}(w)=N=N(X_{1})$ by Lemma~\ref{N(w)-1}. Therefore, $%
T_{z}$ intersects $T_{v}$ in $R_{T}$, i.e.~$z\in N(v)$, since $T_{x}\ll
_{R_{T}}T_{v}\ll _{R_{T}}T_{w}$. Since this holds for every $z\in N$, it
follows that $N_{1}(v)=N$. This completes the proof of the lemma.
\end{proof}

\subsection*{The recursive definition of the vertex subsets $H_{i}$, $i\geq 1 $, of $H$}

In the following, we define a partition of the set $H$ into the subsets $%
H_{1},H_{2},\ldots $.

\begin{definition}
\label{Hi}Denote $H_{0}=N$. Then, $H_{i}=\{x\in H\setminus
\bigcup\nolimits_{j=1}^{i-1}H_{j}\ |\ H_{i-1}\nsubseteq N(x)\}$ for every $%
i\geq 1$.
\end{definition}

It is now easy to see by Definition~\ref{Hi} that either $H_{i}=\emptyset $
for every $i\in \mathbb{N}$, or there exists some $p\in \mathbb{N}$, 
such that $H_{p}\neq \emptyset $ and $H_{i}=\emptyset $ for every $i>p$.
That is, either $\bigcup\nolimits_{i=1}^{\infty }H_{i}=\emptyset $, or $%
\bigcup\nolimits_{i=1}^{\infty }H_{i}=\bigcup\nolimits_{i=1}^{p}H_{i}$, for
some $p\in \mathbb{N}$. Furthermore, $\bigcup\nolimits_{i=1}^{\infty }H_{i}\subseteq H$ by
Definition~\ref{Hi}.

\begin{definition}
\label{Hi-chain}Let $v_{i}\in H_{i}$, for some $i\geq 1$. Then, a sequence $%
(v_{0},v_{1},\ldots ,v_{i-1},v_{i})$ of vertices, such that $v_{j}\in H_{j}$%
, $j=0,1,\ldots ,i-1$, and $v_{j-1}v_{j}\notin E$, $j=1,2,\ldots ,i$, is an $%
H_{i}$\emph{-chain} of $v_{i}$.
\end{definition}

It easy to see by Definition~\ref{Hi} that for every set $H_{i}\neq
\emptyset $, $i\geq 1$, and for every vertex $v_{i}\in H_{i}$, there exists
at least one $H_{i}$-chain of $v_{i}$. The next two lemmas will be used in
the sequel of the proof of Theorem~\ref{no-property-thm}.

\begin{lemma}
\label{alternating-bounded-chain-1}Let $v_{1}\in H_{1}$ and $(v_{0},v_{1})$
be an $H_{1}$-chain of $v_{1}$. Then, $v_{1}$ is a bounded vertex, $%
P_{v_{0}}\ll _{R}P_{v_{1}}$ and $T_{v_{0}}\ll _{R_{T}}T_{v_{1}}$.
\end{lemma}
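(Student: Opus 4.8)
The plan is to decode the combinatorial meaning of $v_0$ and $v_1$ and then settle the three assertions in the order: position of $v_1$ relative to $X_1$ and $x_2$, then $T_{v_0}\ll_{R_T}T_{v_1}$, then the boundedness of $v_1$, and finally $P_{v_0}\ll_R P_{v_1}$. Since $(v_0,v_1)$ is an $H_1$-chain (Definition~\ref{Hi-chain}), I have $v_0\in H_0=N$ and $v_1\in H_1\subseteq H$ with $v_0v_1\notin E$. Recalling $N=N(X_1)=N_1(w)=N_1(x_2)$ (Lemmas~\ref{N(w)-1} and~\ref{N-Cu}), the vertex $v_0$ is adjacent to some $x_\ast\in X_1$, to $w$, and to $x_2$, while $v_1\in N[u,w]\cap N(x_2)$ with $v_1\notin Q_u\cup N[X_1]$. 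Since $u\notin N(x_2)$ (as $x_2\in V_0(u)$) and $wx_2\notin E$, I get $v_1\neq u,w$, hence $v_1\in(N(u)\cup N(w))\cap N(x_2)$. Throughout I will use the positions $P_x\ll_R P_u\ll_R P_{x_2}$, $P_x\ll_R P_w$ and $T_x\ll_{R_T}T_{x_2}\ll_{R_T}T_w,T_u$ for every $x\in X_1$, established in Lemma~\ref{N(w)-2} and the preceding discussion.

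The easiest assertion is $T_{v_0}\ll_{R_T}T_{v_1}$, which I would prove before knowing whether $v_1$ is bounded. Because $v_1\in N(x_2)$ but $v_1\notin N(x_\ast)$ (as $v_1\notin N[X_1]$), and because $T_{x_\ast}\ll_{R_T}T_{x_2}$, the only possibility is $T_{x_\ast}\ll_{R_T}T_{v_1}$; otherwise $T_{v_1}\ll_{R_T}T_{x_\ast}\ll_{R_T}T_{x_2}$ would contradict $v_1\in N(x_2)$. Now $v_0v_1\notin E$ gives $T_{v_0}\ll_{R_T}T_{v_1}$ or $T_{v_1}\ll_{R_T}T_{v_0}$, and the latter together with $T_{x_\ast}\ll_{R_T}T_{v_1}$ would force $T_{x_\ast}\ll_{R_T}T_{v_0}$, contradicting $v_0\in N(x_\ast)$. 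Hence $T_{v_0}\ll_{R_T}T_{v_1}$.

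For the boundedness of $v_1$ I would split on the two cases above. If $v_1\in N(u)$, then $v_1$ is bounded immediately by Lemma~\ref{unbounded-bounded}. The case $v_1\in N(w)\setminus N(u)$ is where I expect the real work. Assuming $v_1$ unbounded, I first locate it: $\phi_{v_1}>\phi_u$ since $\phi_u$ is the minimum unbounded slope; $P_{v_1}$ cannot meet $P_u$, for otherwise Lemma~\ref{intersecting-unbounded} gives $N(v_1)\subseteq N(u)$, hence $N(v_1)\subset N(u)$ by Lemma~\ref{not-equal}, i.e.\ $v_1\in Q_u$, contradicting $v_1\in H$; and since $P_{v_1}$ meets $P_{x_2}$ with $P_u\ll_R P_{x_2}$, this forces $P_u\ll_R P_{v_1}$. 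As $v_1\in N(x_2)$ with $x_2\in V_0(u)$ and $v_1\notin N[u]$, the vertex $v_1$ lies in the same component of $G\setminus N[u]$ as $x_2$, so $v_1\in V_0(u)$; with $P_u\ll_R P_{v_1}$ this gives $v_1\in D_2$, and symmetrically in $R_T$ one obtains $T_{v_1}\ll_{R_T}T_u$ with $T_x\ll_{R_T}T_{v_1}$ for all $x\in X_1$. I would then contradict the chain hypothesis $v_0\notin N(v_1)$ (equivalently $N\not\subseteq N(v_1)$) by showing $N\subseteq N(v_1)$: exploiting the connectedness of $V_0(u)$ (Lemma~\ref{two-components}), the minimality of $\phi_u$, the unbounded-maximality of $u$, and the positions $T_x\ll_{R_T}T_{v_1}$, $T_{v_1}\cap T_{x_2}\neq\emptyset$, one forces every $z\in N$ to satisfy $z\in N(v_1)$, exactly in the spirit of Lemmas~\ref{N(w)-1} and~\ref{N-Cu-unbounded}. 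I expect this step to be the main obstacle, precisely because the projection representation $R$ is \emph{not} an intersection model, so ruling out $T_z\ll_{R_T}T_{v_1}$ (i.e.\ $z\notin N(v_1)$) for $z\in N$ requires feeding positional information back and forth between $R$ and $R_T$.

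Finally, with $v_1$ known to be bounded, I would establish $P_{v_0}\ll_R P_{v_1}$ by excluding the two alternatives. Since $v_1\in N(x_2)$, the parallelogram $P_{v_1}$ reaches rightward to meet $P_{x_2}$, while $P_{v_0}$ reaches leftward to meet $P_{x_\ast}$ with $P_{x_\ast}\ll_R P_u\ll_R P_{x_2}$; thus $P_{v_1}\ll_R P_{v_0}$ would place $P_{v_0}$ to the right of $P_{x_2}$ and hence disjoint from $P_{x_\ast}$, contradicting $v_0\in N(x_\ast)$, so this alternative is excluded. To exclude that $P_{v_0}$ intersects $P_{v_1}$ in $R$, note that if it did, then since $v_1$ is bounded and $v_0v_1\notin E$ the vertex $v_0$ would have to be unbounded with $\phi_{v_0}>\phi_{v_1}$, whence $N(v_0)\subseteq N(v_1)$ by Lemma~\ref{intersecting-unbounded}; but $x_\ast\in N(v_0)\setminus N(v_1)$ (because $v_1\notin N[X_1]$), a contradiction. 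Hence $P_{v_0}\ll_R P_{v_1}$, which completes the three assertions.
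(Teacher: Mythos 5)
Your overall architecture matches the paper's (boundedness of $v_{1}$ via a contradiction with $v_{1}\in H_{1}$, then the two ordering claims by excluding alternatives), and several individual steps are correct, including your $R_{T}$ argument and your use of Lemma~\ref{intersecting-unbounded} with Lemma~\ref{not-equal} to rule out $P_{v_{1}}$ intersecting $P_{u}$. But the proof has a genuine gap exactly where you flag ``the main obstacle'': in the hard case ($v_{1}$ unbounded, $v_{1}\in N(w)\setminus N(u)$) you assert that the positions $T_{x}\ll_{R_{T}}T_{v_{1}}$ and $T_{v_{1}}\cap T_{x_{2}}\neq\emptyset$, together with connectedness of $V_{0}(u)$, minimality of $\phi_{u}$, and unbounded-maximality of $u$, ``force every $z\in N$ to satisfy $z\in N(v_{1})$,'' but you never produce the argument, and the positional facts you list are insufficient: a vertex $z\in N$ meets $T_{x}$ and $T_{x_{2}}$, yet $T_{v_{1}}$ might touch only the right-hand portion of $T_{x_{2}}$ while $T_{z}$ touches only its left-hand portion, so no intersection of $T_{z}$ with $T_{v_{1}}$ is forced. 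The paper's proof supplies the missing right-hand bracket: from unbounded-maximality and Lemma~\ref{not-equal} it gets $N(u)\nsubseteq N(v_{1})$, picks a bounded $y\in N(u)\setminus N(v_{1})$, derives $T_{x}\ll_{R_{T}}T_{v_{1}}\ll_{R_{T}}T_{y}$ and $y\notin N(X_{1})$, and then proves $N_{1}(y)=N$ by a delicate case analysis back in $R$ (for a hypothetical $z\in N\setminus N_{1}(y)$, the subcases $P_{y}\ll_{R}P_{z}$ and $P_{z}\ll_{R}P_{y}$, the latter culminating in $P_{z}\ll_{R}P_{x_{2}}$, contradicting $z\in N_{1}(x_{2})$ from Lemma~\ref{N-Cu}); only with the sandwich $T_{x}\ll_{R_{T}}T_{v_{1}}\ll_{R_{T}}T_{y}$ and $N_{1}(y)=N$ does $H_{0}=N\subseteq N(v_{1})$ follow, contradicting $v_{1}\in H_{1}$ by Definition~\ref{Hi}. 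This construction of $y$ and the proof that $N_{1}(y)=N$ constitute the bulk of the paper's proof and are entirely absent from yours.

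There is also a flawed step at the end: to exclude $P_{v_{1}}\ll_{R}P_{v_{0}}$ you argue this ``would place $P_{v_{0}}$ to the right of $P_{x_{2}}$,'' which is a non sequitur --- indeed $v_{0}\in N=N_{1}(x_{2})$ by Lemma~\ref{N-Cu}, so $P_{v_{0}}$ intersects $P_{x_{2}}$ and is certainly not to its right; nothing about $P_{v_{1}}$ meeting $P_{x_{2}}$ pushes $P_{v_{0}}$ past $P_{x_{2}}$. The correct route, which the paper takes, is to first establish $P_{x_{\ast}}\ll_{R}P_{v_{1}}$, i.e.\ the $R$-analogue of the $R_{T}$ ordering you did prove: if $P_{v_{1}}$ intersects $P_{x_{\ast}}$, then either $\phi_{v_{1}}>\phi_{x_{\ast}}$ gives $v_{1}\in N(x_{\ast})$ (contradicting $v_{1}\notin N[X_{1}]$), or $\phi_{v_{1}}<\phi_{x_{\ast}}$ gives $N(x_{\ast})\subseteq N(v_{1})$ by Lemma~\ref{intersecting-unbounded} and hence $v_{0}\in N(v_{1})$ (contradicting the chain condition); and $P_{v_{1}}\ll_{R}P_{x_{\ast}}\ll_{R}P_{u}\ll_{R}P_{x_{2}}$ contradicts $v_{1}\in N(x_{2})$. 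With $P_{x_{\ast}}\ll_{R}P_{v_{1}}$ in hand, $P_{v_{1}}\ll_{R}P_{v_{0}}$ immediately yields $P_{x_{\ast}}\ll_{R}P_{v_{0}}$, contradicting $v_{0}\in N(x_{\ast})$. This second gap is repairable with tools you already use (your exclusion of the intersection case $P_{v_{0}}\cap P_{v_{1}}\neq\emptyset$ is correct), but as written the step fails; the first gap is the substantive missing idea.
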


\begin{proof}
First, we will prove that $v_{1}$ is a bounded vertex in $R$. Suppose
otherwise that $v_{1}$ is unbounded, and thus $v_{1}\notin N(u)$. Suppose
that $P_{v_{1}}$ intersects $P_{u}$ in $R$. Then, $\phi _{v_{1}}>\phi _{u}$,
since $\phi _{u}=\min \{\phi _{x}\ |\ x\in V_{U}\}$, and thus $%
N(v_{1})\subseteq N(u)$ by Lemma~\ref{intersecting-unbounded}. Recall that $%
x_{2}\in N(v_{1})$, since $v_{1}\in H_{1}\subseteq H$, and thus also $%
x_{2}\in N(u)$. Then, $x_{2}\in N(u)$, which is a contradiction. Therefore, $%
P_{v_{1}}$ does not intersect $P_{u}$ in $R$. If $P_{v_{1}}\ll _{R}P_{u}$,
then $P_{v_{1}}\ll _{R}P_{u}\ll _{R}P_{x_{2}}$, and thus $v_{1}\notin
N(x_{2})$, which is a contradiction by definition of $H$. Therefore, $%
P_{u}\ll _{R}P_{v_{1}}$. Furthermore, $x_{2}$ is bounded and $\phi
_{x_{2}}>\phi _{v_{1}}$, since $v_{1}$ is assumed to be unbounded and $%
v_{1}\in N(x_{2})$ by definition of $H$. Recall that $T_{x}\ll
_{R_{T}}T_{x_{2}}\ll _{R_{T}}T_{u}$ for every $x\in X_{1}$ by Lemma~\ref%
{N(w)-2}. Thus, since $v_{1}\in N(x_{2})$, $v_{1}\notin N(u)$, and $%
v_{1}\notin N(x)$ for every $x\in X_{1}$, it follows that also $T_{x}\ll
_{R_{T}}T_{v_{1}}\ll _{R_{T}}T_{u}$ for every $x\in X_{1}$. Moreover, $%
N(u)\nsubseteq N(v_{1})$, since $u$ is unbounded-maximal and by Lemma~\ref%
{not-equal}. Let $y\in N(u)\setminus N(v_{1})$, and thus $y$ is bounded.
Then, $T_{v_{1}}\ll _{R_{T}}T_{y}$, since $T_{v_{1}}\ll _{R_{T}}T_{u}$, and
since $y\in N(u)$ and $y\notin N(v_{1})$. Therefore, $T_{x}\ll
_{R_{T}}T_{v_{1}}\ll _{R_{T}}T_{y}$ for every $x\in X_{1}$, and thus, in
particular $y\notin N(X_{1})$.

Suppose that $N_{1}(y)\neq N$. Then, $y\notin C_{u}$ by Lemma~\ref%
{N-Cu-unbounded}. Thus, since $u\in C_{u}$, $y\in N(u)$, and $y\notin Q_{u}$%
, it follows by Lemma~\ref{N(w)-1} that either $y\in N(X_{1})$ or $y\in
N(x_{2})$. Therefore, $y\in N(x_{2})$, since $y\notin N(X_{1})$, as we have
proved above. Let $z\in N\setminus N_{1}(y)$. Then, $z\in N(x)\cap N(x_{2})$
for some $x\in X_{1}$. Thus, since $P_{x}\ll _{R}P_{u}\ll _{R}P_{x_{2}}$, it
follows that $P_{z}$ intersects $P_{u}$ in $R$. Suppose that $z$ is
unbounded. Then, $\phi _{z}>\phi _{u}$, since $\phi _{u}=\min \{\phi _{x}\
|\ x\in V_{U}\}$, and thus $N(z)\subseteq N(u)$ by Lemma~\ref%
{intersecting-unbounded}. Then, $x_{2}\in N(u)$, which is a contradiction.
Therefore, $z$ is bounded, and thus $P_{y}$ does not intersect $P_{z}$,
since $y$ is also \ bounded and $z\notin N(y)$. That is, either $P_{y}\ll
_{R}P_{z}$ or $P_{z}\ll _{R}P_{y}$.

Suppose first that $P_{y}\ll _{R}P_{z}$. If $P_{y}\ll _{R}P_{x}$, then $%
P_{y}\ll _{R}P_{x}\ll _{R}P_{u}$, and thus $y\notin N(u)$, which is a
contradiction. If $P_{x}\ll _{R}P_{y}$, then $P_{x}\ll _{R}P_{y}\ll
_{R}P_{z} $, and thus $z\notin N(x)$, which is again a contradiction. Thus, $%
P_{y}$ intersects $P_{x}$ in $R$. Recall that $y\notin N(x)$, since $y\notin
N(X_{1})$. Thus, since $y$ is bounded, it follows that $x$ is unbounded and $%
\phi _{x}>\phi _{y}$. Then, $N(x)\subseteq N(y)$ by Lemma~\ref%
{intersecting-unbounded}, and thus $z\in N(y)$, which is a contradiction.

Suppose now that $P_{z}\ll _{R}P_{y}$. Recall that $L(y)<_{R}L(u)$ by Lemma~%
\ref{unbounded-bounded}, since $y\in N(u)$, and thus $%
R(z)<_{R}L(y)<_{R}L(u)<_{R}L(x_{2})$. Therefore, $%
r(u)<_{R}l(x_{2})<_{R}r(z)<_{R}l(y)$, since $z\in N(x_{2})$. That is, $%
L(y)<_{R}L(x_{2})$ and $l(x_{2})<_{R}l(y)$, and thus $\phi _{y}>\phi
_{x_{2}}>\phi _{v_{1}}$ (since $\phi _{x_{2}}>\phi _{v_{1}}$, as we proved
above). If $P_{y}$ intersects $P_{v_{1}}$ in $R$, then $y\in N(v_{1})$,
since $y$ is bounded, which is a contradiction. Therefore, $P_{y}$ does not
intersect $P_{v_{1}}$ in $R$, i.e.~either $P_{v_{1}}\ll _{R}P_{y}$ or $%
P_{y}\ll _{R}P_{v_{1}}$. If $P_{v_{1}}\ll _{R}P_{y}$, then $P_{u}\ll
_{R}P_{v_{1}}\ll _{R}P_{y}$, and thus $y\notin N(u)$, which is a
contradiction. Therefore, $P_{y}\ll _{R}P_{v_{1}}$.

Summarizing, $P_{z}\ll _{R}P_{y}\ll _{R}P_{v_{1}}$, and thus $%
r(z)<_{R}r(y)<_{R}r(v_{1})$. Recall that $v_{1}\in N[u,w]=N(u)\cup N(w)$ by
definition of $H$. Therefore, $v_{1}\in N(w)$, since $v_{1}\notin N(u)$, and
thus $r(v_{1})<_{R}r(w)$ by Lemma~\ref{unbounded-bounded}. Recall that $%
r(w)<_{R}l(x_{2})$, since $P_{w}\ll _{R}P_{x_{2}}$. That is, $%
r(z)<_{R}r(y)<_{R}r(v_{1})<_{R}r(w)<_{R}l(x_{2})$, i.e.~$r(z)<_{R}l(x_{2})$.
On the other hand, $R(z)<_{R}L(y)$, since $P_{z}\ll _{R}P_{y}$. Furthermore, 
$L(y)<_{R}L(u)$ by Lemma~\ref{unbounded-bounded} and since $y\in N(u)$, and $%
L(u)<_{R}L(x_{2})$, since $P_{u}\ll _{R}P_{x_{2}}$. That is, $%
R(z)<_{R}L(y)<_{R}L(u)<_{R}L(x_{2})$, i.e.~$R(z)<_{R}L(x_{2})$. Therefore,
since also $r(z)<_{R}l(x_{2})$, it follows that $P_{z}\ll _{R}P_{x_{2}}$.
This is a contradiction, since $z\in N=N_{1}(x_{2})$ by Lemma~\ref{N-Cu}.
Therefore, $N_{1}(y)=N$.

Since $N_{1}(y)=N$, and since $T_{x}\ll _{R_{T}}T_{v_{1}}\ll _{R_{T}}T_{y}$
for every $x\in X_{1}$, it follows that $T_{z}$ intersects $T_{v_{1}}$ in $%
R_{T}$, i.e.~$z\in N(v_{1})$, for every $z\in N$. Thus $N_{1}(v_{1})=N$,
i.e.~$N=H_{0}\subseteq N(v_{1})$, which is a contradiction by Definition~\ref%
{Hi}, since $v_{1}\in H_{1}$. Therefore, $v_{1}$ is a bounded vertex in $R$.

Recall now that $v_{0}\in N(x_{0})\cap N(x_{2})$ for some $x_{0}\in X_{1}$,
since $v_{0}\in N=N_{1}(x_{2})$ by Lemma~\ref{N-Cu}. Furthermore, $%
v_{1}\notin N(x_{0})$ by definition of $H$, since otherwise $v_{1}\in
N(X_{1})$, which is a contradiction. Suppose that $P_{v_{1}}$ intersects $%
P_{x_{0}}$ in $R$. If $\phi _{v_{1}}>\phi _{x_{0}}$, then $v_{1}\notin
N(x_{0})$, since $v_{1}$ is bounded, which is a contradiction. Thus, $\phi
_{v_{1}}<\phi _{x_{0}}$. Then, $N(x_{0})\subseteq N(v_{1})$ by Lemma~\ref%
{intersecting-unbounded}, and thus $v_{0}\in N(v_{1})$, which is a
contradiction. Therefore, $P_{v_{1}}$ does not intersect $P_{x_{0}}$ in $R$.
If $P_{v_{1}}\ll _{R}P_{x_{0}}$, then $P_{v_{1}}\ll _{R}P_{x_{0}}\ll
_{R}P_{u}\ll _{R}P_{x_{2}}$, and thus $v_{1}\notin N(x_{2})$, which is a
contradiction. Thus, $P_{x_{0}}\ll _{R}P_{v_{1}}$.

Furthermore, $P_{v_{0}}$ intersects $P_{u}$ in $R$, since $P_{x_{0}}\ll
_{R}P_{u}\ll _{R}P_{x_{2}}$ and $v_{0}\in N(x_{0})\cap N(x_{2})$. If $v_{0}$
is unbounded, then $\phi _{v_{0}}>\phi _{u}$, since $\phi _{u}=\min \{\phi
_{x}\ |\ x\in V_{U}\}$, and thus $N(v_{0})\subseteq N(u)$ by Lemma~\ref%
{intersecting-unbounded}. Then, $x_{2}\in N(u)$, which is a contradiction.
Therefore, $v_{0}$ is bounded, and thus $P_{v_{0}}$ does not intersect $%
P_{v_{1}}$ in $R$, since $v_{1}$ is also bounded and $v_{0}\notin N(v_{1})$.
That is, either $P_{v_{1}}\ll _{R}P_{v_{0}}$ or $P_{v_{0}}\ll _{R}P_{v_{1}}$%
. If $P_{v_{1}}\ll _{R}P_{v_{0}}$, then $P_{x_{0}}\ll _{R}P_{v_{1}}\ll
_{R}P_{v_{0}}$, and thus $v_{0}\notin N(x_{0})$, which is a contradiction.
Thus, $P_{v_{0}}\ll _{R}P_{v_{1}}$.

Finally, recall that $T_{x}\ll _{R_{T}}T_{x_{2}}$ for every $x\in X_{1}$ by
Lemma~\ref{N(w)-2}. Therefore, $T_{x}\ll _{R_{T}}T_{v_{1}}$ for every $x\in
X_{1}$, since $v_{1}\in N(x_{2})$ and $v_{1}\notin N(x)$ for every $x\in
X_{1}$. Moreover, $T_{v_{1}}$ does not intersect $T_{v_{0}}$ in $R_{T}$,
since $v_{0}\notin N(v_{1})$. Thus, either $T_{v_{1}}\ll _{R_{T}}T_{v_{0}}$
or $T_{v_{0}}\ll _{R_{T}}T_{v_{1}}$. If $T_{v_{1}}\ll _{R_{T}}T_{v_{0}}$,
then $T_{x}\ll _{R_{T}}T_{v_{1}}\ll _{R_{T}}T_{v_{0}}$ for every $x\in X_{1}$%
, and thus $v_{0}\notin N=N(X_{1})$, which is a contradiction. Thus, $%
T_{v_{1}}\ll _{R_{T}}T_{v_{0}}$. This completes the proof of the lemma.
\end{proof}

\begin{lemma}
\label{alternating-bounded-chain}Let $v_{i}\in H_{i}$, for some $i\geq 2$,
and $(v_{0},v_{1},\ldots ,v_{i})$ be an $H_{i}$-chain of $v_{i}$. Then, for
every $j=1,2,\ldots ,i-1$,

\begin{enumerate}
\item $P_{v_{j-1}}\ll _{R}P_{v_{j}}$ and $T_{v_{j-1}}\ll _{R_{T}}T_{v_{j}}$,
if $j$ is odd,

\item $P_{v_{j}}\ll _{R}P_{v_{j-1}}$ and $T_{v_{j}}\ll _{R_{T}}T_{v_{j-1}}$,
if $j$ is even.
\end{enumerate}
\end{lemma}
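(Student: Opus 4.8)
The plan is to prove both claims simultaneously, for $j=1,\dots,i-1$, by induction on $j$, strengthening the inductive hypothesis so that it also records that $v_{j}$ is a \emph{bounded} vertex (this boundedness is what will let me convert non-adjacencies into non-intersections in $R$ at the following step). The base case $j=1$ is exactly Lemma~\ref{alternating-bounded-chain-1}, applied to the prefix $(v_{0},v_{1})$, which is an $H_{1}$-chain of $v_{1}$; it gives $P_{v_{0}}\ll_{R}P_{v_{1}}$, $T_{v_{0}}\ll_{R_{T}}T_{v_{1}}$, and the boundedness of $v_{0}$ and $v_{1}$. Before the step I would extract the purely combinatorial facts hidden in Definition~\ref{Hi}: since $v_{j}\in H_{j}$ with $j\ge 2$, we have $v_{j}\notin H_{1}$, hence $H_{0}=N\subseteq N(v_{j})$; and $v_{j}\notin H_{j-1}$ (the $H_{\ell}$ are pairwise disjoint) forces $H_{j-2}\subseteq N(v_{j})$, so in particular $v_{j-2}v_{j}\in E$. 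Together with the chain conditions $v_{j-1}v_{j}\notin E$ and $v_{j-2}v_{j-1}\notin E$ (Definition~\ref{Hi-chain}), these are the only relations about $G$ that I will use.

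For the ordering part of the step I would argue purely by transitivity of $\ll_{R}$ and $\ll_{R_{T}}$. Assume the claim for $j-1$ and take $j$ even (the odd case is identical with both orders reversed), so the hypothesis gives $P_{v_{j-2}}\ll_{R}P_{v_{j-1}}$ and $T_{v_{j-2}}\ll_{R_{T}}T_{v_{j-1}}$, and I want $P_{v_{j}}\ll_{R}P_{v_{j-1}}$ and $T_{v_{j}}\ll_{R_{T}}T_{v_{j-1}}$. Since $v_{j-2}v_{j}\in E$, the trapezoids $T_{v_{j-2}}$ and $T_{v_{j}}$ intersect, while $v_{j-1}v_{j}\notin E$ gives that $T_{v_{j-1}}$ and $T_{v_{j}}$ do not; were $T_{v_{j-1}}\ll_{R_{T}}T_{v_{j}}$, then $T_{v_{j-2}}\ll_{R_{T}}T_{v_{j-1}}\ll_{R_{T}}T_{v_{j}}$ would force $T_{v_{j-2}}\ll_{R_{T}}T_{v_{j}}$, contradicting their intersection, so $T_{v_{j}}\ll_{R_{T}}T_{v_{j-1}}$. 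The $R$-statement is the same computation once I know that $P_{v_{j-1}}$ and $P_{v_{j}}$ do not intersect: adjacency always forces intersection, so $P_{v_{j-2}}$ and $P_{v_{j}}$ intersect, and transitivity then excludes $P_{v_{j-1}}\ll_{R}P_{v_{j}}$, leaving $P_{v_{j}}\ll_{R}P_{v_{j-1}}$.

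Thus the ordering argument rests on two facts, which I would supply next. The non-intersection of $P_{v_{j-1}}$ and $P_{v_{j}}$ comes cheaply from the non-adjacency: if these two parallelograms intersected while $v_{j-1}v_{j}\notin E$, then, as $v_{j-1}$ is bounded by the inductive hypothesis, $v_{j}$ would have to be unbounded with $\phi_{v_{j-1}}<\phi_{v_{j}}$, and Lemma~\ref{intersecting-unbounded} would yield $N(v_{j})\subseteq N(v_{j-1})$; but then $v_{j-2}\in N(v_{j})$ would give $v_{j-2}\in N(v_{j-1})$, contradicting $v_{j-2}v_{j-1}\notin E$. Note this uses only the boundedness of $v_{j-1}$, so the orderings at index $j$ hold irrespective of whether $v_{j}$ itself turns out bounded; the boundedness of $v_{j}$ is needed only so that $v_{j}$ may play the role of $v_{j-1}$ at the next step.

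The hard part will be establishing that $v_{j}$ is bounded, and I expect this to be as delicate as the corresponding argument inside Lemma~\ref{alternating-bounded-chain-1}. The plan is to assume $v_{j}$ unbounded and derive a contradiction with $v_{j}\in H_{j}$, i.e.\ with $H_{j-1}\nsubseteq N(v_{j})$. First I would record the easy consequences: since no two unbounded vertices are adjacent, $v_{j}\notin N(u)$, whence $v_{j}\in N(w)$ (as $v_{j}\in N[u,w]$); moreover $v_{j}\in N(x_{2})$ forces $x_{2}$ to be bounded with $\phi_{x_{2}}>\phi_{v_{j}}>\phi_{u}$, the last inequality because $\phi_{u}=\min\{\phi_{x}\mid x\in V_{U}\}$; and $v_{j}\notin Q_{u}$ together with the unbounded-maximality of $u$ and Lemma~\ref{not-equal} make $N(u)$ and $N(v_{j})$ incomparable. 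Using these, the positions $P_{x}\ll_{R}P_{u}\ll_{R}P_{x_{2}}$ and $T_{x}\ll_{R_{T}}T_{x_{2}}\ll_{R_{T}}T_{u}$ for $x\in X_{1}$ (which hold since $X_{1}\subseteq D_{1}$, $x_{2}\in D_{2}$, and by Lemma~\ref{N(w)-2}), and the relation $N=N_{1}(x_{2})$ (Lemma~\ref{N-Cu}), I would run the same slope-versus-position case analysis as in the base case to show that every vertex of $H_{j-1}$ is in fact adjacent to $v_{j}$, i.e.\ $H_{j-1}\subseteq N(v_{j})$, the desired contradiction. This step, rather than the transitivity bookkeeping of the previous two paragraphs, is where essentially all of the work lies.
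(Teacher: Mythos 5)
Your induction skeleton and the transitivity bookkeeping are sound, but you missed the trick that makes the paper's proof short and makes your strengthened hypothesis unnecessary: the chain extends \emph{forward}. Since $j\leq i-1$, the vertex $v_{j+1}$ exists, and $v_{j+1}\in H_{j+1}$ forces $v_{j+1}\notin H_{j}$, which by Definition~\ref{Hi} gives $H_{j-1}\subseteq N(v_{j+1})$, so $v_{j-1}\in N(v_{j+1})$; together with $v_{j}v_{j+1}\notin E$ this yields $v_{j+1}\in N(v_{j-1})\setminus N(v_{j})$, while your observation $v_{j-2}\in N(v_{j})\setminus N(v_{j-1})$ supplies the other direction. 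With $N(v_{j})\nsubseteq N(v_{j-1})$ and $N(v_{j-1})\nsubseteq N(v_{j})$ both in hand, Lemma~\ref{intersecting-unbounded} rules out $P_{v_{j}}$ intersecting $P_{v_{j-1}}$ outright, with no boundedness assumption on either vertex. This two-sided incomparability is exactly why the lemma stops at $j=i-1$, and it reduces the whole induction step to the few lines of transitivity you already wrote. The paper proves boundedness of all of $H$ separately and \emph{afterwards} (Lemma~\ref{N-H-C2-Cu-bounded}), using this lemma, rather than threading it through the induction.

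The genuine gap in your proposal is the step you yourself flag as carrying ``essentially all of the work'': the boundedness of $v_{j}$ is only sketched, and the sketched route would not go through as described. You aim to show $H_{j-1}\subseteq N(v_{j})$ by ``the same slope-versus-position case analysis as in the base case,'' but the base case's sandwich mechanism ($T_{x}\ll_{R_{T}}T_{v_{1}}\ll_{R_{T}}T_{y}$ with $z\in N(x)\cap N(y)$) produces adjacencies only to vertices $z\in N=N(X_{1})$, whereas the vertices of $H_{j-1}$ lie in $H$, which is by definition disjoint from $N[X_{1}]$ --- so that mechanism cannot reach them and the analogy breaks down. A correct completion exists but is different and much shorter: if $v_{j}$ with $j\geq 2$ were unbounded, then $v_{j}\notin H_{1}$ gives $N\subseteq N(v_{j})$; $v_{j}\in N(x_{2})$ excludes $P_{v_{j}}\ll_{R}P_{u}$, and $P_{v_{j}}$ intersecting $P_{u}$ would give $N(v_{j})\subseteq N(u)$ by Lemma~\ref{intersecting-unbounded} (using $\phi_{v_{j}}>\phi_{u}$ from the minimality of $\phi_{u}$), hence $x_{2}\in N(u)$, a contradiction; so $P_{u}\ll_{R}P_{v_{j}}$, whence every $z\in N\subseteq N(v_{j})$ is bounded with $\phi_{z}>\phi_{v_{j}}>\phi_{u}$ and $P_{z}$ intersects $P_{u}$, giving $N_{1}(u)=N$ and contradicting Lemma~\ref{N-Cu} --- precisely the argument the paper later uses in Lemma~\ref{N-H-C2-Cu-bounded}. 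So your plan is repairable, but as written its decisive step is unproven and aimed at a contradiction your tools do not deliver; and even repaired, it is strictly heavier than the paper's $v_{j+1}$ argument.
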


\begin{proof}
The proof will be done by induction on $j$. For $j=1$, the induction basis
follows by Lemma~\ref{alternating-bounded-chain-1}. For the induction step,
let $2\leq j<i-1$. Note that $v_{j-2}\in N(v_{j})\setminus N(v_{j-1})$ and $%
v_{j+1}\in N(v_{j-1})\setminus N(v_{j})$. Therefore, $N(v_{j})\nsubseteq
N(v_{j-1})$ and $N(v_{j-1})\nsubseteq N(v_{j})$, and thus $P_{v_{j}}$ does
not intersect $P_{v_{j-1}}$ in $R$ by Lemma~\ref{intersecting-unbounded},
since $v_{j-1}v_{j}\notin E$. Thus, either $P_{v_{j-1}}\ll _{R}P_{v_{j}}$ or 
$P_{v_{j}}\ll _{R}P_{v_{j-1}}$. Furthermore, either $T_{v_{j-1}}\ll
_{R_{T}}T_{v_{j}}$ or $T_{v_{j}}\ll _{R_{T}}T_{v_{j-1}}$, since $%
v_{j-1}v_{j}\notin E$.

Let $j$ be odd, i.e.~$j-1$ is even, and suppose by induction hypothesis that 
$P_{v_{j-1}}\ll _{R}P_{v_{j-2}}$ and $T_{v_{j-1}}\ll _{R_{T}}T_{v_{j-2}}$.
If $P_{v_{j}}\ll _{R}P_{v_{j-1}}$ (resp.~$T_{v_{j}}\ll _{R_{T}}T_{v_{j-1}}$%
), then $P_{v_{j}}\ll _{R}P_{v_{j-2}}$ (resp.~$T_{v_{j}}\ll
_{R_{T}}T_{v_{j-2}}$). Thus, $v_{j}v_{j-2}\notin E$, i.e.~$v_{j}\in H_{j-1}$
by Definition~\ref{Hi}, which is a contradiction. Therefore, $P_{v_{j-1}}\ll
_{R}P_{v_{j}}$ and $T_{v_{j-1}}\ll _{R_{T}}T_{v_{j}}$, if $j$ is odd.

Let now $j$ be even, i.e.~$j-1$ is odd, and suppose by induction hypothesis
that $P_{v_{j-2}}\ll _{R}P_{v_{j-1}}$ and $T_{v_{j-2}}\ll
_{R_{T}}T_{v_{j-1}} $. If $P_{v_{j-1}}\ll _{R}P_{v_{j}}$ (resp.~$%
T_{v_{j-1}}\ll _{R_{T}}T_{v_{j}} $), then $P_{v_{j-2}}\ll _{R}P_{v_{j}}$
(resp.~$T_{v_{j-2}}\ll _{R_{T}}T_{v_{j}}$), and thus $v_{j}v_{j-2}\notin E$,
which is again a contradiction. Therefore, $P_{v_{j}}\ll _{R}P_{v_{j-1}}$
and $T_{v_{j}}\ll _{R_{T}}T_{v_{j-1}}$, if $j$ is even. This completes the
induction step, and thus the lemma follows.
\end{proof}

\medskip

The next lemma, which follows now easily by Lemmas~\ref{N-Cu-unbounded},~\ref%
{N-C2},~\ref{alternating-bounded-chain-1}, and~\ref%
{alternating-bounded-chain}, will be mainly used in the sequel.

\begin{lemma}
\label{N-H-C2-Cu-bounded}All vertices of $N\cup H\cup C_{2}\cup
C_{u}\setminus \{u\}$ are bounded.
\end{lemma}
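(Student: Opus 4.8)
The plan is to prove boundedness of the four sets $N$, $H$, $C_{2}$, and $C_{u}\setminus\{u\}$ separately, after extracting a single reusable observation that dispatches most of the work. Throughout I would keep in hand four facts already available: that $N\neq\emptyset$ and $N_{1}(u)\subset N$ strictly (Lemma~\ref{N-Cu}), so in particular $N\nsubseteq N(u)$; that $\phi_{u}=\min\{\phi_{x}\ |\ x\in V_{U}\}$ by the choice of $u$; that every $z\in N$ has the property that $P_{z}$ intersects $P_{u}$ in $R$ (the elementary consequence of $z\in N(x)\cap N(x_{2})$ with $P_{x}\ll_{R}P_{u}\ll_{R}P_{x_{2}}$, already used inside the proof of Lemma~\ref{Cu-V0}); and the adjacency rule of a projection representation, namely that a bounded vertex $z$ with $\phi_{z}>\phi_{u}$ whose $P_{z}$ meets the unbounded $P_{u}$ must lie in $N(u)$.

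The reusable observation is: \emph{any vertex $v\neq u$ with $N\subseteq N(v)$ is bounded.} If such a $v$ were unbounded then $\phi_{v}>\phi_{u}$ (minimality and $v\neq u$), and choosing any $z\in N\subseteq N(v)$, Lemma~\ref{unbounded-bounded} forces $z$ to be bounded with $\phi_{z}>\phi_{v}>\phi_{u}$; since $P_{z}$ meets $P_{u}$, the adjacency rule gives $z\in N(u)$. As $z\in N$ was arbitrary this yields $N\subseteq N(u)$, i.e.\ $N_{1}(u)=N$, contradicting $N_{1}(u)\subset N$. This observation immediately covers $C_{2}$ and $C_{u}\setminus\{u\}$: Lemmas~\ref{N-C2} and~\ref{N-Cu-unbounded} give $N_{1}(v)=N$, hence $N=N(X_{1})\subseteq N(v)$, for every $v$ in these sets (and $v\neq u$ trivially), so all such $v$ are bounded.

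For $N$ itself I would argue directly, since a $z\in N$ need not satisfy $N\subseteq N(z)$. If $z\in N$ were unbounded, then $zu\notin E$ (otherwise $z$, a neighbour of the unbounded $u$, would be bounded by Lemma~\ref{unbounded-bounded}); as $P_{z}$ meets $P_{u}$ and $\phi_{z}>\phi_{u}$, Lemma~\ref{intersecting-unbounded} gives $N(z)\subseteq N(u)$. But $z\in N(x)$ for some $x\in X_{1}\subseteq V_{0}(u)$ (Lemma~\ref{N(w)-1}), so $x\in N(z)\subseteq N(u)$, contradicting $V_{0}(u)\cap N(u)=\emptyset$. For $H$ I would split along the partition of Definition~\ref{Hi}. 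Every $v\in H\setminus H_{1}$ satisfies $H_{0}=N\subseteq N(v)$ by the very definition of $H_{1}$, and $v\neq u$ (since $u\notin N(x_{2})$ gives $u\notin H$), so the reusable observation applies; this already disposes of all deeper layers $H_{i}$, $i\geq 2$, which lie in $H\setminus H_{1}$. The remaining vertices $v\in H_{1}$ each carry an $H_{1}$-chain $(v_{0},v_{1}=v)$ with $v_{0}\in N\setminus N(v)$, and Lemma~\ref{alternating-bounded-chain-1} states precisely that such a $v_{1}$ is bounded; the positional content of Lemma~\ref{alternating-bounded-chain} then organizes the longer chains, should one prefer to treat each $H_{i}$ inductively rather than through the uniform observation.

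The step I expect to require the most care is $H$, since it is the only one of the four sets not contained in $V_{0}(u)\cup N(u)$ and whose vertices need not satisfy $N_{1}(v)=N$; it is exactly the $H_{i}$-chain bookkeeping that cleanly separates the single ``bad'' layer $H_{1}$ (where $N\nsubseteq N(v)$, handled by Lemma~\ref{alternating-bounded-chain-1}) from the rest. The other recurring delicacy is the constant passage between the representation $R$ and genuine adjacency: each sub-argument ultimately rests on the one-line bridge ``$z\in N\Rightarrow P_{z}$ meets $P_{u}$, and a bounded vertex meeting the unbounded $P_{u}$ with larger slope is adjacent to $u$,'' so I would state that bridge once and reuse it rather than re-derive it in every case.
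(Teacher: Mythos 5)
Your proof is correct, and its skeleton coincides with the paper's: $N$ is handled by the bridge fact (every $z\in N$ lies in $N(x)\cap N(x_{2})$ for some $x\in X_{1}$, with $P_{x}\ll_{R}P_{u}\ll_{R}P_{x_{2}}$ by Lemmas~\ref{N(w)-1} and~\ref{N-Cu}, so $P_{z}$ intersects $P_{u}$) together with Lemma~\ref{intersecting-unbounded}; $H_{1}$ is handled by Lemma~\ref{alternating-bounded-chain-1}; and every other unbounded candidate is killed by forcing $N_{1}(u)=N$, contradicting Lemma~\ref{N-Cu}. The one genuine difference is your reusable observation that any $v\neq u$ with $N\subseteq N(v)$ is bounded, proved without reference to the position of $P_{v}$ relative to $P_{u}$. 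The paper instead splits each remaining case ($v\in H\setminus H_{1}$, and $v\in C_{2}\cup C_{u}\setminus\{u\}$) into the three positional subcases $P_{v}\ll_{R}P_{u}$, $P_{v}$ intersects $P_{u}$, and $P_{u}\ll_{R}P_{v}$; in the intersecting subcase for $C_{2}\cup C_{u}$ it even takes a different exit, deriving $N(v)\subseteq N(u)$ from Lemma~\ref{intersecting-unbounded}, sharpening to $N(v)\subset N(u)$ via Lemma~\ref{not-equal}, and contradicting $v\notin Q_{u}$. Your version works because the crucial intersection $P_{z}\cap P_{u}\neq\emptyset$ for $z\in N$ comes from the flanking vertices $x\in X_{1}$ and $x_{2}$ rather than from $v$ itself, so the case split and the $Q_{u}$ detour are simply unnecessary; together with $z\in N(v)$, $v$ unbounded (so $z$ bounded and $\phi_{z}>\phi_{v}>\phi_{u}$ by Lemma~\ref{unbounded-bounded} and minimality of $\phi_{u}$), this gives $z\in N(u)$ uniformly. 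You also correctly discharge the hypothesis $v\neq u$ on $H$ via $u\notin N(x_{2})$, and your appeal to Lemmas~\ref{N-Cu-unbounded} and~\ref{N-C2} for unbounded $v$ is legitimate since neither assumes boundedness. Net effect: same contradiction engine as the paper, but shorter and more uniform; nothing is lost relative to the paper's route.
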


\begin{proof}
Consider first a vertex $v\in N$. Then, $v\in N(x)\cap N(x_{2})$ for some $%
x\in X_{1}$ by Lemma~\ref{N-C2}. Thus, $P_{v}$ intersects $P_{u}$ in $R$,
since $P_{x}\ll _{R}P_{u}\ll _{R}P_{x_{2}}$. Suppose that $v$ is unbounded.
Then, $\phi _{v}>\phi _{u}$, since $\phi _{u}=\min \{\phi _{x}\ |\ x\in
V_{U}\}$, and thus $N(v)\subseteq N(u)$ by Lemma~\ref{intersecting-unbounded}%
. Then, $x_{2}\in N(u)$, which is a contradiction. Thus, every $v\in N$ is
bounded.

Consider now a vertex $v\in H$. If $v\in H_{1}$, then $v$ is bounded by
Lemma~\ref{alternating-bounded-chain-1}. Suppose that $v\in H\setminus H_{1}$
and that $v$ is unbounded. Then, $\phi _{v}>\phi _{u}$, since $\phi
_{u}=\min \{\phi _{x}\ |\ x\in V_{U}\}$. Furthermore, $H_{0}=N\subseteq N(v)$
by Definition~\ref{Hi}, and thus $N_{1}(v)=N$. If $P_{v}\ll _{R}P_{u}$, then 
$P_{v}\ll _{R}P_{u}\ll _{R}P_{x_{2}}$, and thus $v\notin N(x_{2})$, which is
a contradiction to the definition of $H$. If $P_{v}$ intersects $P_{u}$ in $%
R $, then $N(v)\subseteq N(u)$ by Lemma~\ref{intersecting-unbounded}, since $%
\phi _{v}>\phi _{u}$, and thus $x_{2}\in N(u)$, which is again a
contradiction. Therefore, $P_{u}\ll _{R}P_{v}$, i.e.~$P_{x}\ll _{R}P_{u}\ll
_{R}P_{v}$ for every $x\in X_{1}$, and thus $P_{z}$ intersects $P_{u}$ in $R$
for every $z\in N_{1}(v)=N=N(X_{1})$. However, $z$ is bounded and $\phi
_{z}>\phi _{v}>\phi _{u}$ for every $z\in N_{1}(v)$, since $v$ is unbounded.
Therefore, $N_{1}(v)\subseteq N(u)$, and thus $N_{1}(u)=N$, which is a
contradiction by Lemma~\ref{N-Cu}. Thus, every $v\in H\setminus H_{1}$ is
bounded.

Consider finally a vertex $v\in C_{2}\cup C_{u}\setminus \{u\}$ and suppose
that $v$ is unbounded. Then, similarly to the above, $\phi _{v}>\phi _{u}$,
since $\phi _{u}=\min \{\phi _{x}\ |\ x\in V_{U}\}$. Furthermore, $%
N_{1}(v)=N $ by Lemmas~\ref{N-Cu-unbounded} and~\ref{N-C2}, while also $%
N_{1}(x_{2})=N$ by Lemma~\ref{N-Cu}. Suppose that $P_{v}\ll _{R}P_{u}$, i.e.~%
$P_{v}\ll _{R}P_{u}\ll _{R}P_{x_{2}}$. Then, since $N_{1}(v)=N_{1}(x_{2})=N$%
, $P_{z}$ intersects $P_{u}$ in $R$ for every $z\in N$. Furthermore, $z$ is
bounded and $\phi _{z}>\phi _{v}>\phi _{u}$ for every $z\in N_{1}(v)$, since 
$v$ is unbounded. Therefore, $N_{1}(v)\subseteq N(u)$, and thus $N_{1}(u)=N$%
, which is a contradiction by Lemma~\ref{N-Cu}. Suppose that $P_{v}$
intersects $P_{u}$ in $R$. Then, $N(v)\subseteq N(u)$ by Lemma~\ref%
{intersecting-unbounded}, since $\phi _{v}>\phi _{u}$. Therefore, $%
N(v)\subset N(u)$ by Lemma~\ref{not-equal}, and thus $v\in Q_{u}$, which is
a contradiction to the definitions of $C_{u}$ and $C_{2}$. Suppose that $%
P_{u}\ll _{R}P_{v}$, i.e.~$P_{x}\ll _{R}P_{u}\ll _{R}P_{v}$ for every $x\in
X_{1}$. Then, since $N_{1}(v)=N=N(X_{1})$, $P_{z}$ intersects $P_{u}$ in $R$
for every $z\in N$. Furthermore, $z$ is bounded and $\phi _{z}>\phi
_{v}>\phi _{u}$ for every $z\in N_{1}(v)$, since $v$ is unbounded.
Therefore, $N_{1}(v)\subseteq N(u)$, and thus $N_{1}(u)=N$, which is a
contradiction by Lemma~\ref{N-Cu}. Thus, every $v\in C_{2}\cup
C_{u}\setminus \{u\}$ is bounded. This completes the lemma.
\end{proof}

\begin{lemma}
\label{Cu-H}For every vertex $v\in C_{u}\setminus \{u\}$, it holds $%
H_{i}\subseteq N(v)$ for every $i\geq 1$.
\end{lemma}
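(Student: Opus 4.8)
The plan is to prove the statement by induction on $i$, after fixing an arbitrary $v\in C_{u}\setminus\{u\}$. Throughout I would use four facts established earlier: that $v$ and every vertex of $H$ are bounded (Lemma~\ref{N-H-C2-Cu-bounded}); that $N=N(X_{1})\subseteq N(v)$, which is exactly the content of $N_{1}(v)=N$ in Lemmas~\ref{N-Cu} and~\ref{N-Cu-unbounded}; that $T_{x_{2}}\ll_{R_{T}}T_{v}$ (since $w\in C_{u}$, $T_{x_{2}}\ll_{R_{T}}T_{w}$, $C_{u}$ is connected, and no vertex of $C_{u}$ lies in $N(x_{2})$, as in the proof of Lemma~\ref{N-Cu}); and that every $v_{i}\in H_{i}\subseteq H$ satisfies $v_{i}\in N(x_{2})$. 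The key reduction would be: if some $v_{i}\in H_{i}$ were \emph{not} adjacent to $v$, then, as both are bounded, $T_{v_{i}}$ and $T_{v}$ are disjoint in $R_{T}$; since $T_{v_{i}}$ meets $T_{x_{2}}$ and $T_{x_{2}}\ll_{R_{T}}T_{v}$, the trapezoid $T_{v_{i}}$ has a point strictly to the left of $T_{v}$, so necessarily $T_{v_{i}}\ll_{R_{T}}T_{v}$. All of the work then is to contradict $T_{v_{i}}\ll_{R_{T}}T_{v}$.

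For the base case $i=1$, I would pick an $H_{1}$-chain $(v_{0},v_{1})$ of $v_{1}$ (Definition~\ref{Hi-chain}), so $v_{0}\in H_{0}=N$ and $v_{0}v_{1}\notin E$, and Lemma~\ref{alternating-bounded-chain-1} gives $T_{v_{0}}\ll_{R_{T}}T_{v_{1}}$. If $v_{1}\notin N(v)$ then $T_{v_{1}}\ll_{R_{T}}T_{v}$ by the reduction, hence $T_{v_{0}}\ll_{R_{T}}T_{v_{1}}\ll_{R_{T}}T_{v}$ and so $T_{v_{0}}\ll_{R_{T}}T_{v}$; but $v_{0}\in N\subseteq N(v)$ means $T_{v_{0}}$ meets $T_{v}$, a contradiction. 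For the inductive step, assume $H_{1},\dots,H_{i-1}\subseteq N(v)$, let $v_{i}\in H_{i}$, and take an $H_{i}$-chain $(v_{0},\dots,v_{i})$. From Definition~\ref{Hi} one reads off $H_{j}\subseteq N(v_{i})$ for every $j\le i-2$ (since $v_{i}\notin H_{j+1}$ for such $j$ while $v_{i}\in H\setminus\bigcup_{l\le j}H_{l}$); in particular $v_{i-2}\in N(v_{i})$. Suppose $v_{i}\notin N(v)$, so $T_{v_{i}}\ll_{R_{T}}T_{v}$ by the reduction. As $v_{i-1}\in H_{i-1}\subseteq N(v)$ (hypothesis) and $v_{i-1}v_{i}\notin E$, the trapezoids $T_{v_{i-1}},T_{v_{i}}$ are disjoint, and $T_{v_{i-1}}\ll_{R_{T}}T_{v_{i}}$ would force $T_{v_{i-1}}\ll_{R_{T}}T_{v}$, contradicting $v_{i-1}\in N(v)$; hence $T_{v_{i}}\ll_{R_{T}}T_{v_{i-1}}$.

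It remains to invoke Lemma~\ref{alternating-bounded-chain} for the pair $(v_{i-2},v_{i-1})$, i.e.\ the index $j=i-1$. When $i$ is odd (necessarily $i\ge 3$ here, with $i=1$ handled by the base case), $j=i-1$ is even and the lemma gives $T_{v_{i-1}}\ll_{R_{T}}T_{v_{i-2}}$; combined with $T_{v_{i}}\ll_{R_{T}}T_{v_{i-1}}$ this yields $T_{v_{i}}\ll_{R_{T}}T_{v_{i-2}}$, so $v_{i}v_{i-2}\notin E$, contradicting $v_{i-2}\in N(v_{i})$. Thus every odd $i$ is settled purely in $R_{T}$.

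The main obstacle is the even case, where $j=i-1$ is odd and Lemma~\ref{alternating-bounded-chain} instead gives $T_{v_{i-2}}\ll_{R_{T}}T_{v_{i-1}}$; together with $T_{v_{i}}\ll_{R_{T}}T_{v_{i-1}}$ and $v_{i-2}\in N(v_{i})$ this is entirely consistent, so no contradiction comes from the chain geometry alone. Indeed the induced $P_{4}$ on $\{v_{i},v_{i-2},v,v_{i-1}\}$ (edges $v_{i}v_{i-2}$, $v_{i-2}v$, $vv_{i-1}$) admits trapezoid \emph{and} projection representations realizing all the derived inequalities, so here one must feed in global structure rather than the local positions of the chain. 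Concretely, I expect to exploit that for $i\ge 2$ we additionally have $N=H_{0}\subseteq N(v_{i})$, so $v_{i}$ and $v$ share all of $N$ as neighbours, and to use a vertex $z^{\ast}\in N\setminus N(u)$ (which exists by Corollary~\ref{N-X1-tilde-N-u} together with $N_{1}(u)\subset N$ from Lemma~\ref{N-Cu}): since $z^{\ast}\in N(v_{i})\cap N(v)\cap N(x_{2})$ while $z^{\ast}\notin N(u)$, the trapezoid $T_{z^{\ast}}$ is forced to span a long range around $T_{x_{2}}$, and $T_{v_{i}}\ll_{R_{T}}T_{v}$ should be ruled out by tracing this through the connectivity of $V_{0}(u)$ and of $C_{u}$, the minimality $\phi_{u}=\min\{\phi_{x}\mid x\in V_{U}\}$, and the defining positions of $x_{2}$ and $w$, in the spirit of the case analyses in Lemmas~\ref{N-Cu} and~\ref{alternating-bounded-chain-1}. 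Once $T_{v_{i}}$ is shown to meet $T_{v}$ in this case as well, the induction closes and the lemma follows.
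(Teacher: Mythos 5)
There is a genuine gap, and you have correctly located it yourself: the even-$i$ case of the induction step is never proved. Your setup is sound --- the reduction that $v_i\notin N(v)$ forces $T_{v_i}\ll_{R_T}T_v$ (since $v_i\in N(x_2)$ and $T_{x_2}\ll_{R_T}T_v$), the base case, and the odd-$i$ argument via $T_{v_i}\ll_{R_T}T_{v_{i-1}}\ll_{R_T}T_{v_{i-2}}$ are all correct, and your odd case is in fact a clean purely-$R_T$ shortcut. But for even $i$ you only offer an expectation (``should be ruled out by tracing this through\ldots''), and your own observation that the local configuration is realizable shows the chain geometry in $R_T$ cannot close it. The sketch you give --- using $H_0\subseteq N(v_i)$ and a vertex $z^\ast\in N\setminus N(u)$ --- is not developed into a contradiction, and it is not evident it can be without importing the information the paper actually uses.

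What the paper does differently, and what your proposal is missing, are two things. First, a case split on whether $v\in N[u,w]$: your argument treats all $v\in C_u\setminus\{u\}$ uniformly, but the paper's key auxiliary fact $P_v\ll_R P_{x_2}$ is provable only when $v\in N[u,w]$ (if $P_{x_2}\ll_R P_v$ one gets $v\notin N[u,w]$; if they intersect, Lemma~\ref{intersecting-unbounded} plus connectivity of $C_u$ yields a contradiction). Second, a switch to the projection representation $R$: since $v_{i-2}\in N(v_i)\setminus N(v_{i-1})$, $v\in N(v_{i-1})\setminus N(v_i)$, and $x_2\in N(v_i)\setminus N(v)$, Lemma~\ref{intersecting-unbounded} forces $P_{v_i}$ to lie strictly to one side of \emph{both} $P_{v_{i-1}}$ and $P_v$; the left side is killed by $P_{v_i}\ll_R P_v\ll_R P_{x_2}$ (contradicting $v_i\in N(x_2)$), and the right side forces $P_{v_{i-1}}\ll_R P_{v_{i-2}}$ (else $v_iv_{i-2}\notin E$), which by Lemma~\ref{alternating-bounded-chain} pins $i$ to be \emph{odd} --- i.e.\ the problematic even case cannot occur at all --- after which the paper finishes in $R_T$. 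The remaining case $v\notin N[u,w]$ is then handled by a separate short argument: one shows $T_{x_2}\ll_{R_T}T_v\ll_{R_T}T_w$ (using $N_1(v)=N$ from Lemma~\ref{N-Cu-unbounded} to exclude $T_u\ll_{R_T}T_v$), and concludes since every $z\in H_i$ is adjacent to both $x_2$ and $w$ --- where $H_i\subseteq N(w)$ is precisely the first case applied to $v=w$. So to complete your proof you would need both the $N[u,w]$ dichotomy and the $R$-side ordering argument; the $R_T$-only route you propose for even $i$ essentially reconstructs $P_v\ll_R P_{x_2}$ or fails.
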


\begin{proof}
Let $v$ be a vertex of $C_{u}\setminus \{u\}$. Recall that $N_{1}(v)=N$ by
Lemma~\ref{N-Cu-unbounded}. Consider first the case where $v\in
N[u,w]=N(u)\cup N(w)$. The proof will be done by induction on $i$. For $i=1$%
, consider a vertex $v_{1}\in H_{1}$ and an $H_{1}$-chain $(v_{0},v_{1})$ of 
$v_{1}$, where $v_{0}\in H_{0}=N=N(X_{1})$. Since $v_{0}v_{1}\notin E$,
either $T_{v_{1}}\ll _{R_{T}}T_{v_{0}}$ or $T_{v_{0}}\ll _{R_{T}}T_{v_{1}}$.
Suppose that $T_{v_{1}}\ll _{R_{T}}T_{v_{0}}$. Then, since $T_{x}\ll
_{R_{T}}T_{x_{2}}$ for every $x\in X_{1}$ by Lemma~\ref{N(w)-2}, and since $%
v_{1}\in N(x_{2})\setminus N(x)$ for every $x\in X_{1}$ by definition of $H$%
, it follows that $T_{x}\ll _{R_{T}}T_{v_{1}}$ for every $x\in X_{1}$. That
is, $T_{x}\ll _{R_{T}}T_{v_{1}}\ll _{R_{T}}T_{v_{0}}$ for every $x\in X_{1}$%
, and thus $v_{0}\notin N(x)$ for every $x\in X_{1}$, which is a
contradiction. Thus, $T_{v_{0}}\ll _{R_{T}}T_{v_{1}}$. Furthermore, $%
T_{x_{2}}\ll _{R_{T}}T_{v}$, since $T_{x_{2}}\ll _{R_{T}}T_{u}$ and $C_{u}$
is connected. Suppose that $v_{1}\notin N(v)$. Then, $T_{v_{1}}\ll
_{R_{T}}T_{v}$, since $T_{x_{2}}\ll _{R_{T}}T_{v}$ and $v_{1}\in
N(x_{2})\setminus N(v)$. That is, $T_{v_{0}}\ll _{R_{T}}T_{v_{1}}\ll
_{R_{T}}T_{v}$, and thus $v_{0}\notin N(v)$, which is a contradiction, since 
$N_{1}(v)=N$ and $v_{0}\in N$. Thus, $v_{1}\in N(v)$ for every $v_{1}\in
H_{1}$. This proves the induction basis.

For the induction step, let $i\geq 2$, and suppose that $v^{\prime }\in N(v)$
for every $v^{\prime }\in H_{j}$, where $0\leq j\leq i-1$. Let $v_{i}\in
H_{i}$ and $(v_{0},v_{1},\ldots ,v_{i-2},v_{i-1},v_{i})$ be an $H_{i}$-chain
of $v_{i}$. Note that $v_{i-2}$ exists, since $i\geq 2$, and thus $%
v_{i-1}v_{i-2}\notin E\ $and $v_{i}v_{i-2}\in E$ by Definition~\ref{Hi}. For
the sake of contradiction, suppose that $v_{i}\notin N(v)$. We will now
prove that $P_{v}\ll _{R}P_{x_{2}}$. Otherwise, suppose first that $%
P_{x_{2}}\ll _{R}P_{v}$. Then, $P_{u}\ll _{R}P_{x_{2}}\ll _{R}P_{v}$ and $%
P_{w}\ll _{R}P_{x_{2}}\ll _{R}P_{v}$, and thus $v\notin N[u,w]=N(u)\cup N(w)$%
, which is a contradiction to the assumption on $v$. Suppose now that $P_{v}$
intersects $P_{x_{2}}$ in $R$. Then, either $N(x_{2})\subseteq N(v)$ or $%
N(v)\subseteq N(x_{2})$ by Lemma~\ref{intersecting-unbounded}, since $%
v\notin N(x_{2})$ by the definition of $C_{u}$. If $N(x_{2})\subseteq N(v)$,
then $v_{i}\in N(v)$, since $v_{i}\in N(x_{2})$, which is a contradiction.
Let $N(v)\subseteq N(x_{2})$. Then, since $C_{u}$ is connected and $v\neq u$%
, $v$ is adjacent to at least one vertex $z\in C_{u}$, and thus $z\in
N(x_{2})$, which is a contradiction to the definition of $C_{u}$. Thus, $%
P_{v}\ll _{R}P_{x_{2}}$.

Recall that $v_{i-1}\in N(v)$ by the induction hypothesis. Since $v\in
N(v_{i-1})\setminus N(v_{i})$ and $v_{i-2}\in N(v_{i})\setminus N(v_{i-1})$,
it follows that $P_{v_{i}}$ does not intersect $P_{v_{i-1}}$ in $R$ by Lemma %
\ref{intersecting-unbounded}. Similarly, $P_{v_{i}}$ does not intersect $%
P_{v}$ in $R$, since $x_{2}\in N(v_{i})\setminus N(v)$ and $v_{i-1}\in
N(v)\setminus N(v_{i})$. Thus, since $v_{i-1}\in N(v)$, either $P_{v_{i}}\ll
_{R}P_{v_{i-1}}$ and $P_{v_{i}}\ll _{R}P_{v}$, or $P_{v_{i-1}}\ll
_{R}P_{v_{i}}$ and $P_{v}\ll _{R}P_{v_{i}}$. Suppose that $P_{v_{i}}\ll
_{R}P_{v_{i-1}}$ and $P_{v_{i}}\ll _{R}P_{v}$. Then, $P_{v_{i}}\ll
_{R}P_{v}\ll _{R}P_{x_{2}}$, and thus $v_{i}\notin N(x_{2})$, which is a
contradiction.

Thus, $P_{v_{i-1}}\ll _{R}P_{v_{i}}$ and $P_{v}\ll _{R}P_{v_{i}}$. Recall
now by Lemmas~\ref{alternating-bounded-chain-1} and~\ref%
{alternating-bounded-chain} that either $P_{v_{i-2}}\ll _{R}P_{v_{i-1}}$ or $%
P_{v_{i-1}}\ll _{R}P_{v_{i-2}}$. If $P_{v_{i-2}}\ll _{R}P_{v_{i-1}}$, then $%
P_{v_{i-2}}\ll _{R}P_{v_{i-1}}\ll _{R}P_{v_{i}}$, and thus~$%
v_{i}v_{i-2}\notin E$, which is a contradiction. Therefore, $P_{v_{i-1}}\ll
_{R}P_{v_{i-2}}$. Thus, also $T_{v_{i-1}}\ll _{R_{T}}T_{v_{i-2}}$ and $i$ is
odd, by Lemmas~\ref{alternating-bounded-chain-1} and~\ref%
{alternating-bounded-chain}. Since $v_{i-1}v_{i}\notin E$, either $%
T_{v_{i}}\ll _{R_{T}}T_{v_{i-1}}$ or $T_{v_{i-1}}\ll _{R_{T}}T_{v_{i}}$. If $%
T_{v_{i}}\ll _{R_{T}}T_{v_{i-1}}$, then $T_{v_{i}}\ll _{R_{T}}T_{v_{i-1}}\ll
_{R_{T}}T_{v_{i-2}}$, and thus~$v_{i}v_{i-2}\notin E$, which is a
contradiction. Therefore, $T_{v_{i-1}}\ll _{R_{T}}T_{v_{i}}$, and thus $%
T_{v}\ll _{R_{T}}T_{v_{i}}$, since $v\in N(v_{i-1})\setminus N(v_{i})$.
Recall also that $T_{x_{2}}\ll _{R_{T}}T_{v}$, since $T_{x_{2}}\ll
_{R_{T}}T_{u}$ and $C_{u}$ is connected. That is, $T_{x_{2}}\ll
_{R_{T}}T_{v}\ll _{R_{T}}T_{v_{i}}$, and thus $v_{i}\notin N(x_{2})$, which
is a contradiction. Thus, $v_{i}\in N(v)$. This completes the induction step.

Summarizing, we have proved that $H_{i}\subseteq N(v)$ for every $i\geq 1$
and for every vertex $v\in C_{u}\setminus \{u\}$, such that $v\in N[u,w]$.
This holds in particular for $w$, i.e.~$H_{i}\subseteq N(w)$ for every $%
i\geq 1$, since $w\in N(u)$ is a vertex of $C_{u}\setminus \{u\}$. Consider
now the case where $v\notin N[u,w]$. Then, since $w\in N(u)$, either $%
T_{u}\ll _{R_{T}}T_{v}$ and $T_{w}\ll _{R_{T}}T_{v}$, or $T_{v}\ll
_{R_{T}}T_{u}$ and $T_{v}\ll _{R_{T}}T_{w}$. Suppose that $T_{u}\ll
_{R_{T}}T_{v}$, i.e.~$T_{x}\ll _{R_{T}}T_{x_{2}}\ll _{R_{T}}T_{u}\ll
_{R_{T}}T_{v}$ for every $x\in X_{1}$ by Lemma~\ref{N(w)-2}. Recall that $%
N_{1}(v)=N$ by Lemma~\ref{N-Cu-unbounded}. That is, $T_{z}$ intersects $%
T_{u} $ in $R_{T}$, i.e.~$z\in N(u)$, for every $z\in N_{1}(v)=N$, and thus $%
N_{1}(u)=N$, which is a contradiction by Lemma~\ref{N-Cu}. Thus, $T_{v}\ll
_{R_{T}}T_{u}$ and $T_{v}\ll _{R_{T}}T_{w}$. Then, $T_{x_{2}}\ll
_{R_{T}}T_{v}$, since $T_{x_{2}}\ll _{R_{T}}T_{u}$ and $C_{u}$ is connected.
That is, $T_{x_{2}}\ll _{R_{T}}T_{v}\ll _{R_{T}}T_{w}$. Then, since every $%
z\in H_{i}$, $i\geq 1$, is adjacent to both $x_{2}$ and $w$, it follows that 
$T_{z}$ intersects $T_{v}$ in $R_{T}$, i.e.~$z\in N(v)$, for every $z\in
H_{i}$, where $i\geq 1$. Thus, $H_{i}\subseteq N(v)$ for every $i\geq 1$ and
for every vertex $v\in C_{u}\setminus \{u\}$, such that $v\notin N[u,w]$.
This completes the proof of the lemma.
\end{proof}

\begin{lemma}
\label{C2-H}For every vertex $v\in C_{2}$, it holds $H_{i}\subseteq N(v)$
for every $i\geq 1$.
\end{lemma}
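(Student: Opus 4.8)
The plan is to mirror the proof of Lemma~\ref{Cu-H}, now for $v\in C_{2}$ in place of $v\in C_{u}\setminus\{u\}$. Several ingredients transfer verbatim: every $v\in C_{2}$ satisfies $N_{1}(v)=N$ (Lemma~\ref{N-C2}) and is bounded (Lemma~\ref{N-H-C2-Cu-bounded}), and---this is where $C_{2}$ is actually more convenient than $C_{u}$---the proof of Lemma~\ref{N-C2} already establishes the sandwich $T_{x}\ll_{R_{T}}T_{v}\ll_{R_{T}}T_{w}$ for every $x\in X_{1}$ and every $v\in C_{2}$. I would also use that $H\subseteq N(x_{2})$ by the definition of $H$, and that $H_{i}\subseteq N(w)$ for every $i\ge 1$, which is exactly Lemma~\ref{Cu-H} applied to the vertex $w\in C_{u}\setminus\{u\}$. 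Consequently every $z\in H_{i}$ is adjacent to both $x_{2}$ and $w$, while $T_{x_{2}}\ll_{R_{T}}T_{w}$ by Lemma~\ref{N(w)-2}.

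First I would dispose of the generic case, in which $T_{x_{2}}\ll_{R_{T}}T_{v}$, by a direct betweenness argument. Fix $z\in H_{i}$, so $z\in N(x_{2})\cap N(w)$, and assume $T_{x_{2}}\ll_{R_{T}}T_{v}\ll_{R_{T}}T_{w}$. If $T_{z}\ll_{R_{T}}T_{v}$ then $T_{z}\ll_{R_{T}}T_{w}$, contradicting $z\in N(w)$; if $T_{v}\ll_{R_{T}}T_{z}$ then $T_{x_{2}}\ll_{R_{T}}T_{z}$, contradicting $z\in N(x_{2})$. Hence $T_{z}$ intersects $T_{v}$, i.e.\ $z\in N(v)$. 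It then remains to verify $T_{x_{2}}\ll_{R_{T}}T_{v}$ whenever $v\notin N[x_{2}]$: writing $C_{2}=\mathcal{A}_{2}\cup\mathcal{B}_{2}$, for $v\in\mathcal{B}_{2}\subseteq S_{2}$ this is immediate from Lemma~\ref{x2-relative-position-in-S2} (the alternative, that $T_{v}$ intersects $T_{x_{2}}$, is excluded since $v\notin N(x_{2})$), while for $v\in\mathcal{A}_{2}\subseteq D_{1}$ (where $P_{v}\ll_{R}P_{u}\ll_{R}P_{x_{2}}$ already forces $v\notin N(x_{2})$) the same conclusion follows from the defining adjacency of $\mathcal{A}_{2}$ to $\widetilde{H}$ together with Lemma~\ref{intersecting-unbounded}. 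The case $v=x_{2}$ is trivial, since then $H_{i}\subseteq H\subseteq N(x_{2})=N(v)$.

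The hard part will be the remaining vertices $v\in C_{2}\cap N(x_{2})$ (which all lie in the single $\mathcal{B}_{2}$-component containing $x_{2}$). Here $T_{x_{2}}$ and $T_{v}$ intersect, so $T_{x_{2}}\ll_{R_{T}}T_{v}$ fails and the betweenness shortcut is unavailable: one still obtains $T_{z}\not\ll_{R_{T}}T_{v}$ from $z\in N(w)$ and $T_{v}\ll_{R_{T}}T_{w}$, but excluding $T_{v}\ll_{R_{T}}T_{z}$ genuinely requires combining the projection representation $R$ with $R_{T}$. For this I would run the full $H_{i}$-chain induction, exactly as in the $v\in N[u,w]$ part of Lemma~\ref{Cu-H}: for an $H_{i}$-chain $(v_{0},\dots,v_{i})$ of $v_{i}\in H_{i}$, the alternating positions of the chain in both $R$ and $R_{T}$ (Lemmas~\ref{alternating-bounded-chain-1} and~\ref{alternating-bounded-chain}), the base point $v_{0}\in N=N(X_{1})\cap N(x_{2})$, and the equalities $N_{1}(v)=N$ and $N_{1}(x_{2})=N$ (Lemmas~\ref{N-C2} and~\ref{N-Cu}) should force any assumption $v_{i}\notin N(v)$ into a chain of strict inequalities $T_{v_{0}}\ll_{R_{T}}T_{v_{i}}\ll_{R_{T}}T_{v}$ (or its $R$-counterpart), whence $v_{0}\notin N(v)$, contradicting $v_{0}\in N=N_{1}(v)$. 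The one genuinely new point relative to Lemma~\ref{Cu-H} is that the ``left anchor'' is now the adjacency of the chain to $x_{2}$ rather than the relation $T_{x_{2}}\ll_{R_{T}}T_{u}$, so the induction must be set up so as never to invoke $T_{x_{2}}\ll_{R_{T}}T_{v}$; this is where I expect the bulk of the technical bookkeeping to lie.
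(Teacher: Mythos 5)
Your easy cases match the paper: for $v\in\mathcal{B}_{2}$ with $v\notin N(x_{2})$ the paper runs exactly your betweenness argument, getting $T_{x_{2}}\ll_{R_{T}}T_{v}\ll_{R_{T}}T_{w}$ from Lemma~\ref{x2-relative-position-in-S2} and concluding via $z\in N(x_{2})\cap N(w)$, with $z\in N(w)$ indeed supplied by Lemma~\ref{Cu-H} applied to $w$. The gap is in the hard case $v\in\mathcal{B}_{2}\cap N(x_{2})$, where your sketch aims at the wrong contradiction. You propose to force $T_{v_{0}}\ll_{R_{T}}T_{v_{i}}\ll_{R_{T}}T_{v}$ and contradict $v_{0}\in N=N_{1}(v)$; this is not reachable, because by Lemmas~\ref{alternating-bounded-chain-1} and~\ref{alternating-bounded-chain} the chain's positions \emph{alternate} ($T_{v_{0}}\ll T_{v_{1}}$, $T_{v_{2}}\ll T_{v_{1}}$, $T_{v_{2}}\ll T_{v_{3}},\dots$), so no order between $T_{v_{0}}$ and $T_{v_{i}}$ ever follows, and the configuration never threatens $v_{0}\in N(v)$. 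The ingredient you are missing is the pair of facts $v_{i}\in N[u,w]$ (definition of $H$) and $v\notin N[u,w]$ (definition of $\widetilde{C}_{2}$); these, not $v_{0}$, are what replace the $x_{2}$-anchor of Lemma~\ref{Cu-H}. Concretely: they give $u$ or $w\in N(v_{i})\setminus N(v)$ while $v_{i-1}\in N(v)\setminus N(v_{i})$, so Lemma~\ref{intersecting-unbounded} excludes $P_{v_{i}}$ intersecting $P_{v}$ in $R$ (a step your sketch leaves open); then in the branch $P_{v_{i-1}}\ll_{R}P_{v_{i}}$, $P_{v}\ll_{R}P_{v_{i}}$, the adjacency $v\in N(x_{2})$ together with $P_{u}\ll_{R}P_{x_{2}}$ (resp.\ $P_{w}\ll_{R}P_{x_{2}}$) yields $P_{u}\ll_{R}P_{v}\ll_{R}P_{v_{i}}$, contradicting $v_{i}\in N(u)$ (resp.\ $N(w)$); and in the branch $P_{v_{i}}\ll_{R}P_{v_{i-1}}$, $P_{v_{i}}\ll_{R}P_{v}$, the chain lemmas give $T_{v_{i}}\ll_{R_{T}}T_{v_{i-1}}$, hence $T_{v_{i}}\ll_{R_{T}}T_{v}$, and $v\in N(x_{2})$ with $T_{x_{2}}\ll_{R_{T}}T_{u},T_{w}$ yields $T_{v_{i}}\ll_{R_{T}}T_{v}\ll_{R_{T}}T_{u},T_{w}$, contradicting $v_{i}\in N[u,w]$. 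So the contradiction lands on $v_{i}$'s membership in $N[u,w]$, and once $i$ is taken minimal the argument is a single case analysis, not the full double induction of Lemma~\ref{Cu-H} that you plan to rerun.

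A smaller wrinkle: for $v\in\mathcal{A}_{2}$ you try to derive $T_{x_{2}}\ll_{R_{T}}T_{v}$ from the defining adjacency of $\mathcal{A}_{2}$ to $\widetilde{H}$ plus Lemma~\ref{intersecting-unbounded}. That derivation is unsupported --- Lemma~\ref{x2-relative-position-in-S2} covers only $S_{2}$, and $\mathcal{A}_{2}\subseteq D_{1}$ is disjoint from $S_{2}$ --- but it is also unnecessary: the definition of $\mathcal{A}_{2}$ gives $H_{i}\subseteq H\subseteq\widetilde{H}\subseteq N(v)$ outright, which is the paper's one-line treatment of that case; no trapezoid reasoning is needed there at all.
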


\begin{proof}
Recall that $C_{2}=\mathcal{A}_{2}\cup \mathcal{B}_{2}$, where $%
A_{j}\subseteq D_{2}$ for every $A_{j}\in \mathcal{B}_{2}$, $k+1\leq j\leq
\ell $, and $\mathcal{A}_{2}$ includes exactly those components $A_{i}$, $%
1\leq i\leq k$, for which all vertices of $A_{i}$ are adjacent to all
vertices of $\widetilde{H}$. Therefore, if $v\in A_{i}$ for some component $%
A_{i}\in \mathcal{A}_{2}$, then $H\subseteq \widetilde{H}\subseteq N(v)$ by
definition, and thus $H_{i}\subseteq N(v)$ for every $i\geq 1$.

Let now $v\in A_{j}$, for some $A_{j}\in \mathcal{B}_{2}$, and suppose first
that $v\notin N(x_{2})$. Then, since $v\in D_{2}\subseteq S_{2}\subseteq
V_{0}(u)$, it follows that $T_{v}\ll _{R_{T}}T_{u}$ and that $T_{x_{2}}\ll
_{R_{T}}T_{v}$ by Lemma~\ref{x2-relative-position-in-S2} (since $v\notin
N(x_{2})$), i.e.~$T_{x_{2}}\ll _{R_{T}}T_{v}\ll _{R_{T}}T_{u}$. Moreover, $%
v\notin N(w)$ by definition of $\widetilde{C}_{2}$. Thus, $T_{v}\ll
_{R_{T}}T_{w}$, since $T_{v}\ll _{R_{T}}T_{u}$ and $w\in N(u)\setminus N(v)$%
. That is, $T_{x_{2}}\ll _{R_{T}}T_{v}\ll _{R_{T}}T_{w}$. Let now $z\in
H_{i} $, for some $i\geq 1$. Then, $z\in N(x_{2})$ and $z\in N(w)$ by Lemma~%
\ref{Cu-H}, and thus $T_{z}$ intersects $T_{v}$ in $R_{T}$, i.e.~$z\in N(v)$%
. Therefore, $H_{i}\subseteq N(v)$ for every $i\geq 1$, where $v\notin
N(x_{2}) $.

Suppose now that $v\in N(x_{2})$. We will prove by contradiction that $%
H_{i}\subseteq N(v)$ for every $i\geq 1$. Suppose otherwise that there
exists an index $i\geq 1$, such that $v_{i}\notin N(v)$, for some vertex $%
v_{i}\in H_{i}$. W.l.o.g.~let $i$ be the smallest such index, i.e.~$%
v^{\prime }\in N(v)$ for every $v^{\prime }\in H_{j}$, where $0\leq j\leq
i-1 $ (recall that $H_{0}=N$, and thus $v^{\prime }\in N(v)$ for every $%
v^{\prime }\in H_{0}$ by Lemma~\ref{N-C2}).

Let $(v_{0},v_{1},\ldots ,v_{i-1},v_{i})$ be an $H_{i}$-chain of $v_{i}$. If 
$i=1$, then $P_{v_{1}}$ does not intersect $P_{v_{0}}$ in $R$ by Lemma~\ref%
{alternating-bounded-chain-1}. If $i\geq 2$, then $v_{i-2}\in
N(v_{i})\setminus N(v_{i-1})$ and $v\in N(v_{i-1})\setminus N(v_{i})$;
therefore $N(v_{i-1})\nsubseteq N(v_{i})$ and $N(v_{i})\nsubseteq N(v_{i-1})$%
, and thus $P_{v_{i}}$ does not intersect $P_{v_{i-1}}$ in $R$ by Lemma~\ref%
{intersecting-unbounded}. That is, $P_{v_{i}}$ does not intersect $%
P_{v_{i-1}}$ in $R$ for every $i\geq 1$. Recall now that $v_{i}\in N[u,w]$
by definition of $H$, and that $v\notin N[u,w]$ by definition of $\widetilde{%
C}_{2}$. If $v_{i}\in N(u)$ (resp.~$v_{i}\in N(w)$), then $u\in
N(v_{i})\setminus N(v)$ (resp.~$w\in N(v_{i})\setminus N(v)$). Furthermore, $%
v_{i-1}\in N(v)\setminus N(v_{i})$, i.e.~$N(v_{i})\nsubseteq N(v)$ and $%
N(v)\nsubseteq N(v_{i})$, and thus $P_{v_{i}}$ does not intersect $P_{v}$ in 
$R$ by Lemma~\ref{intersecting-unbounded}. Therefore, since $vv_{i-1}\in E$,
it follows hat either $P_{v_{i-1}}\ll _{R}P_{v_{i}}$ and $P_{v}\ll
_{R}P_{v_{i}}$, or $P_{v_{i}}\ll _{R}P_{v_{i-1}}$ and $P_{v_{i}}\ll
_{R}P_{v} $.

Suppose first that $P_{v_{i-1}}\ll _{R}P_{v_{i}}$ and $P_{v}\ll
_{R}P_{v_{i}} $. Recall that $v_{i}\in N[u,w]$ and that $v\notin N[u,w]$.
Let $v_{i}\in N(u)$ (resp.~$v_{i}\in N(w)$). Then, $P_{v}$ does not
intersect $P_{u}$ (resp.~$P_{w}$) in $R$ by Lemma~\ref%
{intersecting-unbounded}, since $x_{2}\in N(v)\setminus N[u,w]$ and $%
v_{i}\in N(u)\setminus N(v)$ (resp.~$v_{i}\in N(w)\setminus N(v)$). Thus,
since $P_{u}\ll _{R}P_{x_{2}}$ (resp.~$P_{w}\ll _{R}P_{x_{2}}$) and $v\in
N(x_{2})\setminus N(u)$ (resp.~$v\in N(x_{2})\setminus N(w)$), it follows
that $P_{u}\ll _{R}P_{v}$ (resp.~$P_{w}\ll _{R}P_{v}$). That is, $P_{u}\ll
_{R}P_{v}\ll _{R}P_{v_{i}}$ (resp.~$P_{w}\ll _{R}P_{v}\ll _{R}P_{v_{i}}$),
i.e.~$v_{i}\notin N(u)$ (resp.~$v_{i}\notin N(w)$), which is a contradiction.

Suppose now that $P_{v_{i}}\ll _{R}P_{v_{i-1}}$ and $P_{v_{i}}\ll _{R}P_{v}$%
. Then, $i\neq 1$ by Lemma~\ref{alternating-bounded-chain-1}. That is, $%
i\geq 2$, i.e.~$v_{i-2}$ exists. Recall by Lemmas~\ref%
{alternating-bounded-chain-1} and~\ref{alternating-bounded-chain} that
either $P_{v_{i-1}}\ll _{R}P_{v_{i-2}}$ or $P_{v_{i-2}}\ll _{R}P_{v_{i-1}}$.
If $P_{v_{i-1}}\ll _{R}P_{v_{i-2}}$, then $P_{v_{i}}\ll _{R}P_{v_{i-1}}\ll
_{R}P_{v_{i-2}}$, and thus~$v_{i}v_{i-2}\notin E$, which is a contradiction.
Therefore, $P_{v_{i-2}}\ll _{R}P_{v_{i-1}}$, and thus also $T_{v_{i-2}}\ll
_{R_{T}}T_{v_{i-1}}$ and $i$ is even by Lemmas~\ref%
{alternating-bounded-chain-1} and~\ref{alternating-bounded-chain}. Since $%
v_{i-1}v_{i}\notin E$, either $T_{v_{i-1}}\ll _{R_{T}}T_{v_{i}}$ or $%
T_{v_{i}}\ll _{R_{T}}T_{v_{i-1}}$. If $T_{v_{i-1}}\ll _{R_{T}}T_{v_{i}}$,
then $T_{v_{i-2}}\ll _{R_{T}}T_{v_{i-1}}\ll _{R_{T}}T_{v_{i}}$, and thus~$%
v_{i}v_{i-2}\notin E$, which is a contradiction. Therefore, $T_{v_{i}}\ll
_{R_{T}}T_{v_{i-1}}$, and thus also $T_{v_{i}}\ll _{R_{T}}T_{v}$, since $%
v\in N(v_{i-1})\setminus N(v_{i})$. Recall also that $T_{x_{2}}\ll
_{R_{T}}T_{u}$ and $T_{x_{2}}\ll _{R_{T}}T_{w}$. Thus, also $T_{v}\ll
_{R_{T}}T_{u}$ and $T_{v}\ll _{R_{T}}T_{w}$, since $v\in N(x_{2})\setminus
N[u,w]$. That is, $T_{v_{i}}\ll _{R_{T}}T_{v}\ll _{R_{T}}T_{u}$ and $%
T_{v_{i}}\ll _{R_{T}}T_{v}\ll _{R_{T}}T_{w}$, i.e.~$v_{i}\notin N[u,w]$,
which is a contradiction. Thus, $H_{i}\subseteq N(v)$ for every $i\geq 1$.
This completes the proof of the lemma.
\end{proof}

\subsection*{The recursive definition of the vertex subsets $H_{i}^{\prime }$, 
$i\geq 0$, of $H$}

Similarly to Definitions~\ref{Hi} and~\ref{Hi-chain}, we partition in the
following the set $H\setminus \bigcup\nolimits_{i=1}^{\infty }H_{i}$ into
the subsets $H_{0}^{\prime },H_{1}^{\prime },\ldots $.

\begin{definition}
\label{Hi-tilde}Let $H^{\prime }=H\setminus \bigcup\nolimits_{i=1}^{\infty
}H_{i}$ and $H_{0}^{\prime }=\{x\in H^{\prime }\ |\ xv\in E$ for some $v\in
V\setminus Q_{u}\setminus N[u]\setminus V_{0}(u)\}$. Furthermore, $%
H_{i}^{\prime }=\{x\in H^{\prime }\setminus
\bigcup\nolimits_{j=0}^{i-1}H_{j}^{\prime }\ |\ H_{i-1}^{\prime }\nsubseteq
N(x)\}$ for every $i\geq 1$.
\end{definition}

It is now easy to see by Definition~\ref{Hi-tilde} that either $%
H_{i}^{\prime }=\emptyset $ for every $i\in \mathbb{N}\cup \{0\}$, 
or there exists some $p\in \mathbb{N}\cup \{0\}$, 
such that $H_{p}^{\prime }\neq \emptyset $ and $H_{i}^{\prime
}=\emptyset $ for every $i>p$. That is, either $\bigcup\nolimits_{i=0}^{%
\infty }H_{i}^{\prime }=\emptyset $, or $\bigcup\nolimits_{i=0}^{\infty
}H_{i}^{\prime }=\bigcup\nolimits_{i=0}^{p}H_{i}^{\prime }$, 
for some $p\in \mathbb{N}\cup \{0\}$, while $\bigcup\nolimits_{i=0}^{\infty }H_{i}^{\prime }\subseteq
H^{\prime }$ by Definition~\ref{Hi-tilde}. Furthermore, it is easy to
observe by Definitions~\ref{Hi} and~\ref{Hi-tilde} that every vertex of $%
H\setminus \bigcup\nolimits_{i=1}^{\infty }H_{i}\setminus
\bigcup\nolimits_{i=0}^{\infty }H_{i}^{\prime }$ is adjacent to every vertex
of $N(X_{1})\cup \bigcup\nolimits_{i=1}^{\infty }H_{i}\cup
\bigcup\nolimits_{i=0}^{\infty }H_{i}^{\prime }$, and to no vertex of $%
V\setminus Q_{u}\setminus N[u]\setminus V_{0}(u)$.

\begin{definition}
\label{Hi-chain-tilde}Let $v_{i}\in H_{i}^{\prime }$, for some $i\geq 1$.
Then, a sequence $(v_{0},v_{1},\ldots ,v_{i-1},v_{i})$ of vertices, such
that $v_{j}\in H_{j}^{\prime }$, $j=0,1,\ldots ,i-1$, and $%
v_{j-1}v_{j}\notin E$, $j=1,2,\ldots ,i$, is an $H_{i}^{\prime }$\emph{-chain%
} of $v_{i}$.
\end{definition}

It is easy to see by Definition~\ref{Hi-tilde} that for every set $%
H_{i}^{\prime }\neq \emptyset $, $i\geq 1$, and for every vertex $v_{i}\in
H_{i}^{\prime }$, there exists at least one $H_{i}^{\prime }$-chain of $%
v_{i} $. Now, similarly to Lemmas~\ref{alternating-bounded-chain-1} and~\ref%
{alternating-bounded-chain}, we state the following two lemmas.

\begin{lemma}
\label{alternating-bounded-chain-tilde-1}Let $v_{1}\in H_{1}^{\prime }$ and $%
(v_{0},v_{1})$ be an $H_{1}^{\prime }$-chain of $v_{1}$. Then, $%
v_{0},v_{1}\in N(u)$, $P_{v_{1}}\ll _{R}P_{v_{0}}$ and $T_{v_{1}}\ll
_{R_{T}}T_{v_{0}}$.
\end{lemma}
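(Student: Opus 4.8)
The plan is to prove this lemma in close analogy with Lemma~\ref{alternating-bounded-chain-1}, the difference being that the foreigner $v\in V\setminus Q_{u}\setminus N[u]\setminus V_{0}(u)$ witnessing $v_{0}\in H_{0}'$ now plays the pivotal role that the vertex of $N$ played there, and that, because the chain reaches \emph{outward} to $v$ rather than toward $x_{2}$, the orderings are reversed to $P_{v_{1}}\ll_{R}P_{v_{0}}$ and $T_{v_{1}}\ll_{R_{T}}T_{v_{0}}$. First I would record the clean facts available for both endpoints. Since $v_{0},v_{1}\in H'\subseteq H\subseteq N[u,w]\cap N(x_{2})\setminus Q_{u}\setminus N[X_{1}]$, Lemma~\ref{N-H-C2-Cu-bounded} makes $v_{0},v_{1}$ bounded; and since $H'=H\setminus\bigcup_{i\geq 1}H_{i}$, neither endpoint lies in $H_{1}=\{x\in H\mid H_{0}\nsubseteq N(x)\}$, so by Definition~\ref{Hi} we get $N=H_{0}\subseteq N(v_{0})\cap N(v_{1})$, i.e.\ $N_{1}(v_{0})=N_{1}(v_{1})=N$. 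Finally, $v_{0},v_{1}\in N(x_{2})\setminus N[X_{1}]$ together with $T_{x}\ll_{R_{T}}T_{x_{2}}$ (Lemma~\ref{N(w)-2}) forces $T_{x}\ll_{R_{T}}T_{v_{0}}$ and $T_{x}\ll_{R_{T}}T_{v_{1}}$ for every $x\in X_{1}$, since a trapezoid meeting $T_{x_{2}}$ but disjoint from $T_{x}\ll_{R_{T}}T_{x_{2}}$ must lie to the right of $T_{x}$.

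A clean reduction comes next: I claim the foreigner $v$ satisfies $T_{x_{2}}\ll_{R_{T}}T_{v}$. Indeed, applying Lemma~\ref{V0-middle} to $v$ gives either $T_{x_{2}}\ll_{R_{T}}T_{v}$ or $T_{v}\ll_{R_{T}}T_{x}$ for all $x\in X_{1}$; but in the latter case, fixing any $x\in X_{1}$ yields $T_{v}\ll_{R_{T}}T_{x}\ll_{R_{T}}T_{v_{0}}$, so $T_{v}$ misses $T_{v_{0}}$, contradicting $v_{0}v\in E$. Thus the foreigner lies to the right of $x_{2}$ in $R_{T}$, which is exactly the configuration I then exploit. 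To prove $v_{0}\in N(u)$ I argue by contradiction: assuming $v_{0}\in N(w)\setminus N(u)$, the facts that $v_{0}$ is bounded, meets $T_{x_{2}}$ and $T_{w}$ but not $T_{u}$, and that $T_{x_{2}}\ll_{R_{T}}T_{u}$, force $T_{v_{0}}\ll_{R_{T}}T_{u}$, whence (using $v\notin N(u)$ and $T_{v_{0}}$ meeting $T_{v}$) $T_{x_{2}}\ll_{R_{T}}T_{v}\ll_{R_{T}}T_{u}$. Coupling this with the position of $v$ in $R$ supplied by Lemma~\ref{foreigner-not-inside-1} (which, via $P_{x}\ll_{R}P_{u}\ll_{R}P_{x_{2}}$ and $v_{0}\in N(v)\cap N(x_{2})$, typically forces $P_{v_{0}}$ to meet $P_{u}$), I repeatedly invoke Lemma~\ref{intersecting-unbounded} exactly as in Lemma~\ref{alternating-bounded-chain-1} to drive the configuration into one of the three forbidden outcomes: a violation of $\phi_{u}=\min\{\phi_{x}\ |\ x\in V_{U}\}$, of the unbounded-maximality of $u$, or of the strict inclusion $N_{1}(u)\subset N$ guaranteed by Lemma~\ref{N-Cu}. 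The identical scheme, now fed the non-adjacency $v_{0}v_{1}\notin E$ in place of a foreigner edge and using $N\subseteq N(v_{1})$, $v_{1}\in N(x_{2})$, gives $v_{1}\in N(u)$.

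For the orderings I would note that $v_{0}v_{1}\notin E$ with both endpoints bounded makes $P_{v_{0}},P_{v_{1}}$ disjoint in $R$ and $T_{v_{0}},T_{v_{1}}$ disjoint in $R_{T}$, so it suffices to exclude $P_{v_{0}}\ll_{R}P_{v_{1}}$ and $T_{v_{0}}\ll_{R_{T}}T_{v_{1}}$. The exclusion rests on the fact, established above, that $v_{0}$ reaches the foreigner $v$ on the far (right) side of $x_{2}$, so $P_{v_{0}}$ (resp.\ $T_{v_{0}}$) already sits to the right; placing $v_{1}$ further right, together with $N\subseteq N(v_{1})$, $v_{1}\in N(u)\cap N(x_{2})$ and $P_{x}\ll_{R}P_{u}\ll_{R}P_{x_{2}}$, would either manufacture the adjacency $v_{0}v_{1}$ or a neighborhood containment that is ruled out by the same three forbidden outcomes. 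This pins down $P_{v_{1}}\ll_{R}P_{v_{0}}$ and $T_{v_{1}}\ll_{R_{T}}T_{v_{0}}$, completing the base case that feeds the induction of Lemma~\ref{alternating-bounded-chain}'s primed analogue.

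The hard part will be the bookkeeping in the case where the foreigner lies on the same side of $x_{2}$ as $u$ in $R_{T}$ (the unavoidable case after the reduction above): there $T_{v}$ and $T_{u}$ are both to the right of $T_{x_{2}}$ and disjoint, so their relative $R_{T}$-order must be fixed \emph{before} Lemma~\ref{intersecting-unbounded} can be applied, and one must simultaneously track the $R$-position of $v$ (from Lemma~\ref{foreigner-not-inside-1}) and the bounded/unbounded status of each auxiliary vertex produced. As in Lemma~\ref{alternating-bounded-chain-1}, no single idea is difficult, but carrying both representations $R$ and $R_{T}$ in lockstep through all placements, each time converging on a violation of the minimality of $\phi_{u}$, the unbounded-maximality of $u$, or $N_{1}(u)\subset N$, is where the real length of the argument resides.
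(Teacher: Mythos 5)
Your skeleton is right (mirror Lemma~\ref{alternating-bounded-chain-1}, let the foreigner play the pivotal role, reverse the orderings), and your preliminary facts are correct: boundedness of $v_0,v_1$ via Lemma~\ref{N-H-C2-Cu-bounded}, $N=H_0\subseteq N(v_0)\cap N(v_1)$ from Definitions~\ref{Hi} and~\ref{Hi-tilde}, $T_x\ll_{R_T}T_{v_0},T_{v_1}$ for $x\in X_1$, and the reduction to $T_{x_2}\ll_{R_T}T_v$ via Lemma~\ref{V0-middle}. But there is a genuine gap at the central step, concentrated exactly where your sketch waves its hands. Under the hypothesis $v_0\notin N(u)$ you correctly reach the configuration $T_{x_2}\ll_{R_T}T_v\ll_{R_T}T_u$, and then claim to finish by ``repeatedly invoking Lemma~\ref{intersecting-unbounded}'' until one of your three forbidden outcomes appears (minimality of $\phi_u$, unbounded-maximality of $u$, or $N_1(u)\subset N$). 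None of these refutes that configuration, and none is reachable from it: the offending datum is a trapezoid ordering in $R_T$, Lemma~\ref{intersecting-unbounded} acts on intersecting parallelograms in $R$, and nothing useful is known about $N(v)$ at this point. What actually kills $T_v\ll_{R_T}T_u$ is the covering-vertex/connectivity sandwich, which your proposal never mentions: $v\notin N(w)$ and $w\in N(u)\subseteq N(u^{\ast})$ for the \emph{bounded} covering vertex $u^{\ast}$ of Lemma~\ref{bounded-hovering} give $T_v\ll_{R_T}T_w$ and then $T_v\ll_{R_T}T_{u^{\ast}}$; since $x_2,u^{\ast}\in V_0(u)$ and $V_0(u)$ is connected (Lemma~\ref{two-components}), a trapezoid strictly between $T_{x_2}$ and $T_{u^{\ast}}$ must meet some trapezoid of $V_0(u)$, so $v\in V_0(u)$, contradicting the choice of $v$. (The paper runs this unconditionally on the foreigner, obtaining $T_{x_2}\ll_{R_T}T_u\ll_{R_T}T_y$ outright, after which $v_0\in N(u)$ is immediate because $T_{v_0}$ meets both $T_{x_2}$ and $T_y$, which sandwich $T_u$; no contradiction hypothesis on $v_0$ is needed.)

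The same omission recurs on the $R$-side, where your treatment of the orderings is too impressionistic to close. Lemma~\ref{foreigner-not-inside-1} only gives the dichotomy $P_{x_2}\ll_{R}P_y$ or $P_y\ll_{R}P_x$ for all $x\in X_1$, and eliminating the left case is a substantial argument, not a parenthetical: there $P_{v_0}$ crosses every $P_x$, $x\in\widetilde{X}_{1}$, without adjacency, so (using $v_0$ bounded and, by then, $v_0\in N(u)$) every such $x$ is unbounded with $\phi_x>\phi_{v_0}>\phi_u$; connectivity of $\widetilde{X}_{1}$ plus non-adjacency of unbounded vertices forces $\widetilde{X}_{1}=\{x_1\}$, whence $N(x_1)\subseteq N(x_2)$ by Lemma~\ref{N-Cu} and then $N(x_1)\subseteq N(u)$, isolating $x_1$ in $G\setminus N[u]$ and disconnecting $V_0(u)$ --- again via $u^{\ast}$. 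Only with $P_{x_2}\ll_{R}P_y$ secured can one show $P_{v_1}\ll_{R}P_y$ and then $P_{v_1}\ll_{R}P_{v_0}$ (the intersecting case being excluded by your correct observation $N_1(v_1)=N$, pushed to $N_1(u)=N$ against Lemma~\ref{N-Cu}); and $v_1\in N(u)$ then follows \emph{last}, by the explicit endpoint comparisons $R(v_1)<_{R}L(v_0)<_{R}L(u)$ and $l(u)<_{R}l(x_2)<_{R}r(v_1)$, which force $P_{v_1}$ to intersect $P_u$ with $\phi_{v_1}>\phi_u$ --- not by ``the identical scheme.'' So the dependency order is the reverse of your plan ($v_1\in N(u)$ is a consequence of the orderings, not an input to them), and both halves of the proof run on the $u^{\ast}$/connectivity engine that is absent from your toolkit.
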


\begin{proof}
First, recall that there exists a bounded covering vertex $u^{\ast }$ of $u$
by Lemma~\ref{bounded-hovering}, and thus $w\in N(u)\subseteq N(u^{\ast })$.
Let $y\in V\setminus Q_{u}\setminus N[u]\setminus V_{0}(u)$ be a vertex,
such that $yv_{0}\in E$; such a vertex $y$ exists by Definition~\ref%
{Hi-tilde}. Then, $y\notin N(w)$, since either $P_{w}\ll _{R}P_{x_{2}}\ll
_{R}P_{y}$ or $P_{y}\ll _{R}P_{x}\ll _{R}P_{w}$ for every $x\in X_{1}$ by
Lemma~\ref{foreigner-not-inside-1}. Consider the trapezoid representation $%
R_{T}$. Then, either $T_{x_{2}}\ll _{R_{T}}T_{y}$ or $T_{y}\ll _{R_{T}}T_{x}$
for every $x\in X_{1}$ by Lemma~\ref{V0-middle}. Suppose that $T_{y}\ll
_{R_{T}}T_{x}$ for every $x\in X_{1}$, i.e.~$T_{y}\ll _{R_{T}}T_{x}\ll
_{R_{T}}T_{x_{2}}$ for every $x\in X_{1}$. Then, since $v_{0}\in N(y)$ and $%
v_{0}\in N(x_{2})$, $T_{v_{0}}$ intersects $T_{x}$ in $R_{T}$ for every $%
x\in X_{1}$, and thus $v_{0}\in N(X_{1})$. This is a contradiction, since $%
v_{0}\in H_{0}^{\prime }\subseteq H$, and since $H$ is an induced subgraph
of $G\setminus Q_{u}\setminus N[X_{1}]$. Thus, $T_{x_{2}}\ll _{R_{T}}T_{y}$.

Since $y\notin N(u)$ by the assumption on $y$, either $T_{y}\ll
_{R_{T}}T_{u} $ or $T_{u}\ll _{R_{T}}T_{y}$. Suppose that $T_{y}\ll
_{R_{T}}T_{u}$, i.e.~$T_{x_{2}}\ll _{R_{T}}T_{y}\ll _{R_{T}}T_{u}$. Then,
also $T_{x_{2}}\ll _{R_{T}}T_{y}\ll _{R_{T}}T_{w}$, since $w\in N(u)$ and $%
w\notin N(y)$. Note that $y\notin N(u^{\ast })$, since otherwise $y\in
V_{0}(u)$, which is a contradiction. Thus, since also $w\in N(u^{\ast })$,
it follows that $T_{x_{2}}\ll _{R_{T}}T_{y}\ll _{R_{T}}T_{u^{\ast }}$. Then,
since $x_{2},u^{\ast }\in V_{0}(u)$, and since $V_{0}(u)$ is connected, $%
T_{y}$ intersects $T_{z}$ for some $z\in V_{0}(u)$, and thus $y\in V_{0}(u)$%
, which is a contradiction. Therefore, $T_{u}\ll _{R_{T}}T_{y}$, i.e.~$%
T_{x_{2}}\ll _{R_{T}}T_{u}\ll _{R_{T}}T_{y}$. Thus, since $v_{0}\in N(x_{2})$
and $v_{0}\in N(y)$, $T_{v_{0}}$ intersects $T_{u}$ in $R_{T}$, i.e.~$%
v_{0}\in N(u)$; in particular, $v_{0}$ is bounded.

Since $v_{1}v_{0}\notin E$, either $T_{v_{0}}\ll _{R_{T}}T_{v_{1}}$ or $%
T_{v_{1}}\ll _{R_{T}}T_{v_{0}}$. Suppose that $T_{v_{0}}\ll
_{R_{T}}T_{v_{1}} $. Recall that $yv_{1}\notin E$ by Definition~\ref%
{Hi-tilde}, since $y\in V\setminus Q_{u}\setminus N[u]\setminus V_{0}(u)$
and $v_{1}\in H_{1}$. That is, either $T_{v_{1}}\ll _{R_{T}}T_{y}$ or $%
T_{y}\ll _{R_{T}}T_{v_{1}}$. If $T_{v_{1}}\ll _{R_{T}}T_{y}$, then $%
T_{v_{0}}\ll _{R_{T}}T_{v_{1}}\ll _{R_{T}}T_{y}$, i.e.~$yv_{0}\notin E$,
which is a contradiction. If $T_{y}\ll _{R_{T}}T_{v_{1}}$, then $%
T_{x_{2}}\ll _{R_{T}}T_{y}\ll _{R_{T}}T_{v_{1}}$, i.e.~$v_{1}\notin N(x_{2})$%
, which is a contradiction. Thus, $T_{v_{1}}\ll _{R_{T}}T_{v_{0}}$.

Consider now the projection representation $R$, and recall that $%
v_{1}v_{0},v_{1}y\notin E$. Furthermore, recall that $v_{0}\notin N(%
\widetilde{X}_{1})$ by definition of $H$, and that either $P_{x_{2}}\ll
_{R}P_{y}$ or $P_{y}\ll _{R}P_{x}$ for every $x\in X_{1}$ by Lemma~\ref%
{foreigner-not-inside-1}. Suppose that $P_{y}\ll _{R}P_{x}$ for every $x\in
X_{1}$, and thus $P_{y}\ll _{R}P_{x}\ll _{R}P_{u}\ll _{R}P_{x_{2}}$ for
every $x\in \widetilde{X}_{1}\subseteq X_{1}$. Then, $P_{v_{0}}$ intersects $%
P_{x}$ in $R$ for every $x\in \widetilde{X}_{1}$, since $v_{0}\in N(y)\cap
N(x_{2})$. Furthermore, $v_{0}x\notin E$ for every $x\in \widetilde{X}_{1}$,
since $v_{0}\notin N(\widetilde{X}_{1})$. Thus, every $x\in \widetilde{X}%
_{1} $ is unbounded and $\phi _{x}>\phi _{v_{0}}>\phi _{u}$, since $v_{0}$
is bounded and $v_{0}\in N(u)$, as we proved above. Moreover, since $%
\widetilde{X}_{1}$ is connected, and since no two unbounded vertices are
adjacent, it follows that $\widetilde{X}_{1}$ has one vertex, i.e.~$%
\widetilde{X}_{1}=\{x_{1}\}$. Thus, $N(x_{1})=N(\widetilde{X}_{1})\subseteq
N(x_{2})$ by Lemma~\ref{N-Cu}, since $\widetilde{X}_{1}\subseteq X_{1}$.
Therefore, $P_{z} $ intersects $P_{u}$ in $R$, for every $z\in N(x_{1})$,
since $P_{x_{1}}\ll _{R}P_{u}\ll _{R}P_{x_{2}}$. Furthermore, $z$ is bounded
and $\phi _{z}>\phi _{x_{1}}>\phi _{u}$ for every $z\in N(x_{1})$, since $%
x_{1}$ is unbounded. That is, $z\in N(u)$ for every $z\in N(x_{1})$, i.e.~$%
N(x_{1})\subseteq N(u)$, and thus $x_{1}$ is an isolated vertex of $%
G\setminus N[u]$. Therefore, since $x_{1}$ is unbounded and $u^{\ast }$ is
bounded in $R$, it follows that $x_{1}$ and $u^{\ast }$ do not lie in the
same connected component of $G\setminus N[u]$. That is, $V_{0}(u)$ is not
connected, which is a contradiction. Thus, $P_{x_{2}}\ll _{R}P_{y}$, i.e.~$%
P_{u}\ll _{R}P_{x_{2}}\ll _{R}P_{y}$.

Suppose that $P_{v_{1}}$ intersects $P_{y}$ in $R$. Then, either $%
N(v_{1})\subseteq N(y)$ or $N(y)\subseteq N(v_{1})$ by Lemma~\ref%
{intersecting-unbounded}, since $v_{1}y\notin E$. If $N(v_{1})\subseteq N(y)$%
, then $x_{2}\in N(y)$, which is a contradiction, since $P_{x_{2}}\ll
_{R}P_{y}$. On the other hand, if $N(y)\subseteq N(v_{1})$, then $v_{0}\in
N(v_{1})$, since $yv_{0}\in E$, which is a contradiction. Thus, $P_{v_{1}}$
does not intersect $P_{y}$ in $R$, i.e.~either $P_{y}\ll _{R}P_{v_{1}}$ or $%
P_{v_{1}}\ll _{R}P_{y}$. If $P_{y}\ll _{R}P_{v_{1}}$, then $P_{x_{2}}\ll
_{R}P_{y}\ll _{R}P_{v_{1}}$, i.e.~$v_{1}\notin N(x_{2})$, which is a
contradiction. Thus, $P_{v_{1}}\ll _{R}P_{y}$.

Suppose that $P_{v_{1}}$ intersects $P_{v_{0}}$ in $R$. Then, $v_{1}$ is
unbounded and $\phi _{v_{1}}>\phi _{v_{0}}>\phi _{u}$, since $v_{0}$ is
bounded and $v_{0}\in N(u)$. Furthermore, note that $N_{1}(v_{1})=N$, since
otherwise $v_{1}\in H_{1}$ by Definition~\ref{Hi}, and thus $v_{1}\notin
H^{\prime }=H\setminus \bigcup\nolimits_{i=1}^{\infty }H_{i}$, which is a
contradiction. Consider now a vertex $z\in N$. Then, $z\in N(x)\cap N(x_{2})$%
, for some $x\in X_{1}$. Furthermore, $z\in N(v_{1})$, since $N_{1}(v_{1})=N$%
; thus, $z$ is bounded and $\phi _{z}>\phi _{v_{1}}>\phi _{u}$, since $v_{1}$
is unbounded. On the other hand, $P_{z}$ intersects $P_{u}$ in $R$, since $%
P_{x}\ll _{R}P_{u}\ll _{R}P_{x_{2}}$ and $z\in N(x)\cap N(x_{2})$. Thus $%
z\in N(u)$, since $z$ is bounded and $\phi _{z}>\phi _{u}$. Since this holds
for an arbitrary $z\in N$, it follows that $N_{1}(u)=N$, which is a
contradiction by Lemma~\ref{N-Cu}. Thus, $P_{v_{1}}$ does not intersect $%
P_{v_{0}}$ in $R$, i.e.~either $P_{v_{0}}\ll _{R}P_{v_{1}}$ or $P_{v_{1}}\ll
_{R}P_{v_{0}}$. If $P_{v_{0}}\ll _{R}P_{v_{1}}$, then $P_{v_{0}}\ll
_{R}P_{v_{1}}\ll _{R}P_{y}$, i.e.~$y\notin N(v_{0})$, which is a
contradiction. Thus, $P_{v_{1}}\ll _{R}P_{v_{0}}$.

Recall that $v_{0}\in N(u)$ as we have proved above, and thus $%
L(v_{0})<_{R}L(u)$ by Lemma~\ref{unbounded-bounded}. Furthermore, $%
R(v_{1})<_{R}L(v_{0})$, since $P_{v_{1}}\ll _{R}P_{v_{0}}$, and thus $%
R(v_{1})<_{R}L(u)$. On the other hand, since $v_{1}\in N(x_{2})$, and since $%
R(v_{1})<_{R}L(u)<_{R}L(x_{2})$, it follows that $l(x_{2})<_{R}r(v_{1})$,
and thus $l(u)<_{R}r(v_{1})$, since $P_{u}\ll _{R}P_{x_{2}}$. Therefore,
since also $R(v_{1})<_{R}L(u)$, $P_{v_{1}}$ intersects $P_{u}$ in $R$ and $%
\phi _{v_{1}}>\phi _{u}$. If $v_{1}\notin N(u)$, then $N(v_{1})\subseteq
N(u) $ by Lemma~\ref{intersecting-unbounded}, and thus $x_{2}\in N(u)$,
since $x_{2}\in N(v_{1})$\ by definition of $H$, which is a contradiction.
Therefore, $v_{1}\in N(u)$. This completes the proof of the lemma.
\end{proof}

\begin{lemma}
\label{alternating-bounded-chain-tilde}Let $v_{i}\in H_{i}^{\prime }$, for
some $i\geq 2$, and $(v_{0},v_{1},\ldots ,v_{i})$ be an 
$H_{i}^{\prime }$-chain of $v_{i}$. Then, for every $j=1,2,\ldots ,i-1$:

\begin{enumerate}
\item $P_{v_{j-1}}\ll _{R}P_{v_{j}}$ and $T_{v_{j-1}}\ll _{R_{T}}T_{v_{j}}$,
if $j$ is even,

\item $P_{v_{j}}\ll _{R}P_{v_{j-1}}$ and $T_{v_{j}}\ll _{R_{T}}T_{v_{j-1}}$,
if $j$ is odd.
\end{enumerate}
\end{lemma}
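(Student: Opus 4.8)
The plan is to prove the lemma by induction on $j$, exactly paralleling the proof of Lemma~\ref{alternating-bounded-chain}, with the single difference that the parities of the two cases are interchanged. The reason for the swap is the base case: whereas the unprimed base case (Lemma~\ref{alternating-bounded-chain-1}) places $P_{v_0}\ll_R P_{v_1}$ and $T_{v_0}\ll_{R_T}T_{v_1}$, the primed base case (Lemma~\ref{alternating-bounded-chain-tilde-1}) gives the reversed orientation $P_{v_1}\ll_R P_{v_0}$ and $T_{v_1}\ll_{R_T}T_{v_0}$. Since $j=1$ is odd, this is precisely statement~2 of the lemma, so the base case is already in hand; everything downstream then alternates with the opposite parity to the unprimed version.

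First I would record the adjacency bookkeeping that drives the argument. Along an $H_i^{\prime}$-chain, consecutive vertices satisfy $v_{j-1}v_j\notin E$ by Definition~\ref{Hi-chain-tilde}, while $v_jv_{j-2}\in E$: indeed $v_j\in H_j^{\prime}$ implies $v_j\notin H_{j-1}^{\prime}$, and since $v_j\in H^{\prime}\setminus\bigcup_{j'=0}^{j-2}H_{j'}^{\prime}$, the defining condition of $H_{j-1}^{\prime}$ in Definition~\ref{Hi-tilde} must fail, i.e.~$H_{j-2}^{\prime}\subseteq N(v_j)$; as $v_{j-2}\in H_{j-2}^{\prime}$, this yields $v_jv_{j-2}\in E$. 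Applying the same observation one index higher gives $v_{j+1}v_{j-1}\in E$ whenever $v_{j+1}$ exists (i.e.~$j\le i-1$). Consequently $v_{j-2}\in N(v_j)\setminus N(v_{j-1})$ and $v_{j+1}\in N(v_{j-1})\setminus N(v_j)$, so neither of $N(v_j),N(v_{j-1})$ contains the other. Because $v_{j-1}v_j\notin E$, Lemma~\ref{intersecting-unbounded} then forces $P_{v_j}$ and $P_{v_{j-1}}$ to be disjoint in $R$, giving the clean dichotomy $P_{v_{j-1}}\ll_R P_{v_j}$ or $P_{v_j}\ll_R P_{v_{j-1}}$; likewise $v_{j-1}v_j\notin E$ gives either $T_{v_{j-1}}\ll_{R_T}T_{v_j}$ or $T_{v_j}\ll_{R_T}T_{v_{j-1}}$.

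For the induction step ($2\le j\le i-1$) I would resolve this dichotomy using transitivity of $\ll_R$ (and of $\ll_{R_T}$) together with the hypothesis for index $j-1$. If $j$ is even, then $j-1$ is odd and the hypothesis gives $P_{v_{j-1}}\ll_R P_{v_{j-2}}$; the wrong orientation $P_{v_j}\ll_R P_{v_{j-1}}$ would then yield $P_{v_j}\ll_R P_{v_{j-2}}$, contradicting $v_jv_{j-2}\in E$, so $P_{v_{j-1}}\ll_R P_{v_j}$, and symmetrically $T_{v_{j-1}}\ll_{R_T}T_{v_j}$, which is statement~1. If $j$ is odd, then $j-1$ is even and the hypothesis gives $P_{v_{j-2}}\ll_R P_{v_{j-1}}$; the orientation $P_{v_{j-1}}\ll_R P_{v_j}$ would give $P_{v_{j-2}}\ll_R P_{v_j}$, again contradicting $v_jv_{j-2}\in E$, forcing statement~2. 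I do not expect a genuine obstacle here: the whole argument is a mirror image of Lemma~\ref{alternating-bounded-chain}, and the only points requiring care are keeping the parity alignment consistent with the flipped base case and verifying the adjacency $v_jv_{j-2}\in E$ directly from the recursive Definition~\ref{Hi-tilde} of the sets $H_i^{\prime}$.
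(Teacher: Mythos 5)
Your proposal is correct and takes essentially the same route as the paper's proof: the same induction on $j$ with base case supplied by Lemma~\ref{alternating-bounded-chain-tilde-1}, the same observation that $v_{j-2}\in N(v_{j})\setminus N(v_{j-1})$ and $v_{j+1}\in N(v_{j-1})\setminus N(v_{j})$ combined with Lemma~\ref{intersecting-unbounded} to force the disjointness dichotomy, and the same transitivity-based contradiction with the adjacency $v_{j}v_{j-2}\in E$ in each parity case. Your explicit derivation of $v_{j}v_{j-2}\in E$ (and of $v_{j+1}v_{j-1}\in E$) from Definition~\ref{Hi-tilde} merely spells out what the paper leaves implicit.
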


\begin{proof}
The proof will be done by induction on $j$. For $j=1$, the induction basis
follows by Lemma~\ref{alternating-bounded-chain-tilde-1}. For the induction
step, let $2\leq j<i$. Note that $v_{j-2}\in N(v_{j})\setminus N(v_{j-1})$
and $v_{j+1}\in N(v_{j-1})\setminus N(v_{j})$. Therefore, $%
N(v_{j})\nsubseteq N(v_{j-1})$ and $v_{j+1}\in N(v_{j-1})\nsubseteq N(v_{j})$%
, and thus $P_{v_{j}}$ does not intersect $P_{v_{j-1}}$ in $R$ by Lemma~\ref%
{intersecting-unbounded}, since $v_{j-1}v_{j}\notin E$. Thus, either $%
P_{v_{j-1}}\ll _{R}P_{v_{j}}$ or $P_{v_{j}}\ll _{R}P_{v_{j-1}}$.
Furthermore, clearly either $T_{v_{j-1}}\ll _{R_{T}}T_{v_{j}}$ or $%
T_{v_{j}}\ll _{R_{T}}T_{v_{j-1}}$, since $v_{j-1}v_{j}\notin E$.

Let $j$ be even, i.e.~$j-1$ is odd, and suppose by induction hypothesis that 
$P_{v_{j-1}}\ll _{R}P_{v_{j-2}}$ and $T_{v_{j-1}}\ll _{R_{T}}T_{v_{j-2}}$.
If $P_{v_{j}}\ll _{R}P_{v_{j-1}}$ (resp.~$T_{v_{j}}\ll _{R_{T}}T_{v_{j-1}}$%
), then $P_{v_{j}}\ll _{R}P_{v_{j-2}}$ (resp.~$T_{v_{j}}\ll
_{R_{T}}T_{v_{j-2}}$). Thus, $v_{j}v_{j-2}\notin E$, i.e.~$v_{j}\in
H_{j-1}^{\prime }$ by Definition~\ref{Hi-tilde}, which is a contradiction.
Therefore, $P_{v_{j-1}}\ll _{R}P_{v_{j}}$ and $T_{v_{j-1}}\ll
_{R_{T}}T_{v_{j}}$, if $j$ is even.

Let now $j$ be odd, i.e.~$j-1$ is even, and suppose by induction hypothesis
that $P_{v_{j-2}}\ll _{R}P_{v_{j-1}}$ and $T_{v_{j-2}}\ll
_{R_{T}}T_{v_{j-1}} $. If $P_{v_{j-1}}\ll _{R}P_{v_{j}}$ (resp.~$%
T_{v_{j-1}}\ll _{R_{T}}T_{v_{j}} $), then $P_{v_{j-2}}\ll _{R}P_{v_{j}}$
(resp.~$T_{v_{j-2}}\ll _{R_{T}}T_{v_{j}}$), and thus $v_{j}v_{j-2}\notin E$,
which is again a contradiction. Therefore, $P_{v_{j}}\ll _{R}P_{v_{j-1}}$
and $T_{v_{j}}\ll _{R_{T}}T_{v_{j-1}}$, if $j$ is odd. This completes the
induction step, and thus the lemma follows.
\end{proof}

\begin{lemma}
\label{tilde-neighbors}$H_{i}^{\prime }\subseteq N(u)$, for every $i\geq 0$.
\end{lemma}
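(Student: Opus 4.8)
The plan is to prove $H_{i}^{\prime }\subseteq N(u)$ by induction on $i$, feeding the alternating position information of Lemmas~\ref{alternating-bounded-chain-tilde-1} and~\ref{alternating-bounded-chain-tilde} into the fact that every vertex of $H$ is bounded (Lemma~\ref{N-H-C2-Cu-bounded}). For the base case $i=0$, let $x\in H_{0}^{\prime }$; by Definition~\ref{Hi-tilde} there is a vertex $y\in V\setminus Q_{u}\setminus N[u]\setminus V_{0}(u)$ with $xy\in E$, and I would simply replay the first part of the proof of Lemma~\ref{alternating-bounded-chain-tilde-1} with $x$ in the role of $v_{0}$ (this part uses only $x\in H_{0}^{\prime }$ and $x\in N(x_{2})$, not any $v_{1}$): Lemma~\ref{foreigner-not-inside-1} gives $y\notin N(w)$, Lemma~\ref{V0-middle} together with $x\in N(x_{2})$ forces $T_{x_{2}}\ll _{R_{T}}T_{y}$, and $y\notin V_{0}(u)$ forces $T_{u}\ll _{R_{T}}T_{y}$; since $x\in N(x_{2})\cap N(y)$ and $T_{x_{2}}\ll _{R_{T}}T_{u}\ll _{R_{T}}T_{y}$, the trapezoid $T_{x}$ must meet $T_{u}$, i.e.~$x\in N(u)$. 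For $i=1$, every $v_{1}\in H_{1}^{\prime }$ admits an $H_{1}^{\prime }$-chain (the remark after Definition~\ref{Hi-chain-tilde}) and Lemma~\ref{alternating-bounded-chain-tilde-1} gives $v_{1}\in N(u)$ directly.

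For the inductive step, fix $i\geq 2$, assume $H_{j}^{\prime }\subseteq N(u)$ for all $0\leq j\leq i-1$, take $v_{i}\in H_{i}^{\prime }$ with an $H_{i}^{\prime }$-chain $(v_{0},\dots ,v_{i})$, and suppose for contradiction that $v_{i}\notin N(u)$. I would first gather the relevant facts. Both $v_{i-1}$ and $v_{i}$ are bounded: $v_{i-1}\in N(u)$ by the induction hypothesis (hence bounded by Lemma~\ref{unbounded-bounded}), and $v_{i}\in H$ is bounded by Lemma~\ref{N-H-C2-Cu-bounded}. Since $v_{i}\in H_{i}^{\prime }$ cannot lie in $H_{i-1}^{\prime }$, Definition~\ref{Hi-tilde} forces $H_{i-2}^{\prime }\subseteq N(v_{i})$, so $v_{i}v_{i-2}\in E$, while $v_{i-1}v_{i}\notin E$ by the chain. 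As $v_{i}\in H\subseteq N[u,w]\cap N(x_{2})$ and $v_{i}\notin N(u)$, we have $v_{i}\in N(w)\cap N(x_{2})$. Combining $T_{x_{2}}\ll _{R_{T}}T_{u}$ with $v_{i}\in N(x_{2})$ rules out $T_{u}\ll _{R_{T}}T_{v_{i}}$, so $T_{v_{i}}\ll _{R_{T}}T_{u}$; and since $T_{v_{i-1}}$ meets $T_{u}$ (as $v_{i-1}\in N(u)$) while $v_{i-1}v_{i}\notin E$, this yields $T_{v_{i}}\ll _{R_{T}}T_{v_{i-1}}$.

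Now I would split on the parity of $i$ via Lemma~\ref{alternating-bounded-chain-tilde} at the index $j=i-1$. If $i$ is even, then $j=i-1$ is odd, so $T_{v_{i-1}}\ll _{R_{T}}T_{v_{i-2}}$; together with $T_{v_{i}}\ll _{R_{T}}T_{v_{i-1}}$ this gives $T_{v_{i}}\ll _{R_{T}}T_{v_{i-2}}$, contradicting $v_{i}v_{i-2}\in E$. If $i$ is odd, then $j=i-1$ is even, so $P_{v_{i-2}}\ll _{R}P_{v_{i-1}}$; since $v_{i},v_{i-1}$ are bounded and non-adjacent their parallelograms do not meet in $R$, and $P_{v_{i-1}}\ll _{R}P_{v_{i}}$ would give $P_{v_{i-2}}\ll _{R}P_{v_{i}}$ and hence $v_{i}v_{i-2}\notin E$, so $P_{v_{i}}\ll _{R}P_{v_{i-1}}$. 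This places $v_{i}$ relative to the bounded vertex $v_{i-1}\in N(u)$ in exactly the configuration treated in the last paragraph of the proof of Lemma~\ref{alternating-bounded-chain-tilde-1} (with $v_{i-1}$ in the role of $v_{0}$): from $L(v_{i-1})<_{R}L(u)$ (Lemma~\ref{unbounded-bounded}) and $P_{v_{i}}\ll _{R}P_{v_{i-1}}$ one obtains $R(v_{i})<_{R}L(u)$, while $v_{i}\in N(x_{2})$ together with $P_{u}\ll _{R}P_{x_{2}}$ (recall $x_{2}\in D_{2}$) gives $l(u)<_{R}r(v_{i})$; hence $P_{v_{i}}$ meets $P_{u}$ with $\phi _{v_{i}}>\phi _{u}$, so $v_{i}\in N(u)$, contradicting the assumption. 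In both parities the assumption fails, so $v_{i}\in N(u)$, completing the induction.

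The main obstacle is the inductive step, and specifically the interplay between the two representations dictated by the parity: for even $i$ the contradiction lives entirely in the trapezoid representation $R_{T}$, where non-adjacency always means separation, whereas for odd $i$ one must descend to the projection representation $R$ and reconstruct, from the single relation $P_{v_{i}}\ll _{R}P_{v_{i-1}}$, the crossing of the parallelogram $P_{v_{i}}$ with the line $P_{u}$. It is here that the boundedness of all chain vertices (Lemma~\ref{N-H-C2-Cu-bounded}) and the relative position $P_{u}\ll _{R}P_{x_{2}}$ of $u$ and $x_{2}$ are indispensable. Care is also needed to verify that $v_{i}v_{i-2}\in E$, since this adjacency is what pins down the alternation and drives both contradictions.
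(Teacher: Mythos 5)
Your proof is correct and follows essentially the same route as the paper's: induction on $i$, base cases via Lemma~\ref{alternating-bounded-chain-tilde-1} (your explicit replay of its first half for $i=0$ is a careful touch, since that lemma is stated only for vertices in an $H_{1}^{\prime}$-chain), and an inductive step split on the parity of $i$, with the odd case argued in $R$ by forcing $P_{v_{i}}$ to cross $P_{u}$ via $P_{u}\ll_{R}P_{x_{2}}$ and the even case argued in $R_{T}$ via Lemma~\ref{alternating-bounded-chain-tilde}. Your even case is a harmless local rearrangement of the paper's (you derive $T_{v_{i}}\ll_{R_{T}}T_{v_{i-1}}$ from $T_{v_{i}}\ll_{R_{T}}T_{u}$ and contradict $v_{i}v_{i-2}\in E$, whereas the paper derives $T_{v_{i-1}}\ll_{R_{T}}T_{v_{i}}$ and contradicts $v_{i-1}\in N(u)$ or $v_{i}\in N(x_{2})$), using the same ingredients throughout.
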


\begin{proof}
The proof will be done by induction on $i$. For $i=0$ and $i=1$, the lemma
follows by Lemma~\ref{alternating-bounded-chain-tilde-1}. This proves the
induction basis. For the induction step, let $i\geq 2$. Suppose that $%
v_{i}\notin N(u)$, and let $(v_{0},v_{1},\ldots ,v_{i-2},v_{i-1},v_{i})$ be
an $H_{i}^{\prime }$-chain of $v_{i}$. By the induction hypothesis, $%
v_{j}\in N(u)$ for every $j=0,1,\ldots ,i-1$. Then, in particular, $%
r(u)<_{R}r(v_{i-1})$ and $L(v_{i-1})<_{R}L(u)$ by Lemma~\ref%
{unbounded-bounded}. Furthermore, $v_{i-2}\in N(v_{i})\setminus N(v_{i-1})$
and $u\in N(v_{i-1})\setminus N(v_{i})$, i.e.~$N(v_{i})\nsubseteq N(v_{i-1})$
and $N(v_{i-1})\nsubseteq N(v_{i})$, and thus Lemma~\ref%
{intersecting-unbounded} implies that $P_{v_{i}}$ does not intersect $%
P_{v_{i-1}}$ in $R$, since $v_{i}v_{i-1}\notin E$.

Suppose first that $i$ is odd. Then, $P_{v_{i-2}}\ll _{R}P_{v_{i-1}}$ by
Lemma~\ref{alternating-bounded-chain-tilde}. Thus, since $v_{i}\in
N(v_{i-2}) $, and since $P_{v_{i}}$ does not intersect $P_{v_{i-1}}$ in $R$
by the previous paragraph, it follows that $P_{v_{i}}\ll _{R}P_{v_{i-1}}$.
Therefore, in particular, $R(v_{i})<_{R}L(v_{i-1})<_{R}L(u)$, i.e.~$%
R(v_{i})<_{R}L(u)$. On the other hand, $v_{i}\in N(x_{2})$, and thus $%
T_{v_{i}}$ intersects $T_{x_{2}}$ in $R_{T}$. Therefore, since $%
R(v_{i})<_{R}L(u)<_{R}L(x_{2})$, it follows that $l(x_{2})<_{R}r(v_{i})$.
Furthermore, since $P_{u}\ll _{R}P_{x_{2}}$, it follows that $%
l(u)<_{R}l(x_{2})<_{R}r(v_{i})$. That is, $R(v_{i})<_{R}L(u)$ and $%
l(u)<_{R}r(v_{i})$, i.e.~$P_{v_{i}}$ intersects $P_{u}$ in $R$ and $\phi
_{v_{i}}>\phi _{u}$. If $v_{i}\notin N(u)$, then $N(v_{i})\subseteq N(u)$ by
Lemma~\ref{intersecting-unbounded}, and thus $x_{2}\in N(u)$, which is a
contradiction. Therefore, $v_{i}\in N(u)$ if $i$ is odd.

Suppose now that $i$ is even. Then, $T_{v_{i-1}}\ll _{R_{T}}T_{v_{i-2}}$ by
Lemma~\ref{alternating-bounded-chain-tilde}. Thus, since $v_{i}\in
N(v_{i-2}) $ and $v_{i}\notin N(v_{i-1})$, it follows that $T_{v_{i-1}}\ll
_{R_{T}}T_{v_{i}}$. Recall that $T_{x_{2}}\ll _{R_{T}}T_{u}$. Since we
assumed that $v_{i}\notin N(u)$, either $T_{v_{i}}\ll _{R_{T}}T_{u}$ or $%
T_{u}\ll _{R_{T}}T_{v_{i}}$. If $T_{v_{i}}\ll _{R_{T}}T_{u}$, then $%
T_{v_{i-1}}\ll _{R_{T}}T_{v_{i}}\ll _{R_{T}}T_{u}$, i.e.~$v_{i-1}\notin N(u)$%
, which is a contradiction by the induction hypothesis. If $T_{u}\ll
_{R_{T}}T_{v_{i}}$, then $T_{x_{2}}\ll _{R_{T}}T_{u}\ll _{R_{T}}T_{v_{i}}$,
i.e.~$v_{i}\notin N(x_{2})$, which is a contradiction. Thus, $v_{i}\in N(u)$
if $i$ is even. This completes the induction step and the lemma follows.
\end{proof}

\medskip

Now, similarly to Lemmas~\ref{Cu-H}, and~\ref{C2-H}, we state the following
two lemmas.

\begin{lemma}
\label{Cu-H-tilde}For every vertex $v\in C_{u}\setminus \{u\}$, it holds $%
H_{i}^{\prime }\subseteq N(v)$ for every $i\geq 0$.
\end{lemma}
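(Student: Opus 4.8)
The plan is to mimic the proof of Lemma~\ref{Cu-H} almost verbatim, making two systematic changes. First, every statement about left-to-right order along a chain (in both $R$ and $R_{T}$) is now governed by Lemmas~\ref{alternating-bounded-chain-tilde-1} and~\ref{alternating-bounded-chain-tilde} rather than by Lemmas~\ref{alternating-bounded-chain-1} and~\ref{alternating-bounded-chain}, so the parity in all chain arguments is shifted by one. Second, the base layer is now $H_{0}^{\prime}$ instead of $H_{0}=N$, and crucially $H_{0}^{\prime}$ is \emph{not} contained in $N$, so the one-line appeal ``$v_{0}\in N=N_{1}(v)$'' that closed the base case of Lemma~\ref{Cu-H} is unavailable and must be replaced.

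First I would fix $v\in C_{u}\setminus\{u\}$ and record the standing facts I will reuse: $N_{1}(v)=N$ by Lemma~\ref{N-Cu-unbounded}; $T_{x_{2}}\ll_{R_{T}}T_{v}$, because $u\in C_{u}$, $T_{x_{2}}\ll_{R_{T}}T_{u}$, $C_{u}$ is connected and no vertex of $C_{u}$ is adjacent to $x_{2}$, so $T_{x_{2}}$ lies to the left of every trapezoid of $C_{u}$; and $H_{i}^{\prime}\subseteq N(u)$ by Lemma~\ref{tilde-neighbors} together with $H\subseteq N(x_{2})$, so every chain vertex is bounded, adjacent to $u$, and adjacent to $x_{2}$. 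As in Lemma~\ref{Cu-H} I would first prove $H_{i}^{\prime}\subseteq N(v)$ for all $v\in N[u,w]\cap(C_{u}\setminus\{u\})$ by induction on $i$, and only afterwards extend to arbitrary $v\in C_{u}\setminus\{u\}$. The extension is routine: specialising to $w\in N(u)\cap(C_{u}\setminus\{u\})$ gives $H_{i}^{\prime}\subseteq N(w)$, and then for $v\notin N[u,w]$ the facts $T_{x_{2}}\ll_{R_{T}}T_{v}$ and $T_{v}\ll_{R_{T}}T_{u}$ (forced exactly as in the last paragraph of Lemma~\ref{Cu-H}, using $v\notin N(w)$ and $w\in N(u)$) yield $T_{x_{2}}\ll_{R_{T}}T_{v}\ll_{R_{T}}T_{w}$; since each $z\in H_{i}^{\prime}$ lies in $N(x_{2})\cap N(w)$, its trapezoid $T_{z}$ intersects $T_{v}$, so $z\in N(v)$.

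The hard part is the base case $i\in\{0,1\}$ of the $N[u,w]$ induction, which replaces the trivial base case of Lemma~\ref{Cu-H}. Here I would take $v_{0}\in H_{0}^{\prime}$ with its foreigner witness $y$ (so $yv_{0}\in E$ and $y\in V\setminus Q_{u}\setminus N[u]\setminus V_{0}(u)$) and reuse the geometry established inside the proof of Lemma~\ref{alternating-bounded-chain-tilde-1}: that $v_{0}\in N(u)\cap N(x_{2})\cap N(y)$ is bounded and $T_{x_{2}}\ll_{R_{T}}T_{u}\ll_{R_{T}}T_{y}$, the two order facts being re-derived from Lemmas~\ref{V0-middle} and~\ref{foreigner-not-inside-1} just as there. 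Consequently $T_{v_{0}}$ meets both $T_{x_{2}}$ and $T_{y}$, hence spans the entire range between them. For $v\in N[u,w]$ I would then check that $T_{v}$ lies weakly between $T_{x_{2}}$ and $T_{y}$: the left bound $T_{x_{2}}\ll_{R_{T}}T_{v}$ always holds, while on the right either $v\in N(u)$ forces $T_{v}$ not to lie entirely to the right of $T_{u}\ll_{R_{T}}T_{y}$, or $w\notin N(y)$ forces $T_{w}\ll_{R_{T}}T_{y}$. Invoking the standard spanning argument used repeatedly in this proof (a vertex adjacent to two vertices $a,b$ with $T_{a}\ll_{R_{T}}T_{b}$ is adjacent to every vertex whose trapezoid lies between $T_{a}$ and $T_{b}$) then gives $v_{0}\in N(v)$; the vertices $v_{1}\in H_{1}^{\prime}$ are handled identically after placing $T_{v_{1}}$ via $P_{v_{1}}\ll_{R}P_{v_{0}}$ and $T_{v_{1}}\ll_{R_{T}}T_{v_{0}}$ from Lemma~\ref{alternating-bounded-chain-tilde-1}.

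Finally, the induction step $i\ge 2$ would copy the corresponding step of Lemma~\ref{Cu-H}: assuming $H_{0}^{\prime},\dots,H_{i-1}^{\prime}\subseteq N(v)$, take an $H_{i}^{\prime}$-chain $(v_{0},\dots,v_{i})$, suppose $v_{i}\notin N(v)$, use Lemma~\ref{intersecting-unbounded} to show that $P_{v_{i}}$ intersects neither $P_{v_{i-1}}$ nor $P_{v}$, and then propagate the $R$- and $R_{T}$-orders along the chain through Lemma~\ref{alternating-bounded-chain-tilde} (with the shifted parity) to force either $v_{i}v_{i-2}\notin E$ or $v_{i}\notin N(x_{2})$ (equivalently $v_{i}\notin N[u,w]$), each of which is a contradiction. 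As throughout, every contradiction is closed exactly as in the unprimed lemma, either by arriving at $N_{1}(u)=N$ against Lemma~\ref{N-Cu}, or at $x_{2}\in N(u)$. I expect the parity bookkeeping in the chain step to be mechanical; the genuinely new work, and the only place the argument departs from Lemma~\ref{Cu-H}, is the base case, where the foreigner $y$ must carry the role previously played by the hypothesis $H_{0}=N$.
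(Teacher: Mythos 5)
Your overall architecture is the paper's: you prove the claim first for $v\in N(u)\cup N(w)$ by induction on $i$, then extend to the remaining vertices of $C_{u}\setminus\{u\}$ via $T_{x_{2}}\ll_{R_{T}}T_{v}\ll_{R_{T}}T_{w}$ exactly as in the last paragraph of Lemma~\ref{Cu-H}, and your induction step for $i\geq 2$ is the paper's step with the parity shifted through Lemma~\ref{alternating-bounded-chain-tilde}. The genuine gap is in the base case, precisely at the point you yourself flag as the only new work. For $v\in N(u)$ you argue only that $T_{v}$ is ``not entirely to the right of $T_{u}\ll_{R_{T}}T_{y}$,'' i.e.\ that $T_{v}$ lies \emph{weakly} between $T_{x_{2}}$ and $T_{y}$; the same weakness affects your other disjunct, where $v\in N(w)$ and $T_{w}\ll_{R_{T}}T_{y}$ only prevent $T_{v}$ from lying wholly to the right of $T_{y}$. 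But the separation argument you invoke requires the \emph{strict} order $T_{x_{2}}\ll_{R_{T}}T_{v}\ll_{R_{T}}T_{y}$: if $v\in N(y)$ is not excluded, $T_{v}$ may overlap only the right portion of $T_{y}$ and lie entirely to the right of $T_{v_{0}}$, so $v_{0}\in N(v)$ does not follow. Ruling out $v\in N(y)$ is exactly where the paper spends a full paragraph in the projection representation: if $v\in N(u)\cap N(y)$, then $P_{v}$ intersects $P_{x_{2}}$ (since $P_{u}\ll_{R}P_{x_{2}}\ll_{R}P_{y}$), and since $v\notin N(x_{2})$ (as $v\in C_{u}$) and $v$ is bounded, $x_{2}$ would be unbounded with $\phi_{x_{2}}>\phi_{v}>\phi_{u}$; then every $z\in N=N_{1}(x_{2})$ is bounded with $\phi_{z}>\phi_{x_{2}}>\phi_{u}$ and $P_{z}$ intersects $P_{u}$, giving $N_{1}(u)=N$ and contradicting Lemma~\ref{N-Cu}. (In the sub-case $v\notin N(u)$ the paper instead first forces $T_{v}\ll_{R_{T}}T_{u}$ by the same $N_{1}(u)=N$ contradiction, which yields the strict sandwich $T_{x_{2}}\ll_{R_{T}}T_{v}\ll_{R_{T}}T_{u}\ll_{R_{T}}T_{y}$ and makes $v\notin N(y)$ automatic.) Your proposal is silent on this step, so the base case does not close as written.

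Relatedly, $v_{1}\in H_{1}^{\prime}$ cannot be ``handled identically'' by trapezoid placement. By Definition~\ref{Hi-tilde}, $v_{1}\notin N(y)$ (otherwise $v_{1}\in H_{0}^{\prime}$), so in the sub-case $v\in N(u)$ the only right anchor for $T_{v_{1}}$ is $T_{u}$, and a configuration in which $T_{v_{1}}$ ends inside $T_{u}$ while $T_{v}$ meets $T_{u}$ further to the right (still with $T_{x_{2}}\ll_{R_{T}}T_{v}\ll_{R_{T}}T_{y}$) leaves $T_{v_{1}}\ll_{R_{T}}T_{v}$ possible: the trapezoid data alone does not force $v_{1}\in N(v)$. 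The paper avoids this by folding $i=1$ into the chain induction step: assuming $v_{1}\notin N(v)$, Lemma~\ref{intersecting-unbounded} makes $P_{v_{1}}$ disjoint-in-order from both $P_{v_{0}}$ and $P_{v}$, and then either $P_{v_{1}}\ll_{R}P_{v}\ll_{R}P_{x_{2}}$ contradicts $v_{1}\in N(x_{2})$, or $P_{v_{0}}\ll_{R}P_{v_{1}}$ contradicts $P_{v_{1}}\ll_{R}P_{v_{0}}$ from Lemma~\ref{alternating-bounded-chain-tilde-1} (the auxiliary fact $P_{v}\ll_{R}P_{x_{2}}$, which you inherit only implicitly from ``copying'' Lemma~\ref{Cu-H}, has to be re-proved in the primed setting and is). Both gaps are repairable --- for the first, note also that $y\notin Q_{u}\cup N[\widetilde{X}_{1},x_{2}]$ by Lemma~\ref{foreigner-not-inside-1}, so $v\in N(y)$ would place $y$ in $C_{u}$, contradicting Lemma~\ref{Cu-V0} since $y\notin N[u]\cup V_{0}(u)$ --- but as submitted the base case rests on an invalid weak-betweenness step.
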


\begin{proof}
Let $v$ be a vertex of $C_{u}\setminus \{u\}$. Recall that $N_{1}(v)=N$ by
Lemma~\ref{N-Cu-unbounded}. Consider first the case where $v\in N(u)\cup
N(w) $. The proof will be done by induction on $i$. For $i=0$, consider a
vertex $v_{0}\in H_{0}^{\prime }$ and a vertex $y\in V\setminus
Q_{u}\setminus N[u]\setminus V_{0}(u)$, such that $yv_{0}\in E$; such a
vertex $y$ exists by Definition~\ref{Hi-tilde}. Recall that $T_{x_{2}}\ll
_{R_{T}}T_{u}\ll _{R_{T}}T_{y}$ and that $P_{u}\ll _{R}P_{x_{2}}\ll
_{R}P_{y} $\ by the proof of Lemma~\ref{alternating-bounded-chain-tilde-1}.

Let first $v\notin N(u)$ (and thus $v\in N(w)$). If $T_{u}\ll _{R_{T}}T_{v}$%
, i.e.~$T_{x}\ll _{R_{T}}T_{u}\ll _{R_{T}}T_{v}$ for every $x\in X_{1}$,
then $T_{z}$ intersects $T_{u}$ in $R_{T}$ for every $z\in N_{1}(v)=N$.
Thus, $N_{1}(u)=N$, which is a contradiction by Lemma~\ref{N-Cu}. Therefore, 
$T_{v}\ll _{R_{T}}T_{u}$. Furthermore, $T_{x_{2}}\ll _{R_{T}}T_{v}\ll
_{R_{T}}T_{u}$, since $T_{x_{2}}\ll _{R_{T}}T_{u}$, and since $v\in C_{u}$
and $C_{u}$ is connected. That is, $T_{x_{2}}\ll _{R_{T}}T_{v}\ll
_{R_{T}}T_{u}\ll _{R_{T}}T_{y}$. Then, $T_{v_{0}}$ intersects $T_{v}$ in $%
R_{T}$, since $v_{0}\in N(x_{2})\cap N(y)$, i.e.~$v_{0}\in N(v)$.

Let now $v\in N(u)$, and thus $v$ is bounded and $\phi _{v}>\phi _{u}$ in
the projection representation $R$. Suppose that $v\in N(y)$. Then, $P_{v}$
intersects $P_{x_{2}}$ in $R$, since $P_{u}\ll _{R}P_{x_{2}}\ll _{R}P_{y}$,
and since $v\in N(u)$ and $v\in N(y)$. Recall that $v\notin N(x_{2})$, since 
$v\in C_{u}$. Thus, since $v$ is bounded, it follows that $x_{2}$ is
unbounded and $\phi _{x_{2}}>\phi _{v}>\phi _{u}$. Recall that $%
N_{1}(x_{2})=N$ by Lemma~\ref{N-Cu}. Consider now a vertex $z\in N$, i.e.~$%
z\in N(x)\cap N(x_{2})$ for some $x\in X_{1}$. Then, $z$ is bounded and $%
\phi _{z}>\phi _{x_{2}}>\phi _{u}$, since $x_{2}$ is unbounded. Furthermore, 
$P_{z}$ intersects $P_{u}$ in $R$, since $P_{x}\ll _{R}P_{u}\ll
_{R}P_{x_{2}} $ and $z\in N(x)\cap N(x_{2})$, and thus $z\in N(u)$. Since
this holds for an arbitrary $z\in N$, it follows that $N_{1}(u)=N$, which is
a contradiction by Lemma~\ref{N-Cu}. Thus, $v\notin N(y)$. Then, $T_{v}\ll
_{R_{T}}T_{y}$, since $T_{u}\ll _{R_{T}}T_{y}$, and since $v\in N(u)$ and $%
v\notin N(y)$. Furthermore, $T_{x_{2}}\ll _{R_{T}}T_{v}$, since $%
T_{x_{2}}\ll _{R_{T}}T_{u}$, and since $v\in N(u)$ and $v\notin N(x_{2})$.
Therefore, $T_{x_{2}}\ll _{R_{T}}T_{v}\ll _{R_{T}}T_{y}$, and thus $%
T_{v_{0}} $ intersects $T_{v}$ in $R_{T}$, i.e.~$v_{0}\in N(v)$, since $%
v_{0}\in N(x_{2})\cap N(y)$. Summarizing, $v_{0}\in N(v)$ for every vertex $%
v_{0}\in H_{0}^{\prime }$ and for every vertex $v\in C_{u}\setminus \{u\}$,
such that $v\in N(u)\cup N(w)$, i.e.~$H_{0}^{\prime }\subseteq N(v)$ for all
these vertices $v$. This proves the induction basis.

For the induction step, let $i\geq 1$, and suppose that $v^{\prime }\in N(v)$
for every $v^{\prime }\in H_{j}$, where $0\leq j\leq i-1$. Let $v_{i}\in
H_{i}$ and $(v_{0},v_{1},\ldots ,v_{i-2},v_{i-1},v_{i})$ be an $H_{i}$-chain
of $v_{i}$. For the sake of contradiction, suppose that $v_{i}\notin N(v)$.
We will first prove that $P_{v}\ll _{R}P_{x_{2}}$. Otherwise, suppose first
that $P_{x_{2}}\ll _{R}P_{v}$. Then, $P_{u}\ll _{R}P_{x_{2}}\ll _{R}P_{v}$
and $P_{w}\ll _{R}P_{x_{2}}\ll _{R}P_{v}$, and thus $v\notin N(u)\cup N(w)$,
which is a contradiction to the assumption on $v$. Suppose now that $P_{v}$
intersects $P_{x_{2}}$ in $R$. Then, either $N(x_{2})\subseteq N(v)$ or $%
N(v)\subseteq N(x_{2})$ by Lemma~\ref{intersecting-unbounded}, since $%
v\notin N(x_{2})$ by definition of $\widetilde{C}_{2}$. If $%
N(x_{2})\subseteq N(v)$, then $v_{i}\in N(v)$, since $v_{i}\in N(x_{2})$,
which is a contradiction. Let $N(v)\subseteq N(x_{2})$. Then, since $C_{u}$
is connected with at least two vertices, $v$ is adjacent to at least one
vertex $z\in C_{u}$, and thus $z\in N(x_{2})$, which is a contradiction.
Thus, $P_{v}\ll _{R}P_{x_{2}}$.

Recall that $v_{i-1}\in N(v)$ by the induction hypothesis. If $i=1$, $%
P_{v_{1}}$ does not intersect $P_{v_{0}}$ in $R$ by Lemma~\ref%
{alternating-bounded-chain-tilde-1}. If $i\geq 2$, i.e.~if $v_{i-2}$ exists,
then $P_{v_{i}}$ does not intersect $P_{v_{i-1}}$ in $R$ by Lemma~\ref%
{intersecting-unbounded}, since $v_{i-2}\in N(v_{i})\setminus N(v_{i-1})$
and $v\in N(v_{i-1})\setminus N(v_{i})$. Thus, $P_{v_{i}}$ does not
intersect $P_{v_{i-1}}$ in $R$ for every $i\geq 1$. Similarly, $P_{v_{i}}$
does not intersect $P_{v}$ in $R$, since $x_{2}\in N(v_{i})\setminus N(v)$
and $v_{i-1}\in N(v)\setminus N(v_{i})$. Therefore, since $v_{i-1}\in N(v)$,
it follows that either $P_{v_{i}}\ll _{R}P_{v_{i-1}}$ and $P_{v_{i}}\ll
_{R}P_{v}$, or $P_{v_{i-1}}\ll _{R}P_{v_{i}}$ and $P_{v}\ll _{R}P_{v_{i}}$.
Suppose that $P_{v_{i}}\ll _{R}P_{v_{i-1}}$ and $P_{v_{i}}\ll _{R}P_{v}$.
Then, $P_{v_{i}}\ll _{R}P_{v}\ll _{R}P_{x_{2}}$, and thus $v_{i}\notin
N(x_{2})$, which is a contradiction.

Therefore, $P_{v_{i-1}}\ll _{R}P_{v_{i}}$ and $P_{v}\ll _{R}P_{v_{i}}$, and
thus $i\neq 1$ by Lemma~\ref{alternating-bounded-chain-tilde-1}. That is, $%
i\geq 2$, i.e.~$v_{i-2}$ exists. Furthermore, either $P_{v_{i-2}}\ll
_{R}P_{v_{i-1}}$ or $P_{v_{i-1}}\ll _{R}P_{v_{i-2}}$ by Lemma~\ref%
{alternating-bounded-chain-tilde}. If $P_{v_{i-2}}\ll _{R}P_{v_{i-1}}$, then 
$P_{v_{i-2}}\ll _{R}P_{v_{i-1}}\ll _{R}P_{v_{i}}$, and thus $%
v_{i}v_{i-2}\notin E$, which is a contradiction. Therefore $P_{v_{i-1}}\ll
_{R}P_{v_{i-2}}$, and thus also $T_{v_{i-1}}\ll _{R_{T}}T_{v_{i-2}}$ and $i$
is even, by Lemma~\ref{alternating-bounded-chain-tilde}. Furthermore, $%
T_{v_{i-1}}\ll _{R_{T}}T_{v_{i}}$, since $v_{i}\in N(v_{i-2})$ and $%
v_{i}\notin N(v_{i-1})$. Moreover, $T_{v}\ll _{R_{T}}T_{v_{i}}$, since $%
T_{v_{i-1}}\ll _{R_{T}}T_{v_{i}}$, and since $v\in N(v_{i-1})$ and $v\notin
N(v_{i})$. Recall also that $T_{x_{2}}\ll _{R_{T}}T_{v}$, since $%
T_{x_{2}}\ll _{R_{T}}T_{u}$, and since $v\in C_{u}$ and $C_{u}$ is
connected. That is, $T_{x_{2}}\ll _{R_{T}}T_{v}\ll _{R_{T}}T_{v_{i}}$, and
thus $v_{i}\notin N(x_{2})$, which is a contradiction. Thus, $v_{i}\in N(v)$
in the case where $v\in N(u)\cup N(w)$. This completes the induction step.

Summarizing, we have proved that $H_{i}^{\prime }\subseteq N(v)$ for every $%
i\geq 0$ and for every vertex $v\in C_{u}\setminus \{u\}$, such that $v\in
N(u)\cup N(w)$. This holds in particular for $w$, i.e.~$H_{i}^{\prime
}\subseteq N(w)$ for every $i\geq 0$, since $w$ is a vertex of $%
C_{u}\setminus \{u\}$ and $w\in N(u)\subseteq N(u)\cup N(w)$. Consider now
the case where $v\notin N(u)\cup N(w)$. Then, since $w\in N(u)$, either $%
T_{u}\ll _{R_{T}}T_{v}$ and $T_{w}\ll _{R_{T}}T_{v}$, or $T_{v}\ll
_{R_{T}}T_{u}$ and $T_{v}\ll _{R_{T}}T_{w}$. Suppose first that $T_{u}\ll
_{R_{T}}T_{v}$, i.e.~$T_{x}\ll _{R_{T}}T_{x_{2}}\ll _{R_{T}}T_{u}\ll
_{R_{T}}T_{v}$ for every $x\in X_{1}$ by Lemma~\ref{N(w)-2}. Recall that $%
N_{1}(v)=N$ by Lemma~\ref{N-Cu-unbounded}. Then, $T_{z}$ intersects $T_{u}$
in $R_{T}$, i.e.~$z\in N(u)$, for every $z\in N_{1}(v)=N$, and thus $%
N_{1}(u)=N$, which is a contradiction by Lemma~\ref{N-Cu}. Therefore, $%
T_{v}\ll _{R_{T}}T_{u}$ and $T_{v}\ll _{R_{T}}T_{w}$. Furthermore, $%
T_{x_{2}}\ll _{R_{T}}T_{v}$, since $T_{x_{2}}\ll _{R_{T}}T_{u}$, and since $%
v\in C_{u}$ and $C_{u}$ is connected. That is, $T_{x_{2}}\ll
_{R_{T}}T_{v}\ll _{R_{T}}T_{w}$. Then, since every $z\in H_{i}^{\prime }$, $%
i\geq 0$, is adjacent to both $x_{2}$ and $w$, as we proved above, it
follows that $T_{z}$ intersects $T_{v}$ in $R_{T}$, i.e.~$z\in N(v)$, for
every $z\in H_{i}^{\prime }$, where $i\geq 0$. Thus, $H_{i}^{\prime
}\subseteq N(v)$ for every $i\geq 0$ and for every vertex $v\in
C_{u}\setminus \{u\}$, such that $v\notin N(u)\cup N(w)$. This completes the
proof of the lemma.
\end{proof}

\begin{lemma}
\label{C2-H-tilde}For every vertex $v\in C_{2}$, it holds $H_{i}^{\prime
}\subseteq N(v)$ for every $i\geq 0$.
\end{lemma}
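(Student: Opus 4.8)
The plan is to mirror the proof of Lemma~\ref{C2-H}, simply replacing the sets $H_{i}$ by their primed analogues $H_{i}^{\prime }$ and invoking the chain Lemmas~\ref{alternating-bounded-chain-tilde-1} and~\ref{alternating-bounded-chain-tilde} in place of their unprimed counterparts. Recall $C_{2}=\mathcal{A}_{2}\cup \mathcal{B}_{2}$, so I would split into these two cases. First, if $v\in A_{i}$ for some component $A_{i}\in \mathcal{A}_{2}$, then by definition all vertices of $A_{i}$ are adjacent to all vertices of $\widetilde{H}$; since $H_{i}^{\prime }\subseteq H^{\prime }\subseteq H\subseteq \widetilde{H}$, the inclusion $H_{i}^{\prime }\subseteq N(v)$ is immediate for every $i\geq 0$, and this disposes of $\mathcal{A}_{2}$ with no geometric work.

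The substantial case is $v\in A_{j}$ for some $A_{j}\in \mathcal{B}_{2}$, where $v\in D_{2}\subseteq S_{2}\subseteq V_{0}(u)$, and I would split further on whether $v\in N(x_{2})$, exactly as in Lemma~\ref{C2-H}. When $v\notin N(x_{2})$, I would use $T_{v}\ll _{R_{T}}T_{u}$ (trapezoids of $V_{0}(u)$ lie left of $T_{u}$) together with $v\notin N(w)$ and Lemma~\ref{x2-relative-position-in-S2} to obtain $T_{x_{2}}\ll _{R_{T}}T_{v}\ll _{R_{T}}T_{w}$; then every $z\in H_{i}^{\prime }$ lies in $N(x_{2})$ (as $H_{i}^{\prime }\subseteq H$ and every vertex of $H$ is in $N(x_{2})$) and in $N(w)$ (by Lemma~\ref{Cu-H-tilde} applied to $w\in C_{u}\setminus \{u\}$), so $T_{z}$ must cross $T_{v}$, giving $z\in N(v)$.

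When $v\in N(x_{2})$, I would argue by minimal counterexample: assume some $v_{i}\in H_{i}^{\prime }\setminus N(v)$ exists, take $i$ smallest, and fix an $H_{i}^{\prime }$-chain $(v_{0},\ldots ,v_{i})$. The base case $i=0$ must be handled separately, since $H_{0}^{\prime }$ is not the set $N$ (as $H_{0}$ was in Lemma~\ref{C2-H}) but the set of vertices of $H^{\prime }$ adjacent to some outside vertex $y\in V\setminus Q_{u}\setminus N[u]\setminus V_{0}(u)$; I would dispose of it by the argument used in the base case of Lemma~\ref{Cu-H-tilde}, reusing the configuration $T_{x_{2}}\ll _{R_{T}}T_{u}\ll _{R_{T}}T_{y}$ and $P_{u}\ll _{R}P_{x_{2}}\ll _{R}P_{y}$ from the proof of Lemma~\ref{alternating-bounded-chain-tilde-1}. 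For $i\geq 1$, the induction hypothesis gives $v_{i-1}\in N(v)$, while $v_{i-2}\in N(v_{i})\setminus N(v_{i-1})$ and $v\in N(v_{i-1})\setminus N(v_{i})$; hence by Lemma~\ref{intersecting-unbounded} $P_{v_{i}}$ crosses neither $P_{v_{i-1}}$ nor $P_{v}$ in $R$, leaving two orderings. I would rule out $P_{v_{i}}\ll _{R}P_{v}$ (it forces $P_{v_{i}}\ll _{R}P_{v}\ll _{R}P_{x_{2}}$, so $v_{i}\notin N(x_{2})$), and in the surviving ordering I would pass to $R_{T}$: since $v\in N(x_{2})\setminus N[u,w]$ while $T_{x_{2}}\ll _{R_{T}}T_{u}$ and $T_{x_{2}}\ll _{R_{T}}T_{w}$, I get $T_{v}\ll _{R_{T}}T_{u}$ and $T_{v}\ll _{R_{T}}T_{w}$, and the parity of $i$ forced by Lemma~\ref{alternating-bounded-chain-tilde} yields $T_{v_{i}}\ll _{R_{T}}T_{v}$, whence $v_{i}\notin N[u,w]$, contradicting $v_{i}\in H$.

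I expect the main obstacle to be the subcase $v\in N(x_{2})$, and within it the simultaneous bookkeeping of the two orders $\ll _{R}$ and $\ll _{R_{T}}$ along the chain: one must repeatedly convert a left-of relation in one representation into a neighborhood inclusion via Lemma~\ref{intersecting-unbounded} and back, keep track of the parity of $i$ dictated by the alternating chain lemma, and verify that every branch collapses to the single contradiction $v_{i}\notin N(x_{2})$, $v_{i}\notin N[u,w]$, or $N_{1}(u)=N$ (the last excluded by Lemma~\ref{N-Cu}). By contrast, the $\mathcal{A}_{2}$ case and the subcase $v\notin N(x_{2})$ are routine once the facts about $w\in C_{u}\setminus \{u\}$ supplied by Lemma~\ref{Cu-H-tilde} are in hand.
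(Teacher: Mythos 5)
Your skeleton (the $\mathcal{A}_{2}$ case, the subcase $v\notin N(x_{2})$ via the sandwich $T_{x_{2}}\ll_{R_{T}}T_{v}\ll_{R_{T}}T_{w}$ with Lemmas~\ref{x2-relative-position-in-S2} and~\ref{Cu-H-tilde}, and induction along $H_{i}^{\prime}$-chains when $v\in N(x_{2})$) matches the paper, but both of your key steps in the subcase $v\in N(x_{2})$ are transplanted from the $C_{u}$-lemmas, where $v\notin N(x_{2})$, and they fail here precisely \emph{because} $v\in N(x_{2})$. First, your elimination of the ordering $P_{v_{i}}\ll_{R}P_{v}$ via ``$P_{v_{i}}\ll_{R}P_{v}\ll_{R}P_{x_{2}}$, hence $v_{i}\notin N(x_{2})$'' is unavailable: $v$ and $x_{2}$ are bounded and adjacent, so $P_{v}$ intersects $P_{x_{2}}$ in $R$ and $P_{v}\ll_{R}P_{x_{2}}$ is simply false. (In Lemma~\ref{Cu-H-tilde} this step is legitimate only because there the paper first \emph{proves} $P_{v}\ll_{R}P_{x_{2}}$, using $v\notin N(x_{2})$.) Second, in your ``surviving'' ordering $P_{v_{i-1}}\ll_{R}P_{v_{i}}$ and $P_{v}\ll_{R}P_{v_{i}}$, the parity argument produces the \emph{opposite} of what you claim: Lemma~\ref{alternating-bounded-chain-tilde} forces $P_{v_{i-1}}\ll_{R}P_{v_{i-2}}$ (else $P_{v_{i-2}}\ll_{R}P_{v_{i-1}}\ll_{R}P_{v_{i}}$ gives $v_{i}v_{i-2}\notin E$), hence $T_{v_{i-1}}\ll_{R_{T}}T_{v_{i-2}}$, then $T_{v_{i-1}}\ll_{R_{T}}T_{v_{i}}$ and thus $T_{v}\ll_{R_{T}}T_{v_{i}}$ --- which is perfectly consistent with $v_{i}\in N[u,w]$ and yields no contradiction. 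The paper's assignment is the reverse of yours: the ordering $P_{v_{i}}\ll_{R}P_{v_{i-1}}$, $P_{v_{i}}\ll_{R}P_{v}$ is the one killed by the chain-parity/trapezoid argument ($T_{v_{i}}\ll_{R_{T}}T_{v}\ll_{R_{T}}T_{u},T_{w}$, so $v_{i}\notin N[u,w]$), while the ordering $P_{v}\ll_{R}P_{v_{i}}$ is killed in the projection representation: since $v_{i}\in N(u)$ or $v_{i}\in N(w)$, Lemma~\ref{intersecting-unbounded} together with $P_{u}\ll_{R}P_{x_{2}}$ (resp.\ $P_{w}\ll_{R}P_{x_{2}}$) and $v\in N(x_{2})$ gives $P_{u}\ll_{R}P_{v}$ (resp.\ $P_{w}\ll_{R}P_{v}$), whence $P_{u}\ll_{R}P_{v}\ll_{R}P_{v_{i}}$ contradicts $v_{i}\in N(u)$ (resp.\ $v_{i}\in N(w)$).

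The base case $i=0$ in that subcase has the same defect: the sandwich argument of Lemma~\ref{Cu-H-tilde} hinges on $T_{x_{2}}\ll_{R_{T}}T_{v}$, which is impossible when $v\in N(x_{2})$, and knowing that $T_{v_{0}}$ meets $T_{x_{2}}$, $T_{u}$ and $T_{y}$ while $T_{v}\ll_{R_{T}}T_{u}\ll_{R_{T}}T_{y}$ does \emph{not} force $T_{v_{0}}$ to meet $T_{v}$ ($T_{v_{0}}$ may span the gap between $T_{u}$ and $T_{y}$, entirely to the right of $T_{v}$). The paper instead argues in $R$: it first rules out $P_{v_{0}}$ intersecting $P_{v}$ (an unbounded $v$ with $\phi_{v}>\phi_{v_{0}}>\phi_{u}$ would force $N_{1}(u)=N$ via $N_{1}(v)=N$, against Lemma~\ref{N-Cu}); then kills $P_{v}\ll_{R}P_{v_{0}}$ by deriving $P_{u}\ll_{R}P_{v}\ll_{R}P_{v_{0}}$, contradicting $v_{0}\in N(u)$ (Lemma~\ref{tilde-neighbors}); and kills $P_{v_{0}}\ll_{R}P_{v}$ through the witness $y$: $P_{v}\ll_{R}P_{y}$ contradicts $yv_{0}\in E$, and $P_{y}\ll_{R}P_{v}$ contradicts $v\in N(x_{2})$ via $P_{x_{2}}\ll_{R}P_{y}$. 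A minor further point: for $i=1$ the non-intersection of $P_{v_{1}}$ with $P_{v_{0}}$ must come from Lemma~\ref{alternating-bounded-chain-tilde-1}, since your witness $v_{i-2}$ does not exist there. So while your $\mathcal{A}_{2}$ case and the subcase $v\notin N(x_{2})$ are sound, the heart of the lemma --- both the base case and the induction step when $v\in N(x_{2})$ --- needs the paper's projection-representation arguments, which your plan replaces with inapplicable $C_{u}$-style shortcuts.
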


\begin{proof}
Recall that $C_{2}=\mathcal{A}_{2}\cup \mathcal{B}_{2}$, where $%
A_{j}\subseteq D_{2}$ for every $A_{j}\in \mathcal{B}_{2}$, $k+1\leq j\leq
\ell $, and $\mathcal{A}_{2}$ includes exactly those components $A_{i}$, $%
1\leq i\leq k$, for which all vertices of $A_{i}$ are adjacent to all
vertices of $\widetilde{H}$. Therefore, if $v\in A_{i}$ for some component $%
A_{i}\in \mathcal{A}_{2}$, then $H^{\prime }\subseteq H\subseteq \widetilde{H%
}\subseteq N(v)$ by definition, and thus $H_{i}^{\prime }\subseteq N(v)$ for
every $i\geq 0$.

Let now $v\in A_{j}$, for some $A_{j}\in \mathcal{B}_{2}$, and thus $v\in
D_{2}$. Suppose first that $v\notin N(x_{2})$. Then, $T_{x_{2}}\ll
_{R_{T}}T_{v}$ by Lemma~\ref{x2-relative-position-in-S2}, and $T_{v}\ll
_{R_{T}}T_{u}$, since $v\in D_{2}\subseteq S_{2}\subseteq V_{0}(u)$.
Moreover, $v\notin N(w)$, since otherwise $v\in C_{u}$, which is a
contradiction to the definition of $C_{2}$. Thus, $T_{v}\ll _{R_{T}}T_{w}$,
since $T_{v}\ll _{R_{T}}T_{u}$, and since $w\in N(u)$ and $w\notin N(v)$.
That is, $T_{x_{2}}\ll _{R_{T}}T_{v}\ll _{R_{T}}T_{w}$. Let now $z\in
H_{i}^{\prime }$, for some $i\geq 0$. Then, $z\in N(x_{2})$ by definition of 
$H^{\prime }$ and $z\in N(w)$ by Lemma~\ref{Cu-H-tilde}, and thus $T_{z}$
intersects $T_{v}$ in $R_{T}$, i.e.~$z\in N(v)$. Therefore, $H_{i}^{\prime
}\subseteq N(v)$ for every $i\geq 0$, in the case where $v\notin N(x_{2})$.

Suppose now that $v\in N(x_{2})$. We will prove by induction on $i$ that $%
H_{i}^{\prime }\subseteq N(v)$ for every $i\geq 0$. For $i=0$, let first $%
v_{0}\in H_{0}^{\prime }$ and $y\in V\setminus Q_{u}\setminus N[u]\setminus
V_{0}(u)$ be a vertex, such that $yv_{0}\in E$; such a vertex $y$ exists by
Definition~\ref{Hi-tilde}. For the sake of contradiction, assume that $%
v_{0}\notin N(v)$. Recall that $v_{0}\in N(u)$ by Lemma~\ref{tilde-neighbors}%
, and thus $v_{0}$ is bounded and $\phi _{v_{0}}>\phi _{u}$. Suppose that $%
P_{v_{0}}$ intersects $P_{v}$ in $R$. Then, $v$ is unbounded and $\phi
_{v}>\phi _{v_{0}}>\phi _{u}$, since $v_{0}$ is bounded and $v_{0}\notin
N(v) $. Recall that $N_{1}(v)=N$ by Lemma~\ref{N-C2}. Consider now a vertex $%
z\in N$, i.e.~$z\in N(x)\cap N(x_{2})$ for some $x\in X_{1}$. Then, $z\in
N(v)$, since $N_{1}(v)=N$, and thus $z$ is bounded and $\phi _{z}>\phi
_{v}>\phi _{u}$, since $v$ is unbounded. On the other hand, $P_{z}$
intersects $P_{u}$ in $R$, since $P_{x}\ll _{R}P_{u}\ll _{R}P_{x_{2}}$ and $%
z\in N(x)\cap N(x_{2})$. Thus, $z\in N(u)$, since $z$ is bounded and $\phi
_{z}>\phi _{u}$. Since this holds for an arbitrary $z\in N$, it follows that 
$N_{1}(u)=N$, which is a contradiction by Lemma~\ref{N-Cu}. Thus, $P_{v_{0}}$
does not intersect $P_{v}$ in $R$, i.e.~either $P_{v}\ll _{R}P_{v_{0}}$ or $%
P_{v_{0}}\ll _{R}P_{v}$.

Let first $P_{v}\ll _{R}P_{v_{0}}$. Suppose that $P_{v}$ intersects $P_{u}$
in $R$. Recall that $v\notin N(u)$, since $v\in C_{2}$, and thus either $%
N(u)\subseteq N(v)$ or $N(v)\subseteq N(u)$ by Lemma~\ref%
{intersecting-unbounded}. If $N(u)\subseteq N(v)$, then $v_{0}\in N(v)$,
which is a contradiction. If $N(v)\subseteq N(u)$, then $x_{2}\in N(u)$,
which is again a contradiction. Thus, $P_{v}$ does not intersect $P_{u}$ in $%
R$, i.e.~either $P_{v}\ll _{R}P_{u}$ or $P_{u}\ll _{R}P_{v}$. If $P_{v}\ll
_{R}P_{u}$, then $P_{v}\ll _{R}P_{u}\ll _{R}P_{x_{2}}$, i.e.~$v\notin
N(x_{2})$, which is a contradiction to the assumption on $v$. Thus, $%
P_{u}\ll _{R}P_{v}$. Moreover, since we assumed that $P_{v}\ll _{R}P_{v_{0}}$%
, it follows that $P_{u}\ll _{R}P_{v}\ll _{R}P_{v_{0}}$, and thus $%
v_{0}\notin N(u)$, which is a contradiction by Lemma~\ref{tilde-neighbors}.

Let now $P_{v_{0}}\ll _{R}P_{v}$. Suppose that $P_{v}$ intersects $P_{y}$ in 
$R$. Recall that $v\in V_{0}(u)$ by Lemma~\ref{N(w)-1}, and thus $vy\notin E$%
, since otherwise $y\in V_{0}(u)$, which is a contradiction. Thus, either $%
N(y)\subseteq N(v)$ or $N(v)\subseteq N(y)$ by Lemma~\ref%
{intersecting-unbounded}. If $N(y)\subseteq N(v)$, then $v_{0}\in N(v)$,
which is a contradiction. If $N(v)\subseteq N(y)$, then $x_{2}\in N(y)$
(since we assumed that $x_{2}\in N(v)$), and thus $y\in V_{0}(u)$, which is
a contradiction. Thus, $P_{v}$ noes not intersect $P_{y}$ in $R$,
i.e.~either $P_{v}\ll _{R}P_{y}$ or $P_{y}\ll _{R}P_{v}$. If $P_{v}\ll
_{R}P_{y}$, then $P_{v_{0}}\ll _{R}P_{v}\ll _{R}P_{y}$, i.e.~$yv_{0}\notin E$%
, which is a contradiction. Suppose that $P_{y}\ll _{R}P_{v}$. Recall that $%
P_{x_{2}}\ll _{R}P_{y}$ by the proof of Lemma~\ref%
{alternating-bounded-chain-tilde-1}. Thus $P_{x_{2}}\ll _{R}P_{y}\ll
_{R}P_{v}$, i.e.~$v\notin N(x_{2})$, which is a contradiction to the
assumption on $v$. Therefore, $v_{0}\in N(v)$, and thus $H_{0}^{\prime
}\subseteq N(v)$. This proves the induction basis.

For the induction step, let $i\geq 1$, and suppose that $v^{\prime }\in N(v)$
for every $v^{\prime }\in H_{j}^{\prime }$, where $0\leq j\leq i-1$. For the
sake of contradiction, assume that $v_{i}\notin N(v)$. Let $%
(v_{0},v_{1},\ldots ,v_{i-1},v_{i})$ be an $H_{i}$-chain of $v_{i}$. If $i=1$%
, $P_{v_{1}}$ does not intersect $P_{v_{0}}$ in $R$ by Lemma~\ref%
{alternating-bounded-chain-tilde-1}. If $i\geq 2$, i.e.~if $v_{i-2}$ exists,
then $P_{v_{i}}$ does not intersect $P_{v_{i-1}}$ in $R$ by Lemma~\ref%
{intersecting-unbounded}, since $v_{i-2}\in N(v_{i})\setminus N(v_{i-1})$
and $v\in N(v_{i-1})\setminus N(v_{i})$. Thus, $P_{v_{i}}$ does not
intersect $P_{v_{i-1}}$ in $R$ for every $i\geq 1$. Recall now that $%
v_{i}\in N[u,w]=N(u)\cup N(w)$, since $v_{i}\in H$, and that $v\notin
N[u,w]=N(u)\cup N(w)$ by definition of $C_{2}$. If $v_{i}\in N(u)$ (resp.~$%
v_{i}\in N(w)$), then $u\in N(v_{i})\setminus N(v)$ (resp.~$w\in
N(v_{i})\setminus N(v)$). Furthermore, $v_{i-1}\in N(v)\setminus N(v_{i})$,
i.e.~$N(v_{i})\nsubseteq N(v)$ and $N(v)\nsubseteq N(v_{i})$, and thus $%
P_{v_{i}}$ does not intersect $P_{v}$ in $R$ by Lemma~\ref%
{intersecting-unbounded}. Therefore, since $v_{i-1}\in N(v)$, it follows
that either $P_{v_{i-1}}\ll _{R}P_{v_{i}}$ and $P_{v}\ll _{R}P_{v_{i}}$ or $%
P_{v_{i}}\ll _{R}P_{v_{i-1}}$ and $P_{v_{i}}\ll _{R}P_{v}$.

Suppose first that $P_{v_{i-1}}\ll _{R}P_{v_{i}}$ and $P_{v}\ll
_{R}P_{v_{i}} $. Recall that $v_{i}\in N(u)$ or $v_{i}\in N(w)$.
Furthermore, recall that $v\in N(x_{2})$ by our assumption on $v$. Let $%
v_{i}\in N(u)$ (resp.~$v_{i}\in N(w)$). Then, $P_{v}$ does not intersect $%
P_{u}$ (resp.~$P_{w}$) in $R$ by Lemma~\ref{intersecting-unbounded}, since $%
x_{2}\in N(v)\setminus N(u) $ (resp.~$x_{2}\in N(v)\setminus N(w)$) and $%
v_{i}\in N(u)\setminus N(v)$ (resp.~$v_{i}\in N(w)\setminus N(v)$).
Therefore, since $P_{u}\ll _{R}P_{x_{2}}$ (resp.~$P_{w}\ll _{R}P_{x_{2}}$)
and $v\in N(x_{2})$, it follows that $P_{u}\ll _{R}P_{v}$ (resp.~$P_{w}\ll
_{R}P_{v}$). That is, $P_{u}\ll _{R}P_{v}\ll _{R}P_{v_{i}}$ (resp.~$P_{w}\ll
_{R}P_{v}\ll _{R}P_{v_{i}}$), i.e.~$v_{i}\notin N(u)$ (resp.~$v_{i}\notin
N(w)$), which is a contradiction.

Suppose now that $P_{v_{i}}\ll _{R}P_{v_{i-1}}$ and $P_{v_{i}}\ll _{R}P_{v}$%
. If $i=1$, then $T_{v_{1}}\ll _{R_{T}}T_{v_{0}}$ by Lemma~\ref%
{alternating-bounded-chain-tilde-1}. If $i\geq 2$, i.e.~if $v_{i-2}$ exists,
then $P_{v_{i-2}}\ll _{R}P_{v_{i-1}}$. Indeed, otherwise $P_{v_{i}}\ll
_{R}P_{v_{i-1}}\ll _{R}P_{v_{i-2}}$, i.e.~$v_{i}v_{i-2}\notin E$, which is a
contradiction. Thus, also $T_{v_{i-2}}\ll _{R_{T}}T_{v_{i-1}}$ and $i$ is
odd by Lemma~\ref{alternating-bounded-chain-tilde}. Therefore, $T_{v_{i}}\ll
_{R_{T}}T_{v_{i-1}}$ if $i\geq 2$, since otherwise $T_{v_{i-2}}\ll
_{R_{T}}T_{v_{i-1}}\ll _{R_{T}}T_{v_{i}}$, i.e.~$v_{i}v_{i-2}\notin E$,
which is a contradiction. That is, $T_{v_{i}}\ll _{R_{T}}T_{v_{i-1}}$ for
all $i\geq 1$. Therefore, since $v\in N(v_{i-1})$ and $v\notin N(v_{i})$, it
follows that $T_{v_{i}}\ll _{R_{T}}T_{v}$. Recall also that $T_{x_{2}}\ll
_{R_{T}}T_{u}$ and $T_{x_{2}}\ll _{R_{T}}T_{w}$. Thus, $T_{v}\ll
_{R_{T}}T_{u}$ and $T_{v}\ll _{R_{T}}T_{w}$, since we assumed that $v\in
N(x_{2})$, and since $v\notin N(u)\cup N(w)$ by definition of $C_{2}$. That
is, $T_{v_{i}}\ll _{R_{T}}T_{v}\ll _{R_{T}}T_{u}$ and $T_{v_{i}}\ll
_{R_{T}}T_{v}\ll _{R_{T}}T_{w}$, i.e.~$v_{i}\notin N(u)\cup N(w)$, which is
a contradiction. Therefore, $v_{i}\in N(v)$, and thus $H_{i}^{\prime
}\subseteq N(v)$. This completes the induction step, and the lemma follows.
\end{proof}

\subsection*{The subgraph $G_{0}$ of $G$}

Let $G_{0}$ be the graph induced in $G$ by the vertices of $C_{u}\cup
C_{2}\cup (H\setminus \bigcup\nolimits_{i=1}^{\infty }H_{i}\setminus
\bigcup\nolimits_{i=0}^{\infty }H_{i}^{\prime })$. Note that $G_{0}$ is an
induced subgraph also of $G\setminus Q_{u}\setminus N[X_{1}]$. Furthermore,
note that every vertex of $G_{0}\setminus \{u\}$ is bounded by to Lemma~\ref%
{N-H-C2-Cu-bounded}. Recall that $C_{2}\subseteq V_{0}(u)$ by Lemma~\ref%
{N(w)-1} and that $C_{u}\setminus \{u\}\subseteq N(u)\cup V_{0}(u)$ by Lemma~%
\ref{Cu-V0}. Consider now a vertex $v\in H\setminus
\bigcup\nolimits_{i=1}^{\infty }H_{i}\setminus
\bigcup\nolimits_{i=0}^{\infty }H_{i}^{\prime }$. If $v\notin N(u)$, then $%
v\in V_{0}(u)$, since $x_{2}\in V_{0}(u)$ and $v\in N(x_{2})$ by definition
of $H$. Thus, the next observation follows.

\begin{observation}
\label{V(G0)}Every vertex of $G_{0}\setminus \{u\}$ is bounded. Furthermore, 
$V(G_{0})\subseteq N[u]\cup V_{0}(u)$.
\end{observation}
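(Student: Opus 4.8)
The plan is to prove the two assertions separately, reducing each to lemmas already established earlier in this proof. For the boundedness claim, I would first note that $V(G_{0})\setminus\{u\}$ is contained in $(C_{u}\setminus\{u\})\cup C_{2}\cup H$, since $V(G_{0})=C_{u}\cup C_{2}\cup(H\setminus\bigcup\nolimits_{i=1}^{\infty}H_{i}\setminus\bigcup\nolimits_{i=0}^{\infty}H_{i}^{\prime})$ and the last summand is a subset of $H$. This set is in turn contained in $N\cup H\cup C_{2}\cup C_{u}\setminus\{u\}$, and Lemma~\ref{N-H-C2-Cu-bounded} asserts precisely that every vertex of the latter set is bounded. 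Hence the first sentence of the observation follows at once, with no further computation.

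For the inclusion $V(G_{0})\subseteq N[u]\cup V_{0}(u)$, I would handle the three building blocks of $V(G_{0})$ in turn. The block $C_{2}\subseteq V_{0}(u)$ is one of the conclusions of Lemma~\ref{N(w)-1}, so it is disposed of immediately, as $V_{0}(u)\subseteq N[u]\cup V_{0}(u)$. For the block $C_{u}$, I would invoke Lemma~\ref{Cu-V0}, which gives $C_{u}\setminus\{u\}\subseteq N(u)\cup V_{0}(u)$; re-inserting the vertex $u$ itself, which trivially lies in $N[u]$, yields $C_{u}\subseteq N[u]\cup V_{0}(u)$.

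The only block requiring a genuine argument is the remaining set $H\setminus\bigcup\nolimits_{i=1}^{\infty}H_{i}\setminus\bigcup\nolimits_{i=0}^{\infty}H_{i}^{\prime}$. For an arbitrary vertex $v$ of this set I would distinguish whether $v\in N(u)$: if so, then $v\in N[u]$ and we are done. Otherwise $v\notin N(u)$, and since $v\in H$ the definition of $H$ forces $v\in N(x_{2})$ with $x_{2}\in V_{0}(u)$. Here I would use that $u\notin N(x_{2})$ (because $u$ is adjacent to no vertex of $V_{0}(u)$), so $v\neq u$ and hence $v$ lies in $G\setminus N[u]$; being adjacent to $x_{2}$, the vertex $v$ sits in the same connected component of $G\setminus N[u]$ as $x_{2}$, which is a component of $V_{0}(u)$, whence $v\in V_{0}(u)$. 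Combining the three blocks gives $V(G_{0})\subseteq N[u]\cup V_{0}(u)$. I do not expect any real obstacle: every step is either a direct citation of a prior lemma or the one-line connectivity argument just sketched, and the only subtlety worth checking carefully is that $u$ is excluded from $H$ (so that the $H$-block argument never has to contend with $u$) while being explicitly reinstated through $N[u]$ when treating $C_{u}$.
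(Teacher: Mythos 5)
Your proof is correct and takes essentially the same route as the paper: boundedness of $G_{0}\setminus\{u\}$ via Lemma~\ref{N-H-C2-Cu-bounded}, and the inclusion $V(G_{0})\subseteq N[u]\cup V_{0}(u)$ handled block-by-block via Lemma~\ref{N(w)-1} for $C_{2}$, Lemma~\ref{Cu-V0} for $C_{u}$, and, for the remaining block of $H$, the observation that any $v\notin N(u)$ there satisfies $v\in N(x_{2})$ with $x_{2}\in V_{0}(u)$ and hence lies in $V_{0}(u)$ since $V_{0}(u)$ is a union of connected components of $G\setminus N[u]$. Your explicit check that $u\notin H$ (because $u\notin N(x_{2})$) is a correct detail that the paper leaves implicit.
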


\begin{lemma}
\label{module-2}$G_{0}\setminus \{u\}$ is a module in $G\setminus \{u\}$. In
particular, $N(v)\setminus V(G_{0})=N(X_{1})\cup
\bigcup\nolimits_{i=1}^{\infty }H_{i}\cup \bigcup\nolimits_{i=0}^{\infty
}H_{i}^{\prime }$ for every vertex $v\in V(G_{0})\setminus \{u\}$.
\end{lemma}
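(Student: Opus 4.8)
The plan is to prove the displayed identity for every $v\in M:=V(G_{0})\setminus\{u\}$, where I abbreviate $W:=N(X_{1})\cup\bigcup_{i\geq 1}H_{i}\cup\bigcup_{i\geq 0}H_{i}^{\prime}$; once this is done the module property is immediate, since $W$ does not depend on $v$, so all vertices of $M$ have the same neighbours outside $V(G_{0})$. First I would record two bookkeeping facts. (i) $W$ is disjoint from $V(G_{0})$: recall that $G_{0}$ is an induced subgraph of $G\setminus Q_{u}\setminus N[X_{1}]$, so $N(X_{1})\cap V(G_{0})=\emptyset$; moreover $C_{u}\cap H=\emptyset$ and $C_{2}\cap H=\emptyset$, hence $V(G_{0})\cap H=H\setminus\bigcup_{i\geq1}H_{i}\setminus\bigcup_{i\geq0}H_{i}^{\prime}$, which gives $\bigl(\bigcup_{i\geq1}H_{i}\cup\bigcup_{i\geq0}H_{i}^{\prime}\bigr)\cap V(G_{0})=\emptyset$. (ii) Because $H=\bigl(H\setminus\bigcup_{i\geq1}H_{i}\setminus\bigcup_{i\geq0}H_{i}^{\prime}\bigr)\cup\bigcup_{i\geq1}H_{i}\cup\bigcup_{i\geq0}H_{i}^{\prime}$, we have the convenient reformulation $V(G_{0})\cup W=C_{u}\cup C_{2}\cup H\cup N(X_{1})$.

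For the inclusion $W\subseteq N(v)\setminus V(G_{0})$ I would assemble the neighbourhood lemmas already established, treating the three types of vertex of $M$ separately. If $v\in C_{u}\setminus\{u\}$, then $N(X_{1})=N=N_{1}(v)\subseteq N(v)$ by Lemma~\ref{N-Cu-unbounded}, while $\bigcup_{i\geq1}H_{i}\subseteq N(v)$ and $\bigcup_{i\geq0}H_{i}^{\prime}\subseteq N(v)$ by Lemmas~\ref{Cu-H} and~\ref{Cu-H-tilde}. If $v\in C_{2}$, the same three containments follow from Lemmas~\ref{N-C2},~\ref{C2-H}, and~\ref{C2-H-tilde}. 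Finally, if $v\in H\setminus\bigcup_{i\geq1}H_{i}\setminus\bigcup_{i\geq0}H_{i}^{\prime}$, then $v$ is adjacent to every vertex of $N(X_{1})\cup\bigcup_{i\geq1}H_{i}\cup\bigcup_{i\geq0}H_{i}^{\prime}=W$ by the observation following Definition~\ref{Hi-tilde}. Combined with~(i) this yields $W\subseteq N(v)\setminus V(G_{0})$ for every $v\in M$.

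For the reverse inclusion it suffices, by~(ii), to show $N(v)\subseteq C_{u}\cup C_{2}\cup H\cup N(X_{1})$ for every $v\in M$. Here I would invoke Lemma~\ref{module-1}, which gives $N(V(C_{u}\cup C_{2}\cup H))\subseteq Q_{u}\cup N(X_{1})\cup V(\mathcal{B}_{1})$; since $M\subseteq V(C_{u}\cup C_{2}\cup H)$, any neighbour $z$ of $v$ lies either in $C_{u}\cup C_{2}\cup H$ (hence in $V(G_{0})\cup W$), or in $N(X_{1})\subseteq W$, or in $Q_{u}\cup V(\mathcal{B}_{1})$. Thus everything reduces to proving that no $v\in M$ has a neighbour in $Q_{u}\cup V(\mathcal{B}_{1})$. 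Observe that this set is genuinely external: by Observation~\ref{V(B1)} we have $V(\mathcal{B}_{1})\subseteq V\setminus Q_{u}\setminus N[u]\setminus V_{0}(u)$, and $Q_{u}$ is disjoint from $N[u]\cup V_{0}(u)$ by Lemma~\ref{Qu-1}, whereas $V(G_{0})\cup W\subseteq N[u]\cup V_{0}(u)$ (by Observation~\ref{V(G0)} and Lemma~\ref{Cu-V0}); so such a neighbour really would violate the identity and must be excluded.

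This exclusion is the \textbf{main obstacle}, and is exactly where Condition~\ref{ass3} enters. For $Q_{u}$: since $Q_{u}=Q_{1}(u)$ under Condition~\ref{ass3}, every $q\in Q_{u}$ satisfies $T_{q}\ll_{R_{T}}T_{x}$ for all $x\in V_{0}(u)$ and $T_{q}\ll_{R_{T}}T_{u}$. If $v\in M\cap V_{0}(u)$, then $T_{q}\ll_{R_{T}}T_{v}$, so $qv\notin E$; if $v\in C_{u}\setminus\{u\}$, then $T_{x_{2}}$ lies to the left of $T_{v}$ in $R_{T}$ (as shown inside Lemma~\ref{N-Cu}), whence $T_{q}\ll_{R_{T}}T_{x_{2}}\ll_{R_{T}}T_{v}$ and again $qv\notin E$. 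The remaining, genuinely delicate, case is $v\in\bigl(H\setminus\bigcup H_{i}\setminus\bigcup H_{i}^{\prime}\bigr)\cap N(u)$, where $T_{v}$ meets both $T_{x_{2}}$ and $T_{u}$; there I would combine the orderings $T_{q}\ll_{R_{T}}T_{x_{2}}\ll_{R_{T}}T_{u}$ with the positions of $P_{v}$ relative to $P_{u},P_{x_{2}}$ in $R$ and the universal fact $N=N(X_{1})\subseteq N(v)$ to contradict $N_{1}(u)\subsetneq N$ (Lemma~\ref{N-Cu}). For $V(\mathcal{B}_{1})$: any $b\in V(\mathcal{B}_{1})$ has $P_{u}\ll_{R}P_{b}$, so Lemma~\ref{foreigner-not-inside-1} forces $P_{x_{2}}\ll_{R}P_{b}$, and Lemma~\ref{V0-middle} places $T_{b}$ either to the right of $T_{x_{2}}$ or to the left of every $T_{x}$ with $x\in X_{1}$; feeding this into the orderings $T_{x}\ll_{R_{T}}T_{x_{2}}\ll_{R_{T}}T_{u}$ (Lemma~\ref{N(w)-2}) and the left-position of $T_{x_{2}}$ with respect to $C_{u}$, again together with $N\subseteq N(v)$, would contradict $N_{1}(u)\subsetneq N$. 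Reconciling the two orderings $\ll_{R}$ and $\ll_{R_{T}}$ simultaneously in these last cases is the technically heaviest part of the proof, and is precisely the step that cannot be carried out without Condition~\ref{ass3}.
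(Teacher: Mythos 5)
Your skeleton coincides exactly with the paper's: prove $N(v)\setminus V(G_{0})=N(X_{1})\cup\bigcup_{i=1}^{\infty}H_{i}\cup\bigcup_{i=0}^{\infty}H_{i}^{\prime}$ for each $v\in V(G_{0})\setminus\{u\}$, get the forward inclusion from Lemmas~\ref{N-Cu-unbounded},~\ref{N-C2},~\ref{Cu-H},~\ref{C2-H},~\ref{Cu-H-tilde},~\ref{C2-H-tilde} and the observation after Definition~\ref{Hi-tilde}, and reduce the reverse inclusion via Lemma~\ref{module-1} to excluding neighbours in $Q_{u}\cup V(\mathcal{B}_{1})$. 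All of that, and your two bookkeeping facts, are correct. The genuine gap is that the two exclusion steps --- which you yourself identify as the crux --- are left as unexecuted sketches, and the mechanism you propose for them (deriving a contradiction with $N_{1}(u)\neq N$ by ``reconciling'' $\ll_{R}$ and $\ll_{R_{T}}$) is not carried out and is not what closes these cases; in the paper no such reconciliation is needed.

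Concretely, for $Q_{u}$ the paper has a uniform two-line argument that makes your ``genuinely delicate case'' vanish: under Condition~\ref{ass3}, $T_{q}\ll_{R_{T}}T_{x}\ll_{R_{T}}T_{u}$ for every $q\in Q_{u}$ and $x\in X_{1}$ (since $X_{1}\subseteq V_{0}(u)$ by Lemma~\ref{N(w)-1}); since $N(q)\subset N(u)$, every neighbour $z$ of $q$ is also a neighbour of $u$, so $T_{z}$ crosses each sandwiched $T_{x}$, i.e.\ $N(q)\subseteq N(X_{1})$ --- and $V(C_{u}\cup C_{2}\cup H)$ is disjoint from $N[X_{1}]$ by Lemma~\ref{module-1}, so no vertex of $G_{0}$ meets $Q_{u}$ at all. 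In particular, for your hard case $v\in H\setminus\bigcup H_{i}\setminus\bigcup H_{i}^{\prime}$ with $v\in N(u)$: if $qv\in E$ then $v\in N(q)\subseteq N(X_{1})$, contradicting $v\in H$ immediately, with no appeal to $N_{1}(u)\neq N$. (Your partial cases $v\in V_{0}(u)$ and $v\in C_{u}$ are valid but superfluous.) For $V(\mathcal{B}_{1})$ the paper's arguments are likewise direct rather than order-theoretic acrobatics: if $v\in V_{0}(u)$ (covering $C_{2}$ and the $V_{0}(u)$-parts of $C_{u}$ and of the $H$-remainder), a neighbour $v^{\prime}\in V(\mathcal{B}_{1})$ lies outside $N[u]$, hence in $V_{0}(u)$, contradicting Observation~\ref{V(B1)}; if $v\in C_{u}\cap N(u)$, Lemma~\ref{foreigner-not-inside-1} gives $P_{u}\ll_{R}P_{x_{2}}\ll_{R}P_{v^{\prime}}$, forcing the bounded $P_{v}$ to cross $P_{x_{2}}$, so $v\in N(x_{2})$, contradicting the definition of $C_{u}$; and for $v$ in the $H$-remainder the exclusion is \emph{definitional}: a vertex of $H^{\prime}$ adjacent to a vertex of $V\setminus Q_{u}\setminus N[u]\setminus V_{0}(u)$ would belong to $H_{0}^{\prime}$ by Definition~\ref{Hi-tilde} --- the second half of the very observation you already invoked for the forward inclusion. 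Your proposed route via Lemma~\ref{V0-middle} and $N_{1}(u)\neq N$ never establishes that all of $N$ lands in $N(u)$, so as written the reverse inclusion is unproven.
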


\begin{proof}
First recall by Lemma~\ref{module-1} that $N(V({C_{u}\cup C_{2}\cup H}%
))\subseteq Q_{u}\cup N(X_{1})\cup V({\mathcal{B}_{1}})$, where $V(\mathcal{B%
}_{1})=\bigcup\nolimits_{A_{j}\in \mathcal{B}_{1}}A_{j}$. Consider a vertex $%
q\in Q_{u}$. Then, since we assumed in the statement of Theorem~\ref%
{no-property-thm} that Condition~\ref{ass3} holds, and since $X_{1}\subseteq
D_{1}\subseteq V_{0}(u)$ by Lemma~\ref{N(w)-1}, it follows that $T_{q}\ll
_{R_{T}}T_{x}\ll _{R_{T}}T_{u}$ for every $x\in X_{1}$. Thus, since $%
N(q)\subset N(u)$ by definition of $Q_{u}$, it follows that $T_{z}$
intersects $T_{x}$ in $R_{T}$ for every $z\in N(q)\subset N(u)$ and every $%
x\in X_{1}$. Therefore, in particular, $N(q)\subseteq N(X_{1})$ for every $%
q\in Q_{u}$. Thus, no vertex $q\in Q_{u}$ is adjacent to any vertex of $V({%
C_{u}\cup C_{2}\cup H})$, since $V({C_{u}\cup C_{2}\cup H})$ induces a
subgraph of $G\setminus Q_{u}\setminus N[X_{1}]\setminus {\mathcal{B}_{1}}$
by Lemma~\ref{module-1}. Thus, $N(V({C_{u}\cup C_{2}\cup H}))\cap
Q_{u}=\emptyset $, i.e.~$N(V({C_{u}\cup C_{2}\cup H}))\subseteq N(X_{1})\cup
V({\mathcal{B}_{1}})$.

Recall that $V(G_{0})={C_{u}\cup C_{2}\cup (}H\setminus
\bigcup\nolimits_{i=1}^{\infty }H_{i}\setminus
\bigcup\nolimits_{i=0}^{\infty }H_{i}^{\prime })$ by definition of $G_{0}$.
Consider now an arbitrary vertex $v\in V(G_{0})\setminus \{u\}$. Then, it
follows by the previous paragraph that 
\begin{equation}
N(v)\setminus V(G_{0})\subseteq N(X_{1})\cup V({\mathcal{B}_{1}})\cup
(\bigcup\nolimits_{i=1}^{\infty }H_{i}\cup \bigcup\nolimits_{i=0}^{\infty
}H_{i}^{\prime })  \label{N(v)-G0-eq-1}
\end{equation}%
We will prove that $N(v)\setminus V(G_{0})=N(X_{1})\cup
(\bigcup\nolimits_{i=1}^{\infty }H_{i}\cup \bigcup\nolimits_{i=0}^{\infty
}H_{i}^{\prime })$. If $v\in {C_{u}\setminus \{u\}}$, then $%
N(X_{1})\subseteq N(v)$, since $N_{1}(v)=N=N(X_{1})$ by Lemma~\ref%
{N-Cu-unbounded}. Similarly, if $v\in {C_{2}}$, then $N(X_{1})\subseteq N(v)$%
, since $N_{1}(v)=N=N(X_{1})$ by Lemma~\ref{N-C2}. If $v\in H\setminus
\bigcup\nolimits_{i=1}^{\infty }H_{i}\setminus
\bigcup\nolimits_{i=0}^{\infty }H_{i}^{\prime }$, then $N=H_{0}\subseteq
N(v) $ by Definition~\ref{Hi} (where $N=N(X_{1})$), since otherwise $v\in
H_{1}$, which is a contradiction. That is, $N(X_{1})\subseteq N(v)$ for
every vertex $v\in V(G_{0})\setminus \{u\}$.

If $v\in {C_{u}\setminus \{u\}}$, then $\bigcup\nolimits_{i=1}^{\infty
}H_{i}\cup \bigcup\nolimits_{i=0}^{\infty }H_{i}^{\prime }\subseteq N(v)$ by
Lemmas~\ref{Cu-H} and~\ref{Cu-H-tilde}. Similarly, if $v\in {C_{2}}$, then $%
\bigcup\nolimits_{i=1}^{\infty }H_{i}\cup \bigcup\nolimits_{i=0}^{\infty
}H_{i}^{\prime }\subseteq N(v)$ by Lemmas~\ref{C2-H} and~\ref{C2-H-tilde}.
If $v\in H\setminus \bigcup\nolimits_{i=1}^{\infty }H_{i}\setminus
\bigcup\nolimits_{i=0}^{\infty }H_{i}^{\prime }$, then $\bigcup%
\nolimits_{i=1}^{\infty }H_{i}\cup \bigcup\nolimits_{i=0}^{\infty
}H_{i}^{\prime }\subseteq N(v)$ by Definitions~\ref{Hi} and~\ref{Hi-tilde}.
Indeed, otherwise $v\in H_{i}$ for some $i\geq 1$, or $v\in H_{i}^{\prime }$
for some $i\geq 0$, which is a contradiction. That is, $\bigcup%
\nolimits_{i=1}^{\infty }H_{i}\cup \bigcup\nolimits_{i=0}^{\infty
}H_{i}^{\prime }\subseteq N(v)$ for every vertex $v\in V(G_{0})\setminus
\{u\}$.

We will now prove that $N(v)\cap V({\mathcal{B}_{1}})=\emptyset $. Suppose
for the sake of contradiction that $v^{\prime }\in N(v)$, for some $%
v^{\prime }\in V({\mathcal{B}_{1}})$. Note that $v^{\prime }\notin N(u)$ by
definition of $\widetilde{C}_{2}$. Let first $v\in {C_{u}\setminus \{u\}}$.
Then, either $v\in V_{0}(u)$ or $v\in N(u)$ by Lemma~\ref{Cu-V0}. If $v\in
V_{0}(u)$, then also $v^{\prime }\in V_{0}(u)$, which is a contradiction by
definition of ${\mathcal{B}_{1}}$. Suppose that $v\in N(u)$. Recall that $%
v^{\prime }\in {V(\mathcal{B}_{1})\subseteq V\setminus Q_{u}\setminus
N[u]\setminus V_{0}(u)}$ by our assumption on $v^{\prime }$ and by
Observation~\ref{V(B1)}. Thus, either $P_{u}\ll _{R}P_{x_{2}}\ll
_{R}P_{v^{\prime }}$ or $P_{v^{\prime }}\ll _{R}P_{x}\ll _{R}P_{u}$ for
every $x\in X_{1}$ by Lemma~\ref{foreigner-not-inside-1}. Therefore $%
P_{u}\ll _{R}P_{x_{2}}\ll _{R}P_{v^{\prime }}$, since $P_{u}\ll
_{R}P_{v^{\prime }}$ for every $v^{\prime }\in V({\mathcal{B}_{1}})$ by
definition of ${\mathcal{B}_{1}}$. Then, since we assumed that $v\in N(u)$
and $v\in N(v^{\prime })$, it follows that $P_{v}$ intersects $P_{x_{2}}$ in 
$R$. Furthermore, $x_{2}\in C_{2}$ is a bounded vertex by Lemma~\ref%
{N-H-C2-Cu-bounded}; $v$ is also a bounded vertex, since $v\in N(u)$.
Therefore $v\in N(x_{2})$, which is a contradiction by definition of $C_{u}$%
. Thus, $N(v)\cap V({\mathcal{B}_{1}})=\emptyset $ for every $v\in {%
C_{u}\setminus \{u\}}$.

Let now $v\in C_{2}$. Then $v\in V_{0}(u)$, since $C_{2}\subseteq V_{0}(u)$
by Lemma~\ref{N(w)-1}, and thus also $v^{\prime }\in V_{0}(u)$, since $%
v^{\prime }\notin N(u)$. This which is a contradiction by definition of ${%
\mathcal{B}_{1}}$. Therefore, $N(v)\cap V({\mathcal{B}_{1}})=\emptyset $ for
every~$v\in {C_{2}}$. Let finally ${v\in H\setminus
\bigcup\nolimits_{i=1}^{\infty }H_{i}\setminus
\bigcup\nolimits_{i=0}^{\infty }H_{i}^{\prime }}$. Recall that ${v^{\prime
}\in {V(\mathcal{B}_{1})\subseteq }V\setminus Q_{u}\setminus N[u]\setminus
V_{0}(u)}$. Thus, since ${v\in H\setminus \bigcup\nolimits_{i=1}^{\infty
}H_{i}}$, and since $vv^{\prime }\in E$, it follows by Definition~\ref%
{Hi-tilde} that $v\in H_{0}^{\prime }$. This is a contradiction to the
assumption that $v\in H\setminus \bigcup\nolimits_{i=1}^{\infty
}H_{i}\setminus \bigcup\nolimits_{i=0}^{\infty }H_{i}^{\prime }$. Therefore, 
$N(v)\cap V({\mathcal{B}_{1}})=\emptyset $ for every $v\in H\setminus
\bigcup\nolimits_{i=1}^{\infty }H_{i}\setminus
\bigcup\nolimits_{i=0}^{\infty }H_{i}^{\prime }$. That is, $N(v)\cap V({%
\mathcal{B}_{1}})=\emptyset $ for every vertex $v\in V(G_{0})\setminus \{u\}$.

Summarizing, $N(X_{1})\cup (\bigcup\nolimits_{i=1}^{\infty }H_{i}\cup
\bigcup\nolimits_{i=0}^{\infty }H_{i}^{\prime })\subseteq N(v)$ and $%
N(v)\cap V({\mathcal{B}_{1}})=\emptyset $ for every vertex~${v\in }${$%
V(G_{0})\setminus \{u\}$}. Therefore, it follows by (\ref{N(v)-G0-eq-1})
that 
\begin{equation}
N(v)\setminus V(G_{0})=N(X_{1})\cup (\bigcup\nolimits_{i=1}^{\infty
}H_{i}\cup \bigcup\nolimits_{i=0}^{\infty }H_{i}^{\prime })
\label{N(v)-G0-eq-2}
\end{equation}%
for every vertex $v\in V(G_{0})\setminus \{u\}$. Thus, in particular, $%
G_{0}\setminus \{u\}$ is a module in $G\setminus \{u\}$, since every vertex
of $G_{0}\setminus \{u\}$ has the same neighbors in $G\setminus G_{0}$. This
completes the proof of the lemma.
\end{proof}

\medskip

Now let $G_{0}^{\prime }=G[V(G_{0})\cup \{u^{\ast }\}]$. Then, since $%
u^{\ast }\in V_{0}(u)$ and $V(G_{0})\subseteq N[u]\cup V_{0}(u)$ by
Observation~\ref{V(G0)}, it follows that also $V(G_{0}^{\prime })\subseteq
N[u]\cup V_{0}(u)$. Furthermore, Observation~\ref{V(G0)} implies that the
set $V(G_{0}^{\prime })\setminus \{u\}$ has only bounded vertices, since $%
u^{\ast }$ is also bounded. Furthermore, since $N_{1}(u)\neq N$ by Lemma~\ref%
{N-Cu} (where $N=N(X_{1})$), there exists at least one vertex $q\in
N\setminus N(u)$, which is bounded by Lemma~\ref{N-H-C2-Cu-bounded}.
Moreover $q\in N(x_{2})$, since $N=N(X_{1})\subseteq N(x_{2})$ by Lemma~\ref%
{N-Cu}. Therefore, $P_{q}$ intersects $P_{u}$ in $R$, since $q\in
N(X_{1})\cap N(x_{2})$ and $P_{x}\ll _{R}P_{u}\ll _{R}P_{x_{2}}$ for every $%
x\in X_{1}$. Furthermore, $\phi _{q}<\phi _{u}$ in $R$, since otherwise $%
q\in N(u)$, which is a contradiction. Thus, $N(u)\subseteq N(q)$ by Lemma~%
\ref{intersecting-unbounded}, i.e.~$q$ is a covering vertex of $u$.
Furthermore $q\notin V(G_{0})$, since $q\in N=N(X_{1})$. Then, $q$ is
adjacent to all vertices of $C_{2}\cup C_{u}\setminus \{u\}$ by Lemmas~\ref%
{N-Cu-unbounded} and~\ref{N-C2}. Furthermore, $q\in N$ is adjacent to all
vertices of $H\setminus \bigcup\nolimits_{i=1}^{\infty }H_{i}\setminus
\bigcup\nolimits_{i=0}^{\infty }H_{i}^{\prime }$ by Definition~\ref{Hi},
since no vertex of $H_{1}$ is included in $H\setminus
\bigcup\nolimits_{i=1}^{\infty }H_{i}\setminus
\bigcup\nolimits_{i=0}^{\infty }H_{i}^{\prime }$. Summarizing, $q$ is a
bounded covering vertex of $u$, $P_{q}$ intersects $P_{u}$ in $R$, and $\phi
_{q}<\phi _{u}$ in $R$, and thus we may assume w.l.o.g.~that $u^{\ast }=q$,
as the next observation states.

\begin{observation}
\label{bounded-hovering-obs}Without loss of generality, we may assume that $%
u^{\ast }\in N=N(X_{1})$, i.e.~$u^{\ast }\notin V(G_{0})$, and that $u^{\ast
}$ is adjacent to every vertex of $V(G_{0})\setminus \{u\}$; thus, in
particular, $G_{0}^{\prime }$ is connected.
\end{observation}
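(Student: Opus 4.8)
The plan is to exhibit a concrete bounded covering vertex $q$ of $u$ that already lies in $N=N(X_{1})$ and dominates $V(G_{0})\setminus\{u\}$, and then to note that every earlier appeal to $u^{\ast}$ in the proof of Theorem~\ref{no-property-thm} used only that $u^{\ast}$ is bounded, that $P_{u^{\ast}}$ intersects $P_{u}$, and that $\phi_{u^{\ast}}<\phi_{u}$; hence $u^{\ast}$ can be harmlessly relabelled as $q$. First I would locate $q$. Since Lemma~\ref{N-Cu} gives $N_{1}(u)\subset N$, there is a vertex $q\in N\setminus N_{1}(u)$; as $q\in N=N(X_{1})$, if $q\in N(u)$ then $q\in N(u)\cap N(X_{1})=N_{1}(u)$, a contradiction, so $q\notin N(u)$. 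By Lemma~\ref{N-H-C2-Cu-bounded} the vertex $q$ is bounded, and by Lemma~\ref{N-Cu} we have $N=N(X_{1})\subseteq N(x_{2})$, whence $q\in N(x_{2})$ as well.

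Next I would verify that $q$ is a covering vertex of $u$ sitting in the required position. Because $X_{1}\subseteq D_{1}$ and $x_{2}\in D_{2}$ (via Lemma~\ref{N(w)-1}), the parallelograms satisfy $P_{x}\ll_{R}P_{u}\ll_{R}P_{x_{2}}$ for every $x\in X_{1}$. Since $q$ is adjacent to some $x\in X_{1}$ and to $x_{2}$, the parallelogram $P_{q}$ meets both $P_{x}$ and $P_{x_{2}}$, and therefore must intersect the sandwiched $P_{u}$. As $q\notin N(u)$ while $P_{q}$ intersects $P_{u}$, we are forced to have $\phi_{q}<\phi_{u}$, and then Lemma~\ref{intersecting-unbounded} yields $N(u)\subseteq N(q)$, i.e.\ $q$ is a bounded covering vertex of $u$ with $P_{q}$ intersecting $P_{u}$ and $\phi_{q}<\phi_{u}$. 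Moreover $q\in N(X_{1})$ places $q$ outside $V(G_{0})$, since $V(G_{0})$ induces a subgraph of $G\setminus Q_{u}\setminus N[X_{1}]$ by Lemma~\ref{module-1}.

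It then remains to show that $q$ is adjacent to every vertex of $V(G_{0})\setminus\{u\}$, which I would do by splitting $V(G_{0})$ into its three pieces. For $v\in C_{u}\setminus\{u\}$ Lemma~\ref{N-Cu-unbounded} gives $N_{1}(v)=N$; for $v\in C_{2}$ Lemma~\ref{N-C2} gives $N_{1}(v)=N$; and for $v\in H\setminus\bigcup_{i\geq1}H_{i}\setminus\bigcup_{i\geq0}H_{i}^{\prime}$ Definition~\ref{Hi} forces $H_{0}=N\subseteq N(v)$, since otherwise $v$ would belong to $H_{1}$. In every case $N\subseteq N(v)$, and as $q\in N$ this means $qv\in E$. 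Having established that $q$ is a bounded covering vertex of $u$ with the correct slope and straddling position, and that $q\notin V(G_{0})$ dominates $V(G_{0})\setminus\{u\}$, I would set $u^{\ast}:=q$. Connectivity of $G_{0}^{\prime}=G[V(G_{0})\cup\{u^{\ast}\}]$ is then immediate: $u^{\ast}$ is adjacent to all of $V(G_{0})\setminus\{u\}$, while $u$ is joined to $w\in C_{u}\subseteq V(G_{0})$, so every vertex reaches $u^{\ast}$.

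The one point requiring care is the legitimacy of the relabelling: I would confirm that no prior step in the argument exploited any feature of $u^{\ast}$ beyond its being a bounded covering vertex of $u$ whose parallelogram intersects $P_{u}$ with strictly smaller slope. Since $q$ satisfies exactly these three properties, the substitution changes nothing earlier, and the real content of the statement is the three domination cases above, all of which reduce cleanly to the ``$N\subseteq N(v)$'' facts already recorded in Lemmas~\ref{N-Cu-unbounded}, \ref{N-C2} and Definition~\ref{Hi}.
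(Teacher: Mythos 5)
Your proof is correct and follows essentially the same route as the paper: both obtain the witness $q\in N\setminus N_{1}(u)$ from Lemma~\ref{N-Cu}, verify via $P_{x}\ll_{R}P_{u}\ll_{R}P_{x_{2}}$ and Lemma~\ref{intersecting-unbounded} that $q$ is a bounded covering vertex of $u$ with $\phi_{q}<\phi_{u}$, and establish adjacency to all of $V(G_{0})\setminus\{u\}$ by the same three-case split through Lemmas~\ref{N-Cu-unbounded},~\ref{N-C2} and Definition~\ref{Hi}. Your explicit check that earlier steps used only the three defining properties of $u^{\ast}$ is a welcome (if implicit in the paper) justification of the ``w.l.o.g.'' relabelling.
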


Moreover, $G_{0}^{\prime }=G[V(G_{0})\cup \{u^{\ast }\}]$ has strictly less
vertices than $G$, since no vertex of $X_{1}\neq \emptyset $ is included in $%
G_{0}^{\prime }$. We assume now that the following condition holds. Its
correctness will be proved later, in Lemma~\ref{lem-cond1}.

\begin{condition}
\label{cond1}Let $G=(V,E)$ be a connected graph in \textsc{Tolerance }$\cap $ \textsc{%
Trapezoid}, $R$ be a projection representation of $G$ with $u$ as the
only unbounded vertex, such that $V_{0}(u)\neq \emptyset $ is connected and $%
V=N[u]\cup V_{0}(u)$. Then, there exists a projection representation $%
R^{\ast \ast }$ of $G$ with $u$ as the only unbounded vertex, such that $u$
has the right border property in $R^{\ast \ast }$.
\end{condition}

\subsection*{The projection representation $R_{\ell }$}

We define now the line segment $\ell $ with one endpoint $a_{\ell }$ on $%
L_{1}$ and the other endpoint $b_{\ell }$ on $L_{2}$ as follows. 
First recall that ${r(w)>_{R}r(u)}$ by Lemma~\ref{unbounded-bounded}, since ${w\in N(u)}$. 
Let ${\Delta = r(w)-r(u)>_{R}0}$ be the distance on $L_{2}$ between the lower right endpoints of $P_{w}$ and $P_{u}$ in $R$. 
Define in $R$ the values ${a_{\ell }=\min \{L(x_{2}),L(u)+\Delta \}}$ and 
${b_{\ell }=r(w)}$ as the endpoints of the line segment $\ell $ on $L_{1}$ and $L_{2}$,
respectively. Note that ${\phi _{\ell }\geq \phi _{u}}$ in $R$, where $\phi_{\ell }$ 
denotes the slope of the line segment $\ell $. 
Recall that~${\phi_{w}>\phi_{u}}$ in $R$ (since ${w\in N(u)}$), and thus in particular ${R(w)<_{R}L(u)+\Delta}$. 
Therefore, since ${P_{u}\ll_{R} P_{x_{2}}}$ and ${P_{w}\ll_{R} P_{x_{2}}}$, it follows that 
the line segment $\ell $ lies between $P_{u}$ and $P_{x_{2}}$ in $R$, 
as well as between $P_{w}$ and $P_{x_{2}}$ in $R$. 
Denote by $a_{u}$ and $b_{u}$ the upper and the lower endpoint of $P_{u}$ in~$R$,
respectively. Then, always ${a_{\ell }>a_{u}}$ and ${b_{\ell }>b_{u}}$
by definition of the line segment $\ell $.

Note that $G_{0}^{\prime }$ satisfies the requirements of Condition~\ref{cond1}.
Thus, since we assumed that Condition~\ref{cond1} holds, there exists a
representation $R_{0}^{\prime }$ of $G_{0}^{\prime }$ with $u$ as the only
unbounded vertex, where $u$ has the right border property in $R_{0}^{\prime
} $. Let $R_{0}^{\prime \prime }$ be the projection representation of $G_{0}$
that is obtained if we remove from $R_{0}^{\prime }$ the parallelogram that
corresponds to $u^{\ast }$. Let $\varepsilon >0$ be a sufficiently small
positive number. Consider now the $\varepsilon $-squeezed projection
representation $R_{0}$ of $G_{0}$ with respect to the line segment $\ell $,
which is obtained from $R_{0}^{\prime \prime }$. Then, replace the
parallelograms of the vertices of $G_{0}$ in $R$ by the projection
representation $R_{0}$, and denote the resulting projection representation
by $R_{\ell }$.

\begin{remark}
\label{R-ell-slopes}Recall that w.l.o.g.~all slopes of the parallelograms in
the projection representation~$R$ are 
distinct~\cite{GolTol04,IsaakNT03,MSZ-Model-SIDMA-09}. 
Therefore, since ${\varepsilon >0}$ is assumed to be sufficiently small, we can assume
w.l.o.g.~that, for every vertex ${x\in V(G_{0})}$, the slopes $\phi_{x}$ are
arbitrarily ``close'' to ${\phi_{\ell}}$ (and to each other) in $R_{\ell}$.
That is, we can assume w.l.o.g.~that for every vertex ${v\notin V(G_{0})}$,
if ${\phi_{v}>\phi_{\ell}}$ (resp.~${\phi_{v}<\phi_{\ell}}$) in $R_{\ell}$,
then also ${\phi_{v}>\phi_{x}}$ (resp.~${\phi_{v}<\phi_{x}}$) in $R_{\ell}$
for every vertex ${x\in V(G_{0})}$.
\end{remark}

\begin{remark}
\label{trans4-remark-1}Recall that the vertices of $G_{0}$ in $R_{\ell }$
lie on an $\varepsilon $-squeezed projection representation~$R_{0}$ with
respect to the line segment $\ell $, where $\varepsilon >0$ is a
sufficiently (very) small positive number. Therefore, in particular $b_{\ell
}-\varepsilon <_{R_{\ell }}l(v)\leq _{R_{\ell }}r(v)<_{R_{\ell }}b_{\ell
}+\varepsilon $ and $a_{\ell }-\varepsilon <_{R_{\ell }}L(v)\leq _{R_{\ell
}}R(v)<_{R_{\ell }}a_{\ell }+\varepsilon $ for every vertex $v\in V(G_{0})$.
On the other hand, since $\varepsilon $ has been chosen to be sufficiently
small, we may assume w.l.o.g.~that for every vertex $z\notin V(G_{0})$, the
lower right endpoint $r(z)$ (resp.~the lower left endpoint $l(z)$) of $P_{z}$
in $R_{\ell }$ does not lie between $b_{\ell }-\varepsilon $ and $b_{\ell
}+\varepsilon $, i.e.~either~$r(z)<_{R_{\ell }}b_{\ell }-\varepsilon $ or $%
r(z)>_{R_{\ell }}b_{\ell }+\varepsilon $ (resp.~either $l(z)<_{R_{\ell
}}b_{\ell }-\varepsilon $ or $l(z)>_{R_{\ell }}b_{\ell }+\varepsilon $).
Similarly, for every vertex $z\notin V(G_{0})$, the upper right endpoint $%
R(z)$ (resp.~the upper left endpoint $L(z)$) of~$P_{z}$ in $R_{\ell }$ does
not lie between $a_{\ell }-\varepsilon $ and $a_{\ell }+\varepsilon $, i.e.
either $R(z)<_{R_{\ell }}a_{\ell }-\varepsilon $ or $R(z)>_{R_{\ell
}}a_{\ell }+\varepsilon $ (resp.~either $L(z)<_{R_{\ell }}a_{\ell
}-\varepsilon $ or $L(z)>_{R_{\ell }}a_{\ell }+\varepsilon $).
\end{remark}

\subsection*{Properties of $R_{\ell }$}

\begin{lemma}
\label{R-ell}$R_{\ell }\setminus \{u\}$ is a projection representation of $%
G\setminus \{u\}$.
\end{lemma}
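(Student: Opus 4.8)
The plan is to show that $R_{\ell}$, with the line $P_u$ deleted, realises exactly the adjacencies of $G[V\setminus\{u\}]$. I split the vertices of $G\setminus\{u\}$ into the \emph{block} $V(G_0)\setminus\{u\}$, whose parallelograms have been replaced by the $\varepsilon$-squeezed representation $R_0$ near $\ell$, and the \emph{outside} vertices $V\setminus V(G_0)$, whose parallelograms are literally the same as in $R$. Three kinds of pairs must be checked. A pair of outside vertices is fine, since the relevant parallelograms and slopes are unchanged and $R$ is a (canonical) projection representation of $G$. A pair of block vertices is fine because $R_0$ is a projection representation of $G_0$ and $\varepsilon$-squeezing preserves the represented graph, so $R_0\setminus\{u\}$ represents $G_0\setminus\{u\}=G[V(G_0)\setminus\{u\}]$. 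The whole content of the lemma is therefore the third kind: adjacencies between a block vertex $v\in V(G_0)\setminus\{u\}$ and an outside vertex $z$.

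For the cross pairs I would first reduce everything to a single comparison with $\ell$. By Lemma~\ref{module-2}, $G_0\setminus\{u\}$ is a module of $G\setminus\{u\}$ with common outer neighbourhood $M:=N(X_1)\cup\bigcup_{i\ge1}H_i\cup\bigcup_{i\ge0}H_i'$, and by Lemma~\ref{N-H-C2-Cu-bounded} every vertex of $M$ is bounded. Since all block vertices are squeezed into the $\varepsilon$-neighbourhood of $\ell$ and, by Remark~\ref{trans4-remark-1}, no endpoint of an outside parallelogram lies inside the $\varepsilon$-windows around $a_{\ell}$ and $b_{\ell}$, each outside $P_z$ meets either \emph{all} block parallelograms or \emph{none} of them in $R_{\ell}$; moreover, by Remark~\ref{R-ell-slopes}, $\phi_z$ compares with every block slope exactly as it compares with $\phi_{\ell}$. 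Hence the cross adjacencies are governed solely by the position of $P_z$ relative to the segment $\ell$ in $R$, together with the sign of $\phi_z-\phi_{\ell}$ when $z$ is unbounded. It therefore suffices to prove: (i) if $z\in M$ then $P_z$ crosses $\ell$ in $R$; and (ii) if $z\notin M\cup\{u\}$ then either $P_z$ does not cross $\ell$ in $R$, or $z$ is unbounded with $\phi_z>\phi_{\ell}$.

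Direction (i) is the easy half. Every $z\in M$ is bounded and lies in $N(w)\cap N(x_2)$: indeed $N(X_1)\subseteq N(w)\cap N(x_2)$ by Lemmas~\ref{N(w)-1} and~\ref{N-Cu}, while $\bigcup_iH_i,\bigcup_iH_i'\subseteq H\subseteq N(x_2)$ by the definition of $H$, and these sets lie in $N(w)$ by Lemmas~\ref{Cu-H} and~\ref{Cu-H-tilde} applied to $w\in C_u\setminus\{u\}$. Since $\ell$ lies between $P_w$ and $P_{x_2}$ in $R$, any parallelogram meeting both $P_w$ and $P_{x_2}$ must cross $\ell$; as $z$ and the block vertex $v$ are both bounded, this crossing yields the required edge $zv$ in $R_{\ell}$.

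Direction (ii), the source of all the difficulty, is a case analysis over the outside vertices according to the partition $V\setminus\{u\}=N(u)\sqcup V_0(u)\sqcup Q_u\sqcup W$ with $W=V\setminus Q_u\setminus N[u]\setminus V_0(u)$. The bounded cases are geometric separation arguments in $R$: for $z\in W$, Lemma~\ref{foreigner-not-inside-1} gives $P_{x_2}\ll_R P_z$ or $P_z\ll_R P_x$ for all $x\in X_1$, and since $\ell$ lies between $P_u$ and $P_{x_2}$ while $P_x\ll_R P_u\ll_R P_{x_2}$, either alternative places $P_z$ wholly on one side of $\ell$; for $z\in X_1$ (the only outside part of $V_0(u)$ not already contained in $N(X_1)\subseteq M$), Lemma~\ref{N(w)-2} gives $P_z\ll_R P_w$ and hence $P_z\ll_R\ell$. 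The genuinely hard part is excluding a spurious crossing of $\ell$ by an \emph{unbounded} outside vertex $z\notin M$ with $\phi_z<\phi_{\ell}$, which would wrongly make $z$ adjacent to every (bounded) block vertex in $R_{\ell}$. Here I would play the projection representation $R$ against the trapezoid representation $R_T$, using Condition~\ref{ass3} (so that $Q_u=Q_1(u)$ and $T_z\ll_{R_T}T_x\ll_{R_T}T_{x_2}$ for $z\in Q_u$), Lemma~\ref{V0-middle}, Lemma~\ref{intersecting-unbounded}, and the minimality $\phi_u=\min\{\phi_x\mid x\in V_U\}$, to show that such a $z$ either cannot cross $\ell$ at all or would have to cover $u$ or $x_2$, contradicting its category. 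Once (i) and (ii) hold, each block vertex is adjacent in $R_{\ell}\setminus\{u\}$ to exactly $M$ among the outside vertices and to its correct neighbours inside $G_0$, so $R_{\ell}\setminus\{u\}$ represents $G\setminus\{u\}$. The interplay between $R$ and $R_T$ forced by the unbounded outside vertices is the main obstacle.
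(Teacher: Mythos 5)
Your reduction of the lemma to the two claims (i) and (ii) is sound and is essentially the paper's strategy: by Lemma~\ref{module-2} the cross adjacencies must be exactly the pairs $(v,z)$ with $z\in M=N(X_{1})\cup \bigcup_{i\geq 1}H_{i}\cup \bigcup_{i\geq 0}H_{i}^{\prime }$, and your direction (i) is argued correctly --- indeed you justify $\bigcup_{i}H_{i},\bigcup_{i}H_{i}^{\prime }\subseteq N(w)$ via Lemmas~\ref{Cu-H} and~\ref{Cu-H-tilde}, a point the paper's own proof leaves as ``by the definition of $H$''. The gaps are in direction (ii). First, your bounded case is incomplete: you treat only $z\in W$ and $z\in X_{1}$, but a bounded outside vertex crossing $\ell $ can equally lie in $N(u)$ or in $S_{2}$. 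For those vertices (ii) holds only because they are forced into $M$, and this is a combinatorial, not a geometric, fact: by Lemma~\ref{module-1} a bounded $z\in S_{2}$ with $z\notin Q_{u}\cup N[X_{1}]\cup V(\mathcal{B}_{1})$ lies in $N(X_{1})\cup V(C_{u}\cup C_{2}\cup H)$, and a bounded $z\in N(u)\setminus N(X_{1})$ lies in $V(C_{u}\cup H)$ by the definitions of $C_{u}$, $H$ and Lemma~\ref{N(w)-1}; since $z\notin V(G_{0})$ this puts $z\in M$ in both cases. Your parenthetical claim that $X_{1}$ is ``the only outside part of $V_{0}(u)$ not already contained in $N(X_{1})$'' is unjustified as stated; what is true, and suffices, is that $P_{z}\ll _{R}P_{u}$ for \emph{every} $z\in D_{1}$ (so no $D_{1}$-vertex meets $\ell $), while a crossing forces $z\in S_{2}$, which is then absorbed by the module argument above.

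The main gap is the unbounded subcase, where your plan points at the wrong mechanism. No interplay with $R_{T}$, Condition~\ref{ass3} or Lemma~\ref{V0-middle} is needed (these enter only through the already-proved Lemma~\ref{module-2}); the step is resolved purely inside $R$ by exploiting the two-case definition $a_{\ell }=\min \{L(x_{2}),L(u)+\Delta \}$, $b_{\ell }=r(w)$, which your proposal never uses. If $a_{\ell }=L(u)+\Delta <_{R}L(x_{2})$, then $\phi _{\ell }=\phi _{u}$, so an unbounded $z$ with $\phi _{z}<\phi _{\ell }$ contradicts the standing choice $\phi _{u}=\min \{\phi _{x}\ |\ x\in V_{U}\}$ immediately. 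If $a_{\ell }=L(x_{2})$, then the crossing gives $l(z)<_{R}b_{\ell }=r(w)<_{R}l(x_{2})$ and $R(w)<_{R}L(x_{2})=a_{\ell }<_{R}L(z)$, so $P_{z}$ intersects both $P_{w}$ and $P_{x_{2}}$ with $\phi _{z}<\phi _{w}$ and $\phi _{z}<\phi _{x_{2}}$; since $w,x_{2}$ are bounded, $z\in N(w)\cap N(x_{2})$, and since $z\notin N[X_{1}]$ (vertices of $N(X_{1})$ are bounded by Lemma~\ref{N-H-C2-Cu-bounded}) and $z\notin Q_{u}$ (otherwise $x_{2}\in N(z)\subset N(u)$), we get $z\in H$, hence $z\in \bigcup_{i}H_{i}\cup \bigcup_{i}H_{i}^{\prime }\subseteq M$ because $z\notin V(G_{0})$. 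So the correct conclusion is not a contradiction at all: such a crossing realizes a \emph{genuine} edge, with $z$ landing in the module's common neighborhood $M$, which is exactly what the if-and-only-if statement requires. Your anticipated contradiction (``$z$ would have to cover $u$ or $x_{2}$, contradicting its category'') aims in the wrong direction, and without noticing how $a_{\ell }$ and $b_{\ell }$ were engineered for this dichotomy, the hard case of your plan would not go through.
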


\begin{proof}
Recall that all vertices of $G_{0}\setminus \{u\}$ are bounded by
Observation~\ref{V(G0)} and that $N(v)\setminus V(G_{0})=N(X_{1})\cup
\bigcup\nolimits_{i=1}^{\infty }H_{i}\cup \bigcup\nolimits_{i=0}^{\infty
}H_{i}^{\prime }$ for every vertex $v\in V(G_{0})\setminus \{u\}$ by Lemma %
\ref{module-2}. We will prove that for a vertex $z\in V(G\setminus G_{0})$
and a vertex $v\in V(G_{0})\setminus \{u\}$, $z$ is adjacent to $v$ in $%
R_{\ell }$ if and only if $z\in N(X_{1})\cup \bigcup\nolimits_{i=1}^{\infty
}H_{i}\cup \bigcup\nolimits_{i=0}^{\infty }H_{i}^{\prime }$.

Consider a vertex $z\in N(X_{1})\cup \bigcup\nolimits_{i=1}^{\infty
}H_{i}\cup \bigcup\nolimits_{i=0}^{\infty }H_{i}^{\prime }$. Then $z$ is a
vertex of $G\setminus G_{0}$ by definition of $G_{0}$. Furthermore, $z$ is
bounded by Lemma~\ref{N-H-C2-Cu-bounded}. If $z\in
\bigcup\nolimits_{i=1}^{\infty }H_{i}\cup \bigcup\nolimits_{i=0}^{\infty
}H_{i}^{\prime }$, then $z\in N(w)\cap N(x_{2})$ by the definition of $H$.
Let $z\in N(X_{1})$. Then again $z\in N(x_{2})$, since $%
N_{1}(x_{2})=N=N(X_{1})$ by Lemma~\ref{N-Cu}. Furthermore $z\in N(w)$, since 
$N_{1}(w)=N(X_{1})$ by Lemma~\ref{N(w)-1}. That is, $z\in N(w)\cap N(x_{2})$
for every case regarding $z$, and thus $P_{z}$ intersects both $P_{w}$ and $%
P_{x_{2}}$ in $R$. Recall now by definition of the line segment $\ell $ that 
$\ell $ lies between $P_{w}$ and $P_{x_{2}}$ in $R$. Therefore, since $P_{z}$
intersects both $P_{w}$ and $P_{x_{2}}$ in $R$, it follows that also $P_{z}$
intersects $\ell $ in $R$. Thus, $z$ is adjacent in $R_{\ell }$ to every
vertex $v\in V(G_{0})\setminus \{u\}$, since both $z$ and $v$ are bounded.

Conversely, consider a vertex $z\in V(G\setminus G_{0})$ and a vertex $v\in
V(G_{0})\setminus \{u\}$, such that $z$ is adjacent to $v$ in $R_{\ell }$.
Then, in particular $P_{z}$ intersects $\ell $ in $R$. Recall that $v$ is
bounded by Observation~\ref{V(G0)}. Therefore, either $z$ is bounded or $z$
is unbounded and $\phi _{z}<\phi _{\ell }$ (in both $R$ and $R_{\ell }$).
Furthermore, observe that $z\notin X_{1}$, since $P_{x}\ll _{R}P_{u}$ for
every $x\in X_{1}$, and since $P_{z}$ intersects $\ell $ in $R$. Suppose
that $z\in {V(\mathcal{B}_{1})}$, and thus $z\in {V\setminus Q_{u}\setminus
N[u]\setminus V_{0}(u)}$ by Observation~\ref{V(B1)}. Then, either $P_{u}\ll
_{R}P_{x_{2}}\ll _{R}P_{z}$ or $P_{z}\ll _{R}P_{x}\ll _{R}P_{u}\ll
_{R}P_{x_{2}}$ for every $x\in X_{1}$ by Lemma~\ref{foreigner-not-inside-1}.
Thus, $P_{z}$ does not intersect the line segment $\ell $ in $R$, since $%
\ell $ lies between $P_{u}$ and $P_{x_{2}}$ in $R$ by definition of $\ell $,
which is a contradiction. Thus, $z\notin {V(\mathcal{B}_{1})}$.

Suppose first that $z$ is bounded, and thus also $z\notin Q_{u}$. We will
prove that $z\in N(X_{1})\cup \bigcup\nolimits_{i=1}^{\infty }H_{i}\cup
\bigcup\nolimits_{i=0}^{\infty }H_{i}^{\prime }$. To this end, we
distinguish the cases where $z\in V_{0}(u)$, $z\in N(u)$, and $z\in
V\setminus N[u]\setminus V_{0}(u)$. Recall by Lemma~\ref{module-1} that $V({%
C_{u}\cup C_{2}\cup H)}$ induces a subgraph of ${G\setminus Q_{u}\setminus
N[X_{1}]\setminus \mathcal{B}_{1}}$ that includes all connected components
of ${G\setminus Q_{u}\setminus N[X_{1}]\setminus \mathcal{B}_{1}}$, in which
the vertices of~${S_{2}\cup \{u\}}$ belong. Let first $z\in V\setminus
N[u]\setminus V_{0}(u)$, i.e.~$z\in V\setminus Q_{u}\setminus N[u]\setminus
V_{0}(u)$. Then either $P_{u}\ll _{R}P_{x_{2}}\ll _{R}P_{z}$ or $P_{z}\ll
_{R}P_{x}\ll _{R}P_{u}\ll _{R}P_{x_{2}}$ for every $x\in X_{1}$ by Lemma~\ref%
{foreigner-not-inside-1}, and thus $P_{z}$ does not intersect $\ell $ in $R$%
, which is a contradiction. Let now $z\in V_{0}(u)$; then~$z\in S_{2}$,
since $P_{z}$ intersects $\ell $ in $R$ (i.e.~$P_{z}\not\ll _{R}P_{u}$).
Then, since $z\notin X_{1}\cup Q_{u}\cup V(\mathcal{B}_{1})$, it follows
that either $z\in N(X_{1})$ or $z\in V(C_{u}\cup C_{2}\cup H)$. Therefore,
since we assumed that $z\notin V(G_{0})$, it follows that either $z\in
N(X_{1})$ or $z\in \bigcup\nolimits_{i=1}^{\infty }H_{i}\cup
\bigcup\nolimits_{i=0}^{\infty }H_{i}^{\prime }$, i.e.~$z\in N(X_{1})\cup
\bigcup\nolimits_{i=1}^{\infty }H_{i}\cup \bigcup\nolimits_{i=0}^{\infty
}H_{i}^{\prime }$. Let finally $z\in N(u)$. If $z\notin N(X_{1})$, then $%
z\in V(C_{u}\cup H)$ by the definition of $H$ and by Lemma~\ref{N(w)-1}.
That is, either $z\in N(X_{1})$ or $z\in V(C_{u}\cup H)$. Thus, since we
assumed that $z\notin V(G_{0})$, it follows again that either $z\in N(X_{1})$
or $z\in \bigcup\nolimits_{i=1}^{\infty }H_{i}\cup
\bigcup\nolimits_{i=0}^{\infty }H_{i}^{\prime }$, i.e.~$z\in N(X_{1})\cup
\bigcup\nolimits_{i=1}^{\infty }H_{i}\cup \bigcup\nolimits_{i=0}^{\infty
}H_{i}^{\prime }$. Summarizing, if $z$ is bounded, then $z\in N(X_{1})\cup
\bigcup\nolimits_{i=1}^{\infty }H_{i}\cup \bigcup\nolimits_{i=0}^{\infty
}H_{i}^{\prime }$.

Suppose now that $z$ is unbounded and $\phi _{z}<\phi _{\ell }$ (in both $R$
and $R_{\ell }$). Then, $a_{\ell }<_{R}L(z)$ and $l(z)<_{R}b_{\ell }$.
Recall that $z\notin X_{1}$; furthermore also $z\notin N(X_{1})$, since $z$
is unbounded and every vertex of $N=N(X_{1})$ is bounded by Lemma~\ref%
{N-H-C2-Cu-bounded}. Therefore, $z\notin N[X_{1}]$. We distinguish now in
the definition of the line segment $\ell $, the cases where $a_{\ell
}<_{R}L(x_{2})$ and $a_{\ell }=_{R}L(x_{2})$ in $R$.

\emph{Case 1.} $a_{\ell }<_{R}L(x_{2})$. Then $a_{\ell }=_{R}L(u)+\Delta $
in $R$, and thus $\phi _{\ell }=\phi _{u}$ in $R$ by definition of the line
segment $\ell $. Therefore, $\phi _{z}<\phi _{\ell }=\phi _{u}$ in $R$ for
some unbounded vertex $z$, since we assumed that $\phi _{z}<\phi _{\ell }$
in $R$. This is a contradiction, since $\phi _{u}=\min \{\phi _{x}$ in $R\
|\ x\in V_{U}\}$ by our initial assumption on $u$.

\emph{Case 2.} $a_{\ell }=_{R}L(x_{2})$. Recall that $P_{w}\ll _{R}P_{x_{2}}$%
. Then, $R(w)<_{R}L(x_{2})=_{R}a_{\ell }<_{R}L(z)$ and $l(z)<_{R}b_{\ell
}=_{R}r(w)<_{R}l(x_{2})$, since we assumed that $\phi _{z}<\phi _{\ell }$.
Therefore, $P_{z}$ intersects both $P_{w}$ and $P_{x_{2}}$ in $R$, while
also $\phi _{z}<\phi _{w}$ and $\phi _{z}<\phi _{x_{2}}$ in $R$. Thus $z\in
N(w)\cap N(x_{2})$, since both $w$ and $x_{2}$ are bounded. Therefore, since
also $z\notin N[X_{1}]$, it follows that $z\in H$ by definition of $H$. If $%
z\in H\setminus \bigcup\nolimits_{i=1}^{\infty }H_{i}\setminus
\bigcup\nolimits_{i=0}^{\infty }H_{i}^{\prime }$, then $z\in V(G_{0})$,
which is a contradiction. Therefore, $z\in \bigcup\nolimits_{i=1}^{\infty
}H_{i}\setminus \bigcup\nolimits_{i=0}^{\infty }H_{i}^{\prime }$.

Summarizing, if $z$ is adjacent to $v$ in $R_{\ell }$ for a vertex $z\in
V(G\setminus G_{0})$ and a vertex $v\in V(G_{0})\setminus \{u\}$, then $z\in
N(X_{1})\cup \bigcup\nolimits_{i=1}^{\infty }H_{i}\cup
\bigcup\nolimits_{i=0}^{\infty }H_{i}^{\prime }$. This completes the proof
of the lemma.
\end{proof}

\begin{corollary}
\label{R-ell-N(u)-intersect}For every $z\in N(u)$, $P_{z}$ intersects $P_{u}$
in $R_{\ell }$.
\end{corollary}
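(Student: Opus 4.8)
The plan is to verify the claim separately for the two kinds of neighbours of $u$: those inside $V(G_{0})$ and those outside. Recall that $u\in C_{u}\subseteq V(G_{0})$, that in $R_{\ell }$ the parallelograms of the vertices of $V(G_{0})$ are placed exactly as in the $\varepsilon $-squeezed representation $R_{0}$ of $G_{0}$, and that every parallelogram $P_{z}$ with $z\notin V(G_{0})$ is identical in $R$ and in $R_{\ell }$. Also, by Lemma~\ref{unbounded-bounded} every $z\in N(u)$ is bounded (and hence $z\notin Q_{u}$, as $Q_{u}\subseteq V_{U}$).

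First I would dispose of the neighbours inside $G_{0}$. Let $z\in N(u)\cap V(G_{0})$. Since $G_{0}$ is an induced subgraph of $G$, the edge $zu$ is present in $G_{0}$, hence also in the projection representation $R_{0}$ of $G_{0}$; in particular $P_{z}$ intersects $P_{u}$ in $R_{0}$. As the parallelograms of $V(G_{0})$ occupy in $R_{\ell }$ exactly their positions in $R_{0}$, it follows that $P_{z}$ intersects $P_{u}$ in $R_{\ell }$.

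Next I would treat $z\in N(u)\setminus V(G_{0})$. The key step is to show that $z\in N(X_{1})\cup \bigcup\nolimits_{i=1}^{\infty }H_{i}\cup \bigcup\nolimits_{i=0}^{\infty }H_{i}^{\prime }$, by a short case distinction. If $z\in N[X_{1}]$, then $z\in N(X_{1})$, since $X_{1}\subseteq V_{0}(u)$ is disjoint from $N(u)$ by Lemma~\ref{N(w)-1}. If $z\notin N[X_{1}]$ but $z\in N(x_{2})$, then, since $z\in N(u)\subseteq N[u,w]$, $z\notin Q_{u}$ and $z\notin N[X_{1}]$, we get $z\in V(H)=\bigl(N[u,w]\cap N(x_{2})\bigr)\setminus Q_{u}\setminus N[X_{1}]$; as $z\notin V(G_{0})$ and $V(G_{0})\cap V(H)=V(H)\setminus \bigcup\nolimits_{i=1}^{\infty }H_{i}\setminus \bigcup\nolimits_{i=0}^{\infty }H_{i}^{\prime }$, necessarily $z\in \bigcup\nolimits_{i=1}^{\infty }H_{i}\cup \bigcup\nolimits_{i=0}^{\infty }H_{i}^{\prime }$. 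Finally, if $z\notin N[X_{1}]$ and $z\notin N(x_{2})$, then (also using $z\neq x_{2}$, which holds because $x_{2}\in V_{0}(u)$ is disjoint from $N(u)$) $z$ is a vertex of $G\setminus Q_{u}\setminus N[X_{1},x_{2}]$ adjacent to $u$, and hence lies in the connected component $C_{u}$ of $u$ in that graph by Lemma~\ref{N(w)-1}, contradicting $z\notin V(G_{0})$.

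Having placed $z$ in $N(X_{1})\cup \bigcup\nolimits_{i=1}^{\infty }H_{i}\cup \bigcup\nolimits_{i=0}^{\infty }H_{i}^{\prime }$, I would conclude $z\in N(w)\cap N(x_{2})$: for $z\in N(X_{1})=N$ this is Lemmas~\ref{N(w)-1} and~\ref{N-Cu} (namely $N_{1}(w)=N$ and $N=N(X_{1})\subseteq N(x_{2})$), while for $z\in \bigcup\nolimits_{i=1}^{\infty }H_{i}\cup \bigcup\nolimits_{i=0}^{\infty }H_{i}^{\prime }$ membership in $N(x_{2})$ is the definition of $H$ and membership in $N(w)$ follows from Lemmas~\ref{Cu-H} and~\ref{Cu-H-tilde} applied to $w\in C_{u}\setminus \{u\}$. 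Thus $P_{z}$ intersects both $P_{w}$ and $P_{x_{2}}$ in $R$, and since the line segment $\ell $ lies between $P_{w}$ and $P_{x_{2}}$ in $R$, $P_{z}$ intersects $\ell $ in $R$. Because $\varepsilon >0$ is sufficiently small and the endpoints of $P_{z}$ avoid the $\varepsilon $-windows around $a_{\ell }$ and $b_{\ell }$ (Remark~\ref{trans4-remark-1}), the trivial parallelogram $P_{u}$, which lies within $\varepsilon $ of $\ell $ in $R_{\ell }$, is still crossed by $P_{z}$, so $P_{z}$ intersects $P_{u}$ in $R_{\ell }$, completing the argument.

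The main obstacle I anticipate is precisely the case analysis of the third paragraph, i.e.\ showing that every external neighbour of $u$ falls into $N(X_{1})\cup \bigcup\nolimits_{i=1}^{\infty }H_{i}\cup \bigcup\nolimits_{i=0}^{\infty }H_{i}^{\prime }$; this reuses the membership reasoning already developed for Lemma~\ref{R-ell}, so the statement really is a corollary of it. The accompanying geometric step, upgrading ``$P_{z}$ intersects $\ell $ in $R$'' to ``$P_{z}$ intersects $P_{u}$ in $R_{\ell }$'', is routine once the role of the sufficiently small $\varepsilon $ and Remark~\ref{trans4-remark-1} are invoked.
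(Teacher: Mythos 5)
Your proof is correct and follows essentially the same route as the paper: the same split into $z\in V(G_{0})$ (where $R_{0}$ being a sub-representation of $R_{\ell}$ settles it) versus $z\notin V(G_{0})$, where you show $z\in N(X_{1})\cup\bigcup_{i=1}^{\infty}H_{i}\cup\bigcup_{i=0}^{\infty}H_{i}^{\prime}$ and hence $z\in N(w)\cap N(x_{2})$, so that $P_{z}$ crosses $\ell$ and therefore $P_{u}$ in $R_{\ell}$. The only difference is that you inline the case analysis and the $\ell$-crossing argument which the paper handles by citing Lemma~\ref{module-2} and the proof of Lemma~\ref{R-ell}, so your write-up is a faithful, slightly more self-contained version of the paper's proof.
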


\begin{proof}
If $z\in V(G_{0})$, then $P_{z}$ intersects $P_{u}$ in $R_{0}$, since $R_{0}$
is a projection representation of $G_{0}$. Therefore, $P_{z}$ intersects $%
P_{u}$ also in $R_{\ell }$, since $R_{0}$ is a sub-representation of $%
R_{\ell }$. Suppose now that $z\notin V(G_{0})$. Then, either $z\in N(X_{1})$
or $z\in V(C_{u}\cup H)$, since we assumed that $z\in N(u)$. Thus, either $%
z\in N(X_{1})$ or $z\in \bigcup\nolimits_{i=1}^{\infty }H_{i}\cup
\bigcup\nolimits_{i=0}^{\infty }H_{i}^{\prime }$, since $z\notin V(G_{0})$,
and thus $z$ is adjacent to every vertex $v$ of $G_{0}\setminus \{u\}$ by
Lemma~\ref{module-2}. Therefore, $P_{z}$ intersects the line segment $\ell $
in both $R$ and $R_{\ell }$ (cf.~the proof of Lemma~\ref{R-ell}), and thus
in particular $P_{z}$ intersects also $P_{u}$ in $R_{\ell }$.
\end{proof}

\medskip

Note that, since the position and the slope of $P_{u}$ is not the same in $R$
and in $R_{\ell }$, the projection representation $R_{\ell }$ may be \emph{%
not} a projection representation of $G$. Similarly to the Transformations~%
\ref{trans1},~\ref{trans2}, and~\ref{trans3} in the proof of Theorem~\ref%
{right-property-thm}, we define in the sequel the Transformations~\ref%
{trans4},~\ref{trans5}, and~\ref{trans6}. After applying these
transformations to $R_{\ell }$, we obtain eventually a projection
representation $R^{\ast }$ of $G$ with $k-1$ unbounded vertices. The
following lemma will be mainly used in the remaining part of the proof of
Theorem~\ref{no-property-thm}.

\begin{lemma}
\label{R-ell-right-property}$u$ has the right border property in $R_{\ell }$.
\end{lemma}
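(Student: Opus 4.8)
By Definition~\ref{right-left-property-def}, $u$ has the right border property in $R_{\ell}$ if there is no pair of vertices $w'\in N(u)$ and $x\in V_0(u)$ with $P_{w'}\ll_{R_\ell}P_x$. Here the covering set $\mathcal{C}(u)$ and hence $V_0(u)$ depend only on the abstract graph $G$, so $V_0(u)$ is unchanged when passing from $R$ to $R_\ell$. Thus the plan is to show: for every $w'\in N(u)$ and every $x\in V_0(u)$, it is \emph{not} the case that $P_{w'}\ll_{R_\ell}P_x$, i.e.\ either $P_{w'}$ intersects $P_x$ in $R_\ell$ or $P_x\ll_{R_\ell}P_{w'}$.

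\textbf{Partition the relevant vertices by their location in $R_\ell$.} The construction of $R_\ell$ replaces the parallelograms of $V(G_0)$ by the $\varepsilon$-squeezed copy $R_0$ sitting on the segment $\ell$, while all vertices outside $V(G_0)$ keep their parallelograms from $R$. I would first locate $N(u)$ and $V_0(u)$ relative to $V(G_0)$. By Corollary~\ref{R-ell-N(u)-intersect}, every $z\in N(u)$ has $P_z$ intersecting $P_u$ in $R_\ell$; and since $P_u$ itself lies on (a squeezed neighborhood of) $\ell$, every neighbor of $u$ either lies in $V(G_0)$ or has its parallelogram crossing the thin vertical strip around $\ell$. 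The covering vertices $x\in V_0(u)$ split into the sets $D_1$ (those with $P_x\ll_R P_u$), which contains $X_1$ and $x_1$, and $S_2$ (those not to the left of $P_u$), which contains $x_2$ and $C_2$. The key geometric facts I can invoke are Lemma~\ref{N(w)-2} ($P_x\ll_R P_w$ for all $x\in X_1$) and Lemma~\ref{x2-relative-position-in-S2} (every $x'\in S_2$ has $T_{x_2}\ll_{R_T}T_{x'}$ or intersects $T_{x_2}$), together with the defining inequality $a_\ell=\min\{L(x_2),L(u)+\Delta\}$, $b_\ell=r(w)$ placing $\ell$ strictly between $P_w$ (and $P_u$) on the left and $P_{x_2}$ on the right.

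\textbf{Carry out the case analysis.} Suppose for contradiction that $P_{w'}\ll_{R_\ell}P_x$ for some $w'\in N(u)$, $x\in V_0(u)$. I would distinguish according to whether $x\in D_1$ or $x\in S_2$, and whether $x\in V(G_0)$. If $x\in V(G_0)$ (the interesting covering vertices $x_2$, and the whole of $C_2$, are in $G_0$), then $P_x$ sits in the $\varepsilon$-strip around $\ell$, so $P_{w'}\ll_{R_\ell}P_x$ forces $R(w')$ to lie to the left of $a_\ell-\varepsilon$; using Remark~\ref{trans4-remark-1} and Corollary~\ref{R-ell-N(u)-intersect} (which says $P_{w'}$ must reach $P_u$, hence cross the strip) I derive that $P_{w'}$ in fact intersects the squeezed block and thus cannot lie entirely to its left — contradiction. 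If $x\notin V(G_0)$, then $x\in D_1$ (since $S_2\cap(\text{covering vertices})$ is largely absorbed into $G_0$ via $C_2$), so $P_x\ll_R P_u$ and $P_x$ keeps its $R$-position; here I use $P_x\ll_R P_w'$ is impossible because $w'\in N(u)$ forces $r(u)<_R r(w')$ while $x\in D_1$ gives $R(x)<_R L(u)$, placing $P_x$ to the \emph{left} of $P_{w'}$, i.e.\ $P_x\ll_{R_\ell}P_{w'}$, the opposite of what was assumed. The genuinely delicate situation, which I expect to be the \textbf{main obstacle}, is when $x=x_2$ (or $x\in C_2\cap S_2$) and $w'$ is a neighbor of $u$ lying outside $G_0$: I must show that squeezing $P_{x_2}$ onto $\ell$ does not push it to the right of some neighbor $P_{w'}$. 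This is exactly controlled by the choice $b_\ell=r(w)$ together with $w'\in N(u)$ having $P_{w'}$ intersect $P_u$ (Corollary~\ref{R-ell-N(u)-intersect}), forcing $r(w')$ into the strip or beyond, so that $P_{w'}$ either intersects $P_{x_2}$ or has $P_{x_2}\ll P_{w'}$ — never $P_{w'}\ll P_{x_2}$.

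\textbf{Conclude.} Having ruled out $P_{w'}\ll_{R_\ell}P_x$ in every case, no forbidden pair exists, and therefore $u$ has the right border property in $R_\ell$ by Definition~\ref{right-left-property-def}. I would organize the write-up by first recording the position of $\ell$ and of the squeezed block (one short paragraph using Remark~\ref{trans4-remark-1}), then handling $x\in D_1$ (easy, via Lemma~\ref{unbounded-bounded}), and finally the $S_2$/$G_0$ case, where Corollary~\ref{R-ell-N(u)-intersect}, Lemma~\ref{x2-relative-position-in-S2}, and the explicit coordinates of $a_\ell,b_\ell$ do the work.
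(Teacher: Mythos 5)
Your reduction breaks down at the step where you dispose of the case $x\notin V(G_0)$ by asserting that such an $x$ must lie in $D_1$, because the covering vertices of $S_2$ are ``largely absorbed into $G_0$ via $C_2$''. That absorption is exactly what fails: by construction $G_0$ contains only $H\setminus\bigcup_{i\geq 1}H_i\setminus\bigcup_{i\geq 0}H_i'$, so every vertex of $\bigcup_{i\geq 1}H_i$ is deliberately \emph{excluded} from $G_0$; and each such vertex that is not a neighbor of $u$ lies in $V_0(u)$ (it is adjacent to $x_2\in V_0(u)$ and lies outside $N[u]$), with its parallelogram not to the left of $P_u$ in $R$ --- i.e.\ it sits in $S_2\setminus V(G_0)$. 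The same can happen for vertices of $N(X_1)\setminus N(u)$. For these vertices both of your mechanisms fail: they are not squeezed onto $\ell$, so the strip argument via Remark~\ref{trans4-remark-1} and Corollary~\ref{R-ell-N(u)-intersect} does not apply, and they are not in $D_1$, so the $r(u)<_R r(w')$ argument does not apply. This is precisely the hard core of the paper's proof: after ruling out $y\in V(G_0)$ (via the module property of Lemma~\ref{module-2}, essentially as you do), $y\in N(X_1)$ (a short direct argument), and $y\in\bigcup_{i\geq 0}H_i'$ (Lemma~\ref{tilde-neighbors} places those in $N(u)$), the paper is left with $y=v_i\in H_i$ for some $i\geq 1$, and must prove by induction on $i$, along an $H_i$-chain $(v_0,v_1,\ldots,v_i)$ and using \emph{both} the projection representation $R$ and the trapezoid representation $R_T$ (Lemmas~\ref{alternating-bounded-chain-1} and~\ref{alternating-bounded-chain}), that $v_i\in N(u)$ or $P_z\not\ll_R P_{v_i}$. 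Nothing in your sketch substitutes for this induction, and no purely local geometric argument can, since the parallelograms of these vertices were untouched by the squeezing.

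Two smaller points. First, you misplace the ``main obstacle'': $x_2$ and all of $C_2$ belong to $V(G_0)$, so the case you flag as delicate is in fact the easy one, dispatched by Lemma~\ref{module-2} (every $z\in N(u)\setminus V(G_0)$ lies in $N(X_1)\cup\bigcup_{i\geq 1}H_i\cup\bigcup_{i\geq 0}H_i'$ and is therefore adjacent to all of $V(G_0)\setminus\{u\}$, so $P_z$ cannot lie wholly to the left of $P_{x_2}$ in $R_\ell$). Second, your case $x\in V(G_0)$ tacitly assumes $w'\notin V(G_0)$ --- otherwise $R(w')$ lies inside the strip and cannot be forced left of $a_\ell-\varepsilon$; the sub-case $w',x\in V(G_0)$ needs the fact, invoked at the very start of the paper's proof, that $u$ has the right border property in $R_0$ itself, inherited from $R_0'$ via Condition~\ref{cond1}. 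That omission is repairable; the missing $H_i$-chain induction is the genuine gap.
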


\begin{proof}
Recall first that $u$ has the right border property in $R_{0}$. Suppose for
the sake of contradiction that $u$ has not the right border property in $%
R_{\ell }$. Then, there exist vertices $z\in N(u)$ and $y\in V_{0}(u)$, such
that $P_{z}\ll _{R_{\ell }}P_{y}$. We will now prove that $b_{u}<_{R_{\ell
}}r(z)$ for the lower right endpoint $r(z)$ of every $z\in N(u)$. If $z\in
V(G_{0})$, then clearly $b_{u}<_{R_{\ell }}r(z)$, since $b_{u}<b_{\ell }$
and $R_{0}$ is an $\varepsilon $-squeezed projection representation of $%
G_{0} $ with respect to $\ell $, where $\varepsilon >0$ is sufficiently
small. If $z\notin V(G_{0})$, then $b_{u}=r(u)<_{R}r(z)$ in $R$ by Lemma~\ref%
{unbounded-bounded}, and thus also $b_{u}<_{R_{\ell }}r(z)$, since the
endpoints of $P_{z}$ remain the same in both $R$ and $R_{\ell }$. That is, $%
b_{u}<_{R_{\ell }}r(z)$ for every $z\in N(u)$.

\emph{Case 1.} Let first $z\in V(G_{0})$. Then, $y\notin V(G_{0})$, since $u$
has the right border property in $R_{0}$. Furthermore $b_{u}<_{R_{\ell
}}r(z)<_{R_{\ell }}r(y)$, since $P_{z}\ll _{R_{\ell }}P_{y}$. Therefore,
since $y\notin V(G_{0})$, i.e.~since the endpoints of $P_{y}$ remain the
same in both $R$ and $R_{\ell }$, it follows that also $b_{u}<_{R}r(y)$.
Thus $y\in S_{2}$, since we assumed that $y\in V_{0}(u)$; therefore in
particular $y\notin X_{1}$, since $X_{1}\subseteq D_{1}$ by Lemma~\ref%
{N(w)-1}. Furthermore, $y\notin Q_{u}$ by Lemma~\ref{Qu-1} and $y\notin V({%
\mathcal{B}_{1}})$ by definition of ${\mathcal{B}_{1}}$, since $y\in
V_{0}(u) $. Recall now by Lemma~\ref{module-1} that $V({C_{u}\cup C_{2}\cup
H)}$ induces a subgraph of ${G\setminus Q_{u}\setminus N[X_{1}]\setminus 
\mathcal{B}_{1}}$ that includes all connected components of ${G\setminus
Q_{u}\setminus N[X_{1}]\setminus \mathcal{B}_{1}}$, in which the vertices of~%
${S_{2}\cup \{u\}}$ belong. Therefore, since $y\in S_{2}$ and $y\notin
Q_{u}\cup X_{1}\cup V({\mathcal{B}_{1}})$, it follows that $y\in
N(X_{1})\cup V({C_{u}\cup C_{2}\cup H)}$. Thus $y\in N(X_{1})\cup
\bigcup\nolimits_{i=1}^{\infty }H_{i}\cup \bigcup\nolimits_{i=0}^{\infty
}H_{i}^{\prime }$, since otherwise $y\in V(G_{0})$, which is a
contradiction. Therefore, $y$ is adjacent to every vertex $v\in
V(G_{0})\setminus \{u\}$ by Lemma~\ref{module-2}. Thus, in particular, $%
P_{y} $ intersects $P_{z}$ in $R_{\ell }$, since $z\in V(G_{0})\setminus
\{u\}$ and $R_{\ell }\setminus \{u\}$ is a projection representation of $%
G\setminus \{u\}$ by Lemma~\ref{R-ell}. This is a contradiction, since we
assumed that $P_{z}\ll _{R_{\ell }}P_{y}$.

\emph{Case 2.} Let now $z\notin V(G_{0})$. Since we assumed that $z\in N(u)$%
, it follows that either $z\in N(X_{1})$ or $z\in V(C_{u}\cup H)$.
Therefore, either $z\in N(X_{1})$ or $z\in \bigcup\nolimits_{i=1}^{\infty
}H_{i}\cup \bigcup\nolimits_{i=0}^{\infty }H_{i}^{\prime }$, since $z\notin
V(G_{0})$, and thus $z$ is adjacent to every vertex $v\in V(G_{0})\setminus
\{u\}$ by Lemma~\ref{module-2}. Then, in particular, $P_{z}$ intersects $%
P_{v}$ in $R_{\ell }$, for every vertex $v\in V(G_{0})\setminus \{u\}$, and
thus $y\notin V(G_{0})$, since we assumed that $P_{z}\ll _{R_{\ell }}P_{y}$.
Therefore, since both $y,z\notin V(G_{0})$ and $P_{z}\ll _{R_{\ell }}P_{y}$,
it follows that also $P_{z}\ll _{R}P_{y}$, and thus in particular $%
b_{u}<_{R}r(z)<_{R}r(y)$ by Lemma~\ref{unbounded-bounded}. Thus $y\in S_{2}$%
, since we assumed that $y\in V_{0}(u)$; therefore in particular $y\notin
X_{1}$, since $X_{1}\subseteq D_{1}$ by Lemma~\ref{N(w)-1}. Furthermore, $%
y\notin Q_{u}$ by Lemma~\ref{Qu-1} and $y\notin V({\mathcal{B}_{1}})$ by
definition of ${\mathcal{B}_{1}}$, since $y\in V_{0}(u)$. Therefore, since $%
y\in S_{2}$ and $y\notin Q_{u}\cup X_{1}\cup V({\mathcal{B}_{1}})$, it
follows (similarly to the previous paragraph) that $y\in N(X_{1})\cup V({%
C_{u}\cup C_{2}\cup H)}$. Thus $y\in N(X_{1})\cup
\bigcup\nolimits_{i=1}^{\infty }H_{i}\cup \bigcup\nolimits_{i=0}^{\infty
}H_{i}^{\prime }$, since otherwise $y\in V(G_{0})$, which is a contradiction.

Suppose that $y\in N(X_{1})$, i.e.~$y\in N(x)$ for some $x\in X_{1}$. Recall
that $P_{x}\ll _{R}P_{u}$, since $X_{1}\subseteq D_{1}$ by Lemma~\ref{N(w)-1}%
. If $P_{u}\ll _{R}P_{y}$, then $P_{x}\ll _{R}P_{u}\ll _{R}P_{y}$, i.e.~$%
y\notin N(x)$, which is a contradiction. Thus $P_{u}\not\ll _{R}P_{y}$, i.e.
either $P_{y}$ intersects $P_{u}$ in $R$ or $P_{y}\ll _{R}P_{u}$. Suppose
that $P_{y}$ intersects $P_{u}$ in $R$, and thus either $N(y)\subseteq N(u)$
or $N(u)\subseteq N(y)$ by Lemma~\ref{intersecting-unbounded}, since $%
y\notin N(u)$. If $N(y)\subseteq N(u)$, then $x\in N(u)$, where $x\in X_{1}$%
, which is a contradiction. If $N(u)\subseteq N(y)$, then $z\in N(y)$, which
is a contradiction, since we assumed that $P_{z}\ll _{R_{\ell }}P_{y}$.
Therefore, $P_{y}$ does not intersect $P_{u}$ in $R$, and thus $P_{y}\ll
_{R}P_{u}$, i.e.~$P_{z}\ll _{R}P_{y}\ll _{R}P_{u}$. Then $z\notin N(u)$,
which is a contradiction. Therefore, $y\notin N(X_{1})$, and thus $y\in
\bigcup\nolimits_{i=1}^{\infty }H_{i}\cup \bigcup\nolimits_{i=0}^{\infty
}H_{i}^{\prime }$. On the other hand $y\notin \bigcup\nolimits_{i=0}^{\infty
}H_{i}^{\prime }$, since otherwise $y\in N(u)$ by Lemma~\ref{tilde-neighbors}%
, which is a contradiction. Thus $y\in \bigcup\nolimits_{i=1}^{\infty }H_{i}$%
. Summarizing, $z\notin V(G_{0})$ and $y=v_{i}\in H_{i}$ for some $i\geq 1$.

We will now prove by induction on $i$ that $v_{i}\in N(u)$ or $P_{z}\not\ll
_{R}P_{v_{i}}$, for every vertex $v_{i}\in H_{i}$, $i\geq 1$. This then
completes the proof of the lemma, since $v_{i}=y\notin N(u)$ (by the
assumption that $y\in V_{0}(u)$), and thus $P_{z}\not\ll _{R}P_{v_{i}}=P_{y}$%
, which is a contradiction (since we assumed that $P_{z}\ll _{R_{\ell
}}P_{y} $, and thus also $P_{z}\ll _{R}P_{y}$).

For the sake of contradiction, suppose that $v_{i}\notin N(u)$ and $P_{z}\ll
_{R}P_{v_{i}}$ for some $i\geq 1$. Then, note that $z\notin N(v_{i})$.
Recall that $v_{i}\in N(x_{2})$ due to the definition of $H$, and since $%
v_{i}\in H$. Therefore, since $v_{i}\notin N(u)$ and $x_{2}\in V_{0}(u)$, it
follows that $v_{i}\in V_{0}(u)$, and thus $T_{v_{i}}\ll _{R_{T}}T_{u}$ in
the trapezoid representation $R_{T}$. Therefore, also $T_{v_{i}}\ll
_{R_{T}}T_{z}$, since $z\in N(u)\setminus N(v_{i})$. Recall now that $%
T_{x}\ll _{R_{T}}T_{x_{2}}$ for every $x\in X_{1}$ by Lemma~\ref{N(w)-2}.
Thus, since $v_{i}\in N(x_{2})$ and $v_{i}\notin N(X_{1})$ by definition of $%
H$, it follows that $T_{x}\ll _{R_{T}}T_{v_{i}}$ for every $x\in X_{1}$,
i.e.~$T_{x}\ll _{R_{T}}T_{v_{i}}\ll _{R_{T}}T_{z}$ for every $x\in X_{1}$.
Thus, in particular, $z\notin N(X_{1})$.

For the induction basis, let $i=1$. Suppose that $N_{1}(z)=N$. Then, for
every $v\in N$, $T_{v}$ intersects $T_{v_{1}}$ in $R_{T}$, i.e.~$v\in
N(v_{1})$, since $v\in N(X_{1})\cap N(z)$ and $T_{x}\ll _{R_{T}}T_{v_{1}}\ll
_{R_{T}}T_{z}$ for every $x\in X_{1}$. Thus, $N_{1}(v_{1})=N$, i.e.~$%
N=H_{0}\subseteq N(v_{1})$, which is a contradiction by Definition~\ref{Hi},
since $v_{1}\in H_{1}$.

Therefore $N_{1}(z)\neq N$, and thus there exists a vertex $v\in N\setminus
N(z)$, i.e.~$v\in N(x)\setminus N(z)$ for some $x\in X_{1}$. Then $v\in
N(x_{2})$, since $N_{1}(x_{2})=N=N(X_{1})$ by Lemma~\ref{N-Cu}. Thus, since $%
v\in N(x)\cap N(x_{2})$ and $P_{x}\ll _{R}P_{u}\ll _{R}P_{x_{2}}$, it
follows that $P_{v}$ intersects $P_{u}$ in $R$. If $v\notin N(u)$, then
either $N(v)\subseteq N(u)$ or $N(u)\subseteq N(v)$ by Lemma~\ref%
{intersecting-unbounded}. If $N(v)\subseteq N(u)$, then $x_{2}\in N(u)$,
which is a contradiction. If $N(u)\subseteq N(v)$, then $z\in N(v)$, which
is again a contradiction. Therefore, $v\in N(u)$ for all vertices $v\in
N\setminus N(z)$.

Consider now the trapezoid representation $R_{T}$. Recall that $T_{x}\ll
_{R_{T}}T_{v_{1}}\ll _{R_{T}}T_{u}$ and $T_{x}\ll _{R_{T}}T_{v_{1}}\ll
_{R_{T}}T_{z}$ for every $x\in X_{1}$. Consider an arbitrary vertex $v\in
N=N(X_{1})$. If $v\in N(z)$, then $T_{v}$ intersects $T_{v_{1}}$ in $R_{T}$,
since $v\in N(X_{1})\cap N(z)$ and $T_{x}\ll _{R_{T}}T_{v_{1}}\ll
_{R_{T}}T_{z}$ for every $x\in X_{1}$; therefore $v\in N(v_{1})$. Otherwise,
if $v\notin N(z)$, then $v\in N(u)$, as we proved in the previous paragraph.
Then, $T_{v}$ intersects $T_{v_{1}}$ in $R_{T}$, since $v\in N(X_{1})\cap
N(u)$ and $T_{x}\ll _{R_{T}}T_{v_{1}}\ll _{R_{T}}T_{u}$ for every $x\in
X_{1} $; therefore again $v\in N(v_{1})$. Thus, $v\in N(v_{1})$ for every $%
v\in N$, i.e.~$N=H_{0}\subseteq N(v_{1})$, which is a contradiction by
Definition~\ref{Hi}, since $v_{1}\in H_{1}$. Therefore, $v_{1}\in N(u)$ or $%
P_{z}\not\ll _{R}P_{v_{1}}$ for every vertex $v_{1}\in H_{1}$. This proves
the induction basis.

For the induction step, let $i\geq 2$. Let $(v_{0},v_{1},\ldots
,v_{i-2},v_{i-1},v_{i})$ be an $H_{i}$-chain of $v_{i}$. By the induction
hypothesis, $v_{i-1}\in N(u)$ or $P_{z}\not\ll _{R}P_{v_{i-1}}$. Recall that 
$T_{v_{i}}\ll _{R_{T}}T_{z}$, as we proved above. Assume that $z\in
N(v_{i-1})$. Then, since $z\in N(v_{i-1})\setminus N(v_{i})$ and $v_{i-2}\in
N(v_{i})\setminus N(v_{i-1})$, $P_{v_{i}}$ does not intersect $P_{v_{i-1}}$
in $R$ by Lemma~\ref{intersecting-unbounded}. Suppose first that $i$ is
even. Then, $P_{v_{i-2}}\ll _{R}P_{v_{i-1}}$ by Lemmas~\ref%
{alternating-bounded-chain-1} and~\ref{alternating-bounded-chain}. Thus,
since $v_{i}\in N(v_{i-2})$ and $P_{v_{i}}$ does not intersect $P_{v_{i-1}}$
in $R$, it follows that $P_{v_{i}}\ll _{R}P_{v_{i-1}}$. Then, since we
assumed that $P_{z}\ll _{R}P_{v_{i}}$, it follows that $P_{z}\ll
_{R}P_{v_{i}}\ll _{R}P_{v_{i-1}}$, i.e.~$z\notin N(v_{i-1})$. This is a
contradiction to the assumption that $z\in N(v_{i-1})$. Suppose now that $i$
is odd, i.e.~$i\geq 3$. Then, $T_{v_{i-1}}\ll _{R_{T}}T_{v_{i-2}}$ by Lemma~%
\ref{alternating-bounded-chain}. Thus, since $v_{i}\in N(v_{i-2})\setminus
N(v_{i-1})$, it follows that $T_{v_{i-1}}\ll _{R_{T}}T_{v_{i}}$. Then, since 
$T_{v_{i}}\ll _{R_{T}}T_{z}$, it follows that $T_{v_{i-1}}\ll
_{R_{T}}T_{v_{i}}\ll _{R_{T}}T_{z}$, i.e.~$z\notin N(v_{i-1})$. This is
again a contradiction to the assumption that $z\in N(v_{i-1})$.

Therefore $z\notin N(v_{i-1})$. Recall that $v_{i-1}$ is a bounded vertex by
Lemma~\ref{N-H-C2-Cu-bounded}. Furthermore, $z$ is a bounded vertex, since $%
z\in N(u)$. Therefore, since $z\notin N(v_{i-1})$, it follows that $%
P_{v_{i-1}}$ does not intersect $P_{z}$ in $R$, i.e.~either $P_{v_{i-1}}\ll
_{R}P_{z}$ or $P_{z}\ll _{R}P_{v_{i-1}}$.

\emph{Case 2a.} $P_{v_{i-1}}\ll _{R}P_{z}$. Then, since $z\in N(u)$ and $%
P_{u}\ll _{R}P_{x_{2}}$, it follows by Lemma~\ref{unbounded-bounded} that $%
R(v_{i-1})<_{R}L(z)<_{R}L(u)<_{R}L(x_{2})$, i.e.~$R(v_{i-1})<_{R}L(x_{2})$.
Thus, since $v_{i-1}\in N(x_{2})$ and $P_{u}\ll _{R}P_{x_{2}}$, it follows
that $r(u)<_{R}l(x_{2})<_{R}r(v_{i-1})$. That is, $R(v_{i-1})<_{R}L(u)=R(u)$
and $r(u)<_{R}r(v_{i-1})$, i.e.~$P_{v_{i-1}}$ intersects $P_{u}$ in $R$ and $%
\phi _{v_{i-1}}>\phi _{u}$. If $v_{i-1}\notin N(u)$, then $%
N(v_{i-1})\subseteq N(u)$ by Lemma~\ref{intersecting-unbounded}, and thus $%
x_{2}\in N(u)$, which is a contradiction. Thus, $v_{i-1}\in N(u)$.

Since $P_{v_{i-1}}\ll _{R}P_{z}$ and $P_{z}\ll _{R}P_{v_{i}}$ by assumption,
it follows that $P_{v_{i-1}}\ll _{R}P_{v_{i}}$. Recall by Lemmas~\ref%
{alternating-bounded-chain-1} and~\ref{alternating-bounded-chain} that
either $P_{v_{i-2}}\ll _{R}P_{v_{i-1}}$ or $P_{v_{i-1}}\ll _{R}P_{v_{i-2}}$.
If $P_{v_{i-2}}\ll _{R}P_{v_{i-1}}$, then $P_{v_{i-2}}\ll _{R}P_{v_{i-1}}\ll
_{R}P_{v_{i}}$, i.e.~$v_{i-2}v_{i}\notin E$, which is a contradiction.
Therefore, $P_{v_{i-1}}\ll _{R}P_{v_{i-2}}$ and $i$ is odd, and thus $%
T_{v_{i-1}}\ll _{R_{T}}T_{v_{i-2}}$ by Lemmas~\ref%
{alternating-bounded-chain-1} and~\ref{alternating-bounded-chain}. Thus,
since $v_{i}\in N(v_{i-2})\setminus N(v_{i-1})$, it follows that also $%
T_{v_{i-1}}\ll _{R_{T}}T_{v_{i}}$. Recall now that $T_{v_{i}}\ll
_{R_{T}}T_{u}$, as we proved above. Therefore, it follows that $%
T_{v_{i-1}}\ll _{R_{T}}T_{v_{i}}\ll _{R_{T}}T_{u}$, and thus $v_{i-1}\notin
N(u)$, which is a contradiction by the previous paragraph.

\emph{Case 2b.} $P_{z}\ll _{R}P_{v_{i-1}}$. Then, $v_{i-1}\in N(u)$ by the
induction hypothesis, and thus~$v_{i-1}$ is bounded. Furthermore, $v_{i}$ is
also bounded by Lemma~\ref{N-H-C2-Cu-bounded}, since $v_{i}\in H$.
Therefore, $P_{v_{i}}$ does not intersect $P_{v_{i-1}}$ in $R$, since $%
v_{i-1}v_{i}\notin E$, and thus either $P_{v_{i}}\ll _{R}P_{v_{i-1}}$ or $%
P_{v_{i-1}}\ll _{R}P_{v_{i}}$. Recall that $v_{i}\notin N(u)$ and $P_{z}\ll
_{R}P_{v_{i}}$ by assumption. Suppose first that $P_{v_{i}}\ll
_{R}P_{v_{i-1}}$, that is, $P_{z}\ll _{R}P_{v_{i}}\ll _{R}P_{v_{i-1}}$.
Then, since $z\in N(u)$ and $v_{i-1}\in N(u)$, it follows that $P_{u}$
intersects $P_{v_{i}}$ in $R$. Since $v_{i}\notin N(u)$, either $%
N(v_{i})\subseteq N(u)$ or $N(u)\subseteq N(v_{i})$ by Lemma~\ref%
{intersecting-unbounded}. If $N(v_{i})\subseteq N(u)$, then $x_{2}\in N(u)$,
which is a contradiction. If $N(u)\subseteq N(v_{i})$, then $v_{i-1}\in
N(v_{i})$, which is again a contradiction.

Suppose now that $P_{v_{i-1}}\ll _{R}P_{v_{i}}$. Recall by Lemmas~\ref%
{alternating-bounded-chain-1} and~\ref{alternating-bounded-chain} that
either $P_{v_{i-2}}\ll _{R}P_{v_{i-1}}$ or $P_{v_{i-1}}\ll _{R}P_{v_{i-2}}$.
If $P_{v_{i-2}}\ll _{R}P_{v_{i-1}}$, then $P_{v_{i-2}}\ll _{R}P_{v_{i-1}}\ll
_{R}P_{v_{i}}$, i.e.~$v_{i-2}v_{i}\notin E$, which is a contradiction.
Therefore, $P_{v_{i-1}}\ll _{R}P_{v_{i-2}}$ and $i$ is odd, and thus $%
T_{v_{i-1}}\ll _{R_{T}}T_{v_{i-2}}$ by Lemmas~\ref%
{alternating-bounded-chain-1} and~\ref{alternating-bounded-chain}. Thus,
since $v_{i}\in N(v_{i-2})\setminus N(v_{i-1})$, it follows that also $%
T_{v_{i-1}}\ll _{R_{T}}T_{v_{i}}$. Recall now that $T_{v_{i}}\ll
_{R_{T}}T_{u}$, as we proved above. Therefore, $T_{v_{i-1}}\ll
_{R_{T}}T_{v_{i}}\ll _{R_{T}}T_{u}$, and thus $v_{i-1}\notin N(u)$, which is
a contradiction. This completes the induction step and the lemma follows.
\end{proof}

\subsection*{The projection representations $R_{\ell }^{\prime }$, 
$R_{\ell }^{\prime\prime }$, and $R_{\ell }^{\prime \prime \prime }$}

\begin{notation}
\label{Notation-N(u)}In the following, whenever we refer to $N(u)$, we will
mean $N_{G}(u)$, i.e.~the neighborhood set of vertex $u$ in $G$. Note that,
since $R_{\ell }$ may be \emph{not} a projection representation of $G$
(although $R_{\ell }\setminus \{u\}$ is a projection representation of $%
G\setminus \{u\}$ by Lemma~\ref{R-ell}), the set $N_{G}(u)$ does not
coincide necessarily with the set of adjacent vertices of $u$ in the graph
induced by $R_{\ell }$.
\end{notation}

Similarly to the proof of Theorem~\ref{right-property-thm}, we add to $G$ an
isolated bounded vertex $t$. This isolated vertex $t$ corresponds to a
parallelogram $P_{t}$, such that $P_{v}\ll _{R}P_{t}$ and $P_{v}\ll
_{R_{\ell }}P_{t}$ for every other vertex $v$ of $G$. Denote by $V_{B}$ and $%
V_{U}$ the set of bounded and unbounded vertices of $G$ in $R_{\ell }$,
after the addition of the auxiliary vertex $t$ to $G$ (note that $t\in V_{B}$%
).

Now, we define for every $z\in N(u)$ the value $L_{0}(z)=\min_{R_{\ell
}}\{L(x)\ |\ x\in V_{B}\setminus N(u),P_{z}\ll _{R_{\ell }}P_{x}\}$. For
every vertex $x\in V_{B}\setminus N(u)$, such that $P_{z}\ll _{R_{\ell
}}P_{x}$ for some $z\in N(u)$, it follows that $x\notin V_{0}(u)$, since $u$
has the right border property in $R_{\ell }$ by Lemma~\ref%
{R-ell-right-property}. Thus, for every $z\in N(u)$, $L_{0}(z)=\min_{R_{\ell
}}\{L(x)\ |\ x\in V_{B}\setminus N(u)\setminus V_{0}(u),P_{z}\ll _{R_{\ell
}}P_{x}\}$. Note that the value $L_{0}(z)$ is well defined for every $z\in
N(u)$, since in particular $t\in V_{B}\setminus N(u)$ and $P_{z}\ll
_{R_{\ell }}P_{t}$. Furthermore, note that for every every $z\in N(u)$, the
endpoint $L_{0}(z)$ does not correspond to any vertex of $G_{0}$, since $%
V(G_{0})\subseteq N[u]\cup V_{0}(u)$ by Observation~\ref{V(G0)}. Define now
the the value $\ell _{0}=\max_{R_{\ell }}\{l(x)\ |\ x\in V_{0}(u)\}$ and the
subset $N_{1}=\{z\in N(u)\ |\ r(z)<_{R_{\ell }}\ell _{0}\}$ of neighbors of $%
u$ (in $G$, and not in $R_{\ell }$). Similarly to Transformation~\ref{trans1}
in the proof of Theorem~\ref{right-property-thm}, we construct now the
projection representation $R_{\ell }^{\prime }$ from $R_{\ell }$ as follows.

\begin{transformation}
\label{trans4}For every $z\in N_{1}$, move the right line of $P_{z}$
parallel to the right, until either $r(z)$ comes immediately after $\ell
_{0} $ on $L_{2}$, or $R(z)$ comes immediately before $L_{0}(z)$ on $L_{1}$.
Denote the resulting projection representation by $R_{\ell }^{\prime }$.
\end{transformation}

\begin{remark}
\label{trans4-remark-2}Suppose now that the endpoint $\ell _{0}$ corresponds
to a vertex of $V(G_{0})$, i.e.~${b_{\ell }-\varepsilon <_{R_{\ell }}\ell
_{0}<_{R_{\ell }}b_{\ell }+\varepsilon }$ by Remark~\ref{trans4-remark-1}.
Then, since $\varepsilon $ has been chosen to be sufficiently small, we make
w.l.o.g.~the following convention in the statement of Transformation~\ref%
{trans4}: for every vertex $z\in N_{1}$, such that $z\notin V(G_{0})$,
either $r(z)<_{R_{\ell }^{\prime }}b_{\ell }-\varepsilon $ (in the case
where $r(z)<_{R_{\ell }^{\prime }}\ell _{0}$) or~$r(z)$ comes immediately
after $b_{\ell }+\varepsilon $ on $L_{2}$, i.e.~$r(z)>_{R_{\ell }^{\prime
}}b_{\ell }+\varepsilon $ (in the case where $r(z)>_{R_{\ell }^{\prime
}}\ell _{0}$). Summarizing, similarly to $R_{\ell }$, we may assume in $%
R_{\ell }^{\prime }$ w.l.o.g.~that for every vertex $z\in N(u)$, such that $%
z\notin V(G_{0})$, either $r(z)<_{R_{\ell }^{\prime }}b_{\ell }-\varepsilon $
or $r(z)>_{R_{\ell }^{\prime }}b_{\ell }+\varepsilon $.
\end{remark}

Note that the left lines of all parallelograms do not move during
Transformation~\ref{trans4}. Thus, in particular, the value of $\ell _{0}$
is the same in $R_{\ell }$ and in $R_{\ell }^{\prime }$, i.e.~$\ell
_{0}=\max_{R_{\ell }^{\prime }}\{l(x)\ |\ x\in V_{0}(u)\}$. As we will prove
in Lemma~\ref{R'-ell}, the representation $R_{\ell }^{\prime }\setminus
\{u\} $ is a projection representation of the graph $G\setminus \{u\}$, and
thus the parallelograms of two bounded vertices intersect in $R_{\ell }$ if
and only if they intersect also in $R_{\ell }^{\prime }$. Therefore, for
every $z\in N(u)$, the value $L_{0}(z)$ remains the same in $R_{\ell }$ and
in $R_{\ell }^{\prime }$, i.e.~$L_{0}(z)=\min_{R_{\ell }^{\prime }}\{L(x)\
|\ x\in V_{B}\setminus N(u)\setminus V_{0}(u),P_{z}\ll _{R_{\ell }^{\prime
}}P_{x}\}$ for every $z\in N(u)$. Similarly to the proof of Theorem~\ref%
{right-property-thm}, we define now the subset $N_{2}=\{z\in N(u)\ |\ \ell
_{0}<_{R_{\ell }^{\prime }}r(z)\}$ of neighbors of $u$. Since the lower
right endpoint $r(z)$ of all parallelograms $P_{z}$ in $R_{\ell }^{\prime }$
is greater than or equal to the corresponding value $r(z)$ in $R_{\ell }$,
it follows that $N(u)\setminus N_{1}=\{z\in N(u)\ |\ \ell _{0}<_{R_{\ell
}}r(z)\}\subseteq \{z\in N(u)\ |\ \ell _{0}<_{R_{\ell }^{\prime
}}r(z)\}=N_{2}$. Thus, $N(u)\setminus N_{2}\subseteq N_{1}$ and $N_{2}\cup
(N_{1}\setminus N_{2})=N(u)$. If $N_{2}\neq \emptyset $, we define the value 
$r_{0}=\min_{R_{\ell }^{\prime }}\{r(z)\ |\ z\in N_{2}\}$.

\begin{lemma}
\label{r(u)<r0-in-R-ell'}If $N_{2}\neq \emptyset $, i.e.~if the value $r_{0}$
can be defined, then $r(u)<_{R_{\ell }^{\prime }}r_{0}$.
\end{lemma}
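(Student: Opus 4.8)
The plan is to reduce the statement to a pointwise comparison and then split according to whether the neighbor lies inside the squeezed block $G_0$. Since $r_0=\min_{R_{\ell}^{\prime}}\{r(z)\mid z\in N_2\}$, it suffices to prove $r(u)<_{R_{\ell}^{\prime}}r(z)$ for every $z\in N_2\subseteq N(u)$. Before the case analysis I would record two stability facts about Transformation~\ref{trans4}: first, $P_u$ is never touched, because the transformation only moves right lines of parallelograms $P_z$ with $z\in N_1\subseteq N(u)$ and $u\notin N(u)$; hence $r(u)$ is the same point in $R_{\ell}$ and in $R_{\ell}^{\prime}$, and likewise $\ell_0$ is unchanged since only right lines move. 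Second, as $u\in V(G_0)$, Remark~\ref{trans4-remark-1} gives $r(u)<_{R_{\ell}}b_{\ell}+\varepsilon$, hence $r(u)<_{R_{\ell}^{\prime}}b_{\ell}+\varepsilon$.

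For the first case, suppose $z\in V(G_0)$. Then $u$ and $z$ both lie in the sub\-representation $R_0$, which is a genuine projection representation of $G_0$ having $u$ as its unique unbounded vertex, and $z\in N_{G_0}(u)$ since $G_0$ is an induced subgraph of $G$ containing the edge $uz$. Applying Lemma~\ref{unbounded-bounded} inside $R_0$ yields $r(u)<_{R_0}r(z)$, so $r(u)<_{R_{\ell}}r(z)$. Because Transformation~\ref{trans4} only moves right endpoints to the right (never to the left) and leaves $r(u)$ fixed, the inequality persists: $r(u)<_{R_{\ell}^{\prime}}r(z)$.

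For the second case, suppose $z\notin V(G_0)$. Here I would first locate $\ell_0$ relative to $b_{\ell}$: the vertex $x_2$ belongs to $C_2\subseteq V_0(u)\cap V(G_0)$ (recall $C_2\subseteq V_0(u)$ by Lemma~\ref{N(w)-1} and $C_2\subseteq V(G_0)$, while $x_2$ itself was placed in a component of $\mathcal{B}_2\subseteq C_2$), so Remark~\ref{trans4-remark-1} gives $l(x_2)>_{R_{\ell}}b_{\ell}-\varepsilon$, whence $\ell_0\ge_{R_{\ell}}l(x_2)>_{R_{\ell}}b_{\ell}-\varepsilon$, and this bound holds verbatim in $R_{\ell}^{\prime}$. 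Now $z\in N_2$ means $\ell_0<_{R_{\ell}^{\prime}}r(z)$, so $r(z)>_{R_{\ell}^{\prime}}b_{\ell}-\varepsilon$. On the other hand, Remark~\ref{trans4-remark-2} forces, for a vertex $z\notin V(G_0)$, either $r(z)<_{R_{\ell}^{\prime}}b_{\ell}-\varepsilon$ or $r(z)>_{R_{\ell}^{\prime}}b_{\ell}+\varepsilon$; the first alternative is now excluded, so $r(z)>_{R_{\ell}^{\prime}}b_{\ell}+\varepsilon>_{R_{\ell}^{\prime}}r(u)$. Combining the two cases gives $r(u)<_{R_{\ell}^{\prime}}r(z)$ for every $z\in N_2$, and therefore $r(u)<_{R_{\ell}^{\prime}}r_0$.

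The main obstacle, and the reason the proof is not the one-line appeal to Lemma~\ref{unbounded-bounded} used in Theorem~\ref{right-property-thm}, is precisely that $R_{\ell}$ need not be a projection representation of $G$: the position of $P_u$ is an artifact of the squeezed block $R_0$, so one cannot compare $r(u)$ globally with an outside neighbor via Lemma~\ref{unbounded-bounded}. The resolution I expect to be the delicate point is the coupling of the squeezing bounds of Remarks~\ref{trans4-remark-1} and~\ref{trans4-remark-2} with the observation that $\ell_0$ sits inside the $\varepsilon$-window of $\ell$ (because some vertex of $V_0(u)$, namely $x_2$, is itself in $V(G_0)$); this is what lets the membership condition $\ell_0<_{R_{\ell}^{\prime}}r(z)$ defining $N_2$ push every outside neighbor past $b_{\ell}+\varepsilon$ and hence past $r(u)$.
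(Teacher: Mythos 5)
Your proof is correct and follows essentially the same route as the paper: the same case split on whether the neighbor attaining (or belonging to) $N_{2}$ lies in $V(G_{0})$, with Lemma~\ref{unbounded-bounded} applied inside the sub-representation $R_{0}$ for the inside case, and Remarks~\ref{trans4-remark-1} and~\ref{trans4-remark-2} combined with the facts $x_{2}\in V(G_{0})\cap V_{0}(u)$ (hence $b_{\ell}-\varepsilon<_{R_{\ell}}l(x_{2})\leq_{R_{\ell}}\ell_{0}$) for the outside case. The only cosmetic difference is that you prove $r(u)<_{R_{\ell}^{\prime}}r(z)$ for every $z\in N_{2}$ while the paper argues only for the minimizer $z_{0}$, which is equivalent.
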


\begin{proof}
Denote by $z_{0}$ the vertex of $N_{2}$, such that $r_{0}=r(z_{0})$. Let
first $z_{0}\in V(G_{0})$. Then $r(z_{0})>_{R_{0}}r(u)$ by Lemma~\ref%
{unbounded-bounded}, since $N_{2}\subseteq N(u)$, and since $R_{0}$ is a
projection representation of $G_{0}$. Thus, also $r(z_{0})>_{R_{\ell }}r(u)$%
, since $R_{0}$ is a sub-representation of $R_{\ell }$. Furthermore, $%
r_{0}=r(z_{0})>_{R_{\ell }^{\prime }}r(u)$, since the lower right endpoints $%
r(z)$ do not decrease by Transformation~\ref{trans4}. Let now $z_{0}\notin
V(G_{0})$. Then, either $r(z_{0})<_{R_{\ell }^{\prime }}b_{\ell
}-\varepsilon $ or $r(z_{0})>_{R_{\ell }^{\prime }}b_{\ell }+\varepsilon $
by Remark~\ref{trans4-remark-2}. Recall that $x_{2}\in V(G_{0})$, and thus $%
b_{\ell }-\varepsilon <_{R_{\ell }}l(x_{2})<_{R_{\ell }}b_{\ell
}+\varepsilon $ by Remark~\ref{trans4-remark-1}. Thus, since also $x_{2}\in
V_{0}(u)$, it follows by definition of $\ell _{0}$ that $b_{\ell
}-\varepsilon <_{R_{\ell }}l(x_{2})\leq _{R_{\ell }}\ell _{0}$. Therefore $%
b_{\ell }-\varepsilon <_{R_{\ell }^{\prime }}\ell _{0}<_{R_{\ell }^{\prime
}}r(z_{0})$, since $z_{0}\in N_{2}$. Thus $r(z_{0})>_{R_{\ell }^{\prime
}}b_{\ell }+\varepsilon $ by Remark~\ref{trans4-remark-2} (since $%
z_{0}\notin V(G_{0})$), i.e.~$r(z_{0})>_{R_{\ell }^{\prime }}b_{\ell
}+\varepsilon >_{R_{\ell }^{\prime }}r(u)$. Summarizing, $%
r_{0}=r(z_{0})>_{R_{\ell }^{\prime }}r(u)$ in all cases.
\end{proof}

\medskip

Define now the value $L_{0}=\min_{R_{\ell }}\{L(x)\ |\ x\in V_{B}\setminus
N(u)\setminus V_{0}(u),P_{u}\ll _{R_{\ell }}P_{x}\}$; again, $L_{0}$ is well
defined, since in particular $t\in V_{B}\setminus N(u)\setminus V_{0}(u)$
and $P_{u}\ll _{R_{\ell }}P_{t}$. Then, since by Transformation~\ref{trans4}
only some endpoints of vertices $z\in N(u)$ are moved, it follows that the
value $L_{0}$ does not change in $R_{\ell }^{\prime }$, i.e.~$%
L_{0}=\min_{R_{\ell }^{\prime }}\{L(x)\ |\ x\in V_{B}\setminus N(u)\setminus
V_{0}(u),P_{u}\ll _{R_{\ell }^{\prime }}P_{x}\}$. The following property of
the projection representation $R_{\ell }^{\prime }$ can be obtained easily
by Transformation~\ref{trans4}.

\begin{lemma}
\label{property-trans4}For all vertices $z\in N_{1}\setminus N_{2}$, for
which $R(z)<_{R_{\ell }^{\prime }}L_{0}$, the values $R(z)$ lie immediately
before $L_{0}$ in $R_{\ell }^{\prime }$.
\end{lemma}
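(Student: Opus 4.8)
The plan is to mirror the proof of Lemma~\ref{property-trans1}, replacing $R,R'$ throughout by $R_\ell,R_\ell'$ and replacing the hypothesis ``$u$ has the right border property in $R$'' by Lemma~\ref{R-ell-right-property}. First I would fix a vertex $z\in N_1\setminus N_2$. Since $z\in N_1$ we have $r(z)<_{R_\ell}\ell_0$, and since $z\notin N_2$ the endpoint $r(z)$ stays to the left of $\ell_0$ in $R_\ell'$; hence during Transformation~\ref{trans4} the branch ``$r(z)$ comes immediately after $\ell_0$ on $L_2$'' is never triggered for $z$, so instead $R(z)$ comes immediately before $L_0(z)$ on $L_1$ in $R_\ell'$.

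The heart of the argument is the claim that every $x\in V_B\setminus N(u)\setminus V_0(u)$ with $P_z\ll_{R_\ell}P_x$ also satisfies $P_u\ll_{R_\ell}P_x$. Granting this, the definitions of $L_0$ and of $L_0(z)$ yield $L_0\le L_0(z)$, and then exactly as in Lemma~\ref{property-trans1} the dichotomy follows: if $L_0=L_0(z)$ then $R(z)$ lies immediately before $L_0$ in $R_\ell'$, while if $L_0<L_0(z)$ then $R(z)>_{R_\ell'}L_0$. Thus precisely the vertices $z$ with $R(z)<_{R_\ell'}L_0$ fall into the first case, which is what the lemma asserts. To establish the claim I would first note, using Observation~\ref{V(G0)}, that $x\notin V(G_0)$ (since $x\notin N[u]\cup V_0(u)$), so $P_x$ keeps its position from $R$ in $R_\ell$; moreover $x$ is bounded, hence $x\notin Q_u$, so $x\in V\setminus Q_u\setminus N[u]\setminus V_0(u)$ and Lemma~\ref{foreigner-not-inside-1} applies, giving the two cases $P_{x_2}\ll_R P_x$ and $P_x\ll_R P_{x'}$ for every $x'\in X_1$.

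The main obstacle is that $R_\ell$ is \emph{not} a projection representation of $G$, so the direct reasoning of Lemma~\ref{property-trans1} (``$P_x$ intersects $P_u$ and $x$ bounded $\Rightarrow x\in N(u)$'') is no longer available; this is exactly why Lemma~\ref{foreigner-not-inside-1} must be invoked. In the case $P_{x_2}\ll_R P_x$ I would use that the line segment $\ell$ lies between $P_u,P_w$ and $P_{x_2}$ in $R$: since $a_\ell\le L(x_2)<_R L(x)$ and $b_\ell=r(w)<_R l(x_2)<_R l(x)$ with strictly positive gaps, and since $P_u$ lies within $\varepsilon$ of $\ell$ in $R_\ell$ by Remark~\ref{trans4-remark-1} while $P_x$ is unmoved, choosing $\varepsilon$ small enough yields $R(u)<_{R_\ell}L(x)$ and $r(u)<_{R_\ell}l(x)$, i.e.\ $P_u\ll_{R_\ell}P_x$. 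In the case $P_x\ll_R P_{x'}$ for every $x'\in X_1$ I would derive a contradiction with $P_z\ll_{R_\ell}P_x$: taking $x_1\in X_1\subseteq D_1$ (Lemma~\ref{N(w)-1}) gives $P_x\ll_R P_{x_1}\ll_R P_u$, hence $P_x\ll_R P_u$; if $z\notin V(G_0)$ then $P_z\ll_{R_\ell}P_x$ means $P_z\ll_R P_x\ll_R P_u$, contradicting $z\in N(u)$ via Lemma~\ref{unbounded-bounded}, whereas if $z\in V(G_0)$ the squeezing places $l(z)$ within $\varepsilon$ of $b_\ell$, strictly to the right of $r(x)$, again contradicting $P_z\ll_{R_\ell}P_x$. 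Therefore the second case cannot occur, the claim holds, and the lemma follows.
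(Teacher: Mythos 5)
Your proof is correct, and its skeleton coincides with the paper's: reduce everything to the claim that every $x\in V_{B}\setminus N(u)\setminus V_{0}(u)$ with $P_{z}\ll_{R_{\ell}}P_{x}$ satisfies $P_{u}\ll_{R_{\ell}}P_{x}$, deduce $L_{0}\leq L_{0}(z)$, and conclude via the same dichotomy as in Lemma~\ref{property-trans1}. Where you genuinely diverge is in how the central claim is established. The paper stays entirely inside $R_{\ell}$ and runs a trichotomy on the position of $P_{x}$ relative to $P_{u}$: the intersection case is killed by observing that $P_{x}$ would then meet the line segment $\ell$, hence meet $P_{x_{2}}$ (by the squeezing), hence $x\in N(x_{2})$ since both are bounded and $R_{\ell}\setminus\{u\}$ represents $G\setminus\{u\}$ (Lemma~\ref{R-ell}), forcing $x\in V_{0}(u)$; and the case $P_{x}\ll_{R_{\ell}}P_{u}$ is killed by Corollary~\ref{R-ell-N(u)-intersect}. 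You instead pull the analysis back to the original representation $R$: after noting (correctly) that $x$ is bounded and hence $x\notin Q_{u}$ because $Q_{u}\subseteq V_{U}$, you invoke Lemma~\ref{foreigner-not-inside-1} and dispose of the two resulting cases with explicit $\varepsilon$-estimates against the fixed gaps $L(x)-a_{\ell}$ and $b_{\ell}-r(x)>\Delta$, plus a case split on whether $z\in V(G_{0})$ handled via Lemma~\ref{unbounded-bounded} and the squeezing, respectively. Both routes are sound; the paper's is shorter, avoids quantitative estimates, and handles both subcases of $z$ at once through Corollary~\ref{R-ell-N(u)-intersect}, while yours makes the geometry of $\ell$ explicit and avoids that corollary altogether, at the cost of the extra case split and of tying the argument to the specific definition of $a_{\ell}$, $b_{\ell}$, and $\Delta$ (which is fine, since $\varepsilon$ is chosen after $R$ is fixed and the vertex set is finite, consistent with the paper's conventions). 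One small remark: your parenthetical that Lemma~\ref{foreigner-not-inside-1} \emph{must} be invoked is not quite accurate, as the paper's $x_{2}$-proxy argument shows there is a way around it.
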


\begin{proof}
Let $z\in N_{1}\setminus N_{2}$. By definition of the sets $N_{1}$ and $%
N_{2} $, it follows that $r(z)<_{R_{\ell }}\ell _{0}$ and $r(z)<_{R_{\ell
}^{\prime }}\ell _{0}$ in both $R_{\ell }$ and $R_{\ell }^{\prime }$. Thus, $%
R(z)$ comes immediately before $L_{0}(z)$ in $R_{\ell }^{\prime }$ during
Transformation~\ref{trans4}. We will now prove that $L_{0}\leq _{R_{\ell
}}L_{0}(z)$. Consider a vertex $x\in V_{B}\setminus N(u)\setminus V_{0}(u)$,
such that $P_{z}\ll _{R_{\ell }}P_{x}$, i.e.~$r(z)<_{R_{\ell }}l(x)$ and $%
R(z)<_{R_{\ell }}L(x)$. Then, in particular $x\notin V(G_{0})$, since $%
x\notin N(u)\cup V_{0}(u)$ and $V(G_{0})\subseteq N[u]\cup V_{0}(u)$ by
Observation~\ref{V(G0)}. Suppose that $P_{x}$ intersects $P_{u}$ in $R_{\ell
}$, i.e.~$P_{x}$ intersects the line segment $\ell $ in $R_{\ell }$. Then,
in particular $P_{x}$ intersects also $P_{x_{2}}$ in $R_{\ell }$, since $%
x_{2}\in V(G_{0})$, and thus $x\in N(x_{2})$, since both $x$ and $x_{2}$ are
bounded in $R_{\ell }$. Therefore $x\in V_{0}(u)$, since $x_{2}\in V_{0}(u)$
and $x\notin N(u)$, which is a contradiction. Thus, $P_{x}$ does not
intersect $P_{u}$ in $R_{\ell }$, i.e.~either $P_{x}\ll _{R_{\ell }}P_{u}$
or $P_{u}\ll _{R_{\ell }}P_{x}$. If $P_{x}\ll _{R_{\ell }}P_{u}$, then $%
P_{z}\ll _{R_{\ell }}P_{x}\ll _{R_{\ell }}P_{u}$, which is a contradiction,
since $P_{z}$ intersects $P_{u}$ in $R_{\ell }$ by Corollary~\ref%
{R-ell-N(u)-intersect}. Therefore, $P_{u}\ll _{R_{\ell }}P_{x}$. That is,
for every $x\in V_{B}\setminus N(u)\setminus V_{0}(u)$, for which $P_{z}\ll
_{R_{\ell }}P_{x}$, it follows that also $P_{u}\ll _{R_{\ell }}P_{x}$. Thus,
it follows by the definitions of $L_{0}$ and of $L_{0}(z)$ that $L_{0}\leq
_{R_{\ell }}L_{0}(z)$.

Furthermore, also $L_{0}\leq _{R_{\ell }^{\prime }}L_{0}(z)$ in $R_{\ell
}^{\prime }$, since by Transformation~\ref{trans4} only some endpoints of
vertices $z\in N(u)$ are moved. Therefore, since $R(z)$ comes immediately
before $L_{0}(z)$ in $R_{\ell }^{\prime }$ during Transformation~\ref{trans4}%
, it follows that either $R(z)$ comes immediately before $L_{0}$ in $R_{\ell
}^{\prime }$ during Transformation~\ref{trans4} (in the case where $%
L_{0}=_{R_{\ell }^{\prime }}L_{0}(z)$) or $R(z)>_{R_{\ell }^{\prime }}L_{0}$
(in the case where $L_{0}<_{R_{\ell }^{\prime }}L_{0}(z)$).
\end{proof}

\medskip

If $N_{2}=\emptyset $, then we set $R_{\ell }^{\prime \prime }=R_{\ell
}^{\prime }$; otherwise, if $N_{2}\neq \emptyset $, we construct the
projection representation $R_{\ell }^{\prime \prime }$ from $R_{\ell
}^{\prime }$ as follows.

\begin{transformation}
\label{trans5}For every $v\in V_{0}(u)\cap V_{B}$, such that $r(v)>_{R_{\ell
}^{\prime }}r_{0}$, we move the right line of $P_{v}$ in $R_{\ell }^{\prime
} $ to the left, such that $r(v)$ comes immediately before $r_{0}$ in $L_{2}$%
. Denote the resulting projection representation by $R_{\ell }^{\prime
\prime } $.
\end{transformation}

Since by Transformation~\ref{trans5} only some endpoints of vertices $v\in
V_{0}(u)\cap V_{B}$ are moved, it follows that the value $L_{0}$ does not
change in $R_{\ell }^{\prime \prime }$, i.e.~$L_{0}=\min_{R_{\ell }^{\prime
\prime }}\{L(x)\ |\ x\in V_{B}\setminus N(u)\setminus V_{0}(u),P_{u}\ll
_{R_{\ell }^{\prime \prime }}P_{x}\}$. The next property of the projection
representation $R_{\ell }^{\prime \prime }$ follows by Lemma~\ref%
{property-trans4}.

\begin{corollary}
\label{property-trans5}For all vertices $z\in N_{1}\setminus N_{2}$, for
which $R(z)<_{R_{\ell }^{\prime \prime }}L_{0}$, the values $R(z)$ lie
immediately before $L_{0}$ in $R_{\ell }^{\prime \prime }$.
\end{corollary}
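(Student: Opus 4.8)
The plan is to prove Corollary~\ref{property-trans5} by mirroring, step for step, the proof of Corollary~\ref{property-trans2} in Theorem~\ref{right-property-thm}, with the representations $R_\ell$, $R_\ell^\prime$, $R_\ell^{\prime\prime}$ playing the roles of $R$, $R^\prime$, $R^{\prime\prime}$, with Transformation~\ref{trans5} in place of Transformation~\ref{trans2}, and with Lemma~\ref{property-trans4} in place of Lemma~\ref{property-trans1}. First I would fix the vertex $x_0 \in V_B \setminus N(u) \setminus V_0(u)$ for which $L_0 = L(x_0)$, noting that (by Observation~\ref{V(G0)}, since $x_0 \notin N[u] \cup V_0(u)$) the vertex $x_0$ does not belong to $V(G_0)$, so its endpoints are the genuine ones inherited from $R$. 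By Lemma~\ref{property-trans4} the desired \emph{immediately before} property already holds in $R_\ell^\prime$, and by construction Transformation~\ref{trans5} leaves $L_0$ unchanged and moves no line of any neighbour $z \in N(u)$. Hence it suffices to show that no endpoint slips into the (currently empty) gap just left of $L_0$; concretely, that there is no pair $v \in V_0(u) \cap V_B$ and $z \in N_1 \setminus N_2$ with $R(z) <_{R_\ell^{\prime\prime}} R(v) <_{R_\ell^{\prime\prime}} L_0$.

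Assuming such $v,z$ exist, I would argue as follows. Since $R(z)$ and $L_0 = L(x_0)$ occupy the same positions in $R_\ell^\prime$ and in $R_\ell^{\prime\prime}$, and since $R(z)$ lies immediately before $L_0$ in $R_\ell^\prime$ by Lemma~\ref{property-trans4}, the endpoint $R(v)$ can only have entered the gap $(R(z), L_0)$ by being moved; as Transformation~\ref{trans5} moves only the right lines of vertices of $V_0(u) \cap V_B$, and only to the left, this forces $L(x_0) = L_0 <_{R_\ell^\prime} R(v)$. On the other hand the left lines are untouched by Transformation~\ref{trans5}, so from $R(v) <_{R_\ell^{\prime\prime}} L(x_0)$ we obtain $L(v) <_{R_\ell^{\prime\prime}} R(v) <_{R_\ell^{\prime\prime}} L(x_0)$, whence $L(v) <_{R_\ell^\prime} L(x_0)$. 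Combining the two, $L(v) <_{R_\ell^\prime} L(x_0) <_{R_\ell^\prime} R(v)$ on $L_1$: the upper-left endpoint of $x_0$ lies strictly between the two upper endpoints of $v$. This overlap rules out $P_v \ll_{R_\ell^\prime} P_{x_0}$ (which needs $R(v) <_{R_\ell^\prime} L(x_0)$) and, since $L(v) <_{R_\ell^\prime} L(x_0) <_{R_\ell^\prime} R(x_0)$, also rules out $P_{x_0} \ll_{R_\ell^\prime} P_v$ (which needs $R(x_0) <_{R_\ell^\prime} L(v)$); therefore $P_v$ intersects $P_{x_0}$ in $R_\ell^\prime$.

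Finally, since $v$ and $x_0$ are both bounded and distinct from $u$, and since $R_\ell^\prime \setminus \{u\}$ is a projection representation of $G \setminus \{u\}$ by Lemma~\ref{R'-ell}, this intersection yields $vx_0 \in E$. As $v \in V_0(u)$ and $x_0 \in V \setminus N[u]$ with $vx_0 \in E$, the vertex $x_0$ lies in the same connected component of $G \setminus N[u]$ as $v$, and this component contains a covering vertex of $u$; hence $x_0 \in V_0(u)$, contradicting $x_0 \in V_B \setminus N(u) \setminus V_0(u)$. This contradiction establishes that no such pair $v,z$ exists, and the corollary follows.

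The main obstacle, and the only place where care beyond transcribing the earlier argument is needed, is the forward dependence on Lemma~\ref{R'-ell}: I must invoke the equivalence \emph{intersection of two bounded parallelograms} $\Leftrightarrow$ \emph{adjacency} only for vertices of $G \setminus \{u\}$, so that it is $R_\ell^\prime \setminus \{u\}$ (rather than $R_\ell^\prime$ itself, which need not represent $G$) that is used, and I must ensure that the proof of Lemma~\ref{R'-ell} does not in turn appeal to Corollary~\ref{property-trans5}, so that no circularity arises. A secondary point to verify is that the moved/unmoved dichotomy is exhaustive: an \emph{unmoved} $v$ landing in the gap would contradict Lemma~\ref{property-trans4} directly (its endpoints are identical in $R_\ell^\prime$ and $R_\ell^{\prime\prime}$), so only the moved case requires the intersection argument above.
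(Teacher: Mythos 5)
Your proposal is correct and follows essentially the same route as the paper's own proof: fix $x_{0}$ with $L_{0}=L(x_{0})$, reduce to the non-existence of a pair $v\in V_{0}(u)\cap V_{B}$, $z\in N_{1}\setminus N_{2}$ with $R(z)<_{R_{\ell }^{\prime \prime }}R(v)<_{R_{\ell }^{\prime \prime }}L_{0}$, push the configuration back to $R_{\ell }^{\prime }$ using the fact that Transformation~5 only moves right lines of $V_{0}(u)\cap V_{B}$-vertices leftwards, deduce that $P_{v}$ intersects $P_{x_{0}}$, conclude $vx_{0}\in E$ and hence $x_{0}\in V_{0}(u)$, a contradiction. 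Your additional care --- spelling out why both orderings $P_{v}\ll P_{x_{0}}$ and $P_{x_{0}}\ll P_{v}$ are excluded, noting that adjacency-from-intersection rests on the (non-circular) forward reference to Lemma~\ref{R'-ell} applied to $R_{\ell }^{\prime }\setminus \{u\}$, and the moved/unmoved dichotomy --- only makes explicit steps the paper leaves implicit.
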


\begin{proof}
Let $x_{0}$ be the vertex of $V_{B}\setminus N(u)\setminus V_{0}(u)$, such
that $L_{0}=L(x_{0})$. Recall by Lemma~\ref{property-trans4} that for all
vertices $z\in N_{1}\setminus N_{2}$, for which $R(z)<_{R_{\ell }^{\prime
}}L_{0}$, the values $R(z)$ lie immediately before~$L_{0}$ in $R_{\ell
}^{\prime }$. Furthermore, note that the parallelograms of all neighbors $%
z\in N(u)$ of $u$ do not move by Transformation~\ref{trans5}. Therefore,
since also the value $L_{0}$ is the same in both $R_{\ell }^{\prime }$ and~$%
R_{\ell }^{\prime \prime }$, it suffices to prove that there do not exist
vertices $v\in V_{0}(u)\cap V_{B}$ and $z\in N_{1}\setminus N_{2}$, such
that $R(z)<_{R_{\ell }^{\prime \prime }}R(v)<_{R_{\ell }^{\prime \prime
}}L_{0}$ in $R_{\ell }^{\prime \prime }$. Suppose otherwise that $%
R(z)<_{R_{\ell }^{\prime \prime }}R(v)<_{R_{\ell }^{\prime \prime
}}L_{0}=L(x_{0})$ for two vertices $v\in V_{0}(u)\cap V_{B}$ and $z\in
N_{1}\setminus N_{2}$. Thus, since only the right lines of some
parallelograms $P_{v}$, where $v\in V_{0}(u)\cap V_{B}$, are moved to the
left by Transformation~\ref{trans5}, it follows that $R(z)<_{R_{\ell
}^{\prime }}L_{0}=L(x_{0})<_{R_{\ell }^{\prime }}R(v)$ in $R_{\ell }^{\prime
}$. Therefore, in particular $P_{v}$ intersects $P_{x_{0}}$ in $R_{\ell
}^{\prime }$, and thus $v\in N(x_{0})$, since both $v$ and $x_{0}$ are
bounded. Thus $x_{0}\in V_{0}(u)$, since also $v\in V_{0}(u)$. This is a
contradiction, since $x_{0}\in V_{B}\setminus N(u)\setminus V_{0}(u)$. This
completes the proof.
\end{proof}

\medskip

We construct now the projection representation $R_{\ell }^{\prime \prime
\prime }$ from $R_{\ell }^{\prime \prime }$ as follows.

\begin{transformation}
\label{trans6}Move the line $P_{u}$ in $R_{\ell }^{\prime \prime }$, such
that its upper endpoint $L(u)=R(u)$ comes immediately before $\min_{R_{\ell
}^{\prime \prime }}\{L_{0},R(z)\ |\ z\in N_{1}\setminus N_{2}\}$ and its
lower endpoint $l(u)=r(u)$ comes immediately after $\max_{R_{\ell }^{\prime
\prime }}\{r(v)\ |\ v\in V_{0}(u)\cap V_{B}\}$. Finally, make $u$ a bounded
vertex. Denote the resulting projection representation by $R_{\ell }^{\prime
\prime \prime }$.
\end{transformation}

Note by the statement of Transformation~\ref{trans6} that $R_{\ell }^{\prime
\prime \prime }$ is a projection representation with $k-1$ unbounded
vertices, since $u$ is a bounded vertex in $R_{\ell }^{\prime \prime \prime
} $.

\subsection*{Properties of $R_{\ell }^{\prime }$, 
$R_{\ell }^{\prime \prime }$, and $R_{\ell }^{\prime \prime \prime }$}

In the following (in Lemmas~\ref{R'-ell},~\ref{R''-ell}), we prove that the
projection representations $R_{\ell }^{\prime }\setminus \{u\}$ and $R_{\ell
}^{\prime \prime }\setminus \{u\}$ (constructed by Transformations~\ref%
{trans4} and~\ref{trans5}, respectively) are both projection representations
of $G\setminus \{u\}$. Furthermore, we prove in Lemma~\ref{R'''-ell} that $%
R_{\ell }^{\prime \prime \prime }$ is a projection representation of $G$;
that is, $R^{\ast }=R_{\ell }^{\prime \prime \prime }$ is a projection
representation of $G$ with $k-1$ unbounded vertices, as Theorem~\ref%
{no-property-thm} states.

\begin{lemma}
\label{R'-ell}$R_{\ell }^{\prime }\setminus \{u\}$ is a projection
representation of $G\setminus \{u\}$.
\end{lemma}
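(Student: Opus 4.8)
The plan is to mirror the structure of Lemma~\ref{R'} (the analogue for $R^{\prime}$ in the proof of Theorem~\ref{right-property-thm}), adapting it to the subtlety that $R_{\ell}$ is only a projection representation of $G\setminus\{u\}$, not of $G$. Transformation~\ref{trans4} only moves the \emph{right} lines of parallelograms $P_{z}$, $z\in N_{1}$, to the right, i.e.\ it only enlarges these parallelograms. Hence no adjacency of $R_{\ell}\setminus\{u\}$ can be destroyed, and I only need to show that no \emph{new} adjacency $zv$ (with $z\in N_{1}$, $v\neq u$) is created. Suppose for contradiction such a $zv$ appears. Since all movements are parallel (every slope $\phi$ is preserved), this forces $P_{z}\ll_{R_{\ell}}P_{v}$ while $P_{z}$ intersects $P_{v}$ in $R_{\ell}^{\prime}$.

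First I would establish the key structural facts about $v$ in $R_{\ell}$, exactly as in Lemma~\ref{R'}. Because $u$ has the right border property in $R_{\ell}$ by Lemma~\ref{R-ell-right-property}, from $P_{z}\ll_{R_{\ell}}P_{v}$ with $z\in N(u)$ I get $v\notin V_{0}(u)$. Letting $x_{0}\in V_{0}(u)$ be the vertex realizing $\ell_{0}=l(x_{0})$, the definition of $N_{1}$ gives $r(z)<_{R_{\ell}}\ell_{0}=l(x_{0})$; combined with the right border property this yields that $P_{z}$ intersects $P_{x_{0}}$ in $R_{\ell}$, so $L(x_{0})<_{R_{\ell}}R(z)$. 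I would then argue, using Lemma~\ref{unbounded-bounded} ($r(u)<_{R_{\ell}}r(z)$, $L(z)<_{R_{\ell}}L(u)$) together with Lemma~\ref{intersecting-unbounded} and the connectedness of $V_{0}(u)$ (from Lemma~\ref{two-components}, via the bounded covering vertex $u^{\ast}$ of Observation~\ref{bounded-hovering-obs}), that $P_{u}\ll_{R_{\ell}}P_{v}$, hence $v\notin N(u)$, i.e.\ $v\in V_{B}\setminus N(u)\setminus V_{0}(u)$.

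Next I would split into the two cases according to the relative position of $r(z)$ and $l(v)$ after the transformation, following Lemma~\ref{R'} verbatim in spirit. If $r(z)<_{R_{\ell}^{\prime}}l(v)$, then the intersection forces $L(v)<_{R_{\ell}^{\prime}}R(z)$ and $\phi_{v}>\phi_{z}$, so $v$ is bounded; but since $v\in V_{B}\setminus N(u)\setminus V_{0}(u)$ and $P_{z}\ll_{R_{\ell}}P_{v}$, the definition of $L_{0}(z)$ gives $L_{0}(z)\leq_{R_{\ell}}L(v)$, and since $R(z)<_{R_{\ell}^{\prime}}L_{0}(z)$ by Transformation~\ref{trans4}, I reach $R(z)<_{R_{\ell}^{\prime}}L(v)$, contradicting $L(v)<_{R_{\ell}^{\prime}}R(z)$. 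If instead $l(v)<_{R_{\ell}^{\prime}}r(z)$, then since $r(z)$ comes at most immediately after $\ell_{0}=l(x_{0})$, I get $l(v)<_{R_{\ell}}l(x_{0})$; this places $P_{x_{0}}$ so that it intersects $P_{v}$ with $\phi_{x_{0}}>\phi_{v}$, and the connectedness of $V_{0}(u)$ together with Lemma~\ref{intersecting-unbounded} forces $v$ to acquire a neighbor inside $V_{0}(u)$, hence $v\in V_{0}(u)$ — a contradiction.

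The main obstacle, and the only genuine novelty relative to Lemma~\ref{R'}, is bookkeeping the $\varepsilon$-squeezed block $R_{0}$ of $G_{0}$: I must be certain that the vertices $v$ and $z$ under consideration behave as in the original argument even though $P_{u}$ now sits inside the squeezed region rather than in its $R$-position. Here Lemma~\ref{module-2} is decisive, since it pins down $N(v)\setminus V(G_{0})$ for every $v\in V(G_{0})\setminus\{u\}$, and Remarks~\ref{trans4-remark-1} and~\ref{trans4-remark-2} guarantee that the endpoints of vertices outside $V(G_{0})$ stay clear of the interval $(b_{\ell}-\varepsilon,b_{\ell}+\varepsilon)$. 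I would therefore handle separately the subcase where $z\in V(G_{0})$ (using that $R_{0}$ is already a genuine projection representation of $G_{0}$, and that $u$ has the right border property in $R_{0}$) and the subcase $z\notin V(G_{0})$ (where the endpoints of $P_{z}$ are unchanged from $R_{\ell}$), in each case reducing to the dichotomy above. Concluding, no new adjacency $zv$ arises, so $R_{\ell}^{\prime}\setminus\{u\}$ is a projection representation of $G\setminus\{u\}$.
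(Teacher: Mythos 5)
There is a genuine gap, and it sits exactly at the step you dispatch in one sentence: your claim that one can argue, as in Lemma~\ref{R'}, that $P_{u}\ll_{R_{\ell}}P_{v}$ and \emph{hence} $v\notin N(u)$, so that every offending vertex $v$ lies in $V_{B}\setminus N(u)\setminus V_{0}(u)$. In Lemma~\ref{R'} that inference was legitimate because $R$ is a projection representation of $G$ itself, so $P_{u}\ll_{R}P_{v}$ really does imply $uv\notin E$. In the present setting it is not: by Notation~\ref{Notation-N(u)}, $N(u)$ means $N_{G}(u)$, and $R_{\ell}$ is only a projection representation of $G\setminus\{u\}$ --- the line $P_{u}$ has been relocated into the $\varepsilon$-squeezed region around $\ell$, so its position in $R_{\ell}$ carries no information about $N_{G}(u)$, and its old position in $R$ is no longer where $u$ sits. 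Consequently the case $v\in N(u)$ cannot be excluded upfront, and it is precisely this case that consumes the bulk of the paper's proof: the paper splits it into $z\notin V(G_{0})$ (Case~2a, resolved via Corollary~\ref{R-ell-N(u)-intersect} and showing $x_{0}\notin V(G_{0})$, then deriving $x_{0}\in N(u)$, a contradiction with $x_{0}\in V_{0}(u)$) and $z\in V(G_{0})$ (Case~2b), where after showing $v\in V(G_{0})$ and $x_{0}\notin V(G_{0})$ one locates $x_{0}$ in $\bigcup_{i\geq 1}H_{i}$ via Lemma~\ref{module-1}, Observation~\ref{V(B1)} and Lemma~\ref{tilde-neighbors}, and then runs an induction along the $H_{i}$-chain $(v_{0},\ldots,v_{i})$ of $x_{0}=v_{i}$ --- using Lemmas~\ref{alternating-bounded-chain-1},~\ref{alternating-bounded-chain},~\ref{module-2} and Remarks~\ref{trans4-remark-1},~\ref{trans4-remark-2} --- to force $v_{0}\in N(u)\cap V_{0}(u)$, a contradiction. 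None of this machinery appears in your proposal; your closing remark about ``handling separately the subcases $z\in V(G_{0})$ and $z\notin V(G_{0})$, in each case reducing to the dichotomy above'' cannot work, because the dichotomy you set up lives entirely inside the case $v\notin N(u)$.

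Your Case $v\notin N(u)$ is essentially sound and matches the paper's Case~1, with one smaller caveat: the assertion that ``$r(z)$ comes at most immediately after $\ell_{0}=l(x_{0})$'' is only guaranteed when $z$ and $x_{0}$ are on compatible sides of the squeezed block. When $z\notin V(G_{0})$ but $x_{0}\in V(G_{0})$, Remark~\ref{trans4-remark-2} places $r(z)$ immediately after $b_{\ell}+\varepsilon$ rather than after $\ell_{0}$, and the paper proves the needed inequality $l(v)<_{R_{\ell}}l(x_{0})$ by a four-way case analysis on $x_{0}\in V(G_{0})$ versus $z\in V(G_{0})$. You gesture at Remarks~\ref{trans4-remark-1} and~\ref{trans4-remark-2} but do not carry out this bookkeeping. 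That part is repairable along the lines you sketch; the missing treatment of $v\in N(u)$ is not, without importing the paper's $H_{i}$-chain argument.
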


\begin{proof}
Denote by $x_{0}$ the vertex of $V_{0}(u)$, such that $\ell _{0}=l(x_{0})$.
Since we move the right line of some parallelograms to the right, i.e.~we
increase some parallelograms, all adjacencies of $R_{\ell }$ are kept in $%
R_{\ell }^{\prime }$. Suppose that $R_{\ell }^{\prime }$ has the new
adjacency $zv$ that is not an adjacency in $R_{\ell }$, for some $z\in N_{1}$%
. Therefore, since perform parallel movements of lines, i.e.~since every
slope $\phi _{x}$ in $R_{\ell }^{\prime }$ equals the value of $\phi _{x}$
in $R_{\ell }$ for every vertex $x$ of $G$, it follows that $P_{z}\ll
_{R_{\ell }}P_{v}$ and $P_{z}$ intersects $P_{v}$ in $R_{\ell }^{\prime }$.
Thus, $v\notin V_{0}(u)$, since $u$ has the right border property in $%
R_{\ell }$ by Lemma~\ref{R-ell-right-property}. Furthermore, $r(z)<_{R_{\ell
}}\ell _{0}=l(x_{0})$, since $z\in N_{1}$. However, since $x_{0}\in V_{0}(u)$%
, and since $u$ has the right border property in $R_{\ell }$, it follows
that $P_{z}$ intersects $P_{x_{0}}$ in $R_{\ell }$, and thus $%
L(x_{0})<_{R_{\ell }}R(z)$. We distinguish in the following the cases where $%
v\notin N(u)$ and $v\in N(u)$.

\emph{Case 1.} $v\notin N(u)$. Then, since also $v\notin V_{0}(u)$, it
follows by Observation~\ref{V(G0)} that $v\notin V(G_{0})$. We will derive a
contradiction to the assumption that $R_{\ell }^{\prime }$ has the new
adjacency $zv$ that is not an adjacency in $R_{\ell }$, for some $z\in N_{1}$%
. Recall that every slope $\phi _{x}$ in $R_{\ell }^{\prime }$ equals the
value of $\phi _{x}$ in $R_{\ell }$ for every vertex $x$ of $G$. Suppose
first that $r(z)<_{R_{\ell }^{\prime }}l(v)$. Then, since $P_{z}$ intersects 
$P_{v}$ in $R_{\ell }^{\prime }$, it follows that $L(v)<_{R_{\ell }^{\prime
}}R(z)$, and thus $\phi _{v}>\phi _{z}$ in $R_{\ell }^{\prime }$. If $v$ is
unbounded, then $z$ is not adjacent to $v$ in $R_{\ell }^{\prime }$, which
is a contradiction to the assumption. Thus $v$ is bounded, i.e.~$v\in
V_{B}\setminus N(u)$ and $P_{z}\ll _{R_{\ell }}P_{v}$, and thus $%
L_{0}(z)\leq _{R_{\ell }}L(v)$ by definition of $L_{0}(z)$. Furthermore,
since all left lines of the parallelograms in $R_{\ell }$ do not move during
Transformation~\ref{trans4}, it follows that also $L_{0}(z)\leq _{R_{\ell
}^{\prime }}L(v)$. Thus, $R(z)<_{R_{\ell }^{\prime }}L_{0}(z)\leq _{R_{\ell
}^{\prime }}L(v)$ by the statement of Transformation~\ref{trans4}, which is
a contradiction, since $L(v)<_{R_{\ell }^{\prime }}R(z)$.

Suppose now that $l(v)<_{R_{\ell }^{\prime }}r(z)$. We will first prove that
in this case $l(v)<_{R_{\ell }}l(x_{0})$. Suppose otherwise that $%
l(x_{0})<_{R_{\ell }}l(v)$. Let $x_{0}\notin V(G_{0})$. Then, since $r(z)$
comes in $R_{\ell }^{\prime }$ at most immediately after $\ell _{0}=l(x_{0})$
on $L_{2}$, it follows that $l(x_{0})<_{R_{\ell }^{\prime }}r(z)<_{R_{\ell
}^{\prime }}l(v)$. This is a contradiction to the assumption that $%
l(v)<_{R_{\ell }^{\prime }}r(z)$. Let $x_{0}\in V(G_{0})$. Then, $b_{\ell
}-\varepsilon <_{R_{\ell }}l(x_{0})<_{R_{\ell }}b_{\ell }+\varepsilon $ by
Remark~\ref{trans4-remark-1}. Furthermore, since $v\notin V(G_{0})$, and
since we assumed that $l(x_{0})<_{R_{\ell }}l(v)$, it follows that $%
l(x_{0})<_{R_{\ell }}b_{\ell }+\varepsilon <_{R_{\ell }}l(v)$ by Remark~\ref%
{trans4-remark-1}. If $z\in V(G_{0})$, then $r(z)$ comes in $R_{\ell
}^{\prime }$ (due to the statement of Transformation~\ref{trans4}) at most
immediately after $\ell _{0}=l(x_{0})$ on $L_{2}$, and thus in this case $%
l(x_{0})<_{R_{\ell }^{\prime }}r(z)<_{R_{\ell }^{\prime }}b_{\ell
}+\varepsilon <_{R_{\ell }^{\prime }}l(v)$. This is a contradiction to the
assumption that $l(v)<_{R_{\ell }^{\prime }}r(z)$. Otherwise, if $z\notin
V(G_{0})$, then $r(z)$ comes in $R_{\ell }^{\prime }$ (due to Remark~\ref%
{trans4-remark-2}) immediately after $b_{\ell }+\varepsilon $ on $L_{2}$,
and thus in this case $l(x_{0})<_{R_{\ell }^{\prime }}b_{\ell }+\varepsilon
<_{R_{\ell }^{\prime }}r(z)<_{R_{\ell }^{\prime }}l(v)$. This is again a
contradiction to the assumption that $l(v)<_{R_{\ell }^{\prime }}r(z)$.
Therefore $l(v)<_{R_{\ell }}l(x_{0})$.

Recall that $L(x_{0})<_{R_{\ell }}R(z)$, and thus also $L(x_{0})<_{R_{\ell
}}R(z)<_{R_{\ell }}L(v)$, since $P_{z}\ll _{R_{\ell }}P_{v}$. Therefore,
since also $l(v)<_{R_{\ell }}l(x_{0})$ by the previous paragraph, it follows
that $P_{x_{0}}$ intersects $P_{v}$ in $R_{\ell }$ and $\phi _{x_{0}}>\phi
_{v}$ in $R_{\ell }$. If $x_{0}$ is bounded, then $x_{0}v\in E$, and thus $%
v\in V_{0}(u)$, since $x_{0}\in V_{0}(u)$ and $v\notin N(u)$, which is a
contradiction. Therefore, $x_{0}$ is unbounded, and thus $x_{0}v\notin E$.
Therefore, $N(x_{0})\subseteq N(v)$ by Lemma~\ref{intersecting-unbounded}.
Recall now that there exists a bounded covering vertex $u^{\ast }$ of $u$ in 
$G$, and thus $u^{\ast },x_{0}\in V_{0}(u)$. Furthermore, $u^{\ast }\neq
x_{0}$, since $x_{0}$ is unbounded. Therefore, since $V_{0}(u)$ is connected
with at least two vertices, $x_{0}$ is adjacent to at least one other vertex 
$y\in V_{0}(u)$, and thus $y\in N(v)$, since $N(x_{0})\subseteq N(v)$. Thus $%
v\in V_{0}(u)$, since $v\notin N(u)$, which is again a contradiction.
Summarizing, $R_{\ell }^{\prime }$ has no new adjacency $zv$ that is not an
adjacency in $R_{\ell }$, for any $v\notin N(u)$ and any $z\in N_{1}$.

\emph{Case 2.} $v\in N(u)$. We distinguish in the following the cases where $%
z\notin V(G_{0})$ and $z\in V(G_{0})$.

\emph{Case 2a.} $z\notin V(G_{0})$. Since $z\in N(u)$, it follows that $%
P_{z} $ intersects $P_{u}$ in $R_{\ell }$ by Corollary~\ref%
{R-ell-N(u)-intersect}, and thus $P_{z}$ intersects the line segment $\ell $
in $R_{\ell }$. If $v\in V(G_{0})$, then $P_{z}$ intersects $P_{v}$ in $%
R_{\ell }$ (since $v\in N(u)$), which is a contradiction. Thus, $v\notin
V(G_{0})$. Therefore, since both $z,v\notin V(G_{0})$, and since $P_{z}\ll
_{R_{\ell }}P_{v}$, it follows that also $P_{z}\ll _{R}P_{v}$. Therefore,
since $v\in N(u)$, it follows that $R(z)<_{R}L(v)<_{R}a_{u}=_{R}L(u)$ by
Lemma~\ref{unbounded-bounded}, and thus $L(x_{0})<_{R_{\ell }}R(z)<_{R_{\ell
}}L(v)<_{R_{\ell }}a_{u}$, since the endpoints of $P_{z}$ and $P_{v}$ remain
the same in both $R$ and $R_{\ell }$. Therefore $x_{0}\notin V(G_{0})$,
since otherwise $L(x_{0})>_{R_{\ell }}a_{\ell }-\varepsilon >_{R_{\ell
}}a_{u}$ (by definition of the line segment $\ell $). Thus, also $%
L(x_{0})<_{R}R(z)<_{R}L(v)<_{R}a_{u}$. Furthermore $%
b_{u}=_{R}r(u)<_{R}r(z)<_{R}\ell _{0}=l(x_{0})$ due to Lemma~\ref%
{unbounded-bounded}, since $z\in N_{1}$. Then, $P_{x_{0}}$ intersects $P_{u}$
in $R$ and $\phi _{x_{0}}>\phi _{u}$, since $L(x_{0})<_{R}a_{u}$ and $%
b_{u}<_{R}l(x_{0})$. If $x_{0}\notin N(u)$, then $N(x_{0})\subseteq N(u)$ by
Lemma~\ref{intersecting-unbounded}, and thus $x_{0}\in Q_{u}$. This is a
contradiction by Lemma~\ref{Qu-1}, since $x_{0}\in V_{0}(u)$ by assumption.
Thus $x_{0}\in N(u)$, which is again a contradiction, since $x_{0}\in
V_{0}(u)$.

\emph{Case 2b.} $z\in V(G_{0})$. Then, note that $r(u)<_{R_{0}}r(z)$ by
Lemma~\ref{unbounded-bounded}, and thus also $b_{u}<_{R_{\ell
}}r(u)<_{R_{\ell }}r(z)$, since $R_{0}$ is a projection representation of $%
G_{0}$ (and a sub-representation of $R_{\ell }$). Suppose that $v\notin
V(G_{0})$. Then, since we assumed that $v\in N(u)$, it follows by Corollary %
\ref{R-ell-N(u)-intersect} that $P_{v}$ intersects $P_{u}$ in $R_{\ell }$.
That is, $P_{v}$ intersects the line segment $\ell $ in $R_{\ell }$, and
thus $P_{v}$ intersects $P_{z}$ in $R_{\ell }$, which is a contradiction,
since $P_{z}\ll _{R_{\ell }}P_{v}$. Therefore, $v\in V(G_{0})$.

Consider the projection representation $R_{0}$ of $G_{0}$ (which is a
sub-representation of $R_{\ell }$) and suppose that $x_{0}\in V(G_{0})$.
Then, $r(u)<_{R_{0}}r(z)<_{R_{0}}\ell _{0}=l(x_{0})$ and $%
L(z)<_{R_{0}}L(u)=R(u)$ by Lemma~\ref{unbounded-bounded}. If $%
L(x_{0})<_{R_{0}}R(u)$, then $P_{u}$ intersects $P_{x_{0}}$ in $R_{0}$ and $%
\phi _{x_{0}}>\phi _{u}$ in $R_{0}$. Thus, since $x_{0}\in V(G_{0})\setminus
\{u\}$ and every vertex of $G_{0}\setminus \{u\}$ is bounded by Lemma~\ref%
{N-H-C2-Cu-bounded}, it follows that $x_{0}\in N(u)$. This is a
contradiction, since $x_{0}\in V_{0}(u)$ by definition of $x_{0}$. Therefore 
$R(u)<_{R_{0}}L(x_{0})$. Recall now that $L(x_{0})<_{R_{\ell }}R(z)$ and $%
P_{z}\ll _{R_{\ell }}P_{v}$; thus, also $L(x_{0})<_{R_{0}}R(z)$ and $%
P_{z}\ll _{R_{0}}P_{v}$, since $R_{0}$ is a sub-representation of $R_{\ell }$%
. Therefore, $R(u)<_{R_{0}}L(x_{0})<_{R_{0}}R(z)<_{R_{0}}L(v)$ and $%
r(u)<_{R_{0}}r(z)<_{R_{0}}l(v)$. That is, $R(u)<_{R_{0}}L(v)$ and $%
r(u)<_{R_{0}}l(v)$, i.e.~$P_{u}\ll _{R_{0}}P_{v}$, and thus $v\notin N(u)$,
which is a contradiction to the assumption of Case 2. Therefore, $%
x_{0}\notin V(G_{0})$.

Since $x_{0}\notin V(G_{0})$, i.e.~the endpoints of $P_{x_{0}}$ remain the
same in both $R$ and $R_{\ell }$, and since $b_{u}<_{R_{\ell
}}r(z)<_{R_{\ell }}\ell _{0}=l(x_{0})$, it follows that also $%
b_{u}<_{R}l(x_{0})$. Suppose that $L(x_{0})<_{R}a_{u}$. Then, $P_{x_{0}}$
intersects $P_{u}$ in $R$ and $\phi _{x_{0}}>\phi _{u}$. Thus, $x_{0}$ is
unbounded, since otherwise $x_{0}\in N(u)$, which is a contradiction.
Furthermore, $N(x_{0})\subseteq N(u)$ by Lemma~\ref{intersecting-unbounded},
and thus $x_{0}\in Q_{u}$, which is a contradiction by Lemma~\ref{Qu-1},
since $x_{0}\in V_{0}(u)$ by assumption. Therefore $a_{u}<_{R}L(x_{0})$,
i.e.~$P_{u}\ll _{R}P_{x_{0}}$, since also $b_{u}<_{R}l(x_{0})$. Thus $%
x_{0}\in D_{2}\subseteq S_{2}$, since $x_{0}\in V_{0}(u)$. Furthermore~$%
x_{0}\notin N[X_{1}]$, since $P_{x}\ll _{R}P_{u}\ll _{R}P_{x_{0}}$ for every 
$x\in X_{1}$. Moreover, $x_{0}\notin Q_{u}$ by Lemma~\ref{Qu-1} and $%
x_{0}\notin V({\mathcal{B}_{1}})$ by definition of ${\mathcal{B}_{1}}$,
since $x_{0}\in V_{0}(u)$. Recall now by Lemma~\ref{module-1} that $V({%
C_{u}\cup C_{2}\cup H)}$ induces a subgraph of ${G\setminus Q_{u}\setminus
N[X_{1}]\setminus \mathcal{B}_{1}}$ that includes all connected components
of ${G\setminus Q_{u}\setminus N[X_{1}]\setminus \mathcal{B}_{1}}$, in which
the vertices of~${S_{2}\cup \{u\}}$ belong. Therefore, since $x_{0}\in S_{2}$
and $x_{0}\notin Q_{u}\cup N[X_{1}]\cup V({\mathcal{B}_{1}})$, it follows
that $x_{0}\in V({C_{u}\cup C_{2}\cup H)}$. Thus $x_{0}\in
\bigcup\nolimits_{i=1}^{\infty }H_{i}\cup \bigcup\nolimits_{i=0}^{\infty
}H_{i}^{\prime }$, since otherwise $x_{0}\in V(G_{0})$, which is a
contradiction. If $x_{0}\in \bigcup\nolimits_{i=0}^{\infty }H_{i}^{\prime }$%
, then $x_{0}\in N(u)$ by Lemma~\ref{tilde-neighbors}, which is a
contradiction, since $x_{0}\in V_{0}(u)$. Therefore $x_{0}\in
\bigcup\nolimits_{i=1}^{\infty }H_{i}$.

Let $x_{0}=v_{i}\in H_{i}$, for some $i\geq 1$, and let $(v_{0},v_{1},\ldots
,v_{i})$ be an $H_{i}$-chain of $v_{i}$. Note that $v_{j}\in N(u)\cup
V_{0}(u)$ for every vertex $v_{j}$, where $0\leq j\leq i$; indeed, if $%
v_{j}\notin N(u)$, then $v_{j}\in V_{0}(u)$, since $x_{2}\in V_{0}(u)$ and $%
v_{j}\in N(x_{2})$ by definition of $H$. Furthermore, recall that every
vertex $v_{j}$, where $0\leq j\leq i$, is a bounded vertex by Lemma~\ref%
{N-H-C2-Cu-bounded}. Therefore, since $v_{i}v_{i-1}\notin E$, it follows
that $P_{v_{i}}$ does not intersect $P_{v_{i-1}}$ in $R_{\ell }$, i.e.
either $P_{v_{i}}\ll _{R_{\ell }}P_{v_{i-1}}$ or $P_{v_{i-1}}\ll _{R_{\ell
}}P_{v_{i}}$. Moreover, either $P_{v_{j}}\ll _{R_{\ell }}P_{v_{j-1}}$ or $%
P_{v_{j-1}}\ll _{R_{\ell }}P_{v_{j}}$ for every $j\in \{1,2,\ldots ,i-1\}$
by Lemma~\ref{alternating-bounded-chain}. Thus, either $P_{v_{j-1}}\ll
_{R_{\ell }}P_{v_{j}}$ or $P_{v_{j}}\ll _{R_{\ell }}P_{v_{j-1}}$ for every $%
j\in \{1,2,\ldots ,i\}$.

We will prove by induction on $j$ that $v_{j}\in V_{0}(u)$, $b_{\ell
}-\varepsilon <_{R_{\ell }}r(v_{j})$, and $L(v_{j})<_{R_{\ell }}a_{\ell
}-\varepsilon $, for every $j\in \{0,1,\ldots ,i\}$. Recall first that every 
$v_{j}$, where $0\leq j\leq i$, is adjacent to every vertex of $%
G_{0}\setminus \{u\}$ by Lemma~\ref{module-2}. Thus, in particular every $%
P_{v_{j}}$, where $0\leq j\leq i$, intersects the line segment $\ell $ in $%
R_{\ell }$, since $R_{\ell }\setminus \{u\}$ is a projection representation
of $G\setminus \{u\}$ by Lemma~\ref{R-ell}. Furthermore, recall that $%
v_{j}\notin V(G_{0})$ by definition of $G_{0}$, for every $j\in \{0,1,\ldots
,i\}$, and thus the endpoints of every $P_{v_{j}}$, $j\in \{0,1,\ldots ,i\}$%
, remain the same in both $R$ and $R_{\ell }$. Furthermore, since $%
v_{j}\notin V(G_{0})$, either $l(v_{j})<_{R_{\ell }}b_{\ell }-\varepsilon $
or $l(v_{j})>_{R_{\ell }}b_{\ell }+\varepsilon $ by Remark~\ref%
{trans4-remark-1}, for every $v_{j}$, where $0\leq j\leq i$.

For the induction basis, let $j=i$. Then, $x_{0}=v_{i}\in V_{0}(u)$ by
definition of $x_{0}$. If $l(x_{0})<_{R_{\ell }}b_{\ell }-\varepsilon $,
then $l(x_{0})<_{R_{\ell }}b_{\ell }-\varepsilon <_{R_{\ell }}r(z)<_{R_{\ell
}}b_{\ell }+\varepsilon $, since $x_{0}\notin V(G_{0})$ and $z\in V(G_{0})$
(cf.~Remark~\ref{trans4-remark-1}). This is a contradiction, since $%
r(z)<_{R_{\ell }}\ell _{0}=l(x_{0})$ by definition of $N_{1}$. Therefore $%
b_{\ell }+\varepsilon <_{R_{\ell }}l(x_{0})\leq _{R_{\ell }}r(x_{0})$. Thus,
since $P_{x_{0}}=P_{v_{i}}$ intersects the line segment $\ell $ in $R_{\ell
} $, it follows that $L(x_{0})<_{R_{\ell }}a_{\ell }-\varepsilon $. That is, 
$v_{i}\in V_{0}(u)$, $b_{\ell }+\varepsilon <_{R_{\ell }}r(v_{i})$, and $%
L(v_{i})<_{R_{\ell }}a_{\ell }-\varepsilon $. This completes the induction
basis.

For the induction step, assume that $v_{j}\in V_{0}(u)$, $b_{\ell
}+\varepsilon <_{R_{\ell }}r(v_{j})$, and $L(v_{j})<_{R_{\ell }}a_{\ell
}-\varepsilon $, for some $j\in $ $\{1,2,\ldots ,i\}$. We will prove that
also $v_{j-1}\in V_{0}(u)$, $b_{\ell }+\varepsilon <_{R_{\ell }}r(v_{j-1})$,
and $L(v_{j-1})<_{R_{\ell }}a_{\ell }-\varepsilon $. Let first $%
P_{v_{j-1}}\ll _{R_{\ell }}P_{v_{j}}$. Suppose that $v_{j-1}\notin V_{0}(u)$%
. Then, since $v_{j-1}\in N(u)\cup V_{0}(u)$, it follows that $v_{j-1}\in
N(u)$. That is, $P_{v_{j-1}}\ll _{R_{\ell }}P_{v_{j}}$, where $v_{j-1}\in
N(u)$ and $v_{j}\in V_{0}(u)$. This is a contradiction, since $u$ has the
right border property in $R_{\ell }$ by Lemma~\ref{R-ell-right-property}.
Therefore $v_{j-1}\in V_{0}(u)$. Furthermore, since we assumed that $%
P_{v_{j-1}}\ll _{R_{\ell }}P_{v_{j}}$, and since $L(v_{j})<_{R_{\ell
}}a_{\ell }-\varepsilon $ by the induction hypothesis, it follows that $%
R(v_{j-1})<_{R_{\ell }}L(v_{j})<_{R_{\ell }}a_{\ell }-\varepsilon $. Thus,
also $L(v_{j-1})<_{R_{\ell }}a_{\ell }-\varepsilon $, since $L(v_{j-1})\leq
_{R_{\ell }}R(v_{j-1})$. Furthermore, since $P_{v_{j-1}}$ intersects the
line segment $\ell $ in $R_{\ell }$, it follows that $b_{\ell }+\varepsilon
<_{R_{\ell }}r(v_{j-1})$. That is, $v_{j-1}\in V_{0}(u)$, $b_{\ell
}+\varepsilon <_{R_{\ell }}r(v_{j-1})$, and $L(v_{j-1})<_{R_{\ell }}a_{\ell
}-\varepsilon $.

Let now $P_{v_{j}}\ll _{R_{\ell }}P_{v_{j-1}}$, and thus also $P_{v_{j}}\ll
_{R}P_{v_{j-1}}$, since $v_{j-1},v_{j}\notin V(G_{0})$. Then, since $b_{\ell
}+\varepsilon <_{R_{\ell }}r(v_{j})$ (and thus also $b_{\ell }+\varepsilon
<_{R}r(v_{j})$) by the induction hypothesis, it follows that $b_{\ell
}+\varepsilon <_{R_{\ell }}r(v_{j})<_{R_{\ell }}l(v_{j-1})$. Therefore $%
b_{\ell }+\varepsilon <_{R_{\ell }}r(v_{j-1})$, since $l(v_{j-1})\leq
_{R_{\ell }}r(v_{j-1})$. Furthermore, since $b_{\ell }+\varepsilon
<_{R_{\ell }}l(v_{j-1})$, and since $P_{v_{j-1}}$ intersects the line
segment $\ell $ in $R_{\ell }$, it follows that $R(v_{j-1})<_{R_{\ell
}}a_{\ell }-\varepsilon $. Therefore $L(v_{j-1})<_{R_{\ell }}a_{\ell
}-\varepsilon $, since $L(v_{j-1})\leq _{R_{\ell }}R(v_{j-1})$. That is, $%
b_{\ell }+\varepsilon <_{R_{\ell }}r(v_{j-1})$ and $L(v_{j-1})<_{R_{\ell
}}a_{\ell }-\varepsilon $. Recall that also $b_{\ell }+\varepsilon
<_{R_{\ell }}l(v_{j-1})$. Thus $b_{u}<_{R}b_{\ell }+\varepsilon
<_{R}l(v_{j-1})$, since $b_{u}<_{R}b_{\ell }$ (by definition of the line
segment $\ell $), and since the endpoints of $P_{v_{j-1}}$ remain the same
in both $R$ and $R_{\ell }$. Suppose now that $v_{j-1}\notin V_{0}(u)$.
Then, since $v_{j-1}\in N(u)\cup V_{0}(u)$, it follows that $v_{j-1}\in N(u)$%
, i.e.~in particular $P_{v_{j-1}}$ intersects $P_{u}$ in $R$. Thus, since $%
b_{u}=_{R}r(u)<_{R}l(v_{j-1})$, it follows that $%
L(v_{j-1})<_{R}a_{u}=_{R}L(u)$. Therefore $R(v_{j})<_{R}L(v_{j-1})<_{R}a_{u}$%
, since we assumed that $P_{v_{j}}\ll _{R}P_{v_{j-1}}$. Then, since $%
R(v_{j})<_{R}a_{u}$ and $b_{u}<_{R}b_{\ell }+\varepsilon <_{R}r(v_{j})$, it
follows that $P_{v_{j}}$ intersects $P_{u}$ in $R$ and $\phi _{v_{j}}>\phi
_{u}$. Thus $v_{j}\in N(u)$, since $v_{j}$ is bounded in $R$, which is a
contradiction to the induction hypothesis that $v_{j}\in V_{0}(u)$.
Therefore, $v_{j-1}\in V_{0}(u)$. This completes the induction step, and
thus $v_{j}\in V_{0}(u)$, $b_{\ell }-\varepsilon <_{R_{\ell }}r(v_{j})$, and 
$L(v_{j})<_{R_{\ell }}a_{\ell }-\varepsilon $, for every $j\in \{0,1,\ldots
,i\}$.

Consider now the vertex $v_{0}\in H_{0}=N$. Then $P_{v_{0}}$ intersects $%
P_{u}$ in $R$, since $v_{0}\in N(X_{1})\cap N(x_{2})$ by Lemma~\ref{N-Cu},
and since $P_{x}\ll _{R}P_{u}\ll _{R}P_{x_{2}}$ for every $x\in X_{1}$.
Recall that $x_{0}=v_{i}\in H_{i}$, for some $i\geq 1$, and that $%
(v_{0},v_{1},\ldots ,v_{i})$ is an $H_{i}$-chain of $v_{i}$. Thus, in
particular, $v_{1}$ exists, since $i\geq 1$. Furthermore, $%
L(v_{1})<_{R_{\ell }}a_{\ell }-\varepsilon $ by the previous paragraph. Thus
also $L(v_{1})<_{R}a_{\ell }-\varepsilon $, since the endpoints of $%
P_{v_{1}} $ remain the same in both $R$ and $R_{\ell }$. Therefore, since $%
P_{v_{0}}\ll _{R}P_{v_{1}}$ by Lemma~\ref{alternating-bounded-chain-1}, it
follows that $R(v_{0})<_{R}L(v_{1})<_{R}a_{\ell }-\varepsilon $. On the
other hand, $b_{\ell }-\varepsilon <_{R_{\ell }}r(v_{0})$ by the previous
paragraph, and thus also $b_{\ell }-\varepsilon <_{R}r(v_{0})$. That is, $%
R(v_{0})<_{R}a_{\ell }-\varepsilon $ and $b_{\ell }-\varepsilon
<_{R}r(v_{0}) $, and thus in particular $\phi _{v_{0}}>\phi _{\ell }$ in $R$%
. Therefore $\phi _{v_{0}}>\phi _{\ell }\geq\phi _{u}$ in $R$, 
since $\phi_{\ell }\geq\phi _{u} $ in $R$ by the definition of the line segment $\ell $.
Thus, since $P_{v_{0}}$ intersects $P_{u}$ in $R$, it follows that $v_{0}\in
N(u)$. This is a contradiction, since $v_{0}\in V_{0}(u)$ by the previous
paragraph.

This completes Case 2b, and thus also due to Cases 1 and 2a, it follows that 
$R_{\ell }^{\prime }$ has no new adjacency $zv$ that is not an adjacency in $%
R_{\ell }$, for any $z\in N_{1}$, i.e.~$R_{\ell }^{\prime }\setminus \{u\}$
is a projection representation of $G\setminus \{u\}$. This completes the
proof of the lemma.
\end{proof}

\begin{lemma}
\label{R''-ell}$R_{\ell }^{\prime \prime }\setminus \{u\}$ is a projection
representation of $G\setminus \{u\}$.
\end{lemma}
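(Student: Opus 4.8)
The plan is to mirror, step for step, the proof of Lemma~\ref{R''} (which established the analogous statement for $R^{\prime \prime }$ in Theorem~\ref{right-property-thm}), importing the facts that are specific to the representation $R_{\ell }$. First I would dispose of the trivial case: if $N_{2}=\emptyset $ then $R_{\ell }^{\prime \prime }=R_{\ell }^{\prime }$ by the statement of Transformation~\ref{trans5}, and the lemma follows immediately from Lemma~\ref{R'-ell}. So assume $N_{2}\neq \emptyset $ and let $w_{0}\in N_{2}$ with $r_{0}=r(w_{0})$. Since Transformation~\ref{trans5} only moves right lines of parallelograms $P_{v}$, where $v\in V_{0}(u)\cap V_{B}$, to the left, i.e.~it only shrinks parallelograms while preserving all slopes, no new adjacency between two vertices of $G\setminus \{u\}$ can be created; hence it suffices to show that no existing adjacency is destroyed. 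Suppose, for contradiction, that an edge $vx\in E$ with $v,x\neq u$ is realized in $R_{\ell }^{\prime }\setminus \{u\}$ but not in $R_{\ell }^{\prime \prime }\setminus \{u\}$. Only the right lines of vertices $v\in V_{0}(u)\cap V_{B}$ with $r(v)>_{R_{\ell }^{\prime }}r_{0}$ move, so $v$ is such a vertex and $P_{v}\ll _{R_{\ell }^{\prime \prime }}P_{x}$ while $P_{v}$ intersects $P_{x}$ in $R_{\ell }^{\prime }$.

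Next I would establish that $x\notin N(u)$, exactly as in Lemma~\ref{R''}. Using $r(u)<_{R_{\ell }^{\prime }}r_{0}$ (Lemma~\ref{r(u)<r0-in-R-ell'}) together with the fact that $r(v)$ is placed immediately before $r_{0}=r(w_{0})$ by Transformation~\ref{trans5}, I obtain $r(u)<_{R_{\ell }^{\prime \prime }}r(v)<_{R_{\ell }^{\prime \prime }}r(w_{0})$. If $x\in N(u)$, then $x$ is bounded (neighbours of $u$ are bounded in $R$ and retain this status throughout Transformations~\ref{trans4} and~\ref{trans5}); combining $P_{v}\ll _{R_{\ell }^{\prime \prime }}P_{x}$ with the ordering of the upper endpoints of $P_{x}$ and $P_{u}$, and with $r(u)<r(v)$, I would deduce that $P_{v}$ intersects $P_{u}$ with $\phi _{v}>\phi _{u}$, which together with $v\in V_{B}$ forces $v\in N(u)$, contradicting $v\in V_{0}(u)$. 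The main obstacle lies precisely here: $R_{\ell }^{\prime }$ is \emph{not} a projection representation of $G$ (only $R_{\ell }^{\prime }\setminus \{u\}$ is, of $G\setminus \{u\}$, by Lemma~\ref{R'-ell}), so I cannot invoke the upper-endpoint half of Lemma~\ref{unbounded-bounded} to order $L(x)$ and $L(u)$, nor can I read off the final edge relation $vu\in E$ directly from $R_{\ell }^{\prime }$. Instead I would transfer the geometric configuration back to $R_{\ell }$, using Corollary~\ref{R-ell-N(u)-intersect} (every $P_{z}$ with $z\in N(u)$ meets $P_{u}$ in $R_{\ell }$), the invariance of left endpoints and of slopes under Transformations~\ref{trans4} and~\ref{trans5}, and, where the position of $P_{u}$ interacts with the squeezed block, distinguishing whether $x\in V(G_{0})$ via the $\varepsilon $-squeezing Remarks~\ref{trans4-remark-1} and~\ref{trans4-remark-2}; this bookkeeping is the genuine difference from Lemma~\ref{R''}.

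Finally, with $x\notin N(u)$ in hand, I would close the argument as in Lemma~\ref{R''}. Since $vx\in E$, $v\in V_{0}(u)$ and $x\notin N(u)$ (and $x\neq u$), the vertex $x$ lies in the same connected component of $G\setminus N[u]$ as $v$, so $x\in V_{0}(u)$; by the definition of $\ell _{0}$ this gives $l(x)\leq _{R_{\ell }}\ell _{0}$, and because left endpoints are not moved by Transformations~\ref{trans4} and~\ref{trans5} (so that $\ell _{0}$ and $l(x)$ keep the same values) the same inequality holds in $R_{\ell }^{\prime }$. On the other hand, $P_{v}\ll _{R_{\ell }^{\prime \prime }}P_{x}$ and the placement of $r(v)$ immediately before $r_{0}=r(w_{0})$ yield $r(v)<_{R_{\ell }^{\prime \prime }}r_{0}=r(w_{0})<_{R_{\ell }^{\prime \prime }}l(x)$, and since neither $r(w_{0})$ nor $l(x)$ moves in Transformation~\ref{trans5} we get $r(w_{0})<_{R_{\ell }^{\prime }}l(x)\leq _{R_{\ell }^{\prime }}\ell _{0}$, i.e.~$r(w_{0})<_{R_{\ell }^{\prime }}\ell _{0}$. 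This contradicts $w_{0}\in N_{2}=\{z\in N(u)\mid \ell _{0}<_{R_{\ell }^{\prime }}r(z)\}$. Hence no adjacency of $G\setminus \{u\}$ is destroyed, and therefore $R_{\ell }^{\prime \prime }\setminus \{u\}$ is a projection representation of $G\setminus \{u\}$.
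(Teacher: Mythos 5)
Your skeleton is the paper's: new adjacencies are impossible because Transformation~\ref{trans5} only shrinks parallelograms; if an edge $vx$ of $G\setminus\{u\}$ were destroyed, then $v\in V_{0}(u)\cap V_{B}$ with $r(v)>_{R_{\ell}^{\prime}}r_{0}$ and $P_{v}\ll_{R_{\ell}^{\prime\prime}}P_{x}$; and your handling of the case $x\notin N(u)$ --- deduce $x\in V_{0}(u)$, hence $l(x)\leq_{R_{\ell}^{\prime}}\ell_{0}$, hence $r(z_{0})<_{R_{\ell}^{\prime}}l(x)\leq_{R_{\ell}^{\prime}}\ell_{0}$, contradicting $z_{0}\in N_{2}$ --- is exactly the paper's. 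The genuine gap is the case $x\in N(u)$, which is where the paper spends essentially all of its effort (its Cases 1, 1a, 1b, 2, 2a, 2b, 2b-i, 2b-ii). You correctly diagnose the obstacle (Lemma~\ref{unbounded-bounded} is unavailable for $u$ because $R_{\ell}^{\prime}$ is not a representation of $G$), but your proposed remedy --- ``transfer back to $R_{\ell}$'' using Corollary~\ref{R-ell-N(u)-intersect}, invariance of left endpoints and slopes, and squeezing bookkeeping via Remarks~\ref{trans4-remark-1} and~\ref{trans4-remark-2} --- does not suffice, and your single template contradiction (``$P_{v}$ intersects $P_{u}$ with $\phi_{v}>\phi_{u}$, hence $v\in N(u)$'') closes only one of the four subcases, namely the paper's Case 2a where $v,x\notin V(G_{0})$ and one can indeed argue inside $R$.

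The hardest subcase is $v\notin V(G_{0})$, $x\in V(G_{0})$ (the paper's Case 2b), and there the needed input is structural, not geometric bookkeeping: since $v$ is adjacent to a vertex of $G_{0}\setminus\{u\}$ but lies outside $G_{0}$, Lemma~\ref{module-2} (the module property of $G_{0}\setminus\{u\}$) forces $v\in N(X_{1})\cup\bigcup_{i\geq1}H_{i}\cup\bigcup_{i\geq0}H_{i}^{\prime}$; Lemma~\ref{tilde-neighbors} rules out $\bigcup H_{i}^{\prime}$ because $v\in V_{0}(u)$; the branch $v\in N(X_{1})$ is settled via slopes relative to $\phi_{\ell}$ and Remark~\ref{R-ell-slopes} (splitting on whether $z_{0}\in V(G_{0})$); and the branch $v\in\bigcup H_{i}$ requires the $H_{i}$-chain machinery of Lemmas~\ref{alternating-bounded-chain-1} and~\ref{alternating-bounded-chain}, applied with a minimal-index argument along a chain $(v_{0},\ldots,v_{i})$, ending not in ``$v\in N(u)$'' but in $v_{i-1}\in V_{0}(u)$ and $r(z_{0})<_{R_{\ell}^{\prime}}\ell_{0}$, again contradicting $z_{0}\in N_{2}$. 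None of this apparatus appears in your plan. Likewise the case $v\in V(G_{0})$ (the paper's Case 1) needs its own argument: for $x\in V(G_{0})$ one works inside $R_{0}$, where $u$ \emph{is} present and Lemma~\ref{unbounded-bounded} \emph{does} apply, reaching the contradiction $P_{u}\ll_{R_{0}}P_{x}$ (i.e.\ $x\notin N(u)$), while for $x\notin V(G_{0})$ the squeezing bounds yield $P_{v}\ll_{R_{\ell}}P_{x}$, i.e.\ $vx\notin E$. So the contradictions vary by subcase, and the relation of $P_{v}$ to $P_{u}$ in $R$ --- which your template presupposes --- is precisely the unknown that the module and chain lemmas are there to pin down.
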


\begin{proof}
Denote by $z_{0}$ the vertex of $N_{2}$, such that $r_{0}=r(z_{0})$. Since
during Transformation~\ref{trans5} we move the right line of some
parallelograms to the left, i.e.~we decrease some parallelograms, no new
adjacencies are introduced in $R_{\ell }^{\prime \prime }$ in comparison to $%
R_{\ell }^{\prime }$. Suppose that $vx\in E$ and that the adjacency $vx$ has
been removed from $R_{\ell }^{\prime }$ in $R_{\ell }^{\prime \prime }$, for
some $v\in V_{0}(u)\cap V_{B}$, such that $r(v)>_{R_{\ell }^{\prime
}}r_{0}=r(z_{0})$. Therefore, since we perform parallel movements of lines
in $R_{\ell }^{\prime }$, i.e.~since every slope $\phi _{y}$ in $R_{\ell
}^{\prime \prime }$ equals the value of $\phi _{y}$ in $R_{\ell }^{\prime }$
for every vertex $y$ of $G$, it follows that $P_{v}\ll _{R_{\ell }^{\prime
\prime }}P_{x}$ and that $P_{v}$ intersects $P_{x}$ in $R_{\ell }^{\prime }$%
. Note that $l(v)\leq _{R_{\ell }^{\prime }}\ell _{0}$, since $v\in V_{0}(u)$
and $\ell _{0}=\max_{R_{\ell }^{\prime }}\{l(x)\ |\ x\in V_{0}(u)\}$.

We first assume that $x\notin N(u)$. Since $r(v)$ comes in $R_{\ell
}^{\prime \prime }$ immediately before $r_{0}$, and since $P_{v}\ll
_{R_{\ell }^{\prime \prime }}P_{x}$, it follows that $r(v)<_{R_{\ell
}^{\prime \prime }}r_{0}<_{R_{\ell }^{\prime \prime }}l(x)$, and thus also $%
r_{0}<_{R_{\ell }^{\prime }}l(x)$. Furthermore, since $vx\in E$ by
assumption, and since $v\in V_{0}(u)$, it follows that $x\in V_{0}(u)$.
Therefore $l(x)\leq _{R_{\ell }^{\prime }}\ell _{0}$, since $\ell
_{0}=\max_{R_{\ell }^{\prime }}\{l(x)\ |\ x\in V_{0}(u)\}$, and thus $%
r_{0}=r(z_{0})<_{R_{\ell }^{\prime }}l(x)\leq _{R_{\ell }^{\prime }}\ell
_{0} $, i.e.~$r(z_{0})<_{R_{\ell }^{\prime }}\ell _{0}$. This is a
contradiction, since $z_{0}\in N_{2}$. Therefore, no adjacency $vx$ has been
removed from $R_{\ell }^{\prime }$ in $R_{\ell }^{\prime \prime }$ in the
case where $x\notin N(u)$.

Assume now that $x\in N(u)$, and thus the endpoints of $P_{x}$ in $R_{\ell
}^{\prime }$ remain the same also in $R_{\ell }^{\prime \prime }$.

\emph{Case 1.} $v\in V(G_{0})$. Then, since the endpoints of $P_{v}$ do not
move during Transformation~\ref{trans4}, it follows by Remark~\ref%
{trans4-remark-1} that $b_{\ell }-\varepsilon <_{R_{\ell }^{\prime
}}l(v)\leq _{R_{\ell }^{\prime }}r(v)<_{R_{\ell }^{\prime }}b_{\ell
}+\varepsilon $ and $a_{\ell }-\varepsilon <_{R_{\ell }^{\prime }}L(v)\leq
_{R_{\ell }^{\prime }}R(v)<_{R_{\ell }^{\prime }}a_{\ell }+\varepsilon $ in $%
R_{\ell }^{\prime }$. Thus, in particular also $b_{\ell }-\varepsilon
<_{R_{\ell }^{\prime \prime }}l(v)$ and $a_{\ell }-\varepsilon <_{R_{\ell
}^{\prime \prime }}L(v)$ in $R_{\ell }^{\prime \prime }$, since the left
lines of all parallelograms do not move during Transformation~\ref{trans5}.
Therefore $b_{\ell }-\varepsilon <_{R_{\ell }^{\prime \prime
}}l(v)<_{R_{\ell }^{\prime \prime }}l(x)$ and $a_{\ell }-\varepsilon
<_{R_{\ell }^{\prime \prime }}L(v)<_{R_{\ell }^{\prime \prime }}L(x)$, since 
$P_{v}\ll _{R_{\ell }^{\prime \prime }}P_{x}$. Furthermore, also $b_{\ell
}-\varepsilon <_{R_{\ell }}l(x)$ and $a_{\ell }-\varepsilon <_{R_{\ell
}}L(x) $ in $R_{\ell }$, since left lines of all parallelograms do not move
during Transformations~\ref{trans4} and~\ref{trans5}. We distinguish in the
following the cases where $x\notin V(G_{0})$ and $x\in V(G_{0})$.

\emph{Case 1a.} $x\notin V(G_{0})$. Then, either $l(x)<_{R_{\ell }}b_{\ell
}-\varepsilon $ or $l(x)>_{R_{\ell }}b_{\ell }+\varepsilon $ (resp.~either $%
L(x)<_{R_{\ell }}a_{\ell }-\varepsilon $ or $L(x)>_{R_{\ell }}a_{\ell
}+\varepsilon $) by Remark~\ref{trans4-remark-1}. Thus, since $b_{\ell
}-\varepsilon <_{R_{\ell }}l(x)$ and $a_{\ell }-\varepsilon <_{R_{\ell
}}L(x) $ by the previous paragraph, it follows that $l(x)>_{R_{\ell
}}b_{\ell }+\varepsilon $ and $L(x)>_{R_{\ell }}a_{\ell }+\varepsilon $.
Therefore $r(v)<_{R_{\ell }}b_{\ell }+\varepsilon <_{R_{\ell }}l(x)$ and $%
R(v)<_{R_{\ell }}a_{\ell }+\varepsilon <_{R_{\ell }}L(x)$ by Remark~\ref%
{trans4-remark-1}, i.e.~$P_{v}\ll _{R_{\ell }}P_{x}$ in $R_{\ell }$, and
thus $vx\notin E$. This is a contradiction, since we assumed that $vx\in E$.

\emph{Case 1b.} $x\in V(G_{0})$. Recall by Lemma~\ref{r(u)<r0-in-R-ell'}
that $r(u)<_{R_{\ell }^{\prime }}r_{0}=r(z_{0})$, and thus $r(u)<_{R_{\ell
}^{\prime }}r_{0}<_{R_{\ell }^{\prime }}r(v)$. Therefore, since $r(v)$ comes
immediately before $r_{0}$ in $R_{\ell }^{\prime \prime }$ during
Transformation~\ref{trans5}, it follows that $r(u)<_{R_{\ell }^{\prime
\prime }}r(v)<_{R_{\ell }^{\prime \prime }}r_{0}$. Therefore, $%
r(u)<_{R_{\ell }^{\prime \prime }}r(v)<_{R_{\ell }^{\prime \prime }}l(x)$,
since $P_{v}\ll _{R_{\ell }^{\prime \prime }}P_{x}$. Suppose that $P_{x}$
intersects $P_{u}$ in $R_{\ell }^{\prime \prime }$. Then, since $%
r(u)<_{R_{\ell }^{\prime \prime }}l(x)$, it follows that $L(x)<_{R_{\ell
}^{\prime \prime }}R(u)$; thus $R(v)<_{R_{\ell }^{\prime \prime
}}L(x)<_{R_{\ell }^{\prime \prime }}R(u)$, since $P_{v}\ll _{R_{\ell
}^{\prime \prime }}P_{x}$. That is, $r(u)<_{R_{\ell }^{\prime \prime }}r(v)$
and $R(v)<_{R_{\ell }^{\prime \prime }}R(u)$, i.e.~$P_{v}$ intersects $P_{u}$
in $R_{\ell }^{\prime \prime }$ and $\phi _{v}>\phi _{u}$ in $R_{\ell
}^{\prime \prime }$. Therefore, $P_{v}$ intersects $P_{u}$ and $\phi
_{v}>\phi _{u}$ also in $R_{\ell }^{\prime }$ and in $R_{\ell }$. Thus,
since $v\in V(G_{0})$, and since $R_{0}$ is a sub-representation of $R_{\ell
}$, $P_{v}$ intersects $P_{u}$ in $R_{0}$ and $\phi _{v}>\phi _{u}$ in $%
R_{0} $. Therefore, since $v$ is bounded (recall that $v\in V_{0}(u)\cap
V_{B}$ by our initial assumption on $v$), it follows that $v\in N(u)$, which
is a contradiction. Therefore, $P_{x}$ does not intersect $P_{u}$ in $%
R_{\ell }^{\prime \prime }$, and thus $P_{u}\ll _{R_{\ell }^{\prime \prime
}}P_{x}$, since $r(u)<_{R_{\ell }^{\prime \prime }}l(x)$. Thus also $%
P_{u}\ll _{R_{\ell }^{\prime }}P_{x}$ and $P_{u}\ll _{R_{\ell }}P_{x}$,
since the left line of $P_{x}$ does not move by Transformations~\ref{trans4}
and~\ref{trans5}. Therefore $P_{u}\ll _{R_{0}}P_{x}$, since $x\in V(G_{0})$
and $R_{0}$ is a sub-representation of $R_{\ell }$. Thus $x\notin N(u)$,
which is a contradiction to our assumption on $x$.

\emph{Case 2.} $v\notin V(G_{0})$.

\emph{Case 2a.} $x\notin V(G_{0})$. We will now prove that $b_{u}<_{R_{\ell
}^{\prime \prime }}r(v)<_{R_{\ell }^{\prime \prime }}l(x)$. Recall that $%
z_{0}\in N(u)$. Thus, if $z_{0}\in V(G_{0})$, then $r(u)<_{R_{0}}r(z_{0})$
by Lemma~\ref{unbounded-bounded}, and thus also $r(u)<_{R_{\ell }}r(z_{0})$,
since $R_{0}$ is a sub-representation of $R_{\ell }$. Furthermore $%
b_{u}<_{R_{\ell }^{\prime }}r(u)<_{R_{\ell }^{\prime }}r(z_{0})$, since the
right endpoint $r(z_{0})$ of $P_{z_{0}}$ does not decrease by Transformation~%
\ref{trans4}. On the other hand, let $z_{0}\notin V(G_{0})$. Then $%
b_{u}<_{R}r(z_{0})$ by Lemma~\ref{unbounded-bounded}, and thus also $%
b_{u}<_{R_{\ell }}r(z_{0})$, since $z_{0}\notin V(G_{0})$ (i.e.~the
endpoints of $P_{z_{0}}$ are the same in both $R$ and $R_{\ell }$).
Furthermore $b_{u}<_{R_{\ell }^{\prime }}r(z_{0})$, since $r(z_{0})$ does
not decrease by Transformation~\ref{trans4}. That is, $b_{u}<_{R_{\ell
}^{\prime }}r(z_{0})=r_{0}<_{R_{\ell }^{\prime }}r(v)$ in both cases where $%
z_{0}\in V(G_{0})$ and $z_{0}\notin V(G_{0})$. Therefore, since $r(v)$ comes
immediately before $r_{0}=r(z_{0})$ in $R_{\ell }^{\prime \prime }$ by
Transformation~\ref{trans5}, it follows that $b_{u}<_{R_{\ell }^{\prime
\prime }}r(v)<_{R_{\ell }^{\prime \prime }}r_{0}$. Thus, $b_{u}<_{R_{\ell
}^{\prime \prime }}r(v)<_{R_{\ell }^{\prime \prime }}l(x)$, since $P_{v}\ll
_{R_{\ell }^{\prime \prime }}P_{x}$.

Furthermore, since the left lines of the parallelograms do not move by
Transformations~\ref{trans4} and~\ref{trans5}, it follows that also $%
b_{u}<_{R_{\ell }}l(x)$. Therefore $r(u)=_{R}b_{u}<_{R}l(x)$, since $x\notin
V(G_{0})$ (i.e.~the endpoints of $P_{x}$ are the same in both $R$ and $%
R_{\ell }$). Thus, since we assumed that $x\in N(u)$, it follows that $%
L(x)<_{R}a_{u}=_{R}L(u)$. Similarly, since the left lines of the
parallelograms do not move by Transformations~\ref{trans4} and~\ref{trans5},
and since $x\notin V(G_{0})$, it follows that also $L(x)<_{R_{\ell }}a_{u}$
and $L(x)<_{R_{\ell }^{\prime \prime }}a_{u}$. Thus, $R(v)<_{R_{\ell
}^{\prime \prime }}L(x)<_{R_{\ell }^{\prime \prime }}a_{u}$, since $P_{v}\ll
_{R_{\ell }^{\prime \prime }}P_{x}$. That is, $b_{u}<_{R_{\ell }^{\prime
\prime }}r(v)$ (by the previous paragraph) and $L(v)\leq _{R_{\ell }^{\prime
\prime }}R(v)<_{R_{\ell }^{\prime \prime }}a_{u}$. Therefore, since the
slope $\phi _{v}$ of $P_{v}$ (where $v\notin V(G_{0})$) remains the same in
the representations $R$, $R_{\ell }$, $R_{\ell }^{\prime }$, and $R_{\ell
}^{\prime \prime }$, and since the lower right endpoint $r(v)$ in $R$ is
greater than or equal to the corresponding value $r(v)$ in $R_{\ell
}^{\prime \prime }$, it follows that $P_{v}$ intersects $P_{u}$ in $R$ and $%
\phi _{v}>\phi _{u}$ in $R$. Thus $v\in N(u)$, since $v$ is bounded (recall
that $v\in V_{0}(u)\cap V_{B}$), which is a contradiction to the assumption
that $v\in V_{0}(u)$.

\emph{Case 2b.} $x\in V(G_{0})$. Recall that $v\notin V(G_{0})$ by the
assumption of Case 2. Therefore, since $vx\notin E$, it follows by Lemma~\ref%
{module-2} that $v\in N(X_{1})\cup \bigcup\nolimits_{i=1}^{\infty
}H_{i}\bigcup\nolimits_{i=0}^{\infty }H_{i}^{\prime }$. Recall that $v\in
V_{0}(u)\cap V_{B}$, and thus in particular $v\notin N(u)$. Therefore $%
v\notin \bigcup\nolimits_{i=0}^{\infty }H_{i}^{\prime }$ by Lemma~\ref%
{tilde-neighbors}, and thus $v\in N(X_{1})\cup
\bigcup\nolimits_{i=1}^{\infty }H_{i}$. We distinguish in the following the
cases where $v\in N(X_{1})$ and $v\in \bigcup\nolimits_{i=1}^{\infty }H_{i}$.

\emph{Case 2b-i.} $v\in N=N(X_{1})$. Then, $P_{v}$ intersects $P_{u}$ in $R$%
, since $v\in N(X_{1})\cap N(x_{2})$ by Lemma~\ref{N-Cu}, and since $%
P_{x}\ll _{R}P_{u}\ll _{R}P_{x_{2}}$ for every $x\in X_{1}$. Recall that $v$
is bounded and $v\notin N(u)$, since $v\in V_{0}(u)\cap V_{B}$ by our
initial assumption on $v$, and thus $\phi _{v}<\phi _{u}\leq \phi _{\ell }$
in $R$. Therefore, $\phi _{v}<\phi _{\ell }$ also in $R_{\ell }$, since $%
v\notin V(G_{0})$ (i.e.~the endpoints of $P_{v}$ remain the same in both $R$
and $R_{\ell }$). On the other hand, since $z_{0}\in N(u)$, it follows that $%
\phi _{z_{0}}>\phi _{u}$ in $R$, and thus $\phi _{v}<\phi _{u}<\phi _{z_{0}}$
in $R$. Furthermore, recall by Remark~\ref{trans4-remark-1} that $b_{\ell
}-\varepsilon <_{R_{\ell }}l(x)<_{R_{\ell }}b_{\ell }+\varepsilon $ in $%
R_{\ell }$, since $x\in V(G_{0})$ by the assumption of Case 2b. Therefore,
since the left lines of the parallelograms do not move by Transformations~%
\ref{trans4} and~\ref{trans5}, it follows that also $b_{\ell }-\varepsilon
<_{R_{\ell }^{\prime \prime }}l(x)<_{R_{\ell }^{\prime \prime }}b_{\ell
}+\varepsilon $ in $R_{\ell }^{\prime \prime }$. Similarly, it follows by to
Remark~\ref{trans4-remark-1} that $a_{\ell }-\varepsilon <_{R_{\ell
}^{\prime \prime }}L(x)<_{R_{\ell }^{\prime \prime }}a_{\ell }+\varepsilon $
in $R_{\ell }^{\prime \prime }$.

Let first $z_{0}\notin V(G_{0})$. Then, either $r(z_{0})>_{R_{\ell }^{\prime
}}b_{\ell }+\varepsilon $ or $r(z_{0})<_{R_{\ell }^{\prime }}b_{\ell
}-\varepsilon $ by Remark~\ref{trans4-remark-2}. Suppose that $%
r(z_{0})>_{R_{\ell }^{\prime }}b_{\ell }+\varepsilon $. Then, since $r(v)$
comes by Transformation~\ref{trans5} immediately before $r_{0}=r(z_{0})$ in $%
R_{\ell }^{\prime \prime }$, it follows that $b_{\ell }+\varepsilon
<_{R_{\ell }^{\prime \prime }}r(v)<_{R_{\ell }^{\prime \prime }}r(z_{0})$.
Thus $b_{\ell }+\varepsilon <_{R_{\ell }^{\prime \prime }}r(v)<_{R_{\ell
}^{\prime \prime }}l(x)$, since $P_{v}\ll _{R_{\ell }^{\prime \prime }}P_{x}$%
. This is a contradiction, since $b_{\ell }-\varepsilon <_{R_{\ell }^{\prime
\prime }}l(x)<_{R_{\ell }^{\prime \prime }}b_{\ell }+\varepsilon $.
Therefore $r(z_{0})<_{R_{\ell }^{\prime }}b_{\ell }-\varepsilon $.

Recall now by Corollary~\ref{R-ell-N(u)-intersect} that $P_{z_{0}}$
intersects $P_{u}$ in $R_{\ell }$, since $z_{0}\in N(u)$. Therefore, since $%
P_{z_{0}}$ does not decrease during Transformation~\ref{trans4}, $P_{z_{0}}$
intersects $P_{u}$ also in $R_{\ell }^{\prime }$, i.e.~$P_{z_{0}}$
intersects the line segment $\ell $ in $R_{\ell }^{\prime }$. Furthermore,
since $z_{0}\notin V(G_{0})$, either $R(z_{0})>_{R_{\ell }^{\prime }}a_{\ell
}+\varepsilon $ or $R(z_{0})<_{R_{\ell }^{\prime }}a_{\ell }-\varepsilon $
by Remark~\ref{trans4-remark-2}. Therefore, since $r(z_{0})<_{R_{\ell
}^{\prime }}b_{\ell }-\varepsilon $ and $P_{z_{0}}$ intersects the line
segment $\ell $ in $R_{\ell }^{\prime }$, it follows that $%
R(z_{0})>_{R_{\ell }^{\prime }}a_{\ell }+\varepsilon $; thus also $%
R(z_{0})>_{R_{\ell }^{\prime \prime }}a_{\ell }+\varepsilon $, since the
endpoints of $P_{z_{0}}$ do not change by Transformation~\ref{trans5}.
Recall now that $\phi _{v}<\phi _{z_{0}}$ in $R$. Therefore also $\phi
_{v}<\phi _{z_{0}}$ in $R_{\ell }^{\prime \prime }$, since $v,z_{0}\notin
V(G_{0})$ (i.e.~the slopes $\phi _{z_{0}}$ and $\phi _{v}$ remain the same
in both $R$ and $R_{\ell }^{\prime \prime }$). Furthermore, recall that $%
r(v) $ comes by Transformation~\ref{trans5} immediately before $r(z_{0})$
(i.e.~sufficiently close to $r(z_{0})$) in $R_{\ell }^{\prime \prime }$.
Therefore, since $a_{\ell }+\varepsilon <_{R_{\ell }^{\prime \prime
}}R(z_{0})$ and $\phi _{v}<\phi _{z_{0}}$ in $R_{\ell }^{\prime \prime }$,
it follows that $a_{\ell }+\varepsilon <_{R_{\ell }^{\prime \prime
}}R(z_{0})<_{R_{\ell }^{\prime \prime }}R(v)$. Thus $a_{\ell }+\varepsilon
<_{R_{\ell }^{\prime \prime }}R(v)<_{R_{\ell }^{\prime \prime }}L(x)$, since 
$P_{v}\ll _{R_{\ell }^{\prime \prime }}P_{x}$. This is a contradiction,
since $a_{\ell }-\varepsilon <_{R_{\ell }^{\prime \prime }}L(x)<_{R_{\ell
}^{\prime \prime }}a_{\ell }+\varepsilon $ in $R_{\ell }^{\prime \prime }$.

Let now $z_{0}\in V(G_{0})$. Then $r(u)<_{R_{0}}r(z_{0})$ by Lemma~\ref%
{unbounded-bounded}, since $z_{0}\in N(u)$. Thus, also $r(u)<_{R_{\ell
}}r(z_{0})$, since $R_{0}$ is a sub-representation of $R_{\ell }$.
Furthermore $r(u)<_{R_{\ell }^{\prime \prime }}r(z_{0})$, since the value $%
r(z_{0})$ does not decrease by Transformations~\ref{trans4} and~\ref{trans5}%
. Therefore, since $r(v)$ comes by Transformation~\ref{trans5} immediately
before $r(z_{0})$, it follows that $r(u)<_{R_{\ell }^{\prime \prime
}}r(v)<_{R_{\ell }^{\prime \prime }}r(z_{0})$. Similarly, $L(x)<_{R_{0}}L(u)$
by Lemma~\ref{unbounded-bounded}, since $x\in N(u)$, and thus also $%
L(x)<_{R_{\ell }}L(u)$. Furthermore $L(x)<_{R_{\ell }^{\prime \prime }}L(u)$%
, since the left lines of the parallelograms do not move by Transformations~%
\ref{trans4} and~\ref{trans5}. Therefore $R(v)<_{R_{\ell }^{\prime \prime
}}L(x)<_{R_{\ell }^{\prime \prime }}L(u)$, since $P_{v}\ll _{R_{\ell
}^{\prime \prime }}P_{x}$. That is, $r(u)<_{R_{\ell }^{\prime \prime }}r(v)$
and $R(v)<_{R_{\ell }^{\prime \prime }}L(u)=R(u)$, and thus $\phi _{v}>\phi
_{u}$ in $R_{\ell }^{\prime \prime }$. Therefore, $\phi _{v}>\phi _{u}$ also
in $R_{\ell }$, since all the slopes are the same in both $R_{\ell }$ and $%
R_{\ell }^{\prime \prime }$. However, recall that $\phi _{v}<\phi _{\ell }$
in $R_{\ell }$ (as we proved in the beginning of Case 2b-i), and thus $\phi
_{v}<\phi _{u}$ in $R_{\ell }$ by Remark~\ref{R-ell-slopes}, since $u\in
V(G_{0})$. This is a contradiction, since $\phi _{v}>\phi _{u}$ in $R_{\ell
} $.

\emph{Case 2b-ii.} $v\in \bigcup\nolimits_{i=1}^{\infty }H_{i}$. Let $%
v=v_{i}\in H_{i}$ for some $i\geq 1$ and let $(v_{0},v_{1},\ldots ,v_{i})$
be an $H_{i}$-chain of $v_{i}$. Recall that $P_{v}\ll _{R_{\ell }^{\prime
\prime }}P_{x}$ and that $P_{v}$ intersects $P_{x}$ in $R_{\ell }^{\prime }$
by our initial assumption on $v$ and on $x$. Assume w.l.o.g.~that $i\geq 1$
is the smallest index, such that $P_{v}=P_{v_{i}}$ does not intersect $P_{x}$
in $R_{\ell }^{\prime \prime }$, i.e.~in particular $P_{v_{i-1}}$ intersects 
$P_{x}$ in $R_{\ell }^{\prime \prime }$. Recall that both $v_{i}$ and $%
v_{i-1}$ are bounded by Lemma~\ref{N-H-C2-Cu-bounded}, and thus $P_{v_{i}}$
does not intersect $P_{v_{i-1}}$ in $R_{\ell }^{\prime }$, i.e.~either $%
P_{v_{i-1}}\ll _{R_{\ell }^{\prime }}P_{v_{i}}$ or $P_{v_{i}}\ll _{R_{\ell
}^{\prime }}P_{v_{i-1}}$. Let first $P_{v_{i-1}}\ll _{R_{\ell }^{\prime
}}P_{v_{i}}$. Recall that the left line of $P_{v_{i}}$ does not move by
Transformation~\ref{trans5} and that the right line of $P_{v_{i-1}}$ is
possibly moved to the left by Transformation~\ref{trans5}. Thus, also $%
P_{v_{i-1}}\ll _{R_{\ell }^{\prime \prime }}P_{v_{i}}$ in $R_{\ell }^{\prime
\prime }$. Furthermore, since $P_{v_{i}}=P_{v}\ll _{R_{\ell }^{\prime \prime
}}P_{x}$ by our assumption on $v$, it follows that $P_{v_{i-1}}\ll _{R_{\ell
}^{\prime }}P_{x}$. This is a contradiction, since $P_{v_{i-1}}$ intersects $%
P_{x}$ in $R_{\ell }^{\prime \prime }$.

Let now $P_{v_{i}}\ll _{R_{\ell }^{\prime }}P_{v_{i-1}}$, and thus in
particular $l(v_{i})<_{R_{\ell }^{\prime }}l(v_{i-1})$. Thus also $%
l(v_{i})<_{R_{\ell }}l(v_{i-1})$, since the left lines of $P_{v_{i}}$ and $%
P_{v_{i-1}}$ do not move by Transformation~\ref{trans4}. Furthermore $%
l(v_{i})<_{R}l(v_{i-1})$, since $v_{i},v_{i-1}\notin V(G_{0})$ (i.e.~$%
P_{v_{i}}$ and $P_{v_{i-1}}$ remain the same in both $R$ and $R_{\ell }$).
Recall now that $v_{i}$ and $v_{i-1}$ are bounded by Lemma~\ref%
{N-H-C2-Cu-bounded}, and thus $P_{v_{i}}$ does not intersect $P_{v_{i-1}}$
in $R$, i.e.~either $P_{v_{i-1}}\ll _{R}P_{v_{i}}$ or $P_{v_{i}}\ll
_{R}P_{v_{i-1}}$. Therefore, since $l(v_{i})<_{R}l(v_{i-1})$, it follows
that $P_{v_{i}}\ll _{R}P_{v_{i-1}}$.

We will now prove that $b_{u}<_{R}r(v_{i})<_{R}l(v_{i-1})$. Recall that $%
z_{0}\in N(u)$. Thus, if $z_{0}\in V(G_{0})$, then $r(u)<_{R_{0}}r(z_{0})$
by Lemma~\ref{unbounded-bounded}, and thus also $r(u)<_{R_{\ell }}r(z_{0})$,
since $R_{0}$ is a sub-representation of $R_{\ell }$. Furthermore $%
b_{u}<_{R_{\ell }^{\prime }}r(u)<_{R_{\ell }^{\prime }}r(z_{0})$, since the
right endpoint $r(z_{0})$ of $P_{z_{0}}$ does not decrease by Transformation~%
\ref{trans4}. On the other hand, let $z_{0}\notin V(G_{0})$. Then $%
b_{u}<_{R}r(z_{0})$ by Lemma~\ref{unbounded-bounded}, and thus also $%
b_{u}<_{R_{\ell }}r(z_{0})$, since $z_{0}\notin V(G_{0})$ (i.e.~the
endpoints of $P_{z_{0}}$ are the same in both $R$ and $R_{\ell }$).
Furthermore $b_{u}<_{R_{\ell }^{\prime }}r(z_{0})$, since $r(z_{0})$ does
not decrease by Transformation~\ref{trans4}. That is, in both cases where $%
z_{0}\in V(G_{0})$ and $z_{0}\notin V(G_{0})$, it follows that $%
b_{u}<_{R_{\ell }^{\prime }}r(z_{0})=r_{0}<_{R_{\ell }^{\prime }}r(v)$
(since $r_{0}<_{R_{\ell }^{\prime }}r(v)$ by our initial assumption on $v$),
and thus $b_{u}<_{R_{\ell }^{\prime }}r(v)=r(v_{i})$. Furthermore, $%
b_{u}<_{R_{\ell }^{\prime }}r(v_{i})<_{R_{\ell }^{\prime }}l(v_{i-1})$,
since we assumed that $P_{v_{i}}\ll _{R_{\ell }^{\prime }}P_{v_{i-1}}$.
Recall now that the value $r(v_{i})$ remains the same in both $R_{\ell }$
and $R_{\ell }^{\prime }$, since $v_{i}\notin N(u)$ and by Transformation~%
\ref{trans4} only some endpoints of vertices of $N(u)$ are moved.
Furthermore, the value $l(v_{i-1})$ remains the same in both $R_{\ell }$ and 
$R_{\ell }^{\prime }$, since the left lines of the parallelograms do not
move by Transformation~\ref{trans4}. Therefore $b_{u}<_{R_{\ell
}}r(v_{i})<_{R_{\ell }}l(v_{i-1})$, since also $b_{u}<_{R_{\ell }^{\prime
}}r(v_{i})<_{R_{\ell }^{\prime }}l(v_{i-1})$. Moreover, since $%
v_{i},v_{i-1}\notin V(G_{0})$ (i.e.~the endpoints of $P_{v_{i}}$ and $%
P_{v_{i-1}}$ remain the same in both $R$ and $R_{\ell }$), it follows that $%
b_{u}<_{R}r(v_{i})<_{R}l(v_{i-1})$.

Suppose that $v_{i-1}\in N(u)$. Then $L(v_{i-1})<_{R}L(u)=a_{u}$ by Lemma %
\ref{unbounded-bounded}, and thus $R(v_{i})<_{R}L(v_{i-1})<_{R}a_{u}$, since 
$P_{v_{i}}\ll _{R}P_{v_{i-1}}$. That is, $R(v_{i})<_{R}a_{u}$ and $%
b_{u}<_{R}r(v_{i})$ (by the previous paragraph). Therefore, $P_{v_{i}}$
intersects $P_{u}$ in $R$ and $\phi _{v_{i}}>\phi _{u}$ in $R$. Thus, since $%
v_{i}$ is bounded, it follows that $v_{i}\in N(u)$. This is a contradiction
to the assumption that $v_{i}=v\in V_{0}(u)$. Therefore $v_{i-1}\notin N(u)$%
. Thus, since $v_{i-1}\in N(x_{2})$ (by definition of $H$) and $x_{2}\in
V_{0}(u)$, it follows that $v_{i-1}\in V_{0}(u)$. Therefore, in particular $%
l(v_{i-1})\leq _{R_{\ell }^{\prime }}\ell _{0}$, since $\ell
_{0}=\max_{R_{\ell }^{\prime }}\{l(x)\ |\ x\in V_{0}(u)\}$.

Recall now that $P_{v_{i}}\ll _{R_{\ell }^{\prime }}P_{v_{i-1}}$ (as we
assumed) and that $r_{0}=r(z_{0})<_{R_{\ell }^{\prime }}r(v)=r(v_{i})$ (by
our initial assumption on $v$). Therefore $r(z_{0})<_{R_{\ell }^{\prime
}}r(v_{i})<_{R_{\ell }^{\prime }}l(v_{i-1})\leq _{R_{\ell }^{\prime }}\ell
_{0}$, i.e.~$r(z_{0})<_{R_{\ell }^{\prime }}\ell _{0}$. This is a
contradiction, since $z_{0}\in N_{2}$.

Summarizing Cases 1 and 2, it follows that no adjacency $vx$ has been
removed from $R_{\ell }^{\prime }$ in $R_{\ell }^{\prime \prime }$ in the
case where $x\in N(u)$. This completes the proof of the lemma.
\end{proof}

\begin{lemma}
\label{R'''-ell}$R_{\ell }^{\prime \prime \prime }$ is a projection
representation of $G$.
\end{lemma}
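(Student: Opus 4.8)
The plan is to mirror, essentially step for step, the proof of Lemma~\ref{R'''} from Theorem~\ref{right-property-thm}, with $R,R',R'',R'''$ replaced by $R_\ell,R_\ell',R_\ell'',R_\ell'''$ and with the objects $L_0,\ell_0,r_0,N_1,N_2$ and the distinguished vertices $u^\ast,y_0,v_0$ taken from the present construction. First I would record the reduction that makes this possible: by Lemma~\ref{R''-ell} the representation $R_\ell''\setminus\{u\}$ is a projection representation of $G\setminus\{u\}$, and Transformation~\ref{trans6} moves only the line $P_u$ (and finally turns it into a bounded parallelogram). Hence $R_\ell'''\setminus\{u\}$ and $R_\ell''\setminus\{u\}$ coincide as labelled families of parallelograms, so $R_\ell'''\setminus\{u\}$ still represents $G\setminus\{u\}$, and it remains only to check that $u$ acquires exactly its correct neighbourhood. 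The proof then splits into Part~1 (showing $u$ is adjacent in $R_\ell'''$ to every $w\in N(u)$) and Part~2 (showing $u$ is adjacent in $R_\ell'''$ to no vertex of $V\setminus N[u]$), exactly as in Lemma~\ref{R'''}.

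For Part~1 I would denote by $a_u,b_u$ the upper and lower endpoints of $P_u$ in $R_\ell$; these are unchanged by Transformations~\ref{trans4} and~\ref{trans5} since $u\notin N(u)$ and $u$ is unbounded. Splitting $N(u)=(N_1\setminus N_2)\cup N_2$, I would first establish the sandwich $L(w)<_{R_\ell''}a_u<_{R_\ell''}R(w)$ for $w\in N_1\setminus N_2$, using $a_u<_{R_\ell''}L_0$ together with Corollary~\ref{property-trans5} (the analogue of Corollary~\ref{property-trans2}); for $w\in N_2$ I would instead use Lemma~\ref{r(u)<r0-in-R-ell'} and the fact that every $v\in V_0(u)\cap V_B$ has $r(v)$ to the left of $r_0\le r(w)$. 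The placement rule of Transformation~\ref{trans6} for $R(u)$ and $l(u)=r(u)$ then forces $P_u$ to intersect $P_w$ in $R_\ell'''$ for every $w\in N(u)$. The only genuinely new bookkeeping here, compared with Lemma~\ref{R'''}, is establishing the basic comparisons $L(w)<_{R_\ell''}a_u$ and $b_u<_{R_\ell''}r(w)$: for $w\in V(G_0)$ these come from Lemma~\ref{unbounded-bounded} applied inside the sub-representation $R_0$ of $G_0$ (where $u$ is genuinely unbounded), while for $w\notin V(G_0)$ they come from $R$ (where the endpoints and slope of $w$ are preserved) together with Corollary~\ref{R-ell-N(u)-intersect} and Remarks~\ref{trans4-remark-1}, \ref{trans4-remark-2}, and~\ref{R-ell-slopes}.

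For Part~2 I would reuse the distinguished vertices $u^\ast$ (a bounded covering vertex of $u$, Lemma~\ref{bounded-hovering}), $v_0$ (the vertex of $V_0(u)\cap V_B$ maximising $r(v)$), and $y_0$ (the vertex of $V_B\setminus N(u)\setminus V_0(u)$ with $L_0=L(y_0)$), first re-deriving in $R_\ell''$ the orderings $P_{u^\ast}\ll_{R_\ell''}P_{y_0}$ and $P_{v_0}\ll_{R_\ell''}P_{y_0}$ and the fact that $P_{u^\ast}$ still intersects $P_u$. Then, assuming for contradiction that some $x\notin N(u)$ has $P_x$ intersecting $P_u$ in $R_\ell'''$, I would run the three cases of Lemma~\ref{R'''}: $x\in V_B\setminus N(u)$ with $x\in V_0(u)$ (Case~2a), $x\in V_B\setminus N(u)$ with $x\notin V_0(u)$ (Case~2b), and $x\in V_U$ with $\phi_x<\phi_u$ (Case~2c). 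Each case derives a contradiction from the placement of $R(u)$ and $l(u)$ relative to $L_0=L(y_0)$ and $r(v_0)$, using Lemma~\ref{intersecting-unbounded} to convert an unwanted intersection into an inclusion $N(\cdot)\subseteq N(\cdot)$ that conflicts with $y_0\notin V_0(u)$ or with $x_2\notin N(u)$. Summarising Parts~1 and~2, $P_u$ meets in $R_\ell'''$ only the parallelograms of $N(u)$ and possibly trivial lines $P_x$ with $\phi_x>\phi_u$, which are non-adjacent to the now-bounded $u$; hence $R_\ell'''$ is a projection representation of $G$ with $k-1$ unbounded vertices.

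The hard part I anticipate is not the case analysis, which is structurally identical to Lemma~\ref{R'''}, but the constant need to translate endpoint and slope inequalities across four representations ($R$, the squeezed block $R_0$ inside $R_\ell$, $R_\ell''$, and $R_\ell'''$) while keeping track of whether each vertex lies in $V(G_0)$. For $V(G_0)$-vertices one must argue inside $R_0$, and for the remaining vertices inside $R$; Remarks~\ref{R-ell-slopes}, \ref{trans4-remark-1}, and~\ref{trans4-remark-2} are precisely the tools that make the $\varepsilon$-squeezed block around $\ell$ behave, slope- and position-wise, like the single line $P_u$ it replaces, so that the comparisons used in the original Lemma~\ref{R'''} remain valid in the present, more delicate setting.
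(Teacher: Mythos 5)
Your plan coincides with the paper's proof essentially step for step: the same reduction (Lemma~\ref{R''-ell} plus the observation that Transformation~\ref{trans6} moves only $P_{u}$), the same Part~1 sandwich $L(z)<_{R_{\ell}^{\prime\prime}}\widehat{a}_{u}<_{R_{\ell}^{\prime\prime\prime}}R(u)<_{R_{\ell}^{\prime\prime\prime}}R(z)$ obtained from Corollary~\ref{property-trans5} and the splitting $N(u)=(N_{1}\setminus N_{2})\cup N_{2}$, the same Part~2 built on $u^{\ast}$, $v_{0}$, $y_{0}$ with Cases 2a--2c, and the same $V(G_{0})$-versus-non-$V(G_{0})$ bookkeeping across $R$, $R_{0}$, $R_{\ell}^{\prime\prime}$, and $R_{\ell}^{\prime\prime\prime}$. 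The one caveat is that in the sub-case of Case~2b where $P_{x}$ intersects $P_{u}$ in $R_{\ell}^{\prime\prime}$ you cannot invoke Lemma~\ref{intersecting-unbounded} on the pair $(u,x)$ --- the representation $R_{\ell}^{\prime\prime}$ \emph{including} $u$ is not known to represent $G$ --- and the paper instead uses exactly the squeezed-block behavior you anticipate: $P_{x}$ meeting the line segment $\ell$ forces $P_{x}$ to meet $P_{x_{2}}$, hence $x\in N(x_{2})$ and so $x\in V_{0}(u)$, a contradiction.
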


\begin{proof}
The proof is done in two parts. In Part 1 we prove that $u$ is adjacent in $%
R_{\ell }^{\prime \prime \prime }$ to all vertices of $N(u)$, while in Part
2 we prove that $u$ is not adjacent in $R_{\ell }^{\prime \prime \prime }$
to any vertex of~${V\setminus N[u]}$.

\medskip

\emph{Part 1.} In this part we prove that $u$ is adjacent in $R_{\ell
}^{\prime \prime \prime }$ to all vertices of $N(u)$. Denote by $\widehat{a}%
_{u}$ and $\widehat{b}_{u}$ the coordinates of the upper and lower endpoint
of $P_{u}$ in the projection representation $R_{\ell }$ on $L_{1}$ and on $%
L_{2}$, respectively. Then, since the endpoints of $P_{u}$ do not move by
Transformations~\ref{trans4} and~\ref{trans5}, $\widehat{a}_{u}$ and $%
\widehat{b}_{u}$ remain the endpoints of $P_{u}$ also in the representations 
$R_{\ell }^{\prime }$ and $R_{\ell }^{\prime \prime }$. Let $z\in N(u)$ be
arbitrary. Suppose that $z\notin V(G_{0})$. Then, the left line of $P_{z}$
remains the same in the representations $R$, $R_{\ell }$, $R_{\ell }^{\prime
}$, and $R_{\ell }^{\prime \prime }$. Therefore, since $%
L(z)<_{R}a_{u}=_{R}L(u)$ by Lemma~\ref{unbounded-bounded}, it follows that
also $L(z)<_{R_{\ell }^{\prime \prime }}a_{u}<_{R_{\ell }^{\prime \prime
}}L(u)=\widehat{a}_{u}$. Suppose that $z\in V(G_{0})$. Then, $%
L(z)<_{R_{0}}L(u)$ by Lemma~\ref{unbounded-bounded}, since $R_{0}$ is a
projection representation of $G_{0}$, and thus also $L(z)<_{R_{\ell }}L(u)=%
\widehat{a}_{u}$, since $R_{0}$ is a sub-representation of $R_{\ell }$.
Furthermore $L(z)<_{R_{\ell }^{\prime \prime }}L(u)=\widehat{a}_{u}$, since
the left line of $P_{z}$ remains the same in the representations $R_{\ell }$%
, $R_{\ell }^{\prime }$, and $R_{\ell }^{\prime \prime }$. Summarizing, $%
L(z)<_{R_{\ell }^{\prime \prime }}\widehat{a}_{u}$ for every vertex $z\in
N(u)$. Therefore, since the endpoint $L(z)$ does not move by Transformation~%
\ref{trans6}, it follows that also $L(z)<_{R_{\ell }^{\prime \prime \prime }}%
\widehat{a}_{u}$ for every vertex $z\in N(u)$.

Note now that $\widehat{a}_{u}<_{R_{\ell }^{\prime \prime }}L_{0}$, since $%
L_{0}=\min_{R_{\ell }^{\prime \prime }}\{L(x)\ |\ x\in V_{B}\setminus
N(u)\setminus V_{0}(u),P_{u}\ll _{R_{\ell }^{\prime \prime }}P_{x}\}$.
Furthermore, recall by Corollary~\ref{property-trans5} that for all vertices 
$z\in N_{1}\setminus N_{2}$, for which $R(z)<_{R_{\ell }^{\prime \prime
}}L_{0}$, the values $R(z)$ lie immediately before $L_{0}$ in $R_{\ell
}^{\prime \prime }$. Therefore, since $\widehat{a}_{u}<_{R_{\ell }^{\prime
\prime }}L_{0}$, it follows in particular that $\widehat{a}_{u}<_{R_{\ell
}^{\prime \prime }}R(z)$ for every $z\in N_{1}\setminus N_{2}$, and thus $%
L(z)<_{R_{\ell }^{\prime \prime }}\widehat{a}_{u}<_{R_{\ell }^{\prime \prime
}}R(z)$ for every $z\in N_{1}\setminus N_{2}\subseteq N(u)$ by the previous
paragraph. Therefore, since $\widehat{a}_{u}<_{R_{\ell }^{\prime \prime
}}L_{0}$, and since the upper endpoint $R(u)$ of the line $P_{u}$ lies in $%
R_{\ell }^{\prime \prime }$ immediately before $\min_{R_{\ell }^{\prime
\prime }}\{L_{0},R(z)\ |\ z\in N_{1}\setminus N_{2}\}$, cf.~the statement of
Transformation~\ref{trans6}, it follows that also $L(z)<_{R_{\ell }^{\prime
\prime \prime }}\widehat{a}_{u}<_{R_{\ell }^{\prime \prime \prime
}}R(u)<_{R_{\ell }^{\prime \prime \prime }}R(z)$ for every $z\in
N_{1}\setminus N_{2}$. That is, $L(z)<_{R_{\ell }^{\prime \prime \prime
}}R(u)<_{R_{\ell }^{\prime \prime \prime }}R(z)$ for every $z\in
N_{1}\setminus N_{2}$, and thus $P_{u}$ intersects $P_{z}$ in $R_{\ell
}^{\prime \prime \prime }$ for every $z\in N_{1}\setminus N_{2}$. Therefore,
since all vertices of $\{u\}\cup N_{1}\setminus N_{2}$ are bounded in $%
R_{\ell }^{\prime \prime \prime }$, $u$ is adjacent in $R_{\ell }^{\prime
\prime \prime }$ to all vertices of $N_{1}\setminus N_{2}$.

Consider now an arbitrary vertex $z\in N_{2}$. Recall that $%
r_{0}=\min_{R_{\ell }^{\prime }}\{r(z)\ |\ z\in N_{2}\}$, i.e.~$r_{0}\leq
_{R_{\ell }^{\prime }}r(z)$. Thus, since the endpoint $r(z)$ does not move
by Transformation~\ref{trans5}, it follows that also $r_{0}\leq _{R_{\ell
}^{\prime \prime }}r(z)$. Furthermore, by Transformation~\ref{trans5}, $%
r(v)<_{R_{\ell }^{\prime \prime }}r_{0}\leq _{R_{\ell }^{\prime \prime
}}r(z) $ for every $v\in V_{0}(u)\cap V_{B}$. This holds clearly also in $%
R_{\ell }^{\prime \prime \prime }$, i.e.~$r(v)<_{R_{\ell }^{\prime \prime
\prime }}r(z)$ for every $v\in V_{0}(u)\cap V_{B}$. Since the lower endpoint
of the line $P_{u}$ comes immediately after $\max_{R_{\ell }^{\prime \prime
\prime }}\{r(v)\ |\ V_{0}(u)\cap V_{B}\}$, it follows that $r(v)<_{R_{\ell
}^{\prime \prime \prime }}l(u)=r(u)<_{R_{\ell }^{\prime \prime \prime }}r(z)$
for every $v\in V_{0}(u)\cap V_{B}$ and every $z\in N_{2}$. Thus, since also 
$L(z)<_{R_{\ell }^{\prime \prime \prime }}\widehat{a}_{u}<_{R_{\ell
}^{\prime \prime \prime }}R(u)$ for every $z\in N(u)$, it follows that $%
P_{u} $ intersects $P_{z}$ in $R_{\ell }^{\prime \prime \prime }$ for every $%
z\in N_{2}$. Therefore, since all vertices of $\{u\}\cup N_{2}$ are bounded
in $R_{\ell }^{\prime \prime \prime }$, $u$ is adjacent in $R_{\ell
}^{\prime \prime \prime }$ to all vertices of $N_{2}$. Thus, since $%
N_{2}\cup (N_{1}\setminus N_{2})=N(u)$, $u$ is adjacent in $R_{\ell
}^{\prime \prime \prime }$ to all vertices of $N(u)$.

\medskip

\emph{Part 2.} In this part we prove that $u$ is not adjacent in $R_{\ell
}^{\prime \prime \prime }$ to any vertex of $V\setminus N[u]$. To this end,
recall first by Lemma~\ref{bounded-hovering} that $u^{\ast }$ is a bounded
covering vertex of $u$ in $G$ (and thus~$u^{\ast }\in V_{0}(u)\cap V_{B}$),
such that $P_{u}$ intersects $P_{u^{\ast }}$ in the initial projection
representation $R$ and $\phi _{u^{\ast }}<\phi _{u}$ in $R$. Therefore, $%
l(u^{\ast })<_{R}b_{u}=_{R}r(u)$ by Lemma~\ref{unbounded-hovering}.
Furthermore, $u^{\ast }\notin V(G_{0})$ by Observation~\ref%
{bounded-hovering-obs}. Therefore, the endpoint $l(u^{\ast })$ remains the
same in the representations $R$, $R_{\ell }$, $R_{\ell }^{\prime }$, and $%
R_{\ell }^{\prime \prime }$, and thus $l(u^{\ast })<_{R_{\ell }^{\prime
\prime }}b_{u}$, since also $l(u^{\ast })<_{R}b_{u}$. Therefore, since $%
b_{u}<_{R_{\ell }^{\prime \prime }}\widehat{b}_{u}=_{R_{\ell }^{\prime
\prime }}r(u)$, it follows that also $l(u^{\ast })<_{R_{\ell }^{\prime
\prime }}\widehat{b}_{u}=_{R_{\ell }^{\prime \prime }}r(u)$. Recall now that 
$L_{0}=\min_{R_{\ell }^{\prime \prime }}\{L(x)\ |\ x\in V_{B}\setminus
N(u)\setminus V_{0}(u),P_{u}\ll _{R_{\ell }^{\prime \prime }}P_{x}\}$.
Denote by $y_{0}$ the vertex of $V_{B}\setminus N(u)\setminus V_{0}(u)$,
such that $L_{0}=L(y_{0})$ in $R_{\ell }^{\prime \prime }$, and thus $%
P_{u}\ll _{R_{\ell }^{\prime \prime }}P_{y_{0}}$. Therefore, since $%
l(u^{\ast })<_{R_{\ell }^{\prime \prime }}r(u)$, it follows that $l(u^{\ast
})<_{R_{\ell }^{\prime \prime }}r(u)<_{R_{\ell }^{\prime \prime }}l(y_{0})$.
Now, since $u^{\ast }\in V_{0}(u)$ and $y_{0}\notin N(u)\cup V_{0}(u)$, it
follows that $u^{\ast }y_{0}\notin E$. Thus, $P_{u^{\ast }}\ll _{R_{\ell
}^{\prime \prime }}P_{y_{0}}$, since both $u^{\ast }$ and $y_{0}$ are
bounded vertices and $l(u^{\ast })<_{R_{\ell }^{\prime \prime }}l(y_{0})$.
Moreover, since by Transformation~\ref{trans6} only the line $P_{u}$ is
moved, it follows that also $P_{u^{\ast }}\ll _{R_{\ell }^{\prime \prime
\prime }}P_{y_{0}}$.

Recall that $u^{\ast }\notin V(G_{0})$ and that $u^{\ast }$ is adjacent to
every vertex of $V(G_{0})\setminus \{u\}$ by Observation~\ref%
{bounded-hovering-obs}. Therefore $u^{\ast }\in N(x_{2})$, since $x_{2}\in
V(G_{0})\setminus \{u\}$, and thus $P_{u^{\ast }}$ intersects the line
segment $\ell $ in $R_{\ell }$; in particular, $P_{u^{\ast }}$ intersects $%
P_{u}$ in $R_{\ell }$. Moreover, since by Transformation~\ref{trans4} the
parallelogram $P_{u^{\ast }}$ is not modified, $P_{u^{\ast }}$ intersects $%
P_{u}$ also in $R_{\ell }^{\prime }$. Denote by $z_{0}$ the vertex of $N_{2}$%
, such that $r_{0}=r(z_{0})$. We will now prove that $r(u)<_{R_{\ell
}^{\prime }}r_{0}=r(z_{0})$. Suppose first that $z_{0}\notin V(G_{0})$.
Then, in particular, either $r(z_{0})<_{R_{\ell }^{\prime }}b_{\ell
}-\varepsilon <_{R_{\ell }^{\prime }}l(x_{2})$ or $r(x_{2})<_{R_{\ell
}^{\prime }}b_{\ell }+\varepsilon <_{R_{\ell }^{\prime }}r(z_{0})$ by
Remarks~\ref{trans4-remark-1} and~\ref{trans4-remark-2}. Recall that $\ell
_{0}=\max_{R_{\ell }^{\prime }}\{l(x)\ |\ x\in V_{0}(u)\}$ and that $%
z_{0}\in N_{2}$, and thus $l(x_{2})\leq _{R_{\ell }^{\prime }}\ell
_{0}<_{R_{\ell }^{\prime }}r(z_{0})$. Therefore $r(x_{2})<_{R_{\ell
}^{\prime }}b_{\ell }+\varepsilon <_{R_{\ell }^{\prime }}r(z_{0})$. Thus,
since $u\in V(G_{0})$, also $r(u)<_{R_{\ell }^{\prime }}b_{\ell
}+\varepsilon <_{R_{\ell }^{\prime }}r(z_{0})$ in the case where $%
z_{0}\notin V(G_{0})$. Suppose now that $z_{0}\in V(G_{0})$; then $%
r(u)<_{R_{0}}r(z_{0})$ by Lemma~\ref{unbounded-bounded}. Thus, since $R_{0}$
is a sub-representation of $R_{\ell }^{\prime }$, and since $r(z_{0})$ does
not decrease by Transformation~\ref{trans4}, it follows that $r(u)<_{R_{\ell
}^{\prime }}r(z_{0})=r_{0}$ in the case where $z_{0}\in V(G_{0})$. That is, $%
r(u)<_{R_{\ell }^{\prime }}r_{0}=r(z_{0})$ in both cases, where $z_{0}\in
V(G_{0})$ and $z_{0}\notin V(G_{0})$.

We will now prove that $P_{u^{\ast }}$ intersects $P_{u}$ also in $R_{\ell
}^{\prime \prime }$. This holds clearly in the case where the right line of $%
P_{u^{\ast }}$ is not moved during Transformation~\ref{trans5}, since $%
P_{u^{\ast }}$ intersects $P_{u}$ in $R_{\ell }^{\prime }$ by the previous
paragraph. Suppose now that the right line of $P_{u^{\ast }}$ is moved
during Transformation~\ref{trans5}. Then, $r(u)<_{R_{\ell }^{\prime
}}r_{0}<_{R_{\ell }^{\prime }}r(u^{\ast })$, while $r(u^{\ast })$ comes
immediately before $r_{0}$ in $R_{\ell }^{\prime \prime }$, i.e.~$%
r(u)<_{R_{\ell }^{\prime \prime }}r(u^{\ast })<_{R_{\ell }^{\prime \prime
}}r_{0}$, since $r_{0}=r(z_{0})$ does not move during Transformation~\ref%
{trans5}. Therefore, since the left line of $P_{u^{\ast }}$ does not move
during Transformation~\ref{trans5}, and since $P_{u^{\ast }}$ intersects $%
P_{u}$ in $R_{\ell }^{\prime }$, it follows that $P_{u^{\ast }}$ intersects $%
P_{u}$ also in $R_{\ell }^{\prime \prime }$.

Denote by $v_{0}$ the vertex of $V_{0}(u)\cap V_{B}$, such that $%
r(v_{0})=\max_{R_{\ell }^{\prime \prime }}\{r(v)\ |\ v\in V_{0}(u)\cap
V_{B}\}$, cf.~the statement of Transformation~\ref{trans6}. Since $v_{0}\in
V_{0}(u)$ and $y_{0}\notin N(u)\cup V_{0}(u)$, it follows that $%
v_{0}y_{0}\notin E$. Therefore, since both $v_{0}$ and $y_{0}$ are bounded
vertices, either $P_{y_{0}}\ll _{R_{\ell }^{\prime \prime }}P_{v_{0}}$ or $%
P_{v_{0}}\ll _{R_{\ell }^{\prime \prime }}P_{y_{0}}$. Suppose that $%
P_{y_{0}}\ll _{R_{\ell }^{\prime \prime }}P_{v_{0}}$, and thus $P_{u^{\ast
}}\ll _{R_{\ell }^{\prime \prime }}P_{y_{0}}\ll _{R_{\ell }^{\prime \prime
}}P_{v_{0}}$. Then, since $u^{\ast },v_{0}\in V_{0}(u)$ and since $V_{0}(u)$
is connected, there exists at least one vertex $v\in V_{0}(u)$, such that $%
P_{v}$ intersects $P_{y_{0}}$ in $R_{\ell }^{\prime \prime }$. Similarly $%
vy_{0}\notin E$, since $y_{0}\notin N(u)\cup V_{0}(u)$. Therefore, since $%
y_{0}$ is a bounded vertex, $v$ must be an unbounded vertex with $\phi
_{v}>\phi _{y_{0}}$ in $R_{\ell }^{\prime \prime }$, and thus $N(v)\subseteq
N(y_{0})$ by Lemma~\ref{intersecting-unbounded}. Then, $N(v)$ includes at
least one vertex $v^{\prime }\in V_{0}(u)$, and thus $v^{\prime }\in
N(y_{0}) $. Therefore, $y_{0}\in V_{0}(u)$, which is a contradiction. Thus, $%
P_{v_{0}}\ll _{R_{\ell }^{\prime \prime }}P_{y_{0}}$. Moreover, since by
Transformation~\ref{trans6} only the line $P_{u}$ is moved, it follows that
also $P_{v_{0}}\ll _{R_{\ell }^{\prime \prime \prime }}P_{y_{0}}$.

We will prove in the following that $u$ is not adjacent in $R_{\ell
}^{\prime \prime \prime }$ to any vertex $x\notin N(u)$. For the sake of
contradiction, suppose that $P_{x}$ intersects $P_{u}$ in $R_{\ell }^{\prime
\prime \prime }$. We distinguish in the following the cases regarding $x$.

\emph{Case 2a.} $x\in V_{B}\setminus N(u)$ (i.e.~$x$ is bounded) and $x\in
V_{0}(u)$. Then, $r(x)\leq _{R_{\ell }^{\prime \prime }}r(v_{0})$ and $%
r(u^{\ast })\leq _{R_{\ell }^{\prime \prime }}r(v_{0})$ by definition of $%
v_{0}$, and thus also $r(x)\leq _{R_{\ell }^{\prime \prime \prime }}r(v_{0})$
and $r(u^{\ast })\leq _{R_{\ell }^{\prime \prime \prime }}r(v_{0})$.
Therefore, by Transformation~\ref{trans6}, $r(x)\leq _{R_{\ell }^{\prime
\prime \prime }}r(v_{0})<_{R_{\ell }^{\prime \prime \prime }}l(u)$, i.e.~$%
r(x)<_{R_{\ell }^{\prime \prime \prime }}l(u)$. Thus $L(u)<_{R_{\ell
}^{\prime \prime \prime }}R(x)$, since we assumed that $P_{x}$ intersects $%
P_{u}$ in $R_{\ell }^{\prime \prime \prime }$. Furthermore, $r(x)\leq
_{R_{\ell }^{\prime \prime \prime }}r(v_{0})<_{R_{\ell }^{\prime \prime
\prime }}l(y_{0})$, i.e.~$r(x)<_{R_{\ell }^{\prime \prime \prime }}l(y_{0})$%
, since $P_{v_{0}}\ll _{R_{\ell }^{\prime \prime \prime }}P_{y_{0}}$. Recall
by Corollary~\ref{property-trans5} that for all vertices $z\in
N_{1}\setminus N_{2}$, for which $R(z)<_{R_{\ell }^{\prime \prime
}}L_{0}=L(y_{0})$, the values $R(z)$ lie immediately before $L_{0}$ in $%
R_{\ell }^{\prime \prime }$, and thus also in $R_{\ell }^{\prime \prime
\prime }$. Thus, since $L(u)<_{R_{\ell }^{\prime \prime \prime }}R(x)$, and
since the upper point $L(u)=R(u)$ lies immediately before $\min
\{L_{0},R(z)\ |\ z\in N_{1}\setminus N_{2}\}$ in $R_{\ell }^{\prime \prime
\prime }$, it follows that $L(u)<_{R_{\ell }^{\prime \prime \prime
}}L_{0}=L(y_{0})<_{R_{\ell }^{\prime \prime \prime }}R(x)$. Therefore, since
also $r(x)<_{R_{\ell }^{\prime \prime \prime }}l(y_{0})$, $P_{x}$ intersects 
$P_{y_{0}}$ in $R_{\ell }^{\prime \prime \prime }$, and thus also in $%
R_{\ell }^{\prime \prime }$. Thus $xy_{0}\in E$, since both $x$ and $y_{0}$
are bounded, and therefore $y_{0}\in V_{0}(u)$, which is a contradiction.
Therefore, $P_{x}$ does not intersect $P_{u}$ in $R_{\ell }^{\prime \prime
\prime }$, for every $x\in V_{B}\setminus N(u)$, such that $x\in V_{0}(u)$.
In particular, since $u^{\ast },v_{0}\in V_{B}\setminus N(u)$ and $u^{\ast
},v_{0}\in V_{0}(u)$, it follows that neither $P_{u^{\ast }}$ nor $P_{v_{0}}$
intersects $P_{u}$ in $R_{\ell }^{\prime \prime \prime }$. Therefore, since $%
r(u^{\ast })\leq _{R_{\ell }^{\prime \prime \prime }}r(v_{0})<_{R_{\ell
}^{\prime \prime \prime }}l(u)$ by Transformation~\ref{trans6}, it follows
that $P_{u^{\ast }}\ll _{R_{\ell }^{\prime \prime \prime }}P_{u}$ and $%
P_{v_{0}}\ll _{R_{\ell }^{\prime \prime \prime }}P_{u}$.

\emph{Case 2b.} $x\in V_{B}\setminus N(u)$ (i.e.~$x$ is bounded) and $%
x\notin V_{0}(u)$. Then $u^{\ast }x\notin E$, since $u^{\ast }\in V_{0}(u)$.
Furthermore, since both $x$ and $u^{\ast }$ (resp.~$v_{0}$) are bounded
vertices, either $P_{x}\ll _{R_{\ell }^{\prime \prime \prime }}P_{u^{\ast }}$
or $P_{u^{\ast }}\ll _{R_{\ell }^{\prime \prime \prime }}P_{x}$ (resp.
either $P_{x}\ll _{R_{\ell }^{\prime \prime \prime }}P_{v_{0}}$ or $%
P_{v_{0}}\ll _{R_{\ell }^{\prime \prime \prime }}P_{x}$). If $P_{x}\ll
_{R_{\ell }^{\prime \prime \prime }}P_{u^{\ast }}$ (resp.~$P_{x}\ll
_{R_{\ell }^{\prime \prime \prime }}P_{v_{0}}$), then $P_{x}\ll _{R_{\ell
}^{\prime \prime \prime }}P_{u^{\ast }}\ll _{R_{\ell }^{\prime \prime \prime
}}P_{u}$ (resp.~$P_{x}\ll _{R_{\ell }^{\prime \prime \prime }}P_{v_{0}}\ll
_{R_{\ell }^{\prime \prime \prime }}P_{u}$) by the previous paragraph. This
is a contradiction to the assumption that $P_{x}$ intersects $P_{u}$ in $%
R_{\ell }^{\prime \prime \prime }$. Therefore $P_{u^{\ast }}\ll _{R_{\ell
}^{\prime \prime \prime }}P_{x}$ and $P_{v_{0}}\ll _{R_{\ell }^{\prime
\prime \prime }}P_{x}$, and thus also $P_{u^{\ast }}\ll _{R_{\ell }^{\prime
\prime }}P_{x}$ and $P_{v_{0}}\ll _{R_{\ell }^{\prime \prime }}P_{x}$. Thus,
in particular $r(v_{0})<_{R_{\ell }^{\prime \prime \prime }}l(x)$.
Furthermore, the lower endpoint $l(u)=r(u)$ of $P_{u}$ comes by
Transformation~\ref{trans6} immediately after $r(v_{0})$ in $R_{\ell
}^{\prime \prime \prime }$, and thus $r(v_{0})<_{R_{\ell }^{\prime \prime
\prime }}r(u)<_{R_{\ell }^{\prime \prime \prime }}l(x)$. Then, $%
L(x)<_{R_{\ell }^{\prime \prime \prime }}R(u)$, since we assumed that $P_{x}$
intersects $P_{u}$ in $R_{\ell }^{\prime \prime \prime }$.

We distinguish now the cases according to the relative positions of $P_{u}$
and $P_{x}$ in $R_{\ell }^{\prime \prime }$. If $P_{x}\ll _{R_{\ell
}^{\prime \prime }}P_{u}$, then $P_{u^{\ast }}\ll _{R_{\ell }^{\prime \prime
}}P_{x}\ll _{R_{\ell }^{\prime \prime }}P_{u}$ by the previous paragraph,
which is a contradiction, since $P_{u^{\ast }}$ intersects $P_{u}$ in $%
R_{\ell }^{\prime \prime }$, as we proved above. If $P_{u}\ll _{R_{\ell
}^{\prime \prime }}P_{x}$, then $L_{0}\leq _{R_{\ell }^{\prime \prime }}L(x)$%
, since $x\in V_{B}\setminus N(u)\setminus V_{0}(u)$ and $%
L_{0}=\min_{R_{\ell }^{\prime \prime }}\{L(x)\ |\ x\in V_{B}\setminus
N(u)\setminus V_{0}(u),P_{u}\ll _{R_{\ell }^{\prime \prime }}P_{x}\}$. Thus $%
R(u)<_{R_{\ell }^{\prime \prime \prime }}L_{0}\leq _{R_{\ell }^{\prime
\prime \prime }}L(x)$ by Transformation~\ref{trans3}, which is a
contradiction, since $L(x)<_{R_{\ell }^{\prime \prime \prime }}R(u)$ by the
previous paragraph. Suppose that $P_{x}$ intersects $P_{u}$ in $R_{\ell
}^{\prime \prime }$. Note that $x\notin V(G_{0})$, since $x\notin N(u)\cup
V_{0}(u)$ and $V(G_{0})\subseteq N[u]\cup V_{0}(u)$ by Observation~\ref%
{V(G0)}. Thus, since we assumed that $P_{x}$ intersects $P_{u}$ in $R_{\ell
}^{\prime \prime }$, i.e.~$P_{x}$ intersects the line segment $\ell $ in $%
R_{\ell }^{\prime \prime }$, it follows that $P_{x}$ intersects also $%
P_{x_{2}}$ in $R_{\ell }^{\prime \prime }$. Therefore $x\in N(x_{2})$, since
both $x$ and $x_{2}$ are bounded, and thus $x\in V_{0}(u)$, since also $%
x_{2}\in V_{0}(u)$. This is a contradiction, since $x\notin V_{0}(u)$ by the
assumption of Case 2b. Therefore, $P_{x}$ does not intersect $P_{u}$ in $%
R_{\ell }^{\prime \prime \prime }$, for every $x\in V_{B}\setminus N(u)$,
such that $x\notin V_{0}(u)$.

\emph{Case 2c.} $x\in V_{U}$ (i.e.~$x$ is unbounded), such that $\phi
_{x}<\phi _{u}$ in $R_{\ell }^{\prime \prime \prime }$. Then, since both $%
P_{x}$ and $P_{u}$ are lines in $R_{\ell }^{\prime \prime \prime }$, it
follows that $l(x)<_{R_{\ell }^{\prime \prime \prime }}l(u)$ and $%
R(x)>_{R_{\ell }^{\prime \prime \prime }}R(u)$. Thus, by Transformation~\ref%
{trans6}, $l(x)<_{R_{\ell }^{\prime \prime \prime }}r(v_{0})<_{R_{\ell
}^{\prime \prime \prime }}l(u)$ and $R(u)<_{R_{\ell }^{\prime \prime \prime
}}L_{0}=L(y_{0})<_{R_{\ell }^{\prime \prime \prime }}R(x)$. Since $%
P_{v_{0}}\ll _{R_{\ell }^{\prime \prime \prime }}P_{y_{0}}$ (as we proved
above), it follows that $P_{x}$ intersects both $P_{v_{0}}$ and $P_{y_{0}}$
in $R_{\ell }^{\prime \prime \prime }$ (and thus also in $R_{\ell }^{\prime
\prime })$, and that $\phi _{x}<\phi _{v_{0}}$ and $\phi _{x}<\phi _{y_{0}}$
in both $R_{\ell }^{\prime \prime }$ and $R_{\ell }^{\prime \prime \prime }$%
. Therefore, since both $v_{0}$ and $y_{0}$ are bounded, it follows that $%
x\in N(v_{0})$ and $x\in N(y_{0})$. Thus $x,y_{0}\in V_{0}(u)$, since $%
v_{0}\in V_{0}(u)$. This is a contradiction, since $y_{0}\notin V_{0}(u)$ by
definition of $y_{0}$. Therefore, $P_{x}$ does not intersect $P_{u}$ in $%
R_{\ell }^{\prime \prime \prime }$, for every $x\in V_{U}$, for which $\phi
_{x}<\phi _{u}$ in $R_{\ell }^{\prime \prime \prime }$.

\medskip

Summarizing, due to Part 1 and due to Cases 2a, 2b, and 2c of Part 2, it
follows that $P_{u}$ intersects in $R_{\ell }^{\prime \prime \prime }$ only
the parallelograms $P_{z}$, for every $z\in N(u)$, and possibly some trivial
parallelograms (lines) $P_{x}$, where $x\in V_{U}$ and $\phi _{x}>\phi _{u}$
in $R_{\ell }^{\prime \prime \prime }$. However, since $\phi _{x}>\phi _{u}$
in $R_{\ell }^{\prime \prime \prime }$ for all these vertices $x$, it
follows that $u$ is not adjacent to these vertices in $R_{\ell }^{\prime
\prime \prime }$. Thus $R_{\ell }^{\prime \prime \prime }$ is a projection
representation of $G$, since $R_{\ell }^{\prime \prime }\setminus \{u\}$ is
a projection representation of $G\setminus \{u\}$ by Lemma~\ref{R''-ell}.
This completes the proof of the lemma.
\end{proof}

\medskip

The next lemma follows now easily by Lemma~\ref{R'''-ell} and by the fact
that $V_{0}(u)$ induces a connected subgraph of $G$.

\begin{lemma}
\label{V0(u)-R'''-ell}The (bounded) vertex $u$ has the right border property
in $R_{\ell }^{\prime \prime \prime }$, i.e.~there exists no pair of
vertices $z\in N(u)$ and $v\in V_{0}(u)$, such that $P_{z}\ll _{R_{\ell
}^{\prime \prime \prime }}P_{v}$.
\end{lemma}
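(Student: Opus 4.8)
The plan is to unfold Definition~\ref{right-left-property-def} and argue by contradiction: suppose there exist $z\in N(u)$ and $v\in V_{0}(u)$ with $P_{z}\ll_{R_{\ell}^{\prime\prime\prime}}P_{v}$. By Lemma~\ref{R'''-ell} the representation $R_{\ell}^{\prime\prime\prime}$ is a projection representation of $G$, so adjacency in $R_{\ell}^{\prime\prime\prime}$ coincides with adjacency in $G$; in particular $z\in N(u)$ forces $P_{z}$ to intersect $P_{u}$ in $R_{\ell}^{\prime\prime\prime}$, hence $P_{z}$ does not lie entirely to the left of $P_{u}$. I would also record two facts that are immediate from the construction. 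First, $z$ is bounded in $R_{\ell}^{\prime\prime\prime}$: it is already bounded in $R$ by Lemma~\ref{unbounded-bounded} (since $u$ is unbounded in $R$), and the bounded/unbounded status of every vertex other than $u$ is preserved. Second, every \emph{bounded} $x\in V_{0}(u)$ satisfies $P_{x}\ll_{R_{\ell}^{\prime\prime\prime}}P_{u}$; this is exactly what Case~2a inside the proof of Lemma~\ref{R'''-ell} shows (no such $P_{x}$ intersects $P_{u}$), together with the placement $r(v_{0})<l(u)$ coming from Transformation~\ref{trans6}.

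The bounded case is then instantaneous: if $v$ is bounded, $P_{z}\ll_{R_{\ell}^{\prime\prime\prime}}P_{v}\ll_{R_{\ell}^{\prime\prime\prime}}P_{u}$ gives $P_{z}\ll_{R_{\ell}^{\prime\prime\prime}}P_{u}$, contradicting that $P_{z}$ intersects $P_{u}$. So I may assume $v$ is unbounded. Here I would invoke connectivity of $V_{0}(u)$ (Lemma~\ref{two-components}) together with the existence of a bounded covering vertex $u^{\ast}\in V_{0}(u)$ (Lemma~\ref{bounded-hovering}): since $u^{\ast}\neq v$, the set $V_{0}(u)$ has at least two vertices, so the unbounded vertex $v$ has a neighbor $b\in V_{0}(u)$, and $b$ is bounded because no two unbounded vertices are adjacent. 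Applying Lemma~\ref{unbounded-bounded} to the unbounded $v$ and its neighbor $b$ yields $r(v)<_{R_{\ell}^{\prime\prime\prime}}r(b)$, while the second recorded fact gives $r(b)<_{R_{\ell}^{\prime\prime\prime}}l(u)$; hence $r(v)<_{R_{\ell}^{\prime\prime\prime}}l(u)$, i.e.~the lower endpoint of the line $P_{v}$ lies strictly to the left of $P_{u}$.

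With $r(v)$ trapped to the left of $l(u)$, I would compare the upper endpoints. If $L(v)<_{R_{\ell}^{\prime\prime\prime}}L(u)$, then $P_{v}\ll_{R_{\ell}^{\prime\prime\prime}}P_{u}$, and again $P_{z}\ll_{R_{\ell}^{\prime\prime\prime}}P_{v}\ll_{R_{\ell}^{\prime\prime\prime}}P_{u}$ contradicts $z\in N(u)$. Otherwise $L(u)<_{R_{\ell}^{\prime\prime\prime}}L(v)$; since $r(v)<_{R_{\ell}^{\prime\prime\prime}}l(u)\le_{R_{\ell}^{\prime\prime\prime}}r(u)$ rules out $P_{u}\ll_{R_{\ell}^{\prime\prime\prime}}P_{v}$, the line $P_{v}$ must intersect $P_{u}$. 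Now $v\notin N(u)$ (as $v\in V_{0}(u)$), so adjacency of the bounded $u$ with the unbounded $v$ is excluded, which forces $\phi_{u}<\phi_{v}$; Lemma~\ref{intersecting-unbounded} then gives $N(v)\subseteq N(u)$. But $b\in N(v)$ and $b\in V_{0}(u)$, so $b\notin N(u)$, a contradiction. This closes every case and establishes the right border property.

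I expect the only genuinely load-bearing step to be the unbounded case, specifically showing that an unbounded $v\in V_{0}(u)$ with $P_{z}\ll P_{v}$ is nonetheless forced to intersect $P_{u}$: the two endpoint comparisons themselves are routine, but the point is that connectivity of $V_{0}(u)$ supplies a \emph{bounded} neighbor $b$ of $v$ which, sitting to the left of $P_{u}$, pins $r(v)$ left of $l(u)$ and prevents $P_{v}$ from floating off to the right of $P_{u}$. Once $P_{v}$ is known to meet $P_{u}$, Lemma~\ref{intersecting-unbounded} and the presence of $b\in V_{0}(u)$ dispatch the remaining possibility cleanly.
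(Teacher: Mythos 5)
Your proof is correct and takes essentially the same route as the paper's: both arguments first pin down that $P_{x}\ll_{R_{\ell}^{\prime\prime\prime}}P_{u}$ for every bounded $x\in V_{0}(u)$ (you by citing Case~2a of Lemma~\ref{R'''-ell} together with the placement from Transformation~\ref{trans6}, the paper by re-deriving it from Lemma~\ref{R'''-ell} and the same placement), dispatch a bounded $v$ by transitivity of $\ll$, and handle an unbounded $v$ by extracting a bounded neighbor $b\in V_{0}(u)$ from the connectivity of $V_{0}(u)$ and applying Lemma~\ref{unbounded-bounded}. The only (equally valid) difference is in the final intersection subcase: the paper derives $r(u)<_{R_{\ell}^{\prime\prime\prime}}r(v)<_{R_{\ell}^{\prime\prime\prime}}r(v^{\prime})$ and contradicts $P_{v^{\prime}}\ll_{R_{\ell}^{\prime\prime\prime}}P_{u}$ directly, whereas you invoke Lemma~\ref{intersecting-unbounded} to obtain $N(v)\subseteq N(u)$ and contradict with $b\in N(v)\setminus N(u)$.
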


\begin{proof}
Recall first that $u_{0}^{\ast }\in V_{0}(u)\cap V_{B}$ by Lemma~\ref%
{bounded-hovering}, i.e.~$V_{0}(u)\cap V_{B}\neq \emptyset $. Furthermore,
recall that by Transformation~\ref{trans6} the lower endpoint $l(u)=r(u)$ of 
$P_{u}$ comes immediately after $\max \{r(v)\ |\ v\in V_{0}(u)\cap V_{B}\}$
in $R_{\ell }^{\prime \prime \prime }$, and thus $r(v)<_{R_{\ell }^{\prime
\prime \prime }}r(u)$ for every $v\in V_{0}(u)\cap V_{B}$. Since $u$ is a
bounded vertex in $R_{\ell }^{\prime \prime \prime }$, and since $R_{\ell
}^{\prime \prime \prime }$ is a projection representation of $G$ by Lemma~%
\ref{R'''-ell}, $P_{u}$ does not intersect $P_{v}$ in $R_{\ell }^{\prime
\prime \prime }$, for any $v\in V_{0}(u)\cap V_{B}$. Therefore, for every $%
v\in V_{0}(u)\cap V_{B}$, either $P_{u}\ll _{R_{\ell }^{\prime \prime \prime
}}P_{v}$ or $P_{v}\ll _{R_{\ell }^{\prime \prime \prime }}P_{u}$. If $%
P_{u}\ll _{R_{\ell }^{\prime \prime \prime }}P_{v}$ for a vertex $v\in
V_{0}(u)\cap V_{B}$, then in particular $r(u)<_{R_{\ell }^{\prime \prime
\prime }}r(v)$, which is a contradiction. Therefore, $P_{v}\ll _{R_{\ell
}^{\prime \prime \prime }}P_{u}$ for every $v\in V_{0}(u)\cap V_{B}$.

Suppose now for the sake of contradiction that $P_{z}\ll _{R_{\ell }^{\prime
\prime \prime }}P_{v}$ for two vertices $z\in N(u)$ and $v\in V_{0}(u)$.
Suppose first that $v$ is a bounded vertex, i.e.~$v\in V_{0}(u)\cap V_{B}$.
Then, since $P_{v}\ll _{R_{\ell }^{\prime \prime \prime }}P_{u}$ by the
previous paragraph, it follows that $P_{z}\ll _{R_{\ell }^{\prime \prime
\prime }}P_{v}\ll _{R_{\ell }^{\prime \prime \prime }}P_{u}$, and thus $%
z\notin N(u)$, which is a contradiction.

Suppose now that $v$ is an unbounded vertex. Then, since $V_{0}(u)$ is
connected and $V_{0}(u)\cap V_{B}\neq \emptyset $, there exists at least one
bounded vertex $v^{\prime }\in V_{0}(u)\cap V_{B}$, such that $v^{\prime
}\in N(v)$. Then $P_{v^{\prime }}\ll _{R_{\ell }^{\prime \prime \prime
}}P_{u}$, as we proved above. We distinguish now the cases according to the
relative positions of $P_{v}$ and $P_{u}$ in $R_{\ell }^{\prime \prime
\prime }$. If $P_{v}\ll _{R_{\ell }^{\prime \prime \prime }}P_{u}$, then $%
P_{z}\ll _{R_{\ell }^{\prime \prime \prime }}P_{v}\ll _{R_{\ell }^{\prime
\prime \prime }}P_{u}$ by the assumption on $z$ and $v$, and thus $z\notin
N(u)$, which is a contradiction. If $P_{u}\ll _{R_{\ell }^{\prime \prime
\prime }}P_{v}$, then $P_{v^{\prime }}\ll _{R_{\ell }^{\prime \prime \prime
}}P_{u}\ll _{R_{\ell }^{\prime \prime \prime }}P_{v}$, and thus $v^{\prime
}\notin N(v)$, which is again a contradiction. Suppose that $P_{v}$
intersects $P_{u}$ in $R_{\ell }^{\prime \prime \prime }$. Then, $\phi
_{v}>\phi _{u}$ in $R_{\ell }^{\prime \prime \prime }$, since $u$ is bounded
in $R_{\ell }^{\prime \prime \prime }$ and $v\notin N(u)$. Therefore, in
particular $r(u)<_{R_{\ell }^{\prime \prime \prime }}r(v)$. Furthermore,
since $v$ is unbounded and $v^{\prime }\in N(u)$, it follows that $%
r(v)<_{R_{\ell }^{\prime \prime \prime }}r(v^{\prime })$ by Lemma~\ref%
{unbounded-bounded}, and thus $r(u)<_{R_{\ell }^{\prime \prime \prime
}}r(v)<_{R_{\ell }^{\prime \prime \prime }}r(v^{\prime })$, i.e.~$%
r(u)<_{R_{\ell }^{\prime \prime \prime }}r(v^{\prime })$. This is a
contradiction, since $P_{v^{\prime }}\ll _{R_{\ell }^{\prime \prime \prime
}}P_{u}$ for every $v^{\prime }\in V_{0}(u)\cap V_{B}$, as we proved above.
Summarizing, there exist no vertices $z\in N(u)$ and $v\in V_{0}(u)$, such
that $P_{z}\ll _{R_{\ell }^{\prime \prime \prime }}P_{v}$. This completes
the proof of the lemma.
\end{proof}

\subsection*{The correctness of Condition~\ref{cond1}}

Note now that the projection representation $R_{\ell }^{\prime \prime \prime
}$ of $G$ (cf.~Lemma~\ref{R'''-ell}) has $k-1$ unbounded vertices, since the
input graph $G$ has $k$ unbounded vertices, and since $u$ is bounded in $%
R_{\ell }^{\prime \prime \prime }$. Therefore, the projection representation 
$R^{\ast }=R_{\ell }^{\prime \prime \prime }$ satisfies the conditions of
Theorem~\ref{no-property-thm}. However, in order to complete the proof of
Theorem~\ref{no-property-thm}, we have to prove the correctness of Condition~%
\ref{cond1} (cf.~Lemma~\ref{lem-cond1}). To this end, we first prove Lemma~%
\ref{R'''-ell-only-N(u)-V0}.

Recall that for simplicity reasons, before applying Transformations~\ref%
{trans4},~\ref{trans5}, and~\ref{trans6}, we have added to $G$ an isolated
bounded vertex $t$, and thus also $t\in V_{B}\setminus N(u)\setminus
V_{0}(u) $. This isolated vertex $t$ corresponds to a parallelogram $P_{t}$,
such that $P_{v}\ll _{R}P_{t}$ and $P_{v}\ll _{R_{\ell }}P_{t}$ for every
other vertex $v$ of $G$; thus also $P_{v}\ll _{R_{\ell }^{\prime }}P_{t}$, $%
P_{v}\ll _{R_{\ell }^{\prime \prime }}P_{t}$, and $P_{v}\ll _{R_{\ell
}^{\prime \prime \prime }}P_{t}$ for every vertex $v\neq t$ of $G$. The next
lemma follows now easily by Transformation~\ref{trans6} and Lemma~\ref%
{R'''-ell}.

\begin{lemma}
\label{R'''-ell-only-N(u)-V0}If $V_{B}\setminus N(u)\setminus V_{0}(u)=\{t\}$%
, then there exists a projection representation $R^{\#}$ of $G$ with the
same unbounded vertices as in $R$, where $u$ has the right border property
in $R^{\#}$.
\end{lemma}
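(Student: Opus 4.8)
The plan is to read off $R^{\#}$ directly from the construction already in hand, namely from the line $P_u$ as positioned by Transformation~\ref{trans6}, but \emph{without} performing the final step that declares $u$ bounded. Concretely, in $R_{\ell}^{\prime\prime\prime}$ the vertex $u$ is a permutation-type (degenerate, line) bounded vertex, since Transformation~\ref{trans6} sets $L(u)=R(u)$ and $l(u)=r(u)$ before flipping its status; let $R^{\#}$ be the projection representation with exactly the same geometric data as $R_{\ell}^{\prime\prime\prime}$ for every vertex, except that $u$ is kept unbounded. Because Transformations~\ref{trans4}--\ref{trans6} leave the other $k-1$ unbounded vertices of $R_{\ell}$ unbounded and we keep $u$ unbounded, $R^{\#}$ has the same unbounded vertices as $R$. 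So it remains to check that $R^{\#}$ represents $G$ and that $u$ has the right border property in $R^{\#}$.

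First I would verify validity. By Lemma~\ref{R'''-ell}, in $R_{\ell}^{\prime\prime\prime}$ the line $P_u$ meets exactly $\{P_z : z\in N(u)\}$ together with some unbounded lines $P_x$ having $\phi_x>\phi_u$. Switching $u$ from bounded to unbounded can only \emph{lose} edges (the unbounded rule is more restrictive), never create spurious ones, so the only danger is losing some $uz$ with $z\in N(u)$. This cannot happen provided $\phi_z>\phi_u$ for every $z\in N(u)$ in $R_{\ell}^{\prime\prime\prime}$, which I would establish as follows: neighbours of $u$ outside $V(G_0)$ keep the slopes they had in $R$, where $\phi_z>\phi_u$ by Lemma~\ref{unbounded-bounded}, while neighbours of $u$ inside $V(G_0)$ have slope exceeding $\phi_u$ because $u$ is unbounded inside $R_0$; since Transformations~\ref{trans4} and~\ref{trans5} are parallel moves and Transformation~\ref{trans6} moves only $P_u$, all these inequalities persist (cf.~Remark~\ref{R-ell-slopes}). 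Hence every $uz$ with $z\in N(u)$ survives, the unbounded lines with $\phi_x>\phi_u$ still yield no edge, and, as $\phi_u=\min\{\phi_x : x\in V_U\}$, there is no unbounded line of smaller slope. This is exactly the point where the hypothesis $V_B\setminus N(u)\setminus V_0(u)=\{t\}$ is used: it makes Case 2b of Lemma~\ref{R'''-ell} vacuous, because the only bounded vertex lying outside $N(u)\cup V_0(u)$ is the auxiliary isolated vertex $t$, for which $P_v\ll_{R^{\#}}P_t$ for all $v$, so $P_u$ can never reach it. Therefore $u$ is adjacent in $R^{\#}$ to precisely $N(u)$ and to no vertex of $V\setminus N[u]$, and $R^{\#}$ is a projection representation of $G$.

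The right border property then transfers for free. By Lemma~\ref{V0(u)-R'''-ell}, $u$ has the right border property in $R_{\ell}^{\prime\prime\prime}$, i.e.\ there is no pair $z\in N(u)$, $v\in V_0(u)$ with $P_z\ll P_v$. The defining relation in Definition~\ref{right-left-property-def} involves only the parallelograms $P_z$ and $P_v$ with $z\in N(u)$ and $v\in V_0(u)$, and none of these is $P_u$; moreover the position of every vertex $w\neq u$ is identical in $R_{\ell}^{\prime\prime\prime}$ and in $R^{\#}$, since we altered only the bounded/unbounded status of $u$. Hence the same pair-freeness holds in $R^{\#}$, and $u$ has the right border property there as well, completing the lemma.

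I expect the validity step to be the main obstacle. The delicate issue is guaranteeing that no neighbour $z\in N(u)$ acquires a slope below $\phi_u$ in the final representation, because for an unbounded $u$ an intersecting parallelogram of smaller slope would fail to realize the required edge $uz$. Making this airtight requires the slope bookkeeping across the $\varepsilon$-squeezing (Remark~\ref{R-ell-slopes}) and the observation that Transformations~\ref{trans4}--\ref{trans6} preserve slopes, combined with the hypothesis that confines all bounded non-neighbours outside $V_0(u)$ to the single far-right vertex $t$; everything else (same unbounded set, transfer of the right border property) is essentially bookkeeping once $R^{\#}$ is correctly identified with the line $P_u$ of Transformation~\ref{trans6}.
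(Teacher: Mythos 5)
There is a genuine gap in your validity step, and it sits exactly where you predicted the difficulty would be. Your $R^{\#}$ keeps $P_u$ at the position assigned by Transformation~\ref{trans6} and merely re-declares $u$ unbounded; for the edges $uz$, $z\in N(u)$, to survive you then need $\phi_z>\phi_u$ in $R_{\ell}^{\prime\prime\prime}$ for \emph{every} $z\in N(u)$. Your justification --- that the inequalities $\phi_z>\phi_u$ from $R$ (for $z\notin V(G_0)$) and from $R_0$ (for $z\in V(G_0)$) ``persist'' because Transformations~\ref{trans4} and~\ref{trans5} are parallel moves and Transformation~\ref{trans6} moves only $P_u$ --- is a non sequitur: Transformation~\ref{trans6} \emph{repositions} $P_u$ (lower endpoint immediately after $\max\{r(v)\mid v\in V_0(u)\cap V_B\}$, upper endpoint immediately before $\min_{R_\ell^{\prime\prime}}\{L_0,R(z)\mid z\in N_1\setminus N_2\}$), and this changes $\phi_u$ itself. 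An inequality whose right-hand side has changed does not persist by inertia, and in the general construction the new $\phi_u$ need \emph{not} lie below every $\phi_z$ --- which is precisely why, in Theorems~\ref{right-property-thm} and~\ref{no-property-thm}, $u$ is made bounded (adjacency of two bounded vertices needs only intersection, which Part~1 of Lemma~\ref{R'''-ell} supplies, with no slope condition).

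You also spend the hypothesis $V_B\setminus N(u)\setminus V_0(u)=\{t\}$ in the wrong place: making Case~2b of Lemma~\ref{R'''-ell} vacuous guards against spurious adjacencies, but as you yourself observe, switching $u$ from bounded to unbounded can only \emph{lose} edges, so non-neighbours were never at risk. The paper's proof uses the hypothesis for the slope problem above: since every $L_0(z)$ and $L_0$ equals $L(t)$, and $P_t$ may be placed arbitrarily far to the right, (i) in Transformation~\ref{trans4} every $z\in N_1$ has $r(z)$ pushed past $\ell_0$, so $N_2=N(u)$ and $N_1\setminus N_2=\emptyset$, whence the minimum in Transformation~\ref{trans6} is just $L_0$; and (ii) the upper endpoint of $P_u$ then lands immediately before the arbitrarily large $L_0=L(t)$, making the new $\phi_u$ arbitrarily small --- in particular smaller than every other slope in $R_{\ell}^{\prime\prime\prime}$. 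Only then does re-unbounding $u$ preserve all edges $uz$. Your final step (transferring the right border property from $R_{\ell}^{\prime\prime\prime}$ to $R^{\#}$ via Lemma~\ref{V0(u)-R'''-ell}, since only $u$'s status changes) matches the paper and is fine; but without the arbitrarily-far placement of $t$ and the resulting $N_1\setminus N_2=\emptyset$, the slope claim at the heart of your validity argument is unsupported.
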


\begin{proof}
Suppose that $V_{B}\setminus N(u)\setminus V_{0}(u)=\{t\}$, i.e.~the set $%
V_{B}\setminus N(u)\setminus V_{0}(u)$ is empty in $G$ before the addition
of the isolated bounded vertex $t$. Then, the values $L_{0}$ and $L_{0}(z)$
for every $z\in N(u)$ are all equal to $L(t)$. Therefore, since we can place
the parallelogram $P_{t}$ that corresponds to $t$ arbitrarily much to the
right of every other parallelogram in the projection representation $R_{\ell
}$, these values can become arbitrarily big in $R_{\ell }$. Recall that $%
N_{1}=\{z\in N(u)\ |\ r(z)<_{R_{\ell }}\ell _{0}\}$ by definition. Then,
during Transformation~\ref{trans4}, $r(z)$ comes immediately after $\ell
_{0} $ on $L_{2}$ for every $z\in N_{1}$ (i.e.~$R(z)$ does not come
immediately before $L_{0}(z)$ on $L_{1}$, since $L_{0}(z)=L(t)$ is
arbitrarily big). Therefore, $\ell _{0}<_{R_{\ell }^{\prime }}r(z)$ for
every $z\in N_{1}$, and thus~$\ell _{0}<_{R_{\ell }^{\prime }}r(z)$ for
every $z\in N(u)$. That is, $N_{2}=N(u)$, since by definition $N_{2}=\{z\in
N(u)\ |\ \ell _{0}<_{R_{\ell }^{\prime }}r(z)\}$. Thus, in particular $%
N_{1}\setminus N_{2}=N_{1}\setminus N(u)=\emptyset $, since $N_{1}\subseteq
N(u)$ by definition.

Consider now the projection representation $R_{\ell }^{\prime \prime \prime
} $, which is obtained by applying Transformation~\ref{trans6} to $R_{\ell
}^{\prime \prime }$. Recall that by Transformation~\ref{trans6} the upper
endpoint $L(u)=R(u)$ of the line $P_{u}$ comes immediately before $\min
\{L_{0},R(z)\ |\ z\in N_{1}\setminus N_{2}\}=L_{0}$ in $R_{\ell }^{\prime
\prime \prime }$ (since $N_{1}\setminus N_{2}=\emptyset $ by the previous
paragraph). Then, since the value $L_{0}=L(t)$ has been chosen arbitrarily
big, the slope $\phi _{u}$ of $P_{u}$ becomes arbitrarily small in $R_{\ell
}^{\prime \prime \prime }$, i.e.~in particular smaller than all other slopes
in $R_{\ell }^{\prime \prime \prime }$. Furthermore, since $R_{\ell
}^{\prime \prime \prime }$ is a projection representation of $G$ by Lemma~%
\ref{R'''-ell}, it follows that $P_{u}$ intersects in $R_{\ell }^{\prime
\prime \prime }$ only the parallelograms $P_{z}$, for every $z\in N(u)$, and
possibly some trivial parallelograms (lines) $P_{x}$, where $x$ is an
unbounded vertex and $\phi _{x}>\phi _{u}$ in $R_{\ell }^{\prime \prime
\prime }$. Denote now by $R^{\#}$ the projection representation that is
obtained from $R_{\ell }^{\prime \prime \prime }$ if we make $u$ again an
unbounded vertex. Then, since the slope $\phi _{u}$ is smaller than all
other slopes in both $R_{\ell }^{\prime \prime \prime }$ and $R^{\#}$, it
follows in particular that $\phi _{u}<\phi _{z}$ in $R^{\#}$ for every $z\in
N(u)$. Therefore, $u$ remains adjacent to all vertices $z\in N(u)$ in the
graph induced by $R^{\#}$, and thus $R^{\#}$ is a projection representation
of $G$, in which $u$ is an unbounded vertex.

Finally, recall by Lemma~\ref{V0(u)-R'''-ell} that there exists no pair of
vertices $z\in N(u)$ and $v\in V_{0}(u)$, such that $P_{z}\ll _{R_{\ell
}^{\prime \prime \prime }}P_{v}$ in $R_{\ell }^{\prime \prime \prime }$.
Therefore, since the only difference between $R_{\ell }^{\prime \prime
\prime }$ and $R^{\#}$ is that $u$ is made bounded in $R^{\#}$, there exists
also in $R^{\#}$ no pair of vertices $z\in N(u)$ and $v\in V_{0}(u)$, such
that $P_{z}\ll _{R^{\#}}P_{v}$ in $R^{\#}$. That is, $u$ has the right
border property in $R^{\#}$. This completes the proof of the lemma.
\end{proof}

\medskip

Now we can prove the correctness of Condition~\ref{cond1}.

\begin{lemma}
\label{lem-cond1}Condition~\ref{cond1} is true.
\end{lemma}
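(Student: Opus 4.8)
The plan is to prove Condition~\ref{cond1} by induction on the number $|V(G)|$ of vertices, or equivalently by a minimal counterexample argument: I would assume that $G$ is a smallest connected graph in \textsc{Tolerance }$\cap$ \textsc{Trapezoid} admitting a projection representation $R$ with $u$ as its unique unbounded vertex, with $V_{0}(u)\neq\emptyset$ connected and $V=N[u]\cup V_{0}(u)$, for which \emph{no} projection representation keeps $u$ as the only unbounded vertex while giving it the right border property. First I would dispose of the trivial case: if $u$ already has the right border property in $R$, then $R^{\ast\ast}=R$ works, so I may assume $u$ lacks the right border property. I would also reduce to the case where $R$ is canonical; since $u$ is the only unbounded vertex and $V_{0}(u)\neq\emptyset$, this is exactly the situation in which Lemma~\ref{bounded-hovering} supplies a bounded covering vertex $u^{\ast}$ of $u$ with $P_{u^{\ast}}$ intersecting $P_{u}$ and $\phi_{u^{\ast}}<\phi_{u}$. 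The non-canonical subcase, in which $u$ is not inevitable, would be treated separately (essentially by first making $u$ inevitable while keeping it unbounded, which is possible precisely because $V_{0}(u)\neq\emptyset$).

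Second, with $u$ lacking the right border property, I would simply re-run the construction carried out in the proof of Theorem~\ref{no-property-thm}, now taking $G$ itself as the input graph. The crucial observation is that this construction invokes Condition~\ref{cond1} \emph{only} for the auxiliary graph $G_{0}^{\prime}=G[V(G_{0})\cup\{u^{\ast}\}]$, and that $G_{0}^{\prime}$ has strictly fewer vertices than $G$ (no vertex of $X_{1}\neq\emptyset$ lies in $G_{0}^{\prime}$). Hence, by the minimality of the counterexample $G$ (i.e.~by the induction hypothesis), Condition~\ref{cond1} already holds for $G_{0}^{\prime}$; this yields the representation $R_{0}^{\prime}$ of $G_{0}^{\prime}$ with $u$ unbounded and with the right border property, which is precisely the ingredient needed to build the $\varepsilon$-squeezed representation $R_{0}$ and then $R_{\ell}$. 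Feeding this into Transformations~\ref{trans4},~\ref{trans5},~and~\ref{trans6}, Lemmas~\ref{R-ell}--\ref{R'''-ell} apply and produce a projection representation $R_{\ell}^{\prime\prime\prime}$ of $G$ in which $u$ is bounded.

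Third, I would extract the required representation. Since $V=N[u]\cup V_{0}(u)$ by hypothesis, the set $V_{B}\setminus N(u)\setminus V_{0}(u)$ is empty before the auxiliary isolated vertex $t$ is added, hence equal to $\{t\}$ afterwards. Thus Lemma~\ref{R'''-ell-only-N(u)-V0} applies directly and, starting from $R_{\ell}^{\prime\prime\prime}$, produces a projection representation $R^{\#}$ of $G$ with the \emph{same} unbounded vertices as $R$ (so with $u$ as the unique unbounded vertex) in which, by Lemma~\ref{V0(u)-R'''-ell}, $u$ has the right border property. Deleting the auxiliary vertex $t$ gives the representation $R^{\ast\ast}$ demanded by Condition~\ref{cond1}, contradicting the choice of $G$ as a counterexample and completing the induction.

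The main obstacle, and the step I would treat most carefully, is the legitimacy of re-using the construction of Theorem~\ref{no-property-thm} for a graph $G$ that is \emph{not} assumed to be a smallest element of \textsc{(Tolerance }$\cap$\textsc{\ Trapezoid)}$\setminus$\textsc{Bounded Tolerance}. That construction invokes Lemma~\ref{two-components} (connectedness of $V_{0}(u)$) and Lemma~\ref{not-equal} (distinct unbounded vertices have distinct neighborhoods), both resting on minimality. In the present setting these are supplied instead by the hypotheses of Condition~\ref{cond1}: $V_{0}(u)$ is connected by assumption, and since $u$ is the \emph{only} unbounded vertex there is no second unbounded vertex, so every appeal to Lemma~\ref{not-equal} (and the unbounded subcases of Lemmas~\ref{N-Cu-unbounded} and~\ref{N-H-C2-Cu-bounded}) is either vacuous or immediate. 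I would therefore verify line by line that each lemma of the construction uses only the facts ``$V_{0}(u)$ is connected'' and ``$u$ is the unique unbounded, unbounded-maximal vertex of minimum slope'', and I would confirm the assertion (already used above) that $G_{0}^{\prime}$ again satisfies the hypotheses of Condition~\ref{cond1}: it is connected by Observation~\ref{bounded-hovering-obs}, its only unbounded vertex is $u$ by Observation~\ref{V(G0)}, and $V(G_{0}^{\prime})\subseteq N[u]\cup V_{0}(u)$ with the corresponding covering set relative to $G_{0}^{\prime}$ being connected and nonempty.
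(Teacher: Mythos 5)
Your proposal reproduces the paper's proof of Lemma~\ref{lem-cond1} in all essentials: the same minimal-counterexample induction on the number of vertices, the same re-run of the construction from the proof of Theorem~\ref{no-property-thm} with the inductive hypothesis supplying Condition~\ref{cond1} for the strictly smaller graph $G_{0}^{\prime}=G[V(G_{0})\cup\{u^{\ast}\}]$, and the same final step via Lemma~\ref{R'''-ell-only-N(u)-V0}, which applies precisely because the hypothesis $V=N[u]\cup V_{0}(u)$ forces $V_{B}\setminus N(u)\setminus V_{0}(u)=\{t\}$ after the auxiliary vertex is added. Your closing paragraph, which checks that the minimality-dependent ingredients of that construction are replaced by the hypotheses of Condition~\ref{cond1} --- $V_{0}(u)$ connected by assumption, Lemma~\ref{not-equal} vacuous with a single unbounded vertex, $u$ trivially unbounded-maximal and of minimum slope, and $Q_{u}=\emptyset$ so that Condition~\ref{ass3} holds --- makes explicit what the paper compresses into the single assertion that ``the whole proof of Theorem~\ref{no-property-thm} applies to $G$''; this is a point in your favor, not a deviation.

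There is, however, one concrete gap. Your trivial case dispatches only the situation where $u$ already has the \emph{right} border property, after which you claim to re-run the construction ``with $u$ lacking the right border property''. But the construction needs $u$ to lack \emph{both} border properties: the lack of the right one yields $D_{2}\neq\emptyset$, the lack of the left one yields $D_{1}\neq\emptyset$, and both are indispensable --- the choices of $x_{1}$ and $x_{2}$ (and hence $\widetilde{X}_{1}$, $G_{0}$, and everything downstream) collapse if $D_{1}=\emptyset$, and passing to $\widehat{R}$ inside the construction only swaps the roles of $D_{1}$ and $D_{2}$ without creating the missing set. So in the subcase where $u$ has the left but not the right border property, your argument as written stalls. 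The fix is exactly the paper's first line: if $u$ has the left border property in $R$, then $u$ has the right border property in the reverse representation $\widehat{R}$ (cf.~the observation following Definition~\ref{right-left-property-def}), which is a projection representation of $G$ with the same single unbounded vertex, so $R^{\ast\ast}=\widehat{R}$ already satisfies Condition~\ref{cond1} and no construction is needed. A smaller caveat: your promise to handle the non-canonical case by ``making $u$ inevitable while keeping it unbounded'' is left unargued and is not obviously realizable; the paper's own proof is equally silent on this point, so it does not distinguish your argument from the paper's, but it should not be presented as routine.
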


\begin{proof}
Let $G=(V,E)$ be a connected graph in \textsc{Tolerance }$\cap $ \textsc{Trapezoid}
and $R$ be a projection representation of $G$ with $u$ as the only
unbounded vertex. Let furthermore $V_{0}(u)\neq \emptyset $ be connected and 
$V=N[u]\cup V_{0}(u)$. If $u$ has the right (resp.~the left) border property
in $R$, then $R$ (resp.~the reverse representation $\widehat{R}$ of $R$)
satisfies Condition~\ref{cond1}. Suppose now that $u$ has neither the left
nor the right border property in $R$, and suppose w.l.o.g.~that $G$ has the
smallest number of vertices among the graphs that satisfy the above
conditions. Then, since $V_{0}(u)\neq \emptyset $ is connected, the whole
proof of Theorem~\ref{no-property-thm} above applies to $G$. In particular,
we can construct similarly to the above the induced subgraphs $G_{0}$ and $%
G_{0}^{\prime }=G[V(G_{0})\cup \{u^{\ast }\}]$ of $G$. Then, $%
V(G_{0})\subseteq N[u]\cup V_{0}(u)$ by Observation~\ref{V(G0)}, and thus
also $V(G_{0}^{\prime })\subseteq N[u]\cup V_{0}(u)$, since $u^{\ast }\in
V_{0}(u)$. Furthermore, $u$ is the only unbounded vertex of $G_{0}^{\prime }$.

Recall that $G_{0}^{\prime }$ is a connected subgraph of $G$ by Observation~%
\ref{bounded-hovering-obs}. Furthermore, $G_{0}^{\prime }$ has strictly
smaller vertices than $G$, and thus Condition~\ref{cond1} applies to $%
G_{0}^{\prime }$, i.e.~we can construct the projection representations $%
R_{\ell }$, $R_{\ell }^{\prime }$, $R_{\ell }^{\prime \prime }$, and $%
R_{\ell }^{\prime \prime \prime }$, as above. Moreover, since $%
V=V(G)=N[u]\cup V_{0}(u)$ by assumption, it follows that $V_{B}\setminus
N(u)\setminus V_{0}(u)=\{t\}$ after adding an isolated bounded vertex $t$ to 
$R_{\ell }$. Thus, there exists by Lemma~\ref{R'''-ell-only-N(u)-V0} a
projection representation $R^{\ast \ast }=R^{\#}$ of $G$ with the same
unbounded vertices as in $R$ (i.e.~with $u$ as the only unbounded vertex),
such that $u$ has the right border property in $R^{\ast \ast }$. This
completes the proof of the lemma.
\end{proof}

\medskip

Summarizing, since also the correctness of Condition~\ref{cond1} has been
proved in Lemma~\ref{lem-cond1}, the projection representation $R^{\ast
}=R_{\ell }^{\prime \prime \prime }$ of $G$, cf.~Lemma~\ref{R'''-ell}, has $%
k-1$ unbounded vertices, since the input graph $G$ has $k$ unbounded
vertices, and since $u$ is bounded in $R_{\ell }^{\prime \prime \prime }$.
This completes the proof of Theorem~\ref{no-property-thm}.
\end{proof}

\end{document}